\newcommand{\bibverse}[1]{\begin{verse} \bibentry{#1}. \end{verse}}
\newcommand{\thetitle}{On the Cost of Concurrency in \\Transactional Memory}
\newcommand{\theauthor}{Srivatsan Ravi (M.E.)}
\newcommand{\degree}{Doktor der Ingenieurwissenschaften (Dr.-Ing.)}
\newcommand{\defensedate}{18 June, 2015}
\newcommand{\thesisyear}{2015}
\newcommand{\phdboard}{
  Vorsitzender:    & Prof. Uwe Nestmann, Ph.\,D., TU Berlin \\
  Gutachterin:     & Prof.\ Anja Feldmann, Ph.\,D., TU Berlin \\
  Gutachter:     & Prof.\ Petr Kuznetsov, Ph.\,D., T\'el\'ecom ParisTech \\
  Gutachterin:       & Prof. Hagit Attiya, Ph.\,D., The Technion\\
  Gutachter:       & Prof. Rachid Guerraoui, Ph.\,D., EPFL\\
  Gutachter:       & Prof. Michel Raynal, Ph.\,D., INRIA, Rennes
}
\newif\ifcode
\newtheorem{theorem}{Theorem}
\numberwithin{theorem}{chapter}
\newtheorem{claim}[theorem]{Claim}
\newtheorem{proposition}{Proposition}
\numberwithin{proposition}{chapter}
\newtheorem{corollary}[theorem]{Corollary}
\newtheorem{definition}{Definition}
\numberwithin{definition}{chapter}
\newtheorem{lemma}[theorem]{Lemma}
\newtheorem{observation}[theorem]{Observation}
\newcounter{linenumber}
\def\S{\ensuremath{\mathcal{S}}}
\def\N{\ensuremath{\mathcal{N}}}
\def\M{\ensuremath{\mathcal{M}}}
\def\T{\ensuremath{\mathcal{T}}}
\def\X{\ensuremath{\mathcal{X}}}
\def\Nat{\ensuremath{\mathbb{N}}}
\def\TS{\mathit{TS}}
\def\shared{\mathit{shared}}
\def\exclusive{\mathit{exclusive}}
\newcommand{\id}[1]{\mbox{\textit{#1}}}
\newcommand{\LS}{LS}
\newcommand{\true}{\mathit{true}}
\newcommand{\false}{\mathit{false}}
\newcommand{\remove}[1]{}
\newcommand{\Wset}{\textit{Wset}}
\newcommand{\Rset}{\textit{Rset}}
\newcommand{\Dset}{\textit{Dset}}
\newcommand{\txns}{\textit{txns}}
\newcommand{\Read}{\textit{read}}
\newcommand{\Write}{\textit{write}}
\newcommand{\TryC}{\textit{tryC}}
\newcommand{\TryA}{\textit{tryA}}
\newcommand{\ok}{\textit{ok}}
\newcommand{\LL}{\ms{LL}}
\newcommand{\RNum}[1]{\uppercase\expandafter{\romannumeral #1\relax}}
\newcommand{\VCSmakeinfo}{Use 'make' to get VCS information.}
\newcommand{\VCSversion}{\VCSinfo}
\newcommand{\VCSdate}{\VCSmakeinfo}
\newcommand{\VCSauthor}{\VCSmakeinfo}
\newcommand{\VCSlog}{\VCSmakeinfo}
\newcommand{\VCSbranch}{\VCSmakeinfo}
  \newcommand{\VCSline}{\footnotesize \sffamily \color{gray!90} %
    {\bf Preprint version compiled on \today. Not intended for publication!} \\%
    Revision:~\VCSversion~Branch:~\VCSbranch~Commit:~\VCSauthor~(\VCSdate)\\
    Log: \VCSlog
  }
\begin{document}
\bibliographystyle{abbrv}
\selectlanguage{english}
\ohead{\headmark}
\title{
  \thetitle
  \ifbool{draftmode}{
    \\\Large \bf \sl Entwurf / Draft from \VCSdate\\
    Version \VCSversion~(compiled \today)\\
  }{}
}

\titlehead{
    \centering
    \includegraphics[width=3cm]{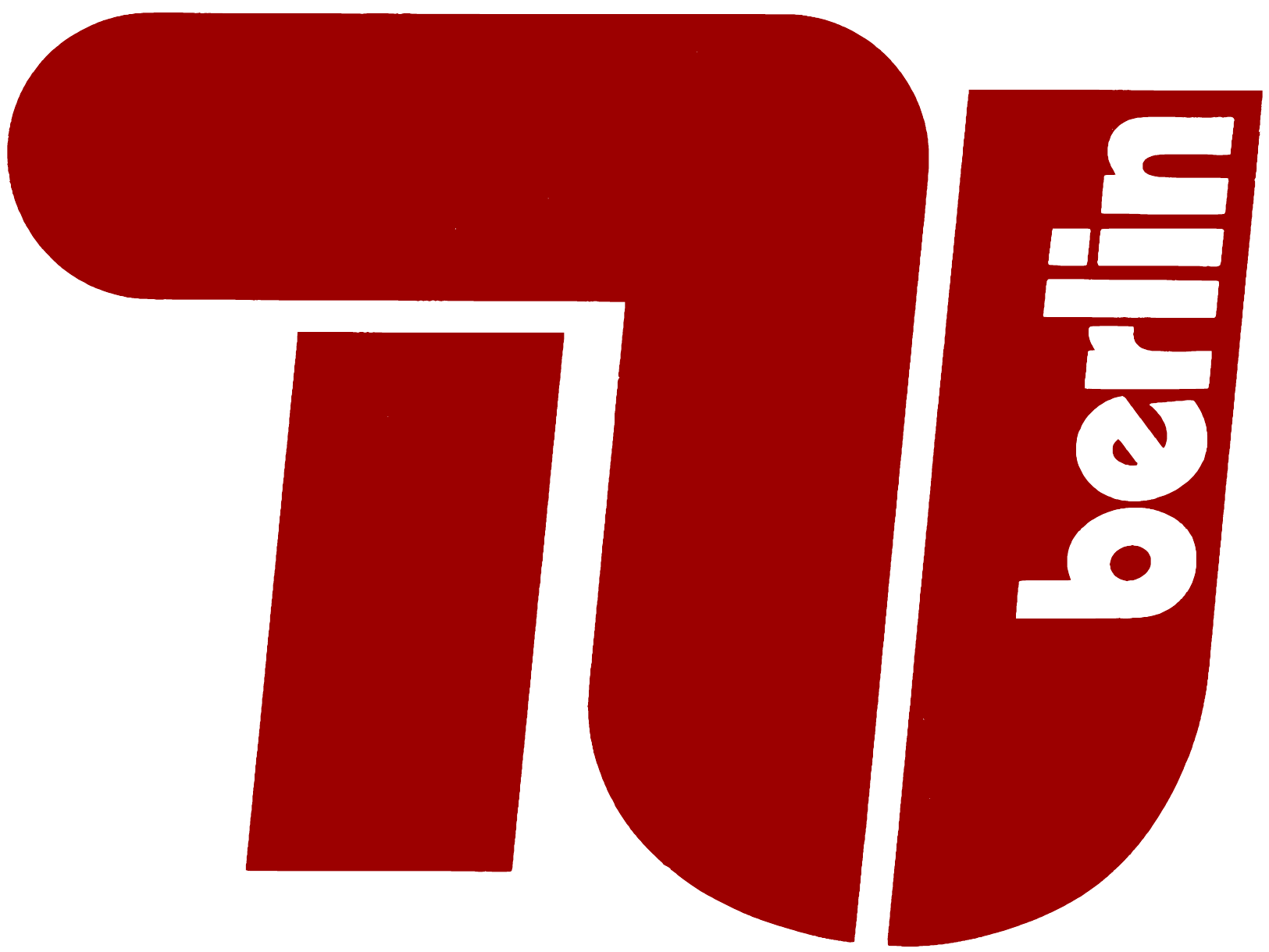}\\
    \vspace{0.8em}
    \textsc{
        \LARGE Technische Universit\"at Berlin\\
        \large Fakult\"at f\"ur Elektrotechnik und Informatik\\
               Lehrstuhl f\"ur Intelligente Netze\\
               und Management Verteilter Systeme
    }
    \vspace{2\baselineskip}
}

\author{
    \large vorgelegt von\\
    \Large \theauthor\\
    \vspace{2\baselineskip}\\
    \large von der Fakult\"at IV -- Elektrotechnik und Informatik\\
           der Technischen Universit\"at Berlin\\
           zur Erlangung des akademischen Grades\\
           \textsc{\degree}\\
           genehmigte Dissertation
}

\publishers{
    \vspace{1\baselineskip}
    \large
    \textbf{Promotionsausschuss:}\\
    \vspace{0.5\baselineskip}
    \begin{tabular}{ll}
        \phdboard
     \end{tabular}\\
     \vspace{1\baselineskip}
     Tag der wissenschaftlichen Aussprache: \defensedate\\
    \ifbool{draftmode}{
        \vspace{2\baselineskip}
     }{
        \vspace{3\baselineskip}
     }

     Berlin \thesisyear\\
     D 83
}
\date{}
\pagestyle{empty}
\maketitle
\pagestyle{scrheadings}

\chapter*{Eidesstattliche Erklärung}

\vspace*{\stretch{1}}

Ich versichere an Eides statt, dass ich diese Dissertation selbständig
verfasst und nur die angegebenen Quellen und Hilfsmittel verwendet habe.

\vspace{1cm}

\begin{flushright}
  \begin{tabular}{p{2cm}p{6cm}}
    \hline
    Datum & \theauthor \\
  \end{tabular}
\end{flushright}
\cleardoublepage
\section*{Abstract}
Current general-purpose CPUs are \emph{multicores}, offering multiple computing units within
a single chip.
The performance of programs on these architectures, however, does not necessarily increase proportionally
with the number of cores.
Designing concurrent programs to exploit these multicores emphasizes the need for
achieving efficient \emph{synchronization} among \emph{threads} of computation. 
When there are several threads that \emph{conflict} on the same data,
the threads will need to coordinate their actions for ensuring correct program behaviour.

Traditional techniques for synchronization are based on \emph{locking} that provides threads with exclusive access to shared data.
\emph{Coarse-grained} locking typically forces threads to access large amounts of data sequentially and, thus, 
does not fully exploit hardware concurrency.
Program-specific \emph{fine-grained} locking or \emph{non-blocking} (\emph{i.e.}, not using locks) 
synchronization, on the other hand, is 
a dark art to most programmers and trusted to the wisdom of a few computing experts.
Thus, it is appealing to seek a middle ground between these two extremes: a synchronization mechanism that
relieves the programmer of the overhead
of reasoning about data conflicts that may arise from concurrent operations without severely 
limiting the program's performance.
The \emph{Transactional Memory (TM)} abstraction is proposed as such a mechanism: 
it intends to combine an easy-to-use programming interface
with an efficient utilization of the concurrent-computing abilities provided by multicore architectures.
TM allows the programmer to \emph{speculatively} execute sequences of shared-memory
operations as \emph{atomic transactions} with \emph{all-or-nothing} semantics: 
the transaction can either \emph{commit}, in which case it appears as executed sequentially,
or \emph{abort}, in which case its update operations do not take effect.
Thus, the programmer can design software having only sequential
semantics in mind and let TM take care, at run-time, of resolving the
conflicts in concurrent executions.

Intuitively, we want TMs to allow for as much \emph{concurrency} as possible: in the absence of severe data conflicts,
transactions should be able to progress in parallel.
But what are the inherent costs associated with providing high degrees of concurrency in TMs? 
This is the central question of the thesis.

To address this question, we first focus on the \emph{consistency} criteria that must be satisfied by a TM implementation.
We precisely characterize what it means for a TM implementation to be
\emph{safe}, \emph{i.e.}, to ensure that the view of \emph{every} transaction could have been observed 
in some sequential execution.
We then present several lower and upper bounds on the complexity of three classes of safe TMs:
\emph{blocking} TMs that allow transactions to delay or abort due to \emph{overlapping} transactions, 
\emph{non-blocking} TMs which adapt to \emph{step contention} by ensuring that a transaction not encountering steps
of overlapping transactions must commit, 
and \emph{partially non-blocking} TMs that provide strong non-blocking guarantees (\emph{wait-freedom})
to only a subset of transactions.
We then propose a model for \emph{hybrid} TM implementations that complement hardware transactions with software transactions.
We prove that there is an inherent trade-off on the degree of concurrency
allowed between hardware and software transactions and the costs introduced on the hardware.
Finally, we show that \emph{optimistic} synchronization techniques based on speculative executions
are, in a precise sense, better equipped to exploit concurrency
than inherently \emph{pessimistic} techniques based on locking.

\section*{Zusammenfassung}
\selectlanguage{ngerman}
Aktuelle Allzweck-CPUs haben \emph{mehrere} Rechenkerne innerhalb
eines einzelnen Chipsatzes.
Allerdings erhöht sich die Leistung der Programme auf diesen Architekturen nicht notwendigerweise proportional in der 
Anzahl der Kerne.
Das Entwerfen nebenläufiger Programme um diese Multicores zu nutzen,
erfordert die Überwindung einiger nicht-trivialer Herausforderungen;
die wichtigste ist, eine effiziente \emph{Synchronisierung} der \emph{Threads} der Berechnung herzustellen.
Greifen mehrere Threads gleichzeitig auf dieselben Daten zu, müssen diese ihre Aktionen koordinieren, um 
ein korrektes Programmverhalten zu gewährleisten.

Die traditionelle Methode zur Synchronisierung ist "Locking", welches jeweils nur einem einzelnen Thread Zugriff auf gemeinsam genutzten Daten gewährt.
Bei \emph{grobkörnigem} "Locking" erfolgt der Zugang zu einer großen Menge von Daten meist seriell, 
sodass die Hardware-Parallelität nich in vollem Unfang ausgenutzt wird.
Auf der anderen Seite stellt programmspezifisches \emph{feinkörniges} Locking, 
oder auch nicht-blockierende (d.h. keine Locks benutzende) Synchronisierung, 
eine dunkle Kunst für die meisten Programmierer dar, welche auf die Weisheit weniger Computerexperten vertraut.
So ist es angebracht, einen Mittelweg zwischen diesen beiden Extremen zu suchen: einen Synchronisierungsmechanismus, 
der den Programmierer bezüglich der \emph{Datenkonflikte}, die aus gleichzeitigen Operationen entstehen, entlastet, 
ohne jedoch die Leistung des Programms zu stark zu beeinträchtigen. 
Die \emph{Transactional Memory (TM)} Abstraktion wird als solcher Mechanismus vorgeschlagen:
ihr Ziel ist es, eine einfach zu bedienende Programmierschnittstelle mit einer effizienten Nutzung der gleichzeitigen 
Computing-Fähigkeiten von Multicore-Architekturen zu kombinieren. 
TM erlaubt es dem Programmierer, Sequenzen von Operationen auf dem gemeinsamen Speicher als \emph{atomare Transaktionen} 
mit \emph{Alles-oder-Nichts} Semantik zu erklären: Die Transaktion wird entweder \emph{übergeben}, 
und somit sequentiell ausgeführt, oder \emph{abgebrochen}, 
sodass ihre Operationen nicht durchgeführt werden.
Dies ermöglicht dem Programmierer, Software mit nur sequentieller Semantik zu konzipieren, 
und die aus gleichzeitger Ausführung entstehenden Konflikte TM zu überlassen.

Intuitiv sollen die TMs so viel Nebenläufigkeit wie möglich berücksichtigen: Falls keine Datenkonflikte vorhanden sind, 
sollen alle Transaktionen parallel ausgeführt werden. 
Gibt es in TMs Kosten, die durch diesen hohen Grad an \emph{Nebenläufigkeit} entstehen? 
Das ist die zentrale Frage dieser Arbeit.

Um diese Frage zu beantworten, konzentrieren wir uns zunächst auf das Kriterium der \emph{Konsistenz}, 
welche von der TM-Implementierung erfüllt werden muss.
Wir charakterisieren auf präzise Art, was es für eine TM-Implementierung heißt, 
\emph{sicher} zu sein, d.h. zu gewährleisten, dass die Sicht \emph{einer jeden} 
Transaktion auch von einer sequentiellen Transaktion hätte beobachtet 
werden können.
Danach präsentieren wir mehrere untere und obere Schranken für die Komplexität dreier Klassen von sicheren TMs:
\emph{blockierende} TMs, die Blockierungen oder Abbrüche der Transaktionen erlauben, sollten diese sich überlappen,
\emph{nicht-blockierende} TMs die einen schrittweisen Zugriffskonflikt berücksichtigen, d.h. 
Transaktionen, die keinen Zugriff überlappender anderer Transaktionen beobachten, müssen übergeben, und
\emph{partiell nicht-blockierende} TMs, die nur für eine Teilmenge von Transaktionen nicht-blockierend sind.
Wir schlagen daraufhin ein Modell für \emph{hybride} TM-Implementierungen vor, welches die 
Hardware Transaktionen mit Software Transaktionen ergänzt. Wir beweisen, dass es eine inherente Trade-Off
zwischen Grad der erlaubten Nebenläufigkeit 
zwischen Hard- und Software Transaktionen und den Kosten der Hardware gibt.
Schlussendlich beweisen wir, dass \emph{optimistische}, auf \emph{spekulativen} Ausführungen 
basierende, Synchronisierungstechniken, in einem präzisen Sinne,  besser geeignet sind um 
Nebenläufigkeit auszunutzen als \emph{pessimistische} Techniken, die auf "Locking" 
basieren.
\selectlanguage{english}
\clearpage
\tableofcontents
\section*{Acknowledgements}
In his wonderfully sarcastic critique of the scientific community in \emph{His Master's Voice}, the great Polish writer
Stanis\l aw Lem refers to a \emph{specialist} as a \emph{barbarian whose ignorance is not well-rounded}.
Writing a Ph.\,D. thesis is essentially an attempt at becoming a specialist on some topic; whether this thesis on
Transactional Memory makes me one is a questionable claim, but I am culturedly not totally ignorant, I think.
The thesis itself was a long time in the making and would not have been possible without
the wonderful support and gratitudes I have received these past four years.

My advisors Anja Feldmann and Petr Kuznetsov guided me throughout my Ph.\,D. term.

A fair amount of whatever good I have learnt these past few years, both scientifically and meta-scientifically, I owe it
to Petr. He taught me, by example, what it takes to achieve nontrivial scientific results. He spent several hours schooling me
when I had misunderstood some topic and as such suffered the worst of my writing, especially in the first couple of years.
He was hard on me when I did badly, but always happy for me when I did well. Apart from being a deep thinker and a brilliant
researcher, his scientific integrity and mental discipline have indelibly made me a better human being and student of science.
At a personal level, I wish to thank him and his family for undeserved kindness shown to me over the years.

I would like to thank Anja for the extraordinary amount of freedom she gave me to pursue my own research and the trust
she placed in me, as she does in all her students.

I am especially grateful to Robbert Van Renesse and Bryan Ford, who gave me a taste for independent research and
in many ways, helped shape the course of my graduate career.

I am of course extremely grateful to all my co-authors who allowed me to include content, written in conjunction with them, 
in the thesis. So special thanks to Dan Alistarh, Hagit Attiya, Vincent Gramoli, Sandeep Hans, Justin Kopinsky, Petr Kuznetsov
and Nir Shavit. 

The results in Chapter~\ref{ch:pc1} were initiated during a memorable visit to the Technion, Haifa
in the Spring of '12. I am thankful to have been hosted by Hagit Attiya and to have had the chance to work with her 
and Sandeep Hans. I am also very grateful to David Sainz for taking time off and introducing me to some beautiful parts of Israel.

Chapter~\ref{ch:p4c4} essentially stemmed from a visit to MIT in the summer of '13. I am very grateful to
Dan Alistarh and Nir Shavit for hosting me. Special thanks to Justin Kopinsky, who helped keep our discussions alive
during our lengthy dry spells when we were seemingly spending all our time thinking about the problem, but without
producing any tangible results.

Chapter~\ref{ch:p2c1} represents the most excruciatingly painful part of the thesis purely in terms of the number of iterations
the paper based on this chapter went through. Yet, it was a procedure from which I learnt a lot and I am very thankful to Vincent
Gramoli, who initiated the topic during my visit to EPFL in Spring '11.

In general, I have benefitted immensely from just talking to researchers in distributed computing during
conferences, workshops and research visits. 
These include Yehuda Afek, Panagiota Fatourou, Rachid Guerraoui, Maurice Herlihy, Victor Luchangco, Adam Morrison and 
Michel Raynal. Also, great thanks to the anonymous reviewers of my paper submissions whose critiques and comments helped improve the contents
of the thesis.

Back here in Berlin, so many of my INET colleagues have shaped my thought processes and enriched my experience in grad school.
Thanks to Stefan Schmid, whose ability to execute several tasks concurrently with minimal synchronization
overhead, never ceases to amaze anyone in this group. I am also very grateful to Anja, Petr and Stefan for
allowing me to be a Teaching Assistant in their respective courses.
Apart from being great friends, Felix Poloczek and Matthias Rost have been wonderful office mates and indulged my random discussions about
life and research. Arne Ludwig and Carlo Fürst have been great friends; Arne, thanks for all the football discussions
and Carlo, for exposing me to some social life. Thanks to Dan Levin, who has been a great friend and 
always been there to motivate and give me a fillip whenever I needed it.
Ingmar Poese has been a wonderful friend as well as a constant companion to the movies.
Franziska Lichtblau, Enric Pujol, Philipp Richter and others have been willing companions in ordering several
late night dinners at the lab. Thomas Krenc and Philipp Schmidt were great co-TA's.
Great thanks to all the other current and past members of INET: our system admin Rainer May, Marco Canini, Damien Foucard, 
Juhoon Kim, Gregor Schaffrath, Julius Schulz-Zander, Georgios Smaragdakis, Florian Streibelt, Lalith Suresh, Steve Uhlig and all the other members I have missed.
Special thanks to our group secretaries, Birgit Hohmeier-Touré and Nadine Pissors, without whom there would be absolute chaos.

I would like to thank my flat mates of the last two years: Ingmar, Jennifer, Jose, Lily and Mathilda.
Lastly, thanks to my family and friends outside of the academic sphere for tolerating me all these years.

\chapter{Introduction}
\label{ch:intro}
While the performance of programs would increase proportionally with the performance of a \emph{singlecore}
CPU, the performance of programs on \emph{multicore} CPU architectures, however, does not necessarily increase proportionally
with the number of cores. In order to exploit these multicores, the amount of concurrency provided by programs will need to
increase as well.
Designing concurrent programs that exploit the hardware concurrency provided by modern multicore CPU architectures requires
achieving efficient \emph{synchronization} among \emph{threads} of computation. 
However, due to the \emph{asynchrony} resulting from the CPU's context switching and
scheduling policies, it is hard to specify reasonable bounds
on relative thread speeds. This makes the design of efficient and correct concurrent programs a difficult task. 
The \emph{Transactional Memory (TM)} abstraction~\cite{HM93,ST95} is a synchronization mechanism  
proposed as a solution to this problem: 
it combines an easy-to-use programming interface
with an efficient utilization of the concurrent-computing abilities provided by multicore architectures.
This chapter introduces the
TM abstraction and presents an overview of the thesis.
\section{Concurrency and synchronization}
\label{sec:introconcur}
In this section, we introduce the challenges of concurrent computing and overview the drawbacks associated with
traditional synchronization techniques.
\subsection{Concurrent computing overview}
\vspace{1mm}\noindent\textbf{Shared memory model.}
A \emph{process} represents a thread of computation that is provided with its own private memory
which cannot be accessed
by other processes. However, these independent processes will have to synchronize their actions
in an asynchronous environment in order to 
\emph{implement} a user application, which they do by communicating via the CPU's \emph{shared memory}.

In the shared memory model of computation, 
processes communicate by \emph{reading} and \emph{writing} to a fragment of the shared memory,
referred to as a \emph{base object}, in a single \emph{atomic} (\emph{i.e.}, indivisible) instruction. 
Modern CPU architectures additionally allow processes
to invoke certain powerful atomic \emph{read-modify-write (rmw)} instructions~\cite{Her91}, which allow processes
to write to a base object subject to the check of an invariant.
For example, the \emph{compare-and-swap} instruction is a rmw instruction that 
is supported by most modern architectures: it takes as input $\langle \ms{old},\ms{new} \rangle$ and
atomically updates the value of a base object to $\ms{new}$ and returns $\true$ \emph{iff} 
its value prior to applying the instruction 
is equal to $\ms{old}$; otherwise it returns $\false$.

\vspace{1mm}\noindent\textbf{Concurrent implementations.}
A \emph{concurrent implementation} provides
each process with an algorithm to apply CPU instructions on the
shared base objects for the \emph{operations} of the user application.
For example, consider the problem of implementing a concurrent \emph{list-based set}~\cite{HS08-book}. the \emph{set}
abstraction implemented as a \emph{sorted linked list} supporting operations $\lit{insert}(v)$, $\lit{remove}(v)$ and 
$\lit{contains}(v)$; $v\in \mathbb{Z}$.
The set abstraction stores a set of integer values,
initially empty. 
The update operations, $\lit{insert}(v)$ and $\lit{remove}(v)$, return
a boolean response, $\true$ if and only if $v$ is absent (for
$\lit{insert}(v)$) or present (for $\lit{remove}(v)$) in the list.  
After $\lit{insert}(v)$ is complete, $v$ is present in the list, and 
after $\lit{remove}(v)$ is complete, $v$ is absent in the list.
The $\lit{contains}(v)$ returns a boolean, $\true$ if and
only if $v$ is present in the list.
A concurrent implementation of the list-based set is simply an \emph{emulation} of the set abstraction that is realized
by processes applying the available CPU instructions on the underlying base objects.

\vspace{1mm}\noindent\textbf{Safety and liveness.}
What does it mean for a concurrent implementation to be correct?
Firstly, the implementation must satisfy a \emph{safety property}: 
there are no bad reachable states in any \emph{execution} of the implementation.
Intuitively, we characterize safety for a concurrent implementation of a data abstraction by verifying if the responses returned
in the concurrent execution may have been observed
in a \emph{sequential execution} of the same.
For example, the safety property for a concurrent list-based set implementation stipulates that
the response of the set operations in a concurrent execution is consistent with some sequential execution
of the list-based set. 
However, a concurrent set implementation that does not return any response trivially
ensures safety; thus, the implementation must satisfy some \emph{liveness property} specifying the conditions
under which the operations must return a response.
For example, one liveness property we may wish to impose on the concurrent list-based set is \emph{wait-freedom}: 
every process completes the operations it invokes irrespective of the behaviour of other processes.

As another example, consider 
the \emph{mutual exclusion} problem~\cite{Dijkstra1}
which involves sharing some critical data resource among processes.
The safety property for mutual exclusion stipulates that at most 
one process has access to the resource in any execution, in which case, we say that the process is
inside the \emph{critical section}.
However, one may notice that an implementation which ensures that no process ever enters the critical section
is trivially safe, but not very useful. 
Thus, the mutual exclusion implementation must satisfy some liveness property specifying the conditions under which
the processes must eventually enter the critical section. 
For example, we expect that the implementation is \emph{deadlock-free}: if every process
is given the chance to execute its algorithm, some process will enter the critical section.
In contrast to safety, a liveness property can be violated
only in an infinite execution, \emph{e.g.}, by no process ever entering the critical section. 

In shared memory computing, we are concerned with deriving
concurrent implementations with strong safety and liveness properties, thus emphasizing
the need for efficient synchronization among processes.
\subsection{Synchronization using locks}
\label{sec:locks}
A \emph{lock} is a concurrency abstraction that implements mutual exclusion and 
is the traditional solution for achieving synchronization among processes.
Processes \emph{acquire} a lock prior to executing code inside the critical section and \emph{release}
the lock afterwards, thereby allowing other processes to modify the data accessed by the code within the critical section.
In essence, after acquiring the lock, the code within the critical section can be executed atomically.
However, \emph{lock-based} implementations suffer from some fundamental drawbacks.

\vspace{1mm}\noindent\textbf{Ease of designing lock-based programs.}
Ideally, to reduce the programmer's burden, we would like to take any sequential implementation and transform
it to an efficient concurrent one with minimal effort.
Consider a simple locking protocol that works for most applications: \emph{coarse-grained} locking which typically
serializes access to a large amount of data. Although trivial for the programmer to implement,
it does not exploit hardware concurrency. In contrast, \emph{fine-grained} locking may exploit concurrency better,
but requires the programmer to have a good understanding of the data-flow relationships in the application and
precisely specify which locks provide exclusive access to which data.

For example, consider the problem of implementing a concurrent list-based set.
The \emph{sequential} implementation of the list-based set uses a sorted linked list data structure 
in which each data item (except the \emph{tail} of the list) maintains a next field to provide a pointer to the
successor.
Every operation ($\lit{insert}$, $\lit{remove}$ and $\lit{contains}$) invoked with a parameter $v\in \mathbb{Z}$ traverses the list
starting from the \emph{head} up to the data item storing value $v' \geq v$.
If $v'=v$, then $\lit{contains}$ returns $\true$, $\lit{remove}(v)$ unlinks the 
corresponding element and returns $\lit{true}$ and $\lit{insert}(v)$ returns $\false$. Otherwise, 
$\lit{contains}(v)$ and $\lit{remove}(v)$ return $\lit{false}$, while $\lit{insert}(v)$ adds a new data item
with value $v$ to the list and returns $\true$.

Given such a sequential implementation, we may derive a coarse-grained implementation of the list-based set by having 
processes acquire a lock on the head
of the list, thus, forcing one operation to complete before the next starts. Alternatively, a fine-grained protocol 
may involve acquiring locks \emph{hand-over-hand}~\cite{BS88}: a process holds 
the lock on at most two adjacent data items of the list.
Yet, while such a protocol produces a correct set implementation~\cite{ARR10}, it is not a universal strategy that applies
to other popular data abstractions like \emph{queues} and \emph{stacks}.

\vspace{1mm}\noindent\textbf{Composing lock-based programs.}
It is hard to compose smaller atomic operations based on locks to produce a larger atomic operation without affecting 
safety~\cite{ST95,HS08-book}.
Consider the \emph{fifo queue} abstraction supporting the $\lit{enqueue}(v)$; $v\in \mathbb{Z}$
and $\lit{dequeue}$ operations.
Suppose that we wish to solve the problem of atomically dequeuing from a queue $Q_1$ and 
enqueuing the item returned, say $v$, to a queue $Q_2$. 
While the individual actions of dequeuing from $Q_1$ and enqueuing $v$ to $Q_2$ may be atomic, we wish
to ensure that the combined action is atomic: no process must observe the absence of $v$ or that it is present in both
$Q_1$ and $Q_2$. 
A possible solution to this specific problem is to force a process attempting atomic
modification of $Q_1$ and $Q_2$ to acquire a lock.
Firstly, this requires prior knowledge of the identities of the two queue instances.
Secondly, this solution does not exploit hardware concurrency since the lock itself becomes a concurrency bottleneck.
Moreover, imagine that processes $p_1$ and $p_2$ need to acquire two locks $L_1$ and $L_2$
in order to atomically modify a set of queue instances.
Without imposing a pre-agreed upon order on such lock acquisitions, there is the
possibility of introducing \emph{deadlocks} where processes wait infinitely long without completing their operations. 
For example, imagine the following concurrency scenario: 
process $p_1$ (and resp., $p_2$) holds the lock $L_1$ (and resp., $L_2$) and 
attempts to acquire the lock $L_2$ (and resp., $L_1$). Thus, process $p_1$ (and resp., $p_2$)
waits infinitely long for $p_2$ (and resp., $p_1$) to complete its operation.
%
\subsection{Non-blocking synchronization}
It is impossible to derive lock-based implementations that provide \emph{non-blocking} liveness 
guarantees, \emph{i.e.}, some process completes
its operations irrespective of the behaviour of other processes.
In fact, even the weak non-blocking liveness property of \emph{obstruction-freedom}~\cite{AGHK09}
cannot be satisfied by lock-based implementations: a process must complete its operation if it eventually runs \emph{solo}
without interleaving events of other processes.

Concurrent implementations providing non-blocking liveness properties are appealing in practice since they overcome
problems like deadlocks and \emph{priority inversions}~\cite{dobbs-pinv} inherent to lock-based implementations.
Thus, non-blocking (without using locks) solutions using \emph{conditional} rmw instructions like compare-and-swap
have been proposed as an alternative to lock-based implementations. 
However, as with fine-grained locking, implementing
correct non-blocking algorithms can be hard and requires hand-crafted problem-specific strategies.
For example, the state-of-the-art list-based set implementation by Harris-Michael~\cite{HS08-book,harris-set,michael-set}
is non-blocking: the $\lit{insert}$ and $\lit{remove}$ operations, as they traverse the list, 
\emph{help} concurrent operations to physically remove data items (using compare-and-swap) that are
logically deleted, \emph{i.e.}, ``marked for deletion''. 
But one cannot employ an identical algorithmic technique for implementing a non-blocking queue~\cite{MS96}, whose
semantics is orthogonal to that of the set abstraction.
Moreover, addressing the compositionality issue, as with lock-based solutions, requires ad-hoc strategies that are
not easy to realize~\cite{HS08-book}.
\begin{figure*}
\input{files/l1-example}
\caption{Transforming a sequential implementation of the list-based set to a TM-based concurrent one
\label{fig:list-tm}} 
\end{figure*}
\section{Transactional Memory (TM)}
\label{sec:introtm}
\emph{Transactional Memory} (TM)~\cite{HM93,ST95}
addresses the challenge of resolving \emph{conflicts} (concurrent reading and writing to the same data) 
in an efficient and safe manner by offering a simple interface in which sequences
of shared memory operations on \emph{data items} can be declared as \emph{optimistic transactions}.
The underlying idea of TM, inspired by databases~\cite{gray1992}, is to treat each
transaction as \emph{atomic}: a transaction may either \emph{commit}, in
which case it appears as executed sequentially, or \emph{abort}, in
which case none of its update operations \emph{take effect}.
Thus, it enables the programmer to design software applications having only sequential
semantics in mind and let TM take care of \emph{dynamically} handling the
conflicts resulting from concurrent executions at run-time.

A TM \emph{implementation} provides processes with algorithms for implementing \emph{transactional operations} such as
\emph{read}, \emph{write}, \emph{tryCommit} and \emph{tryAbort} on \emph{data items} using base objects.
TM implementations typically ensure that 
all committed transactions appear to execute sequentially in some
total order respecting the timing of non-overlapping transactions.
Moreover, unlike database transactions, intermediate states witnessed by the read operations of an incomplete transaction
may affect the user application.
Thus, to ensure that the TM implementation does not export any pathological executions,
it is additionally expected that every transaction (including aborted and incomplete ones) must return responses that is 
consistent with some \emph{sequential} execution of the TM implementation.

In general, given a sequential implementation of a data abstraction, 
a corresponding \emph{TM-based} concurrent one encapsulates the sequential (high-level) operations within a
transaction.
Then, the TM-based concurrent implementation of the data abstraction replaces each read and write of a data item with the
transactional read and write implementations, respectively.
If the transaction commits, then the result of the high-level operation is
returned to the application. Otherwise, one of the transactional operations may be aborted,
in which case, the write operations performed by the transaction do not take effect and the high-level 
operation is typically re-started with a new transaction.

To illustrate this, we refer to the sequential implementation of the $\lit{remove}$ operation of list-based set depicted in 
Algorithm~\ref{alg:seqremove} of Figure~\ref{fig:list-tm}.
In a TM-based concurrent implementation of the list-based set (Algorithm~\ref{alg:tm-remove}), 
each read (and resp. write) operation performed by $\lit{remove}(v)$ on a 
data item $X$ of the list is replaced with 
$\lit{tx-read}(X)$ (and resp., $\lit{tx-write}(X,\ms{arg})$).
$\lit{tx-read}(X)$ returns the value of the data item $X$ or aborts the transaction while $\lit{tx-write}(X,\ms{arg})$
writes the value $\ms{arg}$ to $X$ or aborts the transaction.
Finally, the process attempts to commit the transaction by invoking the tryCommit operation. If the tryCommit is successful, 
the response of $\lit{remove}(v)$ is returned; otherwise a failed response (denoted $\bot$) is returned, in which case, 
the write operations performed by the transaction are ``rolled back''.

Intuitively, it is easy to understand how TM simplifies concurrent programming.
Deriving a TM-based concurrent implementation of the list-based set simply requires encapsulating
the operations to be executed atomically within a transaction using an \emph{atomic} delimiter~\footnote{Different compilers
may use different names for the delimiter; in \emph{GCC}, it is transaction\_atomic~\cite{gcc-tm}}.
The underlying TM implementation endeavours to dynamically execute the transactions by resolving the conflicts
that might arise from processes reading and writing to the same data item at run-time.
Intuitively, since the TM implementation enforces a strong safety
property, the resulting list-based implementation is also safe: the responses of its operations are consistent with
some sequential execution of the list-based set.

One may view TM as a \emph{universal construction}~\cite{Her91,EFKMT12,Fatourou11, Barnes93,Anderson99-multi}
that accepts as input the operations of a sequential implementation
and strives to execute them concurrently.
Specifically, TM is designed to work in a dynamic environment
where neither the sequence of operations nor the data items accessed by a transaction are known a priori.
Thus, the response of a read operation performed by a transaction is returned immediately to the application,
and the application determines the next data item that must accessed by the transaction.

TM-based implementations overcome the drawbacks of traditional synchronization techniques based
on locks and compare-and-swap.
Firstly, the TM interface places minimal overhead on the programmer: using TM only requires encapsulating
the sequential operations within transactions and handling an exception should the transaction be aborted.
Secondly, the ability to execute multiple operations atomically allows TM-based implementations to seamlessly compose
smaller atomic operations to produce larger ones.
For example, suppose that we wish to atomically $\lit{dequeue}$ from a queue $Q_1$ and $\lit{enqueue}(v)$ in queue
$Q_2$, where $v$ is the value returned by $Q_1.\lit{dequeue}$. 
Solving this problem using TM simply requires encapsulating the sequential implementation of 
$Q_1.\lit{dequeue}$, followed by $Q_2.\lit{enqueue}(v)$ within a transaction.

Note that a TM implementation may internally employ locks or conditional rmw instructions like compare-and-swap. 
However, TM raises
the level of abstraction by exposing an easy-to-use compositional transactional interface for the user application that is
\emph{oblivious} to the specifics of the implementation and the semantics of the user application.
\section{Summary of the thesis}
\label{sec:introoutline}
TM allows the programmer to \emph{speculatively} execute sequences of shared-memory
operations as atomic transactions: 
if the transaction commits, the operations appear as executed sequentially,
or if the transaction aborts, the update operations do not take effect.
The combination of speculation and the simple programming interface provided by TM
seemingly overcomes the problems associated with traditional synchronization techniques based on locks
and compare-and-swap. But are there some fundamental drawbacks associated with the TM abstraction?
Does providing high degrees of concurrency in TMs
come with inherent costs? This is the central question of the thesis.
In the rest of this introductory chapter, 
we provide a summary of the results in the thesis that give some answers to this question.
\subsection{Safety for transactional memory}
\label{sec:o1}
We first need to define the consistency criteria that must be satisfied by a TM implementation.
We formalize the semantics of a \emph{safe} TM: 
every transaction, including aborted and incomplete ones, must observe a view 
that is \emph{consistent} with some sequential execution.
This is important, since if the intermediate view is not consistent with \emph{any} sequential execution, the application
may experience a fatal irrevocable error or enter an infinite loop.
Additionally, the response of a transaction's read should not depend on an ongoing transaction that has not started committing yet.
This restriction, referred to as \emph{deferred-update semantics} appears desirable, 
since the ongoing transaction may still abort, thus rendering the read inconsistent.
We define the notion of deferred-update semantics formally and apply it to several TM consistency criteria 
proposed in literature. We then verify if the resulting TM consistency criterion is a \emph{safety} property~\cite{OL82,AS85,Lyn96}
in the formal sense, \emph{i.e.}, the set of \emph{histories} (interleavings of invocations and 
responses of transactional operations) 
is \emph{prefix-closed}
and \emph{limit-closed}.

We first consider the popular consistency criterion of
\emph{opacity}~\cite{tm-book}.
Opacity requires the states observed by all transactions, included
uncommitted ones, to be consistent with a global \emph{serialization},
\emph{i.e.}, a serial execution constituted by committed transactions.
Moreover, the serialization should respect the \emph{real-time
  order}: a transaction that completed before (in real time) another transaction
started should appear first in the serialization.

By definition, opacity reduces correctness of a TM history to correctness of all
its prefixes, and thus is prefix-closed and limit-closed.
Thus, to verify that a history is opaque, one needs to verify that each of its prefixes is
consistent with some global serialization.
To simplify verification and explicitly introduce deferred-update
semantics into a TM correctness criterion,
we specify a general criterion of \emph{du-opacity}~\cite{icdcs-opacity},
which requires the global serial execution to respect
the deferred-update property.
Informally, a du-opaque history must be
indistinguishable from a totally-ordered history, with respect to which
no transaction reads from a transaction that has not started committing.
Assuming that in an infinite history, every
transaction completes each of the operations it invoked,
we prove that du-opacity is a safety property.

One may notice that the intended safety semantics does not
require, as opacity does, that all transactions observe the same serial
execution.
As long as committed transactions constitute a serial
execution and every transaction witnesses a consistent state, the
execution can be considered ``safe'': no run-time error that cannot
occur in a serial execution can happen.
Two definitions in literature have adopted this approach~\cite{damien-vw-jv,DGLM13}.
We introduce ``deferred-update'' versions of these properties and discuss
how the resulting properties relate to du-opacity.
\subsection{Complexity of transactional memory}
\label{sec:ob3}
One may observe that a TM implementation that aborts or never commits any transaction is trivially safe, but not very useful.
Thus, the TM implementation must satisfy some nontrivial
\emph{liveness} property specifying the conditions under which the transactional operations must
return some response and a
\emph{progress} property specifying the conditions under which the transaction is allowed to abort.

Two properties considered important for TM performance are 
\emph{read invisibility}~\cite{attiyaH13} and 
\emph{disjoint-access parallelism}~\cite{israeli-disjoint}.
Read invisibility may boost the concurrency of a TM
implementation by ensuring that no reading transaction can cause any other
transaction to abort.  
The idea of disjoint-access parallelism is to allow transactions that do not access the
same data item to proceed independently of each other without memory contention. 

We investigate the inherent complexities in terms of time and memory resources 
associated with implementing safe TMs that provide strong liveness and progress properties, possibly combined with
attractive requirements like read invisibility and disjoint-access parallelism.
Which classes of TM implementations are (im)possible to solve?

\vspace{1mm}\noindent\textbf{Blocking TMs.}
We begin by studying TM implementations that are \emph{blocking}, in the sense that, 
a transaction may be delayed or aborted due to concurrent transactions.
\begin{itemize}
\item
We prove that, even inherently \emph{sequential} TMs, that allow a transaction to be aborted due to a concurrent transaction, incur 
significant complexity costs when combined with read invisibility and disjoint-access parallelism.
\item
We prove that, \emph{progressive} TMs, that 
allow a transaction to be aborted only if it encounters a read-write or write-write conflict with a 
concurrent transaction~\cite{GK09-progressiveness}, may need to exclusively control a linear number of data items
at some point in the execution.
\item
We then turn our focus to \emph{strongly progressive}
TMs~\cite{tm-book} that, in addition to progressiveness, 
ensures that \emph{not all} concurrent transactions conflicting over a single
data item abort.   
We prove that in any strongly progressive TM
implementation that accesses the shared memory with \emph{read}, \emph{write} and \emph{conditional}
primitives, such as compare-and-swap, 
the total number of \emph{remote memory references}~\cite{anderson-90-tpds,rmr-mutex} (RMRs) 
that take place in an execution in which $n$
concurrent processes perform transactions on a single data item might
reach $\Omega(n \log n)$ in the worst-case. 
\item
We show that, with respect to the amount of \emph{expensive synchronization} patterns like
compare-and-swap instructions and \emph{memory barriers}~\cite{AGK11-popl,McKenney10},
progressive implementations are asymptotically optimal. 
We use this result to establish a linear (in the transaction's data set size)
separation between the worst-case transaction expensive synchronization complexity of progressive TMs and 
\emph{permissive} TMs that allow a transaction to abort only if committing it would violate opacity.
\end{itemize}
\vspace{1mm}\noindent\textbf{Non-blocking TMs.}
Next, we focus on TMs that avoid using locks and rely on non-blocking
synchronization: a prematurely halted transaction cannot not prevent
other transactions from committing.
Possibly the weakest non-blocking
progress condition is obstruction-freedom~\cite{HLM03,HS11-progress}
stipulating that  every transaction running in the absence
of \emph{step contention}, \emph{i.e.}, not encountering steps of concurrent
transactions, must commit.
In fact, several early TM implementations~\cite{HLM+03, astm, ST95,nztm,fraser} satisfied obstruction-freedom.
However, circa. 2005, several papers presented the case for a shift from TMs that provide obstruction-free TM-progress
to lock-based progressive TMs~\cite{DStransaction06,DSS06, Ennals05}. 
They argued that lock-based TMs tend to outperform
obstruction-free ones by allowing for simpler algorithms with lower complexity overheads.
We prove the following lower bounds for obstruction-free TMs.
\begin{itemize}
\item 
Combining invisible reads with even \emph{weak} forms of
disjoint-access parallelism~\cite{AHM09} in obstruction-free TMs is impossible,
\item 
A read operation in a $n$-process obstruction-free TM
implementation incurs $\Omega(n)$ \emph{memory stalls}~\cite{G05,AGHK09}.
\item 
A \emph{read-only} transaction may need to
perform a linear (in $n$) number of expensive synchronization patterns.
\end{itemize}
We then present a progressive TM implementation
that beats all of these lower bounds, thus suggesting that
the course correction from non-blocking (obstruction-free) TMs to blocking (progressive) TMs was indeed justified.

\vspace{1mm}\noindent\textbf{Partially non-blocking TMs.}
Lastly, we explore the costs of providing non-blocking progress to only a \emph{subset} of transactions.
Specifically, we require \emph{read-only} transactions to commit \emph{wait-free}, \emph{i.e.}, every transaction commits
within a finite number of its steps, but \emph{updating}
transactions are guaranteed to commit only if they run in the absence of concurrency.
We show that combining this kind of \emph{partial} wait-freedom with
read invisibility \emph{or} disjoint-access parallelism comes with inherent costs.
Specifically, we establish the following lower bounds for TMs that provide this kind of partial wait-freedom. 
\begin{itemize}
\item 
This kind of partial wait-freedom equipped with invisible reads results in maintaining unbounded sets of \emph{versions}
for every data item.
\item 
It is impossible to implement a \emph{strict} form of disjoint-access parallelism~\cite{OFTM}.
\item
Combining with the weak form of disjoint-access parallelism means that a read-only transaction (with
an arbitrarily large read set) must sometimes perform at least one
expensive synchronization pattern per read operation in some executions.
\end{itemize}
%
%
\subsection{Hybrid transactional memory}
\label{sec:ob4}
We turn our focus on \emph{Hybrid transactional memory (HyTM)}~\cite{hybridnorec,damronhytm, kumarhytm,phasedtm}.
The TM abstraction, in its original manifestation, augmented the processor's \emph{cache-coherence 
protocol} and extended the CPU's instruction set with
instructions to indicate which memory accesses must be transactional~\cite{HM93}.
Most popular TM designs, subsequent to the original proposal in \cite{HM93} 
have implemented all the functionality in software~\cite{norec, ST95,HLM+03, astm, fraser}.
More recently, CPUs have included hardware extensions to 
support small transactions~\cite{Rei12, asf, bluegene}.
Hardware transactions may be spuriously aborted due to several reasons: cache capacity overflow, interrupts \emph{etc}.
This has led to proposals for \emph{best-effort} HyTMs in which the fast, but potentially unreliable hardware
transactions are complemented with slower, but more reliable software transactions.
However, the fundamental limitations of building a HyTM with nontrivial concurrency between hardware and software transactions
are not well understood.
Typically, hardware transactions usually employ \emph{code instrumentation} techniques to
detect concurrency scenarios and abort in the
case of contention.
But are there inherent instrumentation costs of implementing a HyTM, and what are the trade-offs between these costs
ands provided concurrency, \emph{i.e.}, the ability of the HyTM to execute hardware and software transactions in parallel?

The thesis makes the following contributions which help determine the cost of concurrency in HyTMs.
\begin{itemize}
\item
We propose a general model for HyTM implementations, 
which captures the notion of \emph{cached} accesses as performed by hardware transactions,
and precisely defines instrumentation costs in a quantifiable way.
\item
We derive lower and upper bounds in this model, which capture for the first time, an inherent trade-off 
on the degree of concurrency
allowed between hardware and software transactions and the \emph{instrumentation} overhead introduced on the hardware.
\end{itemize}
\subsection{Optimism for boosting concurrency}
\label{sec:o2}
Lock-based implementations are conventionally \emph{pessimistic} in nature: the operations invoked by 
processes are not ``abortable'' and return only after they are successfully completed.
The TM abstraction is a realization of \emph{optimistic} concurrency control: speculatively execute transactions,
abort and roll back on dynamically detected conflicts.
But are optimistic implementations fundamentally better equipped to exploit concurrency than pessimistic ones?

We compare the \emph{amount of concurrency} one can obtain by converting a sequential implementation
of a data abstraction into a concurrent one using optimistic
or pessimistic synchronization techniques.
To establish fair comparison of such implementations, we introduce a new correctness criterion for concurrent implementations, 
called \emph{locally serializable linearizability},
defined independently of the synchronization techniques they use.

We treat an implementation's concurrency as its ability to accept \emph{schedules} of sequential operations 
from different processes.
More specifically, we assume an external scheduler that defines which
processes execute which steps of the corresponding sequential implementation 
in a dynamic and unpredictable fashion. 
This allows us to define concurrency provided by an implementation as the set of 
interleavings of steps of sequential operations (or schedules)
it \emph{accepts}, \emph{i.e.}, is able to effectively process.
Then, the more schedules the implementation would accept without hampering correctness, the more concurrent it would be.

The thesis makes the following contributions.
\begin{itemize}
\item 
We provide a framework to analytically capture the
inherent concurrency provided by two broad classes of synchronization techniques:
pessimistic implementations that implement some form of mutual exclusion
and optimistic implementations based on speculative executions.
\item
We show that, implementations based on pessimistic synchronization
and ``semantics-oblivious'' TMs are suboptimal, in the sense that,
there exist there exist simple schedules of the list-based set
which cannot be accepted by \emph{any} pessimistic or TM-based implementation.
Specifically, we prove that TM-based implementations accept schedules of the list-based set that
cannot be accepted by \emph{any} pessimistic implementation.
However, we also show pessimistic implementations of the list-based set which accept schedules
that cannot be accepted by \emph{any} TM-based implementation.
\item
We show that, there exists an 
optimistic implementation
of the list-based set that is \emph{concurrency optimal}, \emph{i.e.}, it accepts \emph{all} correct schedules.
\end{itemize}
Our results suggest that ``semantics-aware'' optimistic implementations may be better suited to exploiting concurrency
than their pessimistic counterparts.
%
\section{Roadmap of the thesis}
\label{sec:roadmap}
We first define the TM model, the TM properties proposed in literature and the complexity metrics considered
in Chapter~\ref{ch:tm-model}. Chapter~\ref{ch:pc1} is on safety for TMs.
Chapter~\ref{ch:p3c2} is on the complexity of blocking TMs,
non-blocking TMs that satisfy obstruction-freedom are covered in Chapter~\ref{ch:p3c3}
and we present lower bounds for partially non-blocking TMs in Chapter~\ref{ch:p3c4}.
Chapter~\ref{ch:p4c4} is devoted to the study of hybrid TMs.
In Chapter~\ref{ch:p2c1}, we compare the relative abilities of optimistic and pessimistic
synchronization techniques in exploiting concurrency.
Chapter~\ref{ch:conc} presents closing comments and future directions.
Viewed collectively, the results hopefully shine light on the foundations of the TM abstraction that
is widely expected to be the \emph{Zeitgeist} of the concurrent computational model. 
%
\chapter{Transactional Memory model}
\label{ch:tm-model}
\epigraph{All models are wrong, but some models are useful.}
{\textit{George Edward Pelham Box}}
In this chapter, we formalize the TM model and discuss some important TM properties proposed in literature.
In Section~\ref{sec:c21}, we formalize the specification of TMs.
In Section~\ref{sec:tm-correctness}, we introduce the basic TM-correctness property of \emph{strict serializability}
that we consider in the thesis.
Sections~\ref{sec:tm-progress} and \ref{sec:tm-liveness} overview progress and liveness properties for TMs
respectively and identifies the relations between them.
Section~\ref{sec:inv} defines the notion of \emph{invisible reads} while
Section~\ref{sec:dap} is on \emph{disjoint-access parallelism}.
Finally, in Section~\ref{sec:complexity}, we introduce some of the complexity metrics considered in the thesis.
\section{TM interface and TM implementations}
\label{sec:c21}
In this section, we first describe the \emph{shared memory} model of computation and then introduce the TM abstraction.

\vspace{1mm}\noindent\textbf{The shared memory model.}
The thesis considers the standard \emph{asynchronous shared memory} model of computation in which
a set of $n\in \mathbb{N}$ processes (that may \emph{fail by crashing}), 
communicate by applying \emph{operations} on shared \emph{objects}~\cite{AGHK09}.
An object is an instance of an \emph{abstract data type}.
An abstract data type $\tau$ is a \emph{mealy machine} that is specified as a tuple
$(\Phi,\Gamma, Q, q_0, \delta)$ where
$\Phi$ is a set of operations,
$\Gamma$ is a set of responses, $Q$ is a set of states, $q_0\in Q$ is an
initial state and $\delta \subseteq Q\times \Phi \times Q\times \Gamma$ 
is a transition relation that determines, for each state
and each operation, the set of possible
resulting states and produced responses~\cite{AFHHT07}. 
Here, $(q,\pi,q',r) \in \delta$ implies that when
an operation $\pi \in \Phi$ is applied on an object of type $\tau$
in state $q$, the object moves to state $q'$ and returns response $r$.

An \emph{implementation} of an object type $\tau$ provides a specific data-representation of $\tau$ that is realized by
processes applying \emph{primitives} on shared \emph{base objects}, each of which is assigned an initial value. 
In order to implement an object, processes are provided with an algorithm, which is a set of deterministic
state-machines, one for each process.
In the thesis, we use the term primitive to refer to operations on base objects and reserve the term operation
for the object that is implemented from the base objects.

A primitive is a generic atomic \emph{read-modify-write} (\emph{rmw}) procedure applied to a base object~\cite{G05,Her91}.
It is characterized by a pair of functions $\langle g,h \rangle$:
given the current state of the base object, $g$ is an \emph{update function} that
computes its state after the primitive is applied, while $h$ 
is a \emph{response function} that specifies the outcome of the primitive returned to the process.
A rmw primitive is \emph{trivial} if it never changes the value of the base object to which it is applied.
Otherwise, it is \emph{nontrivial}.
An rmw primitive $\langle g,h \rangle$ is \emph{conditional} if there exists $v$, $w$ such that
$g(v,w)=v$ and there exists $v$, $w$ such that
$g(v,w)\neq v$~\cite{cond-04}.

\emph{Read} is an example of a trivial rmw primitive that takes no input arguments: when applied
to a base object with value $v$, the update function leaves the state of the base object unchanged and 
the response function returns the value $v$.
\emph{Write} is an example of a nontrivial rmw primitive that takes an input argument $v'$: when applied to a base object
with value $v$, its update function changes the value of the base object to $v'$ and its response function
returns $ok$.
\emph{Compare-and-swap} is an example of a nontrivial conditional rmw primitive: its update function receives an input argument 
$\langle \ms{old},\ms{new} \rangle$ and changes the value $v$ of the base object to which it is applied \emph{iff} $v=\ms{old}$.
\emph{Load-linked/store-conditional} is another example of a nontrivial conditional rmw primitive:
the \emph{load-linked} primitive executed by some process $p_i$ returns the value of the base object to which
it is applied and the \emph{store-conditional} primitive's update function receives an input $\ms{new}$ and 
atomically changes the value of the base object to $\ms{new}$
\emph{iff} the base object has not been updated by any other process since the load-linked event by $p_i$.
\emph{Fetch-and-add} is an example of a nontrivial rmw primitive that is not conditional: its update
function applied to base object with an integer value $v$ takes an integer $w$ as input
and changes the value of the base object to $v+w$.

\vspace{1mm}\noindent\textbf{Transactional memory (TM).}
\emph{Transactional memory}
allows a set of data items (called \emph{t-objects}) to be accessed 
via \emph{transactions}.
Every transaction $T_k$ has a unique identifier $k$. 
We make no assumptions on the \emph{size} of a t-object, \emph{i.e.}, 
the cardinality on the set $V$ of possible values a t-object can store.
A transaction $T_k$ may contain the following \emph{t-operations},
each being a matching pair of an \emph{invocation} and a \emph{response}:
$\Read_k(X)$ returns a value in $V$, denoted $\Read_k(X) \rightarrow v$, 
or a special value $A_k\notin V$ (\emph{abort});
$\Write_k(X,v)$, for a value $v \in V$,
returns \textit{ok} or $A_k$;
$\TryC_k$ returns $C_k\notin V$ (\emph{commit}) or $A_k$.
As we show in the subsequent Section~\ref{sec:tm-correctness}, we can specify
TM as an abstract data type.

Note that a TM interface may additionally provide a $\textit{start}_k$ t-operation that returns $ok$ or $A_k$, 
which is the first t-operation
transaction $T_k$ must invoke, or a $\textit{tryA}_k$ t-operation that returns $A_k$.
However, the actions performed inside the $\textit{start}_k$ may be performed as part of the first t-operation performed
by the transaction. The $\textit{tryA}_k$ t-operation allows the user application
to explicitly abort a transaction and can be useful, but since each of the individual t-read or t-write are allowed to abort,
the $\textit{tryA}_k$ t-operation provides no additional expressive power to the TM interface.
Thus, for simplicity, we do not incorporate these t-operations in our TM specification.

\vspace{1mm}\noindent\textbf{TM implementations.}
A TM \emph{implementation} provides processes with algorithms
for implementing $\Read_k$, $\Write_k$ and $\TryC_k()$
of a transaction $T_k$ by applying primitives from a set of shared base objects, each of which is 
assigned an initial value.
We assume that a process starts a new transaction
only after its previous transaction has committed or aborted.

In the rest of this section, we define the terms specifically in the context of TM implementations, but
they may be used analogously in the context of any concurrent implementation of an abstract data type.

\vspace{1mm}\noindent\textbf{Executions and configurations.}
An \emph{event} of a process $p_i$ (sometimes we say \emph{step} of $p_i$)
is an invocation or response of an operation performed by $p_i$ or a 
rmw primitive $\langle g,h \rangle$ applied by $p_i$ to a base object $b$
along with its response $r$ (we call it a \emph{rmw event} and write $(b, \langle g,h\rangle, r,i)$).

A \emph{configuration} (of an implementation) specifies the value of each base object and 
the state of each process.
The \emph{initial configuration} is the configuration in which all 
base objects have their initial values and all processes are in their initial states.

An \emph{execution fragment} is a (finite or infinite) sequence of events.
An \emph{execution} of an implementation $M$ is an execution
fragment where, starting from the initial configuration, each event is
issued according to $M$ and each response of a rmw event $(b, \langle
g,h\rangle, r,i)$ matches the state of $b$ resulting from all
preceding events.
An execution $E\cdot E'$ denotes the concatenation of $E$ and execution fragment $E'$,
and we say that $E'$ is an \emph{extension} of $E$ or $E'$ \emph{extends} $E$.

Let $E$ be an execution fragment.
For every transaction (resp., process) identifier $k$,
$E|k$ denotes the subsequence of $E$ restricted to events of
transaction $T_k$ (resp., process $p_k$).
If $E|k$ is non-empty,
we say that $T_k$ (resp., $p_k$) \emph{participates} in $E$, else we say $E$ is \emph{$T_k$-free} (resp., \emph{$p_k$-free}).
Two executions $E$ and $E'$ are \emph{indistinguishable} to a set $\mathcal{T}$ of transactions, if
for each transaction $T_k \in \mathcal{T}$, $E|k=E'|k$.
A TM \emph{history} is the subsequence of an execution consisting of the invocation and 
response events of t-operations.
Two histories $H$ and $H'$ are \emph{equivalent} if $\txns(H) = \txns(H')$
and for every transaction $T_k \in \txns(H)$, $H|k=H'|k$.

\vspace{1mm}\noindent\textbf{Data sets of transactions.}
The \emph{read set} (resp., the \emph{write set}) of a transaction $T_k$ in an execution $E$,
denoted $\Rset_E(T_k)$ (resp., $\Wset_E(T_k)$), is the set of t-objects that $T_k$ reads (resp., writes to) in $E$.
More specifically, if $E$ contains an invocation of $\Read_k(X)$ (resp., $\Write_k(X,v)$), we say
that $X\in \Rset_E(T_k)$ (resp., $\Wset_E(T_k)$) (for brevity, we sometimes omit the subscript $E$ from the notation).
The \emph{data set} of $T_k$ is $\Dset(T_k)=\Rset(T_k)\cup\Wset(T_k)$.
A transaction is called \emph{read-only} if $\Rset(T_k)\neq \emptyset \wedge \Wset(T_k)=\emptyset$; 
\emph{write-only} if $\Wset(T_k)\neq \emptyset \wedge \Rset(T_k)=\emptyset$ and
\emph{updating} if $\Wset(T_k)\neq\emptyset$.
Note that, in our TM model, the data set of a transaction is not known apriori, \emph{i.e.}, at the start of the transaction
and it is identifiable only by the set of data items the transaction has invoked a read or write on in the given execution.

\vspace{1mm}\noindent\textbf{Transaction orders.}
Let $\txns(E)$ denote the set of transactions that participate in $E$.
In an infinite history $H$, 
we assume that each $T_k\in \txns(H)$, $H|k$ is finite;
\emph{i.e.}, transactions do not issue an infinite number of t-operations.
An execution $E$ is \emph{sequential} if every invocation of
a t-operation is either the last event in the history $H$ exported by $E$ or
is immediately followed by a matching response.
We assume that executions are \emph{well-formed}, \emph{i.e.}, for all $T_k$, $E|k$ begins with the invocation of a t-operation, is
sequential and has no events after $A_k$ or $C_k$.
A transaction $T_k\in \txns(E)$ is \emph{complete in $E$} if
$E|k$ ends with a response event.
The execution $E$ is \emph{complete} if all transactions in $\txns(E)$
are complete in $E$.
A transaction $T_k\in \txns(E)$ is \emph{t-complete} if $E|k$
ends with $A_k$ or $C_k$; otherwise, $T_k$ is \emph{t-incomplete}.
$T_k$ is \emph{committed} (resp., \emph{aborted}) in $E$
if the last event of $T_k$ is $C_k$ (resp., $A_k$).
The execution $E$ is \emph{t-complete} if all transactions in
$\txns(E)$ are t-complete.

For transactions $\{T_k,T_m\} \in \txns(E)$, we say that $T_k$ \emph{precedes}
$T_m$ in the \emph{real-time order} of $E$, denoted $T_k\prec_E^{RT} T_m$,
if $T_k$ is t-complete in $E$ and
the last event of $T_k$ precedes the first event of $T_m$ in $E$.
If neither $T_k\prec_E^{RT} T_m$ nor $T_m\prec_E^{RT} T_k$,
then $T_k$ and $T_m$ are \emph{concurrent} in $E$.
An execution $E$ is \emph{t-sequential} if there are no concurrent
transactions in $E$.

\vspace{1mm}\noindent\textbf{Latest written value and legality.}
Let $H$ be a 
t-sequential history.
For every operation $\Read_k(X)$ in $H$,
we define the \emph{latest written value} of $X$ as follows:
if $T_k$ contains a $\Write_k(X,v)$ preceding $\Read_k(X)$,
then the latest written value of $X$ is the value of the latest such write to $X$.
Otherwise,
the latest written value of $X$ is the value
of the argument of the latest $\Write_m(X,v)$ that precedes
$\Read_k(X)$ and belongs to a committed transaction in $H$.
(This write is well-defined since $H$ starts with $T_0$ writing to
all t-objects.)

We say that $\Read_k(X)$ is \emph{legal} in a t-sequential history $H$ 
if it returns the latest written value of $X$, and $H$ is \emph{legal}
if every $\Read_k(X)$ in $H$ that does not return $A_k$ is legal in $H$.

We also assume, for simplicity, that the user application invokes a $\Read_k(X)$
at most once within a transaction $T_k$.
This assumption incurs no loss of generality,
since a repeated read can be assigned to return a previously returned
value without affecting the history's legality.

\vspace{1mm}\noindent\textbf{Contention.}
We say that a configuration $C$ after an execution $E$ is \emph{quiescent} (resp., \emph{t-quiescent})
if every transaction $T_k \in \ms{txns}(E)$ is complete (resp., t-complete) in $C$.
If a transaction $T$ is incomplete in an execution $E$, it has exactly one \emph{enabled} event, 
which is the next event the transaction will perform according to the TM implementation.
Events $e$ and $e'$ of an execution $E$  \emph{contend} on a base
object $b$ if they are both events on $b$ in $E$ and at least 
one of them is nontrivial (the event is trivial (resp., nontrivial) 
if it is the application of a trivial (resp., nontrivial) primitive).

We say that $T$ is \emph{poised to apply an event $e$ after $E$} if $e$ is the next enabled event for $T$ in $E$.
We say that transactions $T$ and $T'$ \emph{concurrently contend on $b$ in $E$} 
if they are poised to apply contending events on $b$ after $E$.

We say that an execution fragment $E$ is \emph{step contention-free for t-operation $op_k$} if the events of $E|op_k$ 
are contiguous in $E$.
We say that an execution fragment $E$ is \emph{step contention-free for $T_k$} if the events of $E|k$ are contiguous in $E$.
We say that $E$ is \emph{step contention-free} if $E$ is step contention-free for all transactions that participate in $E$.
%
\section{TM-correctness}
\label{sec:tm-correctness}
Correctness for TMs is specified as a safety property on TM histories~\cite{OL82,AS85,Lyn96}.
In this section, we introduce the popular TM-correctness condition \emph{strict serializability}~\cite{Pap79-serial}:
all committed transactions appear to execute sequentially in some total order respecting the real-time transaction orders.
We then explain how strict serializability is related to \emph{linearizability}~\cite{HW90}.

In the thesis, we only consider TM-correctness conditions like strict serializability and its restrictions.
We formally define strict serializability below, but other TM-correctness conditions studied in the
thesis can be found in Chapter~\ref{ch:pc1}.

First, we define how to derive a t-complete history from a t-incomplete one.
\begin{definition}[Completions]
\label{def:comp}
Let $H$ be a history.
A \emph{completion of $H$}, denoted ${\overline{H}}$,
is a history derived from $H$ as follows:
\begin{itemize}
\item[--]
First, for every transaction $T_k \in \txns(H)$
with an incomplete t-operation $op_k$ in $H$,
if $op_k=\Read_k \vee \Write_k$,
insert $A_k$ somewhere after the invocation of $op_k$;
otherwise, if $op_k=\TryC_k()$,
insert $C_k$ or $A_k$ somewhere after the last event of $T_k$.
\item[--]
After all transactions are complete,
for every transaction $T_k$ that is not t-complete,
insert $\TryC_k\cdot A_k$ after the last event of transaction $T_k$.
\end{itemize}
\end{definition}
\begin{definition}[Strict serializability]
\label{def:oser}
A finite history $H$ is \emph{serializable} if there
is a legal t-complete t-sequential history $S$,
such that
there is a completion $\overline{H}$ of $H$,
such that $S$ is equivalent to $\ms{cseq}(\overline{H})$,
where $\ms{cseq}(\overline{H})$ is the subsequence of $\overline{H}$
reduced to committed transactions in $\overline{H}$.

We refer to $S$ as a \emph{serialization} of $H$.

We say that $H$ is \emph{strictly serializable} if there exists a serialization $S$ of $H$ such that
for any two transactions $T_k,T_m \in \txns(H)$,
if $T_k \prec_H^{RT} T_m$, then $T_k$ precedes $T_m$ in $S$.
\end{definition}
In general, given a TM-correctness condition $C$, we say that a TM implementation $M$ satisfies $C$ if every
execution of $M$ satisfies $C$.

\vspace{1mm}\noindent\textbf{Strict serializability as linearizability.}
We now show we can specify TM as an abstract data type.

The sequential specification of a TM is specified as follows:
\begin{enumerate}
\item
$\Phi$ is the set of all transactions $\{T_i\}_{i\in \mathbb{N}}$
\item 
$\Gamma$ is the set of incommensurate vectors $\{[r_1,\ldots, r_i ]\};i\in \mathbb{N}$; where each 
$r_j$; $1\leq j \leq i-1$ $\in \{v \in V, A, ok\}$ and $r_i \in \{A,C\}$
\item 
The state of TM is a vector of the state of each t-object $X_m$.
The state of a t-object $X_m$ is a value $v_m\in V$ of $X_m$.
Thus, $Q \subseteq \{ [v^i_1,\ldots, v^i_m, \ldots ]\}$; where each $v^i_m \in V$
\item 
$q_0 \in Q=[ov_1,\ldots, ov_m,\ldots ]$, where each $ov_m\in V$
\item 
$\delta$ is defined as follows:
Let $T_k$ be a transaction
applied to the TM in state $q=[v_1,\ldots, v_m, \ldots]$.
\begin{itemize}
\item For every $X\in \Rset(T_k)$, the response of $\Read_k(X)$ is defined as follows:
If $T_k$ contains a $\Write_k(X,v)$ prior to $\Read_k(X)$, then the response is $v$; else the response is the current state of $X$.
\item 
For every $X\in \Wset(T_k)$, the response of $\Write_k(X,v)$ is $ok$.
\item
Transaction $T_k$ returns the response $C$ in which case the TM moves to state $q'$ defined as follows:
every $X_j\in \Wset(T_k)$ to which $T_k$ writes values $nv_j$, $q'[j]=nv_j$; else if $X_j\not\in \Wset(T_k)$, $q'[j]$ is unchanged.
Otherwise, $T_k$ returns the response $A$ in which case $q'=q$.
\end{itemize}
\end{enumerate}
In general, the correctness of an implementation of a data type is commonly captured by the criterion 
of linearizability.
In the TM context, a t-complete history $H$ is \emph{linearizable} with 
respect to the TM type if there exists
a t-sequential history $S$ equivalent to $H$ such that
(1) $S$ respects the real-time ordering of transactions in $H$ and
(2) \emph{$S$ is consistent with the sequential specification of TM}.

The following lemma, which illustrates the similarity between strict serializability
and linearizability with respect to the TM type, is now immediate.
\begin{lemma}
\label{lm:serlin}
Let $H$ be any t-complete history. Then, $H$ is strictly serializable \emph{iff} $H$ is linearizable with respect to the
TM type.
\end{lemma}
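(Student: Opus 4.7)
The plan is to establish both directions of the equivalence by exploiting a single key observation about the TM type's transition relation $\delta$: when a transaction returns $A$, the state is left unchanged ($q' = q$). Aborted transactions are therefore \emph{state-preserving} in the sequential specification, and consequently they can be inserted into or removed from a t-sequential history without altering the latest written value of any t-object at any prefix, hence without affecting the legality of reads of committed transactions.

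For the $(\Leftarrow)$ direction, suppose $H$ is linearizable with respect to the TM type via a witness $S$ that is equivalent to $H$, respects real-time order, and is consistent with the TM specification. I would set $S' := \ms{cseq}(S)$, the subsequence of $S$ restricted to committed transactions. By the state-preserving observation, the TM state reached after each prefix of $S'$ coincides with the state reached after the corresponding prefix of $S$; hence every $\Read$ of a committed transaction in $S'$ still returns the latest written value, so $S'$ is legal. The real-time order among committed transactions is inherited from $S$, and since $H$ is t-complete we have $\overline{H}=H$, so $S'$ is equivalent to $\ms{cseq}(\overline{H})$ and witnesses strict serializability.

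For the $(\Rightarrow)$ direction, suppose $H$ is strictly serializable via a witness $S$ serializing $\ms{cseq}(\overline{H})=\ms{cseq}(H)$. I would construct $S'$ by inserting each aborted transaction $T_k \in \txns(H) \setminus \txns(S)$ into $S$ at a position compatible with $\prec_H^{RT}$; because $\prec_H^{RT}$ is a strict partial order on $\txns(H)$, such a topological extension always exists. The resulting $S'$ is t-sequential, equivalent to $H$, and respects the real-time order of $H$. Committed transactions remain legal in $S'$ because the aborted transactions inserted around them are state-preserving, so the latest written value seen by any committed $\Read$ matches the one seen in $S$.

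The main obstacle is reconciling the read responses of the aborted transactions in $H$ with the TM type's sequential specification at their insertion points in $S'$. This is resolved by the fact that $\delta$ specifies an aborting transition as state-preserving and, under the reading implicit in the thesis, imposes no coupling between an aborting transaction's intermediate responses and the state at the point of insertion: the transition $(q, T_k, q, \vec{r}\cdot A) \in \delta$ is available for any response vector $\vec{r}$ realised by $T_k$ in $H$. This is precisely the same convention under which strict serializability ignores the read responses of aborted transactions, which is what makes the two notions coincide.
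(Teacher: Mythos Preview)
Your argument is correct and in fact considerably more detailed than the paper's own treatment: the paper simply declares the lemma ``immediate'' from the preceding definition of the TM type and gives no proof at all. Your decomposition into the two directions, with the state-preserving property of aborting transitions as the pivot, is exactly the content the paper leaves implicit.

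Your last paragraph puts its finger on the only genuinely delicate point, namely whether the transition relation $\delta$ constrains the intermediate read responses of a transaction whose final response is $A$. The paper's specification of $\delta$ is somewhat informal here, but your reading---that an aborting transition $(q,T_k,q,\vec r\cdot A)$ is permitted regardless of the non-final entries of $\vec r$---is the only one consistent with the claimed equivalence, and is also consistent with the response set $\Gamma$ allowing $A$ as an intermediate response. You are right to flag this as a convention rather than a derived fact; the paper does not spell it out, and without it the $(\Rightarrow)$ direction would fail, since strict serializability imposes no constraint on reads of aborted transactions.

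One minor refinement: in the $(\Rightarrow)$ direction you insert aborted transactions at positions respecting $\prec_H^{RT}$, but you should also note that this can be done while preserving the relative order of the committed transactions already fixed by $S$. This is straightforward (the real-time order on committed transactions is already respected by $S$, so any linear extension of $\prec_H^{RT}$ that agrees with $S$ on $\txns(S)$ works), but worth stating since otherwise the legality argument for committed reads in $S'$ would need re-examination.
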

%
%
\section{TM-progress}
\label{sec:tm-progress}
One may notice that a TM implementation that forces, in every execution to abort every transaction is trivially strictly
serializable, but not very useful.
A TM-progress condition specifies the conditions under which a transaction is allowed to abort.
Technically, a TM-progress condition specified this way is a \emph{safety property} since it can be violated
in a finite execution (cf. Chapter~\ref{ch:pc1} for details on safety properties).

Ideally, a TM-progress condition must provide \emph{non-blocking} progress, in the sense that a prematurely halted
transaction cannot prevent all other transactions from committing.
Such a TM-progress condition is also said to be \emph{lock-free}
since it cannot be achieved by use of locks and mutual-exclusion.
A non-blocking TM-progress condition is considered useful in asynchronous systems with process failures
since it prevents the TM implementation from deadlocking (processes wait infinitely long without
committing their transactions).

\vspace{1mm}\noindent\textbf{Obstruction-freedom.}
Perhaps, the weakest non-blocking TM-progress condition is obstruction-freedom, which stipulates
that a transaction may be aborted only if it encounters steps of a concurrent transaction~\cite{HS11-progress}.
\begin{definition}[Obstruction-free (OF) TM-progress]
We say that a TM implementation $M$ provides \emph{obstruction-free (OF) TM-progress} if for every execution $E$ of $M$, 
if any transaction $T_k \in \ms{txns}(E)$ returns $A_k$ in $E$, then $E$ is not step contention-free for $T_k$. 
\end{definition}
We now survey the popular \emph{blocking} TM-progress properties proposed in literature. Intuitively,
unlike non-blocking TM-progress conditions that adapt to \emph{step contention}, 
a blocking TM-progress condition allows a transaction to be aborted due to \emph{overlap contention}.

\vspace{1mm}\noindent\textbf{Minimal progressiveness.}
Intuitively, the most basic TM-progress condition is one which provide only \emph{sequential} TM-progress, \emph{i.e.},
a transaction may be aborted due to a concurrent transaction.
In literature, this is referred to as \emph{minimal progressiveness}~\cite{tm-book}.
\begin{definition}[Minimal progressiveness]
We say that a TM implementation $M$ provides \emph{minimal progressive} TM-progress (or \emph{minimal progressiveness}) 
if for every execution $E$ of $M$
and every transaction $T_k\in \ms{txns}(E)$ that returns $A_k$ in $E$, there exists a transaction $T_m\in \ms{txns}(E)$
that is concurrent to $T_k$ in $E$~\cite{tm-book}.
\end{definition}
Given TM conditions $C_1$ and $C_2$, if every TM implementation that satisfies 
$C_1$ also satisfies $C_2$, 
but the converse is not true, we say that $C_2$~$\ll$~$C_1$.
\begin{observation}
Minimal progressiveness~$\ll$ Obstruction-free.
\end{observation}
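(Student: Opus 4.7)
The statement asserts two things: (i) every TM implementation satisfying obstruction-free TM-progress (OF) also satisfies minimal progressiveness (MP), and (ii) there exists a TM implementation that is MP but not OF. I will argue (i) directly from the definitions, and prove (ii) by exhibiting a simple counterexample execution witnessing the separation.

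\textbf{Part (i): OF $\Rightarrow$ MP.} Suppose $M$ provides OF TM-progress and let $E$ be any execution of $M$ in which some $T_k \in \ms{txns}(E)$ returns $A_k$. By definition of OF, $E$ is not step contention-free for $T_k$, so the events of $E|k$ are not contiguous in $E$: there exist events $e, e'$ of $T_k$ in $E$ with some event $f$ of another process between them. Since $f$ can only be issued inside some transaction $T_m$, we have $T_m \neq T_k$ with an event of $T_m$ lying strictly between two events of $T_k$ in $E$. Hence the first event of $T_m$ precedes the last event of $T_k$ (so $T_k \not\prec_E^{RT} T_m$), and the last event of $T_m$ follows the first event of $T_k$ (so $T_m \not\prec_E^{RT} T_k$). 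Thus $T_k$ and $T_m$ are concurrent in $E$, and MP is satisfied.

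\textbf{Part (ii): MP $\not\Rightarrow$ OF.} The key observation is that MP only forbids aborts in \emph{overlap contention-free} executions, while OF forbids aborts in \emph{step contention-free} executions --- and the latter class is strictly larger. I would describe a TM implementation $M$ (for instance, a trivial one storing a single shared ``active-transaction'' counter that is incremented on the first t-operation of each transaction and decremented at commit/abort) with the following behaviour: whenever a process invokes $\TryC_k$, $T_k$ commits iff no other transaction is currently live; otherwise $T_k$ returns $A_k$. Clearly $M$ is minimally progressive, since any abort requires another transaction that has started but not yet t-completed, and such a transaction is by definition concurrent with $T_k$. To show $M$ is not OF, I would exhibit an execution $E$ in which $T_m$ performs its first t-operation (incrementing the counter) and then pauses; subsequently $T_k$ executes all of its events contiguously and aborts at $\TryC_k$ because the counter is nonzero. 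In $E$, the events of $T_k$ are contiguous, so $E$ is step contention-free for $T_k$, yet $T_k$ aborts, violating OF.

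\textbf{Main obstacle.} There is no serious technical difficulty; the subtle point is conceptual, namely that one must carefully distinguish the two forms of contention at play. Part (i) hinges on the fact that interleaved events force real-time overlap, while part (ii) hinges on constructing a scenario where two transactions are concurrent in real time but do not interleave their events --- precisely the regime where OF is strictly stronger than MP.
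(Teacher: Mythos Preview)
Your proposal is correct and follows essentially the same approach as the paper: both argue that non--step-contention-freedom forces a concurrent transaction (Part~(i)), and both separate the properties via an execution where $T_k$ runs step contention-free yet overlaps a paused $T_m$ (Part~(ii)). Your treatment is simply more detailed---you spell out the real-time-order reasoning in (i) and give a concrete counter-based implementation in (ii), whereas the paper dispatches (i) with ``clearly'' and for (ii) merely points to the separating scenario without naming an implementation.
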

\begin{proof}
Clearly, every TM implementation that satisfies obstruction-freedom also satisfies minimal progressiveness, but the converse 
is not true.
Consider any execution of a TM implementation $M$ in which a transaction $T$ run step contention-free.
If $M$ is minimally progressive, then $T$ may be aborted in such an execution since $T$ may be concurrent with another transaction.
However, if $M$ satisfies obstruction-freedom, $T$ cannot be aborted in such an execution.
\end{proof}
\vspace{1mm}\noindent\textbf{Progressiveness.}
In contrast to the ``single-lock'' minimal progressive TM-progress
condition (also referred to as \emph{sequential} TM-progress in the thesis), 
state-of-the-art TM implementations
allow a transaction to abort only if it encounters a \emph{conflict} on a 
t-object with a concurrent transaction.
\begin{definition}[Conflicts]
We say that transactions $T_i,T_j$ \emph{conflict} in an execution $E$ on a t-object $X$ if
$T_i$ and $T_j$ are concurrent in $E$ and $X\in\Dset(T_i)\cap\Dset(T_j)$,  and $X\in\Wset(T_i)\cup\Wset(T_j)$.
\end{definition}
\begin{definition}[Progressiveness]
\label{def:progdef}
A TM implementation $M$ provides \emph{progressive} TM-progress (or \emph{progressiveness}) 
if for every execution $E$ of $M$ and every transaction $T_i \in \ms{txns}(E)$ that returns $A_i$ in $E$,
there exists a transaction $T_k \in \ms{txns}(E)$ such that $T_k$ and $T_i$
conflict in $E$~\cite{tm-book}. 
\end{definition}
Note that progressiveness is incomparable to obstruction-freedom.
\begin{observation}
Progressiveness~$\not\ll$~Obstruction-free and Obstruction-free~$\not\ll$~Progressiveness.
\end{observation}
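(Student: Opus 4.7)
The plan is to exhibit two TM implementations that together establish that neither property implies the other, yielding both non-implications in the statement. The two directions are symmetric in strategy: for each, I pick a carefully designed execution in which one condition forces the TM's hand but the other does not.

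For the direction \emph{Obstruction-free $\not\ll$ Progressiveness} (progressiveness does not imply obstruction-freedom), I would consider an execution $E$ containing two concurrent transactions $T_1$ and $T_2$ that both write to a common t-object $X$. Let $T_1$ invoke $\Write_1(X,1)$ and perform some base-object events without becoming t-complete. Then let $T_2$ perform the entire sequence $\Write_2(X,2) \cdot \TryC_2$ with its events appearing contiguously in $E$, so that $E$ is step contention-free for $T_2$. Since $T_1, T_2$ are concurrent and conflict on $X$, a progressive implementation is \emph{permitted} to abort $T_2$. I would then give a concrete progressive TM (for example, a standard two-phase-locking TM in which $T_1$ has already acquired an exclusive lock on $X$ before $T_2$ starts and $T_2$ aborts on encountering that lock) that actually aborts $T_2$ in $E$. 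Because $T_2$ runs step contention-free yet aborts, this implementation violates obstruction-freedom.

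For the direction \emph{Progressiveness $\not\ll$ Obstruction-free} (obstruction-freedom does not imply progressiveness), I would construct an execution $E'$ in which $T_1$ performs $\Read_1(X)$ and $T_2$ performs $\Read_2(Y)$ with distinct t-objects $X \neq Y$, and the base-object events of $T_1$ and $T_2$ are interleaved so that $E'$ is not step contention-free for either transaction. Since $\Dset(T_1) \cap \Dset(T_2) = \emptyset$, the transactions do not conflict in the sense of the definition, so progressiveness forbids aborting either of them. I would exhibit an obstruction-free implementation---essentially any ``back-off'' design in which a transaction detecting an interfering concurrent step aborts---that does abort (say) $T_2$ in $E'$. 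Such an implementation satisfies obstruction-freedom vacuously, since every transaction it aborts has seen step contention, yet it fails progressiveness on $E'$.

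The conceptual content is light; the only real obstacle is bookkeeping, namely checking that the two concrete implementations I invoke are genuinely well-defined and satisfy strict serializability (Definition~\ref{def:oser}) in \emph{all} their executions, not merely the distinguishing one, and that they formally meet the progress condition they are supposed to satisfy. Since standard two-phase-locking and standard obstruction-free TMs already appear in the literature and are known to meet strict serializability together with progressiveness and obstruction-freedom respectively, the distinguishing executions above suffice, and both $\not\ll$ statements in the observation follow.
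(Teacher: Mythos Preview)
Your proposal is correct and follows essentially the same approach as the paper: for each direction you exhibit a scenario where one condition permits an abort that the other forbids (a step-contention-free transaction that conflicts with an incomplete concurrent one, and two interleaved non-conflicting transactions, respectively). The only difference is granularity: the paper simply sketches the two distinguishing executions, whereas you go further and propose to build fully specified strictly serializable TM implementations around them; that extra rigor is fine but not strictly needed for an observation at this level.
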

\begin{proof}
We can show that there exists an execution exported by an obstruction-free TM, but not by any
progressive TM and vice-versa.

Consider a t-read $X$ by a transaction $T$ that runs step contention-free
from a configuration that contains an incomplete write to $X$. Weak progressiveness
does not preclude $T$ from being aborted in such an execution.
Obstruction-free TMs however, must ensure that $T$ must complete its read of $X$ without blocking or aborting in such executions.
On the other hand, weak progressiveness requires two non-conflicting transactions to not be aborted even
in executions that are not step contention-free; but this is not guaranteed by obstruction-freedom.
\end{proof}
In general, progressive TMs (including the ones described in the thesis) satisfy the following stronger definition: 
for every transaction $T_i \in \ms{txns}(E)$ that returns $A_i$ in an execution $E$, 
there exists prefix $E'$ of $E$ and a transaction $T_k \in \ms{txns}(E')$ such that $T_k$ and $T_i$
conflict in $E$. However, for the lower bound results stated in the thesis, we stick to Definition~\ref{def:progdef}.

\vspace{1mm}\noindent\textbf{Strong progressiveness.}
One may observe that the definition of progressiveness does not preclude two conflicting 
transactions (over a single t-object) from each being aborted.
Thus, we study a stronger notion of progressiveness called \emph{strong progressiveness}~\cite{tm-book}.

Let $CObj_E(T_i)$ denote the set of t-objects over which transaction $T_i \in \ms{txns}(H)$ conflicts with any other 
transaction in an execution $E$,
\emph{i.e.}, $X \in CObj_E(T_i)$, \emph{iff} there exist transactions $T_i$ and $T_k$
that conflict on $X$ in $E$.
Let $ Q\subseteq \ms{txns}(E)$ and $CObj_E(Q)=\bigcup\limits_{ T_i \in Q} CObj_E(T_i)$.

Let $\ms{CTrans}(E)$ denote the set of non-empty subsets of $\ms{txns}(E)$ such that a set $Q$ is in $\ms{CTrans}(E)$ 
if no transaction in $Q$ conflicts with a transaction not in $Q$.
\begin{definition}[Strong progressiveness]
\label{def:sprog}
A TM implementation $M$ is \emph{strongly progressive} if $M$ is weakly progressive 
and for every execution $E$ of $M$ and for every set $Q \in \ms{CTrans}(E)$ such that $|CObj_{E}(Q)| \leq 1$, 
some transaction in $Q$ is not aborted in $E$~\cite{tm-book}.
\end{definition}
The above definitions imply:
\begin{corollary}
Minimal progressiveness (sequential TM-progress)~$\ll$~Progressiveness~$\ll$~Strong progressiveness.
\end{corollary}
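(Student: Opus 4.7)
The plan is to establish the chain of strict implications by proving each of the two inclusions, and then in each case exhibiting an execution that is allowed by the weaker condition but ruled out by the stronger one.

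For the second implication, Progressiveness~$\ll$~Strong progressiveness, the containment direction is immediate from Definition~\ref{def:sprog}: strong progressiveness explicitly requires weak progressiveness as a conjunct, so every strongly progressive TM is progressive. For strictness, I would construct (or point to) a progressive TM implementation that admits an execution with two transactions $T_1,T_2$ which are concurrent, have $\Wset(T_1)\cap\Dset(T_2)=\{X\}$ (so they conflict on the single t-object $X$), and both return $A_1,A_2$. Such an execution is consistent with progressiveness (each abort is justified by the existence of a conflicting concurrent transaction), but violates strong progressiveness since $\{T_1,T_2\}\in \ms{CTrans}(E)$ with $|CObj_E(\{T_1,T_2\})|=1$ and yet no transaction in the set commits.

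For the first implication, Minimal progressiveness~$\ll$~Progressiveness, the containment is a direct logical weakening: if $M$ is progressive and $T_k$ returns $A_k$ in $E$, then by Definition~\ref{def:progdef} there is a transaction $T_m$ conflicting with $T_k$ in $E$; by definition of conflict $T_m$ is concurrent to $T_k$, which is precisely what minimal progressiveness requires. For the strict separation, I would exhibit a minimally progressive TM that aborts a transaction $T_k$ in some execution $E$ in which the only other participating transaction $T_m$ is concurrent with $T_k$ but has $\Dset(T_m)\cap\Dset(T_k)=\emptyset$. Such an abort is allowed by minimal progressiveness but forbidden by progressiveness, since there is no conflicting concurrent transaction to justify $A_k$. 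A concrete witness is the trivial ``single-lock'' TM that serializes all overlapping transactions regardless of their data sets.

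The main obstacle, and the only nontrivial piece, is producing the separating examples cleanly: I must argue that each of the two constructed TMs is actually realizable as a well-formed implementation (not merely a set of histories) and that its executions indeed satisfy the weaker TM-progress condition while containing the desired offending execution. Both constructions are routine variants of known implementations (a single-lock TM for the first separation, and a lock-per-object TM with eager aborts on conflict for the second), so no additional machinery beyond Definitions~\ref{def:progdef} and \ref{def:sprog} is needed. Combining the two strict implications yields the corollary.
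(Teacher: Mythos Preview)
Your proposal is correct and follows essentially the same approach as the paper: the paper presents this corollary as an immediate consequence of the definitions (it writes only ``The above definitions imply'' before stating the corollary, with no further argument). Your write-up simply fills in the routine details the paper leaves implicit, namely the containment directions and the separating executions, and does so correctly.
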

\vspace{1mm}\noindent\textbf{Mv-permissiveness.}
Perelman \emph{et al.} introduced the notion of \emph{mv-permissiveness}, a TM-progress property
designed to prevent read-only transactions from being aborted.
\begin{definition}[Mv-permissiveness]
\label{def:sprog}
A TM implementation $M$ is \emph{mv-permissive} if for every execution $E$ of $M$ and
for every transaction $T_k\in \ms{txns}(E)$ that returns $A_k$ in $E$, we have that $\Wset(T_k)\neq \emptyset$
and there exists an updating transaction $T_m\in \ms{txns}(E)$ such that $T_k$ and $T_m$ conflict in $E$.
\end{definition}
We observe that mv-permissiveness is strictly stronger than progressiveness, but incomparable to strong progressiveness.
\begin{observation}
Progressiveness~$\ll$~Mv-permissiveness.
\end{observation}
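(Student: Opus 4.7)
The claim has two parts: (i) every mv-permissive TM is progressive, and (ii) there exists a progressive TM that is not mv-permissive. Both parts follow directly from unfolding the definitions, so the proof is essentially bookkeeping, but the witness execution in part (ii) needs to be chosen carefully.

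For part (i), I would take an arbitrary mv-permissive TM implementation $M$ and an arbitrary execution $E$ in which some transaction $T_k$ returns $A_k$. By Definition of mv-permissiveness, this forces $\Wset(T_k)\neq\emptyset$ and the existence of an updating transaction $T_m\in\ms{txns}(E)$ that conflicts with $T_k$ in $E$. The existence of such a conflicting $T_m$ is exactly the condition required by Definition~\ref{def:progdef}, so $M$ is progressive. This part is immediate from the definitions and requires no construction.

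For part (ii), I would exhibit an execution that a progressive TM is permitted to produce but which no mv-permissive TM can produce. The natural candidate is a \emph{read-only} transaction that aborts due to a read-write conflict. Concretely, consider an execution $E$ with two transactions: $T_m$, an updating transaction with $X\in\Wset(T_m)$, concurrent with a read-only $T_k$ that invokes $\Read_k(X)$ and returns $A_k$. Since $T_k$ and $T_m$ conflict on $X$ (they are concurrent, share $X$ in their data sets, and $X\in\Wset(T_m)$), progressiveness permits this abort. However, because $\Wset(T_k)=\emptyset$, mv-permissiveness forbids $T_k$ from aborting in $E$. It then suffices to argue that a progressive TM realizing this behaviour exists --- for instance any standard progressive TM that aborts a reading transaction upon detecting a concurrent writer on the same t-object (such TMs are constructed later in the thesis and can be cited).

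The main obstacle, such as it is, lies in part (ii): one must be sure that the described execution is actually realizable by \emph{some} progressive TM, not merely consistent with the progressiveness definition. I would handle this by pointing to a concrete progressive implementation presented elsewhere in the thesis whose conflict-detection rule aborts read-only transactions on read-write conflicts, thereby producing an execution indistinguishable from $E$ above. Once such a witness is in place, the strict containment follows, completing the proof of the observation.
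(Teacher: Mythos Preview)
Your proposal is correct and follows essentially the same approach as the paper: part (i) unfolds the definitions to see that an mv-permissive abort always supplies the conflicting concurrent transaction that progressiveness demands, and part (ii) uses a read-only transaction aborted on a read-write conflict with a concurrent updater as the separating example. Your added care about exhibiting a concrete progressive TM that realizes the separating execution is a reasonable point of rigor that the paper glosses over, but otherwise the arguments coincide.
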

\begin{proof}
Since mv-permissive TMs allow a transaction to be aborted only on read-write conflicts, they also
satisfy progressiveness. But the converse is not true.
Consider an execution in which a read-only transaction $T_i$ that runs concurrently with a conflicting
updating transaction $T_j$.
By the definition of progressiveness, both $T_i$ and $T_j$ may be aborted in such an execution.
However, a mv-permissive TM would not allow $T_i$ to be aborted since it is read-only.
\end{proof}
\begin{observation}
Strong progressiveness~$\not\ll$~Mv-permissiveness and Mv-permissiveness~$\not\ll$~Strong progressiveness.
\end{observation}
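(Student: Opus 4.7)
The plan is to refute each direction of implication separately by exhibiting a concrete execution that is permitted by one property but forbidden by the other.

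For the direction \emph{strong progressiveness}~$\not\ll$~\emph{mv-permissiveness}, I would construct an execution $E$ involving a read-only transaction $T_r$ (with $\Rset(T_r) = \{X\}$, $\Wset(T_r) = \emptyset$) and a concurrent updating transaction $T_w$ (with $X \in \Wset(T_w)$) that conflict on the single t-object $X$, in which $T_r$ returns $A_r$ while $T_w$ commits. Such an execution is consistent with strong progressiveness: the unique non-trivial set $Q = \{T_r, T_w\} \in \ms{CTrans}(E)$ has $|CObj_E(Q)| = 1$, and the requirement that some transaction in $Q$ not abort is discharged by $T_w$ committing. However, by Definition of mv-permissiveness, a read-only $T_r$ may never return $A_r$, so no mv-permissive TM can produce $E$.

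For the direction \emph{mv-permissiveness}~$\not\ll$~\emph{strong progressiveness}, I would exhibit an execution $E'$ with two concurrent updating transactions $T_1$ and $T_2$ that conflict on a single t-object $X$ (say, $X \in \Wset(T_1) \cap \Wset(T_2)$), in which both transactions return $A_1$ and $A_2$ respectively. This execution is permitted by mv-permissiveness, since each aborted transaction is updating and has a conflicting updating transaction in $E'$. But $\{T_1, T_2\} \in \ms{CTrans}(E')$ with $|CObj_{E'}(\{T_1, T_2\})| = 1$, so strong progressiveness demands that at least one of $T_1, T_2$ not abort, contradicting $E'$.

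The two constructions together establish the observation. The main subtlety is ensuring that the exhibited executions are genuinely realizable (\emph{i.e.}, that they are consistent with some valid TM implementation satisfying the corresponding property), rather than being merely abstract histories. For each direction it suffices to argue informally that a standard two-phase-locking-style implementation can be configured to either abort read-only transactions on read-write conflicts (for the first direction, satisfying strong progressiveness while violating mv-permissiveness) or to abort both transactions involved in a write-write conflict (for the second direction, satisfying mv-permissiveness while violating strong progressiveness); neither construction requires delicate machinery since the progress conditions are local and only restrict which aborts are allowed.
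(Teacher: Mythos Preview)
Your proof is correct and essentially matches the paper's approach. Your second example (two updating transactions conflicting on a single t-object, both aborting) is exactly the paper's. Your first example differs slightly: the paper has a read-only transaction and an updating transaction conflicting on \emph{at least two} t-objects, with \emph{both} aborting (strong progressiveness permits this because the single-object clause does not apply), whereas you use a conflict on a \emph{single} t-object with only the read-only transaction aborting and the updater committing to discharge the strong-progressiveness clause. Your variant is a touch simpler, since it avoids needing multiple t-objects.

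One cosmetic point: given the paper's convention that $C_2 \ll C_1$ means ``every $C_1$-implementation satisfies $C_2$'', your first example (a strongly-progressive-but-not-mv-permissive execution) actually witnesses \emph{Mv-permissiveness~$\not\ll$~Strong progressiveness}, and your second example witnesses the other direction. You have the labels swapped, but since the observation is the conjunction of both directions this does not affect correctness.
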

\begin{proof}
Consider an execution in which a read-only transaction $T_i$ that runs concurrently with an
updating transaction $T_j$ such that $T_i$ and $T_j$ conflict on at least two t-objects.
By the definition of strong progressiveness, both $T_i$ and $T_j$ may be aborted in such an execution.
However, a mv-permissive TM would not allow $T_i$ to be aborted since it is read-only.

On the other hand, consider an execution in which two updating transactions $T_i$ and $T_j$ that conflict on a single t-object.
A mv-permissive TM allows both $T_i$ and $T_j$ to be aborted, but strong progressiveness ensures that at least one of
$T_i$ or $T_j$ is not aborted in such an execution.
\end{proof}
%
\section{TM-liveness}
\label{sec:tm-liveness}
Observe that a TM-progress condition only specifies the conditions under which a transaction is aborted, but does not
specify the conditions under which it must commit.
For instance, the OF TM-progress condition specifies that a transaction $T$ may be aborted only in executions
that are not step contention-free for $T$, but does not guarantee that $T$ is committed in a step contention-free execution.
Thus, in addition to a progress condition, we must stipulate a \emph{liveness}~\cite{AS85,Lyn96} condition.

We now define the TM-liveness conditions considered in the thesis.
\begin{definition}[Sequential TM-liveness]
A TM implementation $M$ provides \emph{sequential TM-liveness} if for every finite execution $E$ of $M$, 
and every transaction $T_k$ that runs t-sequentially and applies the invocation of a t-operation $op_k$ immediately after $E$, 
the finite step contention-free extension for $op_k$ contains a response.
\end{definition}
\begin{definition}[Interval contention-free (ICF) TM-liveness]
A TM implementation $M$ provides \emph{interval contention-free (ICF)} TM-liveness
if for every finite execution $E$ of $M$ such that the configuration after $E$ is quiescent, 
and every transaction $T_k$ that applies the invocation of a t-operation $op_k$ immediately after $E$, 
the finite step contention-free extension for $op_k$ contains a response.
\end{definition}
\begin{definition}[Starvation-free TM-liveness]
A TM implementation $M$ provides \emph{starvation-free TM-liveness} if in every execution of $M$, each t-operation 
eventually returns a matching response, assuming that no concurrent t-operation stops indefinitely before returning.  
\end{definition}
\begin{definition}[Obstruction-free (OF) TM-liveness]
A TM implementation $M$ provides \emph{obstruction-free (OF) TM-liveness} if for every finite execution $E$ of $M$, 
and every transaction $T_k$ that applies the invocation of a t-operation $op_k$ immediately after $E$, 
the finite step contention-free extension for $op_k$ contains a matching response.
\end{definition}
\begin{definition}[Wait-free (WF) TM-liveness]
A TM implementation $M$ provides \emph{wait-free (WF) TM-liveness} if
in every execution of $M$, every t-operation returns a response in a finite number of its steps.
\end{definition}
%
The following observations are immediate from the definitions:
\begin{observation}
Sequential TM-liveness~$\ll$~ICF TM-liveness~$\ll$~OF TM-liveness~$\ll$~WF TM-liveness
and
Starvation-free TM-liveness~$\ll$~WF TM-liveness.
\end{observation}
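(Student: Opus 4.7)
The plan is to verify each comparison in two parts: first, the containment direction (every TM satisfying the stronger property satisfies the weaker one), which will follow mechanically from the definitions by weakening preconditions; second, strict separations via witness implementations that satisfy the weaker property but violate the stronger one.

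\textbf{Containments.} For each pair in the chain, I would observe that the precondition under which the stronger property promises a response is a superset of the precondition under which the weaker property demands one. Concretely, WF TM-liveness promises a response in \emph{every} execution and so in particular in any step contention-free extension of any finite execution, hence implies OF TM-liveness; OF TM-liveness additionally guarantees a response when the prefix $E$ ends in a quiescent configuration, hence implies ICF TM-liveness; ICF TM-liveness applied to the empty prefix (or any prefix consisting of a single t-sequential transaction up to t-completion, which is quiescent) implies sequential TM-liveness. For the side chain, WF TM-liveness trivially implies starvation-free TM-liveness since finiteness of steps of every t-operation entails termination regardless of what concurrent t-operations do. Each of these is a one-line unwinding of the definitions in Section~\ref{sec:tm-liveness}.

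\textbf{Strict separations.} Here the idea is to exhibit TM implementations each of which satisfies exactly the weaker condition. For sequential vs.\ ICF, consider a ``global lock'' implementation where the first process to start a transaction acquires a flag using compare-and-swap, and any concurrent transaction spins on that flag indefinitely; this provides sequential TM-liveness (a transaction running alone completes), but fails ICF TM-liveness because even after a quiescent configuration a newly issued solo operation can be made to spin if another transaction is poised but has not yet taken a step — more cleanly, modify the algorithm so that after any non-empty execution it refuses to respond in a step contention-free extension. For ICF vs.\ OF, a standard two-phase-locking TM serves: after any quiescent configuration, a new solo transaction acquires all locks and completes, yet if invoked while another transaction holds a lock and then halts, the new transaction spins forever even though its extension is step contention-free, violating OF. For OF vs.\ WF, any obstruction-free TM implementation (e.g.\ DSTM-style~\cite{HLM+03}) provides OF TM-liveness but admits infinite executions with perpetual step contention where no operation returns in finitely many steps, violating WF. Finally, for starvation-free vs.\ WF, a lock-based TM with a fair queue satisfies starvation-freedom (every t-operation returns provided no concurrent t-operation halts) yet is not wait-free, since a single halted process holding a lock prevents any concurrent t-operation from returning.

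\textbf{Main obstacle.} The routine implication directions are trivial; the real work lies in making the witness implementations \emph{formally} correct TMs (i.e.\ strictly serializable and satisfying some TM-progress condition) rather than degenerate fragments. For the OF-vs-WF separation in particular, one must point to an obstruction-free TM whose non-wait-freedom is demonstrable by an explicit adversarial schedule of interleaved steps that forces perpetual retries. For the ICF-vs-OF separation, the subtle point is to ensure that the witness implementation genuinely satisfies ICF: one needs a quiescent-based argument that the lock is free immediately after any quiescent configuration, which holds for standard two-phase locking because each transaction releases all its locks before t-completing. Once these witness implementations are in hand, each separation reduces to exhibiting one finite or infinite execution that violates the stronger property while all executions satisfy the weaker one.
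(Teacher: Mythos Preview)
The paper provides no proof for this observation; it is simply prefaced by ``immediate from the definitions.'' Your containment arguments are essentially correct and match what the paper leaves implicit, while your separation witnesses go beyond anything the paper does.

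One small correction on the containment side: your argument that ICF implies sequential TM-liveness is incomplete as stated. You write ``ICF applied to the empty prefix (or any prefix consisting of a single t-sequential transaction up to t-completion)'' --- but sequential TM-liveness must be verified for \emph{every} finite $E$ after which $T_k$ runs t-sequentially and invokes $op_k$, including the case where $T_k$ has already performed earlier t-operations in $E$. The missing observation is that if $T_k$ runs t-sequentially, then every other transaction in $E$ is t-complete (hence complete) before $T_k$'s first event, and $T_k$ itself is complete in $E$ (its previous t-operation has returned); thus the configuration after $E$ is always quiescent, and ICF applies. Your conclusion is right, but the justification should say this.

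On the separation side, your Sequential-vs-ICF witness is muddled: the sentence ``modify the algorithm so that after any non-empty execution it refuses to respond in a step contention-free extension'' would break sequential TM-liveness too. A cleaner witness is an implementation in which $op_k$ spins forever whenever $T_k$ detects a concurrent transaction (e.g., a transaction that has started but not t-completed). This satisfies sequential TM-liveness (if $T_k$ is t-sequential there is no such concurrent transaction) but violates ICF: take $E$ ending with the response of $T_2$'s first t-operation (so $E$ is quiescent), then let $T_3$ invoke $op_3$; $T_3$ is concurrent with $T_2$ and so spins forever, yet ICF demands a response. Your other three witnesses (two-phase locking for ICF-not-OF, DSTM-style for OF-not-WF, fair-queue locking for starvation-free-not-WF) are standard and correct.
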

Since ICF TM-liveness guarantees that a t-operation returns a response if there is no other concurrent t-operation,
we have:
\begin{observation}
ICF TM-liveness~$\ll$~Starvation-free TM-liveness.
\end{observation}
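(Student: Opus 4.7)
The plan is to verify the two directions of the $\ll$ relation separately, namely that every starvation-free TM is also ICF, but that some ICF TM fails to be starvation-free.

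First, I would argue that starvation-free TM-liveness implies ICF TM-liveness directly from the definitions. Let $M$ satisfy starvation-free TM-liveness, fix any finite execution $E$ of $M$ whose final configuration is quiescent, and consider any transaction $T_k$ invoking a t-operation $op_k$ immediately after $E$. In the step contention-free extension of $E$ by $T_k$, only $T_k$ takes steps, so there is no concurrent t-operation and the hypothesis ``no concurrent t-operation stops indefinitely before returning'' holds vacuously. Hence starvation-freedom guarantees that $op_k$ returns a matching response inside this extension, which is exactly what ICF requires.

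Second, I would construct an ICF TM implementation that is not starvation-free. The idea is a naive optimistic implementation of $\TryC$ in which a transaction, during $\TryC$, marks each t-object in its write set with a ``busy'' flag, and, if it encounters a flag set by another transaction, simply spins reading that flag without ever aborting, helping, or backing off. If two transactions $T$ and $T'$ whose write sets intersect both reach $\TryC$ concurrently and succeed in marking disjoint portions of their write sets before observing each other, each will then spin forever waiting on the other's flag. Both continuously take steps (reads of the flag), so neither $\TryC$ ``stops indefinitely'' in the sense of starvation-freedom, yet neither $\TryC$ ever returns a response; this violates starvation-free TM-liveness. On the other hand, in any step contention-free extension starting from a quiescent configuration only one transaction is running, no foreign ``busy'' flag can ever be present, and inspection of the code shows that $\Read$, $\Write$, and $\TryC$ each return in a bounded number of steps, establishing ICF.

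The main subtlety I expect is in the separating example. The hypothesis in the definition of starvation-freedom explicitly conditions on concurrent t-operations not stopping indefinitely, so simple counterexamples where one transaction halts while holding a lock are inadequate: they satisfy starvation-freedom vacuously for the blocked peer. The construction must keep every concurrent t-operation genuinely alive and taking steps while simultaneously preventing all of them from returning, which is precisely what a mutual-spinning livelock between two $\TryC$ operations achieves, and this is the only non-routine part of the argument.
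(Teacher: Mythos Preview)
Your proposal is correct and, for the easy direction, matches the paper's reasoning exactly: the paper's entire justification is the single sentence preceding the observation, namely that ICF guarantees a response whenever there is no concurrent t-operation, which is precisely the situation in a step contention-free extension from a quiescent configuration under the starvation-freedom hypothesis.

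For the strictness direction, the paper gives no explicit counterexample at all; it treats the separation as immediate. Your livelock construction is a genuine addition: you correctly identify that a crashed-while-holding-a-lock example would not work because starvation-freedom would then be satisfied vacuously, and you instead engineer a scenario where two $\TryC$ operations both continue to take steps yet neither returns. This is the right idea and it does establish the strict inclusion. The one point to be careful about when filling in the details is that ICF only requires termination from \emph{quiescent} configurations, so you must make sure your implementation clears all busy flags before any $\TryC$ returns, so that no flag survives into a quiescent state; you gesture at this with ``inspection of the code'', and it is indeed routine, but it is the invariant that makes the ICF half of your construction go through.
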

However, we observe that OF TM-liveness and starvation-free TM-liveness are incomparable.
\begin{observation}
Starvation-free TM-liveness~$\not\ll$~OF TM-liveness and OF TM-liveness~$\not\ll$~Starvation-free TM-liveness.
\end{observation}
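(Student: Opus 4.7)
The plan is to exhibit two TM implementations (sketched at a conceptual level) that witness the two non-implications separately. The underlying reason the two conditions are incomparable is that they quantify over different adversaries: starvation-free TM-liveness promises progress in any execution in which no concurrent t-operation halts indefinitely, while OF TM-liveness promises progress only inside step contention-free extensions of an arbitrary finite prefix, in which other transactions may well be stopped mid-operation. Neither set of executions contains the other, so examples in both directions should be available.

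For the direction Starvation-free~$\not\ll$~OF, I would use a lock-based TM that protects each t-object $X$ by a fair (e.g.\ bakery/ticket) mutual-exclusion lock: each $\Read_k(X)$ and $\Write_k(X,v)$ first acquires the corresponding lock, then performs the memory access, and $\TryC_k$ releases all locks held by $T_k$. Assuming no transaction halts while holding a lock, fairness of the underlying locks guarantees that every requesting t-operation eventually obtains its lock and returns, so the TM is starvation-free. OF TM-liveness, however, is falsified by the finite prefix $E$ in which $T_1$ acquires the lock on $X$ and then stops: if $T_2$ invokes $\Read_2(X)$ immediately after $E$, $T_2$'s step contention-free extension spins on the lock forever and never produces a matching response.

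For the direction OF~$\not\ll$~Starvation-free, I would exhibit a canonical obstruction-free TM implemented via CAS-based retries: a transaction installs a descriptor on each accessed t-object with compare-and-swap, and on observing a competing descriptor it rolls back its local state and restarts the current t-operation. Step contention-free extensions terminate by construction, so OF TM-liveness holds. On the other hand, two concurrent transactions can be scheduled so that each step of one invalidates the other's pending CAS, forcing a fresh internal retry; no process halts, both keep taking steps, yet neither outer t-operation ever returns a response, violating starvation-freedom.

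The main obstacle will be the second construction: one must verify that an adversarial but fair interleaving really can drive both transactions into an unbounded sequence of mutual retries while still satisfying the assumed TM-correctness and TM-progress conditions, and in particular that the hypothesis of starvation-freedom (no concurrent t-operation halts indefinitely) is actually triggered, so that the failure of progress cannot be excused vacuously. The first construction is comparatively routine, since the halted-lock-holder scenario is exactly the textbook separation between blocking and non-blocking progress.
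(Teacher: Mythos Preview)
Your approach is correct and rests on the same core insight as the paper---the two conditions quantify over incomparable adversaries (halted operations versus perpetual step contention)---but you take a more constructive route. The paper's proof is a two-sentence definitional argument: it simply observes that in a step contention-free extension of a prefix where some $op_m$ is halted mid-operation, OF TM-liveness forces $op_k$ to respond while starvation-freedom is vacuously silent (its hypothesis ``no concurrent t-operation stops indefinitely'' fails); conversely, in a fair interleaving where both $op_k$ and $op_m$ keep taking steps but encounter step contention, starvation-freedom forces both to respond while OF TM-liveness is silent (no step contention-free extension is at play). The paper does not exhibit concrete TM implementations witnessing the separation.

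Your fair-lock-based TM and your CAS-retry TM are exactly the canonical implementations one would build to make the paper's abstract scenarios concrete, and both sketches are sound; your identification of the livelock construction as the more delicate part is apt, though for an ``observation'' of this kind the paper is content with the definitional mismatch and does not carry out that verification. What your approach buys is rigor---you actually produce the separating objects rather than asserting their existence---at the cost of some length; what the paper's approach buys is brevity appropriate to an observation whose point is really just that the hypotheses of the two definitions are logically independent.
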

\begin{proof}
Consider the step contention-free execution of t-operation $op_k$ concurrent with t-operation $op_m$: $op_k$
must return a matching response within a finite number of its steps, but this is not necessarily ensured by starvation-free
TM-liveness ($op_m$ may be delayed indefinitely).
On the other hand, in executions where two concurrent t-operations $op_k$ and $op_k$ encounter step contention, but
neither stalls indefinitely, both must return matching responses. But this is not guaranteed by OF TM-liveness.
\end{proof}
%
\section{Invisible reads}
\label{sec:inv}
In this section, we introduce the notion of \emph{invisible reads} that intuitively ensures that
a reading transaction does not cause a concurrent transaction to abort.
Since most TM worklods are believed to be read-dominated, this is considered to be an important TM property for
performance~\cite{stmbench7, Attiya09-tmread}.

\vspace{1mm}\noindent\textbf{Invisible reads.}
Informally, in a TM using invisible reads, 
a transaction cannot reveal any information about its read set to
other transactions. Thus, given an execution $E$ and some transaction
$T_k$ with a non-empty read set, transactions other than $T_k$ cannot
distinguish $E$ from an execution in which $T_k$'s read set is empty.
This prevents TMs from applying nontrivial primitives
during t-read operations and from announcing read sets of transactions during tryCommit.
Most popular TM implementations like \emph{TL2}~\cite{DSS06} and
\emph{NOrec}~\cite{norec} satisfy this property.
\begin{definition}[Invisible reads~\cite{attiyaH13}]
We say that a TM implementation $M$ uses \emph{invisible reads} if
for every execution $E$ of $M$:
\begin{itemize}
\item
for every read-only transaction $T_k \in \ms{txns}(E)$, 
no event of $E|k$ is nontrivial in $E$, 
\item
for every updating transaction $T_k \in \ms{txns}(E)$; $\Rset_E(T_k)\neq \emptyset$, 
there exists an execution $E'$ of $M$ such that
\begin{itemize}
\item
$\Rset_{E'}(T_k)=\emptyset$,
\item
$\ms{txns}(E)=\ms{txns}(E')$ and $\forall T_m \in \ms{txns}(E) \setminus \{T_k\}$: $E|m=E'|m$
\item
for any two transactions $T_i, T_j \in \ms{txns}(E)$, 
if the last event of $T_i$ precedes the first event of $T_j$ in $E$, 
then the last event of $T_i$ precedes the first event of $T_j$ in $E'$.
\end{itemize}
\end{itemize}
\end{definition}
\vspace{1mm}\noindent\textbf{Weak invisible reads.}
We introduce the notion of \emph{weak} invisible reads that prevents t-read operations
from applying nontrivial primitives only in the absence of concurrent transactions.
Specifically, weak read invisibility allows t-read operations of a transaction
$T$ to be ``visible'', \emph{i.e.}, write to base objects, only if $T$ is concurrent with
another transaction.
\begin{definition}[Weak invisible reads]
For any execution $E$ and any t-operation $\pi_k$ invoked by some transaction $T_k\in \ms{txns}(E)$,
let $E|\pi_k$ denote the subsequence of $E$ restricted to events of $\pi_k$ in $E$.

We say that a TM implementation $M$ satisfies \emph{weak invisible reads}
if for any execution $E$ of $M$ and every transaction $T_k\in \ms{txns}(E)$; $\Rset(T_k)\neq \emptyset$ that is
not concurrent with any transaction $T_m\in \ms{txns}(E)$, $E|\pi_k$ does not contain any nontrivial events, where $\pi_k$ is
any t-read operation invoked by $T_k$ in $E$.
\end{definition}
For example, the popular TM implementation \emph{DSTM}~\cite{HLM+03} satisfies weak invisible reads, but not invisible reads.
Algorithm~\ref{alg:oftm} in Chapter~\ref{ch:p3c2} 
depicts a TM implementation that is based on DSTM satisfying weak invisible reads, but not
the stronger definition of invisible reads.
\section{Disjoint-access parallelism (DAP)}
\label{sec:dap}
The notion of \emph{disjoint-access parallelism (DAP)}~\cite{israeli-disjoint}
is considered important in the TM context since it allows two transactions accessing unrelated t-objects to execute
without memory contention.
In this section, we preview the DAP definitions proposed in literature and identify the relations between them.

\vspace{1mm}\noindent\textbf{Strict data-partitioning.}
Let $E|X$ denote the subsequence of the execution $E$ derived by removing all events associated with t-object $X$.
A TM implementation $M$ is \emph{strict data-partitioned}~\cite{tm-book}, if for every t-object $X$, 
there exists a set of base objects $\ms{Base}_M(X)$ such that
\begin{itemize}
\item
for any two t-objects $X_1, X_2$; $\ms{Base}_M(X_1) \cap \ms{Base}_M(X_2)=\emptyset$,
\item 
for every execution $E$ of $M$ and every transaction $T \in \ms{txns}(E)$,
every base object accessed by $T$ in $E$ is contained in $\ms{Base}_M(X)$ for some $X\in \Dset(T)$
\item
for all executions $E$ and $E'$ of $M$, if $E|X=E|X$ for some t-object $X$, then the configurations after $E$ and
$E'$ only differ in the states of the base objects in $\ms{Base}_M(X)$.
\end{itemize}
\vspace{1mm}\noindent\textbf{Strict disjoint-access parallelism.}
A TM implementation $M$ is \emph{strictly disjoint-access parallel
  (strict DAP)} if, for
all executions $E$ of $M$, and for all transactions $T_i$ and $T_j$ that participate in $E$, 
$T_i$ and $T_j$ contend on a base object in $E$ only if 
$\Dset(T_i)\cap \Dset(T_j)\neq \emptyset$~\cite{tm-book}.
\begin{proposition}
Strict DAP $\ll$ Strict data-partitioning.
\end{proposition}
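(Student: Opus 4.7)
The plan is to establish the proposition in two steps: first, show that any strict data-partitioned implementation is automatically strict DAP; second, exhibit a strict DAP implementation that fails strict data-partitioning.

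For the forward direction, I would argue directly from the structural disjointness guaranteed by strict data-partitioning. Suppose $M$ is strict data-partitioned and transactions $T_i, T_j$ contend on some base object $b$ in an execution $E$. Since both transactions access $b$, the second clause of strict data-partitioning ensures $b \in \ms{Base}_M(X_i)$ for some $X_i \in \Dset(T_i)$ and $b \in \ms{Base}_M(X_j)$ for some $X_j \in \Dset(T_j)$. The first clause (pairwise disjointness of the per-t-object base object sets) forces $X_i = X_j$, and hence $\Dset(T_i) \cap \Dset(T_j) \neq \emptyset$, which is exactly strict DAP. This part is essentially unpacking the definitions and should be short.

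For the separation, I would take any strict data-partitioned TM implementation $M$ (such a $M$ exists, \emph{e.g.}, a trivial single-lock-per-t-object implementation on top of per-t-object registers) and construct a modified implementation $M'$ that augments $M$ with one extra shared base object $b^*$ that every transaction reads (via the trivial read primitive) as its first step, discarding the value. The response of this extra read does not influence any subsequent action, so $M'$ inherits the correctness, progress, and liveness of $M$. Because every event on $b^*$ in any execution of $M'$ is trivial, no two transactions can contend on $b^*$ (by definition, contention requires at least one nontrivial event); contention on all other base objects is inherited from $M$, which is strict DAP by the forward direction. Hence $M'$ is strict DAP. However, $M'$ cannot be strict data-partitioned: if it were, then $b^*$ would have to lie in $\ms{Base}_{M'}(X)$ for some t-object $X$, but $b^*$ is accessed by transactions whose data sets do not contain $X$, contradicting the second clause of strict data-partitioning together with the disjointness clause.

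The main obstacle I anticipate is making the separation rigorous rather than intuitive: specifically, justifying that trivial accesses to $b^*$ really do not count as contention under the formal definition (the excerpt specifies that contention requires at least one nontrivial event), and verifying that the second clause of strict data-partitioning cannot be vacuously satisfied by assigning $b^*$ to some arbitrary $\ms{Base}_{M'}(X)$. The latter is handled by choosing $M$ so that there exist transactions $T$ with $X \notin \Dset(T)$ that nevertheless access $b^*$ in $M'$, which is immediate since \emph{every} transaction reads $b^*$ while at least some transactions have data sets disjoint from $\{X\}$. Once these observations are recorded, the proposition follows.
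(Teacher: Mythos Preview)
Your proposal is correct and follows the same two-step structure as the paper: the forward direction unpacks the definitions exactly as the paper does, and the separation exploits the same key observation that strict DAP only forbids \emph{contention} (which requires a nontrivial event) whereas strict data-partitioning forbids any shared base-object access. Your separation is in fact more explicit than the paper's, which simply asserts that a strict DAP implementation ``does not preclude'' two disjoint-data-set transactions from accessing the same base object without exhibiting a concrete $M'$; your augmentation-by-a-global-read-only-register construction makes this rigorous.
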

\begin{proof}
Let $M$ be any strict data-partitioned TM implementation. Then, $M$ is also strict DAP. Indeed,
since any two transactions accessing mutually disjoint data sets in a strict data-partitioned implementation
cannot access a common base object in any execution $E$ of $M$, $E$ also ensures that
any two transactions
contend on the same base object in $E$ only if   
they access a common t-object.

Consider the following execution $E$ of a strict DAP TM implementaton $M$ 
that begins with two transactions $T_1$ and $T_2$ that access disjoint data sets in $E$. 
A strict data-partitioned TM implementation
would preclude transactions $T_1$ and $T_2$ from accessing the same base object, but a strict DAP
TM implementation does not preclude this possibility.
\end{proof}
We now describe two relaxations of strict DAP. For the formal definitions, we introduce the notion of a
\emph{conflict graph} which captures the dependency relation among t-objects accessed by transactions.

\vspace{1mm}\noindent\textbf{Read-write (RW) disjoint-access parallelism.}
Informally, read-write (RW) DAP means that two transactions
can \emph{contend}  
on a common base object only if their data 
sets are connected in the \emph{conflict graph}, capturing 
write-set overlaps among all concurrent transactions.

We denote by $\tau_{E}(T_i,T_j)$, the set of transactions ($T_i$ and $T_j$ included)
that are concurrent to at least one of $T_i$ and $T_j$ in an execution $E$.

Let ${\tilde G}(T_i,T_j,E)$ be an undirected graph whose vertex set is $\bigcup\limits_{T \in \tau_{E}(T_i,T_j)} \Dset(T)$
and there is an edge
between t-objects $X$ and $Y$ \emph{iff} there exists $T \in \tau_{E}(T_i,T_j)$ such that 
$\{X,Y\} \in \Wset(T)$.
We say that $T_i$ and $T_j$ are \emph{read-write disjoint-access} in $E$
if there is no path between a t-object in $\Dset(T_i)$ and a t-object in $\Dset(T_j)$ in ${\tilde G}(T_i,T_j,E)$.
A TM implementation $M$ is \emph{read-write disjoint-access parallel (RW DAP)} if, for
all executions $E$ of $M$, 
transactions $T_i$ and $T_j$ 
contend on the same base object in $E$ only if   
$T_i$ and $T_j$ are not read-write disjoint-access in $E$ or there exists a t-object $X \in \Dset(T_i) \cap \Dset(T_j)$.
\begin{proposition}
RW DAP $\ll$ Strict DAP.
\end{proposition}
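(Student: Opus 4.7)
The plan is to prove the two containment claims separately: first show that every Strict DAP implementation is also RW DAP (the ``$\Rightarrow$'' direction), and then exhibit a separating example to establish that the converse fails.

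For the forward direction, I would argue directly from the definitions. Suppose $M$ is Strict DAP and fix any execution $E$ of $M$ together with two transactions $T_i, T_j \in \ms{txns}(E)$ that contend on some base object. By Strict DAP, $\Dset(T_i) \cap \Dset(T_j) \neq \emptyset$, so there exists a t-object $X \in \Dset(T_i) \cap \Dset(T_j)$. But this immediately satisfies the right-hand disjunct of the RW DAP condition, so $M$ is RW DAP. This part is essentially a one-line observation, since the Strict DAP hypothesis is strictly stronger than the ``shared t-object'' clause appearing inside the RW DAP disjunction.

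For the separation, the idea is to exploit the fact that RW DAP permits contention along paths in the conflict graph ${\tilde G}(T_i,T_j,E)$ that are ``witnessed'' by a third, concurrent transaction whose write set bridges the data sets of $T_i$ and $T_j$. I would construct (or exhibit as a thought experiment) a TM implementation $M$ and an execution $E$ with three transactions: a transaction $T_i$ whose data set is $\{X\}$, a transaction $T_j$ whose data set is $\{Y\}$ with $X \neq Y$, and a concurrent transaction $T_k$ with $\{X,Y\} \subseteq \Wset(T_k)$, concurrent to both $T_i$ and $T_j$. In ${\tilde G}(T_i,T_j,E)$, the edge $\{X,Y\}$ exists because of $T_k$, so $T_i$ and $T_j$ are not read-write disjoint-access, and hence RW DAP does \emph{not} forbid $T_i$ and $T_j$ from contending on a shared base object (e.g., some metadata location used to detect the conflict with $T_k$). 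On the other hand, since $\Dset(T_i) \cap \Dset(T_j) = \emptyset$, any such contention is a violation of Strict DAP.

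The only mildly delicate part is to ensure that the separating implementation is actually realizable as a TM (so that the example is not vacuous) and indeed belongs to the class of RW DAP TMs. I would therefore sketch a minimal bookkeeping scheme (for instance, $T_i$ and $T_j$ each consult a shared metadata object associated with $T_k$'s writes) which causes contention between $T_i$ and $T_j$ while never causing contention between transactions whose data sets are disconnected in the conflict graph. With such an implementation in hand, $M$ satisfies RW DAP but violates Strict DAP, completing the strict separation and establishing RW DAP~$\ll$~Strict DAP.
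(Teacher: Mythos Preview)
Your proposal is correct and follows essentially the same approach as the paper: an immediate forward direction from the definitions, plus a separating example built around a ``bridging'' transaction whose write set contains both $X$ and $Y$, so that two disjoint-data-set transactions accessing $X$ and $Y$ respectively are connected in $\tilde G$ and may legitimately contend under RW DAP while violating Strict DAP. The paper discharges the realizability concern you flag by pointing to a concrete DSTM-style implementation (Algorithm~\ref{alg:oftm}) in which $T_i$ and $T_j$ contend on the bridging transaction's status field; your abstract sketch of a shared metadata object plays the same role.
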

\begin{proof}
From the definitions, it is immediate that every strict DAP TM implementation satisfies RW DAP.

But the converse is not true (Algorithm~\ref{alg:oftm} describes a TM implementation that satisfies RW and weak DAP, but not strict DAP).
Consider the following execution $E$ of a weak DAP or RW DAP TM implementaton $M$ 
that begins with the t-incomplete execution of a transaction $T_0$ that 
accesses t-objects $X$ and $Y$, followed by the step contention-free executions of two transactions $T_1$ and $T_2$ 
which access $X$ and $Y$ respectively. Transactions $T_1$ and $T_2$ may contend on a base object since 
there is a path between $X$ and $Y$ in $G(T_1,T_2,E)$. However, a strict DAP TM implementation
would preclude transactions $T_1$ and $T_2$ from contending on the same base object since $\Dset(T_1) \cap \Dset(T_2)=\emptyset$
in $E$.
\end{proof}
\vspace{1mm}\noindent\textbf{Weak disjoint-access parallelism.}
Informally, weak DAP means that two transactions
can \emph{concurrently contend}  
on a common base object only if their data 
sets are connected in the \emph{conflict graph}, capturing 
data-set overlaps among all concurrent transactions.

Let $G(T_i,T_j,E)$ be an undirected graph whose vertex set is $\bigcup\limits_{T \in \tau_{E}(T_i,T_j)} \Dset(T)$
and there is an edge
between t-objects $X$ and $Y$ \emph{iff} there exists $T \in \tau_{E}(T_i,T_j)$ such that 
$\{X,Y\} \in \Dset(T)$.
We say that $T_i$ and $T_j$ are \emph{disjoint-access} in $E$
if there is no path between a t-object in $\Dset(T_i)$ and a t-object in $\Dset(T_j)$ in $G(T_i,T_j,E)$.
A TM implementation $M$ is \emph{weak disjoint-access parallel (weak DAP)} if, for
all executions $E$ of $M$,
transactions $T_i$ and $T_j$ 
concurrently contend on the same base object in $E$ only if   
$T_i$ and $T_j$ are not disjoint-access in $E$ or there exists a t-object $X \in \Dset(T_i) \cap \Dset(T_j)$~\cite{AHM09,PFK10}.

We now prove an auxiliary lemma, inspired by \cite{AHM09}, concerning weak DAP TM implementations 
that will be useful in subsequent proofs. Intuitively, the lemma states that, two transactions that are disjoint-access
and running one after the other in an execution of a weak DAP TM cannot contend on the same base object.
\begin{lemma}
\label{lm:dap}
Let $M $ be any weak DAP TM implementation.
Let $\alpha\cdot \rho_1 \cdot \rho_2$ be any execution of $M$ where
$\rho_1$ (resp., $\rho_2$) is the step contention-free
execution fragment of transaction $T_1 \not\in \ms{txns}(\alpha)$ (resp., $T_2 \not\in \ms{txns}(\alpha)$) 
and transactions $T_1$, $T_2$ are disjoint-access in $\alpha\cdot \rho_1 \cdot \rho_2$. 
Then, $T_1$ and $T_2$ do not contend on any base object in $\alpha\cdot \rho_1 \cdot \rho_2$.
\end{lemma}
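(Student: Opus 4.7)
The plan is to proceed by contradiction: assume $T_1$ and $T_2$ contend on some base object in $\alpha \cdot \rho_1 \cdot \rho_2$, and construct a reduced execution of $M$ in which $T_1$ and $T_2$ are poised to apply contending events on the same base object (so that they \emph{concurrently} contend), thereby violating weak DAP once I verify that $T_1$ and $T_2$ remain disjoint-access in this reduced execution.

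First I would isolate the earliest offending events. Let $e_2$ be the first event of $T_2$ in $\rho_2$ whose base object, say $b$, is also accessed by $T_1$ in $\rho_1$, where at least one of the two accesses is nontrivial; let $e_1$ be the first event of $T_1$ in $\rho_1$ on $b$ that contends with $e_2$. Let $\rho_1'$ (resp., $\rho_2'$) be the prefix of $\rho_1$ (resp., $\rho_2$) strictly preceding $e_1$ (resp., $e_2$). I then consider the reduced execution $\alpha \cdot \rho_1' \cdot \rho_2'$.

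The key technical step, which I expect to be the main obstacle, is showing that this reduced sequence is a valid execution of $M$ and that $T_1$ and $T_2$ are poised to apply $e_1$ and $e_2$, respectively, immediately after it. By the minimality of $e_2$, no event of $T_2$ in $\rho_2'$ contends with any event of $T_1$ in $\rho_1$: for every base object $c$ accessed by $T_2$ in $\rho_2'$, either $T_1$ never accesses $c$ in $\rho_1$, or both $T_1$'s accesses to $c$ in $\rho_1$ and $T_2$'s accesses to $c$ in $\rho_2'$ are trivial reads. In either case, $T_1$'s events in $\rho_1$ leave the state of $c$ unchanged, so the state of every such $c$ is identical after $\alpha$, after $\alpha \cdot \rho_1'$, and after $\alpha \cdot \rho_1$. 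Consequently, each rmw event of $T_2$ in $\rho_2'$ produces exactly the same response regardless of whether the full $\rho_1$ or only $\rho_1'$ precedes it, so $\alpha \cdot \rho_1' \cdot \rho_2'$ is a valid execution of $M$. Because $T_1$'s local state is unaffected by events of $T_2$, $T_1$ is still poised to apply $e_1$ after $\alpha \cdot \rho_1'$, and similarly $T_2$ is poised to apply $e_2$ after $\rho_2'$; thus $T_1$ and $T_2$ concurrently contend on $b$ in this execution.

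To close the contradiction with weak DAP, I would verify that $T_1$ and $T_2$ remain disjoint-access in $\alpha \cdot \rho_1' \cdot \rho_2'$. The set of transactions concurrent to $T_1$ or $T_2$ in this reduced execution is contained in $\tau_{\alpha \cdot \rho_1 \cdot \rho_2}(T_1, T_2)$, and the data set of each such transaction can only shrink as events are removed, so $G(T_1, T_2, \alpha \cdot \rho_1' \cdot \rho_2')$ is a subgraph of $G(T_1, T_2, \alpha \cdot \rho_1 \cdot \rho_2)$, and the absence of a path between $\Dset(T_1)$ and $\Dset(T_2)$ in the latter carries over to the former. Since weak DAP forbids disjoint-access transactions from concurrently contending on any base object, this is the desired contradiction.
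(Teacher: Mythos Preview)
Your proof is correct and follows essentially the same contradiction strategy as the paper: truncate $\rho_1$ and $\rho_2$ just before a pair of contending events so that both transactions are simultaneously poised on the same base object, violating weak DAP. The paper picks these events via a case split (depending on whether $T_1$ has any nontrivial event on a contended object, taking the \emph{last} such event as $e_1$), whereas you use a uniform ``first-contending-pair'' selection driven by the minimality of $e_2$; your choice makes the indistinguishability argument for $\alpha\cdot\rho_1'\cdot\rho_2'$ a bit more transparent. You are also more explicit than the paper in justifying that $T_1,T_2$ remain disjoint-access in the truncated execution (the paper simply asserts this ``by construction''), which is a worthwhile addition.
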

\begin{proof}
Suppose, by contradiction that $T_1$ and $T_2$ contend on the same base object in $\alpha\cdot \rho_1\cdot \rho_2$.

If in $\rho_1$, $T_1$ performs a nontrivial event on a base object on which they contend, let $e_1$ be the last
event in $\rho_1$ in which $T_1$ performs such an event to some base object $b$ and $e_2$, the first event
in $\rho_2$ that accesses $b$.
Otherwise, $T_1$
only performs trivial events in $\rho_1$ to base objects on which it contends with $T_2$ in $\alpha\cdot \rho_1\cdot \rho_2$:
let $e_2$ be the first event in $\rho_2$ in which $\rho_2$ performs a nontrivial event to some base object $b$
on which they contend and $e_1$, the last event of $\rho_1$ in $T_1$ that accesses $b$.

Let $\rho_1'$ (and resp. $\rho_2'$) be the longest prefix of $\rho_1$ (and resp. $\rho_2$) that does not include
$e_1$ (and resp. $e_2$).
Since before accessing $b$, the execution is step contention-free for $T_1$, $\alpha \cdot
\rho_1'\cdot \rho_2'$ is an execution of $M$.
By construction, $T_1$ and $T_2$ are disjoint-access in $\alpha \cdot \rho_1'\cdot \rho_2'$
and $\alpha\cdot \rho_1 \cdot \rho_2'$ is indistinguishable to $T_2$ from $\alpha\cdot \rho_1' \cdot \rho_2'$.
Hence, $T_1$ and
$T_2$ are poised to apply contending events $e_1$ and $e_2$ on $b$ in the configuration after 
$\alpha\cdot \rho_1' \cdot \rho_2'$---a contradiction since $T_1$ and $T_2$ cannot concurrently contend on the same base object.   
\end{proof}
We now show that weak DAP is a weaker property than RW DAP.
\begin{proposition}
Weak DAP $\ll$ RW DAP.
\end{proposition}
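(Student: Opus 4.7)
The plan is to establish the two parts of $\ll$ separately: every RW DAP implementation is weak DAP, and some weak DAP implementation fails to be RW DAP.

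For the forward implication, I would argue directly from the graph definitions. Since $\Wset(T) \subseteq \Dset(T)$ for every transaction $T$, every edge of $\tilde{G}(T_i,T_j,E)$ is also an edge of $G(T_i,T_j,E)$, so any path in $\tilde{G}(T_i,T_j,E)$ connecting $\Dset(T_i)$ to $\Dset(T_j)$ lifts to a path in $G(T_i,T_j,E)$. Hence ``not read-write disjoint-access in $E$'' implies ``not disjoint-access in $E$''. Since concurrent contention is a special case of contention, if $M$ is RW DAP and two transactions $T_i, T_j$ concurrently contend on a base object in an execution $E$ of $M$, then the RW DAP clause supplies either a shared t-object or a path in $\tilde{G}(T_i,T_j,E)$ between $\Dset(T_i)$ and $\Dset(T_j)$; in either case the weak DAP clause is satisfied.

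For the strict inequality, I would exhibit a TM implementation $M$ that is weak DAP but not RW DAP, leveraging the fact that weak DAP constrains only \emph{concurrent} contention. The construction would start from a strict DAP TM as a baseline and augment it with an auxiliary shared base object $g$ that each transaction touches with a nontrivial primitive at a carefully chosen moment of its execution. The access protocol for $g$ would be engineered so that no two transactions are ever simultaneously poised to apply contending events on $g$, i.e., nontrivial accesses to $g$ are separated in real time. In a finite execution of $M$ where two transactions $T_1$ and $T_2$ with $\Dset(T_1)\cap\Dset(T_2)=\emptyset$ and no write-set overlap with any other transaction run one after the other, both nontrivially access $g$ and hence contend on $g$; since $T_1$ and $T_2$ are read-write disjoint-access in $\tilde{G}(T_1,T_2,E)$ and share no t-object, RW DAP is violated.

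The principal obstacle is ensuring that weak DAP is genuinely preserved in \emph{every} execution of the augmented implementation, not just in the execution witnessing the RW DAP violation. This reduces to two sub-claims: base objects inherited from the baseline never cause weak DAP violations (immediate from strict DAP of the baseline), and no pair of disjoint-access transactions ever concurrently contends on $g$, which must be enforced by the $g$-access protocol, for example by having a transaction delay its own nontrivial access to $g$ until other ongoing transactions, observed via trivial reads on per-process announce flags, have completed theirs. Once this is in place, weak DAP holds globally while RW DAP is falsified by the execution above.
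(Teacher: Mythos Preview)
Your forward direction is correct and matches the paper: since $\tilde G(T_i,T_j,E)$ is a subgraph of $G(T_i,T_j,E)$, and concurrent contention implies contention, any RW DAP implementation is weak DAP.

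Your converse argument has a genuine gap. You try to exploit the distinction between ``contend'' (RW DAP) and ``concurrently contend'' (weak DAP) by having two transactions $T_1,T_2$ with disjoint data sets run one after the other and both nontrivially touch a global object $g$, so that they contend but never concurrently contend. But this is impossible in any weak DAP implementation: the paper's Lemma~\ref{lm:dap} shows precisely that in a weak DAP TM, two \emph{disjoint-access} transactions running step contention-free one after the other cannot contend on \emph{any} base object. The proof of that lemma rearranges the sequential execution into one where both transactions are simultaneously poised on the contended object, producing concurrent contention and contradicting weak DAP. Your announce-flag protocol does not escape this: $T_1$'s nontrivial write to its own flag $f_1$ and $T_2$'s read of $f_1$ make $f_1$ itself a contended object, and the same rearrangement argument yields concurrent contention on $f_1$. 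In short, once $T_1$ and $T_2$ are disjoint-access, weak DAP already forces them not to contend at all in your witness execution, so your proposed separating execution cannot arise.

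The paper instead exploits the \emph{other} difference between the two definitions: the edge relation of $\tilde G$ uses $\Wset$ while that of $G$ uses $\Dset$. It introduces a third, t-incomplete transaction $T_0$ that \emph{reads} $X$ and \emph{writes} $Y$; then $T_1$ (writing $X$) and $T_2$ (reading $Y$) are connected in $G(T_1,T_2,E)$ via $T_0$'s data set but not in $\tilde G(T_1,T_2,E)$, since $\{X,Y\}\not\subseteq\Wset(T_0)$. Hence $T_1,T_2$ are \emph{not} disjoint-access (so weak DAP permits contention) but \emph{are} read-write disjoint-access (so RW DAP forbids it). The concrete witness is Algorithm~\ref{alg:oftm2}, where both $T_1$ and $T_2$ access $T_0$'s status object. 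To repair your argument you must bring in such a connecting transaction and use the $\Dset$/$\Wset$ asymmetry rather than the contend/concurrently-contend asymmetry.
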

\begin{proof}
Clearly, every implementation that satisfies RW DAP also satisfies weak DAP since the conflict graph
${\tilde G}(T_i,T_j,E)$ (for RW DAP) is a subgraph of ${G}(T_i,T_j,E)$ (for weak DAP).

However, the converse is not true (Algorithm~\ref{alg:oftm2} describes a TM implementation that 
satisfies weak DAP, but not RW DAP).
Consider the following execution $E$ of a weak DAP TM implementaton $M$ 
that begins with the t-incomplete execution of a transaction $T_0$ that 
reads $X$ and writes to $Y$, followed by the step contention-free executions of two transactions $T_1$ and $T_2$ 
which write to $X$ and read $Y$ respectively. Transactions $T_1$ and $T_2$ may contend on a base object since 
there is a path between $X$ and $Y$ in $G(T_1,T_2,E)$. However, a RW DAP TM implementation
would preclude transactions $T_1$ and $T_2$ from contending on the same base object: there is no edge
between t-objects $X$ and $Y$ in the corresponding conflict graph ${\tilde G}(T_1,T_2,E)$ because
$X$ and $Y$ are not contained in the write set of $T_0$.
\end{proof}
%
Thus, the above propositions imply:
\begin{corollary}
Weak DAP~$\ll$~ RW DAP~$\ll$~Strict DAP~$\ll$~Strict data-partitioning.
\end{corollary}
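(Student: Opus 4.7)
The corollary is essentially a transitivity statement, chaining together the three comparisons already established in the preceding propositions. So the plan is to invoke these propositions in sequence and verify that the relation $\ll$ composes as expected.

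First, I would observe that the relation $C_2 \ll C_1$, as defined in the TM-progress section and reused for DAP here, asserts two things: (i) every implementation satisfying $C_1$ also satisfies $C_2$ (a containment of implementation classes), and (ii) the converse is strict, witnessed by some implementation satisfying $C_2$ but not $C_1$. Containment is clearly transitive by elementary set-theoretic reasoning; so if Weak DAP $\ll$ RW DAP and RW DAP $\ll$ Strict DAP and Strict DAP $\ll$ Strict data-partitioning, any strict-data-partitioned TM is strict DAP, hence RW DAP, hence weak DAP.

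Second, I would justify the strictness of each of the three steps in the chain by pointing back to the separating examples already exhibited in the proofs of the preceding propositions. Concretely: the execution involving $T_0$ accessing $X,Y$ followed by step-contention-free $T_1$ on $X$ and $T_2$ on $Y$ separates Weak/RW DAP from Strict DAP; Algorithm~\ref{alg:oftm2} (cited in the proof of Weak DAP $\ll$ RW DAP) separates Weak DAP from RW DAP; and the observation that strict data-partitioning constrains the very base objects used per t-object, while strict DAP only constrains concurrent contention, gives a separating example for Strict DAP $\ll$ Strict data-partitioning. Thus strictness at each link propagates, and each pair of non-adjacent classes in the chain is also strictly separated.

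The hardest part is really bookkeeping rather than mathematical content: checking that the separating witnesses used in the individual propositions are not accidentally excluded by one of the other conditions in the chain (for instance, that the implementation separating RW DAP from Strict DAP can be made weak DAP as well, which is immediate since weak DAP is weaker than RW DAP and the given witness is already RW DAP). Once this routine check is performed, the corollary follows by concatenating the three propositions, with no further construction needed.
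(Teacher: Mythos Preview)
Your proposal is correct and matches the paper's approach: the paper presents this corollary without proof, prefacing it only with ``Thus, the above propositions imply,'' so it is treated as an immediate concatenation of the three preceding propositions. Your additional care about transitivity and checking that separating witnesses remain valid is more than the paper provides, but is sound and does not diverge from the intended argument.
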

%
\section{TM complexity metrics}
\label{sec:complexity}
We now present an overview of some of the TM complexity metrics we consider in the thesis.

\vspace{1mm}\noindent\textbf{Step complexity.}
The step complexity metric, is the total number of events that a process performs on the shared memory,
in the worst case, in order to complete its operation on the implementation.

\vspace{1mm}\noindent\textbf{RAW/AWAR patterns.}
Attiya \emph{et al.} identified two common expensive synchronization patterns that frequently arise in
the design of concurrent algorithms: \emph{read-after-write (RAW) or atomic write-after-read (AWAR)}~\cite{AGK11-popl,McKenney10}
and showed that it is 
impossible to derive RAW/AWAR-free implementations of
a wide class of data types that include \emph{sets}, \emph{queues} and \emph{deadlock-free mutual exclusion}.

Note the shared memory model in the thesis makes the assumption that CPU \emph{events} are performed atomically:
every ``read'' of a base object returns the value of ``latest write'' to the base object. In practice however,
the CPU architecture's \emph{memory model}~\cite{AdveG96} that specifies the outcome of CPU instructions
is \emph{relaxed} without enforcing a strict order among the shared memory instructions.
Intuitively, RAW (read-after-write) or AWAR (atomic-write-after-read)
patterns~\cite{AGK11-popl} capture the amount of ``expensive
synchronization'', \emph{i.e.}, the number of costly memory barriers or conditional primitives~\cite{AdveG96} incurred by the
implementation in relaxed CPU architectures.
The metric appears to be more practically relevant than simply counting the number of steps performed by a process, 
as it accounts for expensive cache-coherence operations or instructions like compare-and-swap.
Detailed coverage on memory fences and the RAW/AWAR metric can be found in \cite{McKenney10}.
\begin{definition}[Read-after-write metric]
A \emph{RAW} (read-after-write) pattern  performed by a transaction
$T_k$ in an execution $\pi$ 
is a pair of its events $e$ and $e'$, such that: (1) $e$ is a write to a
base object $b$ by $T_k$, 
(2) $e'$ is a subsequent read of a base object $b'\neq b$ by $T_k$, and 
(3) no event on $b$ by $T_k$ takes place between $e$ and $e'$. 
\end{definition}
In the thesis, we are concerned only with \emph{non-overlapping} RAWs,
\emph{i.e.}, the read performed by one RAW precedes the write 
performed by the other RAW.
\begin{definition}[Atomic write-after-read metric]
An \emph{AWAR} (atomic-write-after-read) pattern $e$ in an execution
$\pi\cdot e$ is a nontrivial rmw event on an object $b$ which
atomically returns the value of $b$ (resulting after $\pi$) and updates $b$ with a
new value.  
\end{definition}
For example, consider the execution $\pi\cdot e$ where $e$ is the application of a \emph{compare-and-swap} rmw primitive that
returns $\true$.

\vspace{1mm}\noindent\textbf{Stall complexity.}
Intuitively, the stall metric captures the fact that the time a process might have to spend before it applies a 
primitive on a base object can be proportional to the number of processes that try to update the object concurrently.  

Let $M$ be any TM implementation.
Let $e$ be an event applied by process $p$ to a base object $b$ as it performs a transaction $T$ during an execution $E$ of $M$.
Let $E=\alpha\cdot e_1\cdots e_m \cdot e \cdot \beta$ be an execution of $M$, where $\alpha$ and $\beta$ are execution 
fragments and $e_1\cdots e_m$
is a maximal sequence of $m\geq 1$ consecutive nontrivial events by distinct distinct processes other than $p$ that access $b$.
Then, we say that $T$ incurs $m$ \emph{memory stalls in $E$ on account of $e$}.
The \emph{number of memory stalls incurred by $T$ in $E$} is the sum of memory stalls incurred by all events of $T$ in $E$~\cite{G05,AGHK09}.

In the thesis, we adopt the following definition of a \emph{k-stall execution} from \cite{AGHK09,G05}.
\begin{definition}
\label{def:stalls}
An execution $\alpha\cdot \sigma_1 \cdots \sigma_i$ is a $k$-stall execution for t-operation $op$ executed by process $p$ if
\begin{itemize}
\item 
$\alpha$ is $p$-free,
\item
there are distinct base objects $b_1,\ldots , b_i$ and disjoint sets of processes $S_1,\ldots , S_i$
whose union does not include $p$
and has cardinality $k$ such that, for $j=1,\ldots i $,
\begin{itemize}
\item
each process in $S_j$ has an enabled nontrivial event about to access base object $b_j$ after $\alpha$, and
\item
in $\sigma_j$, $p$ applies events by itself until it is the first about to apply an event to $b_j$,
then each of the processes in $S_j$ applies an event that accesses $b_j$, and finally, $p$ applies an event that accesses $b_j$,
\end{itemize}
\item
$p$ invokes exactly one t-operation $op$ in the execution fragment $\sigma_1\cdots \sigma_i$
\item
$\sigma_1\cdots \sigma_i$ contains no events of processes not in $(\{p\}\cup S_1\cup \cdots \cup S_i)$
\item
in every $(\{p\}\cup S_1\cup \cdots \cup S_i)$-free execution fragment that extends $\alpha$, 
no process applies a nontrivial event to any base object accessed in $\sigma_1 \cdots \sigma_i$.
\end{itemize}
\end{definition}
Observe that in a $k$-stall execution $E$ for t-operation $op$, the number of memory stalls incurred by $op$
in $E$ is $k$.

The following lemma will be of use in our proofs.
\begin{lemma}
\label{lm:stalls}
Let $\alpha\cdot \sigma_1 \cdots \sigma_i$ be a $k$-stall execution for t-operation $op$ executed by process $p$.
Then, $\alpha\cdot \sigma_1 \cdots \sigma_i$ is indistinguishable to $p$ from a step contention-free execution~\cite{AGHK09}.
\end{lemma}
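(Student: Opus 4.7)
The plan is to exhibit an explicit step contention-free execution $E'$ and prove $E|p = E'|p$, where $E = \alpha \cdot \sigma_1 \cdots \sigma_i$. First I would split each $\sigma_j$ into two sub-sequences: $\tau_p^{(j)}$, the events of $p$ in $\sigma_j$ (by the definition, a prefix of events on base objects other than $b_j$ followed by a single event on $b_j$), and $\tau_{S_j}$, the events of the processes in $S_j$ in $\sigma_j$ (exactly $|S_j|$ nontrivial events, each by a distinct process and each accessing $b_j$). My candidate is
\[
E' = \alpha \cdot \tau_{S_1} \cdot \tau_{S_2} \cdots \tau_{S_i} \cdot \tau_p^{(1)} \cdot \tau_p^{(2)} \cdots \tau_p^{(i)}.
\]
Since the events of $p$ form a contiguous suffix of $E'$, the execution $E'$ is step contention-free for $p$'s transaction.

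Next I would verify that $E'$ is a legitimate execution of the implementation. The central observation is that, by the definition of a $k$-stall execution, every process in $S_j$ has a single enabled event after $\alpha$ which accesses $b_j$, and the $b_j$ are pairwise distinct. Hence, applying the blocks $\tau_{S_1},\ldots,\tau_{S_{j-1}}$ after $\alpha$ touches only $b_1,\ldots,b_{j-1}$, so the state of $b_j$ at the beginning of $\tau_{S_j}$ in $E'$ is exactly the state reached after $\alpha$. Since in $E$ the $p$-events that precede $\tau_{S_j}$ within $\sigma_j$ also do not touch $b_j$ (the next $p$-event is $p$'s first event on $b_j$), the state of $b_j$ at the start of $\tau_{S_j}$ is identical in $E$ and $E'$. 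Consequently, $\tau_{S_j}$ is enabled and produces identical responses in both executions.

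Finally, I would establish $E|p = E'|p$ by induction on the prefix of $\tau_p^{(1)} \cdots \tau_p^{(i)}$. The inductive invariant is that, just before $p$'s next event, $p$ is in the same local state in $E$ and $E'$, and the base object $p$ is about to access carries the same value in both. For base objects distinct from every $b_j$, only $p$ ever modifies them (since processes in $S_j$ touch only $b_j$), so the invariant is preserved by induction. For a $b_j$, the value at $p$'s access equals the post-$\alpha$ value of $b_j$ modified by the single block $\tau_{S_j}$ together with $p$'s own earlier events on $b_j$: this is so in $E$ because $\tau_{S_j}$ immediately precedes $p$'s first event on $b_j$ in $\sigma_j$, and in $E'$ because every $\tau_{S_{j'}}$ precedes $\tau_p$ yet touches only $b_{j'}$. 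Determinism of the primitive response functions then yields identical responses for $p$.

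The main technical obstacle will be bookkeeping for the case in which $p$ revisits some $b_j$ in a later fragment $\sigma_{j'}$ with $j'>j$. The argument still goes through because each $S_j$ contributes exactly one block of updates to $b_j$, occurring strictly before $p$'s first access to $b_j$ in both orderings, while all subsequent modifications of $b_j$ come from $p$ alone and appear in the same order in $E$ and $E'$ by the induction hypothesis.
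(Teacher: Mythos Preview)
The paper does not prove this lemma; it simply states it with a citation to~\cite{AGHK09}. Your proposal is a correct proof and follows the natural argument: push all the stall blocks $\tau_{S_1},\ldots,\tau_{S_i}$ to immediately after $\alpha$, then run $p$'s steps contiguously, and verify by induction that $p$ sees identical base-object states at every step.

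One point deserves to be made explicit. Your argument relies on the fact that $p$'s \emph{first} access to $b_j$ in all of $\sigma_1\cdots\sigma_i$ is the final event of $\sigma_j$; in particular, $p$ does not touch $b_j$ in any $\sigma_{j'}$ with $j'<j$. You invoke this when you write ``the next $p$-event is $p$'s first event on $b_j$'' and again in the revisit paragraph, but you justify it only for the $p$-events \emph{within} $\sigma_j$, not for those in $\sigma_1,\ldots,\sigma_{j-1}$. This is indeed what the definition intends (the $b_j$ are enumerated in the order $p$ first encounters them in its perturbed solo run, as is explicit in~\cite{AGHK09}); but if $p$ could apply a nontrivial event to $b_j$ before $\sigma_j$, then the state of $b_j$ at the start of $\tau_{S_j}$ would differ between $E$ and $E'$, and both the validity of $E'$ and the indistinguishability claim would fail. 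Stating this observation closes the only real gap in your sketch.
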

\vspace{1mm}\noindent\textbf{Remote memory references(RMR)~\cite{rmr-mutex}.}
Modern shared memory CPU architectures employ a \emph{memory hierarchy}~\cite{hennessy-patterson}:
a hierarchy of memory devices with different capacities and costs.
Some of the memory is \emph{local} to a given process while the rest of the memory is \emph{remote}.
Accesses to memory locations (\emph{i.e.} base objects) that are \emph{remote} to a given process
are often orders of magnitude slower than a \emph{local} access of the base object.
Thus, the performance of concurrent implementations in the shared memory model may depend on the number
of \emph{remote memory references} made to base objects~\cite{anderson-90-tpds}.

In the \emph{cache-coherent (CC) shared memory}, each process maintains \emph{local}
copies of shared base objects inside its cache, whose consistency is ensured by a \emph{coherence protocol}.
Informally, we say that an access to a base object $b$ is \emph{remote} to a process $p$ and 
causes a \emph{remote memory reference (RMR)} if $p$'s cache contains a 
cached copy of the object that is out of date or \emph{invalidated}; otherwise the access is \emph{local}.

In the \emph{write-through (CC) protocol}, to read a base object $b$, process $p$ must have a cached copy of $b$ that
has not been invalidated since its previous read. Otherwise, $p$ incurs a RMR. 
To write to $b$, $p$ causes a RMR that invalidates all cached copies
of $b$ and writes to the main memory.

In the \emph{write-back (CC) protocol}, $p$ reads a base object $b$ without causing a RMR if it holds a cached copy of $b$
in shared or exclusive mode; otherwise the access of $b$ causes a RMR that (1) 
invalidates all copies of $b$ held in exclusive mode, and writing $b$ back to the main memory,
(2) creates a cached copy of $b$ in shared mode.
Process $p$ can write to $b$ without causing a RMR if it holds a copy of $b$ in exclusive mode; otherwise
$p$ causes a RMR that invalidates all cached copies of $b$ and creates a cached copy of $b$ in exclusive mode.

In the \emph{distributed shared memory (DSM)}, each base object is forever assigned to a single process and it is
\emph{remote} to the others. Any access of a remote register causes a RMR.
%
\chapter{Safety for transactional memory}
\label{ch:pc1}
\epigraph{\textbf{Arthur}: If I asked you where the hell we were, would I regret it?\\
\textbf{Ford}: We're safe.\\
\textbf{Arthur}: Oh good.\\
\textbf{Ford}: We're in a small galley cabin in one of the spaceships of the Vogon Constructor Fleet.\\
\textbf{Arthur}: Ah, this is obviously some strange use of the word safe that I wasn't previously aware of.}
{\textit{Douglas Adams}-The Hitchhiker's Guide to the Galaxy}
\section{Overview}
\label{sec:pc1}
In the context of Transactional memory,
intermediate states witnessed by the read operations of an incomplete transaction
may affect the user application through the outcome of its read operations.
If the intermediate state is not consistent with any sequential
execution, the application may experience a fatal irrecoverable
error or enter an infinite loop.
Thus, it is important that \emph{each} transaction, including \emph{aborted} ones observes a
\emph{consistent} state so that the implementation does not export any pathological executions.

A state should be considered consistent if it could
result from a serial application of transactions 
observed in the current execution.
In this sense, every transaction should witness  a state
that \emph{could have been} observed in \emph{some} execution
of the sequential code put by the programmer within the transactions.
Additionally, a consistent state should not depend on
a transaction that has not started committing yet
(referred to as \emph{deferred-update} semantics).
This restriction appears desirable, since the ongoing transaction may still
abort (explicitly by the user or because of consistency reasons) and,
thus, render the read inconsistent.
Further, the set of histories specified by the consistency criterion must constitute a \emph{safety
property}, as defined by Owicki and Lamport~\cite{OL82},
Alpern and Schneider~\cite{AS85} and refined by Lynch~\cite{Lyn96}:
it must be non-empty, \emph{prefix-closed} and \emph{limit-closed}.

In this chapter,
we define the notion of deferred-update semantics formally, which we then apply to
a spectrum of TM consistency criteria. Additionally, we verify if the resulting TM consistency criterion is a safety property,
as defined by Lynch~\cite{Lyn96}.

We begin by considering the popular criterion of
\emph{opacity}~\cite{tm-book}, which was
the first TM
consistency criterion that was proposed to grasp this semantics formally.
Opacity requires the states observed by all transactions, included
uncommitted ones, to be consistent with a global \emph{serialization},
\emph{i.e.}, a serial execution constituted by committed transactions.
Moreover, the serialization should respect the \emph{real-time order}: a transaction that 
completed before (in real time) another transaction
started should appear first in the serialization.

By definition, opacity reduces correctness of a history to correctness of all
its prefixes, and thus is prefix-closed and limit-closed by definition.
Thus, to verify that a history is opaque, one needs to verify that each of its prefixes is
consistent with some global serialization.
To simplify verification and explicitly introduce deferred-update
semantics into a TM correctness criterion,
we specify a general criterion of \emph{du-opacity}~\cite{icdcs-opacity},
which requires the global serial execution to respect
the deferred-update property.
Informally, a du-opaque history must be
indistinguishable from a totally-ordered history, with respect to which
no transaction reads from a transaction that has not started committing.

We show that du-opacity is \emph{prefix-closed}, that is,
every prefix of a du-opaque history is also du-opaque.
We then show that extending opacity (and du-opacity) to infinite histories in a
non-trivial way (\emph{i.e.}, requiring that even infinite histories should have proper
serializations), does not result in a limit-closed property.
However, under certain restrictions,
we show that du-opacity is \emph{limit-closed}.
In particular, assuming that in an infinite history, every
transaction completes each of the operations it invoked,
the limit of any sequence of
ever extending du-opaque histories is also du-opaque.
Therefore,
under this assumption,
du-opacity is a \emph{safety property}~\cite{OL82,AS85,Lyn96}, and
to prove that a TM implementation
that complies with the assumption
is du-opaque, it suffices to prove that all its \emph{finite} histories are du-opaque.

One may notice that the intended safety semantics does not
require that all transactions observe the same serial
execution.
Intuitively, to avoid pathological executions, we only need that every transaction witnesses \emph{some} consistent state,
while the views of different aborted and incomplete transactions do not have to be
consistent with \emph{the same} serial execution.
As long as committed transactions constitute a serial
execution and every transaction witnesses a consistent state, the
execution can be considered ``safe'': no run-time error that cannot
occur in a serial execution can happen.
Several definitions like \emph{virtual-world consistency (VWC)}~\cite{damien-vw-jv}
and \emph{Transactional Memory Specification~1 (TMS1)}\cite{DGLM13} have adopted this approach.
We introduce ``deferred-update'' versions of these properties and discuss
how the resulting properties relate to du-opacity.

Finally, we also study the consistency criterion \emph{Transactional Memory Specification~2 (TMS2)}~\cite{DGLM13,TMS-WTTM}, 
which was proposed as a restriction of opacity
and verify if it is a safety property.

\vspace{1mm}\noindent\textbf{Roadmap of Chapter~\ref{ch:pc1}.}
In Section~\ref{sec:safety} of this chapter, we formally define safety properties.
In Section~\ref{sec:pc2}, we introduce the notion of deferred-update semantics and apply it to the correctness criterions of
opacity and strict serializability in Sections~\ref{sec:gko} and \ref{sec:strictser} respectively. 
Section~\ref{sec:pc3} studies two relaxations of opacity: VWC and TMS1 and a restriction of opacity, TMS2.
Section~\ref{sec:pc4} summarizes the relations between the TM correctness properties proposed in the thesis and 
presents our concluding remarks.
%
\section{Safety properties}
\label{sec:safety}
A \emph{property} $\mathcal{P}$ is a set of (transactional) histories.
Intuitively, a \emph{safety property} says that ``no bad thing ever happens''.
\begin{definition}[Lynch~\cite{Lyn96}]
\label{def:pc}
A property $\mathcal{P}$ is a \emph{safety} property if it satisfies
the following two conditions:
\begin{description}
\item[\emph{Prefix-closure:}]
For every history $H \in \mathcal{P}$,
every prefix $H'$ of $H$ (\emph{i.e.}, every
prefix of the sequence of the events in $H$)
is also in $\mathcal{P}$.
\item[\emph{Limit-closure:}]
For every infinite sequence of finite histories
$H^0,H^1,\ldots $ such that for every $i$, $H^i \in \mathcal{P}$ and
$H^i$ is a prefix of $H^{i+1}$,
the limit of the sequence is also in $\mathcal{P}$.
\end{description}
\end{definition}
Notice that the set of histories produced by a TM implementation $M$
is, by construction,  prefix-closed.
Therefore, every infinite history of $M$
is the limit of an infinite sequence of ever-extending finite
histories of $M$. Thus, to prove that $M$ satisfies a safety property
$P$, it is enough to show that all finite histories of $M$ are in
$P$. Indeed, limit-closure of $P$ then implies that every infinite
history of $M$ is also in $P$.
%
\section{Opacity and deferred-update(DU) semantics}
\label{sec:pc2}
In this section, we formalize the notion of deferred-update semantics and apply to the 
TM correctness condition of \emph{opacity}~\cite{tm-book}.
%
%
\begin{definition}[Guerraoui and Kapalka~\cite{tm-book}]
\label{def:opacity GK}
A finite history $H$ is \emph{final-state opaque} if there
is a legal t-complete t-sequential history $S$,
such that
\begin{enumerate}
\item
for any two transactions $T_k,T_m \in \txns(H)$,
if $T_k \prec_H^{RT} T_m$, then $T_k <_S T_m$, and
\item
$S$ is equivalent to a completion of $H$.
\end{enumerate}
We say that $S$ is a \emph{final-state serialization} of $H$.
\end{definition}
Final-state opacity is not prefix-closed.
Figure~\ref{fig:one} depicts
a t-complete sequential history $H$ that is final-state opaque, 
with $T_1\cdot T_2$ being a legal t-complete t-sequential history 
equivalent to $H$.
Let $H'=\Write_1(X,1), \Read_2(X)$ be a prefix of $H$ in which
$T_1$ and $T_2$ are t-incomplete.
Transaction $T_i$ ($i=1,2$) is completed by inserting $\textit{tryC}_i\cdot
A_i$ immediately after the last event of $T_i$ in $H$.
Observe that neither $T_1\cdot T_2$ nor $T_2\cdot T_1$
allow us to derive a serialization of $H'$ (we assume that the initial value of $X$ is $0$).

A restriction of final-state opacity, which we refer to as
\emph{opacity}~\cite{tm-book} explicitly filters out
histories that are not prefix-closed.
\begin{figure}[h]
\begin{center}
\scalebox{.7}[0.7]{\begin{tikzpicture}
\node (w2) at (1.5,0) [] {};
\node (r1) at (2.5,-1) [] {};
\node (c2) at (5,0) [] {};
\node (c1) at (5.5,-1) [] {};
\draw (c1) node [above] {\small {$\TryC_2 $}}
(r1) node [above] {\small {$R_2(X)\rightarrow 1$}};
   
\draw (c2) node [above] {\small {$\TryC_1$}}
   (w2) node [above] {\small {$W_1(X,1)$}};



\begin{scope}   
\draw [|-|,thick] (.5,0) node[left] {} to (2.5,0);
\draw [|-|,thick] (4,0) node[left] {} to (6,0);
\draw [-,dotted] (-.5,0) node[left] {$T_1$} to (6,0) node[right] {$C_1$};
\end{scope}
\begin{scope}
\draw [|-|,thick] (1.5,-1) node[left] {} to (3.5,-1);
\draw [|-|,thick] (4.5,-1) node[left] {} to (6.5,-1);
\draw [-,dotted] (1.5,-1) node[left] {$T_2$} to (6.5,-1) node[right] {$C_2$} ; 
\end{scope}  
\begin{scope}
\draw [-,dashed] (3.9,1) node[left] {$H'$} to (3.9,-1.5) ; 
\end{scope}  
\begin{scope}
\draw [-,dashed] (7.2,1) node[left] {$H$} to (7.2,-1.5) ; 
\end{scope}  
\end{tikzpicture}}
\end{center}
\caption{History $H$ is final-state opaque,
while its prefix $H'$ is not final-state opaque.}
\label{fig:one}
\end{figure}
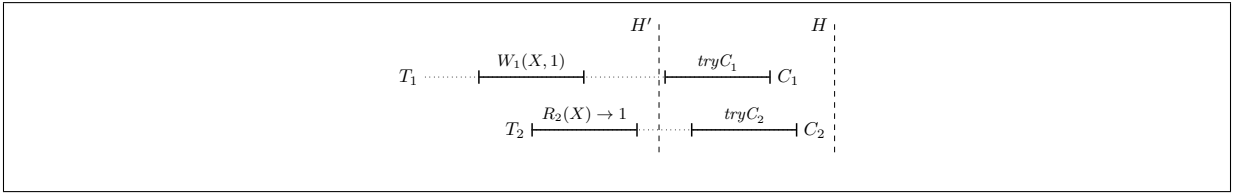
\begin{definition}[Guerraoui and Kapalka~\cite{tm-book}]
\label{def:opaque}
A history $H$ is \emph{opaque} if and only if every finite prefix $H'$
of $H$ (including $H$ itself if it is finite) is final-state opaque.
\end{definition}
%
It can be easily seen that opacity is prefix- and limit-closed,
and, thus, it is a safety property.

We now give a formal definition of opacity
with deferred-update semantics.
Then we show that the property is prefix-closed and,
under certain \emph{liveness} restrictions,  limit-closed.

Let $H$ be any history and let $S$ be a legal t-complete
t-sequential history that is equivalent to some completion of $H$.
Let $<_S$ be the total order on 
transactions in $S$.
\begin{definition}[Local serialization]
For any $\Read_k(X)$ that does not return $A_k$,
let $S^{k,X}$ be the prefix of $S$ up to the response of $\Read_k(X)$
and $H^{k,X}$ be the prefix of $H$ up to the response of $\Read_k(X)$.\\
$S_{H}^{k,X}$, the \emph{local serialization}
of $\Read_k(X)$ with respect to $H$ and $S$,
is the subsequence of $S^{k,X}$ derived by
removing from $S^{k,X}$ the events of all transactions
$T_m \in \txns(H) \setminus \{T_k\}$ such that
$H^{k,X}$ does not contain an invocation of $\TryC_m()$.
\end{definition}
%
We are now ready to present our correctness condition, \emph{du-opacity}.
\begin{definition}[Du-opacity]
\label{def:opacityKR}
A history $H$ is \emph{du-opaque} if there is
a legal t-complete t-sequential history $S$ such that
\begin{enumerate}
\item there is a completion of $H$ that is equivalent to $S$, and
\item for every pair of transactions $T_k,T_m \in \txns(H)$, if $T_k \prec_H^{RT} T_m$,
then $T_k <_S T_m$, i.e., $S$ respects the real-time ordering of
transactions in $H$, and
\item each $\Read_k(X)$ in $S$ that does not return $A_k$ is
  legal in $S_{H}^{k,X}$.
\end{enumerate}
We then say that $S$ is a (du-opaque) \emph{serialization} of $H$.
\end{definition}
Informally, a history $H$ is du-opaque if
there is a legal t-sequential history $S$  that is equivalent to $H$, respects the
real-time ordering of transactions in $H$ and every t-read is legal in
its local serialization with respect to $H$ and $S$.
The third condition reflects the implementation's deferred-update semantics, i.e.,
the legality of a t-read in a serialization does not depend on transactions that start committing after the response of the t-read.

For any du-opaque serialization $S$,
$\ms{seq}(S)$ denotes the \emph{sequence of transactions} in $S$
and $\ms{seq}(S)[k]$ denotes the $k^{th}$ transaction in this sequence.
\section{On the safety of du-opacity}
\label{sec:gko}
In this section, we examine the safety properties of du-opacity, \emph{i.e.}, whether it is prefix-closed and limit-closed.
\subsection{Du-opacity is prefix-closed}
\label{sec:dupc}
\begin{lemma}
\label{lm:dusep}
Let $H$ be a du-opaque history and let $S$ be a serialization of $H$.
For any $i\in \Nat$, there is a serialization $S^i$ of $H^i$ 
(the prefix of $H$ consisting of the first $i$ events),
such that $\ms{seq}(S^i)$ is a subsequence of $\ms{seq}(S)$.
\end{lemma}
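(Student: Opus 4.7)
The plan is to construct $S^i$ explicitly by projecting $S$ onto the transactions that participate in $H^i$. Let $T = \txns(H^i)$, and let $S^i$ realize the total order on transactions induced by $S$ but restricted to $T$. For each $T_k \in T$, I will define $T_k$'s events in $S^i$ according to a valid completion of $H^i|k$: if $T_k$ is already t-complete in $H^i$, keep $H^i|k$ verbatim; if $T_k$ has invoked $\TryC_k$ in $H^i$ without a response, append $C_k$ when $T_k$ is committed in $S$ and $A_k$ otherwise; if $T_k$'s last event in $H^i$ is the invocation of a $\Read_k$ or $\Write_k$, append $A_k$; otherwise append $\TryC_k \cdot A_k$. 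By construction, $\ms{seq}(S^i)$ is the subsequence of $\ms{seq}(S)$ obtained by deleting transactions not in $T$, and $S^i$ is a t-complete t-sequential history equivalent to a completion of $H^i$.

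It remains to verify the three du-opacity conditions for $S^i$. Equivalence to a completion of $H^i$ is built into the construction. The real-time requirement is immediate: $T_k \prec_{H^i}^{RT} T_m$ forces $T_k$ to be t-complete in $H^i$ and hence in $H$ with the same final event, so $T_k \prec_H^{RT} T_m$ and therefore $T_k <_S T_m$; since $S^i$ inherits the $S$-order restricted to $T$, we obtain $T_k <_{S^i} T_m$.

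The heart of the proof is legality of $S^i$ and legality of every read in its local serialization with respect to $H^i$. Fix any $\Read_k(X)$ in $S^i$ that does not return $A_k$. Observe first that $H^{k,X} = (H^i)^{k,X}$, since $\Read_k(X)$'s response lies inside $H^i$; consequently every transaction that invoked $\TryC$ in $H^{k,X}$ belongs to $T$ and falls into the first two cases of the construction, which preserve its commit status from $S$. Using both legality of $S$ and legality of $\Read_k(X)$ in $S_H^{k,X}$ (given by du-opacity of $H$), one argues that the latest committed writer of $X$ before $T_k$ in $S$ must already appear in $S_H^{k,X}$, otherwise the two values $\Read_k(X)$ returns would differ. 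This writer is preserved as committed in $S^i$, and the transactions we have deleted from $S$ or re-completed as aborted cannot overtake it: they either lie outside $S^{k,X}$ or were already absent from $S_H^{k,X}$. Hence $\Read_k(X)$ is legal in $S^i$ and in $(S^i)_{H^i}^{k,X}$.

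The main obstacle is this last verification: showing that the surgery on $S$ does not alter the value any read should return. The leverage comes from the deferred-update clause of du-opacity, which pins down that the transactions we modify or delete are precisely those that no read in $H^i$ can depend on. Once this ``no-dependence'' fact is isolated, matching the latest-writer computation in $S^i$ with the ones in $S$ and in $S_H^{k,X}$ is straightforward bookkeeping.
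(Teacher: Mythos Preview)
Your construction and argument structure match the paper's proof: project $S$ onto $\txns(H^i)$ with the same case-by-case completion, inherit the order from $S$, and establish legality by locating the latest committed writer $T_m$ of $X$ before $T_k$ in $S$ and arguing it survives committed into $S^i$.

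However, your key legality step has a gap. The inference ``$T_m$ must already appear in $S_H^{k,X}$, otherwise the two values $\Read_k(X)$ returns would differ'' fails when distinct transactions may write the same value. Concretely: let $T_1$ write $v$ to $X$ and commit; then $T_k$ read $X$ returning $v$; then $T_2$ write $w\neq v$ to $X$ and commit; then $T_3$ write $v$ to $X$ and commit, while $T_k$ stays t-incomplete throughout. Take $S = T_1\, T_2\, T_3\, T_k$: it is legal (the latest writer $T_3$ writes $v$) and satisfies the deferred-update clause (here $S_H^{k,X}$ is $T_1$ followed by $\Read_k(X)$, and $T_1$ writes $v$), so $S$ is a du-opaque serialization of $H$. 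Now let $H^i$ end right at $C_2$. The only order on $\{T_1,T_2,T_k\}$ that is a subsequence of $\ms{seq}(S)$ is $T_1, T_2, T_k$, and there $\Read_k(X)$ is illegal since the latest committed writer $T_2$ writes $w$. Hence your constructed $S^i$ is not a serialization of $H^i$, and in fact no subsequence serialization exists at all. The paper's proof makes the same leap (``the prefix of $H$ up to the response of $\Read_k(X)$ must contain an invocation of $\TryC_m()$'') without addressing this case; both arguments go through cleanly only under a unique-writes assumption.
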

\begin{proof}
Given $H$, $S$ and $H^i$,
we construct a t-complete t-sequential history $S^i$ as follows:
\begin{itemize}
\item[--]
for every transaction $T_k$ that is t-complete in $H^i$,
$S^i|k=S|k$.
\item[--]
for every transaction $T_k$ that is complete but not t-complete in $H^i$,
    $S^i|k$ consists of the sequence of events in $H^i|k$,
    immediately followed by $\TryC_k()\cdot A_k$.
\item[--]
for every transaction $T_k$ with an incomplete t-operation,
$op_k=\Read_k \vee \Write_k \vee \TryA_k()$ in $H^i$, $S^i|k$ is the
sequence of events in $S|k$ up to the invocation of $op_k$, immediately
followed by $A_k$.
\item[--]
for every transaction $T_k \in \txns(H^i)$ with an incomplete
t-operation, $op_k=\TryC_k()$, $S^i|k=S|k$.
\end{itemize}

By the above construction, $S^i$ is indeed a t-complete history
and every transaction that appears in $S^i$ also appears in $S$.
We order transactions in $S^i$ so that $\ms{seq}(S^i)$ is a subsequence of $\ms{seq}(S)$.

Note that $S^i$ is derived from events contained in some completion $\overline{H}$ of
$H$ that is equivalent to $S$ and some other events to derive a completion of $S^i$.
Since $S^i$ contains events from every complete t-operation in $H^i$ and
other events included satisfy Definition~\ref{def:comp},
there is a completion of $H^i$ that is equivalent to $S^i$.

We now argue that $S^i$ is a serialization of $H^i$.
First we observe that $S^i$ respects the real-time order of $H^i$.
Indeed, if $T_j \prec_{H^i}^{RT} T_k$, then $T_j \prec_{H}^{RT} T_k$
and $T_j <_{S} T_k$. Since $\ms{seq}(S^i)$ is a subsequence of
$\ms{seq}(S)$, we have  $T_j <_{S^i} T_k$.

To show that $S^i$ is legal,
suppose, by way of contradiction,
that there is some $\Read_k(X)$ that returns $v\neq A_k$ in $H^i$
such that $v$ is not the latest written value of $X$ in $S^i$.
If $T_k$ contains a $\Write_k(X,v')$ preceding $\Read_k(X)$ such that
$v\neq v'$ and $v$ is not the latest written value for $\Read_k(X)$
in $S^i$, it is also not the latest written value  for $\Read_k(X)$
in $S$, which is a contradiction.
Thus, the only case to consider is when $\Read_k(X)$ should return a value
written by another transaction.

Since $S$ is a serialization of $H$,
there is a committed transaction
$T_m$ that performs the last $\Write_m(X,v)$ that precedes
$\Read_k(X)$ in $T_k$ in $S$.
Moreover, since $\Read_k(X)$ is legal in the local serialization of
$\Read_k(X)$ in $H$ with respect to $S$,
the prefix of $H$ up to the response of $\Read_k(X)$ must contain an invocation of $\TryC_m()$.
Thus, $\Read_k(X) \not\prec_H^{RT} \TryC_m()$  and $T_m\in \txns(H^i)$.
By construction of $S^i$, $T_m \in \txns(S^i)$ and $T_m$ is committed in $S^i$.

We have assumed, towards a contradiction, that
$v$ is not the latest written value for $\Read_k(X)$  in $S^i$.
Hence, there is a committed transaction $T_j$ that performs
$\Write_j(X,v');v' \neq v$ in $S^i$ such that $T_m <_{S^i} T_j <_{S^i} T_k$.
But this is not possible since $\ms{seq}(S^i)$ is a subsequence of
$\ms{seq}(S)$.

Thus, $S^i$ is a legal t-complete t-sequential history equivalent to
some completion of $H^i$.
Now, by the construction of $S^i$, for every $\Read_k(X)$ that does not
return $A_k$ in $S^i$, we have ${S^i}_{H^i}^{k,X}={S}_{H}^{k,X}$.
Indeed, the transactions that appear before $T_k$ in
${S^i}_{H^i}^{k,X}$ are those with a $\TryC$ event before the
response of  $\Read_k(X)$ in $H$ and are committed in $S$.
Since $\ms{seq}(S^i)$ is a subsequence of $\ms{seq}(S)$, we have
${S^i}_{H^i}^{k,X}={S}_{H}^{k,X}$.
Thus, $\Read_k(X)$ is legal in  ${S^i}_{H^i}^{k,X}$.
\end{proof}
%
%
Lemma~\ref{lm:dusep} implies that every prefix of a
du-opaque history has a du-opaque serialization and thus:
\begin{corollary}
\label{cr:pc}
Du-opacity is a prefix-closed property.
\end{corollary}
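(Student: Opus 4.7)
The plan is to derive the corollary as a direct consequence of Lemma~\ref{lm:dusep}. Recall that du-opacity is a property on histories, and prefix-closure amounts to showing that for any du-opaque history $H$ and every prefix $H'$ of $H$ (viewed as a prefix of the event sequence), $H'$ is itself du-opaque.

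First, I would fix an arbitrary du-opaque history $H$ together with a du-opaque serialization $S$ whose existence is guaranteed by Definition~\ref{def:opacityKR}. Then I would take any prefix $H'$ of $H$; since $H'$ consists of the first $i$ events of $H$ for some $i \in \mathbb{N}$, I can invoke Lemma~\ref{lm:dusep} with this $i$ to obtain a t-complete t-sequential history $S^i$ such that (a) a completion of $H^i = H'$ is equivalent to $S^i$, (b) $S^i$ respects the real-time order of $H'$, (c) $S^i$ is legal, and (d) every t-read in $S^i$ is legal in its local serialization ${S^i}_{H^i}^{k,X}$. These four properties are precisely the three conditions required by Definition~\ref{def:opacityKR} (legality of the overall t-sequential history is subsumed by the local-serialization legality together with the equivalence and real-time conditions, all of which are established explicitly in the proof of Lemma~\ref{lm:dusep}). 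Hence $S^i$ witnesses du-opacity of $H'$.

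Since the prefix $H'$ was arbitrary, every prefix of $H$ is du-opaque, which is exactly the definition of prefix-closure from Definition~\ref{def:pc}. There is essentially no obstacle here: all of the work has been carried out inside Lemma~\ref{lm:dusep}, and the corollary is simply a restatement of its conclusion in the language of safety properties. The only point worth being careful about is to observe that Definition~\ref{def:pc} quantifies over \emph{all} prefixes (including the empty prefix and $H$ itself, when finite), and that Lemma~\ref{lm:dusep} covers all these cases uniformly through the index $i$.
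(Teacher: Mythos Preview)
Your proposal is correct and matches the paper's approach exactly: the paper also derives Corollary~\ref{cr:pc} immediately from Lemma~\ref{lm:dusep}, noting that the lemma produces a du-opaque serialization for every prefix $H^i$ of $H$, which is precisely the statement of prefix-closure.
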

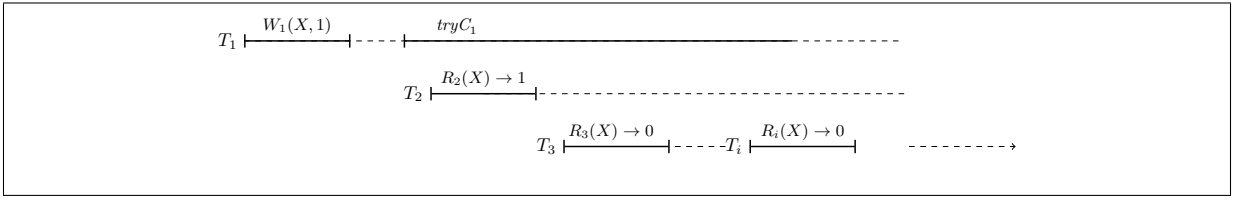
\begin{figure*}[t]
\begin{center}
\scalebox{.7}[0.7]{\begin{tikzpicture}
\node (w1) at (.5,0) [] {};
\node (c1) at (3.5,0) [] {};
\node (r2) at (4,-1) [] {};
\node (r4) at (10,-2) [] {};
\node (r5) at (6.4,-2) [] {};

\draw (w1) node [above] {\small {$W_1(X,1)$}};
\draw (c1) node [above] {\small {$\TryC_1$}};
\draw (r2) node [above] {\small {$R_2(X) \rightarrow 1$}};
\draw (r4) node [above] {\small {$R_i(X) \rightarrow 0$}};
\draw (r5) node [above] {\small {$R_3(X) \rightarrow 0$}};

\begin{scope}   
\draw [|-|,thick] (-0.5,0) node[left] {} to (1.5,0);
\draw [|-,thick] (2.5,0) node[left] {} to (9.8,0);
\draw [-,dashed] (-0.5,0) node[left] {$T_1$} to (9.8,0);
\draw [-,dashed] (9.8,0) node[left] {} to (11.8,0);
\end{scope}

\begin{scope}
\draw [|-|,thick] (3,-1) node[left] {$T_2$} to (5,-1);
\draw [-,dashed] (4,-1) node[left] {} to (12,-1);
\end{scope}  
\begin{scope}
\draw [|-|,thick] (5.5,-2) node[left] {$T_3$} to (7.5,-2);
\draw [-,dashed] (7.6,-2) node[left] {} to (8.6,-2);
\draw [|-|,thick] (9,-2) node[left] {$T_i$} to (11,-2);
\draw [->,dashed] (12,-2) to (14,-2);
\end{scope}  
\end{tikzpicture}}
\end{center}
\caption{
An infinite history in which $\TryC_1$ is incomplete
and any two transactions are concurrent.
Each finite prefix of the history is du-opaque,
but the infinite limit of the ever-extending sequence is not du-opaque.}
\label{fig:op-example}
\end{figure*}
\subsection{The limit of du-opaque histories}
We observe, however,  that du-opacity is, in general, not limit-closed.
We present an infinite history that is not du-opaque,
but each of its prefixes is.

\begin{proposition}\label{lm:lmn}
Du-opacity is not a limit-closed property.
\end{proposition}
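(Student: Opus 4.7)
The plan is to exhibit the infinite history $H$ depicted in Figure~\ref{fig:op-example} and show that every finite prefix of $H$ is du-opaque while $H$ itself is not. Concretely, let $X$ be a t-object with initial value $0$, and let $H$ consist of: $T_1$ performing $\Write_1(X,1)\cdot\TryC_1$ where $\TryC_1$ never returns; $T_2$ performing a single $\Read_2(X)\to 1$ with no subsequent event; and, for each $i\geq 3$, a transaction $T_i$ performing a single $\Read_i(X)\to 0$. All transactions are pairwise concurrent in real time, since $T_1$'s pending $\TryC_1$ overlaps every other transaction.

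First I would show that every finite prefix $H^n$ is du-opaque. Let $\{T_{i_1},\ldots,T_{i_k}\}$ be the transactions $T_i$ with $i\geq 3$ occurring in $H^n$. Build a completion of $H^n$ by replacing $\TryC_1$ with $C_1$ and appending $\TryC_j\cdot A_j$ to every other t-incomplete $T_j$. Order the transactions as $T_{i_1}\cdot T_{i_2}\cdots T_{i_k}\cdot T_1\cdot T_2$, restricting to those actually present in $H^n$. Since all transactions are pairwise concurrent in $H^n$, any total order respects $\prec_{H^n}^{RT}$. For each $T_{i_j}$, the prefix of $S^n$ up to the response of $\Read_{i_j}(X)$ contains neither $T_1$ nor $T_2$, so the latest written value in its local serialization is the initial $0$ and the read is legal. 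For $T_2$, the prefix of $S^n$ up to the response of $\Read_2(X)$ contains $T_1$ as committed; and because $\TryC_1$ is invoked in $H$ before $\Read_2(X)$ returns, $T_1$ is retained in the local serialization, whose latest committed write of $X$ is $1$. Hence $S^n$ is a du-opaque serialization of $H^n$.

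Next I would show that $H$ itself has no du-opaque serialization. Suppose, for contradiction, that $S$ is one. The completion of $H$ equivalent to $S$ must resolve $\TryC_1$ either as $C_1$ or as $A_1$. If it is $A_1$, no transaction commits a write of value $1$ to $X$, so $\Read_2(X)\to 1$ cannot be legal in its local serialization; hence $\TryC_1$ is completed as $C_1$, and $T_1$ must precede $T_2$ in $S$ so that its committed write is the latest value seen by $T_2$. Now fix any $i\geq 3$. Since $\TryC_1$ is invoked in $H$ strictly before $\Read_i(X)$ returns, whenever $T_1$ precedes $T_i$ in $S$, $T_1$ is retained in the local serialization of $\Read_i(X)$ as a committed writer of value $1$, which would make $\Read_i(X)\to 0$ illegal. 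Therefore $T_i$ must precede $T_1$ in $S$ for every $i\geq 3$, forcing $T_1$ to have infinitely many predecessors in $S$. This is impossible: a t-sequential history is a sequence of transactions, so each transaction occupies a finite position with only finitely many predecessors.

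The main obstacle is using the local-serialization clause in Definition~\ref{def:opacityKR} twice in the right way: once to rule out completing $T_1$ as aborted (via $T_2$'s read of $1$), and once to force every $T_i$, $i\geq 3$, to precede $T_1$ in $S$ (since $\TryC_1$ has already been invoked by the time each $\Read_i$ returns, $T_1$ cannot be filtered out of the local prefix when it precedes $T_i$). Once these two uses are nailed down, the contradiction with $T_1$ having infinitely many predecessors in a totally ordered sequence is immediate.
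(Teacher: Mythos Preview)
Your proposal is correct and follows essentially the same approach as the paper: the same infinite history $H$ from Figure~\ref{fig:op-example}, the same serialization $T_3,\ldots,T_k,T_1,T_2$ for finite prefixes, and the same core obstruction (each $T_i$ with $i\geq 3$ must precede the committed $T_1$, so $T_1$ would need infinitely many predecessors). Your write-up is in fact slightly more careful than the paper's in two places---you explicitly dispose of the case where $\TryC_1$ is completed with $A_1$, and you state the ``infinitely many predecessors'' contradiction directly rather than picking the transaction at index $n+1$---but the argument is the same.
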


\begin{proof}
Let $H^j$ denote a finite prefix of $H$ of length $j$.
Consider an infinite history $H$ that is the limit of the
histories $H^j$ defined as follows (see Figure~\ref{fig:op-example}):
\begin{itemize}
\item[--]
Transaction $T_1$ performs a $\Write_1(X,1)$ and then
invokes $\TryC_1()$ that is incomplete in $H$.
\item[--]
Transaction $T_2$ performs a $\Read_2(X)$ that overlaps with $\TryC_1()$
and returns $1$.
\item[--]
There are infinitely many transactions $T_i$, $i\geq 3$,
each of which performing a single $\Read_i(X)$ that returns $0$
such that each $T_i$ overlaps with both $T_1$ and $T_2$.
\end{itemize}

We now prove that, for all $j\in \mathbb{N}$, $H^j$ is a du-opaque history.
Clearly, $H^0$ and $H^1$ are du-opaque histories.
For all $j>1$, we first derive a completion of $H^j$ as follows:
\begin{enumerate}
\item
$\TryC_1()$ (if it is contained in $H^j$) is completed by
inserting $C_1$ immediately after its invocation,
\item
for all $i \geq 2$, any incomplete $\Read_i(X)$ that is contained
in $H^j$ is completed by inserting $A_i$ and $\TryC_i\cdot A_i$
immediately after its invocation,
and
\item
for all $i\geq 2$ and every complete $\Read_j(X)$ that is contained in $H^j$,
we include $\TryC_i \cdot A_i$ immediately after the response of this $\Read_j(X)$.
\end{enumerate}

We can now derive a t-complete t-sequential history $S^j$ equivalent
to the above derived completion of $H^j$ from the sequence of transactions
$T_3, \ldots, T_i,T_1,T_2$
(depending on which of these transactions participate in $H^j$),
where $i\geq 3$.
It is easy to observe that $S^j$ so derived is indeed a serialization of $H^j$.

However, there is no serialization of $H$.
Suppose that such a serialization $S$ exists.
Since every transaction that participates in $H$ must participate in $S$,
there exists $n \in \Nat$ such that $\ms{seq}(S)[n]=T_1$.
Consider the transaction at index $n+1$, say $T_i$ in $\ms{seq}(S)$.
But for any $i\geq 3$, $T_i$ must precede $T_1$ in any serialization (by legality),
which is a contradiction.
\end{proof}
Notice that all finite prefixes of the infinite history depicted in
Figure~\ref{fig:op-example} are also opaque.
Thus, if we extend the definition of opacity to cover infinite
histories in a non-trivial way, i. e., by explicitly defining opaque
serializations for infinite histories, we can reformulate
Proposition~\ref{lm:lmn} for opacity.
%
\subsection{Du-opacity is limit-closed for complete histories}
We show now that du-opacity is limit-closed if the only infinite
histories
we consider are those
in which every transaction eventually completes (but not
necessarily t-completes).

We first prove an auxiliary lemma on du-opaque serializations.
For a transaction $T\in \txns(H)$, the \emph{live set of $T$ in $H$},
denoted $\ms{Lset}_H(T)$ ($T$ included), is defined as follows:
every transaction $T' \in \txns(H)$ such that neither the last event of $T'$ precedes the first event of $T$ in
$H$ nor the last event of $T$ precedes the first event of $T'$ in $H$ is contained in $\ms{Lset}_H(T)$.
We say that transaction $T'\in \txns(H)$ \emph{succeeds the live set of $T$} and we write $T\prec_H^{LS} T'$ if in $H$, for all $T''\in
\ms{Lset}_H(T)$, $T''$ is complete and the last event of $T''$
precedes the first event of $T'$.
\begin{lemma}
\label{lm:full}
Let $H$ be a finite du-opaque history and assume
$T_k \in \txns(H)$ is a complete transaction in $H$,
such that every transaction in $\ms{Lset}_H(T_k)$ is complete in $H$.
Then there is a serialization $S$ of $H$,
such that for all $T_k, T_m \in\txns(H)$,
if $T_k\prec_H^{LS} T_m$, then $T_k<_S T_m$.
\end{lemma}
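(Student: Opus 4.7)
The proof starts from any du-opaque serialization $S_0$ of $H$ (guaranteed by the du-opacity of $H$) and constructs $S$ by reordering transactions. The key observation is that for a transaction $T$ that is t-complete in $H$, $T \prec_H^{LS} T'$ immediately implies $T \prec_H^{RT} T'$, and the du-opacity of $S_0$ already yields $T <_{S_0} T'$. The only non-trivial case is therefore when $T$ is complete but not t-complete in $H$: $\TryC_T$ is absent from $H$, and $T$ is necessarily aborted via an inserted $\TryC_T \cdot A_T$ in any completion of $H$.

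For each such $T$, I would move $T$ in the serialization to just before the earliest transaction $T'$ satisfying $T \prec_H^{LS} T'$, if $T$ is not already so placed in $S_0$. I then need to verify that the resulting sequence $S$ is a valid du-opaque serialization of $H$.

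For the real-time order of $H$: any $T_j$ with $T_j \prec_H^{RT} T$ also satisfies $T_j \prec_H^{RT} T'$, because $T_j$'s last event precedes $T$'s first event in $H$, which in turn precedes $T'$'s first event via $T \prec_H^{LS} T'$. Consequently $T_j <_{S_0} T'$ already holds, so $T_j$ remains before $T$'s new position in $S$. The reverse case $T \prec_H^{RT} T_j$ cannot occur since $T$ is not t-complete. Equivalence to a completion is witnessed by a new completion $\overline{H}$ that places the inserted $\TryC_T \cdot A_T$ immediately after $T$'s last event in $H$: this placement is legitimate precisely because $T$'s live set in $H$ is complete and $T'$'s first event comes strictly after this insertion point.

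The main obstacle is verifying legality and local legality after the reordering. Since $T$ is aborted in $\overline{H}$, its writes never contribute to the latest written value observed by any other transaction, so moving $T$ does not alter the legality of other transactions' reads in $S$. For $T$'s own reads $\Read_T(X)$, the local serialization $S_H^{T,X}$ retains only transactions whose $\TryC$ appears in the prefix $H^{T,X}$ of $H$ up to $\Read_T(X)$'s response: any $T'$ past which $T$ is moved satisfies $T \prec_H^{LS} T'$, so $T'$'s first event in $H$ lies strictly after $\Read_T(X)$'s response, whence $\TryC_{T'} \notin H^{T,X}$ and $T'$ is already excluded from $S_H^{T,X}$ before the reordering. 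Hence the local serializations of $T$'s reads are unaffected by the reordering, so legality is preserved and $S$ serves as the required serialization.
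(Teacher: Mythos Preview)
Your strategy coincides with the paper's: start from an arbitrary du-opaque serialization $S_0$ and move each complete-but-not-t-complete $T$ (with complete live set) to just before the $S_0$-earliest $T_\ell$ satisfying $T\prec_H^{LS}T_\ell$. The real-time preservation argument and the observation that an aborted $T$ cannot spoil other transactions' reads are both fine. The gap is in the legality argument for $T$'s own reads, specifically the assertion ``any $T'$ past which $T$ is moved satisfies $T\prec_H^{LS}T'$.'' This is false: $T_\ell$ is merely the \emph{$S_0$-earliest} transaction with that property, and the transactions $T'$ with $T_\ell<_{S_0}T'<_{S_0}T$ need not satisfy it. For instance, if some $T''\in\ms{Lset}_H(T)$ ends after $T$ does, a transaction $T'$ that starts after $T$ but before $T''$ ends has $T\not\prec_H^{LS}T'$ (since $T''$'s last event follows $T'$'s first event), yet $T'$ can sit between $T_\ell$ and $T$ in $S_0$ whenever $T'$ is concurrent with $T_\ell$ in $H$. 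Worse, even your weaker conclusion $\TryC_{T'}\notin H^{T,X}$ can fail: take $T'$ with a \emph{pending} $\TryC_{T'}$ invoked before $T$'s first event (so $T'\notin\ms{Lset}_H(T)$ and is unconstrained by the hypothesis); then $\TryC_{T'}\in H^{T,X}$, and if the completion commits $T'$, moving $T$ past $T'$ changes the local serialization of $\Read_T(X)$ and can break legality outright.

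The paper does not make the blanket $\prec_H^{LS}$ claim. Instead it argues by contradiction: if $\Read_k(X)$ were illegal after the move, the last committed writer $T_m$ of $X$ in $\tilde S$ must lie between $T_\ell$ and $T_k$; the paper then invokes du-opacity's local-legality condition on $\tilde S$ to force $\TryC_m$ before $\Read_k(X)$'s response, places $T_m\in\ms{Lset}_H(T_k)$, and uses the completeness hypothesis to obtain $T_m\prec_H^{RT}T_\ell$---contradicting $T_\ell\le_{\tilde S}T_m$. That is the route you should take in place of the $\prec_H^{LS}$ claim. Separately, recall that du-opacity requires $S$ itself to be a \emph{legal} t-sequential history; ``local serializations are unaffected'' speaks only to condition~(3) of the definition, not to legality of $S$.
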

\begin{proof}
Since $H$ is du-opaque, there is a serialization ${\tilde S}$ of $H$.

Let ${ S}$ be a t-complete t-sequential history such that $\txns(\tilde S)=\txns(S)$, and $\forall~T_i \in \txns(\tilde S): S|i={\tilde S}|i$.
We now perform the following procedure iteratively to derive $\ms{seq}(S)$ from $\ms{seq}(\tilde S)$.
Initially $\ms{seq}(S)=\ms{seq}(\tilde S)$.
For each $T_k \in \txns(H)$, let $T_{\ell}\in \txns(H)$ denote the earliest transaction in ${\tilde S}$ such that $T_k \prec_H^{LS} T_{\ell}$.
If $T_{\ell} <_{\tilde S} T_k$ (implying $T_k$ is not t-complete), then move $T_k$ to immediately precede $T_{\ell}$ in $\ms{seq}(S)$.
%

By construction, $S$ is equivalent to ${\tilde S}$ and for all $T_k, T_m \in\txns(H)$; $T_k\prec_H^{LS} T_m$, $T_k<_{S} T_m$
We claim that $S$ is a serialization of $H$.
Observe that any two transactions that are complete in $H$, but not t-complete are not related by real-time order in $H$.
By construction of $S$, for any transaction $T_k\in \txns(H)$, the set of transactions that precede $T_k$ in ${\tilde S}$, but succeed $T_k$ in $S$ are not related to $T_k$ by real-time order.
Since $\tilde S$ respects the real-time order in $H$, this holds also
for $S$.

We now show that $S$ is legal.
Consider any $\Read_k(X)$ performed by some transaction $T_k$ that returns $v\in V$
in $S$ and let $T_{\ell}\in \txns(H)$ be the earliest transaction in ${\tilde S}$ such that $T_k \prec_H^{LS} T_{\ell}$.
Suppose, by contradiction, that $\Read_k(X)$ is not legal in $S$.
Thus, there is a committed transaction $T_m$ that performs
$\Write_m(X,v)$ in ${\tilde S}$ such that $T_m=T_{\ell}$ or $T_{\ell} <_{\tilde S} T_m <_{\tilde S} T_k$.
Note that, by our assumption, $\Read_k(X) \prec_H^{RT} \TryC_{\ell}()$.
Since $\Read_k(X)$ must be legal in its local serialization
with respect to $H$ and $\tilde S$,
$\Read_k(X) \not\prec_H^{RT} \TryC_m()$.
Thus, $T_m \in \ms{Lset}_H(T_k)$. Therefore $T_m\neq T_{\ell}$.
Moreover, $T_m$ is complete, and since it commits in $\tilde S$, it
is also t-complete in $H$ and the last event of $T_m$
precedes the first event of $T_{\ell}$ in $H$, i.e., $T_m\prec_H^{RT}
T_{\ell}$. Hence, $T_{\ell}$ cannot precede $T_m$ in ${\tilde S}$---a contradiction.

Observe also that since $T_k$ is complete in $H$ but not t-complete,
$H$ does not contain an invocation of $\TryC_k()$. Thus,
the legality of any other transaction is unaffected by moving $T_k$ to
precede $T_{\ell}$ in $S$.
Thus, $S$ is a legal t-complete t-sequential history equivalent to some completion of $H$.
The above arguments also prove that every t-read in $S$ is legal in
its local serialization with respect to $H$ and $S$ and, thus, $S$ is
a serialization of $H$.
\end{proof}
%


The proof uses K\"{o}nig's Path Lemma~\cite{konig}
formulated as follows.
Let $G$ on a rooted directed graph and let $v_0$ be the root of $G$.
We say that $v_k$, a vertex of $G$, is \emph{reachable} from $v_0$,
if there is a sequence of vertices $v_0 \ldots, v_k$ such that for each $i$,
there is an edge from $v_{i}$ to $v_{i +1}$.
$G$ is \emph{connected} if every vertex in $G$ is reachable from $v_0$.
$G$ is \emph{finitely branching} if every vertex in $G$ has a finite out-degree.
$G$ is \emph{infinite} if it has infinitely many vertices.
\begin{lemma}[K\"{o}nig's Path Lemma~\cite{konig}]
\label{lm:konig}
If $G$ is an infinite connected finitely branching rooted directed graph,
then $G$ contains an infinite sequence of distinct vertices $v_0,v_1, \ldots$,
such that $v_0$ is the \emph{root},
and for every $i \geq 0$, there is an edge from $v_i$ to $v_{i+1}$.
\end{lemma}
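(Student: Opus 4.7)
The plan is to reduce the statement to König's classical lemma on infinite trees, where the absence of cycles automatically gives distinct vertices, and then to run the standard inductive pigeonhole argument on the tree. First I would construct a rooted spanning tree $T$ of $G$ with root $v_0$: perform a breadth-first exploration of $G$ starting at $v_0$, and for each non-root vertex $v$ assign as its parent one fixed vertex $u$ such that $(u,v)$ is an edge of $G$ and $u$ was discovered earlier. Since $G$ is connected, every vertex of $G$ appears in $T$, so $T$ is infinite; since the out-degree of any vertex in $T$ is bounded by its out-degree in $G$, $T$ is finitely branching; and since every edge of $T$ is also an edge of $G$, any sequence of successive parent-to-child steps in $T$ yields a valid directed sequence in $G$. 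Crucially, any root-to-descendant path in a tree visits distinct vertices, so it suffices to exhibit an infinite downward path in $T$.

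Next I would construct this infinite downward path inductively, maintaining the following invariant. Call a vertex $v$ of $T$ \emph{good} if the set of descendants of $v$ in $T$ (including $v$ itself) is infinite. The root $v_0$ is good because $T$ itself is infinite. If $v$ is good and has children $u_1, \ldots, u_k$ in $T$ (only finitely many, since $T$ is finitely branching), then the descendant set of $v$ decomposes as $\{v\} \cup \mathrm{Desc}(u_1) \cup \cdots \cup \mathrm{Desc}(u_k)$; since this union is infinite while $k$ is finite, at least one $\mathrm{Desc}(u_j)$ must be infinite, so some $u_j$ is good. Define $v_{i+1}$ inductively to be any good child of $v_i$ in $T$; the resulting infinite sequence $v_0, v_1, v_2, \ldots$ consists of distinct vertices (since it is a root-to-leaf descent in a tree) and has $(v_i, v_{i+1})$ an edge of $T$, and hence of $G$, for every $i \geq 0$.

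The main obstacle is purely one of bookkeeping: one must verify that every good vertex has at least one child (otherwise the inductive step breaks down vacuously). This is immediate, since a leaf has exactly one descendant, itself, and therefore cannot be good. The remaining care lies in the reduction: connectedness of $G$ is used to ensure that the BFS tree covers every vertex of $G$ so that $T$ inherits infiniteness, and the finite-branching hypothesis on $G$ transfers to $T$ so that the pigeonhole argument applies at every step.
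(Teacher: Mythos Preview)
The paper does not prove this lemma; it is stated with a citation to K\"{o}nig and used as a black box in the proof of the limit-closure theorem that follows. Your argument is the standard one --- pass to a BFS spanning tree so that paths are automatically simple, then run the finite-pigeonhole induction on vertices with infinite descendant sets --- and it is correct as written, including the leaf observation that guarantees each good vertex has a child.
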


\begin{theorem}
\label{th:lc}
Under the restriction that in any infinite history $H$,
every transaction $T_k \in \txns(H)$ is complete,
du-opacity is a limit-closed property.
\end{theorem}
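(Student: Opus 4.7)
The plan is to apply K\"onig's Path Lemma to an infinite, finitely-branching tree whose infinite paths correspond to serializations of $H$. Since $H$ is infinite but each prefix $H^i$ is finite and du-opaque (by hypothesis and prefix-closure, Corollary~\ref{cr:pc}), we already have for each $i$ at least one serialization $S^i$ of $H^i$. The challenge is to stitch these together into a coherent serialization $S$ of $H$.

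First, I would set up the tree $G$ as follows. The root is the empty serialization. A vertex at level $i \geq 1$ is a du-opaque serialization of $H^i$ that, furthermore, respects the live-set succession order of $H^i$ (this is legitimate by Lemma~\ref{lm:full}, using the assumption that every transaction in $H$, hence in $\ms{Lset}_{H^i}(T_k)$ for every $T_k$ that is no longer live at some later prefix, is eventually complete). I would put an edge from $S^i$ at level $i$ to $S^{i+1}$ at level $i+1$ whenever $\ms{seq}(S^i)$ is obtainable from $\ms{seq}(S^{i+1})$ by the transformation of Lemma~\ref{lm:dusep}, i.e., $\ms{seq}(S^i)$ is a subsequence of $\ms{seq}(S^{i+1})$ together with the corresponding completion choices on incomplete transactions agreeing.

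Next, I would verify the three hypotheses of K\"onig's Lemma. The tree is connected by construction, since by Lemma~\ref{lm:dusep}, every vertex at level $i+1$ has a parent at level $i$, so every vertex is reachable from the root. It is finitely branching because $H^i$ involves only finitely many transactions, so the number of orderings of $\txns(H^i)$ (together with commit/abort choices for incomplete transactions) is finite. It is infinite because, for each $i$, $H^i$ is du-opaque, giving at least one vertex at every level. By Lemma~\ref{lm:konig}, there is an infinite sequence $S^0, S^1, S^2, \ldots$ of vertices forming a path.

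Finally, I would construct $S$ from this path and verify it is a serialization of $H$. The order of transactions in $S$ is determined as follows: given two transactions $T_k, T_m \in \txns(H)$, because every transaction in $H$ is complete, there is some index $i_0$ beyond which both appear in $S^i$ for all $i \geq i_0$; by the subsequence property along the path, their relative order stabilizes, and we put that order into $S$. Equivalence of $S$ to a completion of $H$, respect for real-time order, and legality in local serializations all reduce to the corresponding properties of the $S^i$: any potential violation would already appear at some finite prefix, contradicting that $S^i$ is a serialization of $H^i$. The main obstacle is the legality check for t-reads: I need to argue that the ``latest written value'' relation carries over in the limit, which uses the fact that the local serialization $S_H^{k,X}$ of any $\Read_k(X)$ depends only on transactions whose $\TryC$ events appear before the response of $\Read_k(X)$ in $H$, all of which live in some finite prefix and therefore behave identically in $S$ and in some sufficiently long $S^i$. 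Thus $S$ is a du-opaque serialization of $H$, completing the proof.
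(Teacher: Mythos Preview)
Your approach is essentially the paper's: build a K\"onig graph whose level-$i$ vertices are serializations of $H^i$ satisfying Lemma~\ref{lm:full}, connect them via Lemma~\ref{lm:dusep}, extract an infinite path, and take the limit. Your edge condition (``$\ms{seq}(S^i)$ is a subsequence of $\ms{seq}(S^{i+1})$'') is actually stronger than the paper's ($\ms{cseq}_i$-equality), which makes the limit construction cleaner since relative order is directly preserved along the path.

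Two points need to be made explicit. First, for connectedness you must check that the $S^i$ produced by Lemma~\ref{lm:dusep} from $S^{i+1}$ itself satisfies the live-set condition of Lemma~\ref{lm:full}, so that it is a vertex of your graph; the paper notes this, and it holds because the construction in Lemma~\ref{lm:dusep} only restricts and does not reorder. Second, and more importantly, stabilization of pairwise relative order along the path gives you a total order on $\txns(H)$, but you have not argued that this order has type $\omega$, i.e., that each transaction has only finitely many $<_S$-predecessors; without this, $S$ is not a sequential history at all. This is precisely where Lemma~\ref{lm:full} and the completeness hypothesis do their work: once $T_k$ and every transaction in $\ms{Lset}_H(T_k)$ are complete in some $H^j$, the live-set condition forces every transaction whose first event comes after $H^j$ to be placed after $T_k$ in every $S^i$, so the predecessors of $T_k$ in $S$ are contained in the finite set $\txns(H^j)$. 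The paper handles this via its function $f$ and Claim~4.5 (totality of $f$); you need the analogous argument.
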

\begin{proof}
We want to show that the limit $H$ of an infinite sequence of finite ever-extending du-opaque
histories is du-opaque. 
By Corollary~\ref{cr:pc}, we can assume the sequence of du-opaque histories
to be $H^0,H^1, \ldots H^i,H^{i+1},\ldots$ such that for all $i\in \mathbb{N}$, $H^{i+1}$ is the one-event extension of $H^i$.

We construct a rooted directed graph $G_H$ as follows:
%
\begin{enumerate}
\item
The root vertex of $G_H$ is $(H^0, S^0)$ where $S^0$ and $H^0$ contain the initial transaction $T_0$.
\item
Each non-root vertex of $G_H$ is a tuple $({H}^i, S^i)$,
where $S^i$ is a du-opaque serialization of ${H}^i$ that satisfies
the condition specified in Lemma~\ref{lm:full}: for all $T_k, T_m \in\txns(H)$; $T_k\prec_{H^i}^{LS} T_m$ implies $T_k<_{S^i} T_m$.
Note that there exist several possible serializations for any $H^i$.
For succinctness, in the rest of this proof, when we refer to a specific $S^i$,
it is understood to be associated with the prefix $H^i$ of $H$.
\item
Let $\ms{cseq}_i(S^j)$, $j\geq i$, denote the subsequence of
$\ms{seq}(S^j)$ restricted to transactions whose last event in $H$
is a response event and it is contained in $H^i$.
For every pair of vertices $v=({H}^i,S^i)$ and $v'=({H}^{i+1}, S^{i+1})$ in $G_H$,
there is an edge from $v$ to $v'$ if $\ms{cseq}_i(S^i)=\ms{cseq}_i(S^{i+1})$.
\end{enumerate}

The out-degree of a vertex $v=(H^i,S^i)$ in $G_H$ is defined by
the number of possible serializations of ${H}^{i+1}$,
bounded by the number of possible permutations
of the set $\txns(S^{i+1})$, implying that $G_H$ is \emph{finitely branching}.

By Lemma~\ref{lm:dusep},
given any serialization $S^{i+1}$ of $H^{i+1}$,
there is a serialization $S^i$ of $H^i$
such that $\ms{seq}(S^i)$ is a subsequence of $\ms{seq}(S^{i+1})$.
Indeed, the serialization $S^i$ of $H^i$ also respects the restriction specified in Lemma~\ref{lm:full}.
Since $\ms{seq}(S^{i+1})$ contains every complete transaction that takes its last step in $H$ in $H^i$,
$\ms{cseq}_i(S^i)=\ms{cseq}_i(S^{i+1})$.
Therefore, for every vertex $({H}^{i+1},S^{i+1})$,
there is a vertex $({H}^{i},S^i)$ such that $\ms{cseq}_i(S^i)={cseq}_i(S^{i+1})$.
Thus, we can iteratively construct a path from
$(H^0,S^0)$ to every vertex $(H^i,S^i)$ in $G_H$,
implying that $G_H$ is \emph{connected}.

We now apply K\"{o}nig's Path Lemma (Lemma~\ref{lm:konig}) to $G_H$.
Since $G_H$ is an infinite connected finitely branching rooted directed graph,
we can derive an infinite sequence of distinct vertices
\[
\mathcal{L}=(H^0,S^0), (H^1, S^1), \ldots, (H^i,S^i), \ldots
\]
such that $\ms{cseq}_i(S^i)=\ms{cseq}_i(S^{i+1})$.

The rest of the proof explains how to use $\mathcal{L}$ to construct a serialization of $H$.
We begin with the following claim concerning $\mathcal{L}$.
\begin{claim}
\label{cl:lclaim}
For any $j>i$, $\ms{cseq}_i(S^i)=\ms{cseq}_i(S^{j})$.
\end{claim}
\begin{proof}
Recall that $\ms{cseq}_i(S^i)$ is a prefix
of $\ms{cseq}_i(S^{i+1})$, and
$\ms{cseq}_{i+1}(S^{i+1})$ is a prefix of $\ms{cseq}_{i+1}(S^{i+2})$.
Also, $\ms{cseq}_i(S^{i+1})$ is a subsequence of $\ms{cseq}_{i+1}(S^{i+1})$.
Hence, $\ms{cseq}_i(S^{i})$ is a subsequence of
$\ms{cseq}_{i+1}(S^{i+2})$.
But, $\ms{cseq}_{i+1}(S^{i+2})$ is a subsequence of
$\ms{cseq}_{i+2}(S^{i+2})$.
Thus, $\ms{cseq}_i(S^i)$ is a subsequence of
$\ms{cseq}_{i+2}(S^{i+2})$.
Inductively, for any $j>i$, $\ms{cseq}_i(S^i)$ is a subsequence of
$\ms{cseq}_{j}(S^{j})$.
But $\ms{cseq}_{i}(S^{j})$ is the
subsequence  of $\ms{cseq}_{j}(S^{j})$ restricted to 
complete transactions in $H$ whose last step is in $H^i$.
Thus, $\ms{cseq}_i(S^i)$ is indeed equal to $\ms{cseq}_{i}(S^{j})$.
\end{proof}

Let $f:\Nat \rightarrow \txns(H)$ be defined as follows:
$f(1) = T_0$. For every integer $k > 1$, let
\[ i_k = \min \{ \ell \in \Nat | \forall j>\ell:
\ms{cseq}_{\ell}(S^{\ell})[k]=\ms{cseq}_j(S^j)[k]\} \]
Then, $f(k)=\ms{cseq}_{i_k}(S^{i_k})[k]$.
%
%
\begin{claim}
\label{cl:bij}
The function $f$ is \emph{total} and \emph{bijective}.
\end{claim}
\begin{proof}
\textit{(Totality and surjectivity)}

Since each transaction $T \in \txns(H)$ is complete in some prefix $H^i$ of $H$,
for each $k\in \Nat$, 
there exists $i \in \Nat$ such that $\ms{cseq}_i(S^i)[k]=T$.
By Claim~\ref{cl:lclaim},
for any $j>i$, $\ms{cseq}_i(S^i)=\ms{cseq}_i(S^{j})$.
Since a transaction that is complete in $H^i$ w.r.t $H$ is also
complete in $H^j$ w.r.t $H$,
it follows that for every $j >i$, $\ms{cseq}_j(S^j)[k']=T$, with $k' \geq k$.
By construction of $G_H$ and the assumption that each transaction is complete in $H$,
there exists $i\in \Nat$
such that each $T \in \ms{Lset}_{H^i}(T)$ is complete in $H$ 
and its last step is in $H^i$, 
and $T$ precedes in $S^i$ every transaction whose first event succeeds 
the last event of each $T'\in \ms{Lset}_{H^i}(T)$ in $H^i$.
Indeed, this implies that for each $k\in \Nat$, 
there exists $i \in \Nat$ such that $\ms{cseq}_i(S^i)[k]=T$;
$\forall j>i: \ms{cseq}_j(S^j)[k]=T$.
%

This shows that for every $T \in \txns(H)$,
there are $i,k\in \Nat$; $\ms{cseq}_i(S^i)[k]=T$,
such that for every $j > i$, $\ms{cseq}_j(S^j)[k]=T$.
Thus, for every $T \in \txns(H)$, there is $k$ such that $f(k)=T$.

\textit{(Injectivity)}

If $f(k)$ and $f(m)$ are transactions at indices $k$, $m$
of the same $\ms{cseq}_i(S^i)$, then clearly $f(k) = f(m)$ implies $k=m$.
Suppose $f(k)$ is the transaction at index $k$ in some $\ms{cseq}_i(S^i)$
and $f(m)$ is the transaction at index $m$ in some $\ms{cseq}_{\ell}(S^{\ell})$.
For every $\ell > i$ and $k<m$,
if $\ms{cseq}_i(S^i)[k]=T$, then $\ms{cseq}_{\ell}(S^{\ell})[m] \neq T$
since $\ms{cseq}_i(S^i)=\ms{cseq}_i(S^{\ell})$.
If $\ell > i$ and $k>m$, it follows from the definition that $f(k)\neq f(m)$.
Similar arguments for the case when $\ell < i$ prove that if $f(k)= f(m)$, then $k=m$.
\end{proof}
By Claim~\ref{cl:bij}, $\mathcal{F}=f(1),f(2),\ldots, f(i) ,\ldots$
is an infinite sequence of transactions.
Let $S$ be a t-complete t-sequential history such that
$\ms{seq}(S)=\mathcal{F}$
and for each t-complete transaction $T_k$ in $H$, $S|k=H|k$; and for
transaction that is complete, but not t-complete in $H$, $S|k$ consists of the sequence of
events in $H|k$, immediately followed by $\TryA_k()\cdot A_k$.
Clearly, there is a completion of $H$ that is equivalent to $S$.

Let $\mathcal{F}^i$ be the prefix of $\mathcal{F}$ of length $i$,
and ${\widehat S^i}$ be the prefix of $S$ such that
$\ms{seq}({\widehat S^i})=\mathcal{F}^i$.
\begin{claim}
\label{cl:final}
Let ${\widehat H^j}_i$ be a subsequence of $H^j$ reduced
to transactions $T_k \in \txns({\widehat S^i})$ such that the last
event of $T_k$ in $H$ is a response event and it is contained in $H^j$.
Then, for every $i$, there is $j$ such that
${\widehat S^i}$ is a serialization of ${\widehat H^j}_i$.
\end{claim}
\begin{proof}
%
Let $H^j$ be the shortest prefix of $H$ (from $\mathcal{L}$)
such that for each $T\in \txns({\widehat S^i})$,
if $\ms{seq}(S^j)[k]=T$, then for every $j'>j$, $\ms{seq}(S^{j'})[k]=T$.
From the construction of $\mathcal{F}$, such $j$ and $k$ exist.
Also, we observe that $\txns({\widehat S^i}) \subseteq \txns(S^j)$
and $\mathcal{F}^i$ is a subsequence of $\ms{seq}(S^j)$.
Using arguments similar to the proof of Lemma~\ref{lm:dusep},
it follows that ${\widehat S^i}$ is indeed a serialization of ${\widehat H^j}_i$.
\end{proof}
Since $H$ is complete, there is exactly one completion of $H$, where
each transaction $T_k$ that is not t-complete in $H$ is completed with
$\textit{tryC}_k\cdot A_k$ after its last event.
By Claim~\ref{cl:final}, the limit t-sequential t-complete history is
equivalent to this completion, is legal,
respects the real-time order of $H$, and ensures that
every read is legal in the corresponding local serialization.
Thus, $S$ is a serialization of $H$.
\end{proof}
Theorem~\ref{th:lc} implies the following:
\begin{corollary}
\label{cr:safetytm}
Let $M$ be a TM implementation that ensures that
in every infinite history $H$ of $M$,
every transaction $T\in \txns(H)$ is complete in $H$.
Then, $M$ is du-opaque if and only if every finite history of $M$ is du-opaque.
\end{corollary}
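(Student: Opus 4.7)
The plan is to obtain the corollary as a direct consequence of Theorem~\ref{th:lc}, combined with the standard observation (made in Section~\ref{sec:safety}) that the set of histories produced by any TM implementation is prefix-closed by construction.

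For the forward direction, I would simply note that if $M$ is du-opaque, then by definition every history of $M$, finite or infinite, is du-opaque; in particular every finite history is du-opaque.

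For the reverse direction, which is the interesting one, I would argue as follows. Assume every finite history of $M$ is du-opaque, and let $H$ be an arbitrary history of $M$. If $H$ is finite, du-opacity is immediate. Otherwise, by the hypothesis on $M$, every transaction participating in $H$ is complete in $H$, so the restriction required by Theorem~\ref{th:lc} is met. Since histories of $M$ are prefix-closed by construction, $H$ is the limit of an infinite sequence $H^0, H^1, \ldots$ of ever-extending finite prefixes, each of which is a history of $M$ and hence, by assumption, du-opaque. Applying the limit-closure property established in Theorem~\ref{th:lc} to this sequence then yields that the limit $H$ is du-opaque.

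There is no real obstacle here: the content of the corollary is entirely delivered by Theorem~\ref{th:lc}. The only subtlety worth flagging in the write-up is that the completeness hypothesis on infinite histories of $M$ is exactly the side condition needed to invoke Theorem~\ref{th:lc}, and that prefix-closure of the set of histories of $M$ (rather than prefix-closure of du-opacity, which is Corollary~\ref{cr:pc}) is what makes every infinite history the limit of a chain of finite ones in the required sense.
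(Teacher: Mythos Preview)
Your proposal is correct and matches the paper's approach: the corollary is stated there as an immediate consequence of Theorem~\ref{th:lc}, and your write-up spells out exactly the argument the paper sketches in Section~\ref{sec:safety} about prefix-closure of an implementation's histories making limit-closure sufficient. The distinction you flag between prefix-closure of $M$'s histories and prefix-closure of du-opacity is apt and worth keeping.
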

%
%
\subsection{Du-opacity vs. opacity}
\label{sec:opacitygk}
We now compare our deferred-update requirement with the conventional
TM correctness property of opacity~\cite{tm-book}.
\begin{figure}[t]
\begin{center}
\scalebox{.7}[0.7]{\begin{tikzpicture}
\node (w1) at (-1,0) [] {};
\node (c1) at (3.5,0) [] {};
\node (r2) at (3.2,-1) [] {};
\node (w3) at (1,-2) [] {};
\node (c3) at (5.5,-2) [] {};

\draw (w1) node [above] {\small {$W_1(X,1)$}};
\draw (c1) node [above] {\small {$\TryC_1 $}};
\draw (r2) node [above] {\small {$R_2(X) \rightarrow 1$}};
\draw (w3) node [above] {\small {$W_3(X,1)$}};
\draw (c3) node [above] {\small {$\TryC_3 $}};

\begin{scope}   
\draw [|-|,thick] (-2.2,0) node[left] {} to (-0.2,0);
\draw [|-|,thick] (2.5,0) node[left] {} to (5.5,0);
\draw [-,dotted] (-2.2,0) node[left] {$T_1$} to (5.5,0) node[right] {$A_1$};
\end{scope}

\begin{scope}
\draw [|-|,thick] (2.5,-1) node[left] {$T_2$} to (3.8,-1);
\end{scope}  
\begin{scope}
\draw [|-|,thick] (0,-2) node[left] {} to (2,-2);
\draw [|-|,thick] (4.5,-2) node[left] {} to (6.5,-2);
\draw [-,dotted] (0,-2) node[left] {$T_3$} to (6.5,-2) node[right] {$C_3$}; 
\end{scope}  
\end{tikzpicture}}
\end{center}
\caption{A history that is opaque, but not du-opaque.}
\label{fig:lin-example}
\end{figure}
\begin{theorem}
\label{th:gkkr}
Du-opacity $\subsetneqq$ Opacity.
\end{theorem}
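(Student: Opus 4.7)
The plan is to establish the two directions of the strict inclusion separately: first that every du-opaque history is opaque, and then that the converse fails by exhibiting a witness (which Figure~\ref{fig:lin-example} already supplies).

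For the containment direction, I would argue directly from the definitions. Suppose $H$ is du-opaque and let $H'$ be any finite prefix of $H$. By Corollary~\ref{cr:pc} (prefix-closure of du-opacity), $H'$ is du-opaque, so there is a legal t-complete t-sequential history $S$ equivalent to some completion of $H'$ that additionally respects the real-time order of $H'$ (condition~2 of Definition~\ref{def:opacityKR}). The first two conditions of Definition~\ref{def:opacityKR} coincide with the two requirements in Definition~\ref{def:opacity GK}, so $S$ is already a final-state serialization of $H'$, witnessing that $H'$ is final-state opaque. Since this holds for every finite prefix, Definition~\ref{def:opaque} gives that $H$ is opaque.

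For strictness, I would use the history $H$ in Figure~\ref{fig:lin-example}. I would verify two things. First, $H$ is opaque: take the completion in which $T_1$ aborts ($\TryC_1$ is completed by $A_1$), $T_3$ commits, and $T_2$ is completed by $\TryC_2\cdot C_2$, and serialize the committed transactions in the order $T_3, T_2$ (with aborted $T_1$ inserted anywhere consistent with real-time order, e.g.\ at the end). Then $\Read_2(X)\to 1$ is legal since $T_3$'s $\Write_3(X,1)$ immediately precedes it, and real-time order is preserved since $T_1,T_2,T_3$ are pairwise concurrent in $H$. The same serialization works for every prefix, so $H$ is opaque. Second, $H$ is not du-opaque: any serialization $S$ must have $\Read_2(X)\to 1$ legal in its local serialization $S_H^{2,X}$. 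By construction of $H$, no $\TryC_3$ event precedes the response of $\Read_2(X)$ in $H$, so $T_3$ is excluded from $S_H^{2,X}$; the only other transaction writing value $1$ to $X$ is $T_1$, which is aborted and therefore cannot serve as a source of a legal read. Hence no value other than $0$ can legally be read, contradicting $\Read_2(X)\to 1$.

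The only nontrivial step is the verification of non-du-opacity for the witness history, and even that reduces to carefully checking the local-serialization requirement from Definition~\ref{def:opacityKR}. The containment argument is essentially a bookkeeping observation that the final-state conditions are a strict subset of the du-opaque conditions, combined with the already-established prefix-closure.
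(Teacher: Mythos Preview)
Your containment argument is correct and essentially identical to the paper's: both invoke prefix-closure of du-opacity (Corollary~\ref{cr:pc}) and observe that a du-opaque serialization of each prefix already satisfies the two clauses of Definition~\ref{def:opacity GK}.

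The strictness direction has a real gap. Your claim that ``the same serialization works for every prefix'' is false, and this is exactly the subtle point the example is designed to expose. Consider the prefix $H^i$ of $H$ ending with the response of $\Read_2(X)$. At that point $T_3$ has performed $\Write_3(X,1)$ but has \emph{not} invoked $\TryC_3$. By Definition~\ref{def:comp}, any completion of $H^i$ must append $\TryC_3\cdot A_3$ to $T_3$; you are not permitted to commit $T_3$. So your proposed serialization $T_3,T_2$ (with $T_3$ committed) is not a serialization of any completion of $H^i$, and the only way to make $\Read_2(X)\to 1$ legal in that prefix is to complete the \emph{pending} $\TryC_1$ with $C_1$ and serialize $T_1$ before $T_2$. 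For the full history $H$, by contrast, $\TryC_1$ has already returned $A_1$, so $T_1$ must be aborted and the read is justified by the (now committed) $T_3$. The paper handles this by explicitly exhibiting different final-state serializations for the relevant prefixes: $T_1,T_2$ (with $T_1$ committed) for $H^i$, and $T_1,T_3,T_2$ for $H$. You need to do the same; there is no single serialization that covers all prefixes.

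A minor related slip: $T_2$ is complete but not t-complete in $H$ and has not invoked $\TryC_2$, so by Definition~\ref{def:comp} the only permitted completion is $\TryC_2\cdot A_2$, not $\TryC_2\cdot C_2$. This does not affect legality (the read must still be justified), but it is worth getting right. Your argument that $H$ is not du-opaque is correct and matches the paper's.
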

\begin{proof}
We first claim that every finite du-opaque history is opaque.
Let $H$ be a finite du-opaque history.
By definition, there is a final-state serialization $S$ of $H$.
Since du-opacity is a prefix-closed property,
every prefix of $H$ is final-state opaque.
Thus, $H$ is opaque.

Again, since
every prefix of a du-opaque history is also du-opaque,
by Definition~\ref{def:opaque},
every infinite du-opaque history is also opaque.

To show that the inclusion is strict, we present an an opaque history
that is not du-opaque.
Consider the finite history $H$ depicted in Figure~\ref{fig:lin-example}:
transaction $T_2$ performs a $\Read_2(X)$ that returns the value $1$.
Observe that $\Read_2(X) \rightarrow 1$ is concurrent to $\TryC_1$, but precedes $\TryC_3$ in real-time order.
Although $\TryC_1$ returns $A_1$ in $H$, the response of $\Read_2(X)$ can be justified since $T_3$ concurrently writes $1$ to $X$
and commits. Thus, $\Read_2(X)\rightarrow 1$ \emph{reads-from} transaction $T_2$ in any serialization of $H$, but
since $\Read_2(X) \prec_H^{RT} \TryC_3$, $H$ is not du-opaque even though each of its prefixes is final-state opaque.

We now formally prove that $H$ is opaque. We proceed by examining every prefix of $H$.
\begin{enumerate}
\item
Each prefix up to the invocation of $\Read_2(X)$ is trivially final-state opaque.
\item
Consider the prefix, $H^i$ of $H$ where the $i^{th}$ event is the response of $\Read_2(X)$. Let $S^i$ be a t-complete t-sequential history derived from the sequence $T_1,T_2$ by inserting $C_1$ immediately after the invocation of $\TryC_1()$. It is easy to see that $S^i$ is a final-state serialization of $H^i$.
\item
Consider the t-complete t-sequential history $S$ derived from the sequence $T_1,T_3,T_2$ in which each transaction is t-complete in $H$. Clearly, $S$ is a final-state serialization of $H$.
\end{enumerate}
Since $H$ and every (proper) prefix of it are final-state opaque,
$H$ is opaque.

Clearly, the required final-state serialization $S$ of $H$ is specified by $\ms{seq}(S)=T_1,T_3,T_2$
in which $T_1$ is aborted while $T_3$ is committed in $S$ (the position of $T_1$ in the serialization does not affect legality).
Consider $\Read_2(X)$ in $S$;
since $H^{2,X}$, the prefix of $H$ up to the response of $\Read_2(X)$
does not contain an invocation of $\TryC_3()$,
the local serialization of $\Read_2(X)$ with respect to $H$ and $S$,
$S_{H}^{2,X}$ is $T_1\cdot \Read_2(X)$.
But $\Read_2(X)$ is not legal in $S_{H}^{2,X}$,
which is a contradiction.
Thus, $H$ is not du-opaque.
\end{proof}
\vspace{1mm}\noindent\textbf{The unique-write case}
We now show that du-opacity is equivalent to opacity
assuming that no two transactions write identical values
to the same t-object (``unique-write'' assumption).

Let Opacity$_{uw}$ $\subseteq$ Opacity, be a property defined as follows:
\begin{enumerate}
\item
an infinite opaque history $H \in$ Opacity$_{uw}$ if and only if
every transaction $T\in \txns(H)$ is complete in $H$, and
\item
an opaque history $H \in$ Opacity$_{uw}$ if and only if
for every pair of write operations $\Write_k(X,v)$ and $\Write_m(X,v')$,
$v\neq v'$.
\end{enumerate}

\begin{theorem}
\label{th:opeq}
Opacity$_{uw}= $du-opacity.
\end{theorem}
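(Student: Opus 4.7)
The direction du-opacity $\subseteq$ Opacity$_{uw}$ is essentially immediate from Theorem~\ref{th:gkkr}, modulo the side conditions (unique writes and completeness of infinite histories) which are imposed on both sides of the claimed equality. So the substantive content of the proof will be to establish that every history in Opacity$_{uw}$ is du-opaque, and I would handle the finite case first and then lift to infinite histories via Theorem~\ref{th:lc}.

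For a finite $H\in$ Opacity$_{uw}$, the plan is to take a final-state serialization $S$ of $H$ (which exists since $H$ is opaque, hence final-state opaque) and show that this very $S$ is already a du-opaque serialization. Concretely, I would verify that each $\Read_k(X)$ that does not return $A_k$ is legal in its local serialization $S_H^{k,X}$. If $T_k$ contains a $\Write_k(X,v')$ preceding $\Read_k(X)$, legality is immediate since the latest-written value depends only on events of $T_k$ itself. Otherwise $\Read_k(X)$ returns some value $v$ that, by legality of $S$, is contributed by a committed transaction $T_m$ performing the latest $\Write_m(X,v)$ before $T_k$ in $S$; under the unique-write assumption, $T_m$ is the unique transaction in $H$ writing $v$ to $X$.

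The crucial sub-step is then to show $T_m\in S_H^{k,X}$, i.e., that $\TryC_m$ is already invoked in $H$ before the response of $\Read_k(X)$. Here I would apply opacity to the prefix $P=H^{k,X}$: since $P$ is final-state opaque, it admits a final-state serialization $S_P$ in which $\Read_k(X)$ is legal and reads $v$ from some transaction committed in $S_P$; by unique writes this transaction must again be $T_m$, and the completion rules for $P$ imply that $T_m$ can be committed in any completion of $P$ only if the invocation of $\TryC_m$ already lies in $P$. Once $T_m\in S_H^{k,X}$, legality of $\Read_k(X)$ in $S_H^{k,X}$ follows at once, because legality of $S$ rules out any intermediate committed writer of $X$ in $S$, and the set of such candidate writers only shrinks when passing to the subsequence $S_H^{k,X}$.

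For an infinite $H\in$ Opacity$_{uw}$, every transaction in $H$ is complete by definition, so Theorem~\ref{th:lc} gives limit-closure of du-opacity on such histories. Every finite prefix of $H$ inherits opacity and the unique-write property and hence, by the finite case, is du-opaque; the limit-closure then delivers du-opacity of $H$ itself. The only non-routine step in the whole plan is the sub-lemma placing $\TryC_m$ before the response of $\Read_k(X)$; this is exactly the place where the unique-write hypothesis is indispensable, and it is precisely the feature exploited by the counterexample in Theorem~\ref{th:gkkr} (where $T_1$ and $T_3$ both write value $1$) to separate opacity from du-opacity in the general setting.
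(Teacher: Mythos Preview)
Your proposal is correct and follows essentially the same approach as the paper: both take a final-state serialization $S$ of $H$, use the unique-write hypothesis to pin down the writer $T_m$ whose value is returned by $\Read_k(X)$, invoke final-state opacity of the prefix $H^{k,X}$ to conclude that the invocation of $\TryC_m$ must already lie in that prefix (since only transactions with a pending $\TryC$ can be committed in a completion), and then lift to infinite histories via Theorem~\ref{th:lc}. The paper phrases the finite case as a proof by contradiction while you argue directly, and you make the case split on whether $T_k$ itself wrote to $X$ explicit, but these are purely presentational differences.
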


\begin{proof}
We show first that every finite history $H \in $Opacity$_{uw}$
is also du-opaque.
Let $H$ be any finite opaque history such that for every pair of write
operations $\Write_k(X,v)$ and $\Write_m(X,v)$,
performed by transactions $T_k,T_m \in \txns(H)$, respectively,
$v\neq v'$.

Since $H$ is opaque, there is a final-state serialization $S$ of $H$.
Suppose by contradiction that $H$ is not du-opaque.
Thus, there is a $\Read_k(X)$ that returns a value $v\in V$ in $S$
that is not legal in ${S}_{H}^{k,X}$,
the local serialization of $\Read_k(X)$ with respect to $H$ and $S$.
Let ${H}^{k,X}$ and ${S}^{k,X}$ denote the prefixes of $H$ and $S$,
respectively, up to the response of $\Read_k(X)$ in $H$ and $S$.
Recall that ${S}_{H}^{k,X}$,
the local serialization of $\Read_k(X)$ with respect to $H$ and $S$,
is the subsequence of $S^{k,X}$ that does not contain events of
any transaction $T_i \in \txns(H)$ so that the invocation of $\TryC_i()$
is not in $H^{k,X}$.
Since $\Read_k(X)$ is legal in $S$,
there is a committed transaction $T_m \in \txns(H)$ that performs
$\Write_m(X,v)$ that is the latest such write in $S$ that precedes $T_k$.
Thus, if $\Read_k(X)$ is not legal in ${S}_{H}^{k,X}$,
the only possibility is that $\Read_k(X) \prec_H^{RT} \TryC_m()$.
Under the assumption of unique writes,
there does not exist any other transaction $T_j \in \txns(H)$
that performs $\Write_j(X,v)$.
Consequently, there does not exist any ${\overline{H}^{k,X}}$
(some completion of $H^{k,X}$) and a t-complete t-sequential history $S'$,
such that $S'$ is equivalent to ${\overline{H}^{k,X}}$
and $S'$ contains any committed transaction that writes $v$ to $X$.
This is, $H^{k,X}$ is not final-state opaque.
However, since $H$ is opaque, every prefix of $H$ must be final-state opaque,
which is a contradiction.

By Definition~\ref{def:opaque}, an infinite history $H$ is opaque if
every finite prefix of $H$ is final-state opaque.
Theorem~\ref{th:lc} now implies that
Opacity$_{uw}$ $\subseteq$ du-Opacity.

Definition~\ref{def:opaque} and Corollary~\ref{cr:pc} imply that
du-Opacity $\subseteq$ Opacity$_{uw}$.
\end{proof}
\vspace{1mm}\noindent\textbf{The sequential-history case}
The deferred-update semantics was mentioned by Guerraoui et
al.~\cite{GHS08-permissiveness}
and later adopted by Kuznetsov and Ravi~\cite{KR11}.
In both papers, opacity is only defined for sequential histories,
where every invocation of a t-operation is immediately
followed by a matching response.
In particular, these definitions require the final-state serialization
to respect the \emph{read-commit order}:
in these definitions, a history $H$ is opaque
if there is a final-state serialization $S$ of $H$
such that if a t-read of a t-object $X$
by a transaction $T_k$ precedes the tryC of a transaction $T_m$
that commits on $X$ in $H$, then $T_k$ precedes $T_m$ in $S$.
As we observed in Figure~\ref{fig:du-example}, this definition is not equivalent to opacity
even for sequential histories.

The property considered in~\cite{GHS08-permissiveness,KR11} is strictly
stronger than du-opacity: the sequential history $H$ in Figure~\ref{fig:du-example}
is du-opaque (and consequently opaque by Theorem~\ref{th:gkkr}):
a du-opaque serialization (in fact the only possible one) for this history is $T_1,T_3,T_2$.
However, in the restriction of opacity defined above, $T_2$
must precede $T_3$ in any serialization, since the
response of $\Read_2(X)$ precedes the invocation of $\TryC_3()$.

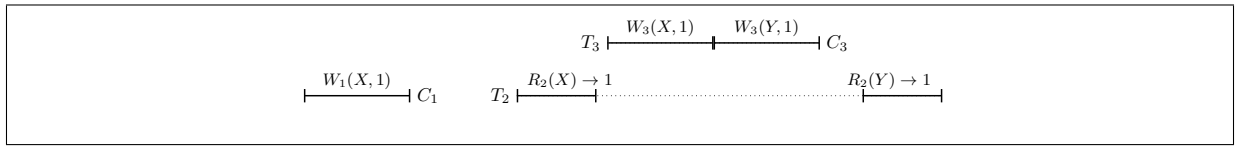
\begin{figure*}[t]
\begin{center}
\scalebox{.7}[0.7]{\begin{tikzpicture}
\node (w1) at (0.5,0) [] {};
\node (r1) at (4.5,0) [] {};
\node (r2) at (10.5,0) [] {};

\node (w2) at (6.2,1) [] {};
\node (w22) at (8.2,1) [] {};

\draw (w1) node [above] {\small {$W_1(X,1)$}};
\draw (r1) node [above] {\small {$R_2(X) \rightarrow 1$}};
\draw (r2) node [above] {\small {$R_2(Y) \rightarrow 1$}};

\draw (w2) node [above] {\small {$W_3(X,1)$}};
\draw (w22) node [above] {\small {$W_3(Y,1)$}};
%
\begin{scope}   
\draw [|-|,thick] (-0.5,0) node[left] {} to (1.5,0) node[right] {$C_1$};
\draw [|-|,thick] (3.5,0) node[left] {} to (5,0);
\draw [|-|,thick] (10,0) node[left] {} to (11.5,0);
\draw [-,dotted] (3.5,0) node[left] {$T_2$} to (11.5,0);
\end{scope}
\begin{scope}   
\draw [|-|,thick] (4.2+1,1) node[left] {} to (7.2,1);
\draw [|-|,thick] (7.2,1) node[left] {} to (9.2,1);
\draw [-,dotted] (4.2+1,1) node[left] {$T_3$} to (9.2,1) node[right] {$C_3$};
\end{scope}
\end{tikzpicture}}
\end{center}
\caption{
A sequential du-opaque history,
which is not opaque by the definition of~\cite{GHS08-permissiveness}.}
\label{fig:du-example}
\end{figure*}
%
\section{Strict serializability with DU semantics}
\label{sec:strictser}
In this section, we discuss the deferred-update restriction of strict serializability from Definition~\ref{def:oser}.
First, we remark that, just as final-state opacity, strict serializability is not prefix-closed (cf. Figure~\ref{fig:one}).
However, we show that the restriction of deferred-update semantics applied to strict serializability induces a safety property.
\begin{definition}[Strict serializability with du semantics]
\label{def:ser}
A finite history $H$ is \emph{strictly serializable} if there
is a legal t-complete t-sequential history $S$,
such that
\begin{enumerate}
\item there is a completion $\overline{H}$ of $H$,
such that $S$ is equivalent to $\ms{cseq}(\overline{H})$,
where $\ms{cseq}(\overline{H})$ is the subsequence of $\overline{H}$
reduced to committed transactions in $\overline{H}$,
\item for any two transactions $T_k,T_m \in \txns(H)$,
if $T_k \prec_H^{RT} T_m$, then $T_k$ precedes $T_m$ in $S$, and
\item each $\Read_k(X)$ in $S$ that does not return $A_k$ is
legal in $S_{H}^{k,X}$.
\end{enumerate}
\end{definition}
%
Notice that every du-opaque history is strictly serializable,
but not vice-versa.
\begin{theorem}
\label{th:sersafety}
Strict serializability is a safety property.
\end{theorem}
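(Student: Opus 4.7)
The plan is to verify the two conditions of Definition~\ref{def:pc}, namely prefix-closure and limit-closure, by mirroring the arguments used for du-opacity but exploiting the fact that the serialization $S$ for strict serializability only contains committed transactions, so aborted and incomplete transactions impose no legality constraints.

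For prefix-closure I would adapt the construction of Lemma~\ref{lm:dusep}. Given a serialization $S$ of $H$ and a prefix $H^i$, build a t-complete t-sequential history $S^i$ whose transaction sequence is obtained from $\ms{seq}(S)$ by retaining only those transactions that are either already t-complete and committed in $H^i$, or have invoked $\TryC_k$ in $H^i$ and are committed in $S$ (these I complete with $C_k$ in a chosen completion $\overline{H^i}$). Every other transaction in $\txns(H^i)$ is completed with $\TryC_k\cdot A_k$, so $S^i$ is equivalent to $\ms{cseq}(\overline{H^i})$, establishing condition~1. The real-time order condition~2 follows because $\ms{seq}(S^i)$ is a subsequence of $\ms{seq}(S)$. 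For condition~3, consider any $\Read_k(X)$ in $S^i$: by the deferred-update condition on $S$, the committed transaction $T_j$ whose write it reads must have had its $\TryC_j$ invoked in $H$ before the response of $\Read_k(X)$; since this response is within $H^i$, $T_j$'s $\TryC_j$ is also in $H^i$, so $T_j$ is retained as committed in $S^i$ at the same relative position, and the local serialization of $\Read_k(X)$ coincides with the one in $S$, preserving legality.

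For limit-closure I would follow the template of Theorem~\ref{th:lc}, assuming that in every infinite history every transaction is complete. I would build a rooted directed graph whose vertices are pairs $(H^i, S^i)$ with $S^i$ a strict-serialization of $H^i$, and place an edge from $(H^i, S^i)$ to $(H^{i+1}, S^{i+1})$ whenever the subsequence of committed transactions whose last event lies in $H^i$ agrees in $S^i$ and $S^{i+1}$. The graph is finitely branching (serializations of each finite prefix are bounded by permutations of its committed transactions) and connected (the prefix-closure construction above yields, for every $S^{i+1}$, a compatible $S^i$). K\"{o}nig's Path Lemma (Lemma~\ref{lm:konig}) then supplies an infinite path, and a stabilization argument analogous to Claims~\ref{cl:lclaim} and~\ref{cl:final} defines an infinite t-sequential history $S$ that is the required serialization of $H$: every committed transaction eventually appears at a fixed index in the tail of the path, real-time order is inherited from the finite prefixes, and legality of each read in $S$ reduces to the legality already established in some $S^i$.

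The main obstacle lies in the ambiguity of completing transactions that have invoked but not responded from $\TryC$ across successive prefixes: such transactions may be committed or aborted in a completion, and the choices must be made consistently along the K\"{o}nig path so that the limit serialization is well-defined. This is actually easier than in the du-opaque case, because aborted transactions drop out of $S$ entirely, so only the committed transactions need to be tracked along the path and the combinatorial bookkeeping of local serializations for aborted readers is avoided; the remaining reasoning transfers directly from the du-opacity proof.
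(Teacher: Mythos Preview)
Your prefix-closure argument is fine and matches the paper's in spirit. The problem is in your limit-closure argument.

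You import the completeness hypothesis from Theorem~\ref{th:lc} (``in every infinite history every transaction is complete''), but Theorem~\ref{th:sersafety} is stated and proved \emph{unconditionally}. By carrying that hypothesis over, your proof would establish only a strictly weaker result than the theorem claims. The reason the assumption was needed for du-opacity is that \emph{every} transaction---including incomplete ones---must occupy a position in the serialization, which is exactly what produces the counter-example of Figure~\ref{fig:op-example}. For strict serializability, as you yourself note in the last paragraph, aborted and incomplete transactions drop out of $S$ entirely; this is precisely what neutralizes that obstruction and renders the completeness hypothesis superfluous. You identify the simplification but then do not use it.

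The paper exploits this more aggressively than you do. Instead of tracking $\ms{cseq}_i$-agreement and running the stabilization argument of Claims~\ref{cl:lclaim}--\ref{cl:final}, it uses the simpler edge relation ``$\ms{seq}(S^i)$ is a prefix of $\ms{seq}(S^{i+1})$'' in the K\"onig graph. Because a strict serialization contains only transactions that have invoked $\TryC$, one can always obtain such an $S^i$ from $S^{i+1}$ by suitably aborting transactions in the completion of $H^i$. Along an infinite K\"onig path the $\ms{seq}(S^i)$ then form a chain of prefixes, and the limit $S=\lim_i S^i$ is immediately well-defined as a serialization of $H$---no stabilization machinery is required. So your approach is not wrong in outline, but as written it proves less than the theorem asserts; pushing your own final observation further would let you both drop the extra hypothesis and collapse the limit argument to a direct limit, which is what the paper does.
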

\begin{proof}
Observe that any
strictly serializable
serialization of a finite history $H$ does not include events of
any transaction that has not invoked \textit{tryC} in $H$.

To show prefix-closure, 
a proof almost identical to that of Lemma~\ref{lm:dusep} implies that, 
given a strictly serializable history $H$ and a serialization $S$,
there is a serialization $S'$ of $H'$ ($H'$ is some prefix of $H$)
such that $\ms{seq}(S')$ is a prefix of $\ms{seq}(S)$.

Consider an infinite sequence of finite histories
\[ H^0,\ldots , H^i, H^{i+1},\ldots , \]
where $H^{i+1}$ is a one-event extension of $H^i$,
we prove that the infinite limit $H$ of this ever-extending
sequence is strictly serializable.
As in Theorem~\ref{th:lc},
we construct an infinite rooted directed graph $G_H$:
a vertex is a tuple $(H^i,S^i)$ (note that for each $i\in \mathbb{N}$,
there are several such vertices of this form), where $S^i$ is a 
serialization of $H^i$ and there is an edge from
$(H^i,S^i)$ to $(H^{i+1},S^{i+1})$ if $\ms{seq}(S^i)$
is a prefix of $\ms{seq}(S^{i+1})$.
The resulting graph is finitely branching since the out-degree of
a vertex is bounded by the number of possible serializations of a history.
Observe that for every vertex $(H^{i+1},S^{i+1})$, there is a vertex $H^i,S^i)$ such that
$\ms{seq}(S^i)$ is a prefix of $\ms{seq}(S^{i+1})$.
Thus, $G_H$ is connected since we can iteratively construct a path from the root $(H^0,S^0)$
to every vertex $(H^i,S^i)$ in $G_H$.
Applying K\"{o}nig's Path Lemma to $G_H$,
we obtain an infinite sequence of distinct vertices,
$(H^0,S^0), (H^1, S^1), \ldots, (H^i,S^i), \ldots $.
Then, \(S=\lim\limits_{i \to \infty } S_i \)
gives the desired 
serialization of $H$.
\end{proof}
%
%
\section{Du-opacity vs. other deferred-update criteria}
\label{sec:pc3}
In this section, we first study two relaxations of opacity: \emph{Virtual-world consistency}~\cite{damien-vw-jv}
and \emph{Transactional Memory Specification~1}~\cite{DGLM13}. We then study \emph{Transactional Memory Specification~2}
which is a restriction of opacity.
\subsection{Virtual-world consistency}
\label{sec:vwc}
Virtual World Consistency (VWC)~\cite{damien-vw-jv} was proposed as a
relaxation of opacity (in our case, du-opacity), where each aborted transaction should be
consistent with its \emph{causal past} (but not necessarily with a
serialization formed by committed transactions). Intuitively, a transaction
$T_1$ causally precedes $T_2$ if $T_2$ reads a value written and
committed by $T_1$.  The original definition~\cite{damien-vw-jv} required
that no two write operations are ever invoked with the same argument
(the \emph{unique-writes} assumption).
Therefore, the causal precedence is unambiguously identified for each
transactional read. Below we give a more general definition.

Given a t-sequential legal history $S$ and  transactions
$T_i,T_j\in\txns(S)$, we say that $T_i$ \emph{reads $X$ from $T_j$} if
(1) $T_i$ reads $v$ in $X$ and (2) $T_j$ is the last committed  transaction that writes $v$ to
$X$ and precedes $T_i$ in $S$.

Now consider a (not necessarily t-sequential)  history $H$.
We say that $T_i$ \emph{could have read $X$ from} $T_j$ in $H$
if $T_j$ writes a value $v$ to a t-object $X$,
$T_i$ reads $v$ in $X$, and $\Read_i(X) \not\prec_H^{RT} \TryC_j()$.

Given $\T\subseteq\txns(H)$,
let $H^{\T}$ denote the subsequence of $H$ restricted to events of
transactions in $\T$.

\begin{definition}[du-VWC]
\label{def:vwc}
A finite history $H$ is \emph{du-virtual-world consistent} if
it is strictly serializable (with du-semantics), and
for every aborted or t-incomplete transaction $T_i \in \ms{txns}(H)$,
there is $\T\subseteq \txns(H)$ including $T_i$ and a t-sequential
t-complete legal history $S$ such that:
\begin{enumerate}
\item
$S$ is equivalent to a completion of $H^{\T}$,
\item
For all $T_j,T_k\in\txns(S)$, if $T_j$ reads $X$ from $T_k$ in $S$, then $T_j$ could have read $X$ from
$T_k$ in $H$,
\item $S$ respects the per-process order of $H$: if $T_j$ and $T_k$
  are executed by the same process and $T_j\prec_H^{RT} T_k$, then $T_j\prec_S T_k$.
\end{enumerate}
We refer to $S$ as a \emph{du-VWC serialization for $T_i$ in $H$}.
\end{definition}
Intuitively, with every t-read on $X$
performed by $T_i$ in $H$, the du-VWC serialization $S$ associates
some transaction $T_j$ from which $T_i$ could have read the value of $X$.
Recursively, with every read performed by $T_j$, $S$ associates some $T_m$
from which $T_j$ could have read, etc.
Altogether, we get a ``plausible'' causal past of $T_i$ that
constitutes a serial history.
Notice that to ensure deferred-update semantics,
we only allow a transaction $T_j$ to read from a
transaction $T_k$  that
invoked $\TryC_k$ by the time of the read operation of $T_j$.

We now prove that du-VWC is a strictly weaker property than du-opacity.
Since du-TMS2 is strictly weaker than du-opacity, it follows that
Du-TMS2 $\subsetneqq$ du-VWC.
\begin{theorem}
\label{th:vwc}
Du-opacity $\subsetneqq$ du-VWC.
\end{theorem}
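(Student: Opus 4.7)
The plan is to establish both the inclusion and its strictness separately.

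For the direction du-Opacity $\subseteq$ du-VWC, I begin with a du-opaque history $H$ and a du-opaque serialization $S$ guaranteed by Definition~\ref{def:opacityKR}. Strict serializability (Definition~\ref{def:ser}) is witnessed directly by $\ms{cseq}(S)$: the real-time order and the local-serialization legality of t-reads transfer verbatim from du-opacity to strict serializability. For each aborted or t-incomplete transaction $T_i \in \txns(H)$, I set $\T = \{T_i\} \cup \{T_k : T_k <_S T_i\}$ and let $S'$ be the subsequence of $S$ restricted to $\T$. Then I verify the four conditions of Definition~\ref{def:vwc} in turn: (i) $S'$ inherits t-sequentiality, t-completeness, and legality from $S$ as a prefix; (ii) equivalence to a completion of $H^{\T}$ follows because $S|k = \overline{H}|k$ for every $T_k \in \T$ and $\overline{H}|k$ is derived from $H|k = H^{\T}|k$ by the standard completion rules of Definition~\ref{def:comp}; (iii) if $T_j$ reads $X$ from $T_k$ in $S'$, then the local-serialization legality clause of du-opacity forces an invocation of $\TryC_k$ to appear in $H$ before the response of $\Read_j(X)$, which is precisely the statement that $T_j$ could have read $X$ from $T_k$ in $H$; (iv) per-process order is inherited from the real-time order respected by $S$.

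For the strict inclusion, I exhibit a du-VWC history $H$ that is not du-opaque. Take three transactions on distinct processes: $T_1$ writes $1$ to $X$ and commits; afterwards in real time $T_2$ writes $2$ to $X$ and commits; afterwards $T_3$ reads $X$, obtains $1$, and is aborted. Any du-opaque serialization of $H$ must respect the real-time order $T_1 \prec_H^{RT} T_2 \prec_H^{RT} T_3$, so the latest committed write of $X$ preceding $T_3$ in any candidate serialization is $T_2$'s write of $2$, contradicting the value $1$ returned to $T_3$; hence $H$ is not du-opaque. Yet $H$ is du-VWC: the projection $T_1, T_2$ is a legal strict serialization of $H$ (there are no t-reads by committed transactions), and for the aborted $T_3$ the choice $\T = \{T_1, T_3\}$ with $S' = T_1 \cdot T_3$ satisfies Definition~\ref{def:vwc}, since $T_3$ could have read $X$ from $T_1$ in $H$ (because $\TryC_1$ precedes $\Read_3(X)$ in real time) and per-process order is vacuous across distinct processes.

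I expect the main routine obstacle to be the bookkeeping in step~(ii)---matching a completion of $H^{\T}$ with the corresponding restriction of the completion of $H$ that witnesses du-opacity. The conceptual crux is the observation in step~(iii) that the local-serialization legality clause of du-opacity is exactly strong enough to yield the could-have-read-from relation required by du-VWC, making the restricted serialization $S' = S|\T$ a ready-made causal past for every non-committed transaction.
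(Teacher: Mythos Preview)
Your proof is correct, but it differs from the paper's in both halves.

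For the inclusion, you take $\T=\{T_i\}\cup\{T_k:T_k<_S T_i\}$ and use the corresponding prefix $S'$ of the du-opaque serialization $S$ as the du-VWC serialization for $T_i$. The paper simply takes $\T=\txns(H)$ and uses $S$ itself: since $S$ already respects real-time order (hence per-process order) and the deferred-update clause of du-opacity yields the could-have-read-from relation for every t-read in $S$, nothing needs to be cut down. Your restriction to a prefix is harmless---legality, t-completeness, and the read-from constraints are inherited---but it is extra work. The paper's choice makes the verification of condition~(1) of Definition~\ref{def:vwc} immediate, whereas you correctly anticipated some bookkeeping in matching completions of $H^{\T}$ with the restricted serialization.

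For the strict inclusion, your counterexample is in fact \emph{simpler} than the paper's. The paper uses a history on two t-objects $X,Y$ with three transactions, one of which ($T_3$) is long-running and concurrent with the others; the cycle forcing non-du-opacity runs through all three transactions and both objects. Your example uses a single t-object and three pairwise non-concurrent transactions, so the failure of du-opacity is immediate from the real-time constraint $T_1<_S T_2<_S T_3$ alone, and the du-VWC witness $\T=\{T_1,T_3\}$ is transparent. Both examples exploit the same phenomenon---that du-VWC lets an aborted transaction pick a causal past that omits an inconvenient committed writer---but yours isolates it more cleanly.
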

\begin{proof}
If a history $H$ is du-opaque,
then there is a du-opaque serialization $S$ equivalent
to $\overline{H}$, where $\overline{H}$ is some completion of $H$.
By construction, $S$ is a total-order on the set of all transactions
that participate in $S$.
Trivially, by taking $\T=\txns(H)$, we derive that $S$ is a du-VWC serialization for
every aborted or t-incomplete transaction $T_i \in \txns(H)$.
Indeed, $S$ respects the real-time order and, thus, the
per-process order of $H$.
Since $S$ respects the deferred-update order in $H$, every t-read in $S$
``could have happened'' in $H$.

\begin{figure*}[t]
\begin{center}
\scalebox{.7}[0.7]{\begin{tikzpicture}

\node (r1) at (2.5,0) [] {};
\node (r2) at (4.5,0) [] {};
\node (w1) at (0,-1) [] {};

\node (r3) at (-3,-2) [] {};
\node (w3) at (9,-2) [] {};

\draw (r1) node [above] {\small {$R_1(X)\rightarrow 1$}}
   
   (r2) node [above] {\small {$R_1(Y)\rightarrow 0$}};
   
\draw (w1) node [above] {\small {$W_2(X,1)$}};

\draw (w3) node [above] {\small {$W_3(Y,1)$}};

\draw (r3) node [above] {\small {$R_3(X)\rightarrow 0$}};

\begin{scope}   
\draw [|-|,thick] (1.5,0) node[left] {{\small $T_1$}} to (3.5,0) ;
\draw [|-|,thick] (3.5,0) node[left] {} to (5.5,0) ;
\draw [|-|,dotted] (1.5,0) node[right] {} to (5.5,0) node[right] {{\small $A_1$}};
\end{scope}
\begin{scope}

\draw [|-|,thick] (-1,-1) node[left] {{\small $T_2$}} to (1,-1) node[right] {{\small $C_2$}} ; 
\end{scope}  
\begin{scope}
\draw [|-|,thick] (-4,-2) node[left] {} to (-2,-2) ;
\draw [|-|,thick] (8,-2) node[left] {} to (10,-2) ;
\draw [|-|,dotted] (-4,-2) node[left] {{\small $T_3$}} to (10,-2) node[right] {{\small $C_3$}}; 
\end{scope} 
\end{tikzpicture}}
\end{center}
\caption{A history that is du-VWC, but not du-opaque.}
\label{fig:vwc}
\end{figure*}
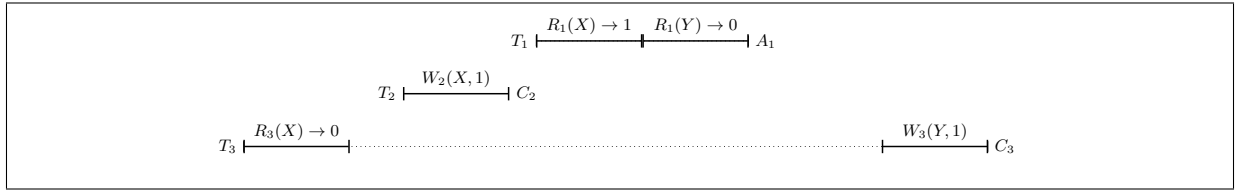

To show that the inclusion is strict,
Figure~\ref{fig:vwc} depicts a history $H$ that is du-VWC, but not du-opaque.
Clearly, $H$ is strictly serializable.
Here $T_2$, $T_1$ is the required du-VWC serialization for aborted
transaction $T_1$.
However, $H$ has no du-opaque serialization.
\qed
\end{proof}
\begin{theorem}
\label{th:lcvwc}
Du-VWC is a safety property.
\end{theorem}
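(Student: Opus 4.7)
Since du-VWC comprises two requirements---strict serializability and, for each aborted or t-incomplete transaction $T_i$, existence of a du-VWC serialization for $T_i$---the plan is to handle each conjunct separately. The strict-serializability part is already known to be prefix- and limit-closed by Theorem~\ref{th:sersafety}, so the real work is to show that the existence of du-VWC serializations for aborted and t-incomplete transactions is preserved under taking prefixes and limits.

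For prefix-closure, let $H$ be a du-VWC history and $H'$ a finite prefix of $H$. Any aborted transaction in $H'$ is also aborted in $H$, and any t-incomplete transaction in $H'$ is either aborted, committed, or t-incomplete in $H$; in all cases it is aborted or t-incomplete in $H$ or, if newly committed in $H$, its pre-$\TryC$ events in $H'$ still belong to a past that must be justified in $H$. Given $T_i$ aborted or t-incomplete in $H'$, I would take the du-VWC serialization $S$ of $T_i$ in $H$ (with transaction subset $\T \ni T_i$) and build $S'$ for $T_i$ in $H'$ by the same surgery as in the proof of Lemma~\ref{lm:dusep}: keep the order $\ms{seq}(S)$, restrict each transaction's events to those already issued in $H'$, insert $A_k$ or $\TryC_k\cdot A_k$ where necessary, and drop transactions whose first event is not in $H'$. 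The resulting $S'$ remains legal because the ``could have read from'' relation and the per-process order can only become easier to satisfy when moving to a prefix, and $\ms{seq}(S')$ is a subsequence of $\ms{seq}(S)$.

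For limit-closure, consider an ever-extending sequence $H^0, H^1, \ldots$ of finite du-VWC histories whose limit is $H$. By Theorem~\ref{th:sersafety}, $H$ is strictly serializable. For each transaction $T_i$ that is aborted or t-incomplete in $H$, I observe that once $T_i$ appears in some $H^{k_i}$ it remains aborted or t-incomplete in every $H^m$ with $m\geq k_i$ (an aborted transaction cannot be ``un-aborted'' and a t-incomplete one is t-incomplete in every prefix containing it). Hence for each $m\geq k_i$ there exists a du-VWC serialization $S_i^m$ for $T_i$ in $H^m$ with some subset $\T_i^m \ni T_i$. I would then build, per $T_i$ separately, a rooted directed graph $G_i$ exactly as in the proof of Theorem~\ref{th:lc}: vertices are tuples $(H^m, S_i^m)$ such that $\ms{seq}(S_i^m)$ projected to complete transactions of $H^m$ that appear earlier in the sequence is a prefix of its analogue at the next level; edges go from level $m$ to level $m+1$ when the serializations are compatible in this sense. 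The graph is infinite, connected (by the prefix-closure argument above transporting any level-$(m{+}1)$ serialization back to level $m$), and finitely branching (since the serialization at each level is a permutation of a finite transaction set). Applying K\"onig's Path Lemma (Lemma~\ref{lm:konig}) and the bijection construction of Theorem~\ref{th:lc} yields a t-sequential legal $S_i$ equivalent to a completion of $H^{\T_i}$, where $\T_i=\bigcup_m \T_i^m$, that justifies $T_i$ in $H$.

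The main obstacle will be the limit step. Unlike in Theorem~\ref{th:lc}, where a single serialization must be constructed, here we must produce one serialization per aborted or t-incomplete transaction, each possibly over a different subset $\T_i$ of transactions, while ensuring the ``could have read from'' condition and per-process order are preserved in the limit. The key insight, which I would emphasize, is that the König argument can be run independently for each $T_i$ because the du-VWC definition decouples the justifications of distinct aborted or t-incomplete transactions---there is no requirement that the various $S_i$'s agree with each other, only that each individually be a legal serial past consistent with $H$.
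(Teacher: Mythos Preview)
Your decomposition (strict serializability plus per-transaction justifications) matches the paper's, but your limit-closure argument is far heavier than what the paper does and misses the key simplification. A du-VWC serialization for a fixed $T_i$ is a \emph{local, finite} witness: by the could-have-read-from clause, every transaction committed in $T_i$'s serialization must have its $\TryC$ invoked before some read of $T_i$ responds, and this propagates transitively along the reads-from chain, so the entire causal past of $T_i$ consists of finitely many transactions whose $\TryC$ invocations already lie in the finite prefix $H^{k_0}$ containing $T_i$'s events. Taking $k$ large enough that each of these finitely many transactions has taken its last step in $H^k$, the du-VWC serialization for $T_i$ in that single $H^k$ works verbatim for the limit $H$: the set $\T$ is unchanged, $(H^k)^{\T} = H^{\T}$, and the could-have-read-from and per-process constraints are determined by events already present in $H^k$. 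The paper's limit step is therefore a one-liner. Your K\"onig construction could probably be made to work, but it is underspecified (what makes two serializations ``compatible'' when $\T_i^m$ varies with $m$?) and targets a harder problem than is posed---the contrast with Theorem~\ref{th:lc} is precisely that du-opacity needs a single \emph{global} limit serialization, whereas du-VWC only needs, per $T_i$, a finite witness that can be read off one sufficiently long prefix.

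One small gap, present also in the paper's terse argument: for prefix-closure, if $T_i$ is t-incomplete in $H'$ but \emph{committed} in $H$, there is no du-VWC serialization for $T_i$ in $H$ to transport back; your sentence ``its pre-$\TryC$ events in $H'$ still belong to a past that must be justified in $H$'' does not supply one. In that case you must instead extract $T_i$'s causal past from the (du-)strict serialization of $H$; the du-condition there guarantees that every writer $T_i$ reads from has $\TryC$ invoked before the corresponding read responds, hence in $H'$, so the restriction yields a legal witness for $T_i$ in $H'$.
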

\begin{proof}
By Definition~\ref{def:vwc}, a history $H$ is du-VWC if and only if
$H$ is strictly serializable and there is a du-VWC serialization
for every transaction $T_i \in \ms{txns}(H)$
that is aborted or t-incomplete in $H$.

To prove prefix-closure, recall that
strict serializability is a prefix-closed property
(Theorem~\ref{th:sersafety}).
Therefore, any du-VWC serialization $S$ for a transaction $T_i$
in history $H$ is also a du-VWC serialization $S$ for a transaction
$T_i$ in any prefix of $H$ that contains events of $T_i$.

To prove limit-closure,
%
consider an infinite sequence of du-VWC histories
$H^0$, $H^1$, $\ldots$,  $H^i$, $H^{i+1}$ , $\ldots$, where each $H^{i+1}$ is the one-event extension of
$H^i$ and prove that the infinite limit, $H$ of this sequence is also a du-VWC history.
Theorem~\ref{th:sersafety} establishes that there is a
strictly serializable serialization for $H$.

Since, for all $i\in \mathbb{N}$, $H^i$ is du-VWC,
for every transaction $T_i$ that is t-incomplete or aborted in $H^i$,
there is a VWC serialization for $T_i$.
Consequently, there is a du-VWC serialization for
every aborted or incomplete transaction $T_i$ in $H$. 
\end{proof}
  %
\subsection{Transactional memory specification (TMS)}
\label{sec:tms}
\emph{Transactional Memory Specification} (TMS) 1 and 2 were
formulated in I/O automata~\cite{DGLM13}.
Following~\cite{AttiyaGHR2014},
we adapt these definitions to our framework and
explicitly introduce the deferred-update requirement.

\vspace{1mm}\noindent\textbf{TMS1.}
Given a history $H$,
TMS1 requires us to justify the behavior of all committed transactions in $H$
by a legal t-complete t-sequential history that
preserves the real-time order in $H$ (strict serializability),
and to justify the response of each complete t-operation performed in $H$
by a legal t-complete t-sequential history $S$.
The t-sequential history $S$ used to justify a complete t-operation $op_{i,k}$
(the $i^{th}$ t-operation performed by transaction $T_k$)
includes $T_k$ and
a subset of transactions from $H$ whose operations justify $op_{i,k}$.
(Our description follows~\cite{AttiyaGHR2014}.)

Let ${H}^{k,i}$ denote the prefix of a history $H$ up to (and including)
the response of $i^{th}$ t-operation $op_{k,i}$ of transaction $T_k$.
We say that a history $H''$ is a \emph{possible past} of $H^{k,i}$
if $H''$ is a subsequence of $H^{k,i}$ and
consists of all events of transaction $T_k$ and
all events from some subset of committed transactions
and transactions that have invoked $\TryC$ in $H^{k,i}$
such that if a transaction $T\in H''$,
then for a transaction $T'\prec_{H^{k,i}}^{RT} T$, $T'\in H''$
if and only if $T'$ is committed in $H^{k,i}$.
Let $\ms{cTMSpast}(H,op_{k,i})$ denote the set of possible pasts of $H^{k,i}$.

For any history $H'' \in \ms{cTMSpast}(H,op_{k,i})$,
let $\ms{ccomp}(H'')$ denote the history generated from $H''$ by the following procedure:
for all $m\neq k$, replace every event $A_m$ by $C_m$
and complete every incomplete $\TryC_m$ with including $C_m$ at the end of $H''$;
include $\TryC_k\cdot A_k$ at the end of $H''$.
%
%
\begin{definition}[du-TMS1]
\label{def:tms1}
A history $H$ satisfies \emph{du-TMS1} if
\begin{enumerate}
\item $H$ is strictly serializable (with du-semantics), and
\item for each complete t-read $op_{i,k}$ that returns a non-$A_k$ response in
$H$, there exist a legal t-complete t-sequential history $S$ and a history $H'$ such that:
\begin{itemize}
\item[--] $H'=\ms{ccomp}(H'')$, where $H''\in \ms{cTMSpast}(H,op_{k,i})$
\item[--] $H'$ is equivalent to $S$
\item[--] for any two transactions $T_k$ and $T_m$ in $H'$, if $T_k \prec_{H'}^{RT} T_m$ then $T_k <_S T_m$
\end{itemize}
\end{enumerate}

We refer to $S$ as the du-TMS1 serialization for $op_{i,k}$.
\end{definition}
\begin{theorem}
\label{th:tms1safety}
Du-TMS1 is a safety property.
\end{theorem}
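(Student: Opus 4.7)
The plan is to verify both prefix-closure and limit-closure directly, leveraging the fact that strict serializability with deferred-update semantics has already been established as a safety property in Theorem~\ref{th:sersafety}, and exploiting the key observation that the second requirement in Definition~\ref{def:tms1} is essentially \emph{local} to each complete t-read: the witness $S$ depends only on the prefix $H^{k,i}$ of $H$ up to and including the response of $op_{i,k}$.

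First I would prove prefix-closure. Let $H$ be du-TMS1 and let $H'$ be any prefix of $H$. The strict-serializability clause transfers from $H$ to $H'$ by Theorem~\ref{th:sersafety}. For the second clause, note that any complete t-read $op_{i,k}$ in $H'$ is also complete in $H$, and moreover $(H')^{k,i}=H^{k,i}$ since $H'$ extends up to at least the response of $op_{i,k}$. Consequently $\ms{cTMSpast}(H',op_{k,i}) = \ms{cTMSpast}(H,op_{k,i})$, so any witness $(S,H'')$ for $op_{i,k}$ in $H$ is automatically a valid witness in $H'$.

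Next I would prove limit-closure. Let $H^0, H^1, \ldots$ be an infinite sequence of du-TMS1 histories with $H^{i+1}$ a one-event extension of $H^i$, and let $H$ be the limit. By Theorem~\ref{th:sersafety}, $H$ is strictly serializable. For any complete t-read $op_{i,k}$ in $H$, its response event occurs at some finite position, so there exists $j$ such that $op_{i,k}$ is already complete in $H^j$ with $(H^j)^{k,i}=H^{k,i}$. Since $H^j$ satisfies du-TMS1, there exist $S$ and $H''\in \ms{cTMSpast}(H^j,op_{k,i})$ witnessing the second clause; by the same equality of prefixes, the pair $(S,H'')$ also witnesses the second clause for $op_{i,k}$ in $H$.

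The only step requiring any care is verifying that $\ms{cTMSpast}(\cdot,op_{k,i})$ truly depends only on $H^{k,i}$ and not on the suffix appearing after the response of $op_{k,i}$; this is immediate by unfolding the definition, which is phrased entirely in terms of $H^{k,i}$. Hence no genuine obstacle arises beyond this bookkeeping, and du-TMS1 is a safety property.
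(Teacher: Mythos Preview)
Your proposal is correct and follows essentially the same approach as the paper: reduce both closure properties to Theorem~\ref{th:sersafety} for the strict-serializability clause, and observe that the per-read witness depends only on the prefix $H^{k,i}$, so it transfers unchanged to any prefix or limit. If anything, your explicit remark that $\ms{cTMSpast}(\cdot,op_{k,i})$ is determined by $H^{k,i}$ alone makes the transfer step slightly more transparent than the paper's one-line justification.
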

\begin{proof}
A history $H$ is du-TMS1 if and only if $H$ is strictly serializable
and there is a du-TMS1 serialization for every t-operation $op_{k,i}$
that does not return $A_k$ in $H$.

To see that du-TMS1 is prefix closed,
recall that strict serializability is a prefix-closed property.
Let $H$ be any du-TMS1 history and $H^i$, any prefix of $H$.
We now need to prove that, for every t-operation $op_{k,i} \neq \TryC_k$ that returns a non-$A_k$ response in $H^i$,
there is a du-TMS1 serialization for $op_{k,i}$.
But this is immediate since the du-TMS1 serialization for $op_{i,k}$ in $H$ is also the required du-TMS1 serialization
for $op_{k,i}$ in $H^i$.

To see that du-TMS1 is limit closed,
consider an infinite sequence
\[ H^0,H^1, \ldots H^i,H^{i+1},\ldots \]
of finite du-TMS1 histories,
such that $H^{i+1}$ is a one-event extension of $H^i$.
Let let $H$ be the corresponding infinite limit history.
We want to show that $H$ is also du-TMS1.

Since strict serializability is a limit-closed property
(Theorem~\ref{th:sersafety}), $H$ is strictly serializable.
By assumption, for all $i\in \mathbb{N}$, $H^i$ is du-TMS1.
Thus, for every transaction $T_i$ that participates in $H^i$,
there is a du-TMS1 serialization $S^{i,k}$
for each t-operation $op_{k,i}$.
But $S^{i,k}$ is also the required
du-TMS1 serialization for $op_{k,1}$ in $H$.
The claim follows. 
\end{proof}
It has been shown~\cite{TMS-WTTM} that
Opacity is a strictly stronger property than du-TMS1,
that is, Opacity $\subsetneqq$ du-TMS1.
Since Du-Opacity $\subsetneqq$ Opacity (Theorem~\ref{th:gkkr})
it follows that Du-Opacity $\subsetneqq$ du-TMS1.
On the other hand, du-TMS1 is incomparable to du-VWC,
as demonstrated by the following examples.
\begin{figure*}[t]
\begin{center}
\scalebox{.7}[0.7]{ \begin{tikzpicture}[font=\small]
\node (w1) at (7.3,-1) [] {};
\node (c1) at (9,-1) [] {};

\node (w2) at (2.7,-2) [] {};
\node (c2) at (5.2,-2) [] {};

\node (r3) at (12,-3) [] {};
\node (a3) at (13.8,-3) [] {};

\draw (w1) node [above] { \small {$W_1(X,1)$}};
\draw (c1) node [above] {\small {$\TryC_1$}};

\draw (w2) node [above] {\small {$W_2(X,0)$}};
\draw (c2) node [above] {\small {$\TryC_2$}};

\draw (r3) node [above] {\small {$R_3(X) \rightarrow 0$}};
\draw (a3) node [above] {\small {$\TryC_3$}};

\begin{scope}
\draw [|-|,dotted] (6.6,-1) node[left] {$T_1$} to (9.5,-1) node[right] {$C_1$};
\draw [|-|,thick] (6.6,-1) node[above] {} to (8,-1);
\draw [|-|,thick] (8,-1) node[above] {} to (9.5,-1);
\end{scope}
\begin{scope}
\draw [|-|,dotted] (2,-2) node[left] {$T_2$} to (6,-2) node[right] {$C_2$};
\draw [|-|,thick] (2,-2) node[above] {} to (4,-2);
\draw [|-|,thick] (4,-2) node[above] {} to (6,-2);
\end{scope}
 \begin{scope}
\draw [|-|,dotted] (11.4,-3) node[left] {$T_3$} to (14.4,-3) node[right] {$A_3$};
\draw [|-|,thick] (11.4,-3) node[above] {} to (13,-3);
\draw [|-|,thick] (13,-3) node[above] {} to (14.4,-3);
\end{scope}
\end{tikzpicture}}
\end{center}
\caption{A history which is du-VWC but not du-TMS1.}
\label{fig:vwctms1}
\end{figure*}
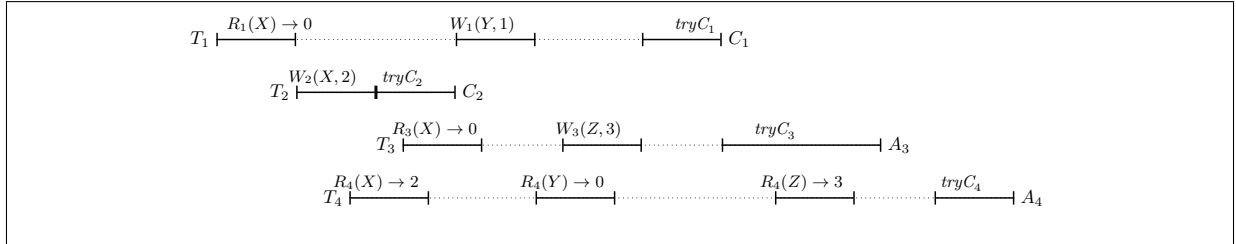
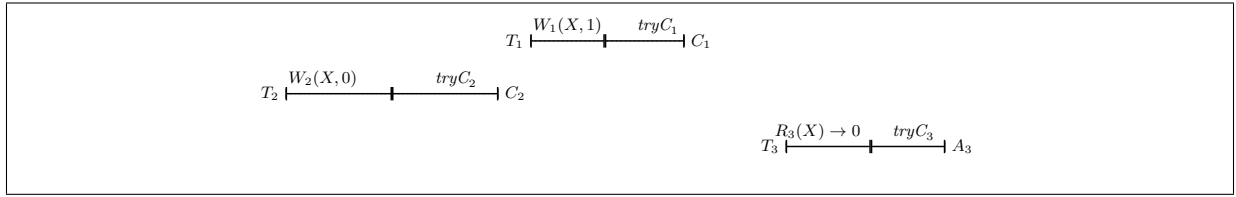
\begin{figure*}[t]
\begin{center}
\scalebox{.7}[0.7]{\begin{tikzpicture}
\node (r1) at (1.5,0) [] {};
\node (w1) at (5.5,0) [] {};
\node (w2) at (2.5,-1) [] {};
\node (r3) at (4.6,-2) [] {};
\node (w3) at (7.5,-2) [] {};
\node (r41) at (3.5,-3) [] {};
\node (r42) at (7,-3) [] {};
\node (r43) at (11.5,-3) [] {};
\node (c1) at (9.5,0) [] {};
\node (c2) at (4,-1) [] {};
\node (a3) at (11,-2) [] {};
\node (a4) at (14.5,-3) [] {};
\draw (c1) node [above] {\small {$\TryC_1$}};
\draw (c2) node [above] {\small {$\TryC_2 $}};
\draw (a3) node [above] {\small {$\TryC_3 $}};
\draw (a4) node [above] {\small {$\TryC_4 $}};

\draw (r1) node [above] {\small {$R_1(X)\rightarrow 0$}};
\draw (w1) node [above] {\small {$W_1(Y,1)$}};
\draw (w2) node [above] {\small {$W_2(X,2)$}};
\draw (r3) node [above] {\small {$R_3(X)\rightarrow 0$}};
\draw (w3) node [above] {\small {$W_3(Z,3)$}};
\draw (r41) node [above] {\small {$R_4(X)\rightarrow 2$}};
\draw (r42) node [above] {\small {$R_4(Y)\rightarrow 0$}};
\draw (r43) node [above] {\small {$R_4(Z)\rightarrow 3$}};

\begin{scope}
\draw [-,dotted] (.5,0) node[left] {$T_1$} to (10,0) node[right] {$C_1$};
\draw [|-|,thick] (0.5,0) node[above] {} to (2,0);
\draw [|-|,thick] (5,0) node[above] {} to (6.5,0);
\draw [|-|,thick] (8.5,0) node[right] {} to (10,0);
\end{scope}
%
\begin{scope}
\draw [-,dotted] (2,-1) node[left] {$T_2$} to (5,-1) node[right] {$C_2$};
\draw [|-|,thick] (2,-1) node[left] {} to (3.5,-1);
\draw [|-|,thick] (3.5,-1) node[right] {} to (5,-1);
\end{scope}
%
\begin{scope}
\draw [-,dotted] (4,-2) node[left] {$T_3$} to (13,-2) node[right] {$A_3$};
\draw [|-|,thick] (4,-2) node[left] {} to (5.5,-2);
\draw [|-|,thick] (7,-2) node[above] {} to (8.5,-2);
\draw [|-|,thick] (10,-2) node[right] {} to (13,-2);
\end{scope}
%
\begin{scope}
\draw [-,dotted] (3,-3) node[left] {$T_4$} to (15.5,-3) node[right] {$A_4$};
\draw [|-|,thick] (3,-3) node[left] {} to (4.5,-3);
\draw [|-|,thick] (6.5,-3) node[above] {} to (8,-3);
\draw [|-|,thick] (11,-3) node[right] {} to (12.5,-3);
\draw [|-|,thick] (14,-3) node[above] {} to (15.5,-3);

\end{scope}
\end{tikzpicture}}
\end{center}
\caption{A history which is du-TMS1 but not du-VWC.}
\label{fig:tms1vwc}
\end{figure*}
\begin{proposition}
There is a history that is du-TMS1, but not du-VWC.
\end{proposition}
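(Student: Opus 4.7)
The plan is to exhibit the history $H$ of Figure~\ref{fig:tms1vwc} and verify that it lies in du-TMS1 but outside du-VWC. The critical feature of $H$ is that the aborted transaction $T_4$ performs $R_4(Z)\to 3$, while the unique writer of $Z$ in $H$ is $T_3$, whose $\TryC_3\to A_3$ is a \emph{fully completed} event of $H$.

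To rule out du-VWC, I would invoke Definition~\ref{def:comp}: a completion can only insert $C_k$ or $A_k$ to resolve an \emph{incomplete} $\TryC_k$, so it cannot turn the already completed $A_3$ of $H$ into $C_3$. Hence for any $\T\subseteq\txns(H)$ with $T_4\in\T$ and any completion of $H^{\T}$, transaction $T_3$ remains aborted, so no committed writer of $Z$ precedes $T_4$ in any legal t-sequential history equivalent to that completion. As the initial value of $Z$ is $0\neq 3$, no du-VWC serialization for $T_4$ can exist, violating Definition~\ref{def:vwc}.

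To show that $H$ is du-TMS1, I would first check strict serializability of the committed transactions: $T_1$ and $T_2$ are real-time concurrent, and the order $T_1,T_2$ gives a legal t-sequential history in which $R_1(X)\to 0$ is legal in its local serialization (since $\TryC_2$ has not been invoked by the response of $R_1(X)$). Then, for each complete non-aborting t-read, I would construct a du-TMS1 serialization by choosing an appropriate possible past $H''\in\ms{cTMSpast}(H,\cdot)$. For $R_1(X)\to 0$, $R_3(X)\to 0$, $R_4(X)\to 2$, and $R_4(Y)\to 0$, a small $H''$ suffices; in each case it is enough to include the reading transaction and, where needed, $T_2$ to justify the value read, while excluding $T_1$ (permissible since $T_1$ has not invoked $\TryC_1$ at the response of any of these reads).

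The main obstacle, and the step that exploits the gap between the two properties, is justifying $R_4(Z)\to 3$ under du-TMS1. Here I would use the observation that in the prefix $H^{4,3}$ the event $\TryC_3$ has been invoked but has not yet returned, so $T_3$ is a transaction with \emph{incomplete} $\TryC$ in $H^{4,3}$ and is therefore eligible for inclusion in a possible past. Choosing $H''=\{T_2,T_3,T_4\}$ (allowed because $T_1$ real-time precedes none of these in $H^{4,3}$), the definition of $\ms{ccomp}$ completes the incomplete $\TryC_3$ with $C_3$, promoting $T_3$ to committed. The ordering $T_3,T_2,T_4$ then produces a legal t-sequential history in which $T_3$ commits (reading $X=0$, writing $Z=3$), $T_2$ commits (writing $X=2$), and $T_4$ reads $X=2,Y=0,Z=3$ and aborts, with no real-time precedences of $\ms{ccomp}(H'')$ violated. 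This yields the required du-TMS1 serialization and completes the separation.
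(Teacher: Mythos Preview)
Your proposal is correct and follows essentially the same approach as the paper: you use the same history from Figure~\ref{fig:tms1vwc}, argue that $T_4$ has no du-VWC serialization because the only writer of $Z$ is the already-aborted $T_3$, and justify $R_4(Z)\to 3$ under du-TMS1 via the possible past $\{T_2,T_3,T_4\}$ ordered $T_3,T_2,T_4$, exploiting that $\TryC_3$ is still incomplete in $H^{4,Z}$ so $\ms{ccomp}$ may commit it. Your treatment is in fact slightly more explicit than the paper's in spelling out why $T_1$ can be omitted from each possible past and in verifying the du-semantics condition for strict serializability.
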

\begin{proof}
Figure~\ref{fig:tms1vwc} depicts a history $H$ that is du-TMS1, but not du-VWC.
Observe that $H$ is strictly serializable.
To prove that $H$ is du-TMS1, we need to prove that there is a TMS1 serialization for each
t-read that returns a non-abort response in $H$.
Clearly, the serialization in which only $T_3$ participates is the required
TMS1 serialization for $\Read_3(X)\rightarrow 0$.
Now consider the aborted transaction $T_4$.
The TMS1 serialization for $\Read_4(X) \rightarrow 2$ is $T_2,T_4$, while the TMS1 serialization that justifies the response of
$\Read_4(Y)->0$ includes just $T_4$ itself.
The only nontrivial t-read whose response needs to be justified is $\Read_4(Z)\rightarrow 3$.
Indeed, $\TryC_3$ overlaps with $\Read_4(Z)$ and thus, the response of $\Read_4(Z)$ can be justified by
choosing transactions in $\ms{cTMSpart}(H,\Read_4(Z))$ to be $\{T_3,T_2,T_4\}$ and then
deriving a TMS1 serialization $S=T_3,T_2,T_4$ for $\Read_4(Z)\rightarrow 3$ in which
$\TryC_3$ may be completed by including the commit response.

However, $H$ is not du-VWC. Consider transaction $T_3$ which returns $A_3$ in $H$: $T_3$ must be aborted in any serialization
equivalent to some direct causal past of $T_4$. But $\Read_4(Z)$ returns the value $3$ that is written by $T_3$.
Thus, $\Read_4(Z)$
cannot be legal in any du-VWC serialization for $T_4$.
\end{proof}

\begin{proposition}
There is a history that is du-VWC, but not du-TMS1.
\end{proposition}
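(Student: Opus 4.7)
The plan is to exhibit the history $H$ from Figure~\ref{fig:vwctms1} as the separating example and verify the two required properties. Recall that $H$ consists of three transactions: $T_2$ writes $0$ to $X$ and commits, then $T_1$ writes $1$ to $X$ and commits, and finally $T_3$ reads $X$, obtaining $0$, and then aborts (with $T_2 \prec_H^{RT} T_1 \prec_H^{RT} T_3$).

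First I would show $H$ is du-VWC. Strict serializability with du-semantics is witnessed by the t-sequential order $S=T_2,T_1$ on the committed transactions: it respects the real-time order, contains no t-reads whose legality needs checking, and every complete transaction is t-complete. Next, a du-VWC serialization must be produced only for the aborted transaction $T_3$ (Definition~\ref{def:vwc}). For this, I would choose $\T=\{T_2,T_3\}$ (deliberately omitting $T_1$) and the t-sequential history $T_2\cdot T_3$; here $T_3$ reads $X$ from $T_2$, which is legal since $T_2$'s committed value of $X$ is $0$, and $T_3$ could have read $X$ from $T_2$ in $H$ because $\Read_3(X) \not\prec_H^{RT} \TryC_2$. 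Per-process order is respected trivially by placing each transaction on a distinct process.

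Then I would show $H$ is not du-TMS1. The only complete t-read returning a non-abort response is $\Read_3(X)\to 0$, and the key step is to argue that any $H''\in\ms{cTMSpast}(H,\Read_3(X))$ must include \emph{both} $T_1$ and $T_2$. Indeed, $H''$ must contain all events of $T_3$, and $T_3 \in H''$ together with $T_1\prec_{H^{3,1}}^{RT} T_3$ (and $T_1$ committed in $H^{3,1}$) forces $T_1 \in H''$ by the definition of possible past; analogously $T_2 \prec_{H^{3,1}}^{RT} T_1$ forces $T_2 \in H''$. Therefore $\ms{ccomp}(H'')$ contains $T_2$, $T_1$ both committed and the real-time chain $T_2 \prec^{RT} T_1 \prec^{RT} T_3$. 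Any t-sequential legal $S$ equivalent to this $H'$ and respecting its real-time order must order $T_2 <_S T_1 <_S T_3$, so the latest written value of $X$ before $T_3$ in $S$ is $1$, contradicting $\Read_3(X)\to 0$. Hence no du-TMS1 serialization for $\Read_3(X)$ exists.

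The only subtlety I anticipate is justifying the ``forced'' inclusion of $T_1$ and $T_2$ in every element of $\ms{cTMSpast}(H,\Read_3(X))$: this is the crux that makes du-TMS1 strictly stronger than du-VWC on this example, since du-VWC allows the causal past to freely omit real-time predecessors, whereas du-TMS1's past closure under committed real-time predecessors precludes doing so. Once this closure step is articulated carefully, the rest of the argument is routine.
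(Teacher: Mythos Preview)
Your proposal is correct and follows essentially the same approach as the paper: you use the history of Figure~\ref{fig:vwctms1}, argue du-VWC via the causal past $\{T_2,T_3\}$ with serialization $T_2\cdot T_3$, and rule out du-TMS1 by showing that the possible-past closure forces both committed real-time predecessors $T_1,T_2$ into any $H''\in\ms{cTMSpast}(H,\Read_3(X))$, making $\Read_3(X)\to 0$ illegal. Your justification of the closure step (using $T_3\in H''$, $T_1\prec^{RT}T_3$ committed $\Rightarrow T_1\in H''$, etc.) is in fact more carefully spelled out than in the paper, which simply asserts that committed real-time predecessors must be included.
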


\begin{proof}
Figure~\ref{fig:vwctms1} depicts a history $H$ that is du-VWC, but not du-TMS1.
Clearly, $H$ is strictly serializable.
Observe that $T_3$ could have read only from $T_1$ in $H$ ($T_1$ writes the value $0$ to $X$ that is returned by $\Read_3(X)$).
Therefore, $T_1, T_3$ is the required du-VWC serialization for aborted transaction $T_3$.

However, $H$ is not du-TMS1: since both transactions $T_1$ and $T_2$ are committed and
precede $T_3$ in real-time order, they must be included
in any du-TMS1 serialization for $\Read_3(X) \rightarrow 0$.
But there is no such du-TMS1 serialization that would ensure the legality of $\Read_3(X)$.
\end{proof}
\vspace{1mm}\noindent\textbf{TMS2.}
We now study the TMS2 definition which imposes an extra restriction on the opaque serialization.
\begin{definition}[du-TMS2]
\label{def:tms2}
A history $H$ is \emph{du-TMS2} if there is
a legal t-complete t-sequential history $S$
equivalent to some completion, $\overline{H}$ of $H$ such that
\begin{enumerate}
\item
for any two transactions $T_k,T_m \in \txns(H)$, such that $T_m$ is a committed updating transaction,
if $C_k \prec_H^{RT} \TryC_m$ or $A_k \prec_H^{RT} \TryC_m$, then $T_k \prec_S T_m$, and
\item
for any two transactions $T_k,T_m \in \txns(H)$,
if $T_k \prec_H^{RT} T_m$, then $T_k <_S T_m$, and
\item
each $\Read_k(X)$ in $S$ that does not return $A_k$ is
legal in $S_{H}^{k,X}$.
\end{enumerate}
We refer to $S$ as the du-TMS2 serialization of $H$.
\end{definition}

It has been shown~\cite{TMS-WTTM} that
TMS2 is a strictly stronger property than Opacity,
\emph{i.e.}, TMS2 $\subsetneqq$ Opacity.
We now show that du-TMS2 is strictly stronger than du-opacity.
Indeed, from Definition~\ref{def:tms2},
we observe that every history that is du-TMS2 is also du-opaque.
The following proposition completes the proof.
\begin{proposition}
\label{lm:tms}
There is a history that is du-opaque, but not du-TMS2.
\end{proposition}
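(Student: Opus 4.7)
The plan is to exhibit a concrete history $H$ that is du-opaque but admits no du-TMS2 serialization. I will use three transactions: $T_1$ executing $\Write_1(X,1), \Write_1(Y,1), \TryC_1, C_1$; a concurrent $T_2$ executing $\Write_2(X,2), \TryC_2, C_2$; and $T_3$ running entirely after both $T_1$ and $T_2$ have committed, executing $\Read_3(Y)\rightarrow 1$, $\Read_3(X)\rightarrow 2, \TryC_3, C_3$. The crucial timing requirement is that $T_1$ and $T_2$ overlap in real time with the response $C_2$ placed strictly before the invocation of $\TryC_1$.

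To show that $H$ is du-opaque, I would propose the serialization $S = T_1\cdot T_2\cdot T_3$. Because $T_1$ and $T_2$ are concurrent in $H$, $S$ respects $\prec_H^{RT}$, and $T_3$ is placed last, in accordance with $T_1,T_2 \prec_H^{RT} T_3$. For legality, both $\TryC_1$ and $\TryC_2$ are invoked in $H$ before either read of $T_3$, so $T_1$ and $T_2$ belong to both local serializations $S_H^{3,Y}$ and $S_H^{3,X}$. In the order $T_1, T_2, T_3$, $T_1$ is the latest writer of $Y$ (value $1$) and $T_2$ is the latest writer of $X$ (value $2$, overriding $T_1$'s $\Write_1(X,1)$), matching the values returned by $T_3$'s reads.

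To show that $H$ is not du-TMS2, suppose for contradiction that some $S'$ is a du-TMS2 serialization of $H$. Legality of $\Read_3(X)\rightarrow 2$ in its local serialization forces $T_1 \prec_{S'} T_2$: both transactions appear in $S_H^{3,X}$ (each having invoked $\TryC$ before $\Read_3(X)$), and since $T_1$ also writes $X$ with a different value, only ordering $T_2$ after $T_1$ makes $T_2$'s write the latest before $T_3$'s read. On the other hand, by construction $C_2 \prec_H^{RT} \TryC_1$ and $T_1$ is a committed updating transaction, so the additional condition~(1) of Definition~\ref{def:tms2} applied to the pair $(T_2,T_1)$ forces $T_2 \prec_{S'} T_1$. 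These two constraints contradict one another, ruling out any such $S'$.

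The main obstacle is just ensuring that $T_1$ cannot be dropped from $S_H^{3,X}$ and thereby escape the ``latest writer'' constraint; the construction deliberately places $\TryC_1$ in $H$ strictly before $\Read_3(X)$, so the local-serialization removal rule does not apply. The example is essentially the minimal configuration in which the legality-induced order on two concurrent writers disagrees with the $\TryC$-based ordering demanded by TMS2.
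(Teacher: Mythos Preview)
Your argument is correct. You construct a three-transaction history in which legality of $T_3$'s reads forces $T_1 \prec_{S'} T_2$, while TMS2 condition~(1) (because $C_2$ precedes the invocation of $\TryC_1$ and $T_1$ is committed updating) forces $T_2 \prec_{S'} T_1$; these are incompatible, and the du-opaque serialization $T_1\cdot T_2\cdot T_3$ works exactly as you describe.

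The paper takes a slightly simpler route, using only two concurrent transactions (Figure~\ref{fig:tms2}): $T_1$ performs $R_1(X)\rightarrow 0,\ W_1(X,1)$ and commits, after which $T_2$ performs $R_2(X)\rightarrow 0,\ W_2(Y,1)$ and commits, with $C_1$ preceding the invocation of $\TryC_2$. Here the read $R_2(X)\rightarrow 0$ \emph{inside} $T_2$ already forces $T_2\prec_S T_1$ for legality, while condition~(1) forces $T_1\prec_S T_2$. Your construction outsources the legality constraint to a third, later read-only transaction $T_3$; this is equally valid but not minimal---in fact your $W_1(Y,1)$ and $R_3(Y)$ are unnecessary, since legality of $R_3(X)\rightarrow 2$ alone pins down the order of $T_1$ and $T_2$. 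One stylistic remark: since any du-TMS2 serialization $S'$ must itself be legal, you can argue directly from legality of $S'$ rather than routing through the local serialization $S_H^{3,X}$, which here coincides with the full prefix anyway.
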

\begin{proof}
Figure~\ref{fig:tms2} depicts a history $H$ that is du-opaque, but not du-TMS2.
Indeed, there is a du-opaque serialization $S$ of $H$ such that
$\ms{seq}(S)=T_2,T_1$.
On the other hand, since $T_1$ commits before $T_2$, $T_1$ must precede $T_2$ in any du-TMS2 serialization,
there does not exist any such serialization that ensures every t-read is legal. Thus, $H$ is not du-TMS2.
\end{proof}
\begin{figure*}[t]
\begin{center}
\scalebox{.7}[0.7]{\begin{tikzpicture}
\node (r1) at (0.7,0) [] {};
\node (w1) at (3.5,0) [] {};
\node (c1) at (7.5,0) [] {};

\node (r2) at (2.2,-1) [] {};
\node (w2) at (5.5,-1) [] {};
\node (c2) at (9,-1) [] {};

\draw (r1) node [above] {\small {$R_1(X)\rightarrow 0$}};
\draw (w1) node [above] {\small {$W_1(X,1)$}};
\draw (c1) node [above] {\small {$\TryC_1$}};

\draw (r2) node [above] {\small {$R_2(X)\rightarrow 0$}};
\draw (w2) node [above] {\small {$W_2(Y,1)$}};
\draw (c2) node [above] {\small {$\TryC_2$}};
\begin{scope}   
\draw [|-|,thick] (-0.5,0) node[left] {} to (1,0);
\draw [|-|,thick] (3,0) node[left] {} to (4.5,0);
\draw [|-|,thick] (7,0) node[left] {} to (8.5,0);
\draw [-,dotted] (-0.5,0) node[left] {$T_1$} to (8.5,0) node[right] {$C_1$};
\end{scope}
\begin{scope}   
\draw [|-|,thick] (1.5,-1) node[left] {} to (3,-1);
\draw [|-|,thick] (5,-1) node[left] {} to (6.5,-1);
\draw [|-|,thick] (8.5,-1) node[left] {} to (10,-1);
\draw [-,dotted] (1.5,-1) node[left] {$T_2$} to (10,-1) node[right] {$C_2$};
\end{scope}
\end{tikzpicture}}
\end{center}
\caption{A history that is du-opaque, but not TMS2~\cite{DGLM13}.}
\label{fig:tms2}
\end{figure*}
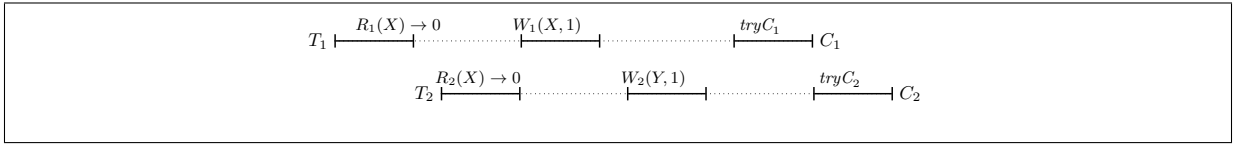
\begin{theorem}
\label{th:tms2pc}
Du-TMS2 is prefix-closed.
\end{theorem}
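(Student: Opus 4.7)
The plan is to mirror the proof of Lemma~\ref{lm:dusep}, adapting it to account for the extra TMS2 ordering condition. Given any du-TMS2 history $H$ with du-TMS2 serialization $S$, and any prefix $H^i$ of $H$, the goal is to construct a du-TMS2 serialization $S^i$ of $H^i$ such that $\ms{seq}(S^i)$ is a subsequence of $\ms{seq}(S)$.

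First I would construct $S^i$ from $S$ by (a) keeping $S^i|k = S|k$ for every transaction $T_k$ that is t-complete in $H^i$; (b) for every transaction $T_k$ complete but not t-complete in $H^i$, setting $S^i|k$ to the sequence of events of $H^i|k$ followed by $\TryC_k \cdot A_k$; (c) for every $T_k$ with an incomplete $\Read_k$, $\Write_k$ or $\TryA_k$ in $H^i$, setting $S^i|k$ to be the events of $S|k$ up to the invocation of that t-operation immediately followed by $A_k$; and (d) for every $T_k$ with an incomplete $\TryC_k$ in $H^i$, setting $S^i|k = S|k$. The transactions in $S^i$ are ordered so that $\ms{seq}(S^i)$ is a subsequence of $\ms{seq}(S)$. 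By construction $S^i$ is a t-complete t-sequential history equivalent to some completion of $H^i$.

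Next I would verify the three conditions of Definition~\ref{def:tms2}. Real-time order preservation (condition~2) follows directly because $\prec_{H^i}^{RT} \subseteq \prec_{H}^{RT}$ and $\ms{seq}(S^i)$ is a subsequence of $\ms{seq}(S)$. Legality of each t-read in its local serialization (condition~3) is established by the same argument as in Lemma~\ref{lm:dusep}: if some $\Read_k(X)$ in $S^i$ were not legal in $(S^i)_{H^i}^{k,X}$, then by the subsequence relationship it would also fail to be legal in $S_{H}^{k,X}$, contradicting du-TMS2 of $H$.

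The key new step is verifying the TMS2-specific condition~1 for $S^i$: if $T_m$ is a committed updating transaction in the completion of $H^i$ used to witness $S^i$, and $C_k \prec_{H^i}^{RT} \TryC_m$ or $A_k \prec_{H^i}^{RT} \TryC_m$, then $T_k \prec_{S^i} T_m$. Here the main subtlety I expect is case~(d): a transaction $T_m$ with an incomplete $\TryC_m$ in $H^i$ may be committed in $S$ (hence in $S^i$) and updating, even though its $\TryC_m$ invocation is not in $H^i$. However, any $T_k$ satisfying $C_k \prec_{H^i}^{RT} \TryC_m$ or $A_k \prec_{H^i}^{RT} \TryC_m$ must already have $\TryC_m$ present in $H^i$, so this real-time precedence lifts verbatim to $\prec_H^{RT}$; du-TMS2 of $H$ then gives $T_k \prec_S T_m$, and the subsequence property transfers this to $\prec_{S^i}$. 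The remaining cases (where $T_m$ is committed in $H^i$) are immediate since committed-updating status and the relevant real-time orderings persist from $H^i$ to $H$. The principal obstacle is thus bookkeeping around transactions of type~(d) to confirm that no new condition~1 obligation is created in $S^i$ that was not already present in $S$.
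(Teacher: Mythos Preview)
Your proposal is correct. Both you and the paper ultimately rely on the serialization $S^i$ produced in Lemma~\ref{lm:dusep} (the paper invokes it through Corollary~\ref{cr:pc}), so conditions~2 and~3 of Definition~\ref{def:tms2} are handled identically. The difference lies in how the extra TMS2 ordering condition~1 is verified.

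You argue directly: if $T_m$ is committed updating in $S^i$ then, by cases~(a) and~(d) of the construction, $T_m$ is also committed updating in $S$; the hypothesis $C_k \prec_{H^i}^{RT} \TryC_m$ (or $A_k \prec_{H^i}^{RT} \TryC_m$) forces the invocation of $\TryC_m$ to lie in $H^i$, so the same precedence holds in $H$; du-TMS2 of $S$ then gives $T_k \prec_S T_m$, and the subsequence relation $\ms{seq}(S^i)\subseteq\ms{seq}(S)$ transfers this to $S^i$. This handles all TMS2 obligations for a single fixed $S^i$ simultaneously.

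The paper instead argues by contradiction: if some pair $(T_k,T_m)$ had $T_m$ preceding $T_k$ in \emph{every} du-opaque serialization of $H^i$, the only cause would be $T_k$ reading a value written by $T_m$, which contradicts the deferred-update clause since $\Read_k(X)$ completes before $\TryC_m$ is invoked. Your route is cleaner on two counts: it avoids the quantifier inversion in the paper's negation (showing each pair can be correctly ordered in \emph{some} serialization does not by itself produce one serialization satisfying all pairs), and it sidesteps the unjustified claim that a direct read-from dependency is ``the only possibility'' forcing the reversed order. On the other hand, the paper's argument does surface a structural reason \emph{why} the TMS2 constraint is always compatible with du-opacity---any dependency that would reverse it is already excluded by deferred-update---which is a useful insight even if the surrounding quantifier bookkeeping is looser than yours.
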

\begin{proof}
Let $H$ be any du-TMS2 history. Then, $H$ is also du-opaque.
By Corollary~\ref{cr:pc}, for every $i\in \mathbb{N}$,
there is a du-opaque serialization $S^i$ for $H^i$.
We now need to prove that,
for any two transactions $T_k,T_m \in \txns(H^i)$, such that $T_m$ is a committed updating transaction,
if $C_k \prec_{H^i}^{RT} \TryC_m$ or $A_k \prec_{H^i}^{RT} \TryC_m$,
there is a du-opaque serialization $S^i$ with the restriction that
$T_k \prec_{S^i} T_m$.

Suppose by contradiction that there exist transactions $T_k,T_m \in \txns(H^i)$,
such that $T_m$ is a committed updating transaction and $C_k \prec_{H^i}^{RT} \TryC_m$ or $A_k \prec_{H^i}^{RT} \TryC_m$,
but $T_m$ must precede $T_k$ in any du-opaque serialization $S^i$.
Since $T_m \not\prec_{H^i}^{RT} T_k$, the only possibility is that $T_m$ performs $\Write_m(X,v)$
and there is $\Read_k(X) \rightarrow v$.
However, by our assumption, $\Write_k(X,v) \prec_{H^i}^{RT} \TryC_m$:
thus, $\Read_k(X)$ is not legal in its local serialization
with respect to $H^i$ and $S^i$---contradicting the assumption that
$S^i$ is a du-opaque serialization of $H^i$.
Thus, there is a du-TMS2 serialization for $H^i$,
proving that du-TMS2 is a prefix-closed property.
\end{proof}

\begin{proposition}
Du-TMS2 is not limit-closed.
\end{proposition}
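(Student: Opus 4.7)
The plan is to reuse, essentially verbatim, the counterexample used in the proof of Proposition~\ref{lm:lmn} (depicted in Figure~\ref{fig:op-example}) and invoke the fact that du-TMS2 is strictly stronger than du-opacity.

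Concretely, let $H$ be the infinite history in which $T_1$ performs $\Write_1(X,1)$ followed by an incomplete $\TryC_1$, $T_2$ performs a single $\Read_2(X)\to 1$ concurrently with $\TryC_1$, and there is an infinite collection of transactions $T_i$ ($i\geq 3$), each consisting of a single $\Read_i(X)\to 0$ that overlaps both $T_1$ and $T_2$. By construction, any two transactions in $H$ are concurrent, so no real-time order constraint of the form $C_k\prec_H^{RT} \TryC_m$, $A_k\prec_H^{RT} \TryC_m$, or $T_k\prec_H^{RT} T_m$ holds between any pair of distinct transactions in $H$ or in any of its finite prefixes.

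First I would argue that every finite prefix $H^j$ is du-TMS2. Because the real-time order on $\txns(H^j)$ is empty, conditions~(1) and~(2) of Definition~\ref{def:tms2} are vacuously satisfied, so it suffices to produce a legal t-complete t-sequential history $S^j$ equivalent to some completion $\overline{H^j}$ in which every $\Read_k(X)$ is legal in its local serialization $S^j_{H^j}{}^{k,X}$. This is exactly what was done in the proof of Proposition~\ref{lm:lmn}: complete $\TryC_1$ with $C_1$ (if present in $H^j$), abort every incomplete reader, and serialize in the order $T_3,\ldots,T_i,T_1,T_2$ (restricted to transactions appearing in $H^j$). The readers $T_i$ with $i\geq 3$ are legal since $T_1$ has not yet started committing at the point of their reads, and $\Read_2(X)\to 1$ is legal in its local serialization because $\TryC_1$ has been invoked before its response. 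Hence every finite prefix of $H$ is du-TMS2.

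Next I would show the infinite limit $H$ is not du-TMS2. Since du-TMS2 $\subseteq$ du-opacity (immediate from Definition~\ref{def:tms2} and Definition~\ref{def:opacityKR}), it is enough to show that $H$ is not du-opaque, and this is precisely the content of the second half of the proof of Proposition~\ref{lm:lmn}: in any candidate serialization, $T_1$ must occupy some finite index $n$, but each of the infinitely many $T_i$ with $i\geq 3$ must be serialized before $T_1$ (by legality of $\Read_i(X)\to 0$ in its local serialization), which is impossible. Therefore $H$ is not du-opaque, hence not du-TMS2, completing the proof.

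The only step that requires any care is the verification that the real-time order in $H$ is empty, so that conditions~(1) and~(2) of du-TMS2 add nothing beyond du-opacity on the finite prefixes; this is built directly into the construction of the counterexample, so no additional obstacle arises.
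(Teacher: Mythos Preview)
Your proposal is correct and follows exactly the paper's approach: reuse the counterexample from Proposition~\ref{lm:lmn} (Figure~\ref{fig:op-example}), observe that all finite prefixes are du-TMS2, and conclude that the infinite limit is not du-TMS2 because it is already not du-opaque. Your additional observation that the real-time order among $\txns(H)$ is empty (since no transaction is t-complete in $H$), making conditions~(1) and~(2) of Definition~\ref{def:tms2} vacuous, is a clean way to justify that the du-opaque serializations $S^j$ of the prefixes are automatically du-TMS2 serializations; the paper simply asserts this without spelling it out.
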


\begin{proof}
The counter-example to establish that du-opacity is not limit-closed
(Figure~\ref{fig:op-example}) also shows that du-TMS2 is not limit-closed:
all histories discussed in the counter-example are in du-TMS2.
\end{proof}
%
%
\section{Related work and Discussion}
\label{sec:pc4}
The properties discussed in this chapter explicitly preclude reading
from a transaction that has not yet invoked \textit{tryCommit},
which makes them prefix-closed and facilitates their verification.
We believe that this constructive definition is useful to TM
practitioners, since it streamlines possible implementations of t-read
and tryCommit operations.

We showed that du-opacity is limit-closed under the restriction
that every operation eventually terminates,
while du-VWC and du-TMS1 are (unconditionally) limit-closed,
which makes them safety properties~\cite{Lyn96}.

Table~\ref{table:relations} 
summarizes the containment relations between the
properties discussed in this chapter:
opacity, du-opacity, du-VWC, du-TMS1 and du-TMS2.
For example, ``du-opacity $\subsetneq$ opacity'' means that the set of
du-opaque histories is a proper subset of the set of opaque histories,
\emph{i.e.}, du-opacity is a strictly stronger property than opacity.
Incomparable (not related by containment) properties, such as du-TMS1 and
du-VWC are marked with $\times$.

%
%


\emph{Linearizability}~\cite{HW90,AW04},
when applied to objects with \emph{finite nondeterminism}
(\emph{i.e.}, an operation applied to a given state
may produce only finitely many outcomes) sequential
specifications is a safety property~\cite{Lyn96,GR14-lin}.
Recently, it has been shown~\cite{GR14-lin} that 
linearizability is not limit-closed if the implemented 
object may expose infinite non-determinism~\cite{GR14-lin},
that is, an operation applied to a given state may produce
infinitely many different outcomes.
The limit-closure proof (cf. Theorem~\ref{th:lc}), 
using K\"onig's lemma, 
cannot be applied with infinite non-determinism,
because the out-degree of the graph $G_H$, constructed 
for the limit infinite history $H$, is not finite.
%

In contrast, the TM abstraction is \emph{deterministic},
since reads and writes behave deterministically
in serial executions, yet du-opacity is not limit-closed.
It turns out that the graph $G_H$ for 
the counter-example history $H$ in Figure~\ref{fig:op-example}
is not connected.
For example, one of the finite prefixes of $H$ can be
serialized as $T_3,\,T_1,\,T_2$, but no prefix has a serialization
$T_3,\,T_1$ and, thus, the root is not connected to the corresponding
vertex of $G_H$.
Thus, the precondition of K\"onig's lemma does not hold for $G_H$:
the graph is in fact an infinite set of isolated vertices.
This is because du-opacity requires even incomplete
reading transactions, such as $T_2$, to appear in the serialization,
which is not the case for linearizability, where incomplete operations
may be removed from the linearization.
\begin{table}
      \begin{center}
      \scalebox{0.85}[0.85]{
      
     \begin{tabular}{l|l|l|l|l|l}
	~~~~ & du-opacity & du-VWC & du-TMS1 & du-TMS2\\ \hline
	du-opacity   &  & $\subsetneqq$ & $\subsetneqq$  & $\supsetneqq$\\ \hline
	du-VWC   & $\supsetneqq$ &  &$\times$  &$\supsetneqq$   \\ \hline
	du-TMS1  & $\supsetneqq$ & $\times$ &  &$\supsetneqq$  \\ \hline
	du-TMS2  & $\subsetneqq$ & $\subsetneqq$ & $\subsetneqq$ &

     \end{tabular}
     }
     \end{center}
     \caption{Relations between TM consistency definitions.\label{table:relations}}
\end{table}


\chapter{Complexity bounds for blocking TMs}
\label{ch:p3c2}
\epigraph{"I can't believe that!" said Alice\\
"Can't you?" the Queen said in a pitying tone. "Try again: draw a long breath, and shut your eyes."\\
Alice laughed. "There's no use trying," she said: "one can't believe impossible things."\\
"I daresay you haven't had much practice," said the Queen. "When I was your age, I always did it for half-an-hour a day. 
Why, sometimes I've believed as many as six impossible things before breakfast."}
{\textit{Lewis Carroll}-Through the Looking-Glass}
%
\section{Overview}
\label{sec:p3c2s0}
In this chapter, we present complexity bounds for TM implementations that provide no non-blocking progress guarantees for 
transactions and typically allow a transaction to \emph{block} (delay) or abort in concurrent executions.
We refer to Section~\ref{sec:complexity} in Chapter~\ref{ch:tm-model} for an overview of the complexity metrics
considered in the thesis.

\vspace{1mm}\noindent\textbf{Sequential TMs.}
We start by presenting complexity bounds for \emph{single-lock} TMs that satisfy sequential TM-progress.
We show that a read-only transaction in an opaque TM featured with 
weak DAP, 
weak invisible reads, ICF TM-liveness and sequential TM-progress must \emph{incrementally} validate
every next read operation. This results in a quadratic (in the size of the transaction's read
set) step-complexity lower bound.
Secondly, we prove that if the TM-correctness property is weakened to strict serializability, there exist executions
in which the tryCommit of some transaction 
must access a linear (in the size of the transaction's read set) number of distinct base objects.
We then show that expensive synchronization in TMs cannot be eliminated: even single-lock TMs must perform a 
RAW (read-after-write) or AWAR (atomic-write-after-read)
pattern~\cite{AGK11-popl}. 

\vspace{1mm}\noindent\textbf{Progressive TMs.}
We turn our focus to \emph{progressive} TM implementations which allow a transaction to be aborted only
due to read-write conflicts with concurrent transactions.
We introduce a new metric called \emph{protected data size}
that, intuitively, captures the amount of data that a transaction 
must exclusively control at some point of its execution. 
All progressive TM implementations we are aware of 
(see, \emph{e.g.}, an overview in~\cite{GK09-progressiveness})
use locks or timing assumptions to give an updating transaction exclusive access  
to all objects in its write set at some point of its execution.
For example, lock-based progressive implementations like \emph{TL}~\cite{DStransaction06} and \emph{TL2}~\cite{DSS06} 
require that a
transaction grabs all locks on its write set
before updating  the corresponding base objects.    
Our result shows that this is an inherent price to pay for
providing progressive concurrency: every committed transaction in a progressive and 
strict DAP TM implementation providing starvation-free TM-liveness
must, at some point of its execution, protect every t-object 
in its write set.

We also present a very cheap progressive opaque strict DAP TM implementation from read-write base objects
with constant expensive synchronization and constant memory stall complexities.

\vspace{1mm}\noindent\textbf{Strongly progressive TMs.}
We then prove that in any \emph{strongly progressive} strictly serializable TM
implementation that accesses the shared memory with read, write and conditional
primitives, such as compare-and-swap and
load-linked/store-conditional, 
the total number of \emph{remote memory references} (RMRs) 
that take place in an execution of a progressive TM in which $n$
concurrent processes perform transactions on a single t-object might
reach $\Omega(n \log n)$. 
The result is obtained via a reduction to an analogous lower bound for
mutual exclusion~\cite{rmr-mutex}.
In the reduction, we show that any TM with the above properties can be
used to implement a \emph{deadlock-free} mutual exclusion, employing
transactional operations on only one t-object and incurring a constant RMR overhead. 
The lower bound applies to RMRs in both the \emph{cache-coherent (CC)} and
\emph{distributed shared memory (DSM)} models, and it appears to be the first RMR complexity
lower bound for transactional memory.

We also present a constant expensive synchronization strongly progressive TM implementation from read-write base objects.
Our implementation provides \emph{starvation-free} TM-liveness, thus showing one means of circumventing the lower bound
of Rachid \emph{et al.}~\cite{tm-book} who proved the impossibility of implementing strongly progressive
strictly serializable TMs providing \emph{wait-free} TM-liveness from read-write base objects.

\vspace{1mm}\noindent\textbf{Permissive TMs.}
We conclude our study of blocking TMs by establishing a linear (in the transaction's data set size)
separation between the worst-case transaction expensive synchronization complexity of strongly progressive TMs and 
\emph{permissive} TMs that allow a transaction to abort only if committing it would violate opacity.
Specifically, we show that an execution of a transaction 
in a \emph{permissive opaque} TM implementation that provides starvation-free TM-liveness may require to perform
at least one RAW/AWAR pattern \emph{per} t-read.

\vspace{1mm}\noindent\textbf{Roadmap of Chapter~\ref{ch:p3c2}.}
Section~\ref{sec:p3c2s1}
studies ``single-lock'' TMs that provide minimal progressiveness or sequential TM-progress,
Section~\ref{sec:p3c2s2} is devoted to progressive TMs while Section~\ref{sec:p3c2s3} is on
strongly progressive TMs.
In Section~\ref{sec:perm}, we study the cost of permissive TMs that allow a transaction to abort
only if committing it would violate opacity. Finally, we present related work and open questions in Section~\ref{sec:p3c2disc}.
\section{Sequential TMs}
\label{sec:p3c2s1}
We begin with ``single-lock'', \emph{i.e.}, sequential TMs.
Our first result proves that a read-only transaction in a sequential TM featured with weak DAP
and weak invisible reads must incur the cost of validating its read set. 
This results in a quadratic (and resp., linear)  (in the size of the transaction's read
set) step-complexity lower bound if we assume opacity (and resp., strict serializability).
Secondly, we show that expensive synchronization cannot be avoided even in such sequential TMs, \emph{i.e.},
a serializable TM must perform a RAW/AWAR even when transactions are guaranteed to commit only in the absence of any concurrency.

\begin{figure*}[t]
\begin{center}
	\subfloat[$R_{\phi}(X_{i})$ must return $nv$ by strict serializability \label{sfig:readinv-1}]{\scalebox{0.6}[0.6]{\begin{tikzpicture}
\node (r1) at (3,0) [] {};
\node (r2) at (7.7,0) [] {};

\node (w1) at (-2,0) [] {};

\draw (r1) node [below] {\normalsize {$R_{\phi}(X_1) \cdots R_{\phi}(X_{i-1})$}};
\draw (r1) node [above] {\normalsize {$i-1$ t-reads}};

\draw (r2) node [above] {\normalsize {$R_{\phi}(X_i)\rightarrow nv$}};

\draw (w1) node [above] {\normalsize {$W_i(X_i,nv)$}}; 
\draw (w1) node [below] {\normalsize {$T_i$ commits}};

\begin{scope}   
\draw [|-|,thick] (0,0) node[left] {$T_{\phi}$} to (6,0);
\draw [|-|,thick] (6.5,0) node[left] {} to (9,0);
\draw [-,dotted] (0,0) node[left] {} to (9,0);
\end{scope}
\begin{scope}   
\draw [|-|,thick] (-3,0) node[left] {$T_i$} to (-1,0);
\end{scope}
\end{tikzpicture}}}
        \\
        \vspace{2mm}
	\subfloat[$T_i$ does not observe any conflict with $T_{\phi}$ \label{sfig:readinv-2}]{\scalebox{0.6}[0.6]{\begin{tikzpicture}
\node (r1) at (3,0) [] {};
\node (r3) at (12.2,0) [] {};


\node (w2) at (7.5,-2) [] {};

\draw (r1) node [below] {\small {$R_{\phi}(X_1) \cdots R_{\phi}(X_{i-1})$}};
\draw (r1) node [above] {\small {$i-1$ t-reads}};

\draw (w2) node [above] {\small {$W_{i}(X_{i},nv)$}}; 
\draw (w2) node [below] {\small {$T_{i}$ commits}};

\draw (r3) node [above] {\small {$R_{\phi}(X_{i})\rightarrow nv$}};
\draw (r3) node [below] {\small {new value}};

\begin{scope}   
\draw [|-|,thick] (0,0) node[left] {$T_{\phi}$} to (6,0);
\draw [|-|,dotted] (0,0) node[left] {$T_{\phi}$} to (13.5,0);
\draw [|-|,thick] (11,0) node[left] {} to (13.5,0);
\end{scope}
\begin{scope}   
\draw [|-|,thick] (6.5,-2) node[left] {$T_i$} to (9,-2);
\end{scope}
\end{tikzpicture}}}
	\caption{Executions in the proof of Lemma~\ref{lm:readdap}; By weak DAP, $T_{\phi}$ cannot distinguish this from the execution in Figure~\ref{sfig:readinv-1}
        \label{fig:indis}} 
\end{center}
\end{figure*}
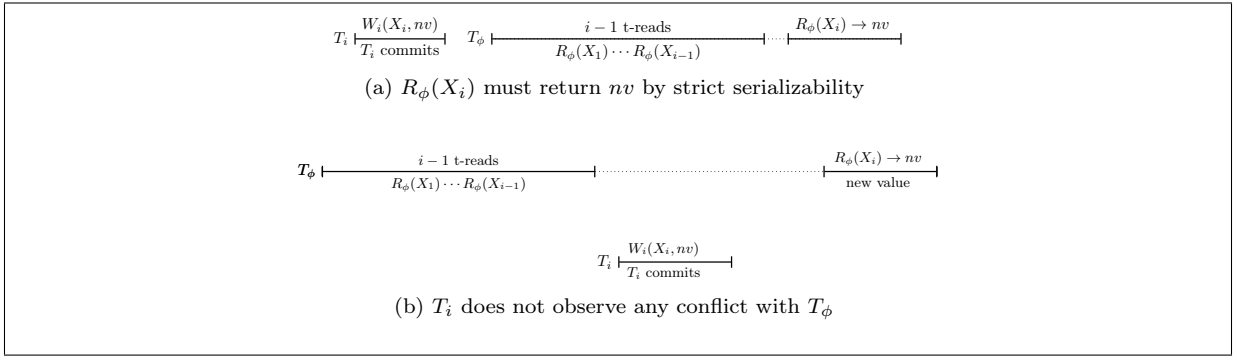
We first prove the following auxiliary lemma that will be of use in subsequent proofs.
\begin{lemma}
\label{lm:readdap}
Let $M$ be any strictly serializable, weak DAP TM implementation that provides sequential TM-progress and sequential TM-liveness.
Then, for all $i\in \mathbb{N}$, $M$ has an execution of the form $\pi^{i-1}\cdot \rho^i\cdot \alpha^i$ where,
\begin{itemize}
\item
$\pi^{i-1}$ is the complete step contention-free execution of read-only transaction $T_{\phi}$ that performs
$(i-1)$ t-reads: $\Read_{\phi}(X_1)\cdots \Read_{\phi}(X_{i-1})$,
\item
$\rho^i$ is the t-complete step contention-free execution of a transaction $T_{i}$
that writes $nv\neq v$ to $X_i$ and commits ($v$ is the initial value of $X-i$),
\item
$\alpha_i$ is the complete step contention-free execution fragment of $T_{\phi}$ that performs its $i^{th}$ t-read:
$\Read_{\phi}(X_i) \rightarrow nv_i$.
\end{itemize}
\end{lemma}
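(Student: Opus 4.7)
I would prove the lemma by constructing the execution directly for each $i \in \mathbb{N}$, rather than by induction, and then verifying that $\Read_\phi(X_i)$ returns $nv$ via an indistinguishability argument based on weak DAP.

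First, I would construct $\pi^{i-1}$ by letting $T_\phi$ run step contention-free from the initial configuration. Because $T_\phi$ encounters no concurrent transaction in this fragment, sequential TM-progress forbids any of its t-reads from returning $A_\phi$, and sequential TM-liveness guarantees that each invoked t-read returns a matching response. Hence $\pi^{i-1}$ is a well-defined complete step contention-free execution of $T_\phi$ performing its first $i-1$ t-reads on $X_1,\ldots,X_{i-1}$. Next, I would extend by $T_i$ running step contention-free from the configuration after $\pi^{i-1}$, invoking $\Write_i(X_i,nv)$ followed by $\TryC_i$.

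The key step is showing that $T_i$ commits in this extension. Since $\Dset(T_\phi) \subseteq \{X_1,\ldots,X_{i-1}\}$ is disjoint from $\Dset(T_i) = \{X_i\}$ and no other transaction participates, $T_\phi$ and $T_i$ are disjoint-access in the conflict graph of $\pi^{i-1} \cdot \rho^i$. Applying Lemma~\ref{lm:dap} (the weak-DAP lemma), $T_\phi$ and $T_i$ contend on no base object in $\pi^{i-1}\cdot\rho^i$, which means $T_i$'s view of every base object it accesses is identical to its view in the solo execution $\rho^i$ from the initial configuration. Running $T_i$ in isolation from the initial configuration, sequential TM-progress rules out an abort and sequential TM-liveness guarantees $T_i$ t-completes by committing; by indistinguishability this response carries over, so $\rho^i$ exists as claimed. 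I would then append $\alpha^i$, the step contention-free t-read $\Read_\phi(X_i)$, whose completion is again guaranteed by sequential TM-liveness.

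It remains to argue that the returned value is $nv$. Consider the ``swapped'' execution $E' = \rho^i \cdot \pi^{i-1} \cdot \alpha^i$. The same weak-DAP reasoning (applied with $\alpha$ empty, $\rho_1 = \rho^i$, $\rho_2 = \pi^{i-1}\cdot\alpha^i$, and data sets still disjoint) shows $T_\phi$ and $T_i$ do not contend on any base object in $E'$ either; hence $E'$ is a valid execution of $M$ and is indistinguishable to $T_\phi$ from $\pi^{i-1}\cdot\rho^i\cdot\alpha^i$, so $\Read_\phi(X_i)$ returns the same value in both. In $E'$, however, $T_i$ is a committed transaction whose last event precedes the first event of $T_\phi$, so $T_i \prec_{E'}^{RT} T_\phi$. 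Any strictly serializable serialization $S$ of the history of $E'$ must therefore order $T_i$ before $T_\phi$; since $T_i$ is the unique committed transaction writing to $X_i$ before $\Read_\phi(X_i)$ in $S$, legality forces $\Read_\phi(X_i) \to nv$. Transferring back to the original execution via indistinguishability completes the proof.

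The main obstacle I anticipate is the careful bookkeeping in applying Lemma~\ref{lm:dap} twice with the correct choice of $\alpha, \rho_1, \rho_2$ and verifying the disjoint-access hypothesis in both the original and swapped executions; everything else follows cleanly from sequential TM-progress, sequential TM-liveness, and strict serializability.
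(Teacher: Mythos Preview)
Your overall strategy matches the paper's: construct both orderings of $\pi^{i-1}$ and $\rho^i$, use Lemma~\ref{lm:dap} to obtain non-contention and hence indistinguishability, and invoke strict serializability in the order where $T_i$ precedes $T_\phi$ in real time to force $\Read_\phi(X_i)\to nv$. The paper simply starts from the swapped order $\rho^i\cdot\pi^{i-1}$ (where each transaction runs t-sequentially, so sequential TM-progress and TM-liveness apply directly) and then transfers to $\pi^{i-1}\cdot\rho^i\cdot\alpha^i$ via indistinguishability.

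There is, however, a concrete error in your step~4. You apply Lemma~\ref{lm:dap} to $E'=\rho^i\cdot\pi^{i-1}\cdot\alpha^i$ with $\rho_2=\pi^{i-1}\cdot\alpha^i$ and claim ``data sets still disjoint.'' They are not: once $\alpha^i$ is included, $X_i\in\Dset(T_\phi)\cap\Dset(T_i)$, so $T_\phi$ and $T_i$ are \emph{not} disjoint-access in $E'$ and the hypothesis of Lemma~\ref{lm:dap} fails. Indeed the conclusion you draw---that $T_\phi$ and $T_i$ contend on no base object in $E'$---is necessarily false, since if $T_\phi$ never touched a base object written by $T_i$ it could not return $nv$. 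The same issue affects your step~3: in $\pi^{i-1}\cdot\rho^i$, $T_\phi$ is concurrent with $T_i$ and hence not t-sequential, so sequential TM-liveness does not directly guarantee that $\alpha^i$ completes. The fix is to apply Lemma~\ref{lm:dap} only to the fragment $\rho^i\cdot\pi^{i-1}$ (equivalently $\pi^{i-1}\cdot\rho^i$), where the data sets genuinely are disjoint; non-contention there makes the configurations after the two orderings coincide, so $T_\phi$'s subsequent t-read proceeds identically from either. In $\rho^i\cdot\pi^{i-1}$, $T_\phi$ \emph{does} run t-sequentially, sequential TM-liveness and TM-progress yield a non-abort response, and strict serializability pins that response to $nv$---which is precisely how the paper argues.
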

\begin{proof}
By sequential TM-progress and sequential TM-liveness, $M$ has an execution of the form $\rho^i\cdot \pi^{i-1}$.
Since $\Dset(T_k) \cap \Dset(T_{i}) =\emptyset$ in $\rho^i\cdot \pi^{i-1}$,
by Lemma~\ref{lm:dap}, transactions $T_{\phi}$ and $T_i$ do not contend
on any base object in execution $\rho^i\cdot \pi^{i-1}$.
Thus, $\rho^i\cdot \pi^{i-1}$ is also an execution of $M$.

By assumption of strict serializability, $\rho^i\cdot \pi^{i-1} \cdot \alpha_i$ is an execution
of $M$ in which the t-read of $X_i$ performed by $T_{\phi}$ must return $nv$.
But $\rho^i \cdot \pi^{i-1} \cdot \alpha_i$ is indistinguishable to $T_{\phi}$ from
$\pi^{i-1}\cdot \rho^i \cdot \alpha_i$.
Thus, $M$ has an execution of the form $\pi^{i-1}\cdot \rho^i \cdot \alpha_i$.
\end{proof}
\subsection{A quadratic lower bound on step complexity}
\label{sec:p3c2ss1}
In this section, we present our step complexity lower bound for sequential TMs.
\begin{theorem}
\label{th:iclb}
For every weak DAP TM implementation $M$ that provides ICF TM-liveness, sequential TM-progress and uses weak invisible reads,
\begin{enumerate}
\item[(1)]
If $M$ is opaque,
for every $m\in \mathbb{N}$,
there exists an execution $E$ of $M$ 
such that some transaction $T\in \ms{txns}(E)$ performs $\Omega(m^2)$ steps, where $m=|\Rset(T_k)|$.
\item[(2)]
if $M$ is strictly serializable,
for every $m\in \mathbb{N}$,
there exists an execution $E$ of $M$
such that some transaction $T_k\in \ms{txns}(E)$
accesses at least $m-1$ distinct base objects during the executions of the $m^{th}$ t-read operation
and $\TryC_k()$, where $m=|\Rset(T_k)|$.
\end{enumerate}
\end{theorem}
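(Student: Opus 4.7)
The plan is to apply Lemma~\ref{lm:readdap} iteratively to obtain, for any $m\in\mathbb{N}$, a canonical execution in which a read-only transaction $T_\phi$ performs $m$ t-reads on pairwise distinct t-objects $X_1,\dots,X_m$, the $i^{\,th}$ of which returns the value $nv_i$ just written by a committing solo transaction $T_i$. The crux is to prove that in any such execution, the $i^{\,th}$ t-read of $T_\phi$ must access $\Omega(i)$ distinct base objects; summing over $i=1,\dots,m$ then gives the $\Omega(m^2)$ lower bound of part~(1). For part~(2), strict serializability permits the validation cost to be deferred to the final t-read together with $\TryC_k$, so a single pigeonhole argument applied only at the end of $T_k$ yields the $m-1$ distinct base-object bound.

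The per-read bound rests on an indistinguishability argument. Fix $i$ and assume, toward contradiction, that in the canonical execution $\pi^{i-1}\cdot\rho^i\cdot\alpha^i$ supplied by Lemma~\ref{lm:readdap} the fragment $\alpha^i$ accesses fewer than $i-1$ distinct base objects. Pigeonhole yields an index $j<i$ such that $\alpha^i$ avoids every base object that a suitable ``disruptive'' pair $T^\star_j, T^{\star\star}_{j,i}$ would touch: $T^\star_j$ writes $nv_j\neq v$ to $X_j$ and commits solo, and $T^{\star\star}_{j,i}$ reads $X_j$ (obtaining $nv_j$) and writes $nv$ to $X_i$, both existing as step contention-free committing executions by ICF TM-liveness and sequential TM-progress. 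Splicing $T^\star_j$ and $T^{\star\star}_{j,i}$ between $\pi^{i-1}$ and $\alpha^i$, in place of $\rho^i$, yields an execution indistinguishable to $T_\phi$ from $\pi^{i-1}\cdot\rho^i\cdot\alpha^i$, so $\alpha^i$ again returns $nv$. But now the committed transactions $T^\star_j$ and $T^{\star\star}_{j,i}$ induce the chain $T^\star_j\prec T^{\star\star}_{j,i}$ in any legal serialization; since $T^{\star\star}_{j,i}$'s write of $nv$ to $X_i$ is the one $T_\phi$ reads, $T_\phi$ must follow $T^{\star\star}_{j,i}$, and hence follow $T^\star_j$, contradicting that $T_\phi$ read the old value of $X_j$. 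Opacity is therefore violated.

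Part~(2) follows the same script but is applied only to the last t-read of some $T_k$ with $|\Rset(T_k)|=m$, taken together with $\TryC_k$. Under the assumption that these steps jointly access fewer than $m-1$ distinct base objects, pigeonhole yields an index $j<m$ out of reach, and splicing the same pair $T^\star_j,T^{\star\star}_{j,m}$ leaves $T_k$ able to commit by indistinguishability. The resulting committed projection contains $T_k$ reading the old value of $X_j$ and the new value of $X_m$, while the committed chain $T^\star_j\prec T^{\star\star}_{j,m}$ cannot be reconciled with any real-time-respecting serialization that also includes $T_k$, contradicting strict serializability.

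The principal obstacle I foresee is controlling contention between $T_\phi$ and the spliced disruptive transactions on base objects: because $T^\star_j$ and $T^{\star\star}_{j,i}$ share the t-objects $X_j$ and $X_i$ with $T_\phi$, weak DAP does not directly preclude them from contending with $T_\phi$ on base objects accessed during $\alpha^i$. The route I plan to take is to exploit weak invisible reads --- which keeps $T_\phi$'s earlier reads in $\pi^{i-1}$ from leaving traces on the ``wrong'' base objects whenever $T_\phi$ is momentarily alone --- and to apply Lemma~\ref{lm:dap} to pairs of genuinely disjoint-access transactions used as stepping stones when staging $T^\star_j$ and $T^{\star\star}_{j,i}$, so that each spliced transaction's base-object footprint can be confined to regions pigeonhole lets $\alpha^i$ avoid. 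Making this confinement precise under the comparatively weak hypotheses of the theorem, and verifying that the resulting opacity witness indeed admits no legal serialization, is where the bulk of the technical effort will lie.
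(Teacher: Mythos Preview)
Your high-level plan is right, but the splicing construction you propose is both more complicated than needed and has a genuine gap that the obstacle you flag does not quite capture.

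The paper does \emph{not} replace $\rho^i$. It keeps $\rho^i$ fixed and, for each $\ell<i$, simply \emph{inserts} a single write-only transaction $\beta^\ell$ (writing $nv_\ell$ to $X_\ell$ and committing) between $\pi^{i-1}$ and $\rho^i$. The resulting execution $\pi^{i-1}\cdot\beta^\ell\cdot\rho^i$ is shown to exist via weak invisible reads (so $\pi^{i-1}$ looks empty to $\beta^\ell$ and $\rho^i$) and weak DAP (so $\beta^\ell$ and $\rho^i$, having disjoint data sets $\{X_\ell\}$ and $\{X_i\}$, do not contend). Opacity then forces $\Read_\phi(X_i)$ in this execution to return $v$ or $A_\phi$: returning $nv$ would require serializing $T_\phi$ after $T_i$, hence after $T_\ell$ (real-time), contradicting that $T_\phi$ already read the old value of $X_\ell$. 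Since the configurations after $\pi^{i-1}\cdot\rho^i$ and $\pi^{i-1}\cdot\beta^\ell\cdot\rho^i$ differ only on base objects nontrivially touched by $\beta^\ell$, and since the $\beta^\ell$'s for different $\ell$ have pairwise disjoint footprints (again by weak DAP, as their data sets are pairwise disjoint), $T_\phi$'s $i^{th}$ t-read must access at least one base object from each of these $i-1$ disjoint sets.

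Your chain construction $T_j^\star\cdot T_{j,i}^{\star\star}$ in place of $\rho^i$ breaks down at two points. First, your pigeonhole needs the $i-1$ ``difference sets'' to be pairwise disjoint, but every $T_{j,i}^{\star\star}$ writes to $X_i$, so their footprints overlap. Second, and more fundamentally, you need the configuration after $\pi^{i-1}\cdot T_j^\star\cdot T_{j,i}^{\star\star}$ to agree with the one after $\pi^{i-1}\cdot\rho^i$ on all base objects $\alpha^i$ touches; but weak invisible reads only guarantees that the t-read \emph{operation} of $T_{j,i}^{\star\star}$ applies no nontrivial primitives --- it says nothing about how the subsequent $\Write(X_i,nv)$ and $\TryC$ behave when the transaction has a nonempty read set. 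The implementation is free to have commit record read-set information in base objects, so $T_{j,i}^{\star\star}$ and $\rho^i$ need not leave $X_i$-related base objects in the same state. The fix is simply not to replace $\rho^i$: add a disjoint writer instead, and the contention problem you worry about disappears entirely. Part~(2) then follows by the same argument applied once, at the final t-read together with $\TryC_k$.
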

\begin{proof}
For all $i\in \{1,\ldots , m\}$, let $v$ be the initial value of t-object $X_i$.

($1$) Suppose that $M$ is opaque.
Let $\pi^{m}$ denote the complete step contention-free execution of a transaction
$T_{\phi}$ that performs ${m}$ t-reads: $\Read_{\phi}(X_1)\cdots \Read_{\phi}(X_{m})$
such that for all $i\in \{1,\ldots , m \}$, $\Read_{\phi}(X_i) \rightarrow v$.

By Lemma~\ref{lm:readdap}, for all $i\in \{2,\ldots, m\}$, $M$ has an execution of the form 
$E^{i}=\pi^{i-1}\cdot \rho^i \cdot \alpha_i$.

For each $i\in \{2,\ldots, m\}$, $j\in \{1,2\}$ and $\ell \leq (i-1)$, 
we now define an execution of the form  $\mathbb{E}_{j\ell}^{i}=\pi^{i-1}\cdot \beta^{\ell}\cdot \rho^i \cdot \alpha_j^i$
as follows:
\begin{itemize}
\item
$\beta^{\ell}$ is the t-complete step contention-free execution fragment of a transaction $T_{\ell}$
that writes $nv_{\ell}\neq v$ to $X_{\ell}$ and commits
\item
$\alpha_1^i$ (and resp. $\alpha_2^i$) is the complete step contention-free execution fragment of 
$\Read_{\phi}(X_i) \rightarrow v$ (and resp. $\Read_{\phi}(X_i) \rightarrow A_{\phi}$).
\end{itemize}
\begin{claim}
\label{cl:ic2}
For all $i\in \{2,\ldots, m\}$ and $\ell \leq (i-1)$, $M$ has an execution of the form $\mathbb{E}_{1\ell}^{i}$ or 
$\mathbb{E}_{2\ell}^{i}$.
\end{claim}
\begin{proof}
For all $i \in \{2,\ldots, m\}$, $\pi^{i-1}$
is an execution of $M$.
By assumption of weak invisible reads and sequential TM-progress, $T_{{\ell}}$ must be committed in $\pi^{i-1}\cdot \rho^{\ell}$
and $M$ has an execution of the form $\pi^{i-1}\cdot \beta^{\ell}$.
By the same reasoning, since $T_i$ and $T_{\ell}$ have disjoint data sets,
$M$ has an execution of the form $\pi^{i-1}\cdot\beta^{\ell}\cdot \rho^i$.

Since the configuration after $\pi^{i-1}\cdot\beta^{\ell}\cdot \rho^i$ is quiescent,
by ICF TM-liveness, $\pi^{i-1}\cdot\beta^{\ell}\cdot \rho^i$ extended with $\Read_{\phi}(X_i)$
must return a matching response.
If $\Read_{\phi}(X_i) \rightarrow v_i$, then clearly $\mathbb{E}_{1}^{i}$
is an execution of $M$ with $T_{\phi}, T_{i-1}, T_i$ being a valid serialization
of transactions.
If $\Read_{\phi}(X_i) \rightarrow A_{\phi}$, the same serialization
justifies an opaque execution.

Suppose by contradiction that there exists an execution of $M$ such that
$\pi^{i-1}\cdot\beta^{\ell}\cdot \rho^i$ is extended with the complete execution
of $\Read_{\phi}(X_i) \rightarrow r$; $r \not\in \{A_{\phi},v\}$. 
The only plausible case to analyse is when $r=nv$.
Since $\Read_{\phi}(X_i)$ returns the value of $X_i$ updated by $T_i$, 
the only possible serialization for transactions is $T_{\ell}$, $T_i$, $T_{\phi}$; but $\Read_{\phi}(X_{\ell})$
performed by $T_k$ that returns the initial value $v$
is not legal in this serialization---contradiction.
\end{proof}
We now prove that, for all $i\in \{2,\ldots, m\}$, $j\in \{1,2\}$ and $\ell \leq (i-1)$, transaction $T_{\phi}$ must access
$(i-1)$ different base objects during the execution of $\Read_{\phi}(X_i)$ in the execution
$\pi^{i-1}\cdot \beta^{\ell}\cdot \rho^i \cdot \alpha_j^i$.

By the assumption of weak invisible reads,
the execution $\pi^{i-1}\cdot \beta^{\ell}\cdot \rho^i \cdot \alpha_j^i$
is indistinguishable to
transactions $T_{\ell}$ and $T_{i}$
from the execution ${\tilde \pi}^{i-1}\cdot \beta^{\ell}\cdot \rho^i \cdot \alpha_j^i$, where $\Rset(T_{\phi})=\emptyset$
in ${\tilde \pi}^{i-1}$.
But transactions $T_{\ell}$ and $T_{i}$ are disjoint-access in ${\tilde \pi}^{i-1}\cdot \beta^{\ell}\cdot \rho^i$ and by Lemma~\ref{lm:dap},
they cannot contend on the same base object in this execution.

Consider the $(i-1)$ different executions: 
$\pi^{i-1}\cdot\beta^{1}\cdot \rho^i$, $\ldots$, $\pi^{i-1}\cdot\beta^{i-1}\cdot \rho^i$.
For all $\ell, \ell' \leq (i-1)$;$\ell' \neq \ell$, 
$M$ has an execution of the form $\pi^{i-1}\cdot \beta^{\ell}\cdot \rho^i \cdot \beta^{\ell'}$
in which transactions $T_{\ell}$ and $T_{\ell'}$ access mutually disjoint data sets.
By weak invisible reads and Lemma~\ref{lm:dap}, the pairs of transactions $T_{\ell'}$, $T_{i}$ and $T_{\ell'}$, $T_{\ell}$
do not contend on any base object in this execution.
This implies that $\pi^{i-1}\cdot \beta^{\ell} \cdot \beta^{\ell'} \cdot \rho^i$ is an execution of $M$ in which
transactions $T_{\ell}$ and $T_{\ell'}$ each apply nontrivial primitives
to mutually disjoint sets of base objects in the execution fragments $\beta^{\ell}$ and $\beta^{\ell'}$ respectively 
(by Lemma~\ref{lm:dap}).

This implies that for any $j\in \{1,2\}$, $\ell \leq (i-1)$, the configuration $C^i$ after $E^i$ differs from the configurations
after $\mathbb{E}_{j\ell}^{i}$ only in the states of the base objects that are accessed in the fragment $\beta^{\ell}$.
Consequently, transaction $T_{\phi}$ must access at least $i-1$ different base objects
in the execution fragment $\pi_j^i$
to distinguish configuration $C^i$ from the configurations
that result after the $(i-1)$ different executions 
$\pi^{i-1}\cdot\beta^{1}\cdot \rho^i$, $\ldots$, $\pi^{i-1}\cdot\beta^{i-1}\cdot \rho^i$ respectively.

Thus, for all $i \in \{2,\ldots, m\}$, transaction $T_{\phi}$ must perform at least $i-1$ steps 
while executing the $i^{th}$ t-read in $\pi_{j}^i$ and $T_{\phi}$ itself must perform 
$\sum\limits_{i=1}^{m-1} i=\frac{m(m-1)}{2}$ steps.

($2$) Suppose that $M$ is strictly serializable, but not opaque.
Since $M$ is strictly serializable, by Lemma~\ref{lm:readdap}, it has an execution of the form 
$E=\pi^{m-1}\cdot \rho^{m} \cdot \alpha_m$.

For each $\ell \leq (i-1)$, we prove that $M$ has an execution of the form 
$E_{\ell}= \pi^{m-1}\cdot \beta^{\ell}\cdot \rho^m \cdot {\bar \alpha}^m$
where ${\bar \alpha}^m$ is the complete step contention-free execution fragment of $\Read_{\phi}(X_m)$ followed
by the complete execution of $\TryC_{\phi}$.
Indeed, by weak invisible reads, $\pi^{m-1}$ does not contain any nontrivial events
and the execution $\pi^{m-1}\cdot \beta^{\ell}\cdot \rho^m$ is indistinguishable to transactions
$T_{\ell}$ and $T_m$ from the executions ${\tilde \pi}^{m-1}\cdot \beta^{\ell}$ and ${\tilde \pi}^{m-1}\cdot\beta^{\ell} \cdot \rho^m$ 
respectively, where $\Rset(T_{\phi})=\emptyset$ in ${\tilde \pi}^{m-1}$.
Thus, applying Lemma~\ref{lm:dap}, transactions $\beta^{\ell} \cdot \rho^m$ do not contend
on any base object in the execution $\pi^{m-1}\cdot \beta^{\ell}\cdot \rho^m$. 
By ICF TM-liveness, $\Read_{\phi}(X_m)$ and $\TryC_{\phi}$ must return matching responses in the execution
fragment ${\bar \alpha}^m$ that extends $\pi^{m-1}\cdot \beta^{\ell}\cdot \rho^m$.
Consequently, for each $\ell \leq (i-1)$, $M$ has an execution of the form 
$E_{\ell}= \pi^{m-1}\cdot \beta^{\ell}\cdot \rho^m \cdot {\bar \alpha}^m$
such that transactions $T_{\ell}$ and $T_m$ do not contend on any base object.

Strict serializability of $M$ means that if $\Read_{\phi}(X_m)\rightarrow nv$ in the execution fragment ${\bar \alpha}^m$, 
then $\TryC_{\phi}$ must return $A_{\phi}$.
Otherwise if $\Read_{\phi}(X_m)\rightarrow v$ (i.e. the initial value of $X_m$), then
$\TryC_{\phi}$ may return $A_{\phi}$ or $C_{\phi}$.

Thus, as with ($1$), in the worst case, $T_{\phi}$ must access at least $m-1$ distinct base objects 
during the executions of
$\Read_{\phi}(X_m)$ and $\TryC_{\phi}$ to distinguish the configuration $C^i$ from the configurations
after the $m-1$ different executions
$\pi^{m-1}\cdot\beta^{1}\cdot \rho^m$, $\ldots$, $\pi^{m-1}\cdot\beta^{m-1}\cdot \rho^m$ respectively.
\end{proof}
\subsection{Expensive synchronization in Transactional memory cannot be eliminated}
\label{sec:p3c2ss11}
In this section, we show that serializable TMs must perform a RAW/AWAR even if they are guaranteed to commit
only when they run in the absence of any concurrency.
\begin{theorem}
\label{th:sl}
Let $M$ be a serializable TM implementation providing sequential TM-progress and sequential TM-liveness.
Then, every execution of $M$ in which a transaction running t-sequentially performs at least one 
t-read and at least one t-write contains a RAW/AWAR pattern.     
\end{theorem}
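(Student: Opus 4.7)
I would argue by contradiction. Suppose $M$ admits an execution $\pi$ of a transaction $T_1$ that runs t-sequentially (so $\pi|1 = \pi$), performs at least one t-read and at least one t-write, is t-complete, yet contains no RAW or AWAR pattern. By sequential TM-liveness and sequential TM-progress, we may further assume that $T_1$'s sole t-read is $\Read_1(X)$ returning the initial value of some t-object $X$, its sole t-write is $\Write_1(Y,v)$ with $v$ different from the initial value of $Y$, and $T_1$ commits in $\pi$. Let $w$ be the first nontrivial event of $T_1$ in $\pi$ applied to a base object $b$; let $\pi = \alpha \cdot w \cdot \beta$.

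Next I would construct an adversarial second transaction $T_2$ that performs $\Read_2(Y)$ followed by $\Write_2(X,v')$ (with $v'$ different from $X$'s initial value) and then $\TryC_2$. By sequential TM-progress and sequential TM-liveness applied after the $T_2$-free prefix $\alpha$ (which, being made entirely of trivial events of $T_1$, leaves every base object in its initial state), $M$ has an extension $\alpha \cdot \sigma$ where $\sigma$ is the complete t-sequential execution fragment of $T_2$. Moreover, since the absence of AWAR in $\pi$ means $w$ is a pure write whose \emph{response} depends only on its input (not on the current value of $b$), one can in fact construct the run $\alpha \cdot \sigma \cdot w \cdot \beta$ as a legal execution of $M$ in which $T_1$'s steps produce exactly the same responses as they do in $\pi$.

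The heart of the proof is establishing indistinguishability on both sides of the interleaving $E = \alpha \cdot \sigma \cdot w \cdot \beta$. For $T_2$: since $\alpha$ consists only of trivial events, the state of every base object at the end of $\alpha$ equals its initial state, so $\sigma$ applied after $\alpha$ is indistinguishable from $\sigma$ applied from the initial configuration; in particular, $\Read_2(Y)$ returns $Y$'s initial value and $T_2$ commits. For $T_1$: every event of $T_1$ in $\alpha$ is trivial, so its responses in $E$ are identical to those in $\pi$; for every event $e'$ of $T_1$ in $w\cdot\beta$, the no-RAW hypothesis prevents $e'$ from being a read of any base object other than $b$ after $w$ unless $T_1$ has already re-touched $b$, and the no-AWAR hypothesis makes every nontrivial event in $w\cdot\beta$ insensitive to the state before it. A careful induction on the events of $w\cdot\beta$ then shows that $T_2$'s steps in $\sigma$ either touch base objects disjoint from those accessed by $T_1$ in $w\cdot\beta$, or only affect values that $T_1$'s subsequent events do not consult for their responses; hence $T_1$ also produces the same responses in $E$ as in $\pi$ and commits.

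The execution $E$ is thus one in which both $T_1$ and $T_2$ commit, $T_1$ reads $X$'s initial value even though $T_2$ writes $X$, and $T_2$ reads $Y$'s initial value even though $T_1$ writes $Y$. Any serialization of $E$ would need $T_2$ before $T_1$ (to justify $T_1$'s read of the initial $X$) and simultaneously $T_1$ before $T_2$ (to justify $T_2$'s read of the initial $Y$), contradicting serializability of $M$. The main obstacle in filling in this outline is the third paragraph: one must pin down precisely which base-object interactions the no-RAW and no-AWAR assumptions rule out and use them to justify a commutativity-style argument showing that inserting $\sigma$ between $\alpha$ and $w\cdot\beta$ alters neither $T_1$'s nor $T_2$'s responses. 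Once this indistinguishability is secured, the serializability cycle gives the contradiction.
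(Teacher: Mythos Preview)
Your proposal is essentially the paper's proof: split $\pi$ at the first nontrivial base-object event, insert a fresh transaction $T_2$ that reads $Y$ and writes $X$ into the gap (it sees only trivial events of $T_1$, hence behaves as if t-sequential and commits), then use the absence of AWAR for the first write and the absence of RAW for the remainder to argue $T_1$ cannot distinguish the interleaved run from $\pi$ and also commits, yielding a serializability cycle. Two small points: your serialization directions are flipped (reading the initial value of $X$ forces $T_1$ \emph{before} $T_2$, not after, and symmetrically for $T_2$ and $Y$; the cycle is the same), and the paper states the key indistinguishability step more crisply than your inductive sketch---it simply observes that, absent RAWs, every read in $w\cdot\beta$ must be to a base object already written within $w\cdot\beta$, so $T_1$ only ever reads back its own writes there and is blind to $\sigma$.
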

\begin{proof}
Consider an execution $\pi$ of $M$ in which a transaction $T_1$ running t-sequentially
performs (among other events) $\Read_1(X)$, $\Write_1(Y,v)$ and $\TryC_1()$.
Since $M$ satisfies sequential TM-progress and sequential TM-liveness, $T_1$ must commit in $\pi$. 
Clearly $\pi$ must contain a write to a base object. 
Otherwise a subsequent transaction reading $Y$ would return the
initial value of $Y$ instead of the value written by $T_1$.
 
Let $\pi_w$ be the first write to a base object in $\pi$.
Thus, $\pi$ can be represented as $\pi_s\cdot\pi_w\cdot\pi_f$.

Now suppose by contradiction that $\pi$ contains neither RAW nor
AWAR patterns. 

Since $\pi_s$ contains no writes, the states of base objects in the
initial configuration and in the configuration after $\pi_s$ is performed are the same. 
Consider an execution $\pi_s\cdot\rho$ where in $\rho$, 
a transaction $T_2$ performs $\Read_2(Y)$, $\Write_2(X,1)$,
$\TryC_2()$ and commits.
Such an execution exists, since $\rho$ is indistinguishable to $T_2$
from an execution in which $T_2$ runs t-sequentially and thus $T_2$ cannot
be aborted in $\pi_s\cdot\rho$. 

Since $\pi_w$ contains no AWAR, $\pi_s\cdot\rho \cdot \pi_w$ is an execution of $M$.

Since $\pi_w\cdot\pi_f$ contains no RAWs, every read performed in
$\pi_w\cdot\pi_f$ is applied to base objects which were previously
written in  $\pi_w\cdot\pi_f$. Thus, there exists an execution
$\pi_s\cdot\rho\cdot\pi_w\cdot\pi_f$,
such that $T_1$ cannot distinguish  $\pi_s\cdot\pi_w\cdot\pi_f$ 
and $\pi_s\cdot\rho\cdot\pi_w\cdot\pi_f$.
Hence, $T_1$ commits in  $\pi_s\cdot\rho\cdot\pi_w\cdot\pi_f$.

But $T_1$ reads the initial value of $X$ and $T_2$ reads the initial value of $Y$ in
$\pi_s\cdot\rho\cdot\pi_w\cdot\pi_f$, and thus $T_1$ and $T_2$ cannot be both committed
(at least one of the committed transactions must read the value
written by the other)---a contradiction.
\end{proof}
%
%
\section{Progressive TMs}
\label{sec:p3c2s2}
We move on to the stronger (than sequential TMs) class of progressive TMs.
We introduce a new metric called \emph{protected data size}
that, intuitively, captures the number of t-objects that a transaction 
must exclusively control at some prefix of its execution. 
We first prove that any strict DAP progressive opaque TM must protect its entire write set at some point in its execution.

Secondly, we describe a constant stall, constant RAW/AWAR strict DAP opaque progressive TM that provides invisible reads
and is implemented from read-write base objects.
\subsection{A linear lower bound on the amount of protected data}
\label{sec:p3c2ss2}
Let $M$ be a progressive TM implementation providing starvation-free TM-liveness.
Intuitively, a t-object $X_j$ is protected at the end of some finite execution $\pi$ of $M$ 
if some transaction $T_0$ is about to atomically change the value of $X_j$ in its next step 
(\emph{e.g.}, by performing a compare-and-swap) or does not
allow any concurrent transaction to read $X_j$ (\emph{e.g.}, by holding a
``lock'' on $X_j$).

Formally, let $\alpha\cdot\pi$ be an execution of $M$   
such that $\pi$ is a t-sequential t-complete execution of 
a transaction $T_0$, where $\Wset(T_0)=\{X_1,\ldots,X_m\}$.
Let $u_j$ ($j=1,\ldots,m$) denote the value written by $T_0$ to t-object $X_j$ in $\pi$. 
In this section, let $\pi^t$ denote the $t$-th shortest prefix of $\pi$. 
Let $\pi^0$ denote the empty prefix.

For any $X_j\in\Wset(T_0)$, 
let $T_{j}$ denote a transaction that tries to read $X_j$ 
and commit. 
Let $E_{j}^{t}=\alpha\cdot\pi^t\cdot\rho_j^t$ denote the extension of $\alpha\cdot\pi^t$ in which    
$T_j$ runs solo until it completes.
Note that, since we only require the implementation to be starvation-free,  $\rho_j^t$ can be infinite. 

We say that $\alpha\cdot\pi^t$ is $(1,j)$-valent if the read
operation performed by $T_j$ in $\alpha\cdot\pi^t\cdot\rho_j^t$
returns $u_j$ (the value written by $T_0$ to $X_j$).
We say that $\alpha\cdot\pi^t$ is $(0,j)$-valent if the read operation performed by $T_j$ in $\alpha\cdot\pi^t\cdot\rho_j^t$
does not abort and returns an "old" value $u\neq u_j$.  
Otherwise, if the read operation of $T_j$  aborts or never returns in
$\alpha\cdot\pi^t\cdot\rho_j^t$, we say that  
$\alpha\cdot\pi^t$ is $(\bot,j)$-valent.
\begin{definition}
We say that $T_0$ \emph{protects} an object $X_j$ in
$\alpha\cdot\pi^t$, where $\pi^t$ is the $t$-th shortest prefix
of $\pi$ ($t>0$) if one of the following conditions holds:
(1) $\alpha\cdot\pi^{t}$ is $(0,j)$-valent and $\alpha\cdot\pi^{t+1}$
is $(1,j)$-valent, or 
(2) $\alpha\cdot\pi^{t}$ or  $\alpha\cdot\pi^{t+1}$ is $(\bot,j)$-valent.
\end{definition}
%
%
%
%
For \emph{strict disjoint-access parallel} progressive TM, we show that 
every transaction running t-sequentially must protect every t-object in its write set 
at some point of its execution.

We observe that the no prefix of $\pi$ can be $0$ and $1$-valent at the same time.
\begin{lemma}\label{lem:valence}
There does not exist $\pi^t$, a prefix of $\pi$, and
$i,j\in\{1,\ldots,m\}$ such that 
$\alpha\cdot\pi^t$ is both $(0,i)$-valent and  $(1,j)$-valent. 
\end{lemma}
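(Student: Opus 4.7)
The plan is to argue by contradiction. Suppose some prefix $\pi^t$ is simultaneously $(0,i)$-valent and $(1,j)$-valent, and split into cases.

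\textbf{Case $i=j$.} This is immediate: the extension $\rho_j^t$ in which $T_j$ runs solo after $\alpha\cdot\pi^t$ is uniquely determined, so the response of $T_j$'s single read of $X_j$ is a single value and cannot be simultaneously $u_j$ (for $(1,j)$-valence) and some $u\neq u_j$ (for $(0,j)$-valence).

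\textbf{Case $i\neq j$.} Here I would exploit strict DAP. Since $\Dset(T_i)=\{X_i\}$ and $\Dset(T_j)=\{X_j\}$ are disjoint, $T_i$ and $T_j$ cannot contend on any base object in any execution of $M$. In particular, any nontrivial base-object event performed by $T_j$ in $\rho_j^t$ is on a base object not accessed by $T_i$ in its subsequent solo run. Consequently, I would argue that $\alpha\cdot\pi^t\cdot\rho_j^t\cdot\rho_i^t$ is an execution of $M$ and, from the perspective of $T_i$, is indistinguishable from $\alpha\cdot\pi^t\cdot\rho_i^t$; hence $T_i$ still completes without aborting and still reads an ``old'' value $u\neq u_i$ of $X_i$.

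Now the contradiction comes from opacity (and the fact that $T_0$ is the unique transaction writing $u_i$ to $X_i$ and $u_j$ to $X_j$). In $\alpha\cdot\pi^t\cdot\rho_j^t\cdot\rho_i^t$, $T_j$ is t-complete before $T_i$ takes its first step, so $T_j\prec^{RT} T_i$, and any opaque serialization $S$ of a completion must order $T_j$ before $T_i$. Because $T_j$ reads $u_j$ and no transaction other than $T_0$ ever writes $u_j$ to $X_j$, the completion must commit $T_0$ with $T_0<_S T_j$; because $T_i$ reads an old value of $X_i$ and $T_0$ is the only writer of $u_i$ to $X_i$, legality forces $T_i <_S T_0$. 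Chaining gives $T_i <_S T_0 <_S T_j <_S T_i$, which is the desired contradiction.

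The main obstacle I anticipate is the indistinguishability step, namely justifying rigorously that interposing $\rho_j^t$ before $\rho_i^t$ does not alter any base-object state visible to $T_i$. I would handle this by a standard argument: since strict DAP forbids $T_i$ and $T_j$ from contending on any base object, every base object that $T_j$ modifies in $\rho_j^t$ is one that $T_i$ never accesses in any extension, so the projection of the configuration onto base objects read by $T_i$ after $\alpha\cdot\pi^t$ coincides with that after $\alpha\cdot\pi^t\cdot\rho_j^t$, which yields the required indistinguishability (in the spirit of Lemma~\ref{lm:dap}, and in fact simpler because strict DAP is stronger than weak DAP).
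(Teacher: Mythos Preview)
Your proposal is correct and follows essentially the same approach as the paper: both argue by contradiction, use strict DAP to conclude that $\alpha\cdot\pi^t\cdot\rho_j^t\cdot\rho_i^t$ is an execution indistinguishable to $T_i$ from $\alpha\cdot\pi^t\cdot\rho_i^t$, and then derive a violation of opacity from the resulting serialization constraints. Your version is simply more explicit---you separate out the trivial case $i=j$ and spell out the cycle $T_i <_S T_0 <_S T_j <_S T_i$, whereas the paper compresses this to ``the only possible serialization is $T_0, T_j, T_i$, but $T_i$ returns the old value of $X_i$, so the serialization is not legal.''
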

\begin{proof}
By contradiction, suppose that there exist $i,j$ and $\alpha\cdot\pi^t$ 
that is both $(0,i)$-valent and  $(1,j)$-valent. 
Since the implementation is strict DAP,
there exists an execution of $M$,
$E_{ij}^{t}=\alpha\cdot\pi^t\cdot\rho_j^t\cdot\rho_i^t$
that is indistinguishable to $T_i$ from
$\alpha\cdot\pi^t\cdot\rho_i^t$.
In $E_{ij}^{t}$, the only possible serialization is $T_0$, $T_j$, $T_i$.
But $T_i$ returns the ``old'' value of $X_i$ and, thus, the
serialization is not legal---a contradiction.
\end{proof}
If $\alpha\cdot\pi^t$ is $(0,i)$-valent (resp., $(1,i)$-valent) for
some $i$, we say that it is $0$-valent (resp., $1$-valent).
By Lemma~\ref{lem:valence}, the notions of $0$-valence and $1$-valence
are well-defined.
\begin{theorem}
\label{th:protected}
Let $M$ be a progressive, opaque and strict disjoint-access-parallel TM implementation that provides starvation-free TM-liveness.
Let $\alpha\cdot\pi$ be an execution of $M$, where $\pi$ is a
t-sequential t-complete execution of a transaction $T_0$.
Then, there exists $\pi^t$, a prefix of $\pi$, such that $T_0$
protects $|\Wset(T_0)|$ t-objects in $\alpha\cdot\pi^t$.  
\end{theorem}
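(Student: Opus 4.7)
\textbf{Proof plan for Theorem~\ref{th:protected}.}

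The plan is to identify a single ``critical prefix'' $\pi^{t^{*}-1}$ at which every t-object in $\Wset(T_0)$ is simultaneously protected. The intuition is that Lemma~\ref{lem:valence} forces a strong global constraint on valences: as soon as \emph{any} $X_j$ becomes $(1,j)$-valent, no other $X_k$ can still be $(0,k)$-valent. I will exploit this to pin down a prefix right before the very first $(1,\cdot)$-flip, where all objects are caught in protective configurations.

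First, I would justify two endpoint observations. Because $\pi$ is t-complete and $T_0$ commits at its end (otherwise there is nothing to write and the statement is vacuous), opacity together with starvation-free TM-liveness implies that $\alpha\cdot\pi$ is $(1,j)$-valent for every $j$: a solo $T_j$ extended after $\pi$ has no concurrent transaction and so, by opacity, must return $u_j$. On the other hand, $\alpha\cdot\pi^0=\alpha$ is not $(1,j)$-valent for any $j$, assuming without loss of generality that $u_j$ is distinct from the current value of $X_j$ after $\alpha$ (a unique-write assumption, standard in this setting). Hence for each $j$ the quantity $t_j := \min\{t : \alpha\cdot\pi^t \text{ is } (1,j)\text{-valent}\}$ is well-defined and strictly positive.

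Next, let $t^{*} := \min_j t_j$ and fix some $j^{*}$ achieving this minimum. Two structural facts then follow immediately:
\begin{enumerate}
\item At $\alpha\cdot\pi^{t^{*}}$, $X_{j^{*}}$ is $(1,j^{*})$-valent, so by Lemma~\ref{lem:valence} \emph{no} $X_k$ is $(0,k)$-valent there; thus each $X_k$ is either $(\bot,k)$- or $(1,k)$-valent at $\pi^{t^{*}}$.
\item By minimality of $t^{*}$, no $X_j$ is $(1,j)$-valent at $\alpha\cdot\pi^{t^{*}-1}$; thus each $X_j$ is either $(0,j)$- or $(\bot,j)$-valent at $\pi^{t^{*}-1}$.
\end{enumerate}

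The final step is a four-case check, per t-object $X_j$, on the pair of valences at $\pi^{t^{*}-1}$ and $\pi^{t^{*}}$, which by the above facts must fall into one of $(0,j)\to(\bot,j)$, $(0,j)\to(1,j)$, $(\bot,j)\to(\bot,j)$, or $(\bot,j)\to(1,j)$. The first and fourth fall under clause~(2) of the definition of ``protects'' (one of the two consecutive prefixes is $(\bot,j)$-valent); the second is precisely clause~(1); the third again falls under clause~(2). In every case $X_j$ is protected at $\pi^{t^{*}-1}$. Applying this uniformly across all $j\in\{1,\ldots,m\}$ yields that $T_0$ protects all $|\Wset(T_0)|$ t-objects at $\pi^{t^{*}-1}$, as required.

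The main obstacle is exactly what forced the use of Lemma~\ref{lem:valence}: a priori, a single step of $T_0$ could flip the valence of several t-objects simultaneously (strict DAP does not outright forbid this, since the different readers $T_j,T_k$ could coexistingly \emph{read} the same base object without contending). Lemma~\ref{lem:valence}, however, provides the global constraint one needs: it prevents the ``bad'' pattern in which the about-to-flip object is $(1,j^{*})$-valent at $\pi^{t^{*}}$ while some other object is still $(0,k)$-valent at the same prefix. This is what guarantees that the other objects are already ``locked out'' (i.e. $(\bot,k)$-valent) or themselves flipping, and hence simultaneously protected at $\pi^{t^{*}-1}$.
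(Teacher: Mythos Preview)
Your argument is correct and in fact cleaner than the paper's. The paper splits into two cases: Case~(1), where some single step takes the execution from $0$-valent to $1$-valent, handled essentially as you do; and Case~(2), the negation, handled by an induction that iteratively shrinks a ``protecting fragment'' $s+1,\ldots,t-1$ until it covers all $m$ objects. Your observation makes Case~(2) unnecessary: by choosing $t^{*}$ to be the \emph{first} prefix at which \emph{any} object is $(1,j)$-valent, you get for free that nothing is $(1,\cdot)$-valent at $t^{*}-1$ (minimality) and, via Lemma~\ref{lem:valence}, that nothing is $(0,\cdot)$-valent at $t^{*}$. The four-case check then finishes uniformly, with no induction. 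What the paper's longer argument buys is nothing additional here; your route is strictly simpler while using the same ingredients (progressiveness and opacity for the endpoint valences, Lemma~\ref{lem:valence} for the global coupling, and the definition of ``protects'' for the per-object conclusion).

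Two small caveats worth making explicit when you write this up. First, your claim that $\alpha\cdot\pi$ is $(1,j)$-valent for every $j$ relies on $T_0$ actually committing; this follows from progressiveness plus t-sequentiality of $\pi$, and you should state it. Second, the paper's definition of ``protects at $\alpha\cdot\pi^t$'' is phrased with the parenthetical ``($t>0$)'', so if $t^{*}=1$ your protecting prefix is $\pi^0$; this is a harmless boundary issue (the paper's own Case~(1) has the same edge case), but it is worth a one-line remark.
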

\begin{proof}
Let $\Wset_{T_0}=\{X_1,\ldots,X_m\}$. 
Consider two cases:

\begin{enumerate}
\item[(1)]
Suppose that 
$\pi$ has a prefix $\pi^{t}$ such that $\alpha\cdot\pi^t$ is
$0$-valent and $\alpha\cdot\pi^{t+1}$ is
$1$-valent.
By Lemma~\ref{lem:valence}, there does not exists $i$, such that $\alpha\cdot\pi^t$ is $(1,i)$-valent and $\alpha\cdot\pi^{t+1}$ is $(0,i)$-valent.
Thus, one of the following are true
\begin{itemize}
\item For every $i \in \{1,\ldots ,m\}$, $\alpha\cdot\pi^t$ is $(0,i)$-valent and $\alpha\cdot\pi^{t+1}$ is
$(1,i)$-valent
\item At least one of $\alpha\cdot\pi^t$ and $\alpha\cdot\pi^{t+1}$ is $(\bot , i)$-valent, \emph{i.e.}, the t-operation of $T_i$ aborts or never returns
\end{itemize}
In either case, $T_0$ protects $m$ t-objects in $\alpha\cdot\pi^t$.  

\item[(2)] Now suppose that such $\pi^t$ does not exists, i.e., there
  is no $i\in\{1,\ldots,m\}$ and $t\in\{0,|\pi|-1\}$ such that $E_i^t$
  exists and returns an old value, and $E_i^{t+1}$ exists and returns
  a new value. 

Suppose there exists $s,t$, $0< s+1<t$,
$S\subseteq\{1,\ldots,m\}$,
such that:
\begin{itemize}
\item  $\alpha\cdot\pi^s$ is $0$-valent,  
\item  $\alpha\cdot\pi^t$ is $1$-valent,
\item  for all $r$, $s<r<t$, and for all $i\in S$, $\alpha\cdot\pi^r$ is $(\bot,i)$-valent.
\end{itemize}
We say that $s+1,\ldots,t-1$ is a \emph{protecting fragment} for
t-objects $\{X_j | j\in S\}$.  

Since $M$ is opaque and progressive, $\alpha\cdot\pi^0=\alpha$ is
$0$-valent and $\alpha\cdot\pi$ is $1$-valent.
Thus, the assumption of Case (2) implies that for each $X_i$, there exists a
protecting fragment for $\{X_i\}$.
In particular, there exists a protecting fragment for $\{X_1\}$.

Now we proceed by induction.
Let $\pi_{s+1},\ldots,\pi_{t-1}$ be a protecting fragment for
$\{X_1,\ldots,X_{u-1}\}$ such that $u\leq m$.

Now we claim that there must be a subfragment of $s+1,\ldots,t-1$ 
that protects $\{X_1,\ldots,X_u\}$. 

Suppose not. Thus, there exists $r$, $s<r<t$, such that 
$\alpha\cdot\pi^r$ is $(0,u)$-valent or $(1,u)$-valent. 
Suppose first that $\alpha\cdot\pi^r$ is $(1,u)$-valent.
Since $\alpha\cdot\pi^s$ is $(0,i)$-valent for some $i \neq u$, by Lemma~\ref{lem:valence}
and the assumption of Case (2),
there must exist $s',t'$, $s< s'+1 < t'\le r$ such that 
\begin{itemize}
\item  $\alpha\cdot\pi^{s'}$ is $0$-valent,  
\item  $\alpha\cdot\pi^{t'}$ is $1$-valent,
\item  for all $r'$, $s'<r'<t'$, $\alpha\cdot\pi^{r'}$ is $(\bot,u)$-valent.
\end{itemize}
As a result, $s'+1,\ldots,t'-1$ is a protecting fragment for $\{X_1,\ldots,X_u\}$.  
The case when $\alpha\cdot\pi^r$ is $(0,u)$-valent is symmetric, except that now we should 
consider fragment $r,\ldots,t$ instead of $s,\ldots,r$.

Thus, there exists a subfragment of ${s+1},\ldots,{t-1}$ that protects $\{X_1,\ldots,X_u\}$. 
By induction, we obtain a protecting fragment $s''+1,\ldots,t''-1$ for $\{X_1,\dots,X_m\}$.
Thus, any prefix $\alpha\cdot\pi^r$, where $s''<r<t''$ protects exactly
$m$ t-objects.  
\end{enumerate}

In both cases, there is a prefix of $\alpha\cdot\pi$ that protects
exactly $m$ t-objects.
\end{proof}
The lower bound of Theorem~\ref{th:protected} is tight: it is matched by all progressive implementations 
we are aware of, including Algorithm~\ref{alg:ic} described in the next section. 
\subsection{A constant stall and constant expensive synchronization strict DAP opaque TM}
\label{sec:p3c2s2s1}
\begin{algorithm}[!h]
\caption{Strict DAP progressive opaque TM implementation $LP$; code for $T_k$ executed by process $p_i$}
\label{alg:ic}
\begin{algorithmic}[1]
  	\begin{multicols}{2}
  	{\footnotesize
	\Part{Shared base objects}{
		\State $v_j$, for each t-object $X_j$, allows reads and writes
		\State $r_{ij}$, for each process $p_i$ and t-object $X_j$
		\State ~~~~~single-writer bit
		\State ~~~~~allows reads and writes
		\State $L_j$, for each t-object $X_j$ 
		\State ~~~~~allows reads and writes
	}\EndPart
	\Part{Local variables}{
		\State $\ms{Rset}_k,\ms{Wset}_k$ for every transaction $T_k$;
		\State ~~~~dictionaries storing $\{X_m, v_m\}$
	}\EndPart	

	\Statex
	\Part{\Read$_k(X_j)$}{
		\If{$X_j \not\in \Rset(T_k)$}
		
		\State $[\textit{ov}_j,k_j ] := \Read(v_j)$ \label{line:read2}
		\State $\Rset(T_k) := \Rset(T_k)\cup\{X_j,[\textit{ov}_j,k_j]\}$ \label{line:rset}
		\If{$\Read(L_j)\neq 0$} \label{line:abort0}
			\Return $A_k$ \EndReturn
		\EndIf
		\If{$\lit{validate}()$} \label{line:read-validate}
			\Return $A_k$ \EndReturn
		\EndIf
		\Return $\textit{ov}_j$ \EndReturn
		
		\Else
		    
		\State $[\textit{ov}_j, \bot] :=\Rset(T_k).\lit{locate}(X_j)$
		\Return $\textit{ov}_j$ \EndReturn
		
		\EndIf
   	}\EndPart
	\Statex
	\Part{\Write$_k(X_j,v)$}{
		\State $\textit{nv}_j := v$
		\State $\Wset(T_k) := \Wset(T_k)\cup\{X_j\}$
		\Return $\ok$ \EndReturn
		
   	}\EndPart
	\Statex
%
	
	\Part{\TryC$_k$()}{
		\If{$|\Wset(T_k)|= \emptyset$}
			\Return $C_k$ \EndReturn \label{line:return}
		\EndIf
				
		\State locked := $\lit{acquire}(\Wset(T_k))$\label{line:acq} 
		\If{$\neg$ locked} \label{line:abort2} 
	 		\Return $A_k$ \EndReturn
	 	\EndIf
	 	
		\If{$\lit{isAbortable}()$} \label{line:abort3}
			\State $\lit{release}(\Wset(T_k))$ 
			\Return $A_k$ \EndReturn
		\EndIf
		\Statex
		\Comment{Exclusive write access to each $v_j$}
		\ForAll{$X_j \in \Wset(T_k)$}
	 		 \State $\Write(v_j,[\textit{nv}_j,k])$ \label{line:write}
	 	\EndFor		
		\State $\lit{release}(\Wset(T_k))$   	\label{line:rellock}	
   		\Return $C_k$ \EndReturn
   	 }\EndPart		
	
	\newpage
	\Part{Function: $\lit{release}(Q)$}{
		\ForAll{$X_j \in Q$}	
 			\State \Write$(L_{j},0)$ \label{line:wlockrelease}
		\EndFor
  		\ForAll{$X_j \in Q$}	
 			\State \Write$(r_{ij},0)$ \label{line:rel1}
		\EndFor
		\Return $ok$ \EndReturn
	}\EndPart

 	\Statex
 	\Part{Function: $\lit{acquire}(Q$)}{
   		\ForAll{$X_j \in Q$}	
			\State \Write$(r_{ij},1)$ \label{line:acq1}
		\EndFor
		\If{$\exists X_j \in Q;t\neq k : \Read(r_{tj})=1$} \label{line:lock}
			\ForAll{$X_j \in Q$}	
				\State \Write$(r_{ij},0)$
			\EndFor
			\Return $\false$ \EndReturn
		\EndIf
		\Statex
		\Comment{Exclusive write access to each $L_j$}
		\ForAll{$X_j \in Q$}
		  \State $\Write(L_j,1)$ \label{line:wlockwrite}
		\EndFor
		\Return $\true$ \EndReturn
	}\EndPart		
	 \Statex
	 
	\Part{Function: $\lit{isAbortable()}$ }{
		\If{$\exists X_j \in \Rset(T_k): X_j\not\in \Wset(T_k)\wedge \Read(L_j)\neq 0$} \label{line:valid0} \label{line:isl}
			\Return $\true$ \EndReturn
		\EndIf
		
		\If{$\lit{validate}()$} 
			\Return $\true$ \EndReturn
		\EndIf
		\Return $\false$ \EndReturn
	}\EndPart
	\Statex
	\Part{Function: $\lit{validate()}$ }{
		
		\Comment{Read validation}
		\If{$\exists X_j \in Rset(T_k)$:$[\textit{ov}_j,k_j]\neq \Read(v_j)$} \label{line:valid} 
			\Return $\true$ \EndReturn
		\EndIf
		\Return $\false$ \EndReturn
	}\EndPart
	}
	\end{multicols}
  \end{algorithmic}
\end{algorithm}
In this section, we describe a cheap progressive, opaque TM implementation $LP$ (Algorithm~\ref{alg:ic}).
Our TM $LP$, every transaction performs at most a single RAW, every t-read operation
incurs $O(1)$ memory stalls and maintains exactly one version of every
t-object at any prefix of an execution. Moreover, the implementation
is strict DAP and uses only read-write base objects.  

\vspace{1mm}\noindent\textbf{Base objects.}
For every t-object $X_j$, $LP$ maintains a base object $v_j$ that stores the \emph{value} of $X_j$.
Additionally, for each $X_j$, we maintain a bit $L_j$, which if set, indicates the presence of an updating transaction
writing to $X_j$.
Also, for every process $p_i$ and t-object $X_j$,
$LP$ maintains a \emph{single-writer bit} $r_{ij}$  to which  only $p_i$ is allowed to write.
Each of these base objects may be accessed only via read and write primitives.

\vspace{1mm}\noindent\textbf{Read operations.}
The implementation first reads the value of t-object $X_j$ from base object $v_j$ and then
reads the bit $L_j$ to detect contention with an updating transaction.
If $L_j$ is set, the transaction is aborted; if not, read validation is performed on the entire read set. If the validation fails,
the transaction is aborted. Otherwise, the implementation returns the value of $X_j$.
For a read-only transaction $T_k$, $\TryC_k$ simply returns the commit response.

\vspace{1mm}\noindent\textbf{Updating transactions.}
The $\Write_k(X,v)$ implementation by process $p_i$ simply stores the value $v$ locally, deferring the actual updates
to $\TryC_k$.
During $\TryC_k$, process $p_i$ attempts to obtain exclusive write access to every
$X_j\in \Wset(T_k)$. This is realized through the single-writer bits, which ensure that no other transaction
may write to base objects $v_j$ and $L_j$ until $T_k$ relinquishes its exclusive write access to $\Wset(T_k)$.
Specifically, process $p_i$ writes $1$ to each $r_{ij}$, then checks
that no other process $p_t$ has written $1$ to any $r_{tj}$ by executing a series of reads (incurring a single RAW).
If there exists such a process that concurrently contends on write set of $T_k$, 
for each $X_j\in \Wset(T_k)$, $p_i$ writes $0$ to $r_{ij}$ and aborts $T_k$. 
If successful in obtaining exclusive write access to $\Wset(T_k)$, $p_i$ sets
the bit $L_j$ for each $X_j$ in its write set.
Implementation of $\TryC_k$ now checks if any t-object in its read set is concurrently contended by another transaction
and then validates its read set. 
If there is contention on the read set or validation fails (indicating the presence of a conflicting
transaction), the transaction is aborted. If not, $p_i$ writes the values of the t-objects to shared memory and 
relinquishes exclusive write access to each $X_j \in \Wset(T_k)$ by writing $0$ to each of the base objects $L_j$ 
and $r_{ij}$.

\vspace{1mm}\noindent\textbf{Complexity.}
Read-only transactions do not apply any nontrivial primitives.
Any updating transaction performs at most a single RAW in the course of acquiring exclusive write access to
the transaction's write set. 
Thus, every transaction performs $O(1)$ non-overlapping RAWs in any execution. 

Recall that a transaction may write to base objects $v_j$ and $L_j$ only after obtaining exclusive write access to t-object $X_j$, 
which in turn is realized via single-writer base objects.
Thus, no transaction performs a write to any base object $b$ 
immediately after a write to $b$ by 
another transaction, \emph{i.e.}, every transaction incurs only $O(1)$ memory stalls on account of any event it performs.
The $\Read_k(X_j)$ implementation reads base objects $v_j$ and
$L_j$, followed by the validation phase in which it reads $v_k$ for
each $X_k$ in its current read set. 
Note that if the first read in the validation phase incurs a stall,
then  $\Read_k(X_j)$ aborts. It follows that each t-read incurs $O(1)$ stalls in every execution.

\vspace{1mm}\noindent\textbf{Proof of opacity.}
We now prove that $LP$ implements an opaque TM.

We introduce the following technical definition:
process $p_i$ \emph{holds a lock on $X_j$ after an execution $\pi$ of Algorithm~\ref{alg:ic}} if
$\pi$ contains the invocation of \textit{acquire($Q$)}, $X_j\in Q$ by
$p_i$ that returned \emph{true}, but does not contain a subsequent
invocation of \textit{release($Q'$)}, $X_j\in Q'$, by $p_i$ in $\pi$. 
\begin{lemma}
\label{lm:mutex}
For any object $X_j$, and any execution
$\pi$ of Algorithm~\ref{alg:ic}, there exists at most one process that \emph{holds} a lock on
$X_j$ after $\pi$. 
\end{lemma}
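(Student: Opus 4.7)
My approach is a standard contradiction argument exploiting the single-writer property of the bits $r_{ij}$. Assume for contradiction that two distinct processes $p_i$ and $p_t$ both hold the lock on $X_j$ at the end of some execution $\pi$. By the definition of ``holds a lock'', each of them has completed some invocation of $\lit{acquire}(Q)$ (with $X_j \in Q$) that returned $\true$, and neither has yet invoked a matching $\lit{release}(Q')$ with $X_j \in Q'$.

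Looking at Algorithm~\ref{alg:ic}, a successful $\lit{acquire}$ invocation by $p_i$ at line~\ref{line:acq1} must contain: (i) a write event $w_i$ setting $r_{ij}:=1$, and (ii) a later read event $e_i$ of $r_{tj}$ (executed at line~\ref{line:lock}) that returned $0$; otherwise $p_i$ would have written $0$ back to $r_{ij}$ and returned $\false$. Symmetrically, $p_t$'s successful acquire contains $w_t$ writing $r_{tj}:=1$ followed by a read $e_t$ of $r_{ij}$ returning $0$. The key observation I will exploit is that $r_{ij}$ is a single-writer bit written only by $p_i$, and the only event in $p_i$'s code that writes $0$ to $r_{ij}$ after a successful acquire lies inside $\lit{release}$ (line~\ref{line:rel1}); by our contradiction hypothesis $p_i$ has not invoked release for $X_j$ yet, so $r_{ij}$ retains value $1$ from $w_i$ through the end of $\pi$. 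The same holds for $r_{tj}$ and $p_t$.

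Now I case-split on the relative order of $w_i$ and $w_t$ in $\pi$. Without loss of generality $w_i$ precedes $w_t$. Then $e_t$, which belongs to $p_t$'s acquire and therefore occurs after $w_t$, necessarily occurs after $w_i$. Since $r_{ij}$ is set to $1$ by $w_i$ and no subsequent write to $r_{ij}$ occurs before the end of $\pi$, the read $e_t$ must return $1$. But this contradicts the fact that $p_t$'s acquire succeeded (which required $e_t$ to return $0$). The symmetric case is identical. Hence at most one process holds the lock on $X_j$ after $\pi$.

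The argument is essentially routine; there is no real obstacle, just a careful bookkeeping of which events on single-writer bits are possible between the completion of an acquire and the start of the matching release. The only subtle point I would be explicit about is that the single-writer discipline on $r_{ij}$ prevents any ``stale'' value from appearing at $e_t$: this is what makes the $2$-process flag-based mutual exclusion pattern embedded in $\lit{acquire}$ correct.
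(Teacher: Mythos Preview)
Your proof is correct and follows essentially the same approach as the paper: a contradiction argument where both processes must have written their flag $r_{\cdot j}$ to $1$ before reading the other's flag as $0$, and ordering the two writes forces the later process's read to see the earlier write. The paper's version is slightly terser (it derives the ordering from the fact that $p_i$'s read of $r_{kj}$ returned $0$, hence $p_k$'s write came after), but the underlying mutual-exclusion-via-single-writer-flags argument is identical.
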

\begin{proof}
Assume, by contradiction, that there exists an execution $\pi$ after which processes $p_i$ and $p_k$
\emph{hold} a lock on the same object, say $X_j$. 
In order to hold the lock on $X_j$, process $p_i$ writes $1$ to register
$r_{ij}$ and then checks if any other process $p_k$ has written $1$ to $r_{kj}$.  
Since the corresponding operation {\it acquire(Q)}, $X_j \in Q$
invoked by $p_i$ returns {\it true}, $p_i$ read $0$ in $r_{kj}$ in Line~\ref{line:lock}. 
But then $p_k$ also writes $1$ to $r_{kj}$ and later reads that
$r_{ij}$ is 1. 
This is because $p_k$ can write $1$ to $r_{kj}$ only after the read of
$r_{kj}$ returned $0$ to $p_i$ which is preceded by the write of $1$ to
$r_{ij}$. 
Hence, there exists an object $X_j$ such that $r_{ij}=1;i\neq k$, 
but the conditional in Line~\ref{line:lock} returns {\it true} to process $p_k$--- a contradiction. 
\end{proof}
\begin{observation}
\label{ob:write}
Let $\pi$ be any execution of Algorithm~\ref{alg:ic}. Then, any updating transaction $T_k \in \ms{txns}(\pi)$
executed by process $p_i$
writes to base object $v_j$ (in Line~\ref{line:write}) 
for some $X_j \in \Wset(T_k)$ immediately after $\pi$ \emph{iff} $p_i$ holds the lock on $X_j$ after $\pi$.
\end{observation}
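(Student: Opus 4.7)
The plan is to prove Observation~\ref{ob:write} by a direct code-trace through Algorithm~\ref{alg:ic}. I would first isolate the only place in the algorithm where a base object $v_j$ is ever written: this is Line~\ref{line:write}, which lies strictly inside $\TryC_k$, reachable only after $\lit{acquire}(\Wset(T_k))$ has returned $\true$ (on Line~\ref{line:acq}) and after $\lit{isAbortable}()$ has returned $\false$ (on Line~\ref{line:abort3}). No other procedure writes to any $v_j$, so any such write event must be issued from this specific program point.

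For the forward direction (write $\Rightarrow$ lock held), assume that the event $p_i$ is poised to perform immediately after $\pi$ is a write to $v_j$ at Line~\ref{line:write}. Walking backwards through the deterministic control flow of $\TryC_k$, this event can be enabled only if $\pi$ already contains an invocation of $\lit{acquire}(\Wset(T_k))$ by $p_i$ with $X_j \in \Wset(T_k)$ that returned $\true$, and $\pi$ does not yet contain a matching call to $\lit{release}(Q)$ with $X_j \in Q$ (since the only such $\lit{release}$ calls appear at Line~\ref{line:abort3}, after which the transaction aborts, or at Line~\ref{line:rellock}, which comes strictly after the write loop completes). By the definition of lock-holding given just before Lemma~\ref{lm:mutex}, this is exactly the condition that $p_i$ holds the lock on $X_j$ after $\pi$.

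For the backward direction (lock held $\Rightarrow$ next step is the write), suppose $p_i$ holds the lock on $X_j$ after $\pi$. Then $\pi$ contains a successful $\lit{acquire}(\Wset(T_k))$ invocation with $X_j \in \Wset(T_k)$ and no subsequent $\lit{release}$ touching $X_j$. I would argue that the set of enabled events for $p_i$ immediately after $\pi$ that correspond to $X_j$ is precisely the write at Line~\ref{line:write} on $v_j$: any other intermediate events are either writes to $v_{j'}$ for other $j' \in \Wset(T_k)$, which correspond to different iterations of the loop, or the eventual $\lit{release}$ on Line~\ref{line:rellock} that follows completion of the loop. Thus the loop iteration enabling a write to $v_j$ is well-defined and deterministic once the lock is held.

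The main obstacle is the interpretation of ``immediately after $\pi$'' in the $\lit{iff}$: taken naively as ``the very next event,'' the backward implication is trivially false, since $p_i$ may be poised at $\lit{isAbortable}$ or at a different loop iteration. The cleanest resolution is to read the observation as establishing a correspondence between the set of write events to $v_j$ that appear anywhere in an execution and the intervals during which $p_i$ holds the lock on $X_j$; both the forward direction and a slight rephrasing of the backward direction then follow by routine code-inspection, and the result will be used in conjunction with Lemma~\ref{lm:mutex} to conclude that writes to each $v_j$ are totally ordered across processes.
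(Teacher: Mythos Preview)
The paper states this as an \emph{Observation} without any proof; it is treated as self-evident from code inspection. Your code-trace argument is exactly the right approach and your forward direction is clean.

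Your diagnosis of the backward direction is also correct: read literally, ``$p_i$ holds the lock on $X_j$ after $\pi$ $\Rightarrow$ $p_i$ writes to $v_j$ immediately after $\pi$'' is false, since $p_i$ may be poised inside $\lit{isAbortable}()$, or at a different iteration of the \textbf{for} loop on Line~\ref{line:write}, or at the head of the $\lit{release}$ call on Line~\ref{line:rellock}, all while still holding the lock on $X_j$. Your proposed resolution --- reading the observation as ``every write to $v_j$ occurs within an interval where $p_i$ holds the lock on $X_j$'' --- is the only sensible one, and is in fact the only direction the paper ever needs (it is used together with Lemma~\ref{lm:mutex} to argue that writes to each $v_j$ and $L_j$ incur $O(1)$ stalls in the proof of Theorem~\ref{th:ic}). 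So your analysis is sound and more careful than the paper itself on this point.
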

\begin{lemma}
\label{lm:icopaque}
Algorithm~\ref{alg:ic} implements an opaque TM.
\end{lemma}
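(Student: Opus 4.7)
The plan is to exhibit, for every finite execution $\pi$ of $LP$, an opaque serialization $S_\pi$ by assigning each transaction a linearization point and ordering transactions accordingly; the deferred-update semantics then follows from the structure of $\TryC_k$ together with the mutual-exclusion guarantee established in Lemma~\ref{lm:mutex}.

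First I would fix linearization points. For a committed updating transaction $T_k$, let $\delta_k$ be the last event in $\TryC_k$ before the first write in Line~\ref{line:write} (i.e., right after $\lit{isAbortable}$ returns $\false$); intuitively this is the instant at which $T_k$ has simultaneously locked its entire write set, set all $L_j$ bits, and confirmed that its read set is still valid. For a committed read-only $T_k$, let $\delta_k$ be the response of its last t-read. For an aborted or incomplete $T_k$, let $\delta_k$ be the response of its last t-read that did not return $A_k$ (if $T_k$ has none, insert $T_k$ at the very beginning). The serialization $S_\pi$ orders transactions by $\delta_k$, breaking ties using the order of events in $\pi$; when $T_k$ is aborted/incomplete, its $\TryC_k\cdot A_k$ is inserted at $\delta_k$ in the completion, so that $T_k$ appears in $S_\pi$ as aborted. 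Real-time order is then immediate: if $T_k\prec^{RT}_\pi T_m$, the last event of $T_k$ precedes the first event of $T_m$ in $\pi$, and since $\delta_k$ lies within $T_k$ and $\delta_m$ lies within $T_m$, we get $\delta_k$ before $\delta_m$ in $\pi$.

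Next I would establish legality, separately for committed reads and for reads of aborted/incomplete transactions. The key auxiliary claim is: if $\Read_k(X_j)$ returns a non-$A_k$ value $\mathit{ov}_j$ in $\pi$, then $\mathit{ov}_j$ was written to $v_j$ by the most recent committed updating transaction $T_m$ writing to $X_j$ whose $\delta_m$ precedes $\delta_k$ in $\pi$. To prove this I would use two observations derived from the write/lock ordering in $\TryC$: (i) by Lemma~\ref{lm:mutex} and Observation~\ref{ob:write}, between the event in which a writer $T_m$ sets $L_j\gets 1$ (Line~\ref{line:wlockwrite}) and the event in which it resets $L_j\gets 0$ (Line~\ref{line:wlockrelease}), no other transaction can modify $v_j$; and (ii) in $\TryC_m$, the sequence ``set $L_j$ $\to$ validate $\to$ write $v_j$ $\to$ clear $L_j$'' is performed in this order. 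Therefore, if $\Read_k(X_j)$ passes Line~\ref{line:abort0} with $L_j=0$, the value it just read from $v_j$ was the committed value at that moment; and any updating transaction that modifies $v_j$ strictly between $\delta_k$ and the read event must also complete its write before $\delta_k$, contradicting the placement of $\delta_k$. The only remaining case, where a concurrent writer $T_m$ commits strictly between the read and $\delta_k$, is caught either by the validation in Line~\ref{line:read-validate} on the next t-read of $T_k$ (which would have to see the updated $v_j$ or $L_j\neq 0$) or, for committed $T_k$, by the validation inside $\lit{isAbortable}$ (Line~\ref{line:valid}) forcing $T_k$ to abort rather than commit. This argument, combined with the serialization of aborted transactions at their last successful read event, also yields legality of local serializations required by Definition~\ref{def:opacityKR}: any committed writer that commits after $\delta_k$ simply does not appear in $S_{\pi,H}^{k,X_j}$ under du-semantics since $\TryC$ of that writer has not yet begun by the read's response, so the read remains legal with respect to the prefix.

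Finally, to conclude opacity I would invoke Corollary~\ref{cr:safetytm}: since $LP$ is wait-free on individual t-operations (every loop in $\TryC_k$ and every read path is bounded), every infinite history satisfies the completeness hypothesis, so it suffices to prove du-opacity of every finite prefix, which we have just done. The main obstacle I expect is the race analysis between $\Read_k(X_j)$'s check of $L_j$ in Line~\ref{line:abort0} and a concurrent $\TryC_m$ that sets $L_j$ in Line~\ref{line:wlockwrite} after $T_k$'s read of $v_j$ but before $T_k$'s read of $L_j$; handling this case requires carefully coupling the \emph{position} of $\delta_m$ (defined to be the \emph{last} event before the writes in Line~\ref{line:write}) with the fact that $T_k$'s next event must either be a subsequent t-read that performs full validation against the updated $v_j$ or a $\TryC_k$ whose $\lit{isAbortable}$ will detect either $L_j\neq 0$ (Line~\ref{line:isl}) or a changed $v_j$ (Line~\ref{line:valid}), in either case forcing $T_k$ to abort and thus permitting placement of $T_m$ before $T_k$ in $S_\pi$ without violating legality.
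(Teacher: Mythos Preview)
Your proposal is correct and follows essentially the same approach as the paper: both construct a serialization by assigning each committed updating transaction a serialization point inside its $\TryC_k$ (the paper uses the response of $\lit{acquire}$ in Line~\ref{line:acq}, you use the point just after $\lit{isAbortable}$ returns $\false$; either works), and each read-only or aborted transaction the linearization point of its last non-aborting t-read, then argue legality via the lock discipline and the validation in Lines~\ref{line:abort0}, \ref{line:read-validate}, and \ref{line:abort3}. Your invocation of du-opacity and Corollary~\ref{cr:safetytm} is a mild detour (the paper just argues final-state opacity for each finite execution directly), but it is sound since du-opacity implies opacity.
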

\begin{proof}
Let $E$ by any finite execution of Algorithm~\ref{alg:ic}. 
Let $<_E$ denote a total-order on events in $E$.

Let $H$ denote a subsequence of $E$ constructed by selecting
\emph{linearization points} of t-operations performed in $E$.
The linearization point of a t-operation $op$, denoted as $\ell_{op}$ is associated with  
a base object event or an event performed between the invocation and response 
of $op$ using the following procedure. 

\vspace{1mm}\noindent\textbf{Completions.}
First, we obtain a completion of $E$ by removing some pending
invocations and adding responses to the remaining pending invocations
involving a transaction $T_k$ as follows:
every incomplete $\Read_k$, $\Write_k$ operation is removed from $E$;
an incomplete $\TryC_k$ is removed from $E$ if $T_k$ has not performed any write to a base object during the \emph{release}
function in Line~\ref{line:rellock}, otherwise it is completed by including $C_k$ after
$E$.

\vspace{1mm}\noindent\textbf{Linearization points.}
Now a linearization $H$ of $E$ is obtained by associating linearization points to
t-operations in the obtained completion of $E$ as follows:
\begin{itemize}
\item For every t-read $op_k$ that returns a non-A$_k$ value, $\ell_{op_k}$ is chosen as the event in Line~\ref{line:read2}
of Algorithm~\ref{alg:ic}, else, $\ell_{op_k}$ is chosen as invocation event of $op_k$
\item For every $op_k=\Write_k $ that returns, $\ell_{op_k}$ is chosen as the invocation event of $op_k$
\item For every $op_k=\TryC_k$ that returns $C_k$ such that $\Wset(T_k)
  \neq \emptyset$, $\ell_{op_k}$ is associated with the response
  of \emph{acquire} in Line~\ref{line:acq}, 
  else if $op_k$ returns $A_k$, $\ell_{op_k}$ is associated with the invocation event of $op_k$
\item For every $op_k=\TryC_k$ that returns $C_k$ such that $\Wset(T_k) = \emptyset$, 
$\ell_{op_k}$ is associated with Line~\ref{line:return}
\end{itemize}
$<_H$ denotes a total-order on t-operations in the complete sequential history $H$.

\vspace{1mm}\noindent\textbf{Serialization points.}
The serialization of a transaction $T_j$, denoted as $\delta_{T_j}$ is
associated with the linearization point of a t-operation 
performed within the execution of $T_j$.

We obtain a t-complete history ${\bar H}$ from $H$ as follows: 
for every transaction $T_k$ in $H$ that is complete, but not t-complete, 
we insert $\textit{tryC}_k\cdot A_k$ after $H$. 
 
A t-complete t-sequential history $S$ is obtained by associating serialization points to transactions in ${\bar H}$ as follows:
\begin{itemize}
\item If $T_k$ is an update transaction that commits, then $\delta_{T_k}$ is $\ell_{\TryC_k}$
\item If $T_k$ is a read-only or aborted transaction in $\bar H$,
$\delta_{T_k}$ is assigned to the linearization point of the last t-read that returned a non-A$_k$ value in $T_k$
\end{itemize}
$<_S$ denotes a total-order on transactions in the t-sequential history $S$.
\begin{claim}
\label{cl:seq}
If $T_i \prec_{H}T_j$, then $T_i <_S T_j$
\end{claim}
\begin{proof}
This follows from the fact that for a given transaction, its
serialization point is chosen between the first and last event of the transaction
implying if $T_i \prec_{H} T_j$, then $\delta_{T_i} <_{E} \delta_{T_j}$ implies $T_i <_S T_j$.
\end{proof}
\begin{claim}
\label{cl:ic1}
Let $T_k$ be any updating transaction that returns \emph{false} from the invocation of \emph{isAbortable}
in Line~\ref{line:abort3}. Then, $T_k$ returns $C_k$ within a finite number of its own steps in any extension of $E$.
\end{claim}
\begin{proof}
Observer that $T_k$ performs the write to base objects $v_j$ for every $X_j \in \Wset(T_k)$ and then invokes
\emph{release} in Lines~\ref{line:write} and \ref{line:rellock} respectively.
Since neither of these involve aborting the transaction or contain unbounded loops or waiting statements, it follows
that $T_k$ will return $C_k$ within a finite number of its steps.
\end{proof}
\begin{claim}
\label{cl:readfrom}
$S$ is legal.
\end{claim}
\begin{proof}
Observe that for every $\Read_j(X_m) \rightarrow v$, there exists some transaction $T_i$
that performs $\Write_i(X_m,v)$ and completes the event in Line~\ref{line:write} such that
$\Read_j(X_m) \not\prec_H^{RT} \Write_i(X_m,v)$.
More specifically, $\Read_j(X_m)$ returns as a non-abort response, the value of the base object $v_m$
and $v_m$ can be updated only by a transaction $T_i$ such that $X_m \in \Wset(T_i)$.
Since $\Read_j(X_m)$ returns the response $v$, the event in Line~\ref{line:read2}
succeeds the event in Line~\ref{line:write} performed by $\TryC_i$.
Consequently, by Claim~\ref{cl:ic1} and the assignment of linearization points,
$\ell_{\TryC_i} <_E \ell_{\Read_j(X_m)}$.
Since, for any updating
committing transaction $T_i$, $\delta_{T_i}=\ell_{\TryC_i}$, by the assignment of serialization points, it follows that
$\delta_{T_{i}} <_E \delta_{T_{j}}$.

Thus, to prove that $S$ is legal, it suffices to show that  
there does not exist a
transaction $T_k$ that returns $C_k$ in $S$ and performs $\Write_k(X_m,v')$; $v'\neq v$ such that $T_i <_S T_k <_S T_j$. 
Suppose that there exists a committed transaction $T_k$, $X_m \in \Wset(T_k)$ such that $T_i <_S T_k <_S T_j$.

$T_i$ and $T_k$ are both updating transactions that commit. Thus, 
\begin{center}
($T_i <_S T_k$) $\Longleftrightarrow$ ($\delta_{T_i} <_{E} \delta_{T_k}$) \\
($\delta_{T_i} <_{E} \delta_{T_k}$) $\Longleftrightarrow$ ($\ell_{\TryC_i} <_{E} \ell_{\TryC_k}$) 
\end{center}
Since, $T_j$ reads the value of $X$ written by $T_i$, one of the following is true:
$\ell_{\TryC_i} <_{E} \ell_{\TryC_k} <_{E} \ell_{\Read_j(X_m)}$ or
$\ell_{\TryC_i} <_{E} \ell_{\Read_j(X_m)} <_{E} \ell_{\TryC_k}$.
Let $T_i$ and $T_k$ be executed by processes $p_i$ and $p_k$ respectively.

Consider the case that $\ell_{\TryC_i} <_{E} \ell_{\TryC_k} <_{E} \ell_{\Read_j(X_m)}$.

By the assignment of linearization points, $T_k$ returns a response from the event in Line~\ref{line:acq} 
before the read of $v_m$ by $T_j$ in Line~\ref{line:read2}. 
Since $T_i$ and $T_k$ are both committed in $E$, $p_k$ returns \emph{true} from the event in
Line~\ref{line:acq} only after $T_i$ writes $0$ to $r_{im}$ in Line~\ref{line:rel1} (Lemma~\ref{lm:mutex}).

Recall that $\Read_j(X_m)$ checks if $X_m$ is locked by a concurrent transaction (i.e $L_j\neq 0$), 
then performs read-validation (Line~\ref{line:abort0}) before returning a matching response. 
Consider the following possible sequence of events: 
$T_k$ returns \emph{true} from the \emph{acquire} function invocation, 
sets $L_j$ to $1$ for every $X_j \in \Wset(T_k)$ (Line~\ref{line:wlockwrite}) and
updates the value of $X_m$ to shared-memory (Line~\ref{line:write}).
The implementation of $\Read_j(X_m)$ then reads the base object $v_m$ associated with $X_m$ after which
$T_k$ releases $X_m$ by writing $0$ to $r_{km}$ and finally $T_j$ performs the check in Line~\ref{line:abort0}. 
However, $\Read_j(X_m)$ is forced to return $A_j$ because $X_m \in \Rset(T_j)$ (Line~\ref{line:rset}) 
and has been invalidated since last reading its value. 
Otherwise suppose that $T_k$ acquires exclusive access to $X_m$ by writing $1$ to $r_{km}$ and returns \emph{true}
from the invocation of \emph{acquire}, updates $v_m$ in Line~\ref{line:write}), 
$T_j$ reads $v_m$, $T_j$ performs the check in Line~\ref{line:abort0} and finally $T_k$ 
releases $X_m$ by writing $0$ to $r_{km}$. 
Again, $\Read_j(X_m)$ returns $A_j$ since $T_j$ reads that $r_{km}$ is $1$---contradiction.

Thus, $\ell_{\TryC_i} <_E \ell_{\Read_j(X)} <_{E} \ell_{\TryC_k}$.

We now need to prove that $\delta_{T_{j}}$ indeed precedes $\ell_{\TryC_k}$ in $E$.

Consider the two possible cases:
\begin{itemize}
\item
Suppose that $T_j$ is a read-only or aborted transaction in $\bar H$. 
Then, $\delta_{T_j}$ is assigned to the last t-read performed by $T_j$ that returns a non-A$_j$ value. 
If $\Read_j(X_m)$ is not the last t-read performed by $T_j$ that returned a non-A$_j$ value, 
then there exists a $\Read_j(X_z)$ performed by $T_j$ such that 
$\ell_{\Read_j(X_m)} <_{E} \ell_{\TryC_k} <_E \ell_{\Read_j(X_z)}$.
Now assume that $\ell_{\TryC_k}$ must precede $\ell_{\Read_j(X_z)}$ to obtain a legal $S$.
Since $T_k$ and $T_j$ are concurrent in $E$, we are restricted to the case that
$T_k$ performs a $\Write_k(X_z,v)$ and $\Read_j(X_z)$ returns $v$.
However, we claim that this t-read of $X_z$ must abort by performing the checks in Line~\ref{line:abort0}.
Observe that $T_k$ writes $1$ to $L_m$, $L_z$ each (Line~\ref{line:wlockwrite}) and 
then writes new values to base objects $v_m$, $v_z$ (Line~\ref{line:write}).
Since $\Read_j(X_z)$ returns a non-$A_j$ response, $T_k$ writes $0$ to $L_z$ before the read
of $L_z$ by $\Read_j(X_z)$ in Line~\ref{line:abort0}.
Thus, the t-read of $X_z$ would return $A_j$ (in Line~\ref{line:read-validate} after validation of the read set since $X_m$
has been updated---
contradiction to the assumption that it the last t-read by $T_j$ to return a non-$A_j$ response.
\item
Suppose that $T_j$ is an updating transaction that commits, then $\delta_{T_j}=\ell_{\TryC_j}$ which implies that
$\ell_{\Read_j(X_m)} <_{E} \ell_{\TryC_k} <_E \ell_{\TryC_j}$. Then, $T_j$ must necessarily perform the checks
in Line~\ref{line:abort3} and read that $L_m$ is $1$. 
Thus, $T_j$ must return $A_j$---contradiction to the assumption that $T_j$ is a committed transaction.
\end{itemize}
\end{proof}
The conjunction of Claims~\ref{cl:seq} and \ref{cl:readfrom} establish that Algorithm~\ref{alg:ic} is opaque.
\end{proof}
We can now prove the following theorem:
\begin{theorem}
\label{th:ic}
Algorithm~\ref{alg:ic} describes a progressive, opaque and strict DAP TM implementation $LP$ that provides
wait-free TM-liveness, uses invisible reads, uses only read-write base
objects, and for every execution $E$ and transaction $T_k\in\txns(E)$:
\begin{itemize}
\item 
$T_k$ performs at most a single RAW, and 
\item every t-read operation invoked by $T_k$
incurs $O(1)$ memory stalls in $E$, and
\item
every complete t-read operation invoked by $T_k$ performs $O(|\Rset(T_k)|)$ steps in $E$.
\end{itemize}
\end{theorem}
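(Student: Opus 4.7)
The plan is to handle each claim in Theorem~\ref{th:ic} by direct inspection of Algorithm~\ref{alg:ic}, relying on Lemma~\ref{lm:icopaque} for opacity and on Lemma~\ref{lm:mutex} (and Observation~\ref{ob:write}) for the mutual-exclusion guarantees of the single-writer bits $r_{ij}$. First I would dispatch the structural claims: only read and write primitives appear in the pseudocode, so the implementation uses only read-write base objects; and since $\Read_k$ writes only to the local $\Rset(T_k)$ and reads only base objects ($v_j$, $L_j$, and $v_m$ during validation), no nontrivial primitive is ever applied within a t-read, establishing invisible reads. Wait-free TM-liveness is also immediate from inspection: there are no unbounded loops (the only loops iterate over $\Wset(T_k)$ or $\Rset(T_k)$, both finite) and no waiting statements, so every t-operation returns within a bounded number of its own steps.

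For progressiveness, I would enumerate the abort points of the algorithm and exhibit a concurrent conflicting transaction at each: the aborts in Lines~\ref{line:abort0} and~\ref{line:read-validate} in $\Read_k(X_j)$ occur only because some other transaction has set $L_j$ (hence is performing a write to $X_j$ inside its \emph{acquire}/\emph{release} bracket) or because a base object $v_m$ read earlier by $T_k$ has been overwritten by a concurrent committing updater; the aborts in Lines~\ref{line:abort2} and~\ref{line:abort3} of $\TryC_k$ arise, respectively, because another process concurrently set its $r_{tj}$ bit for some $X_j\in\Wset(T_k)$ (a write–write conflict) or because the \emph{isAbortable} check found $L_j\neq 0$ for some $X_j\in\Rset(T_k)$ or a stale read value (a read–write conflict). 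Strict DAP is even simpler: every base object is indexed by a single t-object ($v_j$, $L_j$) or by a process and a t-object ($r_{ij}$), so two transactions $T$ and $T'$ with $\Dset(T)\cap\Dset(T')=\emptyset$ access disjoint sets of base objects and therefore cannot contend.

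For the three complexity bounds, I would argue as follows. A read-only transaction performs no nontrivial base-object events, and an updating transaction performs its writes to $r_{ij}$ (Line~\ref{line:acq1}), then reads the other $r_{tj}$ (Line~\ref{line:lock}); this is the single non-overlapping RAW pattern of the transaction, since all subsequent events on $L_j$ and $v_j$ occur while $T_k$ holds the lock on $X_j$ (Observation~\ref{ob:write}) and therefore do not form a new RAW with any prior write of $T_k$. For memory stalls, Lemma~\ref{lm:mutex} gives that at any configuration at most one process is poised to write to $v_j$ or $L_j$, so a t-read incurs $O(1)$ stalls on its initial reads of $v_j$ and $L_j$; and the validation loop either aborts on its first stalled read (since a stall there signals that some $v_m$ in the read set has been updated) or incurs no stalls at all. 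Finally, the step count of $\Read_k(X_j)$ consists of a constant number of base-object accesses plus the validation loop over $\Rset(T_k)$, giving $O(|\Rset(T_k)|)$ steps.

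The only mildly subtle step will be the single-RAW claim: one must argue carefully that writes performed during \emph{release} and the write of $v_j$ in Line~\ref{line:write} do not pair with the prior reads of $L_m$/$v_m$ inside \emph{isAbortable} to form an additional non-overlapping RAW, which follows from the ordering constraint that these post-\emph{acquire} writes occur after all reads of the validation phase and target distinct base objects than any earlier write of $T_k$.
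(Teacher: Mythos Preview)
Your proposal follows essentially the same structure as the paper's proof: opacity via Lemma~\ref{lm:icopaque}, wait-freedom by absence of unbounded loops, progressiveness by enumerating the abort points, strict DAP by a per-base-object case analysis, invisible reads by inspection, and the complexity bounds via Lemma~\ref{lm:mutex} and direct step counting.

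One correction to your final paragraph: a RAW pattern is a write \emph{followed by} a read of a different base object, so the writes in Line~\ref{line:write} and in \emph{release} cannot pair with the \emph{preceding} reads inside \emph{isAbortable} to form a RAW---the direction of your worry is inverted. Moreover, ``holding the lock'' (Observation~\ref{ob:write}) is irrelevant here, since the RAW metric is purely syntactic on the event sequence of $T_k$ and has nothing to do with exclusive access. The paper handles this point more tersely: it simply observes that the block of writes to the $r_{ij}$ in Line~\ref{line:acq1} followed by the block of reads of the $r_{tj}$ in Line~\ref{line:lock} constitutes the single (multi-)RAW, and does not attempt any further justification.
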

\begin{proof}
\textit{(TM-liveness and TM-progress)}
Since none of the implementations of the t-operations in Algorithm~\ref{alg:ic}
contain unbounded loops or waiting statements, every t-operation $op_k$ returns a matching response
after taking a finite number of steps in every execution. Thus, Algorithm~\ref{alg:ic}
provides wait-free TM-liveness.

To prove progressiveness, we proceed by enumerating the cases under which a transaction $T_k$ may be aborted.
\begin{itemize}
\item
Suppose that there exists a $\Read_k(X_j)$ performed by $T_k$ that returns $A_k$
from Line~\ref{line:abort0}.
Thus, there exists a process $p_t$ executing a transaction
that has written $1$ to $r_{tj}$ in Line~\ref{line:acq1}, but has not yet written
$0$ to $r_{tj}$ in Line~\ref{line:rel1} or
some t-object in $\Rset(T_k)$ has been updated since its t-read by $T_k$.
In both cases, there exists a concurrent transaction performing a 
t-write to some t-object in $\Rset(T_k)$.
\item
Suppose that $\TryC_k$ performed by $T_k$ that returns $A_k$
from Line~\ref{line:abort2}.
Thus, there exists a process $p_t$ executing a transaction
that has written $1$ to $r_{tj}$ in Line~\ref{line:acq1}, but has not yet written
$0$ to $r_{tj}$ in Line~\ref{line:rel1}. Thus, $T_k$ encounters step-contention with another
transaction that concurrently attempts to update a t-object in $\Wset(T_k)$.
\item
Suppose that $\TryC_k$ performed by $T_k$ that returns $A_k$
from Line~\ref{line:abort3}.
Since $T_k$ returns $A_k$ from Line~\ref{line:abort3} for the same reason it
returns $A_k$ after Line~\ref{line:abort0}, the proof follows.
\end{itemize}
\textit{(Strict disjoint-access parallelism)}
Consider any execution $E$ of Algorithm~\ref{alg:ic} and let $T_i$
and $T_j$ be any two transactions that participate in $E$ and access the same
base object $b$ in $E$.
\begin{itemize}
\item
Suppose that $T_i$ and $T_j$ contend on base object $v_j$ or $L_j$.
Since for every t-object $X_j$, there exists distinct base objects $v_j$ and $L_j$,
$T_j$ and $T_j$ contend on $v_j$ only if $X_j \in \Dset(T_i) \cap \Dset(T_j)$.
\item
Suppose that $T_i$ and $T_j$ contend on base object $r_{ij}$.
Without loss of generality, let $p_i$ be the process executing 
transaction $T_i$; $X_j \in \Wset(T_i)$ that writes $1$ to $r_{ij}$ in Line~\ref{line:acq1}.
Indeed, no other process executing a transaction that writes to $X_j$ can write to $r_{ij}$.
Transaction $T_j$ reads $r_{ij}$ only if $X_j \in \Dset(T_j)$ as evident from the accesses performed
in Lines~\ref{line:acq1}, \ref{line:lock}, \ref{line:rel1}, \ref{line:isl}.
\end{itemize}
Thus, $T_i$ and $T_j$ access the same base object only if they access a common t-object.

\textit{(Opacity)}
Follows from Lemma~\ref{lm:icopaque}.

\textit{(Invisible reads)}
Observe that read-only transactions do not perform any nontrivial events.
Secondly, in any execution $E$ of Algorithm~\ref{alg:ic}, and any transaction $T_k\in \ms{txns}(E)$,
if $X_j\in \Rset(T_k)$, $T_k$ does not write to any of the base objects associated with $X_j$ nor
write any information that reveals its read set to other transactions.

\textit{(Complexity)} 
Consider any execution $E$ of Algorithm~\ref{alg:ic}.
\begin{itemize}
\item
For any $T_k \in \ms{txns}(E)$, each $\Read_k$ only applies trivial primitives in $E$ while $\TryC_k$ simply
returns $C_k$ if $\Wset(T_k)=\emptyset$. Thus, Algorithm~\ref{alg:ic} uses invisible reads.
\item
Any read-only transaction $T_k \in \ms{txns}(E)$ not perform any RAW or AWAR.
An updating transaction $T_k$ executed by process $p_i$ performs a sequence of writes (Line~\ref{line:acq1}
to base objects $\{r_{ij}\}:X_j \in \Wset(T_k)$, followed by a sequence of reads to base objects 
$\{r_{tj}\}:t\in \{1,\ldots , n\}, X_j \in \Wset(T_k)$
(Line~\ref{line:lock}) thus incurring a single multi-RAW.
\item
Let $e$ be a write event performed by some transaction $T_k$ executed by process $p_i$ in $E$ on 
base objects $v_j$ and $L_j$ (Lines~\ref{line:write} and \ref{line:wlockwrite}).
Any transaction $T_k$ performs a write to $v_j$ or $L_j$ only after $T_k$ writes $0$ to $r_{ij}$, for every $X_j\in \Wset(T_k)$.
Thus, by Lemmata~\ref{lm:mutex} and \ref{lm:icopaque}, it follows that
events that involve an access to either of these base objects incurs $O(1)$ stalls.

Let $e$ be a write event on base object $r_{ij}$ (Line~\ref{line:acq1}) while writing to t-object $X_j$.
By Algorithm~\ref{alg:ic}, no other process can write to $r_{ij}$.
It follows that any transaction $T_k \in \ms{txns}(E)$ incurs $O(1)$ memory stalls 
on account of any event it performs in $E$.
%
Observe that any t-read $\Read_k(X_j)$ only accesses base objects $v_j$, $L_j$ and other value base objects in $\Rset(T_k)$.
But as already established above, these are $O(1)$ stall events. Hence, every t-read operation
incurs $O(1)$-stalls in $E$.
\end{itemize}
\end{proof}
The following corollary follows from Theorems~\ref{th:ic} and \ref{th:iclb}.
\begin{corollary}
\label{cr:pg1}
Let $M$ be any weak DAP progressive opaque TM implementation providing ICF TM-liveness and weak invisible reads.
Then, for every execution $E$ and each read-only transaction $T_k\in \ms{txns}(E)$, $T_k$
performs $\Theta (m^2)$ steps in $E$, where $m=|\Rset_E(T_k)|$.
\end{corollary}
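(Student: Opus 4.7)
The corollary is a direct consequence of pairing the lower bound of Theorem~\ref{th:iclb}(1) with the upper bound furnished by the implementation $LP$ in Theorem~\ref{th:ic}. The plan is therefore to verify that the hypotheses line up in both directions and to convert the per-t-read cost of $LP$ into the quadratic aggregate cost claimed.

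For the $\Omega(m^2)$ direction, I would first observe that every progressive TM is in particular minimally progressive, i.e., satisfies sequential TM-progress (this is the corollary \textit{Minimal progressiveness~$\ll$~Progressiveness~$\ll$~Strong progressiveness} from the TM model chapter). Consequently, any $M$ satisfying the hypotheses of the corollary also satisfies the hypotheses of Theorem~\ref{th:iclb}(1): it is weak DAP, opaque, provides ICF TM-liveness, uses weak invisible reads, and now also has sequential TM-progress. Applying that theorem with read-set size $m$ yields an execution $E$ of $M$ in which some read-only transaction $T_k$ with $|\Rset_E(T_k)|=m$ performs $\Omega(m^2)$ steps, giving the lower bound.

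For the $O(m^2)$ direction, I would use Algorithm~\ref{alg:ic} ($LP$) as the witness. The hypothesis matching to invoke Theorem~\ref{th:ic} is routine: $LP$ is progressive and opaque; it is strict DAP, which by the chain \emph{Strict DAP $\ll$ RW DAP $\ll$ Weak DAP} implies weak DAP; it provides wait-free TM-liveness, which by the observations of Section~\ref{sec:tm-liveness} implies ICF TM-liveness; and it uses invisible reads in the full sense, hence weak invisible reads. Theorem~\ref{th:ic} guarantees that every complete t-read invoked by a transaction $T_k$ of $LP$ performs $O(|\Rset(T_k)|)$ steps. For a read-only transaction that performs $m$ t-reads, the $i$-th t-read sees a read set of size $i$ when the validation loop of Line~\ref{line:read-validate} is executed, so it performs $O(i)$ steps. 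Summing, the entire transaction performs
\[
\sum_{i=1}^{m} O(i) \;=\; O(m^2)
\]
steps in the worst case, establishing the matching upper bound.

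There is no substantive obstacle here; both halves are essentially bookkeeping. The only point worth being careful about is confirming the implication \emph{progressive $\Rightarrow$ sequential TM-progress} when invoking Theorem~\ref{th:iclb}(1), since the latter is stated under the weaker progress hypothesis, and symmetrically verifying that $LP$'s stronger liveness, DAP, and invisible-reads properties all subsume the corollary's hypotheses so that $LP$ is a legitimate witness implementation. Combining the two bounds gives the tight $\Theta(m^2)$ step complexity for read-only transactions in this class of TMs.
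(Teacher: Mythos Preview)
Your proposal is correct and takes essentially the same approach as the paper, which simply states that the corollary follows from Theorems~\ref{th:ic} and~\ref{th:iclb}. Your explicit verification that the hypotheses align (progressive $\Rightarrow$ sequential TM-progress for the lower bound; strict DAP, wait-free TM-liveness, and invisible reads of $LP$ each imply the weaker properties required by the corollary for the upper bound) is exactly the bookkeeping that the paper leaves implicit.
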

\begin{algorithm}[t]
\caption{Strict DAP progressive strictly serializable TM implementation; code for $T_k$ executed by process $p_i$}
\label{alg:icss}
\begin{algorithmic}[1]
  	{\footnotesize
	\Part{\Read$_k(X_j)$}{
		\If{$X_j \not\in \Rset(T_k)$}
		
		\State $[\textit{ov}_j,k_j ] := \Read(v_j)$
		\State $\Rset(T_k) := \Rset(T_k)\cup\{X_j,[\textit{ov}_j,k_j]\}$ 
		\If{$\Read(L_j)\neq 0$} 
			\Return $A_k$ \EndReturn
		\EndIf
		\Return $\textit{ov}_j$ \EndReturn
		
		\Else
		    
		\State $[\textit{ov}_j, \bot] :=\Rset(T_k).\lit{locate}(X_j)$
		\Return $\textit{ov}_j$ \EndReturn
		
		\EndIf
   	}\EndPart
   	}
  \end{algorithmic}
\end{algorithm}
Similarly, we can prove an almost matching upper bound for Theorem~\ref{th:iclb} for strictly serializable progressive TMs.

Consider Algorithm~\ref{alg:icss} that is a simplification of the opaque progressive TM in Algorithm~\ref{alg:ic}:
we remove the validation performed in the implementation of a t-read, \emph{i.e.}, Line~\ref{line:read-validate}
in Algorithm~\ref{alg:ic}; otherwise, the two algorithms are identical.
It is easy to see this results in a strictly serializable (but not opaque) TM implementation. Thus,
\begin{theorem}
Algorithm~\ref{alg:icss} describes a progressive, strictly serializable and strict DAP TM implementation that provides
wait-free TM-liveness, uses invisible reads, uses only read-write base
objects, and for every execution $E$ and transaction $T_k\in\txns(E)$:
every t-read operation invoked by $T_k$ performs $O(1)$ steps and $\TryC_k$ performs $O(|\Rset(T_k)|)$ steps in $E$.
\end{theorem}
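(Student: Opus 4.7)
The plan is to closely mirror the proof of Theorem~\ref{th:ic}, since Algorithm~\ref{alg:icss} differs from Algorithm~\ref{alg:ic} only by removing the validation call in Line~\ref{line:read-validate} inside the t-read operation. Consequently, most components of the earlier proof carry over unchanged. Progressiveness follows because a transaction can still abort only from the same code lines as before ($\Read_k$ aborting when $\Read(L_j) \neq 0$, and $\TryC_k$ aborting via $\lit{acquire}$ or $\lit{isAbortable}$), each of which witnesses a concurrent conflicting transaction by the same case analysis. Strict DAP is unaffected, because the set of base objects touched by a transaction is determined by its data set exactly as in Algorithm~\ref{alg:ic}. Wait-free TM-liveness follows from the absence of unbounded loops and waiting statements, and invisible reads hold because t-reads apply only trivial primitives and never write to base objects associated with t-objects in the read set.

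For strict serializability, I would adapt the linearization/serialization assignment used in the proof of Lemma~\ref{lm:icopaque}: the linearization point of each non-aborting t-read is the event reading $v_j$ (Line~\ref{line:read2}); for an updating committing $\TryC_k$ it is the successful response of \emph{acquire}; the serialization point of a committed updating transaction is $\ell_{\TryC_k}$, and for read-only or aborted transactions it is the linearization point of the last t-read returning a non-$A_k$ response. The crucial observation is that strict serializability, unlike opacity, only constrains the committed transactions to form a legal t-sequential order respecting real time. For a committed updating transaction, the check performed inside $\lit{isAbortable}$ (which still calls $\lit{validate}$) ensures that at its serialization point its read set is consistent with the values installed by previously-serialized committed transactions, so the $\lit{readfrom}$-style argument of Claim~\ref{cl:readfrom} goes through. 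For a committed read-only transaction, the $\Read(L_j)$ guard at each t-read together with the protocol's lock discipline (Lemma~\ref{lm:mutex} and Observation~\ref{ob:write}) lets us order it at the linearization point of its last t-read and argue that no intervening committed updating transaction can have installed a value that invalidates its reads at that point in $E$.

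For the complexity bounds, direct inspection suffices. Each invocation of $\Read_k(X_j)$ now performs at most one read of $v_j$, one read of $L_j$, and purely local bookkeeping on $\Rset(T_k)$, giving $O(1)$ base-object steps; removing the $\lit{validate}$ call eliminates the only loop over the read set inside a t-read. On the other hand, $\TryC_k$ still invokes $\lit{isAbortable}$, which iterates over $\Rset(T_k)$ both in the $L_j$-scan (Line~\ref{line:valid0}) and inside $\lit{validate}$ (Line~\ref{line:valid}); the remaining work in $\TryC_k$ (acquire, the writes in Line~\ref{line:write}, and release) touches only the write set. This yields the claimed $O(|\Rset(T_k)|)$ bound for $\TryC_k$.

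The main obstacle I anticipate is the legality argument for committed read-only transactions in the strict-serializability proof: without per-read validation, a reading transaction may in principle observe values written by a non-serialized updating transaction only partially. I would handle this by a case analysis mirroring Claim~\ref{cl:readfrom}, showing that whenever $\Read_k(X_j)$ of a committed $T_k$ returns a value written by some updating committed $T_i$, any committing $T_m$ with $X_j \in \Wset(T_m)$ that would have to be placed between $T_i$ and $T_k$ must have either (i) held the lock on $X_j$ when $T_k$ read $L_j$, causing $T_k$ to abort rather than commit, or (ii) been serialized after $T_k$'s last read, so that it does not disturb the legality of $T_k$'s reads in the constructed sequential order.
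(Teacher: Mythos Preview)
The paper gives no formal proof of this theorem; it simply asserts before the statement that ``It is easy to see this results in a strictly serializable (but not opaque) TM implementation.'' Your proposal is therefore already more detailed than what the paper provides, and your handling of progressiveness, strict DAP, wait-free TM-liveness, invisible reads, and the step-complexity counts is correct and follows the argument for Theorem~\ref{th:ic}.

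However, the obstacle you anticipate for committed read-only transactions is real, and the case analysis you sketch does not resolve it. Consider an execution in which a read-only $T_1$ performs $\Read_1(X_1)$ returning the initial value; then an updating $T_2$ with $\Wset(T_2)=\{X_1,X_2\}$ runs its entire $\TryC_2$ to completion (acquire, write $v_1$ and $v_2$, release); then $T_1$ performs $\Read_1(X_2)$, which reads the new value and sees $L_2=0$, so it does not abort; finally $T_1$ invokes $\TryC_1$, which returns $C_1$ immediately since $\Wset(T_1)=\emptyset$ (Line~\ref{line:return} is inherited unchanged from Algorithm~\ref{alg:ic}). Both transactions are committed, but $T_1$ cannot be serialized before $T_2$ (it read the new value of $X_2$) nor after $T_2$ (it read the old value of $X_1$). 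Neither of your options (i) and (ii) excludes this scenario: $T_2$ holds no lock when $T_1$ reads $L_2$, and $T_2$'s serialization point (the response of \emph{acquire}) lies strictly before $T_1$'s last t-read, so $T_2$ is not ``serialized after $T_k$'s last read.''

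This is not merely a flaw in your adaptation of Claim~\ref{cl:readfrom}: Algorithm~\ref{alg:icss} as literally described admits the execution above and is therefore not strictly serializable. The fix that is consistent with the stated $O(|\Rset(T_k)|)$ bound on $\TryC_k$---and with the intended role of the theorem in Corollary~\ref{cr:pg2}---is to have $\TryC_k$ invoke $\lit{isAbortable}()$ (and hence $\lit{validate}()$) even when $\Wset(T_k)=\emptyset$, returning $A_k$ on failure. With that modification your serialization-point argument goes through: validation at commit time for read-only transactions rules out precisely the interleaving above, and the rest of your proof sketch then mirrors the paper's proof of Theorem~\ref{th:ic} as intended.
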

\begin{corollary}
\label{cr:pg2}
Let $M$ be any weak DAP progressive strictly serializable TM implementation providing ICF TM-liveness and weak invisible reads.
Then, for every execution $E$ and each read-only transaction $T_k\in \ms{txns}(E)$, each $\Read_k$ performs
$O(1)$ steps and $\TryC_k$ performs $\Theta (m)$ steps in $E$, where $m=|\Rset_E(T_k)|$.
\end{corollary}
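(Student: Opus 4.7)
The plan is to combine the matching upper bound provided by Algorithm~\ref{alg:icss} (and the theorem preceding the corollary) with the lower bound of Theorem~\ref{th:iclb}(2). Together these pin down the optimal worst-case complexity profile of a read-only transaction in any weak DAP progressive strictly serializable TM satisfying invisible reads and ICF TM-liveness.

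For the upper bound, I would invoke the theorem immediately preceding the corollary, which already establishes that Algorithm~\ref{alg:icss} is a weak DAP, progressive, strictly serializable TM with invisible reads and ICF TM-liveness in which each $\Read_k$ performs $O(1)$ steps (essentially two shared-memory reads, one of $v_j$ and one of $L_j$, followed by a local update of $\Rset(T_k)$) and $\TryC_k$ performs $O(|\Rset(T_k)|) = O(m)$ steps (scanning the read set for validation after acquiring write locks). This witnesses that the complexity profile claimed by the corollary is achievable.

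For the matching lower bound on $\TryC_k$, I would apply Theorem~\ref{th:iclb}(2): for every $m\in\mathbb{N}$, there exists an execution $E$ containing a transaction $T_k$ with $|\Rset_E(T_k)|=m$ that must access at least $m-1$ distinct base objects during the combined execution of its $m$-th t-read and $\TryC_k$. Since Algorithm~\ref{alg:icss} already shows that each t-read can be implemented with a constant number of base-object accesses, the $\Omega(m)$ lower bound on the combined segment must be borne by $\TryC_k$ in the worst case; i.e., for any implementation whose reads are $O(1)$, $\TryC_k$ must take $\Omega(m)$ steps. Combined with the $O(m)$ upper bound, this yields $\Theta(m)$ for $\TryC_k$.

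The main subtlety is the attribution of the lower bound: Theorem~\ref{th:iclb}(2) only bounds the sum of the steps performed by the $m$-th t-read and $\TryC_k$, so one must argue that the cost is necessarily concentrated in $\TryC_k$ rather than split across both. The upper bound from Algorithm~\ref{alg:icss} supplies exactly this attribution by guaranteeing an $O(1)$ t-read, so that the unavoidable $\Omega(m)$ accesses identified in Theorem~\ref{th:iclb}(2) are forced into $\TryC_k$. The rest of the argument is a direct composition of two already-established results.
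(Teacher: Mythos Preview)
Your proposal is correct and matches the paper's approach exactly: the corollary is stated without proof immediately after the theorem about Algorithm~\ref{alg:icss}, and is meant to follow directly from combining that upper bound with part~(2) of Theorem~\ref{th:iclb}. Your explicit discussion of the attribution subtlety (that the $\Omega(m)$ cost from Theorem~\ref{th:iclb}(2) falls on $\TryC_k$ once t-reads are $O(1)$) is more than the paper provides, but it is precisely the reasoning needed to justify the corollary as phrased.
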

%
%
%
%
\section{Strongly progressive TMs}
\label{sec:p3c2s3}
In this section, we prove that every strongly progressive strictly serializable TM that uses only 
read, write and \emph{conditional}
primitives has an execution in which
in which $n$
concurrent processes perform transactions on a single data item and incur
$\Omega (\log n)$ \emph{remote memory references}~\cite{anderson-90-tpds}.

We then describe a constant RAW/AWAR strongly progressive TM providing starvation-free TM-liveness from read-write base objects.
\subsection{A $\Omega (n\log n)$ lower bound on remote memory references}
\label{sec:p3c2s3s2}
Our lower bound on RMR complexity of strongly progressive TMs is derived by reduction to \emph{mutual exclusion}.

\vspace{1mm}\noindent\textbf{Mutual exclusion.}
The \emph{mutex object} supports two operations: \emph{Entry} and \emph{Exit}, both of which return the response $ok$.
We say that a process $p_i$ \emph{is in the critical section after an execution $\pi$} if
$\pi$ contains the invocation of $\lit{Entry}$ by
$p_i$ that returns $ok$, but does not contain a subsequent
invocation of $\lit{Exit}$ by $p_i$ in $\pi$. 
     
A mutual exclusion implementation satisfies the following properties:

\begin{itemize}
\item (\emph{Mutual-exclusion}) After any execution
  $\pi$, there exists at most one process that is in the critical section. 

\item (\emph{Deadlock-freedom})  Let $\pi$ be any execution that contains the invocation of
  $\lit{Enter}$ by process $p_i$. Then, in every extension of $\pi$ in which every process takes infinitely many steps, some
  process is in the critical section.

\item (\emph{Finite-exit)} Every process completes the $\lit{Exit}$ operation within a finite number of steps.
\end{itemize}     
%
%
We describe an implementation of a mutex object $L(M)$ from a strictly serializable, strongly
progressive TM implementation $M$ providing wait-free TM-liveness (Algorithm~\ref{alg:mutex-dsm}).
The algorithm is based on the mutex implementation in \cite{lee-thesis}.

Given a sequential implementation, we use a TM to execute the sequential code in a concurrent environment
by encapsulating each sequential operation within an \emph{atomic}
transaction that replaces each read and write of a t-object with the
transactional read and write implementations, respectively. 
If the transaction commits, then the result of the operation is
returned; otherwise if one of the transactional operations aborts.
For instance, in Algorithm~\ref{alg:mutex-dsm}, we wish to atomically read a t-object $X$, write a new value to it
and return the old value of $X$ prior to this write. To achieve this, we
employ a strictly serializable TM implementation $M$. 
Since we assume that $M$ is strongly progressive, in every execution,
at least one transaction successfully commits and the value of $X$ is returned.

\vspace{1mm}\noindent\textbf{Shared objects.}
We associate each process $p_i$ with two alternating identities $[p_i,\ms{face}_i]$; $\ms{face}_i \in \{0,1\}$.
The strongly progressive TM implementation $M$ is used to enqueue processes that attempt to enter the critical section within
a single t-object $X$ (initially $\bot$).
For each $[p_i,\ms{face}_i]$, $L(M)$ uses a register bit $\ms{Done}[p_i,\ms{face}_i]$ that indicates if this face of the process
has left the critical section or is executing the $\lit{Entry}$ operation. Additionally, we use
a register $\ms{Succ}[p_i,\ms{face}_i]$ that stores the process expected to succeed $p_i$ in the critical section.
If $\ms{Succ}[p_i,\ms{face}_i]=p_j$, we say that $p_j$ is the \emph{successor of $p_i$} (and $p_i$ is the \emph{predecessor} of $p_j$). 
Intuitively, this means
that $p_j$ is expected to enter the critical section immediately after $p_i$.
Finally, $L(M)$ uses a $2$-dimensional bit array $\ms{Lock}$: for each process $p_i$, there are $n-1$ registers associated with
the other processes. For all $j\in \{0,\ldots , n-1\}\setminus \{i\}$, the registers $\ms{Lock}[p_i][p_j]$
are local to $p_i$ and registers $\ms{Lock}[p_j][p_i]$ are remote to $p_i$. Process $p_i$ can only access registers
in the $\ms{Lock}$ array that are local or remote to it.

\vspace{1mm}\noindent\textbf{Entry operation.}
A process $p_i$ adopts a new identity $\ms{face}_i$ and writes $\false$ to $\ms{Done}(p_i,\ms{face}_i)$ to indicate
that $p_i$ has started the $\lit{Entry}$ operation. Process $p_i$ now initializes the successor of $[p_i,\ms{face}_i]$
by writing $\bot$ to $\ms{Succ}[p_i,\ms{face}_i]$. Now, $p_i$ uses a strongly progressive TM implementation $M$
to atomically store its \emph{pid} and identity i.e., $\ms{face}_i$ to t-object $X$ and returns the \emph{pid}
and identity of its \emph{predecessor}, say $[p_j,\ms{face}_j]$. Intuitively, this suggests that
$[p_i,\ms{face}_i]$ is scheduled to enter the critical section immediately after $]p_j,\ms{face}_j]$ exits the critical
section.
Note that if $p_i$ reads the initial value of t-object $X$, then it immediately enters the critical section.
Otherwise it writes \emph{locked} to the register $\ms{Lock}[p_i,p_j]$ and sets itself to be the successor of $[p_j,\ms{face}_j]$
by writing $p_i$ to $\ms{Succ}[p_j,\ms{face}_j]$.
Process $p_i$ now checks if $p_j$ has started the $\lit{Exit}$ operation by checking if $\ms{Done}[p_j,\ms{face}_j]$
is set. If it is, $p_i$ enters the critical section; otherwise $p_i$ spins on the register $\ms{Lock}[p_i][p_j]$
until it is \emph{unlocked}.

\vspace{1mm}\noindent\textbf{Exit operation.}
Process $p_i$ first indicates that it has exited the critical section by setting $\ms{Done}[p_i,\ms{face}_i]$, following
which it \emph{unlocks} the register $\ms{Lock}[\ms{Succ}[p_i,\ms{face}_i]][p_i]$ to allow $p_i$'s successor to
enter the critical section.
\begin{algorithm}[t]
\caption{Mutual-exclusion object $L$ from a strongly progressive, strict serializable TM $M$; code for process $p_i$; $1\leq i \leq n$}
\label{alg:mutex-dsm}
\begin{algorithmic}[1]
  	\begin{multicols}{2}
  	{
  	\footnotesize
	\Part{Local variables}{
		\State bit $\ms{face}_i$, for each process $p_i$
	}\EndPart
	\Statex
	\Part{Shared objects}{
		\State strongly progressive, strictly 
		\State ~~serializable TM $M$
		\State ~~t-object $X$, initially $\bot$  
		\State ~~storing value $v \in \{[p_i, \ms{face}_i] \} \cup \{\bot\}$
				
		\State for each tuple $[p_i,\ms{face}_i]$
		\State ~~$\ms{Done}[p_i,\ms{face}_i] \in \{\true,\false\}$
		\State ~~$\ms{Succ}[p_i,\ms{face}_i] \in \{p_1,\ldots , p_n \} \cup \{\bot\}$
		\State for each $p_i$ and $j\in \{1,\ldots , n\}\setminus \{i\}$
		\State ~~$\ms{Lock}[p_i][p_j] \in \{\ms{locked},\ms{unlocked}\}$
	}\EndPart	
		
	\Statex	
	\Statex
	\Part{Function: \lit{func}()}{
		\State \textbf{atomic using $M$}
		
		 \State ~~~~$\ms{value}:= \lit{tx-read}(X)$
		 \State ~~~~$\lit{tx-write}(X,[p_i,\ms{face}_i])$
		\State \textbf{on abort Return} $\false$
		 \Return $\ms{value}$ \EndReturn
		 
	}\EndPart	
 	
 	\newpage
	\Part{Entry}{
		 \State $\ms{face}_i :=1- \ms{face}_i$
		 \State $\ms{Done}[p_i,\ms{face}_i].\lit{write}(\false)$ \label{line:entrydone1}
		 \State $\ms{Succ}[p_i,\ms{face}_i].\lit{write}(\bot)$ \label{line:entrysucc1}
		 \While{$(\ms{prev} \gets \lit{func})=\false$} \label{line:tm}
		      \State \textbf{no op}
		 \EndWhile \label{line:while}
		 \If{$\ms{prev} \neq \bot$}
		    \State $\ms{Lock}[p_i][\ms{prev}.pid].\lit{write}(\ms{locked})$ \label{line:entrylock}
		    \State $\ms{Succ}[\ms{prev}].\lit{write}(p_i)$ \label{line:entrysucc2}
		    \If{$\ms{Done}[\ms{prev}]=\false$} \label{line:entrydone2}
		      \While{$\ms{Lock}[p_i][\ms{prev}.pid]=\ms{unlocked}$} \label{line:entryunlock}
			\State \textbf{no op}
		      \EndWhile
		    \EndIf
		 \EndIf
		 \Return $ok$ \EndReturn
		 \State \Comment{Critical section}
		 
   	 }\EndPart
	\Statex
	
	\Part{Exit}{
		  \State $\ms{Done}[p_i,\ms{face}_i].\lit{write}(\true)$ \label{line:exitdone}
		  \State $\ms{Lock}[\ms{Succ}[p_i, \ms{face}_i]][p_i].\lit{write}(\ms{unlocked})$ \label{line:exitunlock}
		  \Return $ok$ \EndReturn
					
   	}\EndPart
		
	}
	\end{multicols}
  \end{algorithmic}
\end{algorithm}
\begin{lemma}
\label{lm:mutex}
The implementation $L(M)$ (Algorithm~\ref{alg:mutex-dsm}) satisfies mutual exclusion.
\end{lemma}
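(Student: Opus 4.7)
The plan is to argue that in any execution $\pi$ of $L(M)$, at most one process is in the critical section, by exploiting the total order on committed \lit{func}-transactions provided by strict serializability of $M$. First I would observe that to be in the CS after $\pi$, a process $p_i$ must have exited the \textbf{while} loop of line~\ref{line:while}, which requires some invocation of \lit{func} by $p_i$ to have returned a non-$\false$ value. By the semantics of \lit{func}, this corresponds to a transaction that commits after atomically reading the current value of $X$ and writing $[p_i,\ms{face}_i]$ to $X$. Since $M$ is strictly serializable, the set of all such committed transactions admits a real-time-respecting serialization $T_1, T_2, \ldots$; let $p_{i_k}$ and $\ms{face}_{i_k}$ denote the process and face executing $T_k$. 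By legality of the serialization, $T_1$ reads $\bot$, and for every $k>1$, $T_k$ reads the value $[p_{i_{k-1}}, \ms{face}_{i_{k-1}}]$ written by $T_{k-1}$.

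Next, I would prove by induction on $k$ the following invariant: whenever $p_{i_k}$ is in the CS in some prefix $\pi'$ of $\pi$, every earlier process $p_{i_j}$ ($j<k$) has already invoked \lit{Exit} in $\pi'$, and hence is not in the CS. The base case $k=1$ is trivial since $p_{i_1}$ enters directly after its transaction. For the inductive step, after $T_k$ commits $p_{i_k}$ executes, in order, the write of $\ms{locked}$ to $\ms{Lock}[p_{i_k}][p_{i_{k-1}}]$ (line~\ref{line:entrylock}), the write of $p_{i_k}$ to $\ms{Succ}[p_{i_{k-1}},\ms{face}_{i_{k-1}}]$ (line~\ref{line:entrysucc2}), and the read of $\ms{Done}[p_{i_{k-1}},\ms{face}_{i_{k-1}}]$ (line~\ref{line:entrydone2}); then either $p_{i_k}$ enters the CS because this read returned $\true$, or $p_{i_k}$ spins on $\ms{Lock}[p_{i_k}][p_{i_{k-1}}]$ until it sees $\ms{unlocked}$. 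In the first sub-case, $p_{i_{k-1}}$ must have already executed line~\ref{line:exitdone}, and hence invoked \lit{Exit}.

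The second sub-case is the main obstacle, because it requires showing that the only process capable of unlocking $\ms{Lock}[p_{i_k}][p_{i_{k-1}}]$ is $p_{i_{k-1}}$ itself, from within \lit{Exit}, and moreover that it does so with the correct successor argument. For this I would do a careful ordering argument: since $p_{i_k}$'s read of $\ms{Done}$ precedes $p_{i_{k-1}}$'s write of $\true$ (otherwise we are in sub-case~1), and since $p_{i_{k-1}}$'s write of $\true$ to $\ms{Done}$ precedes its read of $\ms{Succ}[p_{i_{k-1}},\ms{face}_{i_{k-1}}]$ at line~\ref{line:exitunlock}, the write of $p_{i_k}$ to $\ms{Succ}[p_{i_{k-1}},\ms{face}_{i_{k-1}}]$ (which happened before $p_{i_k}$'s read of $\ms{Done}$) precedes $p_{i_{k-1}}$'s read of $\ms{Succ}$. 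Furthermore no other process writes to $\ms{Succ}[p_{i_{k-1}},\ms{face}_{i_{k-1}}]$, since by strict serializability $T_k$ is the unique committed transaction whose predecessor is $T_{k-1}$. Hence $p_{i_{k-1}}$ reads $\ms{Succ}=p_{i_k}$ and unlocks precisely the register $p_{i_k}$ is spinning on, releasing it only after line~\ref{line:exitdone}; thus $p_{i_{k-1}}$ has again invoked \lit{Exit} before $p_{i_k}$ enters the CS. Combining this with the inductive hypothesis applied to $p_{i_{k-1}}$ establishes the invariant, so no two processes are ever in the CS simultaneously, proving mutual exclusion.
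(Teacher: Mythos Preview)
Your proof plan is correct and takes a genuinely different route from the paper. The paper argues by contradiction: it assumes two processes $p_i$ and $p_j$ are simultaneously in the critical section, does a case split on whether $p_i$ read $\ms{prev}=\bot$, assumes without loss of generality that one is the ``successor'' of the other, and then derives a contradiction from the \ms{Done}/\ms{Lock} handshake. Your approach is more structural: you first invoke strict serializability of $M$ to extract a total order $T_1,T_2,\ldots$ on committed \lit{func}-transactions (something the paper's proof uses implicitly via the successor/predecessor terminology but never states), and then prove by induction on $k$ the invariant that whenever $p_{i_k}$ is in the critical section, every $p_{i_j}$ with $j<k$ has already invoked \lit{Exit}. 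Mutual exclusion falls out immediately from this invariant.

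What your approach buys is a cleaner global structure and an explicit justification---via legality of the serialization---for why each $T_k$ reads exactly the value written by $T_{k-1}$, hence why the predecessor chain is well-defined and acyclic. The paper's proof leaves this implicit. What the paper's direct contradiction buys is brevity: it only has to reason about two processes at a time rather than maintain an invariant over the whole chain. One detail you should make explicit when you flesh out sub-case~2: to conclude that the \ms{unlocked} value $p_{i_k}$ eventually reads was written \emph{after} $p_{i_k}$'s own write of \ms{locked}, you need that any \emph{earlier} \lit{Exit} of $p_{i_{k-1}}$ (with face $1-\ms{face}_{i_{k-1}}$) completed before $T_{k-1}$ began, hence before $T_k$ returned, hence before $p_{i_k}$ wrote \ms{locked}; this follows from strict serializability (otherwise $T_k\prec^{RT} T_{k-1}$), but it is worth spelling out since it is precisely what the alternating faces are there to enable.
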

\begin{proof}
Let $E$ be any execution of $L(M)$.
We say that $[p_i,\ms{face}_i]$ is the \emph{successor} of $[p_j,\ms{face}_j]$ if $p_i$ reads the value of $\ms{prev}$
in Line~\ref{line:while} to be 
$[p_j,\ms{face}_j]$ (and $[p_j,\ms{face}_j]$ is the \emph{predecessor} of $[p_i,\ms{face}_i]$); 
otherwise if $p_i$ reads the value to be $\bot$, we say that $p_i$ has no predecessor.

Suppose by contradiction that there exist processes $p_i$ and $p_j$ that are both inside the critical section after $E$.
Since $p_i$ is inside the critical section, either (1) 
$p_i$ read $\ms{prev}=\bot$ in Line~\ref{line:tm}, or
(2) $p_i$ read that $\ms{Done}[\ms{prev}]$ is $\true$ (Line~\ref{line:entrydone2}) or $p_i$ reads that
$\ms{Done}[\ms{prev}]$ is $\false$ and $\ms{Lock}[p_i][\ms{prev.pid}]$
is \emph{unlocked} (Line~\ref{line:entryunlock}).

(Case $1$) Suppose that $p_i$ read $\ms{prev}=\bot$ and entered the critical section. Since in this case, $p_i$ does
not have any predecessor, some other process that returns successfully from the \emph{while} loop in Line~\ref{line:while}
must be successor of $p_i$ in $E$. 
Since there exists $[p_j,\ms{face}_j]$ also inside the critical section after $E$, $p_j$ reads that either
$[p_i,\ms{face}_i]$ or some other process to be its predecessor. Observe that there must exist some such 
process $[p_k,\ms{face}_k]$ whose predecessor is $[p_i,\ms{face}_i]$. Hence, without loss of generality, we can assume that
$[p_j,\ms{face}_j]$ is the successor of $[p_i,\ms{face}_i]$.
By our assumption, $[p_j,\ms{face}_j]$ is also inside the critical section. Thus, $p_j$
\emph{locked} the register $\ms{Lock}[p_j,p_i]$ in Line~\ref{line:entrylock} and set itself to be $p_i$'s successor
in Line~\ref{line:entrysucc2}. Then, $p_j$ read that $\ms{Done}[p_i,\ms{face}_i]$ is $\true$
or read that $\ms{Done}[p_i,\ms{face}_i]$ is $\false$ and waited until $\ms{Lock}[p_j,p_i]$ is \emph{unlocked}
and then entered the critical section. But this is possible only if $p_i$ has left the critical section and 
updated the registers $\ms{Done}[p_i,\ms{face}_i]$ and $\ms{Lock}[p_j,p_i]$ in Lines~\ref{line:exitdone}
and \ref{line:exitunlock} respectively---contradiction to the assumption that $[p_i,\ms{face}_i]$ is also
inside the critical section after $E$.

(Case $2$) Suppose that $p_i$ did not read $\ms{prev}=\bot$ and entered the critical section.
Thus, $p_i$ read that $\ms{Done}[\ms{prev}]$ is $\false$ in Line~\ref{line:entrydone2} and $\ms{Lock}[p_i][\ms{prev.pid}]$
is \emph{unlocked} in Line~\ref{line:entryunlock}, where $\ms{prev}$ is the predecessor of $[p_i,\ms{face}_i]$.
As with case $1$, without loss of generality, we can assume that
$[p_j,\ms{face}_j]$ is the successor of $[p_i,\ms{face}_i]$ or 
$[p_j,\ms{face}_j]$ is the predecessor of $[p_i,\ms{face}_i]$.

Suppose that $[p_j,\ms{face}_j]$ is the predecessor of $[p_i,\ms{face}_i]$, \emph{i.e.}, $p_i$ writes the value
$[p_i,\ms{face}_i]$ to the register $\ms{Succ}[p_j,\ms{face}_j]$ in Line~\ref{line:entrysucc2}.
Since $[p_j,\ms{face}_j]$ is also inside the critical section after $E$,
process $p_i$ must read that $\ms{Done}[p_j,\ms{face}_j]$ is $\true$ in Line~\ref{line:entrydone2}
and $\ms{Lock}[p_i,p_j]$ is \emph{locked} in Line~\ref{line:entryunlock}.
But then $p_i$ could not have entered the critical section after $E$---contradiction.

Suppose that $[p_j,\ms{face}_j]$ is the successor of $[p_i,\ms{face}_i]$, \emph{i.e.}, $p_j$ writes the value
$[p_j,\ms{face}_j]$ to the register $\ms{Succ}[p_i,\ms{face}_i]$. Since both $p_i$ and $p_j$ are inside the critical section
after $E$, process $p_j$ must read that $\ms{Done}[p_i,\ms{face}_i]$ is $\true$ in Line~\ref{line:entrydone2}
and $\ms{Lock}[p_j,p_i]$ is \emph{locked} in Line~\ref{line:entryunlock}.
Thus, $p_j$ must spin on the register $\ms{Lock}[p_j,p_i]$, waiting for it to be \emph{unlocked} by $p_i$
before entering the critical section---contradiction to the assumption that both $p_i$ and $p_j$ are inside the critical section.

Thus, $L(M)$ satisfies mutual-exclusion.
\end{proof}
\begin{lemma}
\label{lm:dead}
The implementation $L(M)$ (Algorithm~\ref{alg:mutex-dsm}) provides deadlock-freedom.
\end{lemma}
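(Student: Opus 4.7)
The plan is to argue by contradiction: assume that in some extension $E$ of $\pi$ in which every process takes infinitely many steps, no process is in the critical section at any configuration. The only t-object touched by $L(M)$'s internal transactions is $X$, so any two concurrently running invocations of \textit{func()} conflict only on $X$. Strong progressiveness of $M$ then guarantees that from any concurrent set of such invocations at least one transaction commits, while wait-free TM-liveness guarantees that every individual invocation returns within a finite number of its own steps. Since $p_i$ has invoked Entry and continues to take infinitely many steps in $E$, successive iterations of the while loop at Line~\ref{line:tm} must eventually complete for \emph{some} process, so at least one internal transaction commits in $E$. Enumerate in real-time commit order the processes whose \textit{func()} transactions commit over the whole history as $p_{i_1}, p_{i_2}, \ldots$

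By strict serializability of $M$ and the read-then-write structure of \textit{func()}, $p_{i_1}$ reads the initial value $\bot$ from $X$, while for every $k>1$, $p_{i_k}$ reads $[p_{i_{k-1}}, face_{i_{k-1}}]$. The base case is immediate: if the first commit $p_{i_1}$ occurs in $E$, then $p_{i_1}$ falls through the conditional after Line~\ref{line:tm} with $\ms{prev}=\bot$ and enters the critical section, contradicting the assumption. Hence the interesting case is that the commit order is already ``warmed up'' before $E$ starts.

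I would then show by induction on $k$ that every $p_{i_k}$ enters the critical section at some point. Assume inductively that $p_{i_{k-1}}$ enters the critical section; by the standing contradiction assumption, no process is in the critical section during $E$, so $p_{i_{k-1}}$'s stay in the critical section must have ended, and by finite-exit $p_{i_{k-1}}$'s Exit completes, writing $\true$ to $\ms{Done}[p_{i_{k-1}}, face_{i_{k-1}}]$ at Line~\ref{line:exitdone} and then executing Line~\ref{line:exitunlock}. A case analysis on the interleaving of $p_{i_k}$'s Entry and $p_{i_{k-1}}$'s Exit follows: if $p_{i_k}$'s write to $\ms{Succ}[p_{i_{k-1}}, face_{i_{k-1}}]$ at Line~\ref{line:entrysucc2} precedes $p_{i_{k-1}}$'s read of that same register inside Line~\ref{line:exitunlock}, then the lock at $\ms{Lock}[p_{i_k}][p_{i_{k-1}}]$ is unlocked and $p_{i_k}$'s spin at Line~\ref{line:entryunlock} terminates (or its Done check at Line~\ref{line:entrydone2} already returned $\true$); otherwise, $p_{i_{k-1}}$'s Done write at Line~\ref{line:exitdone} is completed before $p_{i_k}$'s read at Line~\ref{line:entrydone2}, so $p_{i_k}$ observes $\true$ and enters the critical section without ever spinning. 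In every sub-case $p_{i_k}$ enters the critical section, which contradicts the assumption once we pick any $k$ whose commit belongs to $E$.

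The main obstacle I expect is the careful handling of the concurrent interleavings of Succ/Done writes and reads across Entry and Exit: one must verify that in every reordering the successor either observes $\ms{Done}[\ms{prev}]=\true$ or is later unlocked at $\ms{Lock}[p_{i_k}][p_{i_{k-1}}]$. A secondary subtlety is that strong progressiveness gives no per-process guarantee, so $p_i$ itself might never win a commit; but the claim only demands that \emph{some} process be in the critical section, and the chain induction applies to whichever process happens to commit in $E$, which suffices.
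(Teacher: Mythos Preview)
Your approach is correct and essentially dual to the paper's. The paper argues \emph{backward}: it picks a process stuck in the spin loop at Line~\ref{line:entryunlock}, traces the predecessor chain through $p_j$, $p_k,\ldots$, and argues (somewhat informally) that the chain must bottom out and then unwind forward, eventually releasing the original process; it also separately treats the special case where the chain cycles back to $p_i$ with a different face. You argue \emph{forward}: you enumerate the committed \textit{func()} transactions in order and prove by induction that each corresponding Entry completes, obtaining a contradiction once such a completion lands in $E$. Your case analysis on the Succ/Done interleaving is exactly the signalling argument that makes the predecessor-successor handoff work, and you state it more precisely than the paper does. The forward induction also avoids the paper's ad hoc treatment of predecessor cycles (its Case~2).

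One imprecision worth fixing: enumerate the commits in the \emph{serialization} order guaranteed by strict serializability of $M$, not in ``real-time commit order.'' For overlapping transactions these need not coincide, and only the serialization order gives the chain property that $p_{i_k}$'s t-read of $X$ returns $[p_{i_{k-1}},\textit{face}_{i_{k-1}}]$. With that replacement the induction goes through unchanged.
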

\begin{proof}
Let $E$ be any execution of $L(M)$.
Observe that a process may be stuck indefinitely only in Lines~\ref{line:tm} and \ref{line:entryunlock} as it performs
the \emph{while} loop.

Since $M$ is strongly progressive, in every execution $E$ that contains an invocation of $\lit{Enter}$
by process $p_i$, some process returns $\true$ from the invocation of $\ms{func}()$ in Line~\ref{line:tm}.

Now consider a process $p_i$ that returns successfully from the \emph{while} loop in Line~\ref{line:tm}.
Suppose that $p_i$ is stuck indefinitely as it performs the \emph{while} loop in Line~\ref{line:entryunlock}.
Thus, no process has \emph{unlocked} the register $\ms{Lock}[p_i][\ms{prev.pid}]$ by writing to it in the $\lit{Exit}$ section.
Recall that since $[p_i,\ms{face}_i]$ has reached the \emph{while} loop in Line~\ref{line:entryunlock}, $[p_i,\ms{face}_i]$
necessarily has a predecessor, say $[p_j,\ms{face}_j]$, and has set itself to be $p_j$'s successor by writing
$p_i$ to register $\ms{Succ}[p_j,\ms{face}_j]$ in Line~\ref{line:entrysucc2}.
Consider the possible two cases: the predecessor of $[p_j,\ms{face}_j$ is some process $p_k$;$k \neq i$ or
the predecessor of $[p_j,\ms{face}_j$ is the process $p_i$ itself.

(Case $1$) Since by assumption, process $p_j$ takes infinitely many steps in $E$, the only reason that
$p_j$ is stuck without entering the critical section is that $[p_k,\ms{face}_k]$ is also stuck in
the \emph{while} loop in Line~\ref{line:entryunlock}. Note that it is possible for us to iteratively extend this execution
in which $p_k$'s predecessor is a process that is not $p_i$ or $p_j$ that is also stuck in the
\emph{while} loop in Line~\ref{line:entryunlock}. But then the last such process must eventually
read the corresponding $\ms{Lock}$ to be \emph{unlocked} and enter the critical section.
Thus, in every extension of $E$ in which every process takes infinitely many steps, some process will enter the critical section.

(Case $2$) Suppose that the predecessor of $[p_j,\ms{face}_j$ is the process $p_i$ itself.
Thus, as $[p_i,\ms{face}_]$ is stuck in the \emph{while} loop waiting for $\ms{Lock}[p_i,p_j]$ to be \emph{unlocked}
by process $p_j$, $p_j$ leaves the critical section, \emph{unlocks} $\ms{Lock}[p_i,p_j]$ in Line~\ref{line:exitunlock}
and prior to the read of $\ms{Lock}[p_i,p_j]$, $p_j$ re-starts the $\lit{Entry}$ operation,
writes $\false$ to $\ms{Done}[p_j,1-\ms{face}_j]$ and sets itself to be the successor of $[p_i,\ms{face}_i]$
and spins on the register $\ms{Lock}[p_j,p_i]$. However, observe that process $p_i$, which takes infinitely many steps by our assumption
must eventually read that $\ms{Lock}[p_i,p_j]$ is \emph{unlocked} and enter the critical section, thus establishing
deadlock-freedom.
\end{proof}
We say that a TM implementation $M$ \emph{accesses a single t-object} if in every execution $E$ of $M$ and
every transaction $T\in \ms{txns}(E)$, $|\Dset(T)| \leq 1$. We can now prove the following theorem:
\begin{theorem}
\label{th:mutex-tm}
Any strictly serializable, strongly progressive TM implementation $M$ that accesses a single t-object 
implies a \emph{deadlock-free}, 
\emph{finite exit} mutual exclusion
implementation $L(M)$ such that the RMR complexity of $M$ is within a constant factor of the RMR complexity of $L(M)$.
\end{theorem}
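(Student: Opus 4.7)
The plan is to verify the two remaining properties of the reduction, since Lemmata~\ref{lm:mutex} and \ref{lm:dead} already establish mutual exclusion and deadlock-freedom for $L(M)$. What remains is (i) the finite-exit property and (ii) the constant-factor RMR relationship between $M$ and $L(M)$.

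First, finite exit is immediate by inspection of Algorithm~\ref{alg:mutex-dsm}: the Exit section consists solely of two register writes (Lines~\ref{line:exitdone} and \ref{line:exitunlock}) followed by a return, with no loops or waiting statements. Hence every invocation of Exit completes in $O(1)$ process steps, and therefore in $O(1)$ RMRs in either the CC or the DSM model.

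Second, for the RMR comparison, the plan is to charge each RMR incurred by a process $p_i$ during a single Entry-Exit passage either to (a) a constant-size block of local bookkeeping, or (b) an RMR already performed inside a transaction executed by $M$. The bookkeeping events are the writes to $\ms{Done}[p_i,\ms{face}_i]$, $\ms{Succ}[p_i,\ms{face}_i]$, $\ms{Lock}[p_i][\ms{prev.pid}]$, $\ms{Succ}[\ms{prev}]$, and the single read of $\ms{Done}[\ms{prev}]$, each of which contributes at most one RMR. The only potentially unbounded components are the two \textbf{while} loops. The spin at Line~\ref{line:entryunlock} is on $\ms{Lock}[p_i][\ms{prev.pid}]$: in DSM this register can be placed in $p_i$'s local memory and the spin incurs zero RMRs, while in CC the cached value is invalidated at most once (when the predecessor writes \textit{unlocked} during its Exit), so the spin costs $O(1)$ RMRs. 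The retry loop at Line~\ref{line:tm} repeatedly invokes $\ms{func}()$, each invocation being one transaction executed by $M$, whose RMR cost is by definition charged to $M$.

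The main obstacle I foresee is the retry loop at Line~\ref{line:tm}: strong progressiveness only guarantees that in any set of concurrent transactions conflicting on the single t-object $X$, \emph{some} transaction commits, but not necessarily $p_i$'s. To handle this I would amortize: every aborted attempt of $p_i$ in the loop is concurrent with at least one successful commit by a conflicting process executing $\ms{func}()$ on $X$ (otherwise, by strong progressiveness, $p_i$'s own attempt would have committed). Thus across any execution, the total number of transactions attempted by $L(M)$ is at most a constant multiple of the number of \emph{committed} transactions, which in turn equals the number of completed Entry operations. Consequently, the total RMR complexity of $L(M)$ equals the total RMR complexity of $M$ up to a constant factor, concluding the reduction and allowing the $\Omega(n\log n)$ mutex lower bound of~\cite{rmr-mutex} to be transferred to $M$ in the following corollary.
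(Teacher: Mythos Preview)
Your treatment of finite-exit and of the spin loop at Line~\ref{line:entryunlock} matches the paper exactly, in both the CC and DSM models.

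The difference is in how you handle the retry loop at Line~\ref{line:tm}. You identify it as the ``main obstacle'' and propose an amortization bounding the number of aborted attempts by the number of commits. The paper does not do this, and for good reason: no such bound is needed. Every RMR incurred inside an invocation of $\ms{func}()$---whether that transaction commits or aborts---is an RMR performed \emph{by $M$}, and therefore already counts toward the RMR complexity of $M$. The only RMRs that constitute ``overhead'' of $L(M)$ over $M$ are those outside $\ms{func}()$: the $O(1)$ bookkeeping accesses you list and the $O(1)$ spin-loop cost you correctly analyze. The number of retries is thus irrelevant to the constant-factor claim; if anything, more aborted attempts only increase the RMR cost attributed to $M$, which is the quantity being lower-bounded.

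So your proof is correct but over-engineered on this point. The amortization argument is unnecessary, and as stated it would also need more care: strong progressiveness guarantees that some transaction in a conflicting set is not aborted, but not that it has already \emph{committed} by the time $p_i$ retries, so pairing each abort of $p_i$ with a distinct concurrent commit is not immediate. Dropping that paragraph and simply noting that all $\ms{func}()$ RMRs are charged to $M$ yields the paper's (simpler) argument.
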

\begin{proof}
(Mutual-exclusion)
Follows from Lemma~\ref{lm:mutex}.

(Finite-exit)
The proof is immediate since the $\lit{Exit}$ operation contains no unbounded
loops or waiting statements.

(Deadlock-freedom)
Follows from Lemma~\ref{lm:dead}.

(RMR complexity)
First, let us consider the CC model. Observe that every event not on $M$ performed by a process $p_i$ 
as it performs the $\lit{Entry}$
or $\lit{Exit}$ operations incurs $O(1)$ RMR cost clearly, possibly barring the \emph{while} loop executed in
Line~\ref{line:entryunlock}. During the execution of this \emph{while} loop, process $p_i$ spins on the register
$\ms{Lock}[p_i][p_j]$, where $p_j$ is the predecessor of $p_i$.  Observe that $p_i$'s cached copy of $\ms{Lock}[p_i][p_j]$
may be invalidated only by process $p_j$ as it \emph{unlocks} the register in Line~\ref{line:exitunlock}.
Since no other process may write to this register and $p_i$ terminates the \emph{while} loop
immediately after the write to $\ms{Lock}[p_i][p_j]$ by $p_j$, $p_i$ incurs $O(1)$ RMR's.
Thus, the overall RMR cost incurred by $M$ is within a constant factor of the RMR cost of $L(M)$.

Now we consider the DSM model. As with the reasoning for the CC model, every event not on $M$ performed by a process $p_i$
as it performs the $\lit{Entry}$
or $\lit{Exit}$ operations incurs $O(1)$ RMR cost clearly, possibly barring the \emph{while} loop executed in
Line~\ref{line:entryunlock}. During the execution of this \emph{while} loop, process $p_i$ spins on the register
$\ms{Lock}[p_i][p_j]$, where $p_j$ is the predecessor of $p_i$. Recall that $\ms{Lock}[p_i][p_j]$
is a register that is local to $p_i$ and thus, $p_i$ does not incur any RMR cost on account of executing this loop.
It follows that $p_i$ incurs $O(1)$ RMR cost in the DSM model. Thus, the overall RMR cost of $M$ is within
a constant factor of the RMR cost of $L(M)$ in the DSM model.
\end{proof}
\begin{theorem}
\label{th:mutex-rmr}
(\cite{rmr-mutex})
Any deadlock-free, finite-exit mutual exclusion implementation from read, write and 
conditional primitives has an execution whose RMR complexity is $\Omega(n \log{n})$.
\end{theorem}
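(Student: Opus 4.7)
The plan is to prove this classical bound via an adversarial scheduling argument in the style of Attiya, Hendler and Woelfel. I would construct, by induction on $n$, a ``hard'' execution in which the $n$ processes collectively incur $\Omega(n \log n)$ RMRs while passing through the critical section. The construction proceeds in rounds: in each round, the adversary partitions the currently active processes into two equal groups and schedules one group ahead of the other into the critical section, forcing the waiting group to incur fresh RMRs in order to detect that their turn has come. Since each round halves the number of active processes, there are $\Theta(\log n)$ rounds, and a careful accounting over all rounds yields the desired lower bound.

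More concretely, I would maintain the invariant that, after each round, there is a large subset $S$ of processes that are pairwise indistinguishable (each believes it is running essentially solo on its own private fragment of the shared memory). To extend the execution, the adversary picks a process $p \in S$, runs it until it is about to enter the critical section, and then runs the remaining processes of $S$. Each such process must either (a) read a base object that $p$ has written to since the last round (a cache-coherence RMR in the CC model, or a remote access in the DSM model), or (b) fail to distinguish the current configuration from one in which $p$ has not yet entered the critical section, thereby violating mutual exclusion when it also tries to enter. Summing $\Omega(|S|)$ RMRs per round across $\Theta(\log n)$ rounds yields the $\Omega(n \log n)$ bound.

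The main obstacle is the presence of \emph{conditional} primitives such as compare-and-swap, which can simultaneously read and update a base object and can therefore convey ``who goes next'' information in a single RMR. A naive indistinguishability argument breaks down because a single CAS may reveal the identity of an entire group of contending processes at once. To overcome this, I would adapt the encoding/information-theoretic argument of \cite{rmr-mutex}: show that, whenever a group of indistinguishable processes contends on the same base object via conditional primitives, only one of them succeeds and the others, seeing the ``losing'' response, must continue to perform fresh accesses to learn their ordering relative to $p$. This preserves the invariant that a constant fraction of $S$ remains indistinguishable after each round, and that the ``winners'' are charged RMRs that the adversary can amortize against future rounds.

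Finally, I would note that since Theorem~\ref{th:mutex-tm} establishes a constant-overhead RMR reduction from a single-t-object strongly progressive strictly serializable TM to a deadlock-free finite-exit mutex, the $\Omega(n \log n)$ RMR lower bound for mutex transfers directly to the TM setting, giving the intended separation result; the only work beyond citing \cite{rmr-mutex} is to verify that the RMR accounting in the reduction (both in CC and DSM, as covered in the proof of Theorem~\ref{th:mutex-tm}) is truly constant per mutex event.
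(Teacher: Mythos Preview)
The paper does not prove this theorem at all: it is stated with the citation \cite{rmr-mutex} and used as a black box, together with the reduction of Theorem~\ref{th:mutex-tm}, to obtain Theorem~\ref{th:mutex-tm-rmr}. There is no argument in the paper to compare your proposal against.

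Your sketch is a reasonable high-level caricature of the Attiya--Hendler--Woelfel argument, but it glosses over the genuinely hard part. The actual proof in \cite{rmr-mutex} is not a straightforward halving-and-indistinguishability construction; the presence of conditional primitives (CAS, LL/SC) breaks the naive covering/indistinguishability approach because a single successful CAS can leak unbounded information to many processes at once, and your one-sentence fix (``only one succeeds and the others must continue to perform fresh accesses'') is exactly the step that requires the full machinery of the original paper (a delicate potential-function argument tracking how much ``information'' each RMR can convey, together with an erasing technique to restore indistinguishability after conditional operations). Your last paragraph, about transferring the bound to the TM setting via Theorem~\ref{th:mutex-tm}, is correct and is precisely how the thesis uses the cited result---but that is the content of Theorem~\ref{th:mutex-tm-rmr}, not of the present theorem.
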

Theorems~\ref{th:mutex-rmr} and \ref{th:mutex-tm} imply:
\begin{theorem}
\label{th:mutex-tm-rmr}
Any strictly serializable, strongly progressive TM implementation with wait-free TM-liveness from read, write and 
conditional primitives that accesses a single t-object has an execution whose RMR complexity is $\Omega(n \log{n})$.
\end{theorem}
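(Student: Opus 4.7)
The plan is to derive Theorem~\ref{th:mutex-tm-rmr} as an immediate composition of Theorems~\ref{th:mutex-tm} and \ref{th:mutex-rmr}. Let $M$ be any TM satisfying the hypotheses: strictly serializable, strongly progressive, wait-free TM-liveness, accesses a single t-object, and built from read, write, and conditional primitives only. The first step is to apply Theorem~\ref{th:mutex-tm} to obtain the mutual exclusion implementation $L(M)$ given by Algorithm~\ref{alg:mutex-dsm}; by Lemmas~\ref{lm:mutex} and \ref{lm:dead} together with the trivial finite-exit property noted in the proof of Theorem~\ref{th:mutex-tm}, $L(M)$ is a deadlock-free mutex with finite-exit, and its RMR cost (in either the CC or DSM model) dominates the RMR cost of $M$ up to a constant factor.

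The second step is to verify that $L(M)$ itself falls within the class of implementations to which Theorem~\ref{th:mutex-rmr} applies, \emph{i.e.}, that it uses only read, write, and conditional primitives. Inspecting Algorithm~\ref{alg:mutex-dsm}, every access to the auxiliary shared objects $\ms{Done}[\cdot]$, $\ms{Succ}[\cdot]$, and $\ms{Lock}[\cdot][\cdot]$ is a plain read or write; the only other steps taken by a process are steps of $M$ invoked inside $\lit{func}()$, and by hypothesis on $M$ these use only read, write, and conditional primitives. Hence $L(M)$ is a deadlock-free, finite-exit mutex built from exactly the primitives covered by Theorem~\ref{th:mutex-rmr}.

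Applying Theorem~\ref{th:mutex-rmr} to $L(M)$ yields an execution $\pi$ of $L(M)$ in which some process incurs $\Omega(n \log n)$ RMRs while executing $\lit{Entry}$ and $\lit{Exit}$. By Theorem~\ref{th:mutex-tm}, the RMR complexity of the corresponding execution of $M$ induced by $\pi$ is within a constant factor of the RMR complexity of $\pi$ itself, so $M$ has an execution whose RMR complexity is also $\Omega(n \log n)$, as claimed.

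There is essentially no obstacle beyond careful bookkeeping: the main subtlety is confirming that the constant-factor RMR bound in Theorem~\ref{th:mutex-tm} holds in both memory models (already argued in its proof by the observation that the spin in Line~\ref{line:entryunlock} is on the local register $\ms{Lock}[p_i][p_j]$ in DSM and on a register whose cached copy is invalidated exactly once in CC), and that the primitive set of $L(M)$ is unchanged from that of $M$. Since both are straightforward from the construction, the theorem follows by chaining the two prior results.
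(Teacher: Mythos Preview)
Your proposal is correct and follows exactly the approach the paper takes: the paper's proof is literally the one-line observation that Theorems~\ref{th:mutex-tm} and~\ref{th:mutex-rmr} together imply the result, and you have simply spelled out the bookkeeping (that $L(M)$ inherits the primitive set of $M$ plus reads and writes, and that the constant-RMR overhead of the auxiliary registers lets the $\Omega(n\log n)$ bound transfer from $L(M)$ back to $M$). One minor phrasing issue: when you say the RMR cost of $L(M)$ ``dominates the RMR cost of $M$ up to a constant factor,'' the direction you actually need (and correctly use in your final paragraph) is the reverse, namely that the $M$-steps inside $L(M)$ account for all but an $O(1)$-per-operation additive overhead of $L(M)$'s RMRs, so the $\Omega(n\log n)$ lower bound on $L(M)$ forces $\Omega(n\log n)$ RMRs in $M$.
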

\subsection{A constant expensive synchronization opaque TM}
\label{sec:p3c2s3s1}
In this section, we describe a strongly progressive opaque TM implementation providing starvation-free TM-liveness 
from read-write base objects with constant RAW/AWAR cost.
For our implementation, we define and implement a \emph{starvation-free multi-trylock} object.

\vspace{1mm}\noindent\textbf{Starvation-free multi-trylock.}
A \textit{multi-trylock} provides exclusive write-access to a set $Q$ of t-objects. 
Specifically, a \textit{multi-trylock} exports the following operations
\begin{itemize}
\item \textit{acquire(Q)} returns {\it true} or {\it false}
\item \textit{release(Q)} releases the lock and returns \textit{ok}
\item \textit{isContended($X_j$)}, $X_j \in Q$ returns \emph{true} or \emph{false}
\end{itemize}
We assume that processes are well-formed: they never invoke a new
operation on the multi-trylock before receiving response from the
previous invocation. 

We say that a process $p_i$ \emph{holds a lock on $X_j$ after an execution $\pi$} if
$\pi$ contains the invocation of \textit{acquire($Q$)}, $X_j\in Q$ by
$p_i$ that returned \emph{true}, but does not contain a subsequent
invocation of \textit{release($Q'$)}, $X_j\in Q'$, by $p_i$ in $\pi$. 
We say that $X_j$ is \emph{locked after $\pi$} by process $p_i$ if $p_i$
holds a lock on $X_j$ after $\pi$.

We say that $X_j$ is \emph{contended by $p_i$ after an execution $\pi$} if
$\pi$ contains the invocation of \textit{acquire($Q$)}, $X_j \in Q$,
by $p_i$ but does not
contain a subsequent return \emph{false} or return of
\textit{release($Q'$)}, $X_j \in Q'$, by $p_i$ in $\pi$. 
 
Let an execution $\pi$ contain the invocation $i_{op}$ of an operation $op$
followed by a corresponding response $r_{op}$ (we say that $\pi$
\emph{contains} $op$). 
We say that \emph{$X_j$ is uncontended (resp., locked) during the execution of $op$ in
  $\pi$} if $X_j$ is uncontended (resp., locked) after every prefix of $\pi$ that
contains $i_{op}$ but does not contain $r_{op}$.

We implement a \emph{multi-trylock} object whose operations are \emph{starvation-free}.
The algorithm is inspired by the \emph{Black-White Bakery Algorithm}~\cite{Gadi-Bakery} and uses a 
finite number of bounded registers.

A starvation-free multi-trylock implementation satisfies the following properties:
\begin{itemize}
\item \emph{Mutual-exclusion}: For any object $X_j$, and any execution
  $\pi$, there exists at most one process that \emph{holds} a lock on
  $X_j$ after $\pi$. 

\item \emph{Progress}:  Let $\pi$ be any execution that contains
  $\textit{acquire}(Q)$ by process $p_i$. If no other process $p_k, k\neq i$ contends infinitely long on some 
$X_j \in Q$, then $\textit{acquire}(Q)$ returns \emph{true} in $\pi$. 

\item Let $\pi$ be any execution that contains $\textit{isContended}(X_j)$ invoked by $p_i$.

\begin{itemize}
\item If $X_j$ is locked by $p_{\ell};\ell \neq i$ during the complete execution of
$\textit{isContended}(X_j)$ in $\pi$, then
$\textit{isContended}(X_j)$ returns $\true$.

\item If $\forall \ell \neq i$, $X_j$ is never contended by $p_{\ell}$ during the execution of
$\textit{isContended}(X_j)$ in $\pi$, then
$\textit{isContended}(X_j)$ returns $\false$.
\end{itemize}
\end{itemize}
Our starvation-free multi-trylock in Algorithm~\ref{alg:mul2} uses the following shared variables: registers $r_{ij}$ for each 
process $p_i$ and object $X_j$, a shared bit $\textit{color} \in \{B,W\}$, 
registers $LA_i \in \{0, \ldots , N\}$ for each $p_i$ that denote a \emph{Label} and $MC_i \in \{B,W\}$ for each $p_i$. 

We say $(LA_i,i) < (LA_k,k)$ \emph{iff} $LA_i < LA_k$ or $LA_i=LA_k$ and $i<k$.
We now prove the following invariant about the multi-trylock implementation.
\begin{algorithm}[t]
\caption{Starvation-free multi-trylock invoked by process $p_i$}\label{alg:mul2}
  \begin{algorithmic}[1]
  	{\footnotesize
	\Part{Shared variables}{
		\State $LA_i$, for each process $p_i$, initially $0$
		\State \textit{$MC_i \in \{B,W\}$} for each process $p_i$, initially $W$
		\State $\textit{color} \in \{B,W\}$, initally $W$
	  	\State $r_{ij}$, for each process $p_i$ and each t-object $X_j$, initially $0$
	}\EndPart	
	\Statex
	\Part{\textit{acquire}($Q$)}{
   		\ForAll{$X_j \in Q$}	
			\State \Write$(r_{ij},1)$ \label{line:lwrite}
		\EndFor
		\State $c_i:=color$ \label{line:cread}
		\State \Write$(MC_i,c_i)$ \label{line:color}
		\State \Write($LA_i, 1+max(\{LA_k)|MC_k=MC_i$\}) \label{line:label1}
		\While{$\exists j:\exists k \neq i$: $isContended(X_j)$ $\&\&$ (($LA_k \neq 0$;~($MC_k=MC_i$);~$(LA_k,k) < (LA_i,i)$) $||$ \\~~~~($LA_k \neq 0;~(MC_k \neq MC_i$);~$MC_i=color$)) } \label{line:label2}
			\State no op
		\EndWhile
		\State return $\true$ 
	}\EndPart		
	 \Statex
	 \Part{\textit{release}($Q$)}{
  		\ForAll{$X_j \in Q$} \label{line:rwrite}	
 			\State \Write$(r_{ij},0)$
		\EndFor
		\If{$MC_i=B$} \label{line:color2}
			\State $\Write(color,W)$
		\Else
			\State $\Write(color,B)$
		\EndIf
		\State \Write($LA_i,0$)
		\State return \ok
	}\EndPart
	\Statex
	\Part{{\it \textit{isContended}($X_j$)}}{
		\If{$\exists p_t: r_{tj} \neq 0, t \neq i$} \label{line:isl}
			\State return $\true$
		\EndIf
		\State return $\false$
	}\EndPart			
	}
  \end{algorithmic}
\end{algorithm}

\begin{lemma}
\label{lm:mutexsp}
In every execution $\pi$ of Algorithm~\ref{alg:mul2}, if $p_i$ \emph{holds} a lock on some object $X_j$ after $\pi$, then one of the following conditions must hold:
\begin{enumerate}
\item[(1)]
for some $k\neq i$; $LA_k \neq 0$, if $MC_k=MC_i$, then $(LA_k,k) > (LA_i,i)$
\item[(2)]
for some $k\neq i$; $LA_k \neq 0$, if $MC_k \neq MC_i$, then $MC_i \neq color$
\end{enumerate}
\end{lemma}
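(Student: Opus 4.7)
The plan is to analyze the only code path by which $p_i$ can come to hold the lock on $X_j$, and then show that the invariant established at the exit of the while loop in Line~\ref{line:label2} is preserved as long as $p_i$ continues to hold the lock. To hold the lock after $\pi$, process $p_i$ must have invoked $\textit{acquire}(Q)$ with $X_j \in Q$, written its label $LA_i > 0$ and its color $MC_i$, successfully exited the while loop on Line~\ref{line:label2}, and not yet invoked a matching $\textit{release}$. At the moment $p_i$ exits that loop, the negation of the loop condition holds: for every $k \neq i$ with $LA_k \neq 0$ whose registers indicate contention on some $X_j \in Q$, either $MC_k = MC_i$ and $(LA_k,k) > (LA_i,i)$ (case~(1)), or $MC_k \neq MC_i$ and $MC_i \neq \textit{color}$ (case~(2)).

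Next, I would show that this invariant is preserved throughout the interval during which $p_i$ holds the lock, even though the values $LA_k$, $MC_k$, and $\textit{color}$ may be subsequently updated by other processes. The argument splits into two cases according to how a competing process $p_k$ with $LA_k \neq 0$ could arise. If $p_k$ writes its nonzero label after $p_i$ wrote $LA_i$ in Line~\ref{line:label1}, then by the definition of the labeling step, $LA_k = 1 + \max\{LA_\ell \mid MC_\ell = MC_k\}$; if $MC_k = MC_i$, this forces $LA_k > LA_i$ and hence $(LA_k,k) > (LA_i,i)$, establishing case~(1). If instead $p_k$ had already written $LA_k \neq 0$ before $p_i$ exited the loop, then at that moment the loop exit condition guaranteed (1) or (2) for $p_k$, and since neither $LA_i$ nor $MC_i$ changes while $p_i$ holds its lock, the only way the property can fail is through a change to the global $\textit{color}$ variable.

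I would handle the $\textit{color}$ updates with the standard Black-White Bakery argument. The variable $\textit{color}$ is only written in Line~\ref{line:color2} of some $\textit{release}$ by a process $p_\ell$, and the write sets $\textit{color}$ to the complement of $MC_\ell$. In particular, any such release can only flip $\textit{color}$ away from $MC_\ell$, not toward it. Therefore, for case~(2) to break, some $p_\ell$ with $MC_\ell \neq MC_i$ would have to release in a manner that makes $\textit{color} = MC_i$; but by inductive reasoning on the order of lock acquisitions, any $p_\ell$ with $MC_\ell \neq MC_i$ that held the lock concurrently with $p_i$ must itself have been ``ahead'' of the opposite-color phase, so its release can only set $\textit{color}$ to the value opposite to $MC_\ell$, i.e., to $MC_i$, which actually is \emph{the} condition that frees $p_i$ from the $MC_k \neq MC_i$ branch and moves it into the $MC_k = MC_i$ branch, which is covered by case~(1).

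The main obstacle is the inductive step coordinating the two colors: I must rigorously rule out executions in which processes of opposite colors interleave their acquires and releases in a way that leaves $p_i$'s invariant violated. The crux is that a process $p_k$ with $MC_k \neq MC_i$ that still has $LA_k \neq 0$ after $p_i$ acquired must have read $c_k = \textit{color}$ at a time when $\textit{color} \neq MC_i$, so the bakery ordering pushes $p_k$'s ``turn'' behind the current $MC_i$-colored cohort. Formalizing this requires tracking, for each contending $p_k$, the exact point at which it executed Line~\ref{line:cread} relative to $p_i$'s own execution of Lines~\ref{line:cread}--\ref{line:label1}, and invoking the fact that $\textit{color}$ is monotone between two consecutive $\textit{release}$ events of the same color. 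Once this is pinned down, the invariant follows by a straightforward case split on the values of $MC_k$ and $\textit{color}$ at the configuration after $\pi$.
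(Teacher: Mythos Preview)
Your approach differs from the paper's, which is a much shorter direct contradiction. The paper assumes the conclusion fails for some $p_k$ after $\pi$ and argues (implicitly assuming $p_k$ is also contending on $X_j$, since it invokes $r_{kj}$) that the while-loop predicate in Line~\ref{line:label2} must then have been true when $p_i$ last evaluated it, so $p_i$ could not have exited and returned $\true$. There is no separate ``establish at loop exit'' plus ``preserve afterward'' phase; the paper reasons backward from the bad state at $\pi$ to the loop, relying tersely on the order in which the algorithm writes $r_{kj}$, $MC_k$, and $LA_k$. Your forward invariant-preservation decomposition is a legitimate alternative and makes explicit the temporal reasoning the paper leaves implicit (for instance, why $(LA_k,k) < (LA_i,i)$ together with $MC_k = MC_i$ forces $p_k$'s label write to precede $p_i$'s); this is essentially the hidden justification the paper's case~(1) contradiction needs but does not spell out.

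However, your third paragraph contains a genuine error. When $p_\ell$ with $MC_\ell \neq MC_i$ releases, it writes $\textit{color} \gets \neg MC_\ell = MC_i$. For any other $p_k$ still present with $MC_k \neq MC_i$ and $LA_k \neq 0$, this makes $MC_i = \textit{color}$, which \emph{violates} condition~(2); it does not ``move $p_k$ into case~(1)'', since case~(1) requires $MC_k = MC_i$, and $MC_k$ has not changed. To close this gap you would have to show that no opposite-color $p_k$ with nonzero label can persist across such a release while $p_i$ holds its lock, and your paragraph~4 sketch (tracking the time of Line~\ref{line:cread} and invoking ``monotonicity'' of $\textit{color}$) does not establish this --- $\textit{color}$ is a single bit and is not monotone in any useful sense. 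The paper's backward argument for case~(2) is equally terse about why $\textit{color}$ could not have flipped between $p_i$'s loop exit and $\pi$; both proofs are in effect sketches that defer the full Black-White Bakery correctness argument to~\cite{Gadi-Bakery}.
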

\begin{proof}
In order to hold the lock on $X_j$, some process $p_i$ writes $1$ to $r_{ij}$, writes a value, say $W$ to $MC_i$ and reads the Labels
of other processes that have obtained the same color as itself and generates a Label greater by one than the maximum Label read (Line~\ref{line:label1}). 
Observe that until the value of the \emph{color} bit is changed, all processes read the same value $W$. The first process $p_i$
to hold the lock on $X_j$ changes the \emph{color} bit to $B$ when releasing the lock and hence the value read by all subsequent processes will be $B$ until it is changed again. Now consider two cases:  
\begin{enumerate}
\item[(1)]
Assume that there exists a process $p_k$, $k \neq i$, $LA_k \neq 0$ and $MC_k=MC_i$ such that $(LA_k,k) < (LA_i,i)$, but $p_i$
holds a lock on $X_j$ after $\pi$. Thus, $isContended(X_j)$ returns \emph{true} to $p_i$ because $p_k$ writes to $r_{kj}$ (Line~\ref{line:lwrite}) before writing to $LA_k$ (Line~\ref{line:label1}). 
By assumption, $(LA_k,k) < (LA_i,i);LA_k>0$ and $MC_i=MC_k$, but the conditional in Line~\ref{line:label2} returned \emph{true} to $p_i$ 
without waiting for $p_k$ to stop contending on $X_j$---contradiction. 
\item[(2)]
Assume that there exists a process $p_k$, $k \neq i$, $LA_k \neq 0$ and $MC_k \neq MC_i$ such that $MC_i=color$, but $p_i$ holds a lock on $X_j$ after $\pi$. Again, since $LA_k >0$, $isContended(X_j)$ returns \emph{true} to $p_i$, $MC_k \neq MC_i$ and $MC_i=color$, but the conditional in Line~\ref{line:label2} returned \emph{true} to $p_i$ without waiting for $p_k$ to stop contending on $X_j$---contradiction.
\end{enumerate}
\end{proof}
We can thus prove the following theorem:
\begin{theorem}
\label{th:lock2}
Algorithm~\ref{alg:mul2} is an implementation of multi-trylock object
in which every operation is starvation-free and incurs at most four RAWs.
\end{theorem}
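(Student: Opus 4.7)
\textbf{Proof plan for Theorem~\ref{th:lock2}.}
The plan is to establish four things in turn: mutual exclusion, the two specified behaviors of \emph{isContended}, starvation-freedom for all three operations, and the bound of four RAW patterns per operation.

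First, I would establish mutual exclusion by a direct contradiction using Lemma~\ref{lm:mutexsp}. Suppose, after some execution $\pi$, two processes $p_i$ and $p_k$ both hold a lock on the same object $X_j$. Then Lemma~\ref{lm:mutexsp} applied to $p_i$ yields that either (a) $MC_k=MC_i$ and $(LA_k,k)>(LA_i,i)$, or (b) $MC_k\neq MC_i$ and $MC_i\neq color$. The same lemma applied to $p_k$ yields the symmetric constraint with $i,k$ swapped. In case (a), both would say the other's (label, id) is strictly larger, which is impossible by the total order on pairs. In case (b), both $p_i$ and $p_k$ would need to have the $color$ bit equal to the opposite of their own $MC$-value, but since $MC_i\neq MC_k$, only one of them can satisfy this, yielding a contradiction. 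The \emph{isContended} property is almost immediate from inspection: since $r_{tj}$ is written to $1$ at Line~\ref{line:lwrite} before any further step of \emph{acquire}, and is reset to $0$ only in the matching \emph{release}, a process that holds the lock throughout the execution of $\textit{isContended}(X_j)$ will be observed by Line~\ref{line:isl}; conversely, if no process ever sets $r_{tj}$ during the call, the conditional fails and the call returns $\false$.

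The main technical step is starvation-freedom, for which I would adapt the Black-White Bakery argument. The essential invariant is that after any process assigns itself a label in Line~\ref{line:label1}, only finitely many processes of the same color can have smaller (label, id) pairs, and every full cycle through a color value $W\to B\to W$ by the $color$ bit drains the currently preferred color. Concretely, I would argue that (i) labels are bounded by $N$ at any point in time, because a process always picks its color before labeling and only processes still holding their old color contribute to the $\max$; (ii) a process $p_i$ waiting in the loop at Line~\ref{line:label2} can be delayed only by processes which are of its own color and have smaller $(LA_k,k)$, or by processes of the opposite color when $MC_i=color$ (i.e.\ $p_i$ picked the non-preferred color). The first kind of delay is bounded because each such $p_k$ will either complete its critical section (then \emph{release} resets $LA_k$ to $0$) or be observed by Line~\ref{line:label2} as non-contending via $\textit{isContended}$ eventually; the second kind is bounded because once all opposite-color processes drain, the \emph{release} of the last one flips $color$, making $p_i$'s color the preferred one. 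Combining, if no other process contends on any $X_j\in Q$ infinitely long, then every condition in the \emph{while} guard eventually becomes false and $p_i$ returns $\true$. Starvation-freedom of \emph{release} and \emph{isContended} is trivial since both are loop-free.

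Finally, I would count the RAW patterns in each operation, using Definition of (non-overlapping) RAW: in \emph{acquire}, the relevant write--then--read pairs on distinct base objects are (w1) the writes of $r_{ij}$ at Line~\ref{line:lwrite} followed by the read of $color$ at Line~\ref{line:cread}; (w2) the write of $MC_i$ at Line~\ref{line:color} followed by the reads of $\{LA_k\}$ in the max at Line~\ref{line:label1}; (w3) the write of $LA_i$ at Line~\ref{line:label1} followed by the reads in the wait-loop at Line~\ref{line:label2}. The reads \emph{inside} the wait-loop form a single RAW with respect to the preceding $LA_i$-write, giving a total of three non-overlapping RAWs in \emph{acquire}. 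In \emph{release}, the writes of $r_{ij}$ at Line~\ref{line:rwrite} followed by the read of $MC_i$ at Line~\ref{line:color2} constitute one RAW, and no further RAWs occur. The \emph{isContended} operation contains no writes and therefore zero RAWs. Taking the worst case over operations yields the claimed upper bound of four. The main obstacle is the starvation-freedom argument; the rest is bookkeeping on top of Lemma~\ref{lm:mutexsp} and direct code inspection.
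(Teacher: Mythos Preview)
Your proposal is correct and follows essentially the same approach as the paper: mutual exclusion via the same case analysis on Lemma~\ref{lm:mutexsp}, starvation-freedom via the Black-White Bakery argument (your version is in fact more detailed than the paper's), and the same enumeration of RAW patterns in each operation. One minor discrepancy: the paper arrives at the bound of four by summing the three RAWs of \emph{acquire} with the one RAW of \emph{release} (since a client performs both), whereas you take the per-operation maximum, which is only three; both are valid upper bounds, but the paper's reading explains why the stated bound is exactly four rather than three.
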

\begin{proof}
Denote by $L$ the shared object implemented by
Algorithm~\ref{alg:mul2}.

Assume, by contradiction, that $L$ does not provide mutual-exclusion:
there exists an execution $\pi$ after which processes $p_i$ and $p_k$, $k\neq i$
\emph{hold} a lock on the same object, say $X_j$. Since both $p_i$ and $p_k$ have performed the write to $LA_i$ and $LA_k$ resp. in Line~\ref{line:label1}, $LA_i, LA_k >0$. Consider two cases:
\begin{enumerate}
\item[(1)]
If $MC_k=MC_i$, then from Condition $1$ of Lemma~\ref{lm:mutexsp}, we have $(LA_k,k) < (LA_i,i)$ and $(LA_k,k) > (LA_i,i)$---contradiction.
\item[(2)]
If $MC_k \neq MC_i$, then from Condition $2$ of Lemma~\ref{lm:mutexsp}, we have $MC_i \neq color$ and $MC_k \neq color$ which
implies $MC_k=MC_i$---contradiction.
\end{enumerate}

$L$ also ensures progress. If process $p_i$ wants to hold the lock on an object $X_j$ i.e. invokes $\textit{acquire}(Q), X_j \in Q$, it checks if any other process $p_k$ holds the lock on $X_j$. If such a process $p_k$ exists and $MC_k=MC_i$, then clearly $isContended(X_j)$ returns \emph{true} for $p_i$ and $(LA_k,k) < (LA_i,i)$. Thus, $p_i$ fails the conditional in Line~\ref{line:label2} and waits until $p_k$ releases the lock on $X_j$ to return \emph{true}. However, if $p_k$ contends infinitely long on  $X_j$, $p_i$ is also forced to wait indefinitely to be returned \emph{true} from the invocation of $\textit{acquire}(Q)$. The same argument works when $MC_k \neq MC_i$ since when $p_k$ stops contending on $X_j$, $isContended(X_j)$ eventually returns \emph{false} for $p_i$ if $p_k$ does not contend infinitely long on $X_j$.

All operations performed by $L$ are starvation-free. Each process $p_i$ that successfully holds the lock on
an object $X_j$ in an execution $\pi$ invokes $\textit{acquire}(Q), X_j \in Q$, obtains a color and 
chooses a value for $LA_i$ since there is no way to be blocked while writing to $LA_i$. 
The response of operation $\textit{acquire}(Q)$ by $p_i$ is only delayed if there exists a 
concurrent invocation of $\textit{acquire}(Q'),X_j \in Q'$ by $p_k$ in $\pi$. 
In that case, process $p_i$ waits until $p_k$ invokes $\textit{release}(Q)$ and 
writes $0$ to $r_{kj}$ and eventually holds the lock on $X_j$. The implementation of \emph{release} 
and \emph{isContended} are wait-free operations (and hence starvation-free) since they 
contains no unbounded loops or waiting statements.

The implementation of $\textit{isContended}(X_j)$ only reads base
objects. 
The implementation of $\textit{release}(Q)$ writes to a series of base objects (Line~\ref{line:rwrite}) and 
then reads a base object (Line~\ref{line:color2}) incurring a single RAW. 
The implementation of $\textit{acquire}(Q)$ writes to base objects (Line~\ref{line:lwrite}), reads the 
shared bit $color$ (Line~\ref{line:cread})---one RAW, writes to a base object (Line~\ref{line:color}), 
reads the Labels (Line~\ref{line:label1})---one RAW, writes to its own Label and finally performs a 
sequence of reads when evaluating the conditional in Line~\ref{line:label2}---one RAW. 

Thus, Algorithm~\ref{alg:mul2} incurs at most four RAWs.
\end{proof}
\vspace{1mm}\noindent\textbf{Strongly progressive TM from starvation-free multi-trylock.}
We now use the starvation-free multi-trylock to implement a starvation-free strongly progressive opaque TM implementation
with constant expensive synchronization (Algorithm~\ref{alg:gp}).
The implementation is almost identical to the progressive TM implementation $LP$ in Algorithm~\ref{alg:ic}, except
that the function calls to $\textit{acquire}$ and $\textit{release}$ the transaction's write set are replaced
with analogous calls to a multi-trylock object.
\begin{algorithm}[t]
\caption{Strongly progressive, opaque TM: the implementation of $T_k$ executed by $p_i$}\label{alg:gp}
\begin{algorithmic}[1]
  	\begin{multicols}{2}
  	{\footnotesize
	\Part{Shared variables}{
		\State $v_j$, for each t-object $X_j$
		\State $L$, a starvation-free multi-trylock object
	}\EndPart
	\Part{Local variables}{
		\State $\ms{Rset}_k,\ms{Wset}_k$ for every transaction $T_k$;
		\State ~~~~dictionaries storing $\{X_m, v_m\}$
	}\EndPart	

	\Statex
	\Part{\Read$_k(X_j)$}{
		\If{$X_j \not\in \Rset(T_k)$}
		
		\State $[\textit{ov}_j,k_j ] := \Read(v_j)$ 
		\State $\Rset(T_k) := \Rset(T_k)\cup\{X_j,[\textit{ov}_j,k_j]\}$ 
		\If{$\lit{isAbortable}$()}
		      \Return $A_k$ \EndReturn
		\EndIf
		\Return $\textit{ov}_j$ \EndReturn
		
		\Else
		    
		\State $[\textit{ov}_j, \bot] :=\Rset(T_k).\lit{locate}(X_j)$
		\Return $\textit{ov}_j$ \EndReturn
		
		\EndIf
   	}\EndPart
	\Statex
	\Part{\Write$_k(X_j,v)$}{
		\State $\textit{nv}_j := v$
		\State $\Wset(T_k) := \Wset(T_k)\cup\{X_j\}$
		\Return $\ok$ \EndReturn
		
   	}\EndPart
	
	\newpage
	\Part{\TryC$_k$()}{
		\If{$|\Wset(T_k)|= \emptyset$}
			\Return $C_k$ \EndReturn \label{line:spreturn}
		\EndIf
		\State locked $:= L.\textit{acquire}(\Wset(T_k))$\label{line:spacq} 
		\If{isAbortable()} \label{line:spabort3}
			\State $L.\textit{release}(\Wset(T_k))$ 
			\Return $A_k$ \EndReturn
		\EndIf
		\ForAll{$X_j \in \Wset(T_k)$}
	 		 \State $\Write(v_j,(\textit{nv}_j,k))$ \label{line:spwrite}
	 	\EndFor		
		\State $L.\textit{release}(\Wset(T_k))$   		
   		\State return $C_k$
   	 }\EndPart	
   	 \Statex
   	 \Part{Function: {\bf isAbortable()}}{
		\If{$\exists X_j \in \Rset(T_k): X_j\not\in \Wset(T_k)\wedge L.\textit{isContended}(X_j)$}
			\Return $\true$ \EndReturn
		\EndIf
		\If{$\lit{validate}()$} 
			\Return $\true$ \EndReturn
		\EndIf
		\Return $\false$ \EndReturn
	}\EndPart
	 
	}\end{multicols}
  \end{algorithmic}
\end{algorithm}

\begin{theorem}
\label{th:sprogop}
Algorithm~\ref{alg:gp} implements a strongly progressive opaque TM
implementation with starvation-free t-operations that uses invisible reads and employs
at most four RAWs per transaction.
\end{theorem}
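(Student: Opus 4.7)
The plan is to reuse the four-part analysis of Theorem~\ref{th:ic} for the progressive opaque TM $LP$ of Algorithm~\ref{alg:ic}, since Algorithm~\ref{alg:gp} differs from $LP$ only in replacing the ad-hoc single-writer-bit locking by calls to the starvation-free multi-trylock $L$ of Algorithm~\ref{alg:mul2}. For opacity, I would adapt the linearization construction used in the proof of Lemma~\ref{lm:icopaque}: assign the linearization point of a non-aborted t-read to its read event of $v_j$, of a committing updating $\TryC_k$ to the response of $L.\textit{acquire}(\Wset(T_k))$ in Line~\ref{line:spacq}, of a trivially committing empty-write $\TryC_k$ to Line~\ref{line:spreturn}, and of each aborted or read-only transaction to the linearization point of its last non-aborted t-read. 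The legality argument of Claim~\ref{cl:readfrom} then transports almost verbatim: the mutual-exclusion property of $L$ (Theorem~\ref{th:lock2}) plays the role formerly played by Lemma~\ref{lm:mutex}, guaranteeing that every $v_j$ is written sequentially, and any committing writer that would invalidate a concurrent reader is caught either by $L.\textit{isContended}(X_j)$ inside isAbortable (in place of the read of $L_j$ in $LP$) or by the subsequent validation loop over $\Rset(T_k)$.

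For strong progressiveness, progressiveness itself is immediate: $T_k$ can abort only through isAbortable returning $\true$, which happens either because $L.\textit{isContended}(X_j)$ holds for some $X_j\in \Rset(T_k)\setminus \Wset(T_k)$ or because validation of $\Rset(T_k)$ fails, both witnessing a read-write or write-write conflict with a concurrent transaction. Given $Q\in \ms{CTrans}(E)$ with $|CObj_E(Q)|\leq 1$, say with common conflict object $X$, the argument runs as follows. The starvation-free acquire of $L$ always returns $\true$, and the mutual-exclusion property of $L$ on $X$ totally orders the completions of $L.\textit{acquire}$ calls on $X$ issued by transactions in $Q$. Let $T^*\in Q$ be the first such transaction to return from its acquire. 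At the moment $T^*$ enters isAbortable, no other $T\in Q$ has yet written $v_X$ (by mutual exclusion), and by the very definition of $Q$ no transaction outside $Q$ conflicts with $T^*$ on any t-object in $\Dset(T^*)$; hence isAbortable returns $\false$ and $T^*$ commits.

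The remaining three properties are short. Starvation-free TM-liveness holds because the bodies of Read and Write contain no unbounded loops and the only potentially blocking step in $\TryC_k$ is $L.\textit{acquire}$, which is starvation-free by Theorem~\ref{th:lock2}. Invisible reads hold because read-only transactions apply only trivial primitives, and no nontrivial event of an updating transaction reveals information about its read set: the calls to $L.\textit{acquire}$, $L.\textit{release}$, and the writes to $v_j$ all depend only on $\Wset(T_k)$. The RAW bound follows because Read, Write, and isAbortable issue only reads and local writes, so every RAW of $T_k$ lies inside one $L.\textit{acquire}$ call (at most three by the proof of Theorem~\ref{th:lock2}) or one $L.\textit{release}$ call (at most one), giving a total of at most four per transaction.

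The main obstacle I expect is the strong-progressiveness step. Because the starvation-free multi-trylock never returns $\false$, one cannot appeal, as in $LP$, to the fact that losers of acquire abort; instead one must carefully combine mutual exclusion on $X$, the defining property of $\ms{CTrans}(E)$, and a temporal-ordering observation on the first acquire within $Q$ to conclude that that first acquirer encounters neither an earlier writer in $Q$ nor an outside conflict, and therefore passes isAbortable and commits.
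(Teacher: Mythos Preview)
Your proposal is essentially correct and tracks the paper's own proof closely: opacity is reduced to Lemma~\ref{lm:icopaque} with the multi-trylock's mutual-exclusion guarantee replacing Lemma~\ref{lm:mutex}; starvation-free TM-liveness and the four-RAW bound are inherited directly from Theorem~\ref{th:lock2}; and (weak) progressiveness comes from the same enumeration of the two \lit{isAbortable} cases. Your treatment of invisible reads is actually more explicit than the paper's, which merely asserts the property.

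For strong progressiveness your ``first acquirer on $X$ commits'' framing is a clean alternative to the paper's looser case analysis, but it has a small gap you should close. You assume a transaction $T^*\in Q$ with $X\in\Wset(T^*)$ actually reaches $\TryC$ and returns from \textit{acquire}. This need not hold a priori: every such writer could abort during an earlier t-read. The missing observation is that such an abort can only arise from \lit{isAbortable}, and then either (i)~validation fails, in which case some concurrent $T_m\in Q$ has already written $v_X$ and hence committed, or (ii)~\textit{isContended}$(X')$ is true, in which case the contending transaction $T'$ has $r_{t'X'}=1$ and is therefore already inside $\TryC$ with $X'\in\Wset(T')$; since $T'\in Q$ and $X'=X$, this $T'$ supplies your $T^*$. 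With this two-line patch the argument is complete; the paper's own proof is comparably informal at exactly this point.
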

\begin{proof}
\textit{(Opacity)}
Since Algorithm~\ref{alg:gp} is similar to the opaque progressive TM implementation in Algorithm~\ref{alg:ic}, 
it is easy to adapt the proof of Lemma~\ref{lm:icopaque} to prove opacity for this implementation.

\textit{(TM-progress and TM-liveness)}
Every transaction $T_k$ in a TM $M$ whose t-operations are defined by Algorithm~\ref{alg:gp} 
can be aborted in the following scenarios:
\begin{itemize}
\item Read-validation failed in \textit{$read_k$} or $\TryC_k$
\item \textit{$read_k$} or \textit{tryC$_k$} returned $A_k$ because $X_j \in \Rset(T_k)$ is 
locked (belongs to write set of a concurrent transaction)  
\end{itemize}
Since in each of these cases, a transaction is aborted only because of a read-write conflict with a concurrent transaction,
it is easy to see that $M$ is progressive.

To show Algorithm~\ref{alg:gp} also implements a strongly progressive TM, we 
need to show that for every set of transactions that concurrently contend on a single t-object, 
at least one of the transactions is not aborted. 

Consider transactions $T_i$ and $T_k$ that concurrently attempt to execute $\TryC_i$ and $\TryC_k$ 
such that $X_j \in \Wset_i \cup \Wset_k$. Consequently, they both invoke the \emph{acquire} operation 
of the multi-trylock  (Line~\ref{line:spacq}) and thus, from Theorem~\ref{th:lock2}, both $T_i$ and $T_k$ 
must commit eventually.
Also, if validation of a t-read in $T_k$ fails, it means that the t-object is overwritten by some 
transaction $T_i$ such that $T_i$ precedes $T_k$, implying at least one of the transactions commit. 
Otherwise, if some t-object $X_j \in \Rset(T_k)$ is locked and returns \emph{abort} since the t-object is 
in the write set of a concurrent transaction $T_i$. While it may still be possible that $T_i$ returns $A_i$ 
after acquiring the lock on $\Wset_i$, strong progressiveness only guarantees progress for transactions that 
conflict on at most one t-object. Thus, in either case, for every set of transactions that conflict on at most
one t-object, at least one transaction is not forcefully aborted.

Starvation-free TM-liveness follows from the fact that the multi try-lock we use in the implementation of $M$
provides starvation-free \emph{acquire} and \emph{release} operations.

\textit{(Complexity)}
Any process executing a transaction $T_k$ holds the lock on $\Wset(T_k)$ only once during \emph{tryC$_k$}. If
$|\Wset(T_k)|=\emptyset$, then the transaction simply returns $C_k$
incurring no RAW's. Thus, from Theorem~\ref{th:lock2}, Algorithm~\ref{alg:gp} incurs at most four RAWs per updating transaction and no RAW's are performed in read-only transactions.
\end{proof}
%
%
%
\section{On the cost of permissive opaque TMs}
\label{sec:perm}
We have shown that (strongly) progressive TMs that allow a transaction to be aborted only on read-write conflicts
have constant RAW/AWAR complexity. However, not aborting on conflicts may not necessarily affect TM-correctness.
Ideally, we would like to derive TM implementations that are \emph{permissive}, in the sense that a transaction
is aborted only if committing it would violate TM-correctness.
\begin{definition}[Permissiveness]
\label{def:perm}
A TM implementation $M$ is \emph{permissive with respect to TM-correctness $C$} if
for every history $H$ of $M$ such that $H$ ends with a response
$r_k$ and 
replacing $r_k$ with some $r_k\neq A_k$ gives a history that satisfies $C$, 
we have $r_k \neq A_k$.
\end{definition}
Therefore, permissiveness does not allow a transaction to abort, unless committing 
it would violate the execution's correctness. 

We first show that a transaction in a permissive opaque implementation can only be forcefully aborted if it  
tries to commit:
\begin{lemma}
\label{lem:perm-tryC}
Let a TM implementation $M$ be permissive with respect to opacity. 
If a transaction $T_i$ is forcefully aborted executing 
a t-operation $op_i$, then $op_i$ is $\TryC_i$.
\end{lemma}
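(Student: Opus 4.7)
\textbf{Proof plan for Lemma~\ref{lem:perm-tryC}.}

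The plan is to argue by contradiction: assume $T_i$ is forcefully aborted during a t-operation $op_i\in\{\Read_i(X),\Write_i(X,v)\}$, and construct, for each case, a non-abort response $r_i\neq A_i$ such that replacing $A_i$ by $r_i$ in the history yields an opaque history. Permissiveness (Definition~\ref{def:perm}) will then be violated. Let $H$ denote the history ending with the forced $A_i$, and let $H'$ denote the prefix of $H$ ending just before the response event of $op_i$. By prefix-closure of opacity (Definition~\ref{def:opaque}), $H'$ is opaque, so let $S'$ be a final-state serialization of $H'$.

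First I would dispose of the write case. Suppose $op_i=\Write_i(X,v)$ and consider the history $H_1$ obtained from $H$ by replacing the terminal $A_i$ with $\ok$. I would exhibit a final-state opaque serialization of $H_1$ by taking the completion of $H_1$ in which $T_i$ is extended with $\TryC_i\cdot A_i$ and placing $T_i$ at the end of $S'$ (with its $\Write_i(X,v)$ aborted). Since aborted transactions do not affect legality of other transactions' reads and $T_i$ itself performs no read that must be legalized, the resulting t-sequential history is legal and respects real-time order; thus $H_1$ is final-state opaque. Prefix-closure then gives that $H_1$ is opaque, contradicting permissiveness.

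The read case is the main obstacle, since one must exhibit at least one value $v$ such that returning $v$ from $\Read_i(X)$ preserves opacity. Suppose $op_i=\Read_i(X)$. Consider the latest value $v^*$ written to $X$ in $S'$ by a committed transaction (well-defined because the distinguished initial transaction $T_0$ writes to every t-object), or the latest value $T_i$ itself wrote to $X$ within $H'$ if any such local write exists. Let $H_2$ be obtained from $H$ by replacing the $A_i$ with the response $v^*$. I would construct a serialization $S_2$ of $H_2$ by taking the completion of $H_2$ obtained by appending $\TryC_i\cdot A_i$ to $T_i$ and placing $T_i$ at the end of $S'$. Because $T_i$ is placed last and aborts immediately, (i) real-time order is preserved (any transaction that completed before $op_i$ already precedes $T_i$ in $S'$, and $T_i$ has no transaction following it in real time in the completion), and (ii) legality of $\Read_i(X)\to v^*$ in $S_2$ holds by the choice of $v^*$ as the latest written value in the local serialization. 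All previously legal reads remain legal since we only appended $T_i$. Hence $H_2$ is final-state opaque, and applying the argument to every prefix (using prefix-closure of the existing opacity of $H'$) shows $H_2$ is opaque, again contradicting permissiveness.

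Combining the two cases, the only t-operation at which $T_i$ can be forcefully aborted is $\TryC_i$, which proves the lemma. The delicate step is the read case, where one must justify that $v^*$ is a well-defined ``latest written value'' for $T_i$'s read in the extended serialization; this relies on the convention that $T_0$ initializes every t-object and on the fact that appending an aborting transaction at the end of a legal t-sequential history preserves legality.
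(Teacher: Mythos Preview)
Your overall strategy matches the paper's: argue by contradiction, and for each of $\Write_i$ and $\Read_i$ exhibit a non-abort response that keeps the history opaque, violating permissiveness. The gap is in how you build the serialization. You repeatedly propose to place $T_i$ ``at the end of $S'$'', but $T_i$ is \emph{already} in $S'$: opacity serializes every transaction, including incomplete ones (via the completion with $\TryC_i\cdot A_i$). Your construction therefore \emph{moves} $T_i$ from its existing position to the end. This can break legality of $T_i$'s earlier t-reads. Concretely, suppose $T_i$ first performed $\Read_i(Y)\to 0$, after which a transaction $T_j$ that writes $1$ to $Y$ committed; a valid $S'$ may serialize $T_i$ before $T_j$, making $\Read_i(Y)\to 0$ legal. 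If you now push $T_i$ after $T_j$, that earlier read becomes illegal, and the proposed $S_2$ is not a serialization of $H_2$. Relatedly, your write-case claim that ``$T_i$ itself performs no read that must be legalized'' is false in general: $T_i$ may well have earlier t-reads.

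The fix, which is exactly what the paper does, is to leave $T_i$ at its existing position in $S'$ and only append the new t-operation to $T_i$'s local sequence there. For $op_i=\Write_i(X,v)$, the same $S'$ already serves (the added write belongs to an aborted transaction and affects no legality). For $op_i=\Read_i(X)$, choose $v^*$ to be the latest value of $X$ written by a committed transaction \emph{preceding $T_i$'s position in $S'$} (or $T_i$'s own earlier write to $X$, which you correctly noted); then appending $\Read_i(X)\to v^*$ to $T_i$ in place yields a legal serialization of $H'\cdot v^*$, and since every proper prefix of $H'\cdot v^*$ is a prefix of the opaque $H'$, opacity of $H'\cdot v^*$ follows.
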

\begin{proof}
Suppose, by contradiction, that there exists a history $H$ of $M$ such that 
some $op_i\in\{\Read_i,\Write_i\}$ executed within a transaction $T_i$ returns $A_i$.
Let $H_0$ be the shortest prefix of $H$ that ends just before $op_i$ returns.
By definition, $H_0$ is opaque and any history $H_0\cdot r_i$ where $r_i\neq A_i$ is not opaque.
Let $H_0'$ be the serialization of $H_0$.

If  $op_i$ is a write, then $H_0\cdot\ok_i$ is also opaque - no write operation of the incomplete transaction $T_i$ appears in $H_0'$ and, thus, $H_0'$ is also a serialization of $H_0\cdot\ok_i$. 

If  $op_i$ is a $\Read(X)$ for some t-object $X$, then we can construct a serialization of $H_0\cdot v$ where $v$ is the value of $X$ written by the last committed transaction in $H_0'$ preceding $T_i$ or the initial value of $X$ if there is no such transaction. 
It is easy to see that $H_0"$ obtained from $H_0'$ by adding $\Read(X)\cdot v$ at the end of $T_i$ is a serialization of 
$H_0\cdot\Read(X)$.
In both cases, there exists a non-$A_i$ response $r_i$ to $op_i$ that preserves opacity of  $H_0\cdot r_i$, and, thus,
the only t-operation that can be forcefully aborted in an execution of $M$ is $\TryC$.  
\end{proof}
We now show that an execution of a transaction 
in a \emph{permissive opaque} TM implementation (providing starvation-free TM-liveness) may require to perform
at least one RAW/AWAR pattern \emph{per} t-read.
\begin{theorem}
\label{th:perm}
Let $M$ be a permissive opaque TM implementation providing starvation-free TM-liveness.
Then, for any $m\in\Nat$, $M$ has an execution in which some transaction
performs $m$ t-reads such that the execution of each t-read contains 
at least one RAW or AWAR.
\end{theorem}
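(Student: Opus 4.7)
The plan is to exhibit a single execution $E$ of $M$ in which one reading transaction $T_0$ performs $m$ t-reads and to prove, by contradiction, that each $R_0(X_i)$ in $E$ contains a RAW or AWAR, reusing the same obliviousness-based swapping argument that underlies Theorem~\ref{th:sl}.

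\textbf{Step 1: Building the target execution.} I construct
\[
E \;=\; \sigma_1\cdot R_0(X_1)\cdot \sigma_2\cdot R_0(X_2)\cdots \sigma_m\cdot R_0(X_m)\cdot \TryC_0,
\]
where each $\sigma_i$ is the complete t-sequential execution of a transaction $T_i$ writing a fresh value $nv_i$ to $X_i$ (and possibly to later t-objects, chosen to pin the serialization position of $T_0$) before committing, and each $R_0(X_i)$ is the complete execution of $T_0$'s $i$-th t-read. Starvation-free TM-liveness guarantees that, from each quiescent prefix, $T_i$ can complete solo and each subsequent $R_0(X_i)$ returns some response. Lemma~\ref{lem:perm-tryC} rules out an abort response for any $R_0(X_i)$, so $E$ is well-defined. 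The writers $T_{i+1},\dots,T_m$ are tailored so that the only legal serialization of $E$ places $T_0$ strictly after $T_i$; hence opacity forces $R_0(X_i)\to nv_i$ in $E$.

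\textbf{Step 2: The swapping argument.} Suppose, towards a contradiction, that $R_0(X_i)$ in $E$ contains no RAW and no AWAR. As in the proof of Theorem~\ref{th:sl}, its events split into a read phase $r$ (only trivial primitives) followed by a write phase $w$ (only non-conditional nontrivial primitives). Write $E=\alpha\cdot r\cdot w\cdot\beta$, where $\alpha$ is the prefix of $E$ up to and including $\sigma_i$ and $\beta$ is the suffix starting with $\sigma_{i+1}$. I consider the shifted execution
\[
E' \;=\; \alpha'\cdot r\cdot \sigma_i\cdot w\cdot \beta,
\]
where $\alpha'$ is $\alpha$ with $\sigma_i$ removed. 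Because $r$ applies only trivial primitives, no event in $r$ contends with $\sigma_i$, so the shift is legitimate and $T_i$ still commits inside $E'$ writing $nv_i$ to $X_i$. Because $r$ precedes every write of $w$, the return value of $R_0(X_i)$ in $E'$ is determined entirely by the state of base objects read in $r$, which coincides with the state after $\alpha'$; hence $R_0(X_i)$ returns in $E'$ the same value it would return in a solo run of $T_0$ starting from $\alpha'$, namely the stale value $v_i$ written by some earlier committed transaction (or the initial value of $X_i$).

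\textbf{Step 3: Contradiction.} In $E'$, $T_i$ is fully committed strictly before the response of $R_0(X_i)$, and the later writers $T_{i+1},\dots,T_m$ force $T_0$ to be serialized after $T_i$, exactly as in $E$. Consequently any opaque serialization of $E'$ must have $R_0(X_i)$ return $nv_i$. Since by Lemma~\ref{lem:perm-tryC} a permissive opaque TM cannot abort a read, and since the swapping argument pins the response to the stale value $v_i \neq nv_i$, permissiveness and opacity are simultaneously violated in $E'$. This contradicts the assumption that $M$ is a permissive opaque TM, so $R_0(X_i)$ must contain a RAW or AWAR. Applying this for every $i\in\{1,\dots,m\}$ completes the proof.

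\textbf{Main obstacle.} The delicate point is the construction in Step~1: the writers $T_1,\dots,T_m$ must be arranged so that the serialization position of $T_0$ in both $E$ and in every shifted execution $E'$ is essentially forced to lie after $T_i$, ruling out the trivial ``$T_0$-before-$T_i$'' serialization that would absorb the stale read. The natural way is to let each $T_j$ with $j>i$ also touch $X_i$ (or have $T_0$ perform additional later reads whose responses, forced by opacity, anchor $T_0$'s position), and to verify carefully that starvation-free TM-liveness still applies to each step of $E'$ so that the shifted execution is legal. Once this bookkeeping is in place, the obliviousness argument of Theorem~\ref{th:sl} extends verbatim to each of the $m$ reads.
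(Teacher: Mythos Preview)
Your swapping argument in Step~2 is broken in a direction that cannot be repaired within this construction. You remove $\sigma_i$ from before the read phase $r$ and reinsert it after $r$, writing $E'=\alpha'\cdot r\cdot\sigma_i\cdot w\cdot\beta$. The claim that ``$r$ applies only trivial primitives, so the shift is legitimate'' establishes only that $T_i$ cannot distinguish $\alpha'\cdot r$ from $\alpha'$; it says nothing about $T_0$. In $E$, the events of $r$ are issued by $T_0$ based on values read from the configuration after $\alpha=\alpha'\cdot\sigma_i$; after $\alpha'$ those values are different (precisely because $\sigma_i$ wrote to base objects encoding $X_i$), so $T_0$'s local state diverges and there is no reason $T_0$ performs the same sequence $r$, let alone the same $w$ or $\beta$, after $\alpha'$. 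The Theorem~\ref{th:sl} swap works in the \emph{opposite} direction: one inserts a perturbing fragment \emph{after} the read phase without touching anything before it, so the reader's local trajectory up to and including $\pi_s$ is literally unchanged. Your shift removes a writer from the reader's past, which is exactly what that argument cannot tolerate.

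The paper's proof takes a genuinely different route that avoids this. It keeps the prefix of the reader $T_1$ fixed and introduces a third transaction $T_3$ that has already performed $\Read_3(X_1)$ returning the initial value, pinning $T_3<_S T_2<_S T_1$ in any serialization. For each $\Read_1(X_k)$ one then \emph{inserts} (not moves) the fragment ``$T_3$ writes $X_k$ and tries to commit'' after the read phase $\pi_s$ of $\Read_1(X_k)$. Since $\pi_s$ is invisible, $T_3$ cannot detect $T_1$'s read, and permissiveness forces $T_3$ to commit; but $\pi_w\cdot\pi_f$ is oblivious to $T_3$'s insertion (no RAW/AWAR), so $T_1$ misses $T_3$'s write to $X_k$, yielding the cycle $T_1<_S T_3<_S T_2<_S T_1$. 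The pre-existing read by $T_3$ is what supplies the serialization constraint you are trying to manufacture via ``later writers touching $X_i$''; without it, there is no way to force a contradiction while keeping the reader's prefix intact.
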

\begin{proof}
Consider an execution $E$ of $M$ consisting of 
transactions $T_1$, $T_2$, $T_3$ as shown in Figure~\ref{fig:H}:
$T_3$ performs a t-read of $X_1$, then $T_2$ performs a t-write on $X_1$ and commits, 
and finally $T_1$ performs a series of reads from objects $X_1,\ldots,X_m$.
Since the implementation is permissive, no transaction can be
forcefully aborted in $E$, and 
the only valid serialization of this execution is $T_3$, $T_2$, $T_1$.  
Note also that the execution generates a sequential history: 
each invocation of a t-operation is immediately followed by a matching response.
Thus, since we assume starvation-freedom as a liveness property, such an execution exists. 

We consider $\Read_1(X_k)$, $2 \leq k \leq m$ in execution $E$.
Imagine that we modify the execution $E$ as follows. Immediately after $\Read_1(X_k)$ executed 
by $T_1$ we add $\Write_3(X,v)$, and $\TryC_3$ executed by $T_3$
(let $TC_3(X_k)$ denote the complete execution of $W_3(X_k,v)$ followed by $\TryC_3$).
Obviously, $TC_3(X_k)$ must return abort: neither $T_3$ can be serialized before $T_1$ nor
$T_1$ can be serialized before $T_3$.
On the other hand if $TC_3(X_k)$ takes place just before $\Read_1(X_k)$, then 
$TC_3(X_k)$ must return commit but $\Read_1(X_k)$ must return the value written by $T_3$.
In other words, $\Read_1(X_k)$ and $TC_3(X_k)$ are \emph{strongly non-commutative}~\cite{AGK11-popl}:
both of them see the difference when ordered differently.    
As a result, intuitively, $\Read_1(X_k)$ 
needs to perform a RAW or AWAR to make sure that the order
of these two ``conflicting'' operations is properly maintained. We formalize this argument below.

Consider a modification $E'$ of $E$, in which $T_3$ performs $\Write_3(X_k)$ immediately after $\Read_1(X_k)$
and then tries to commit.
In any serialization of $E'$, $T_3$ must precede $T_2$ ($\Read_3(X_1)$ returns the initial value of $X_1$)
and $T_2$ must precede $T_1$ to respect the real-time order of transactions.
The execution of $\Read_1(X_k)$ does not modify base objects, hence, $T_3$
does not observe $\Read_1(X_k)$ in $E'$.
Since $M$ is permissive, $T_3$ must commit in $E'$. 
But since $T_1$ performs $\Read_1(X_k)$ before $T_3$ commits and $T_3$ 
updates $X_k$, we also have $T_1$ must precede $T_3$ in any serialization.
Thus, $T_3$ cannot precede $T_1$ in any serialization---contradiction. 
Consequently, each $\Read_1(X_k)$ must perform a write to a base object.

Let $\pi$ be the execution fragment that represents the complete execution of $\Read_1(X_k)$
and $E^k$, the prefix of $E$ up to (but excluding) the invocation of $\Read_1(X_k)$.

Clearly, $\pi$ contains a write to a base object. Let $\pi_w$ be 
the first write to a base object in $\pi$. Thus, $\pi$ can be represented 
as $\pi_s\cdot \pi_w\cdot \pi_f$.
Suppose that $\pi$ does not contain a RAW or AWAR. 
Consider the execution fragment $E^k\cdot \pi_s\cdot \rho$, where $\rho$ is the complete execution 
of $TC_3(X_k)$ by $T_3$. Such an execution of $M$ exists since $\pi_s$ 
does not perform any base object write, hence, $E^k\cdot \pi_s \cdot\rho$ 
is indistinguishable to $T_3$ from $E^k\cdot \rho$. 

Since, by our assumption, $\pi_w\cdot \pi_f$ contains no RAW, 
any read performed in $\pi_w\cdot \pi_f$ 
can only be applied to base objects previously written in $\pi_w\cdot \pi_f$. 
Since $\pi_w$ is not an AWAR, $E^k\cdot \pi_s \cdot\rho \cdot \pi_w\cdot \pi_f$ 
is an execution of $M$ since 
it is indistinguishable to $T_1$ from $E^k\cdot \pi$.
In $E^k\cdot \pi_s \cdot\rho \cdot \pi_w\cdot \pi_f$, $T_3$ commits (as in $\rho$) but 
$T_1$ ignores the value written by $T_3$ to $X_k$.  
But there exists no serialization that justifies this execution---contradiction to the assumption that $M$ is opaque.
Thus, each $\Read_1(X_k)$, $2\leq k \leq m$ must contain a RAW/AWAR.

Note that since all t-reads of $T_1$ are executed sequentially, all these RAW/AWAR patterns are
pairwise non-overlapping, which completes the proof. 
\end{proof}
\begin{figure}[t]
\begin{center}
\scalebox{0.95}{
\begin{tikzpicture}

\node (r1) at (2.3,0) {};
\node (rm) at (9,0) {};
\node (w1) at (0,-1) {};

\node (r3) at (-1.8,-2) {};

\draw (r1) node [below] {\tiny {$R_1(X_1)\rightarrow nv$}}
   (rm) node [below] {\tiny {$R_1(X_{m})$}};
   
\draw (w1) node [below] {\tiny {$W_2(X_1,nv)$}};
   
\draw 
(r3) node [below] {\tiny {$R_3(X_1)\rightarrow v$}};

\begin{scope}
\draw[loosely dashed] (4,-0.3) to (7.5,-0.3);
\end{scope}
\begin{scope}   
\draw [|-,thick] (1.5,0) node[left] {{\tiny $T_1$}} to (10,0);
\draw [-|,thick] (1.5,0) to (3,0);
\draw [|-|,thick] (8.5,0) to (10,0);
\end{scope}

\begin{scope}
\draw [|-|,thick] (-.8,-1) node[left] {{\tiny $T_2$}} to (.8,-1) node[right] {{\tiny $C_2$}} ; 
\end{scope}  

\begin{scope}
\draw [|-,thick] (-2.5,-2) node[left] {{\tiny $T_3$}} to (11,-2) ; 
\draw [-|,thick] (-2.5,-2) to (-1,-2);
\end{scope} 
\end{tikzpicture}
}
\end{center}
\caption{\small {Execution $E$ of a permissive, opaque TM: $T_2$ and $T_3$ force $T_1$ to perform 
a RAW/AWAR in each $R_1(X_k)$, $2 \leq k \leq m$}}
\label{fig:H}
\end{figure}
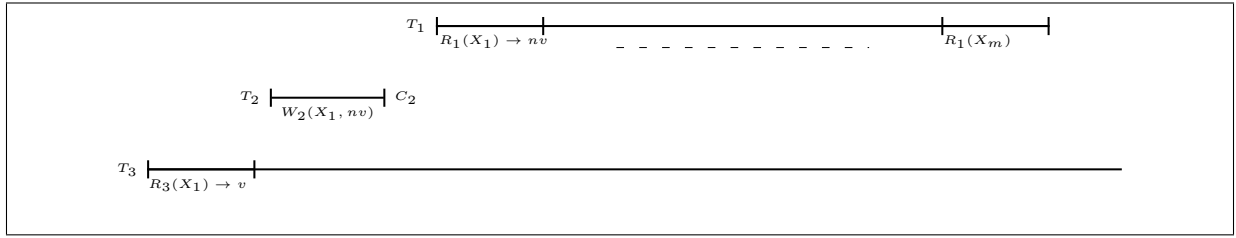
%
%
\section{Related work and Discussion}
\label{sec:p3c2disc}
In this section, we summarize the complexity bounds for blocking TMs presented in this chapter and identify some
open questions.

\vspace{1mm}\noindent\textbf{Sequential TMs.}
Theorem~\ref{th:iclb} improves the read-validation step-complexity lower bound~\cite{GK09-progressiveness,tm-book} 
derived for \emph{strict-data
partitioning} (a very strong version of DAP) and \emph{invisible reads}.
In a \emph{strict data partitioned} TM, the set of base objects used by the TM is split into 
disjoint sets, each storing information only about a single data item.
Indeed, every TM implementation that is strict data-partitioned satisfies weak DAP, but not vice-versa (cf. Section~\ref{sec:dap}).
The definition of invisible reads assumed in \cite{GK09-progressiveness,tm-book}
requires that a t-read operation does not apply nontrivial events in any execution.
Theorem~\ref{th:iclb} however, assumes \emph{weak}
invisible reads, stipulating that t-read operations of a transaction $T$ do not apply nontrivial events only when $T$
is not concurrent with any other transaction.
We believe that the TM-progress and TM-liveness restrictions
as well as the definitions of DAP and invisible reads we consider for this result are the weakest possible assumptions
that may be made. To the best of our knowledge, these assumptions cover every TM implementation that is subject
to the validation step-complexity~\cite{HLM+03,norec,DSS06}.

\begin{table}[h]
      \begin{center}
      \scalebox{0.85}[0.85]{
      
     \begin{tabular}{l|l|l|l|l|l|l}
	TM-correctness & TM-liveness & DAP & Invisible reads & Read-write & Complexity\\ \hline
	Opacity & ICF  & weak & yes & yes & $\Theta (|\Rset|^2)$ step-complexity \\ \hline
	Strict serializability & ICF  & weak & yes & yes & $\Theta (|\Rset|)$ step-complexity for tryCommit \\ \hline
	Opacity & WF  & strict & yes & yes & $O(1)$ RAW/AWAR, $O(1)$ stalls for t-reads \\ \hline
	Opacity & starvation-free  & strict &  &  & $\Theta(|\Wset|)$ protected data
	
     \end{tabular}
     }
     \end{center}
     \caption{Complexity bounds for progressive TMs.\label{table:prog}}
\end{table}
\begin{table}[h]
      \begin{center}
      \scalebox{0.85}[0.85]{
      
     \begin{tabular}{l|l|l|l|l|l|l}
	TM-correctness & TM-liveness & Invisible reads & rmw primitives & Complexity\\ \hline
	Strict serializability & WF  &  & read-write & Impossible \\ \hline
	Strict serializability &   &  & read-write, conditional & $\Omega (n\log n)$ RMRs \\ \hline
	Opacity & starvation-free   & yes & read-write & $O(1)$ RAW/AWAR
	
     \end{tabular}
     }
     \end{center}
     \caption{Complexity bounds for strongly progressive TMs.\label{table:sprog}}
\end{table}
\vspace{1mm}\noindent\textbf{Progressive TMs.}
We summarize the known complexity bounds for progressive (and resp. strongly progressive) TMs 
in Table~\ref{table:prog} (and resp. Table~\ref{table:sprog}).
Some questions remain open. Can the tight bounds on step complexity for progressive TMs in 
Corollaries~\ref{cr:pg1} and \ref{cr:pg2} be extended to strongly progressive TMs?

Guerraoui and Kapalka~\cite{tm-book} proved that it is impossible to implement strictly serializable strongly progressive TMs
that provide \emph{wait-free} TM-liveness (every t-operation returns a matching response within a finite number of steps)
using only read and write primitives. Algorithm~\ref{alg:gp} describes one means to circumvent this impossibility
result by describing an opaque strongly progressive TM implementation from read-write base objects
that provides starvation-free TM-liveness.

We conjecture that the lower bound of Theorem~\ref{th:mutex-tm-rmr} on the RMR complexity is tight. 
Proving this remains an interesting open question.

\vspace{1mm}\noindent\textbf{Permissive TMs.}
Crain et al.~\cite{michel-permissive} proved that a
permissive opaque TM implementation cannot maintain invisible reads, which
inspired the derivation of our lower bound on RAW/AWAR complexity in Section~\ref{sec:perm}.
Furthermore, \cite{michel-permissive} described a permissive VWC TM implementation that ensures
that t-read operations do not perform nontrivial primitives, but the tryCommit invoked by a read-only transaction
perform a linear (in the size of the transaction's data set) number of RAW/AWARs.
Thus, an open question is whether there exists a linear lower bound on RAW/AWAR complexity for weaker (than opacity)
TM-correctness properties of VWC and TMS1.
%
%
\chapter{Complexity bounds for non-blocking TMs}
\label{ch:p3c3}
\section{Overview}
\label{sec:p3c3s0}
In the previous chapter, we presented complexity bounds for \emph{lock-based} blocking TMs.
Early TM implementations such as the popular \emph{DSTM}~\cite{HLM+03} however avoid using locks and 
provide non-blocking TM-progress.
In this chapter, we present several complexity bounds for non-blocking TMs exemplified by
\emph{obstruction-freedom}, possibly the weakest non-blocking
progress condition~\cite{HLM03,HS11-progress}.

We first establish that it is impossible to
implement a strictly serializable obstruction-free TM that provides both weak DAP and \emph{read invisibility}.
Indeed, popular obstruction-free TMs like \emph{DSTM}~\cite{HLM+03} and \emph{FSTM}~\cite{fraser} are weak DAP, but
use \emph{visible} reads for aborting pending writing transactions.
Secondly, we show that a t-read operation in a $n$-process strictly serializable obstruction-free TM
implementation may incur $\Omega(n)$ stalls.
Specifically, we prove that every such TM implementation has a $(n-1)$-stall execution for an invoked t-read operation.
Thirdly, we prove that any RW DAP opaque
obstruction-free TM implementation has an execution in which a
read-only transaction incurs $\Omega(n)$ non-overlapping \emph{RAWs} or \emph{AWARs}. 
Finally, we show that there exists a considerable complexity gap between blocking (\emph{i.e.}, progressive) and 
non-blocking (\emph{i.e.}, obstruction-free) TM implementations. We use the progressive opaque TM implementation
$LP$ described in Algorithm~\ref{alg:ic} (Chapter~\ref{ch:p3c2}) to establish a linear separation in memory stall
and RAW/AWAR complexity between blocking and non-blocking TMs.

Formally, let $\mathcal{OF}$ denote the class of TMs that provide OF TM-progress and OF TM-liveness.

\vspace{1mm}\noindent\textbf{Roadmap of Chapter~\ref{ch:p3c3}.}
In Section~\ref{sec:p3c3ss1}, we show that no strictly serializable TM in $\mathcal{OF}$ can be
weak DAP and have invisible reads.
In Section~\ref{sec:p3c3ss2}, we determine stall complexity bounds for
strictly serializable TMs in $\mathcal{OF}$, and in Section~\ref{sec:p3c3ss3}, we
present a linear (in $n$) lower bound on the RAW/AWAR complexity for RW DAP opaque TMs in
$\mathcal{OF}$. 
In Section~\ref{sec:oftmalgos}, we describe two obstruction-free algorithms: 
a RW DAP opaque TM and a weak DAP (but not RW DAP) opaque TM.
In Section~\ref{sec:p3c3s3}, we present complexity gaps between blocking and non-blocking TM implementations.
We conclude this chapter with a discussion on related work and open questions concerning obstruction-free TMs.
%
\section{Impossibility of weak DAP and invisible reads}
\label{sec:p3c3ss1}
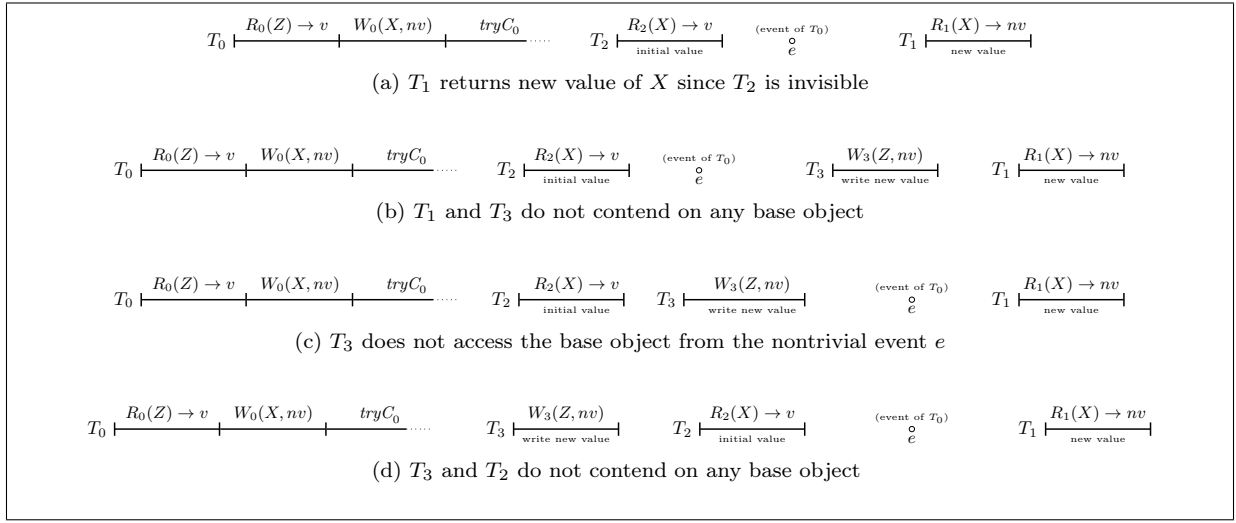
\begin{figure*}[t]
\begin{center}
	\subfloat[$T_1$ returns new value of $X$ since $T_2$ is invisible\label{sfig:ofinv-1}]{\scalebox{0.7}[0.7]{\begin{tikzpicture}
\node (r1) at (1,0) [] {};
\node (w1) at (3,0) [] {};
\node (c1) at (5,0) [] {};

\node (r2) at (8.2,0) [] {};

\node (e) at (10.5,0) [] {};

\node (r3) at (14,0) [] {};

\draw (r1) node [above] {\small {$R_0(Z)\rightarrow v$}};
\draw (w1) node [above] {\small {$W_0(X,nv)$}};
\draw (c1) node [above] {\small {$\TryC_0$}};

\draw (r2) node [above] {\small {$R_2(X)\rightarrow v$}};
\draw (r2) node [below] {\tiny {initial value}};

\draw (e) node [above] {\tiny {(event of $T_0$)}};
\draw (e) node [below] {\small {$e$}};

\draw (r3) node [above] {\small {$R_1(X)\rightarrow nv$}};
\draw (r3) node [below] {\tiny {new value}};

\begin{scope}   
\draw [|-|,thick] (0,0) node[left] {$T_0$} to (2,0);
\draw [-|,thick] (2,0) node[left] {} to (4,0);
\draw [-,thick] (4,0) node[left] {} to (5.5,0);
\draw [-,dotted] (5.5,0) node[left] {} to (6,0);
\draw [|-|,thick] (7.2,0) node[left] {$T_2$} to (9.2,0);
\draw  (10.5,0) circle [fill, radius=0.05]  (10.5,0);
\draw [|-|,thick] (13,0) node[left] {$T_1$} to (15,0);
\end{scope}
\end{tikzpicture}}}
        \\
        \vspace{2mm}
	\subfloat[$T_1$ and $T_3$ do not contend on any base object\label{sfig:ofinv-2}]{\scalebox{0.7}[0.7]{\begin{tikzpicture}
\node (r1) at (1,0) [] {};
\node (w1) at (3,0) [] {};
\node (c1) at (5,0) [] {};

\node (r2) at (8.2,0) [] {};

\node (e) at (10.5,0) [] {};

\node (r3) at (17.5,0) [] {};

\node (w3) at (14,0) [] {};

\draw (r1) node [above] {\small {$R_0(Z)\rightarrow v$}};
\draw (w1) node [above] {\small {$W_0(X,nv)$}};
\draw (c1) node [above] {\small {$\TryC_0$}};

\draw (r2) node [above] {\small {$R_2(X)\rightarrow v$}};
\draw (r2) node [below] {\tiny {initial value}};

\draw (e) node [above] {\tiny {(event of $T_0$)}};
\draw (e) node [below] {\small {$e$}};

\draw (r3) node [above] {\small {$R_1(X)\rightarrow nv$}};
\draw (r3) node [below] {\tiny {new value}};

\draw (w3) node [above] {\small {$W_3(Z,nv)$}};
\draw (w3) node [below] {\tiny {write new value}};

\begin{scope}   
\draw [|-|,thick] (0,0) node[left] {$T_0$} to (2,0);
\draw [-|,thick] (2,0) node[left] {} to (4,0);
\draw [-,thick] (4,0) node[left] {} to (5.5,0);
\draw [-,dotted] (5.5,0) node[left] {} to (6,0);
\draw [|-|,thick] (7.2,0) node[left] {$T_2$} to (9.2,0);
\draw  (10.5,0) circle [fill, radius=0.05]  (10.5,0);
\draw [|-|,thick] (13,0) node[left] {$T_3$} to (15,0);
\draw [|-|,thick] (16.5,0) node[left] {$T_1$} to (18.5,0);
\end{scope}
\end{tikzpicture}}}
	\\
	\vspace{2mm}
	\subfloat[$T_3$ does not access the base object from the nontrivial event $e$\label{sfig:ofinv-3}]{\scalebox{0.7}[0.7]{\begin{tikzpicture}
\node (r1) at (1,0) [] {};
\node (w1) at (3,0) [] {};
\node (c1) at (5,0) [] {};

\node (r2) at (8.2,0) [] {};

\node (w3) at (11.5,0) [] {};

\node (r3) at (17.5,0) [] {};

\node (e) at (14.5,0) [] {};

\draw (r1) node [above] {\small {$R_0(Z)\rightarrow v$}};
\draw (w1) node [above] {\small {$W_0(X,nv)$}};
\draw (c1) node [above] {\small {$\TryC_0$}};

\draw (r2) node [above] {\small {$R_2(X)\rightarrow v$}};
\draw (r2) node [below] {\tiny {initial value}};

\draw (e) node [above] {\tiny {(event of $T_0$)}};
\draw (e) node [below] {\small {$e$}};

\draw (r3) node [above] {\small {$R_1(X)\rightarrow nv$}};
\draw (r3) node [below] {\tiny {new value}};

\draw (w3) node [above] {\small {$W_3(Z,nv)$}};
\draw (w3) node [below] {\tiny {write new value}};

\begin{scope}   
\draw [|-|,thick] (0,0) node[left] {$T_0$} to (2,0);
\draw [-|,thick] (2,0) node[left] {} to (4,0);
\draw [-,thick] (4,0) node[left] {} to (5.5,0);
\draw [-,dotted] (5.5,0) node[left] {} to (6,0);
\draw [|-|,thick] (7.1,0) node[left] {$T_2$} to (9.1,0);
\draw  (14.5,0) circle [fill, radius=0.05]  (14.5,0);
\draw [|-|,thick] (10.2,0) node[left] {$T_3$} to (12.5,0);
\draw [|-|,thick] (16.5,0) node[left] {$T_1$} to (18.5,0);
\end{scope}
\end{tikzpicture}}}
	\\
	\vspace{2mm}
	\subfloat[$T_3$ and $T_2$ do not contend on any base object \label{sfig:ofinv-4}]{\scalebox{0.7}[0.7]{\begin{tikzpicture}
\node (r1) at (1,0) [] {};
\node (w1) at (3,0) [] {};
\node (c1) at (5,0) [] {};

\node (r2) at (12,0) [] {};

\node (e) at (15,0) [] {};

\node (r3) at (18.5,0) [] {};

\node (w3) at (8.5,0) [] {};

\draw (r1) node [above] {\small {$R_0(Z)\rightarrow v$}};
\draw (w1) node [above] {\small {$W_0(X,nv)$}};
\draw (c1) node [above] {\small {$\TryC_0$}};

\draw (r2) node [above] {\small {$R_2(X)\rightarrow v$}};
\draw (r2) node [below] {\tiny {initial value}};

\draw (e) node [above] {\tiny {(event of $T_0$)}};
\draw (e) node [below] {\small {$e$}};

\draw (r3) node [above] {\small {$R_1(X)\rightarrow nv$}};
\draw (r3) node [below] {\tiny {new value}};

\draw (w3) node [above] {\small {$W_3(Z,nv)$}};
\draw (w3) node [below] {\tiny {write new value}};

\begin{scope}   
\draw [|-|,thick] (0,0) node[left] {$T_0$} to (2,0);
\draw [-|,thick] (2,0) node[left] {} to (4,0);
\draw [-,thick] (4,0) node[left] {} to (5.5,0);
\draw [-,dotted] (5.5,0) node[left] {} to (6,0);
\draw [|-|,thick] (7.5,0) node[left] {$T_3$} to (9.5,0);
\draw [|-|,thick] (11,0) node[left] {$T_2$} to (13,0);
\draw  (15,0) circle [fill, radius=0.05]  (15,0);
\draw [|-|,thick] (17.5,0) node[left] {$T_1$} to (19.5,0);
\end{scope}
\end{tikzpicture}}}
	\caption{Executions in the proof of Theorem~\ref{th:ir}; execution in \ref{sfig:ofinv-4} is not strictly serializable
          \label{fig:ofindis}} 
\end{center}
\end{figure*}
In this section, we prove that it is impossible to combine weak DAP and invisible reads for strictly serializable TMs
in $\mathcal{OF}$.

Here is a proof sketch:
suppose, by contradiction, that such a TM implementation $M$ exists.
Consider an execution $E$ of $M$ 
in which a transaction $T_0$ performs a t-read of t-object $Z$
(returning the initial value $v$), 
writes $nv$ (new value) to t-object $X$, and commits.
Let $E'$ denote the longest prefix of $E$ that cannot be extended
with
the t-complete step contention-free execution of any transaction that 
reads $nv$ in $X$ and commits.

Thus if $T_0$ takes one more step, then 
the resulting execution $E'\cdot e$ can be extended with the t-complete step contention-free
execution of a transaction $T_1$ that reads $nv$ in $X$ and commits.

Since $M$ uses invisible reads, the following execution exists: $E'$ can be extended with the t-complete step contention-free
execution of a transaction $T_2$ that reads the initial value $v$
in $X$ and commits, followed by the step $e$ of $T_0$
after which transaction $T_1$ running step contention-free reads $nv$
in $X$ and commits.
Moreover, this execution is indistinguishable to $T_1$ and $T_2$ from an execution in which 
the read set of $T_0$ is empty. Thus, 
we can modify this execution by inserting the step contention-free
execution of a committed transaction
$T_3$ that writes a new value to $Z$ after $E'$, but preceding $T_2$ in real-time order.
Intuitively, by weak DAP,  transactions $T_1$ and $T_2$ cannot distinguish this
execution from the original one in which $T_3$ does not participate.

Thus, we can show that the following execution exists: $E'$ is extended with the t-complete step contention-free
execution of $T_3$ that writes $nv$ to $Z$ and commits, followed by the t-complete step contention-free
execution of $T_2$ that reads the initial value $v$ in $X$ and commits,
followed by the step $e$ of $T_0$, 
after which $T_1$ reads $nv$ in $X$ and commits.

This execution is, however, not strictly serializable:
$T_0$ must appear in any serialization ($T_1$ reads a value
written by $T_0$). 
Transaction $T_2$ must precede $T_0$, since the t-read of $X$ by $T_2$ returns the initial value of $X$. 
To respect real-time order, $T_3$ must precede $T_2$.
Finally, $T_0$ must precede $T_3$ since the t-read of $Z$ returns the initial value of $Z$.
The cycle $T_0\rightarrow T_3 \rightarrow T_2 \rightarrow T_0$ implies a contradiction.

The formal proof follows.
\begin{theorem}
\label{th:ir}
There does not exist a weak DAP strictly serializable TM
implementation in $\mathcal{OF}$ that uses invisible reads.
\end{theorem}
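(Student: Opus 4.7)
The plan is to argue by contradiction: suppose such a TM implementation $M$ exists, and construct a four-transaction execution whose only candidate strict-serialization ordering is cyclic in real-time. The starting point is an execution $E$ in which a transaction $T_0$ runs solo, performs $\Read_0(Z)\rightarrow v$ (the initial value), $\Write_0(X,nv)$, and commits; such an execution exists by OF TM-progress and OF TM-liveness. The first step of the proof is to isolate a ``pivotal'' event $e$ of $T_0$ by a standard slicing argument: let $E'$ be the longest prefix of $E$ such that no extension of $E'$ by the t-complete step contention-free execution of a fresh transaction reading $nv$ from $X$ commits with return value $nv$. Since $E$ itself, extended by a solo reader of $X$, must eventually yield $nv$ by strict serializability, $E'$ is a strict prefix of $E$ and the next event $e$ of $T_0$ in $E$ is well defined. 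After $E' \cdot e$, a solo transaction $T_1$ can be made to read $nv$ from $X$ and commit (Figure~\ref{sfig:ofinv-1}).

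Next I would introduce two auxiliary transactions. After $E'$, but before $e$, the step contention-free execution of $T_2=\Read_2(X)\cdot\TryC_2$ must, by choice of $E'$ and by OF TM-progress/liveness, return the initial value $v$ and commit. Appending $e$ followed by the step contention-free execution of $T_1$ as above gives an execution of $M$ (Figure~\ref{sfig:ofinv-2}). Now I would use invisible reads to pass to an indistinguishable execution in which $\Rset(T_0)=\emptyset$, and then insert, immediately after $E'$, a step contention-free committed transaction $T_3$ that writes $nv$ to $Z$ (Figure~\ref{sfig:ofinv-3}). In this modified execution, $T_3$ is disjoint-access from $T_0$ (because $T_0$'s read of $Z$ has been erased), from $T_2$, and from $T_1$, and the corresponding conflict graphs have no edge between $Z$ and $X$. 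Applying Lemma~\ref{lm:dap} twice, I can slide the fragment of $T_3$ past $T_2$, and argue that $T_3$ contends with none of $T_0$'s pivotal step $e$, $T_2$, or $T_1$ on any base object (Figure~\ref{sfig:ofinv-4}). Reintroducing $T_0$'s invisible t-read of $Z$ then yields an execution $\alpha$ of $M$ of the shape $E'\cdot\rho_3\cdot\rho_2\cdot e\cdot\rho_1$, with the same returned values.

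The final step is to derive the cyclic real-time dependency. In any strict-serialization of $\alpha$: (i) $T_0$ must appear since $\Read_1(X)$ returns the value written by $T_0$, so $T_0$ precedes $T_1$; (ii) $T_2$ must precede $T_0$, since $\Read_2(X)\rightarrow v$ returns the initial value while $T_0$ writes $nv$ to $X$; (iii) $T_3$ must precede $T_2$ by real-time order, since the t-complete execution of $T_3$ finishes before $T_2$ starts in $\alpha$; and (iv) $T_0$ must precede $T_3$, since $\Read_0(Z)\rightarrow v$ returns the initial value while $T_3$ commits with $\Write_3(Z,nv)$. The resulting cycle $T_0\prec T_3\prec T_2\prec T_0$ contradicts strict serializability of $\alpha$.

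The main obstacle will be the step in the second paragraph where $T_3$ is inserted: I need to simultaneously apply the invisible-reads hypothesis to remove $\Read_0(Z)$ from the picture (so that $T_0$ and $T_3$ are data-set disjoint and the conflict graph has no $Z$-$X$ path), and then apply the weak-DAP Lemma~\ref{lm:dap} to justify that the fragment of $T_3$ can be inserted without producing base-object contention with the pivotal step $e$ of $T_0$ or with $T_2$ and $T_1$. Reconstructing the original execution of $T_0$ afterwards, while preserving indistinguishability from every other transaction, must be done carefully so that the real-time relations exploited in the final contradiction genuinely hold in a single execution of $M$.
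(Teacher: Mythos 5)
Your proposal is correct and follows essentially the same route as the paper's proof: the same pivotal-prefix construction isolating the event $e$ of $T_0$, the same auxiliary transactions $T_1$, $T_2$, $T_3$, the same combination of invisible reads (to make $T_0$'s read of $Z$ disappear from the view of $T_1$, $T_2$, $T_3$) with Lemma~\ref{lm:dap} to place $T_3$ before $T_2$ without base-object contention, and the same final cycle $T_0\rightarrow T_3\rightarrow T_2\rightarrow T_0$ contradicting strict serializability. The step you flag as delicate is handled in the paper exactly as you anticipate, by keeping the real execution throughout and invoking indistinguishability (to the relevant transactions) from the execution in which $\Rset(T_0)=\emptyset$ only when applying the weak-DAP lemma.
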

\begin{proof}
By contradiction, assume that such an implementation $M\in\mathcal{OF}$ exists.
Let $v$ be the initial value of t-objects $X$ and $Z$.
Consider an execution $E$ of $M$ 
in which 
a transaction $T_0$ performs $\Read_0(Z) \rightarrow v$ (returning $v$), 
writes $nv\neq v$ to $X$, and commits.
Let $E'$ denote the longest prefix of $E$ that cannot be extended with
the t-complete step contention-free execution of
any transaction performing a t-read $X$ that returns $nv$ and commits.

Let $e$ be the enabled event of transaction $T_0$ in the configuration after $E'$.
Without loss of generality, assume that $E'\cdot e$ can be extended with the t-complete step contention-free
execution of a committed transaction $T_1$ that reads $X$ and returns $nv$.
Let $E' \cdot e \cdot E_1$ be such an execution, where 
$E_1$ is the t-complete step contention-free execution fragment of transaction $T_1$ that 
performs $\Read_1(X) \rightarrow nv$ and commits.

We now prove that $M$ has an execution of the form $E' \cdot E_2 \cdot e \cdot E_1$, where
$E_2$ is the t-complete step contention-free execution fragment of transaction $T_2$ that 
performs $\Read_2(X) \rightarrow v$ and commits.

We observe that $E'\cdot E_2$ is an execution of $M$.
Indeed, by OF TM-progress and OF TM-liveness, $T_2$ must return a matching response that is not $A_2$ in $E'\cdot E_2$,
and by the definition of $E'$, this response must be the initial value $v$ of $X$.

By the assumption of invisible reads,
$E_2$ does not contain any nontrivial events.
Consequently, $E'\cdot E_2 \cdot e \cdot E_1$ is indistinguishable to transaction $T_1$
from the execution $E'\cdot e  \cdot E_1$. 
Thus, $E'\cdot E_2 \cdot e \cdot E_1$ is also an execution of $M$ (Figure~\ref{sfig:ofinv-1}).
\begin{claim}
\label{cl:invdapof2}
$M$ has an execution of the form $E'\cdot E_2 \cdot  E_3  \cdot e \cdot E_1$ where
$E_3$ is the t-complete step contention-free execution fragment of transaction $T_{3}$ that 
writes $nv \neq v$ to $Z$ and commits.
\end{claim}
\begin{proof}
The proof is through a sequence of indistinguishability arguments to construct the execution.

We first claim that $M$ has an execution of the form $E' \cdot E_2 \cdot e \cdot E_1 \cdot E_3$.
Indeed, by OF TM-progress and OF TM-liveness, $T_3$ must be committed in $E' \cdot E_2 \cdot e \cdot E_1 \cdot E_3$.

Since $M$ uses invisible reads,
the execution $E' \cdot E_2 \cdot e \cdot E_1 \cdot E_3$ is indistinguishable
to transactions $T_1$ and $T_3$ from the execution ${\hat E}\cdot E_1 \cdot E_3$, where ${\hat E}$ is
the t-incomplete step contention-free execution of transaction $T_0$ with
$\Wset_{\hat E}(T_0)=\{X\}$; $\Rset_{\hat E}(T_0)=\emptyset$ that writes $nv$ to $X$.

Observe that the execution $E'\cdot E_2 \cdot e \cdot E_1 \cdot E_3$ is indistinguishable
to transactions $T_1$ and $T_3$ from the execution ${\hat E} \cdot E_1 \cdot E_3$, in which
transactions $T_3$ and $T_1$ are disjoint-access. Consequently, by Lemma~\ref{lm:dap}, $T_1$ and $T_3$ 
do not contend on any base object in ${\hat E}  \cdot E_1 \cdot E_3$.
Thus, $M$ has an execution of the form $E' \cdot E_2 \cdot e \cdot E_3 \cdot E_1$ (Figure~\ref{sfig:ofinv-2}).

By definition of $E'$, $T_0$ applies a nontrivial primitive to some base object, say $b$, in event $e$ that
$T_1$ must access in $E_1$.
Thus, the execution fragment $E_3$ does not contain any nontrivial event on $b$ in
the execution $E'\cdot E_2 \cdot e \cdot E_1 \cdot E_3$.
In fact, since $T_3$ is disjoint-access with $T_0$ in the execution ${\hat E} \cdot  E_3 \cdot E_1$,
by Lemma~\ref{lm:dap}, it cannot access the base object $b$ to which $T_0$ applies a nontrivial primitive
in the event $e$. Thus, transaction
$T_3$ must perform the same sequence of events $E_3$ immediately after $E'$, implying that
$M$ has an execution of the form 
$E'\cdot E_2 \cdot  E_3 \cdot e \cdot E_1$ (Figure~\ref{sfig:ofinv-3}).
\end{proof} 
Finally, we observe that the execution $E' \cdot E_2 \cdot E_3 \cdot e \cdot E_1$ established in Claim~\ref{cl:invdapof2}
is indistinguishable
to transactions $T_2$ and $T_3$ from an execution ${\tilde E}\cdot E_2 \cdot E_3 \cdot e \cdot E_1$, 
where $\Wset(T_0)=\{X\}$ and $\Rset(T_0)=\emptyset$ in $\tilde E$.
But transactions $T_3$ and $T_2$ are disjoint-access in ${\tilde E} \cdot  E_2 \cdot E_3 \cdot e \cdot E_1$
and by Lemma~\ref{lm:dap}, $T_2$ and $T_3$ do not contend on any base object in this execution.
Thus, $M$ has an execution of the form $E' \cdot E_3 \cdot E_2 \cdot e \cdot E_1$ (Figure~\ref{sfig:ofinv-4})
in which $T_3$ precedes $T_2$ in real-time order.

However, the execution $E' \cdot E_3 \cdot E_2 \cdot e \cdot E_1$ is not strictly serializable:
$T_0$ must be committed in any serialization and transaction $T_2$
must precede $T_0$ since $\Read_2(X)$ returns the initial value of $X$. 
To respect real-time order, $T_3$ must precede $T_2$, while $T_0$ must precede $T_1$ since 
$\Read_1(X)$ returns $nv$, the value of $X$ updated by $T_0$.
Finally, $T_0$ must precede $T_3$ since $\Read_0(Z)$ returns the initial value of $Z$.
But there exists no such serialization---contradiction.
\end{proof}
%
%
\section{A linear lower bound on memory stall complexity}
\label{sec:p3c3ss2}
We prove a linear (in $n$) lower bound for strictly serializable TM implementations in $\mathcal{OF}$ 
on the total number of \emph{memory stalls}
incurred by a single t-read operation.

Inductively, for each $k\leq n-1$, we construct a specific \emph{$k$-stall execution}~\cite{G05} in which some 
t-read operation by a process $p$ incurs $k$ stalls.
In the $k$-stall execution, $k$ processes are partitioned into disjoint subsets
$S_1,\ldots,S_{i}$. 
The execution can be represented as $\alpha\cdot \sigma_1\cdots\sigma_{i}$; $\alpha$ is $p$-free, where in each $\sigma_j$,
$j=1,\ldots,i$, $p$ first runs by itself, then each process in
$S_j$ applies a \emph{nontrivial} event on a base object $b_j$, and then $p$
applies an event on $b_j$.
Moreover, $p$ does not detect step contention in this execution and,
thus, must return a non-abort value in its t-read and commit in the
solo extension of it. 
Additionally, it is guaranteed that in any extension of $\alpha$ by
the processes other than $\{p\}\cup S_1\cup S_2\cup \ldots \cup  S_{i}$, 
no nontrivial primitive is applied on a base object accessed in $\sigma_1\cdots\sigma_{i}$.

Assuming that $k\leq n-2$, we introduce a not previously used process executing an updating
transaction immediately after $\alpha$, so that the subsequent t-read
operation executed by $p$ is ``perturbed'' (must return another value). 
This will help us to construct a $(k+k')$-stall execution
$\alpha\cdot\alpha'\cdot\sigma_1\cdots\sigma_{i}\cdot\sigma_{i+1}$, where $k'>0$.

The formal proof follows:
\begin{theorem}
\label{th:oftmstalls} 
Every strictly serializable TM implementation $M\in \mathcal{OF}$ has a $(n-1)$-stall execution $E$ for a t-read operation
performed in $E$.
\end{theorem}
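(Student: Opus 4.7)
The plan is to proceed by induction on $k$, constructing for each $k \in \{0, 1, \ldots, n-1\}$ a $k$-stall execution for a t-read operation $op_k$ executed by some fixed process $p$, matching the structure of Definition~\ref{def:stalls}. The base case $k=0$ is immediate: take $\alpha$ to be the empty execution, so that $p$ alone invokes a t-read that, by OF TM-liveness, returns a matching (non-abort) response in its step contention-free solo extension. This trivially satisfies all the clauses of Definition~\ref{def:stalls} with no sets $S_j$.

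For the inductive step, suppose we have a $k$-stall execution $E^k = \alpha \cdot \sigma_1 \cdots \sigma_i$ for a t-read $\Read_p(X)$ by $p$, with $k \leq n-2$, witnessed by disjoint sets $S_1, \ldots, S_i$ of processes (not containing $p$) of total size $k$, contending on distinct base objects $b_1, \ldots, b_i$. By Lemma~\ref{lm:stalls}, $E^k$ is indistinguishable to $p$ from a step contention-free execution, so by OF TM-progress and OF TM-liveness the t-read $\Read_p(X)$ returns some non-abort value $v$ and commits if $p$ is allowed to run solo. I would now pick a process $q \notin \{p\} \cup S_1 \cup \cdots \cup S_i$ (which exists because $k \leq n-2$) and insert between $\alpha$ and $\sigma_1 \cdots \sigma_i$ the t-complete step contention-free execution $\alpha'$ of a transaction $T_q$ by $q$ that writes a new value $nv \neq v$ to $X$ and commits. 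By the last clause of Definition~\ref{def:stalls}, $T_q$ never applies a nontrivial primitive to any of the base objects $b_1, \ldots, b_i$ accessed in $\sigma_1 \cdots \sigma_i$, so $\alpha \cdot \alpha' \cdot \sigma_1 \cdots \sigma_i$ is still an execution of $M$ and remains indistinguishable to $p$ from a step contention-free one; hence the solo continuation of $p$ must still return a non-abort value. Strict serializability, however, now forces that value to be $nv$ (else $T_q$ could not precede the reading transaction in any serialization consistent with real-time order).

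The key step is to compare the solo continuations of $p$ after $\alpha$ and after $\alpha \cdot \alpha'$: the former returns $v$ and the latter must return $nv \neq v$. Thus $p$ must at some point access a base object whose state differs in these two configurations, and any such base object is necessarily written in $\alpha'$ and therefore distinct from $b_1, \ldots, b_i$ (again by the last clause of Definition~\ref{def:stalls} applied to $E^k$). I would let $b_{i+1}$ be the first such base object that $p$ touches in its solo extension after $\alpha \cdot \alpha' \cdot \sigma_1 \cdots \sigma_i$, and let $\pi$ be the prefix of $p$'s solo steps up to (but excluding) that access. Using the standard ``covering'' argument, one then exhibits a nonempty set $S_{i+1}$ of processes, disjoint from $\{p\} \cup S_1 \cup \cdots \cup S_i$, each poised after $\alpha$ to apply a nontrivial primitive to $b_{i+1}$: start from $q$ itself, which has a nontrivial access to $b_{i+1}$ after $\alpha$, and greedily pull in additional processes that similarly cover $b_{i+1}$ in executions obtained by replaying $\alpha'$-like transactions after $\alpha$. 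Appending the fragment $\sigma_{i+1}$, in which $p$ runs along $\pi$ and then each process in $S_{i+1}$ applies its nontrivial event on $b_{i+1}$ before $p$ finally accesses $b_{i+1}$, extends $E^k$ to a $(k + |S_{i+1}|)$-stall execution for $\Read_p(X)$.

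The hardest part of the argument will be verifying the final clause of Definition~\ref{def:stalls}: namely, that in every $(\{p\} \cup S_1 \cup \cdots \cup S_{i+1})$-free extension of $\alpha$ no process applies a nontrivial event to any base object in $\{b_1, \ldots, b_{i+1}\}$. For $b_1, \ldots, b_i$ this is inherited from the inductive hypothesis, but for $b_{i+1}$ it requires care: one must show, again by a disjoint-access / indistinguishability argument combined with the covering step above, that any process outside the chosen sets which would attempt a nontrivial access to $b_{i+1}$ could be exploited to either enlarge $S_{i+1}$ further or to contradict strict serializability. Iterating the inductive step until all $n-1$ processes other than $p$ have been absorbed into $S_1 \cup \cdots \cup S_i$ yields the desired $(n-1)$-stall execution for $\Read_p(X)$.
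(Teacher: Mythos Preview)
Your overall plan follows the standard inductive stall construction and matches the paper's skeleton, but there is a real gap in the perturbation step, and the paper closes it with an ingredient you are missing.

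You claim that after inserting the committed perturber $T_q$ (writing $nv\neq v$ to $X$) between $\alpha$ and $\sigma_1\cdots\sigma_i$, strict serializability forces $\Read_p(X)$ to return $nv$. This fails once $i\geq 1$. In your own construction, the processes in $S_1,\ldots,S_i$ are themselves ``$\alpha'$-like'' write-to-$X$ transactions that were frozen just before a nontrivial event; nothing prevents that event from lying inside their $\TryC$. Any such transaction $T_r$ is incomplete but has invoked $\TryC$, so in a completion it may be committed, and since $T_r$ is concurrent with both $T_q$ and $T_p$ it can be serialized \emph{between} them. If $T_r$ happened to write the value $v$, then $\Read_p(X)=v$ is perfectly consistent with strict serializability and $p$ need not touch any base object written in $\alpha'$. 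So the comparison ``solo run after $\alpha$ returns $v$, solo run after $\alpha\cdot\alpha'$ returns $nv$'' is not justified, and you cannot conclude that $p$ must access a base object modified in $\alpha'$.

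The paper's fix is to introduce a second t-object $Z$ and to seed the initial $\alpha$ with $n-2$ incomplete transactions that each perform $\Read(Z)\to v$ before writing $nv$ to $X$. The inductive step then splits on whether the solo $\Read_n(X)$ after $\alpha\cdot\sigma_1\cdots\sigma_i$ returns $v$ or $nv$. If it returns $v$, your perturber (a committed writer to $X$) works. If it returns $nv$, some $T_q\in\ms{txns}(\alpha)$ must be committed and serialized before $T_n$; the perturber is instead a transaction $T_{n-1}$ that reads $X\to v$ and writes a new value to $Z$. If $T_{n-1}$ touched no base object that $T_n$ accesses, one gets a cycle $T_{n-1}\prec T_q\prec T_{n-1}$ (from $\Read_{n-1}(X)=v$ and $\Read_q(Z)=v$), contradicting strict serializability. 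This two-case, two-object argument is exactly what handles the situation your single-case argument cannot.

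On the covering step: your ``greedy'' description and the phrase ``poised after $\alpha$'' are loose; the paper (following~\cite{G05,AGHK09}) chooses $\alpha'$ among \emph{all} $(\{p\}\cup S_1\cup\cdots\cup S_i)$-free extensions of $\alpha$ so as to \emph{maximize} the number of processes poised on $b_{i+1}$ after $\alpha\cdot\alpha'$, and it is precisely this maximality that yields the last clause of Definition~\ref{def:stalls} for $b_{i+1}$. Your final paragraph gestures at this (``enlarge $S_{i+1}$ further''), but you should state the maximization explicitly rather than build $S_{i+1}$ incrementally.
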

\begin{proof}
We proceed by induction.
Observe that the empty
execution is a $0$-stall execution since it vacuously satisfies the invariants of Definition~\ref{def:stalls}. 

Let $v$ be the initial value of t-objects $X$ and $Z$.
Let $\alpha={\alpha}_1\cdots { \alpha}_{n-2}$ be a step contention-free execution of a strictly serializable TM implementation
$M\in \mathcal{OF}$, where for all $j\in \{1,\ldots, n-2\}$,
$\alpha_j$ is the longest prefix of the execution fragment ${\bar \alpha}_j$ that denotes the t-complete step-contention
free execution of committed transaction $T_j$ (invoked by process $p_j$)
that performs $\Read_j(Z)\rightarrow v$, writes value $nv \neq v$ to $X$ 
in the execution ${\alpha}_1\cdots { \alpha}_{j-1} \cdot {\bar \alpha}_j$
such that
\begin{itemize}
\item 
$\TryC_j()$ is incomplete in $\alpha_j$,
\item
$\alpha_1\cdots \alpha_{j}$ cannot be extended with the t-complete step contention-free execution fragment
of any transaction $T_{n-1}$ or $T_n$ that performs exactly one t-read of $X$ that returns $nv$ and commits.
\end{itemize}
Assume, inductively, that $\alpha \cdot \sigma_1\cdots \sigma_i$ is a $k$-stall execution
for $\Read_n(X)$ executed by process $p_n$, where $0\leq k \leq n-2$. 
By Definition~\ref{def:stalls}, there are distinct base objects $b_1,\ldots b_i$ 
accessed by disjoint sets of processes $S_1\ldots S_i$ in the execution fragment $\sigma_1\cdots \sigma_i$,
where $|S_1\cup \ldots \cup S_i|=k$ and 
$\sigma_1\cdots \sigma_i$ contains no events of processes not in $S_1 \cup \ldots  \cup S_i \cup \{p_n\}$.
We will prove that there exists a $(k+k')$-stall execution for $\Read_n(X)$, for some $k'\geq 1$.

By Lemma~\ref{lm:stalls}, 
$\alpha \cdot \sigma_1\cdots \sigma_i$ is indistinguishable to $T_n$ from a step contention-free execution.
Let $\sigma$ be the finite step contention-free execution fragment that extends $\alpha \cdot \sigma_1\cdots \sigma_i$
in which $T_n$ performs events by itself: completes $\Read_n(X)$ and returns a response. 
By OF TM-progress and OF TM-liveness, $\Read_n(X)$ and the subsequent $\TryC_k$ must each
return non-$A_n$ responses in $\alpha \cdot \sigma_1\cdots \sigma_i\cdot \sigma$.
By construction of $\alpha$ and strict serializability of $M$, $\Read_n(X)$
must return the response $v$ or $nv$ in this execution. 
We prove that there exists an execution fragment $\gamma$ performed 
by some process $p_{n-1} \not\in (\{p_n\}\cup S_1\cup \cdots \cup S_i)$
extending $\alpha$ that contains a nontrivial event on some
base object that must be accessed by $\Read_n(X)$ in $\sigma_1\cdots \sigma_i\cdot \sigma$.

Consider the case that $\Read_n(X)$
returns the response $nv$ in $\alpha \cdot \sigma_1\cdots \sigma_i\cdot \sigma$.
We define a step contention-free fragment $\gamma$ extending $\alpha$ that is the t-complete step contention-free 
execution of transaction $T_{n-1}$ executed by some process $p_{n-1} \not\in (\{p_n\}\cup S_1\cup \cdots \cup S_i)$ 
that performs $\Read_{n-1}(X)\rightarrow v$, writes 
$nv\neq v$ to $Z$ and commits.
By definition of $\alpha$, OF TM-progress and OF TM-liveness, $M$ has an execution of the form
$\alpha\cdot \gamma$.
We claim that the execution fragment $\gamma$ must contain a nontrivial event on some base object that
must be accessed by $\Read_n(X)$ in $\sigma_1\cdots \sigma_i\cdot \sigma$.
Suppose otherwise. 
Then, $\Read_n(X)$ must return the response $nv$ in $\sigma_1\cdots \sigma_i\cdot \sigma$.
But the execution $\alpha\cdot \sigma_1\cdots \sigma_i\cdot \sigma$ is not strictly serializable.
Since $\Read_n(X)\rightarrow nv$, there exists a transaction $T_q\in \ms{txns}(\alpha)$ that must be committed
and must precede $T_n$ in any serialization.
Transaction $T_{n-1}$ must precede $T_n$ in any serialization to respect the real-time order and
$T_{n-1}$ must precede $T_q$ in any serialization. Also, $T_q$ must precede $T_{n-1}$ in any serialization.
But there exists no such serialization.

Consider the case that $\Read_n(X)$
returns the response $v$ in $\alpha \cdot \sigma_1\cdots \sigma_i\cdot \sigma$.
In this case, we define the step contention-free fragment $\gamma$ extending $\alpha$ as the t-complete step contention-free 
execution of transaction $T_{n-1}$ executed by some process $p_{n-1} \not\in (\{p_n\}\cup S_1\cup \cdots \cup S_i)$ 
that writes $nv\neq v$ to $X$ and commits.
By definition of $\alpha$, OF TM-progress and OF TM-liveness, $M$ has an execution of the form
$\alpha\cdot \gamma$.
By strict serializability of $M$, the execution fragment $\gamma$ must contain a nontrivial event on some base object that
must be accessed by $\Read_n(X)$ in $\sigma_1\cdots \sigma_i\cdot \sigma$. Suppose otherwise.
Then, $\sigma_1\cdots \sigma_i \cdot \gamma \cdot \sigma$ is an execution of $M$
in which $\Read_n(X)\rightarrow v$. But this execution is not strictly
serializable: every transaction $T_q\in \ms{txns}(\alpha)$ must be aborted or must be preceded by $T_n$
in any serialization, but committed transaction $T_{n-1}$ must precede $T_n$ in any serialization to
respect the real-time ordering of transactions. But then $\Read_n(X)$ must return the new value $nv$ of $X$ that is
updated by $T_{n-1}$---contradiction.

Since, by Definition~\ref{def:stalls},
the execution fragment $\gamma$ executed by some process $p_{n-1} \not\in (\{p_n\}\cup S_1\cup \cdots \cup S_i)$
contains no nontrivial events to any base object accessed in $\sigma_1 \cdots \sigma_i$,
it must contain a nontrivial event to some base object $b_{i+1}\not\in \{b_1,\ldots , b_i\}$ that is
accessed by $T_n$ in the execution fragment $\sigma$.

Let $\mathcal{A}$ denote the set of all finite $(\{p_n\}\cup S_1 \ldots \cup S_i)$-free execution fragments that extend $\alpha$.
Let $b_{i+1} \not\in \{b_1,\ldots , b_i\}$ be the first base object accessed by $T_n$ in the execution fragment
$\sigma$ to which some transaction applies a nontrivial event in the 
execution fragment $\alpha'\in \mathcal{A}$.
Clearly, some such execution $\alpha \cdot \alpha'$ exists that contains a nontrivial event in $\alpha'$ to some
distinct base object $b_{i+1}$ not accessed in the execution fragment $\sigma_1 \cdots \sigma_i$.
We choose the execution $\alpha\cdot \alpha' \in \mathcal{A}$ that maximizes the number of transactions
that are poised to apply nontrivial events on $b_{i+1}$ in the configuration after $\alpha\cdot \alpha'$.
Let $S_{i+1}$ denote the set of processes executing these transactions and $k'=|S_{i+1}|$ ($k'>0$ as already proved).

We now construct a $(k+k')$-stall execution
$\alpha\cdot \alpha' \cdot \sigma_1 \cdots \sigma_i \cdot \sigma_{i+1}$ for $\Read_n(X)$,
where in $\sigma_{i+1}$, $p_n$ applies events by itself, then each of the processes in $S_{i+1}$ applies a nontrivial event
on $b_{i+1}$, and finally, $p_n$ accesses $b_{i+1}$.

By construction, $\alpha\cdot \alpha'$ is $p_n$-free.
Let $\sigma_{i+1}$ be the prefix of $\sigma$ not including $T_n$'s first access to $b_{i+1}$, concatenated with
the nontrivial events on $b_{i+1}$ by each of the $k'$ transactions executed by processes in $S_{i+1}$ 
followed by the access of $b_{i+1}$
by $T_n$. Observe that $T_n$ performs exactly one t-operation $\Read_n(X)$ in the execution fragment $\sigma_1\cdots \sigma_{i+1}$
and $\sigma_1\cdots \sigma_{i+1}$ contains no events of processes not in $(\{p_n\}\cup S_1\cup \cdots \cup S_i \cup S_{i+1})$.

To complete the induction, we need to show that in every $(\{p_n\}\cup S_1\cup \cdots \cup S_i \cup S_{i+1})$-free extension of 
$\alpha \cdot \alpha'$,
no transaction applies a nontrivial event to any base object accessed 
in the execution fragment $\sigma_1 \cdots \sigma_i \cdot \sigma_{i+1}$.
Let $\beta$ be any such execution fragment that extends $\alpha\cdot \alpha' $.
By our construction, $\sigma_{i+1}$ is the execution fragment that consists of events by $p_n$ on base objects accessed in
$\sigma_1\cdots \sigma_i$, nontrivial events on $b_{i+1}$ by transactions in $S_{i+1}$ and finally, an access to $b_{i+1}$
by $p_n$. 
Since $\alpha\cdot \sigma_1\cdots \sigma_i$ is a $k$-stall execution by our induction hypothesis,
$\alpha'\cdot \beta$ is $(\{p_n\}\cup S_1 \ldots \cup S_i\})$-free and thus, $\alpha'\cdot \beta$
does not contain nontrivial events on any base object accessed in $\sigma_1\cdots \sigma_i$.
We now claim that $\beta$ does not contain nontrivial events to $b_{i+1}$. Suppose otherwise.
Thus, there exists some transaction $T'$ that has an enabled nontrivial event to $b_{i+1}$ in the
configuration after $\alpha\cdot \alpha' \cdot \beta'$, where $\beta'$ is some prefix of $\beta$.
But this contradicts the choice of $\alpha \cdot \alpha'$ as the extension of $\alpha$ that maximizes $k'$.

Thus, $\alpha\cdot \alpha'\cdot \sigma_1 \cdots \sigma_i \cdot \sigma_{i+1}$ is indeed a $(k+k')$-stall execution for $T_n$
where $1< k< (k+k') \leq (n-1)$. 
%
\end{proof}
Since there are at most $n$ processes that are concurrent at any prefix of an execution, the lower bound of
Theorem~\ref{th:oftmstalls} is tight.
%
\section{A linear lower bound on expensive synchronization for RW DAP}
\label{sec:p3c3ss3}
We prove that opaque, RW DAP TM implementations in $\mathcal{OF}$ have executions in which
some read-only transaction performs a linear (in $n$) number of non-overlapping RAWs or AWARs.

Prior to presenting the formal proof, we present an overview (the executions used in the proof 
are depicted in Figure~\ref{fig:rw}).

We first construct an execution of the form ${\bar \rho}_1\cdots {\bar \rho}_m$, where for all $j\in \{1,\ldots, m\}$; $m=n-3$, 
${\bar \rho}_j$ 
denotes the t-complete step contention-free execution of transaction $T_j$
that reads the initial value $v$ in a distinct t-object $Z_j$, writes
a new value $nv$ to a distinct t-object $X_j$ and commits.
Observe that since any two transactions that participate in this execution
are mutually read-write disjoint-access, they cannot contend on the
same base object and, thus, the execution appears solo to each of them.

Let each of two new transactions $T_{n-1}$ and $T_n$ perform $m$
t-reads on objects  $X_1,\ldots,X_m$.  
For $j\in \{1,\ldots, m\}$,
we now define $\rho_j$ to be  
the longest prefix of ${\bar \rho}_j$ such that 
$\rho_1\cdots \rho_{j}$ cannot be extended the complete step
contention-free execution fragment of $T_{n-1}$ or $T_n$ where the
t-read of $X_j$ returns $nv$ (Figure~\ref{sfig:rw-1}).
Let $e_j$ be the event by $T_j$ enabled after $\rho_1\cdots \rho_j$.
Let us count the number of indices $j\in \{1,\ldots, m\}$ such that 
$T_{n-1}$ (resp., $T_n$) reads the new value
$nv$ in $X_j$ when it runs after $\rho_1\cdots \rho_j\cdot e_j$. 
Without loss of generality, assume that $T_{n-1}$ has more such
indices $j$ than $T_{n}$.
We are going to show that, in the worst-case, 
$T_n$ must perform $\lceil \frac{m}{2} \rceil$ non-overlapping RAW/AWARs
in the course of performing $m$ t-reads of $X_1,\ldots , X_m$ immediately after $\rho_1\cdots \rho_m$.

Consider any $j\in \{1,\ldots, m\}$ such that
$T_{n-1}$, when it runs step contention-free after $\rho_1\cdots \rho_{j}\cdot e_j$, reads
$nv$ in $X_j$. 
We claim that, in $\rho_1\cdots \rho_{m}$ extended with the step contention-free execution of 
$T_n$ performing $j$ t-reads $\Read_n(X_1)\cdots \Read_n(X_j)$,
the t-read of $X_j$ must contain a RAW or an AWAR. 

Suppose not. Then we are going to schedule a specific execution of $T_j$ and $T_{n-1}$
concurrently with $\Read_n(X_j)$ so that $T_n$ cannot detect the concurrency.
By the definition of $\rho_j$ and the fact that the TM is RW DAP, $T_n$, when it runs
step contention-free after $\rho_1\cdots \rho_{m}$, must read $v$ (the initial
value) in $X_j$ (Figure~\ref{sfig:rw-2}).
Then the following execution exists: $\rho_1\cdots \rho_{m}$ is extended with the
t-complete step contention-free execution of $T_{n-2}$ writing $nv$ to
$Z_j$ and committing, after which $T_n$ runs step contention-free and
reads $v$ in $X_j$ (Figure~\ref{sfig:rw-3}).
%
Since, by the assumption, $\Read_n(X_j)$ contains no RAWs or AWARs, 
we show that we can run $T_{n-1}$ performing $j$ t-reads concurrently with the execution of $\Read_n(X_j)$ so that
$T_n$ and $T_{n-1}$ are unaware of step contention and
$\Read_{n-1}(X_j)$ still reads the value $nv$ in $X_j$.

To understand why this is possible, consider the following:
we take the execution depicted in Figure~\ref{sfig:rw-3}, but without the execution of $\Read_n(X_j)$, \emph{i.e},
$\rho_1\cdots \rho_{m}$ is extended with the
step contention-free execution of committed transaction $T_{n-2}$ writing $nv$ to
$Z_j$, after which $T_n$ runs step contention-free performing $j-1$ t-reads.
This execution can be extended with the step $e_j$ by $T_j$, followed by the step contention-free execution
of transaction $T_{n-1}$ in which it reads $nv$ in $X_j$.
Indeed, by RW DAP and the definition of $\rho_j\cdot e_j$, there exists such an execution (Figure~\ref{sfig:rw-5}).

Since $\Read_n(X_j)$ contains no RAWs or AWARs,
we can reschedule the execution fragment $e_j$  followed by the execution of $T_{n-1}$
so that it is concurrent with the execution of $\Read_n(X_j)$
and neither $T_n$ nor $T_{n-1}$ see a difference (Figure~\ref{sfig:rw-4}).
Therefore, in this execution, $\Read_n(X_j)$ still returns $v$, while $\Read_{n-1}(X_j)$
returns $nv$.

However, the resulting execution (Figure~\ref{sfig:rw-4})
is not opaque.
In any serialization the following must hold. 
Since $T_{n-1}$ reads the value written by $T_j$ in $X_j$,
$T_j$ must be committed.  
Since $\Read_{n}(X_j)$ returns the initial value $v$, 
$T_{n}$ must precede $T_j$.
The committed transaction $T_{n-2}$, which writes a new value to $Z_j$, must precede 
$T_{n}$ to respect the real-time order on transactions. 
However, $T_j$ must precede $T_{n-2}$ since $\Read_j(Z_j)$ returns the
initial value and the implementation is opaque.
The cycle $T_j\rightarrow T_{n-2} \rightarrow T_{n} \rightarrow T_j$
implies a contradiction.

Thus, we can show that transaction $T_n$ must perform $\Omega(n)$ RAW/AWARs during the execution of $m$ t-reads
immediately after $\rho_1\cdots \rho_m$.
\begin{figure*}[t]
\begin{center}
	\subfloat[Transactions in $\{T_1,\ldots , T_m\}$;$m=n-3$ are mutually read-write disjoint-access and concurrent; they are
	poised to apply a nontrivial primitive\label{sfig:rw-1}]{\scalebox{0.6}[0.6]{\begin{tikzpicture}
\node (r1) at (1,0) [] {};
\node (w1) at (3,0) [] {};
\node (c1) at (5,0) [] {};

\draw (r1) node [above] {\small {$R_1(Z_1)\rightarrow v$}};
\draw (w1) node [above] {\small {$W_1(X_1,nv)$}};
\draw (c1) node [above] {\small {$\TryC_1$}};


\node (rj) at (8,0) [] {};
\node (wj) at (10.2,0) [] {};
\node (cj) at (12,0) [] {};

\draw (rj) node [above] {\small {$R_m(Z_m)\rightarrow v$}};
\draw (wj) node [above] {\small {$W_m(X_m,nv)$}};
\draw (cj) node [above] {\small {$\TryC_m$}};


\begin{scope}   
\draw [|-,thick] (0,0) node[left] {$T_1$} to (5,0);
\draw [|-|,thick] (0,0) node[left] {} to (2,0);
\draw [-|,thick] (2,0) node[left] {} to (4,0);
\draw [-,dotted] (5,0) node[left] {} to (6,0);
\end{scope}
\begin{scope}   
\draw [|-|,thick] (7,0) node[left] {$T_m$} to (9,0);
\draw [-|,thick] (9,0) node[left] {} to (11,0);
\draw [-,thick] (11,0) node[left] {} to (12,0);
\draw [-,dotted] (11,0) node[left] {} to (13,0);
\end{scope}

\end{tikzpicture}}}
        \\
        \vspace{3mm}
        \subfloat[$T_{n}$ performs $m$ reads; each $\Read_{n}(X_j)$ returns initial value $v$\label{sfig:rw-2}]{\scalebox{0.6}[0.6]{\begin{tikzpicture}
\node (r1) at (1,0) [] {};
\node (w1) at (3,0) [] {};
\node (c1) at (5,0) [] {};

\draw (r1) node [above] {\small {$R_1(Z_1)\rightarrow v$}};
\draw (w1) node [above] {\small {$W_1(X_1,nv)$}};
\draw (c1) node [above] {\small {$\TryC_1$}};


\node (rj) at (8,0) [] {};
\node (wj) at (10.2,0) [] {};
\node (cj) at (12,0) [] {};

\draw (rj) node [above] {\small {$R_m(Z_m)\rightarrow v$}};
\draw (wj) node [above] {\small {$W_m(X_m,nv)$}};
\draw (cj) node [above] {\small {$\TryC_m$}};


\node (p1) at (15,0) [] {};
\node (pj) at (19,0) [] {};

\draw (p1) node [above] {\small {$R_{n}(X_1)\rightarrow v$}};
\draw (pj) node [above] {\small {$R_{n}(X_j)\rightarrow v$}};

\begin{scope}   
\draw [|-,thick] (0,0) node[left] {$T_1$} to (5,0);
\draw [|-|,thick] (0,0) node[left] {} to (2,0);
\draw [-|,thick] (2,0) node[left] {} to (4,0);
\draw [-,dotted] (5,0) node[left] {} to (6,0);
\end{scope}
\begin{scope}   
\draw [|-|,thick] (7,0) node[left] {$T_m$} to (9,0);
\draw [-|,thick] (9,0) node[left] {} to (11,0);
\draw [-,thick] (11,0) node[left] {} to (12,0);
\draw [-,dotted] (11,0) node[left] {} to (13,0);
\end{scope}
\begin{scope}   
\draw [|-,dotted] (14,0) node[left] {$T_{n}$} to (21,0);
\draw [|-|,thick] (14,0) to (16,0);
\draw [|-|,thick] (18,0) to (20,0);
\end{scope}
\end{tikzpicture}}}
        \\
        \vspace{3mm}
        \subfloat[$T_{n-2}$ commits; $T_{n}$ is read-write disjoint-access with $T_{n-2}$\label{sfig:rw-3}]{\scalebox{0.6}[0.6]{\begin{tikzpicture}
\node (r1) at (1,0) [] {};
\node (w1) at (3,0) [] {};
\node (c1) at (5,0) [] {};

\draw (r1) node [above] {\small {$R_1(Z_1)\rightarrow v$}};
\draw (w1) node [above] {\small {$W_1(X_1,nv)$}};
\draw (c1) node [above] {\small {$\TryC_1$}};


\node (rj) at (8,0) [] {};
\node (wj) at (10.2,0) [] {};
\node (cj) at (12,0) [] {};

\draw (rj) node [above] {\small {$R_m(Z_m)\rightarrow v$}};
\draw (wj) node [above] {\small {$W_m(X_m,nv)$}};
\draw (cj) node [above] {\small {$\TryC_m$}};


\node (p1) at (18,0) [] {};
\node (pj) at (22,0) [] {};

\draw (p1) node [above] {\small {$R_{n}(X_1)\rightarrow v$}};
\draw (pj) node [above] {\small {$R_{n}(X_j)\rightarrow v$}};
\node (z1) at (15,0) [] {};

\draw (z1) node [above] {\small {$W_{n-2}(Z_j,nv)$}};

\begin{scope}   
\draw [|-,thick] (0,0) node[left] {$T_1$} to (5,0);
\draw [|-|,thick] (0,0) node[left] {} to (2,0);
\draw [-|,thick] (2,0) node[left] {} to (4,0);
\draw [-,dotted] (5,0) node[left] {} to (6,0);
\end{scope}
\begin{scope}   
\draw [|-|,thick] (7,0) node[left] {$T_m$} to (9,0);
\draw [-|,thick] (9,0) node[left] {} to (11,0);
\draw [-,thick] (11,0) node[left] {} to (12,0);
\draw [-,dotted] (11,0) node[left] {} to (13,0);
\end{scope}
\begin{scope}   
\draw [|-,dotted] (17,0) node[left] {$T_{n}$} to (23.5,0);
\draw [|-|,thick] (17,0) to (19,0);
\draw [|-|,thick] (21,0) to (23,0);
\end{scope}
\begin{scope}   
\draw [|-|,thick] (14,0) node[left] {$T_{n-2}$} to (16,0);
\end{scope}
%

\end{tikzpicture}}}
        \\
        \vspace{3mm}
        \subfloat[$T_{n-1}$ is read-write disjoint-access with $T_{n-2}$; $\Read_{n-1}(X_j)$ returns the value $nv$\label{sfig:rw-5}]{\scalebox{0.5}[0.5]{\begin{tikzpicture}
\node (r1) at (1,0) [] {};
\node (w1) at (3,0) [] {};
\node (c1) at (5,0) [] {};

\draw (r1) node [above] {\small {$R_1(Z_1)\rightarrow v$}};
\draw (w1) node [above] {\small {$W_1(X_1,nv)$}};
\draw (c1) node [above] {\small {$\TryC_1$}};


\node (rj) at (8,0) [] {};
\node (wj) at (10.2,0) [] {};
\node (cj) at (12,0) [] {};

\draw (rj) node [above] {\small {$R_m(Z_m)\rightarrow v$}};
\draw (wj) node [above] {\small {$W_m(X_m,nv)$}};
\draw (cj) node [above] {\small {$\TryC_m$}};


%
\node (z1) at (15,0) [] {};
\node (p1) at (19,0) [] {};

\draw (z1) node [above] {\small {$W_{n-2}(Z_j,nv)$}};
\node (e) at (21.1+.5,-1) [] {};
\node (l1) at (24+.8,-1) [] {};
\node (lj) at (28+.8,-1) [] {};

\draw (e) node [above] {\large {(event of $T_j$)}};
\draw (e) node [below] {\small {$e_j$}};
\draw (l1) node [above] {\large {$R_{n-1}(X_1)$}};
\draw (lj) node [above] {\large {$R_{n-1}(X_j)\rightarrow nv$}};

\draw (p1) node [above] {\large {$R_{n}(X_1)\cdots R_n(X_{j-1})$}};

\begin{scope}   
\draw [|-,thick] (0,0) node[left] {$T_1$} to (5,0);
\draw [|-|,thick] (0,0) node[left] {} to (2,0);
\draw [-|,thick] (2,0) node[left] {} to (4,0);
\draw [-,dotted] (5,0) node[left] {} to (6,0);
\end{scope}
\begin{scope}   
\draw [|-|,thick] (7,0) node[left] {$T_m$} to (9,0);
\draw [-|,thick] (9,0) node[left] {} to (11,0);
\draw [-,thick] (11,0) node[left] {} to (12,0);
\draw [-,dotted] (11,0) node[left] {} to (13,0);
\end{scope}
\begin{scope}   
\draw [|-,dotted] (17,0) node[left] {$T_{n}$} to (30,0);
\draw [|-|,thick] (17,0) to (21,0);
\end{scope}
\begin{scope}   
\draw [|-|,thick] (14,0) node[left] {$T_{n-2}$} to (16,0);
\end{scope}
\begin{scope}
\draw  (21.2+.5,-1) circle [fill, radius=0.05]   (21.2+.5,-1);
\draw [-,dotted] (23.2+.8,-1) node[left] {\large $T_{n-1}$} to (29+.8,-1);
\draw [|-|,thick] (23.2+.8,-1) node[left] {} to (25.1+.8,-1);
\draw [|-|,thick] (27+.8,-1) node[left] {} to (29+.8,-1);
\end{scope}

\end{tikzpicture}}}
	\\
        \vspace{3mm}
        \subfloat[Suppose $\Read_{n}(X_j)$ does not perform a RAW/AWAR, 
        $T_{n}$ and $T_{n-1}$ are unaware of step contention and $T_n$ misses the event of $T_j$, but $R_{n-1}(X_j)$ returns the value of $X_j$ 
        that is updated by $T_j$\label{sfig:rw-4}]{\scalebox{0.5}[0.5]{\begin{tikzpicture}
\node (r1) at (1,0) [] {};
\node (w1) at (3,0) [] {};
\node (c1) at (5,0) [] {};

\draw (r1) node [above] {\small {$R_1(Z_1)\rightarrow v$}};
\draw (w1) node [above] {\small {$W_1(X_1,nv)$}};
\draw (c1) node [above] {\small {$\TryC_1$}};


\node (rj) at (8,0) [] {};
\node (wj) at (10.2,0) [] {};
\node (cj) at (12,0) [] {};

\draw (rj) node [above] {\small {$R_m(Z_m)\rightarrow v$}};
\draw (wj) node [above] {\small {$W_m(X_m,nv)$}};
\draw (cj) node [above] {\small {$\TryC_m$}};


\node (p1) at (19,0) [] {};
\node (pj) at (25,0) [] {};

\draw (p1) node [above] {\small {$R_{n}(X_1)\cdots R_n(X_{j-1})$}};
\draw (pj) node [above] {\large {$R_{n}(X_j)\rightarrow v$}};
\node (z1) at (15,0) [] {};

\draw (z1) node [above] {\small {$W_{n-2}(Z_j,nv)$}};
\node (e) at (21.1+.5,-1) [] {};
\node (l1) at (24+.8,-1) [] {};
\node (lj) at (28+.8,-1) [] {};

\draw (e) node [above] {\large {(event of $T_j$)}};
\draw (e) node [below] {\small {$e_j$}};
\draw (l1) node [above] {\large {$R_{n-1}(X_1)$}};
\draw (lj) node [above] {\large {$R_{n-1}(X_j)\rightarrow nv$}};

\begin{scope}   
\draw [|-,thick] (0,0) node[left] {$T_1$} to (5,0);
\draw [|-|,thick] (0,0) node[left] {} to (2,0);
\draw [-|,thick] (2,0) node[left] {} to (4,0);
\draw [-,dotted] (5,0) node[left] {} to (6,0);
\end{scope}
\begin{scope}   
\draw [|-|,thick] (7,0) node[left] {$T_m$} to (9,0);
\draw [-|,thick] (9,0) node[left] {} to (11,0);
\draw [-,thick] (11,0) node[left] {} to (12,0);
\draw [-,dotted] (11,0) node[left] {} to (13,0);
\end{scope}
\begin{scope}   
\draw [|-,dotted] (17,0) node[left] {$T_{n}$} to (30,0);
\draw [|-|,thick] (17,0) to (21,0);
\draw [|-|,thick] (21,0) to (30,0);
\end{scope}
\begin{scope}   
\draw [|-|,thick] (14,0) node[left] {$T_{n-2}$} to (16,0);
\end{scope}
\begin{scope}
\draw  (21.2+.5,-1) circle [fill, radius=0.05]   (21.2+.5,-1);
\draw [-,dotted] (23.2+.8,-1) node[left] {\large $T_{n-1}$} to (29+.8,-1);
\draw [|-|,thick] (23.2+.8,-1) node[left] {} to (25+.8,-1);
\draw [|-|,thick] (27+.8,-1) node[left] {} to (29+.8,-1);
\end{scope}

\end{tikzpicture}}}
                
	\caption{Executions in the proof of Theorem~\ref{th:oftriv}; execution in \ref{sfig:rw-4} is not opaque
          \label{fig:rw}} 
\end{center}
\end{figure*}
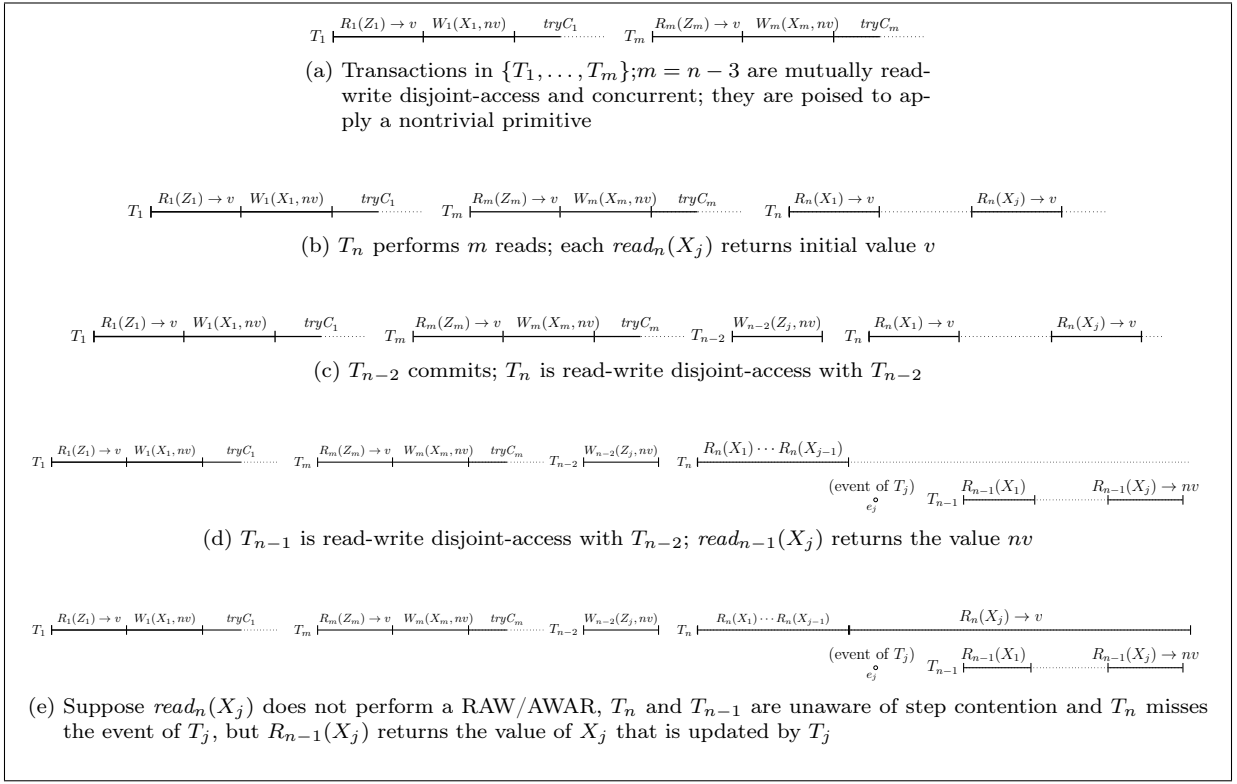
\begin{theorem}
\label{th:oftriv}
Every RW DAP opaque TM implementation $M\in \mathcal{OF}$ has
an execution $E$ in which some read-only transaction $T \in \ms{txns}(E)$ performs $\Omega(n)$ non-overlapping RAW/AWARs.
\end{theorem}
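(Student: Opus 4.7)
The plan is to follow the roadmap sketched in the overview, fleshing out the indistinguishability and reduction arguments. First I would construct a base execution $\rho_1 \cdots \rho_m$ with $m = n-3$. Each $\rho_j$ is the longest prefix of the t-complete step contention-free execution of a transaction $T_j$ that reads $Z_j$ (returning the initial value $v$) and writes $nv$ to $X_j$, chosen so that the concatenation $\rho_1 \cdots \rho_j$ still cannot be extended by a step contention-free committed execution of $T_{n-1}$ or $T_n$ whose t-read of $X_j$ returns $nv$. Because the $T_j$'s pairwise access disjoint pairs of t-objects and at least one writer is involved in each pair, the RW DAP analogue of Lemma~\ref{lm:dap} ensures they do not contend on base objects, so $\rho_1 \cdots \rho_m$ is a valid execution and each $T_j$ is poised to apply an event $e_j$ after $\rho_1 \cdots \rho_j$. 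By the choice of $\rho_j$, the event $e_j$ enables at least one of $T_{n-1}, T_n$, running step contention-free afterwards, to read $nv$ in $X_j$; by pigeonhole, one of them, say $T_{n-1}$, has this property for a set $J$ of indices with $|J| \geq \lceil m/2 \rceil$.

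Next I would show that, in the step contention-free extension of $\rho_1 \cdots \rho_m$ by $T_n$ performing $\Read_n(X_1) \cdots \Read_n(X_m)$, the t-read $\Read_n(X_j)$ must contain at least one RAW or AWAR for every $j \in J$. Since these t-reads are executed sequentially within a single transaction, the resulting RAW/AWAR patterns are non-overlapping, and we obtain $\Omega(n)$ of them, which is the desired bound.

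The crux will be the per-index contradiction. Fix $j \in J$ and assume $\Read_n(X_j)$ contains neither a RAW nor an AWAR. The RW DAP argument plus the definition of $\rho_j$ forces $\Read_n(X_j)$ to return the initial value $v$ in the step contention-free extension of $\rho_1 \cdots \rho_m$. I would then insert, right after $\rho_1 \cdots \rho_m$, the t-complete step contention-free execution of a fresh transaction $T_{n-2}$ that writes $nv$ to $Z_j$ and commits; since $T_{n-2}$ is read-write disjoint-access from $T_n$, the response of $\Read_n(X_j)$ is unaffected. Using the no-RAW/no-AWAR hypothesis in the style of the commutativity argument of Theorem~\ref{th:sl}, I would then reschedule the event $e_j$ followed by the step contention-free execution of $T_{n-1}$ performing $\Read_{n-1}(X_1) \cdots \Read_{n-1}(X_j)$ so that this fragment is interleaved concurrently with $\Read_n(X_j)$ in a manner indistinguishable to both $T_n$ and $T_{n-1}$; RW DAP is used again to ensure $T_{n-1}$ and $T_{n-2}$ do not contend. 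In the resulting execution $T_n$ still reads $v$ in $X_j$, while $T_{n-1}$ reads $nv$ (as guaranteed by the choice of $\rho_j \cdot e_j$). But this execution is not opaque: $T_{n-1}$ reads from $T_j$ so $T_j$ must commit and precede $T_{n-1}$; $\Read_n(X_j)=v$ forces $T_n \prec T_j$; real-time order forces $T_{n-2} \prec T_n$; and $\Read_j(Z_j)=v$ forces $T_j \prec T_{n-2}$, yielding the cycle $T_j \to T_{n-2} \to T_n \to T_j$, a contradiction.

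The main obstacle will be the rescheduling step: making precise why the absence of RAW and AWAR patterns in $\Read_n(X_j)$ lets us slip the events of $T_j$ and of $T_{n-1}$'s $j$ t-reads into the middle of $\Read_n(X_j)$ without $T_n$ or $T_{n-1}$ detecting step contention. This will require a careful case analysis, splitting $\Read_n(X_j)$ around its first nontrivial event $\pi_w$ and using (i) RW DAP to argue that $T_{n-1}$'s reads do not touch any base object written by $\pi_w\cdot\pi_f$ or by $e_j$ on the path $T_n$ actually cares about, and (ii) the no-RAW property to argue that every subsequent read of $T_n$ inside $\pi_f$ reads from a base object $T_n$ itself just wrote, hence is blind to concurrent writes by $T_j$ and $T_{n-1}$. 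Once this indistinguishability is established, the opacity cycle closes the argument.
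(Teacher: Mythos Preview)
Your proposal is correct and follows essentially the same approach as the paper's proof: the same construction of $\rho_1\cdots\rho_m$ with $m=n-3$, the pigeonhole selection of $T_{n-1}$ over the index set $J$, the insertion of $T_{n-2}$ writing to $Z_j$, the split of $\Read_n(X_j)$ around its first write, and the identical non-opacity cycle $T_j\to T_{n-2}\to T_n\to T_j$. One small correction to your obstacle analysis: in the rescheduling step, indistinguishability to $T_{n-1}$ does not come from RW DAP (indeed $T_{n-1}$ and $T_n$ share data items, and $T_{n-1}$ \emph{must} touch the object $e_j$ writes to) but simply from the fact that the prefix of $\Read_n(X_j)$ preceding its first write contains no nontrivial events; your point (ii) about the no-RAW property giving $T_n$ blindness to the interleaved events is exactly the paper's argument.
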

\begin{proof}
For all $j\in \{1,\ldots, m\}$; $m=n-3$, let $v$ be the initial value of t-objects $X_j$ and $Z_j$.
Throughout this proof, we assume that, for all $i\in \{1,\ldots , n\}$, transaction $T_i$ is invoked by process $p_i$.

By OF TM-progress and OF TM-liveness, any opaque and RW DAP TM implementation $M\in \mathcal{OF}$ 
has an execution of the form ${\bar \rho}_1\cdots {\bar \rho}_m$, where for all $j\in \{1,\ldots, m\}$, ${\bar \rho}_j$ 
denotes the t-complete step contention-free execution of transaction $T_j$
that performs $\Read_j(Z_j)\rightarrow v$, writes value $nv \neq v$ to 
$X_j$ and commits.

By construction, any two transactions that participate in ${\bar \rho}_1\cdots {\bar \rho}_n$
are mutually read-write disjoint-access and cannot contend on the same base object.
It follows that for all $1\leq j\leq m$,
${\bar \rho}_j$ is an execution of $M$.

For all $j\in \{1,\ldots, m\}$, we iteratively define an execution $\rho_j$ of $M$ as follows:  
it is the longest prefix of ${\bar \rho}_j$ such that 
$\rho_1\cdots \rho_{j}$ cannot be extended with the complete step contention-free execution fragment
of transaction $T_{n}$ that performs $j$ t-reads: $\Read_n(X_1)\cdots \Read_n(X_j)$
in which $\Read_{n}(X_j) \rightarrow nv$ nor with the 
complete step contention-free execution fragment
of transaction $T_{n-1}$ that performs $j$ t-reads: $\Read_{n-1}(X_1)\cdots \Read_{n-1}(X_j)$ 
in which $\Read_{n-1}(X_j) \rightarrow nv$ (Figure~\ref{sfig:rw-1}).

For any $j\in \{1,\ldots, m\}$, let $e_j$ be the event
transaction $T_j$ is poised to apply in the configuration after $\rho_1\cdots \rho_j$.
Thus, the execution $\rho_1\cdots \rho_j\cdot e_j$ can be extended with the complete step contention-free
executions of at least one of transaction $T_{n}$ or $T_{n-1}$ that performs $j$ t-reads of $X_1, \ldots ,X_j$
in which the t-read of $X_j$ returns the new value $nv$. 
Let $T_{n-1}$ be the transaction that must return the new value for the maximum number of $X_j$'s when
$\rho_1\cdots \rho_j\cdot e_j$ is extended with the t-reads of $X_1,\ldots , X_j$.
We show that, in the worst-case, transaction $T_n$ must perform $\lceil \frac{m}{2} \rceil$ non-overlapping RAW/AWARs
in the course of performing $m$ t-reads of $X_1,\ldots , X_m$ immediately after $\rho_1\cdots \rho_m$.
Symmetric arguments apply for the case when $T_{n}$
must return the new value for the maximum number of $X_j$'s when
$\rho_1\cdots \rho_j\cdot e_j$ is extended with the t-reads of $X_1,\ldots , X_j$.

\vspace{1mm}\noindent\textbf{Proving the RAW/AWAR lower bound.}
We prove that transaction $T_n$ must perform
$\lceil \frac{m}{2} \rceil$ non-overlapping RAWs or AWARs
in the course of performing $m$ t-reads of $X_1,\ldots , X_m$ immediately after the execution
$\rho_1\cdots \rho_m$.
Specifically, we prove that $T_n$ must perform a RAW or an AWAR during the execution of the t-read of each $X_j$
such that $\rho_1\cdots \rho_j\cdot e_j$ can be extended with the complete step contention-free
execution of $T_{n-1}$ as it performs $j$ t-reads of $X_1\ldots X_j$
in which the t-read of $X_j$ returns the new value $nv$.
Let $\mathbb{J}$ denote the of all $j\in \{1,\ldots , m\}$ such that
$\rho_1\cdots \rho_j\cdot e_j$ extended with the complete step contention-free
execution of $T_{n-1}$ performing $j$ t-reads of $X_1\ldots X_j$ must
return the new value $nv$ during the t-read of $X_j$.

We first prove that, for all $j \in \mathbb{J}$, 
$M$ has an execution of the form $\rho_1\cdots \rho_m \cdot \delta_j$ 
(Figures~\ref{sfig:rw-1} and \ref{sfig:rw-2}),
where $\delta_j$ is the complete step contention-free
execution fragment of $T_{n}$ that performs $j$ t-reads: $\Read_{n}(X_1)  \cdots \Read_{n}(X_j)$, each of which
return the initial value $v$.

By definition of $\rho_j$, OF TM-progress and OF TM-liveness, 
$M$ has an execution of the form $\rho_1\cdots \rho_j \cdot \delta_j$. 
By construction, transaction $T_{n}$ is read-write disjoint-access with each transaction $T \in \{T_{j+1},\ldots , T_m\}$ 
in $\rho_1\cdots \rho_j \cdots \rho_m \cdot \delta_j$. Thus, $T_n$ cannot contend with any of the transactions
in $\{T_{j+1},\ldots , T_m\}$, implying that, for all $j\in \{1,\ldots , m\}$, $M$ has an execution of the form
$\rho_1\cdots \rho_m \cdot \delta_j$ (Figure~\ref{sfig:rw-2}).

We claim that, for each $j\in \mathbb{J}$, the t-read of $X_j$ performed by $T_n$ must perform a RAW or an AWAR in the course
of performing $j$ t-reads of $X_1, \ldots , X_j$ immediately after $\rho_1\cdots \rho_m$.
Suppose by contradiction that $\Read_n(X_j)$ does not perform a RAW or an AWAR in 
$\rho_1\cdots \rho_m \cdot \delta_m$.
\begin{claim}
\label{cl:oftmex0}
For all $j \in \mathbb{J}$, $M$ has an execution of the form
$\rho_1\cdots \rho_{j} \cdots \rho_m  \cdot  \delta_{j-1} \cdot e_j \cdot \beta$ where,
$\beta$ is the complete step contention-free execution fragment of transaction $T_{n-1}$ that
performs $j$ t-reads: $\Read_{n-1}(X_1)\cdots \Read_{n-1}(X_{j-1}) \cdot \Read_{n-1}(X_j)$
in which $\Read_{n-1}(X_j)$ returns $nv$.
\end{claim}
\begin{proof}
We observe that transaction $T_{n}$ is read-write disjoint-access with every transaction $T\in \{T_j,T_{j+1},\ldots , T_m\}$ in
$\rho_1\cdots \rho_{j}\cdots \rho_m \cdot  \delta_{j-1}$.
By RW DAP, it follows that $M$ has an execution of the form
$\rho_1\cdots \rho_{j} \cdots \rho_m  \cdot  \delta_{j-1} \cdot e_j$
since $T_n$ cannot perform a nontrivial event on the base object accessed by $T_j$ in the event $e_j$.

By the definition of $\rho_j$, 
transaction $T_{n-1}$ must access the base object to which
$T_j$ applies a nontrivial primitive in $e_j$ to return the value $nv$ of $X_j$ as it performs $j$ t-reads
of $X_1,\ldots , X_j$ immediately after the execution $\rho_1\cdots \rho_{j} \cdots \rho_m  \cdot  \delta_{j-1} \cdot e_j$.
Thus, $M$ has an execution of the form $\rho_1\cdots \rho_{j}  \cdot  \delta_{j-1} \cdot e_j \cdot \beta$.

By construction, transactions $T_{n-1}$ is read-write disjoint-access with every
transaction $T\in \{T_{j+1},\ldots , T_m\}$ in
$\rho_1\cdots \rho_{j} \cdots \rho_m  \cdot  \delta_{j-1} \cdot e_j \cdot \beta$.
It follows that $M$ has an execution of the form 
$\rho_1\cdots \rho_j \cdots \rho_m \cdot \delta_{j-1} \cdot e_j \cdot \beta$.
\end{proof}
\begin{claim}
\label{cl:oftmex1}
For all $j \in \{1,\ldots , m\}$, 
$M$ has an execution of the form 
$\rho_1\cdots \rho_{j} \cdots \rho_m  \cdot \gamma\cdot  \delta_{j-1} \cdot e_j \cdot \beta$,
where $\gamma$ is the t-complete step contention-free execution fragment
of transaction $T_{n-2}$ that writes $nv \neq v$ to $Z_j$ and commits.
\end{claim}
\begin{proof}
Observe that $T_{n-2}$ precedes transactions $T_{n}$ and $T_{n-1}$ in real-time order in the above execution.

By OF TM-progress and OF TM-liveness, transaction $T_{n-2}$ must be committed in
$\rho_1\cdots \rho_{j} \cdots \rho_m  \cdot \gamma$.

Since transaction $T_{n-1}$ is read-write disjoint-access
with $T_{n-2}$ in $\rho_1\cdots \rho_{j} \cdots \rho_m  \cdot \gamma \cdot  \delta_{j-1} \cdot e_j \cdot \beta$,
$T_{n-1}$ does not contend with $T_{n-2}$ on any base object (recall that we associate an edge
with t-objects in the conflict graph only if they are both contained in the write set of some transaction).
Since the execution fragment $\beta$ contains an access to the base object to which $T_j$ performs
a nontrivial primitive in the event $e_j$, $T_{n-2}$ cannot perform a nontrivial event on this base object
in $\gamma$.
It follows that $M$ has an execution of the form 
$\rho_1\cdots \rho_{j} \cdots \rho_m  \cdot \gamma\cdot  \delta_{j-1} \cdot e_j \cdot \beta$
since, it is indistinguishable to $T_{n-1}$
from the execution 
$\rho_1\cdots \rho_{j} \cdots \rho_m  \cdot  \delta_{j-1} \cdot e_j \cdot \beta$ (the existence of which is already
established in Claim~\ref{cl:oftmex0}).
\end{proof}
Recall that transaction $T_n$ is read-write disjoint-access with $T_{n-2}$ in 
$\rho_1\cdots \rho_{j} \cdots \rho_m  \cdot \gamma \cdot  \delta_{j}$.
Thus, $M$ has an execution of the form 
$\rho_1\cdots \rho_{j} \cdots \rho_m  \cdot \gamma\cdot  \delta_{j}$ (Figure~\ref{sfig:rw-3}).

\vspace{1mm}\noindent\textbf{Deriving a contradiction.}
For all $j \in \{1,\ldots , m\}$, we represent the execution fragment 
$\delta_j$ as $\delta_{j-1} \cdot \pi^j$, where $\pi^j$ is the complete execution
fragment of the $j^{th}$ t-read $\Read_{n}(X_j) \rightarrow v$.
By our assumption, $\pi^j$ does not contain a RAW or an AWAR.

For succinctness, let $\alpha=\rho_1\cdots \rho_m \cdot \gamma \cdot \delta_{j-1}$.
We now prove that if $\pi^j$ does not contain a RAW or an AWAR, we can define $\pi^j_{1} \cdot \pi^j_{2}=\pi^j$
to construct an execution of the form
$\alpha  \cdot \pi^j_{1} \cdot e_j \cdot \beta \cdot \pi^j_{2}$ (Figure~\ref{sfig:rw-4}) such that
\begin{itemize}
\item
no event in $\pi^j_1$ is the application of a nontrivial primitive
\item 
$\alpha \cdot \pi^j_{1} \cdot e_j \cdot \beta \cdot \pi^j_{2}$
is indistinguishable to $T_{n}$ from the step contention-free execution
$\alpha \cdot \pi^j_{1} \cdot \pi^j_{2}$
\item
$\alpha \cdot \pi^j_{1}  \cdot e_j \cdot \beta \cdot \pi^j_{2}$
is indistinguishable to $T_{n-1}$ from the step contention-free execution
$\alpha \cdot e_j \cdot \beta$.
\end{itemize}
The following claim defines $\pi^j_1$ and $\pi^j_2$ to construct this execution.
\begin{claim}
\label{cl:ofraw}
For all $j \in \{1,\ldots , m\}$,
$M$ has an execution of the form
$\alpha  \cdot \pi^j_{1} \cdot e_j \cdot \beta \cdot \pi^j_{2}$.
\end{claim}
\begin{proof}
Let $t$ be the first event containing a write to a base object in the execution fragment $\pi^j$.
We represent $\pi^j$ as the execution fragment $\pi^j_{1}\cdot t \cdot \pi^j_{f}$.
Since $\pi^j_1$ does not contain nontrivial events that write to a base object, $\alpha \cdot \pi^j_{1}  \cdot e_j\cdot \beta$
is indistinguishable to transaction $T_{n-1}$ from the step contention-free execution
$\alpha  \cdot e_j\cdot \beta$ (as already proven in Claim~\ref{cl:oftmex1}).
Consequently, $\alpha \cdot \pi^j_{1} \cdot e_j\cdot \beta$
is an execution of $M$.

Since $t$ is not an atomic-write-after-read, $M$ has an execution of the form
$\alpha\cdot \gamma\cdot  \pi^j_{1} \cdot e_j\cdot \beta \cdot t$.
Secondly, since $\pi^j$ does not contain a read-after-write, any read of a base object performed in $\pi^j_{f}$ 
may only be performed to base objects previously written in $t \cdot \pi^j_{f}$.
Thus, $\alpha \cdot \pi^j_{1} \cdot e_j\cdot \beta \cdot t \cdot \pi^j_{f}$
is indistinguishable to $T_{n}$ from the step contention-free execution
$\alpha \cdot \pi^j_1 \cdot t \cdot \pi^j_{f}$. 
But, as already proved, $\alpha \cdot \pi^{j}$ is an execution of $M$.

Choosing $\pi^j_{2}=t \cdot \pi^j_{f}$, it follows that
$M$ has an execution of the form $\alpha  \cdot \pi^j_{1} \cdot  e_j\cdot \beta \cdot \pi^j_{2}$.
\end{proof}
We have now proved that, for all $j \in \{1,\ldots , m\}$,
$M$ has an execution of the form
$\rho_1\cdots \rho_m \cdot \gamma \cdot \delta_{j-1} \cdot \pi^j_{1} \cdot  e_j\cdot \beta \cdot \pi^j_{2}$ (Figure~\ref{sfig:rw-4}). 

The execution in Figure~\ref{sfig:rw-4}
is not opaque.
Indeed, in any serialization the following must hold. 
Since $T_{n-1}$ reads the value written by $T_j$ in $X_j$,
$T_j$ must be committed.  
Since $\Read_{n}(X_j)$ returns the initial value $v$, 
$T_{n}$ must precede $T_j$.
The committed transaction $T_{n-2}$, which writes a new value to $Z_j$, must precede 
$T_{n}$ to respect the real-time order on transactions. 
However, $T_j$ must precede $T_{n-2}$ since $\Read_j(Z_j)$ returns the initial value of $Z_j$.
The cycle $T_j\rightarrow T_{n-2} \rightarrow T_{n} \rightarrow T_j$
implies that there exists no such a serialization.

Thus, for each $j\in \mathbb{J}$, transaction $T_n$ must perform a RAW or an AWAR during 
the t-read of $X_j$ in the course
of performing $m$ t-reads of $X_1, \ldots , X_m$ immediately after $\rho_1\cdots \rho_m$.
Since $|\mathbb{J}|\geq \lceil \frac{(n-3)}{2} \rceil$, in the worst-case, $T_n$ must perform $\Omega(n)$ RAW/AWARs during the execution of $m$ t-reads
immediately after $\rho_1\cdots \rho_m$.
\end{proof}
%
%
%
\section{Algorithms for obstruction-free TMs}
\label{sec:oftmalgos}
In this section, we present two opaque obstruction-free TM implementations: the first one satisfies RW DAP, but not strict DAP
while the second one satisfies weak DAP, but not RW DAP.
\subsection{An opaque RW DAP TM implementation}
\label{sec:rwoftm}
In this section, we describe a RW DAP TM implementation in $\mathcal{OF}$ (based on \emph{DSTM}~\cite{HLM+03}).

Every t-object $X_m$ maintains a base object $\ms{tvar}[m]$ and every transaction $T_k$
maintains a $\ms{status}[k]$ base object. Both base objects support 
the \emph{read}, \emph{write} and \emph{compare-and-swap} (\emph{cas}) primitives.

The object $\ms{tvar}[m]$ stores a triple:
the \emph{owner} of $X_m$ is an updating transaction that performs the latest write to $X_m$, 
the \emph{old value} and \emph{new value} of $X_m$ represent two latest versions
of $X_m$. The base object $\ms{status}[k]$ denotes if $T_k$
is \emph{live} (i.e. t-incomplete), \emph{committed} or \emph{aborted}.
Intuitively, if $\ms{status}[k]$ is \emph{committed}, then other transactions can safely read the value
of the t-objects updated by $T_k$.

Implementation of $\Read_k(X_m)$ first reads $\ms{tvar}[m]$ and checks if the \emph{owner} of $X_m$ is \emph{live}; if so, it
forcefully aborts the owning transaction and returns the \emph{old value} of $X_m$. Otherwise, if
the owner is \emph{committed}, it returns the \emph{new value} of $X_m$. In both cases, it only returns
a non-abort value if no t-object previously read has been updated since.
The $\Write_k(X_m,v)$
works similar to the $\Read_k(X_m)$ implementation; but additionally, if the \emph{owner} of $X_m$ is \emph{live}, it
forcefully aborts the owning transaction, assumes ownership of $X_m$, sets $v$ as the new value
of $X_m$ and leaves the \emph{old value} of $X_m$ unchanged. Otherwise, if the \emph{owner} of $X_m$ 
is a committed transaction, it updates the \emph{old value} of $X_m$ to be the value of $X_m$ updated by its previous \emph{owner}.
The $\TryC_k$ implementation sets $\ms{status}[k]$ to \emph{committed} if it has not been set to
aborted by a concurrent transaction, otherwise $T_k$ is deemed aborted.
Since any t-read operation performs at most two AWARs and the \emph{tryC} performs only a single AWAR, 
any read-only transaction $T$
performs at most $O(|\Rset(T)|)$ AWARs.
The pseudocode is described in Algorithm~\ref{alg:oftm}.
\begin{algorithm}[!h]
\caption{RW DAP opaque implementation $M\in \mathcal{OF}$; code for $T_k$
}\label{alg:oftm}
  \begin{algorithmic}[1]
  	\begin{multicols}{2}
  	{\footnotesize
	\Part{Shared base objects}{
		\State $\ms{tvar}[m]$, storing $[\ms{owner}_m,\ms{oval}_m,\ms{nval}_m]$
		\State ~~~~for each t-object $X_m$, supports read, write, cas
		\State ~~~~$\ms{owner}_m$, a transaction identifier 
		\State ~~~~$\ms{oval}_m\in V$
		\State ~~~~$\ms{nval}_m\in V$
		\State $\ms{status}[k] \in \{\ms{live},\ms{aborted},\ms{committed}\}$,  
		\State ~~~~for each $T_k$; supports read, write, cas
	}\EndPart	
	\Part{Local variables}{
		\State $\ms{Rset}_k,\ms{Wset}_k$ for every transaction $T_k$;
		\State ~~~~dictionaries storing $\{X_m$, $\ms{Tvar}[m]\}$
	}\EndPart	
	\Statex
	\Part{\Read$_k(X_m)$}{
		\State $[\ms{owner}_m,\ms{oval}_m,\ms{nval}_m]$ $\gets$ $\ms{tvar}[m].\lit{read}()$ \label{line:linr}
		
		\If{$\ms{owner}_m \neq k$}
			
			\State $s_m\gets \ms{status}[\ms{owner}_m].\lit{read}()$ \label{line:status1}
			\If{$s_m=\ms{committed}$} \label{line:rcurr}
				\State $\ms{curr}=\ms{nval}_m$
			\ElsIf{$s_m=\ms{aborted}$} 
			  \State $\ms{curr}=\ms{oval}_m$
			  
			\Else
				\If{$\ms{status}[\ms{owner}_m].\lit{cas}(\ms{live},\ms{aborted})$} \label{line:awar}
				  \State $\ms{curr}=\ms{oval}_m$
				\Else
				\Return $A_k$ \EndReturn
				\EndIf
				
			\EndIf
			\If{$ \ms{status}[k]=\ms{live} \wedge \neg \lit{validate}()$} \label{line:rc1}
				\State $\Rset(T_k).\lit{add}(\{X_m,[\ms{owner}_m,\ms{oval}_m,\ms{nval}_m]\})$
				\Return $\ms{curr}$ \EndReturn
			\EndIf
			\Return $A_k$ \EndReturn \label{line:of2}
			
		\Else
			
			\Return $\Rset(T_k).\lit{locate}(X_m)$ \EndReturn
				
		\EndIf

   	 }\EndPart
	\Statex
	\Part{Function: $\lit{validate}()$}{
		\If{$\exists \{X_j,[\ms{owner}_j,\ms{oval}_j,\ms{nval}_j]\} \in \Rset(T_k)$:\\
		~~~~~~~($[\ms{owner}_j,\ms{oval}_j,\ms{nval}_j]\neq \ms{tvar}[j].\lit{read}())$}
			\Return $\true$ \EndReturn
		\EndIf
		\Return $\false$ \EndReturn
	}\EndPart
	
	\newpage
	\Part{\Write$_k(X_m,v)$}{
		\State $[\ms{owner}_m,\ms{oval}_m,\ms{nval}_m]$ $\gets$ $\ms{tvar}[m].\lit{read}()$ \label{line:writeread}
		\If{$\ms{owner}_m \neq k$}
			
			\State $s_m\gets \ms{status}[\ms{owner}_m].\lit{read}()$ \label{line:status2}
			\If{$s_m=\ms{committed}$} \label{line:nval}
				\State $\ms{curr}=\ms{nval}_m$
			\ElsIf{$s_m=\ms{aborted}$} 
			  \State $\ms{curr}=\ms{oval}_m$
			\Else
				\If{$\ms{status}[\ms{owner}_m].\lit{cas}(\ms{live},\ms{aborted})$} \label{line:writeabort}
				  \State $\ms{curr}=\ms{oval}_m$
				
				\Else
				\Return $A_k$ \EndReturn
			\EndIf
				
		\EndIf
		\State $o_m\gets \ms{tvar}[m].\lit{cas}([\ms{owner}_m,\ms{oval}_m,\ms{nval}_m],[k,\ms{curr},v])$ \label{line:linw}
		\If{$o_m\wedge \ms{status}[k]= \ms{live}$} \label{line:wc1}
			\State $\ms{Wset}_k.\lit{add}(\{X_m,[k,\ms{curr},v]\})$
				\Return $ok$ \EndReturn
		\Else
			\Return $A_k$ \EndReturn \label{line:of4}
		\EndIf
		\Else
				\State $[\ms{owner}_m,\ms{oval}_m,\ms{nval}_m]=\ms{Wset}_k.\lit{locate}(X_m)$
				\State $s= \ms{tvar}[m].\lit{cas}([\ms{owner}_m,\ms{oval}_m,\ms{nval}_m],[k,\ms{oval}_m,v])$
				\If{$s$}
					\State $\Wset(T_k).\lit{add}(\{X_m,[k,\ms{oval}_m,v]\})$
					\Return $ok$ \EndReturn
				\Else
					\Return $A_k$ \EndReturn \label{line:of5}
		
				\EndIf
		\EndIf
   	}\EndPart	
   	
   	\Statex
	\Part{\TryC$_k$()}{
		\If{$\lit{validate}()$}
			\Return $A_k$ \EndReturn  \label{line:of1}
		\EndIf
		\If{$\ms{status}[k].\lit{cas}(\ms{live},\ms{committed})$} \label{line:tryc}
			\Return $C_k$ \EndReturn
		\EndIf
		\Return $A_k$ \EndReturn 
   	 }\EndPart

	}\end{multicols}
  \end{algorithmic}
\end{algorithm}
\begin{lemma}
\label{lm:oftmopaque}
Algorithm~\ref{alg:oftm} implements an opaque TM.
\end{lemma}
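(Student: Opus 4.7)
\textbf{Proof plan for Lemma~\ref{lm:oftmopaque}.}
The plan is to show that every finite execution $E$ of Algorithm~\ref{alg:oftm} is final-state opaque, from which opacity follows by Definition~\ref{def:opaque}. I will do this in the standard way: first fix a completion $\overline E$ of $E$; then assign a \emph{serialization point} $\delta_{T_k}$ inside $E$ to every transaction $T_k\in\ms{txns}(\overline E)$; then let $S$ be the t-sequential history obtained by ordering transactions according to $<_E$ on their serialization points; and finally argue that $S$ is equivalent to $\ms{cseq}(\overline H)$, respects real-time order, and is legal.

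The natural choice of serialization points, dictated by the code, is as follows. For every updating transaction $T_k$ that commits, let $\delta_{T_k}$ be the successful $\ms{status}[k].\lit{cas}(\ms{live},\ms{committed})$ in Line~\ref{line:tryc}; because of this CAS, the transition ``$T_k$ becomes committed'' happens at a single, well-defined event. For every aborted or read-only transaction $T_k$, let $\delta_{T_k}$ be the linearization point of the last t-read performed by $T_k$ that returned a non-$A_k$ value (for an aborted updating transaction that never completed a non-$A_k$ t-read we place $\delta_{T_k}$ at its first event). The linearization point of each individual $\Read_k(X_m)$ that returns a non-$A_k$ value is set to the read of $\ms{tvar}[m]$ in Line~\ref{line:linr}; the linearization point of $\Write_k(X_m,v)$ that returns $ok$ is set to the successful CAS on $\ms{tvar}[m]$ in Line~\ref{line:linw}. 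The ordering of events in $E$ together with these choices gives the total order $<_S$.

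Real-time order is immediate: every serialization point lies strictly between the first and the last event of its transaction, so if $T_i\prec_E^{RT}T_j$ then $\delta_{T_i}<_E\delta_{T_j}$, hence $T_i<_ST_j$. For legality, the key lemma I would prove is that if $\Read_k(X_m)$ returns a value $v$, then $v$ equals the value written to $X_m$ by the transaction $T_\ell$ that is the immediate predecessor of $T_k$ in $S$ among transactions with $X_m\in\Wset$; this follows from a case analysis on the branch taken in Lines~\ref{line:rcurr}--\ref{line:awar}: if the owner's status was read as $\ms{committed}$ then $\delta_{T_\ell}<_E\delta_{\Read_k(X_m)}$, and if the owner was aborted (either already or via the CAS in Line~\ref{line:awar}) then $\ms{oval}_m$ is precisely the value deposited by the previous committed owner of $X_m$. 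The validation in Line~\ref{line:rc1} (together with the analogous check embedded in $\TryC_k$ via Line~\ref{line:of1}) rules out the existence of a committing writer whose serialization point lies between $\delta_{T_\ell}$ and $\delta_{T_k}$ in $S$: any such writer would have changed $\ms{tvar}[m]$ and thus caused $\lit{validate}$ to return $\true$ on the entry $\{X_m,[\ms{owner}_m,\ms{oval}_m,\ms{nval}_m]\}$ stored in $\Rset(T_k)$.

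The main obstacle I anticipate is the ordering subtlety around Line~\ref{line:linw} versus Line~\ref{line:tryc}: an updating transaction installs itself as owner in $\ms{tvar}[m]$ \emph{before} its status CAS, so a concurrent reader may observe it as owner while its status is still $\ms{live}$. Algorithm~\ref{alg:oftm} resolves this by having the reader force the owner to abort (Line~\ref{line:awar}); to argue that this preserves legality I plan to show that the abort-causing CAS forbids the owner from ever later becoming $\ms{committed}$ (the CAS in Line~\ref{line:tryc} must fail), so the owner is aborted in $\overline{H}$ and correctly absent from $\ms{cseq}(\overline H)$; symmetrically, once a status CAS in Line~\ref{line:tryc} succeeds, no reader can subsequently flip it to $\ms{aborted}$. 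Combined with the observation that $\ms{tvar}[m]$ is modified only by CAS in Line~\ref{line:linw}, so that the sequence of owners of $X_m$ forms a totally ordered chain in $<_E$, this yields the legality of $S$; together with real-time preservation and the evident equivalence to a completion of $E$, this establishes opacity.
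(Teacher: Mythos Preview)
Your proposal is correct and follows essentially the same approach as the paper: assign linearization points to t-operations (with $\Read_k(X_m)$ at Line~\ref{line:linr}), serialize committed updaters at their $\TryC$ and read-only/aborted transactions at their last successful t-read, and prove legality by a contradiction argument showing that any interposed committed writer between the ``read-from'' transaction and the reader would be detected by $\lit{validate}$ (either in Line~\ref{line:rc1} for a later t-read or in Line~\ref{line:of1} for $\TryC$). The only cosmetic difference is that the paper places $\ell_{\TryC_k}$ at the validation step (Line~\ref{line:of1}) rather than at the status CAS (Line~\ref{line:tryc}) as you do, and places $\ell_{\Write_k}$ at Line~\ref{line:writeread} rather than Line~\ref{line:linw}; both choices work and the ensuing case analysis is the same.
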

\begin{proof}
Since opacity is a safety property, we only consider finite executions~\cite{icdcs-opacity}.
Let $E$ by any finite execution of Algorithm~\ref{alg:oftm}. 
Let $<_E$ denote a total-order on events in $E$.

Let $H$ denote a subsequence of $E$ constructed by selecting
\emph{linearization points} of t-operations performed in $E$.
The linearization point of a t-operation $op$, denoted as $\ell_{op}$ is associated with  
a base object event or an event performed during 
the execution of $op$ using the following procedure. 

\vspace{1mm}\noindent\textbf{Completions.}
First, we obtain a completion of $E$ by removing some pending
invocations and adding responses to the remaining pending invocations
involving a transaction $T_k$ as follows:
every incomplete $\Read_k$, $\Write_k$, $\TryC_k$ operation is removed from $E$;
an incomplete $\Write_k$ is removed from $E$.

\vspace{1mm}\noindent\textbf{Linearization points.}
We now associate linearization points to
t-operations in the obtained completion of $E$ as follows:
\begin{itemize}
\item For every t-read $op_k$ that returns a non-A$_k$ value, $\ell_{op_k}$ is chosen as the event in Line~\ref{line:linr}
of Algorithm~\ref{alg:oftm}, else, $\ell_{op_k}$ is chosen as invocation event of $op_k$
\item For every t-write $op_k$ that returns a non-A$_k$ value, $\ell_{op_k}$ is chosen as the event in Line~\ref{line:writeread}
of Algorithm~\ref{alg:oftm}, else, $\ell_{op_k}$ is chosen as invocation event of $op_k$
\item For every $op_k=\TryC_k$ that returns $C_k$, $\ell_{op_k}$ is associated with Line~\ref{line:of1}.
\end{itemize}
$<_H$ denotes a total-order on t-operations in the complete sequential history $H$.

\vspace{1mm}\noindent\textbf{Serialization points.}
The serialization of a transaction $T_j$, denoted as $\delta_{T_j}$ is
associated with the linearization point of a t-operation 
performed during the execution of the transaction.

We obtain a t-complete history ${\bar H}$ from $H$ as follows: 
for every transaction $T_k$ in $H$ that is complete, but not t-complete, 
we insert $\textit{tryC}_k\cdot A_k$ after $H$. 

${\bar H}$ is thus a t-complete sequential history.
A t-complete t-sequential history $S$ equivalent to ${\bar H}$ is obtained by associating 
serialization points to transactions in ${\bar H}$ as follows:
\begin{itemize}
\item If $T_k$ is an update transaction that commits, then $\delta_{T_k}$ is $\ell_{tryC_k}$
\item If $T_k$ is an aborted or read-only transaction in $\bar H$,
then $\delta_{T_k}$ is assigned to the linearization point of the last t-read that returned a non-A$_k$ value in $T_k$
\end{itemize}
$<_S$ denotes a total-order on transactions in the t-sequential history $S$.
\begin{claim}
\label{cl:oseq}
If $T_i \prec_{H}^{RT} T_j$, then $T_i <_S T_j$.
\end{claim}
\begin{proof}
This follows from the fact that for a given transaction, its
serialization point is chosen between the first and last event of the transaction
implying if $T_i \prec_{H} T_j$, then $\delta_{T_i} <_{E} \delta_{T_j}$ implies $T_i <_S T_j$ 
\end{proof}
\begin{claim}
\label{cl:ofclaim0}
If transaction $T_i$ returns $C_i$ in $E$, then \emph{status[i]=committed} in $E$.
\end{claim}
\begin{proof}
Transaction $T_i$ must perform the event in Line~\ref{line:tryc}
before returning $T_i$ i.e. the \emph{cas} on its own \emph{status} to change the
value to \emph{committed}. The proof now follows from the fact that any other transaction
may change the \emph{status} of $T_i$ only if it is \emph{live} (Lines~\ref{line:writeabort} and \ref{line:awar}).
\end{proof}
\begin{claim}
\label{cl:oreadfrom}
$S$ is legal.
\end{claim}
\begin{proof}
Observe that for every $\Read_j(X) \rightarrow v$, there exists some transaction $T_i$
that performs $\Write_i(X,v)$ and completes the event in Line~\ref{line:linw} to write $v$ as the \emph{new value} of $X$ such that
$\Read_j(X) \not\prec_H^{RT} \Write_i(X,v)$. 
For any updating
committing transaction $T_i$, $\delta_{T_i}=\ell_{\TryC_i}$. 
Since $\Read_j(X)$ returns a response $v$, the event in Line~\ref{line:linr} must succeed
the event in Line~\ref{line:tryc} when $T_i$ changes \emph{status[i]} to \emph{committed}.
Suppose otherwise, then $\Read_j(X)$ subsequently forces $T_i$ to abort by writing
\emph{aborted} to \emph{status[i]} and must return the \emph{old value} of $X$ 
that is updated by the previous \emph{owner} of $X$, which must be committed in $E$ (Line~\ref{line:nval}).
Since $\delta_{T_{i}}=\ell_{\TryC_{i}}$ precedes the event in Line~\ref{line:tryc},
it follows that $\delta_{T_{i}} <_E \ell_{\Read_{j}(X)}$.

We now need to prove that $\delta_{T_{i}} <_E \delta_{T_{j}}$. Consider the following cases:
\begin{itemize}
\item
if $T_j$ is an updating committed transaction, then $\delta_{T_{j}}$ is assigned to $\ell_{\TryC_{j}}$.
But since $\ell_{\Read_{j}(X)} <_E \ell_{\TryC_{j}}$, it follows that $ \delta_{T_{i}} <_E \delta_{T_{j}}$.
\item
if $T_j$ is a read-only or aborted transaction, then $\delta_{T_{j}}$ is assigned to
the last t-read that did not abort. Again, it follows that $ \delta_{T_{i}} <_E \delta_{T_{j}}$.
\end{itemize}
To prove that $S$ is legal, we need to show that,
there does not exist any
transaction $T_k$ that returns $C_k$ in $S$ and performs $\Write_k(X,v')$; $v'\neq v$ such that $T_i <_S T_k <_S T_j$. 
Now, suppose by contradiction that there exists a committed transaction $T_k$, $X \in \Wset(T_k)$ that writes $v'\neq v$ to $X$ 
such that $T_i <_S T_k <_S T_j$.
Since $T_i$ and $T_k$ are both updating transactions that commit,
\begin{center}
($T_i <_S T_k$) $\Longleftrightarrow$ ($\delta_{T_i} <_{E} \delta_{T_k}$) \\
($\delta_{T_i} <_{E} \delta_{T_k}$) $\Longleftrightarrow$ ($\ell_{\TryC_i} <_{E} \ell_{\TryC_k}$) 
\end{center}
Since, $T_j$ reads the value of $X$ written by $T_i$, one of the following is true:
$\ell_{\TryC_i} <_{E} \ell_{\TryC_k} <_{E} \ell_{\Read_j(X)}$ or
$\ell_{\TryC_i} <_{E} \ell_{\Read_j(X)} <_{E} \ell_{\TryC_k}$.

If $\ell_{\TryC_i} <_{E} \ell_{\TryC_k} <_{E} \ell_{\Read_j(X)}$, then the event in Line~\ref{line:tryc}
performed by $T_k$ when it changes the status field to \emph{committed}
precedes the event in Line~\ref{line:linr} performed by $T_j$.
Since $\ell_{\TryC_i} <_{E} \ell_{\TryC_k}$ and both $T_i$ and $T_k$ are committed in $E$,
$T_k$ must perform the event in Line~\ref{line:writeread} after $T_i$ changes \emph{status[i]}
to \emph{committed} since otherwise, $T_k$ would perform the event in Line~\ref{line:writeabort}
and change \emph{status[i]} to \emph{aborted}, thereby forcing $T_i$ to return $A_i$.
However, $\Read_j(X)$ observes that the \emph{owner} of $X$ is $T_k$
and since the \emph{status} of $T_k$ is committed at this point in the execution,
$\Read_j(X)$ must return $v'$ and not $v$---contradiction.

Thus, $\ell_{\TryC_i} <_{E} \ell_{\Read_j(X)} <_{E} \ell_{\TryC_k}$.
We now need to prove that $\delta_{T_{j}}$ indeed precedes $\delta_{T_{k}}=\ell_{\TryC_k}$ in $E$.

Now consider two cases:
\begin{itemize}
\item
Suppose that $T_j$ is a read-only transaction. 
Then, $\delta_{T_j}$ is assigned to the last t-read performed by $T_j$ that returns a non-A$_j$ value. 
If $\Read_j(X)$ is not the last t-read that returned a non-A$_j$ value, then there exists a $read_j(X')$ such that 
$\ell_{\Read_j(X)} <_{E} \ell_{\TryC_k} <_E \ell_{read_j(X')}$.
But then this t-read of $X'$ must abort since the value of $X$ has been updated by $T_k$ since $T_j$ first 
read $X$---contradiction.
\item
Suppose that $T_j$ is an updating transaction that commits, then $\delta_{T_j}=\ell_{\TryC_j}$ which implies that
$\ell_{read_j(X)} <_{E} \ell_{\TryC_k} <_E \ell_{\TryC_j}$. Then, $T_j$ must neccesarily perform the validation
of its read set in Line~\ref{line:of1} and return $A_j$---contradiction.
\end{itemize}
\end{proof}
Claims~\ref{cl:oseq} and \ref{cl:oreadfrom} establish that
Algorithm~\ref{alg:oftm} is opaque.
\end{proof}

\begin{theorem}
\label{th:ofraw}
Algorithm~\ref{alg:oftm} describes a RW DAP, progressive opaque TM implementation $M\in \mathcal{OF}$ such that in every
execution $E$ of $M$, 
\begin{itemize}
\item 
the total number of stalls incurred by a t-read operation invoked in $E$ is $O(n)$,
\item
every read-only transaction $T \in \ms{txns}(E)$ performs $O(|\Rset(T)|)$ AWARs in $E$, and
\item
every complete t-read operation invoked by transaction $T_k \in \ms{txns}(E)$ performs $O(|\Rset_E(T_k)|$ steps.
\end{itemize}
\end{theorem}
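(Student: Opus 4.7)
}
The plan is to establish each claimed property in turn. Opacity is already proved in Lemma~\ref{lm:oftmopaque}, so I focus on the remaining properties. For \emph{OF TM-liveness}, I would observe that every t-operation in Algorithm~\ref{alg:oftm} performs only a bounded number of reads, writes, and CASes on $\ms{tvar}[\cdot]$ and $\ms{status}[\cdot]$, with no unbounded loops or waiting statements, so each invocation returns a matching response (or $A_k$) within a finite number of steps. For \emph{OF TM-progress}, I would consider any step contention-free extension of $T_k$ from some configuration: since no other process touches $\ms{tvar}[m]$ for $X_m \in \Dset(T_k)$ or $\ms{status}[k]$ during $T_k$'s execution, every CAS performed by $T_k$ (on $\ms{status}[\ms{owner}_m]$ in Lines~\ref{line:awar} and \ref{line:writeabort}, on $\ms{tvar}[m]$ in Line~\ref{line:linw}, and on $\ms{status}[k]$ in Line~\ref{line:tryc}) succeeds, and every invocation of $\lit{validate}()$ returns $\false$. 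Hence $T_k$ commits.

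For \emph{progressiveness}, I would enumerate the ways $T_k$ can return $A_k$ and show that each witnesses a conflict with a concurrent transaction: (a) the CAS on $\ms{status}[\ms{owner}_m]$ in Line~\ref{line:awar} or \ref{line:writeabort} fails, which means some concurrent transaction modified the status of $\ms{owner}_m$, a transaction whose write set contains $X_m \in \Dset(T_k)$; (b) $\lit{validate}()$ returns $\true$ in Line~\ref{line:rc1} or \ref{line:of1}, meaning that some $X_j \in \Rset(T_k)$ was overwritten via the CAS in Line~\ref{line:linw} by a concurrent updating transaction; (c) the CAS on $\ms{tvar}[m]$ in Line~\ref{line:linw} fails, implying a concurrent writer to $X_m$; (d) $T_k$ finds $\ms{status}[k] \ne \ms{live}$ or the CAS in Line~\ref{line:tryc} fails, which, by the algorithm, only happens if a concurrent transaction invoked $\ms{status}[k].\lit{cas}(\ms{live},\ms{aborted})$ while processing some $X_m$ for which $T_k$ had become the owner, i.e.\ $X_m \in \Wset(T_k)$ is accessed by the concurrent transaction.

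For \emph{RW DAP}, I would do a case analysis on the base object $b$ on which two transactions $T_i, T_j$ contend in an execution $E$. If $b = \ms{tvar}[m]$, then both $T_i$ and $T_j$ issued an operation on $X_m$, so $X_m \in \Dset(T_i) \cap \Dset(T_j)$. If $b = \ms{status}[k]$, then at most one of $T_i, T_j$ is $T_k$ itself; any other transaction $T_\ell$ that reads or CASes $\ms{status}[k]$ does so (in Lines~\ref{line:status1}, \ref{line:awar}, \ref{line:status2}, \ref{line:writeabort}) only after reading $\ms{owner}_m = k$ from $\ms{tvar}[m]$ for some $X_m \in \Dset(T_\ell)$. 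Since $k$ becomes $\ms{owner}_m$ only via $T_k$'s CAS in Line~\ref{line:linw}, this implies $X_m \in \Wset(T_k) \cap \Dset(T_\ell)$. Combining across the two transactions that contend on $\ms{status}[k]$, this places an edge in the conflict graph $\tilde{G}(T_i,T_j,E)$ between their data sets (through $T_k$'s write to $X_m$), witnessing that $T_i$ and $T_j$ are not read-write disjoint-access.

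For the \emph{complexity bounds}, I would first count steps: a complete $\Read_k(X_m)$ executes $O(1)$ accesses to $\ms{tvar}[m]$ and $\ms{status}[\ms{owner}_m]$, followed by $\lit{validate}()$ which issues one read of $\ms{tvar}[j]$ per $X_j \in \Rset(T_k)$, giving $O(|\Rset(T_k)|)$ steps. For \emph{AWAR complexity} in a read-only transaction $T$, the only nontrivial events applied by $T$ are the CAS on $\ms{status}[\ms{owner}_m]$ at Line~\ref{line:awar}, performed at most once per t-read, and the single CAS on $\ms{status}[k]$ in $\TryC_k$, totalling $O(|\Rset(T)|)$ AWARs. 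For \emph{stalls} per t-read, the only base objects on which $T_k$ can contend with nontrivial primitives of other processes are $\ms{tvar}[m]$ (CASes by would-be owners), $\ms{status}[\ms{owner}_m]$ (CASes aborting the owner), and $\ms{tvar}[j]$ for $X_j \in \Rset(T_k)$ during validation; I would show that across these events, each other process contributes $O(1)$ stalls in the worst case because its enabled nontrivial primitives are on distinct base objects bound to distinct t-objects. The main obstacle is the stall accounting: one must carefully argue that, although validation touches $|\Rset(T_k)|$ different base objects, the total number of distinct processes able to stall any one of them is at most $n-1$ and each such process can stall $T_k$ only a constant number of times, yielding the stated $O(n)$ bound rather than $O(n\cdot|\Rset(T_k)|)$.
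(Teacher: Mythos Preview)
Your proposal follows essentially the same structure and arguments as the paper's proof: opacity via Lemma~\ref{lm:oftmopaque}, wait-free TM-liveness by absence of unbounded loops, OF TM-progress and progressiveness by enumerating abort cases, RW DAP by case analysis on the contended base object, and the complexity bounds by direct counting. One point needs tightening.

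In the RW DAP argument, when $T_i$ and $T_j$ contend on $\ms{status}[k]$ and \emph{neither} of them is $T_k$, you correctly observe that each must have read $\ms{owner}_m = k$ for some $X_m$ in its own data set, so both touch t-objects in $\Wset(T_k)$. You then conclude there is an edge in $\tilde G(T_i,T_j,E)$ ``through $T_k$'s write''. But the vertex set of $\tilde G(T_i,T_j,E)$ only contains data sets of transactions in $\tau_E(T_i,T_j)$, i.e.\ transactions concurrent with at least one of $T_i,T_j$. You must therefore argue that $T_k$ is concurrent with one of them. The paper does this by noting that contention requires at least one of $T_i,T_j$ to apply a \emph{nontrivial} primitive to $\ms{status}[k]$; the only such primitive available to a transaction other than $T_k$ is the CAS in Lines~\ref{line:awar} or \ref{line:writeabort}, which is reached only after reading $\ms{status}[k]=\ms{live}$. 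Hence $T_k$ had not yet completed its $\TryC_k$, so $T_k$ is concurrent with that transaction, and the edge in $\tilde G$ is legitimate.

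On the stall bound: you are right to flag the accounting across validation reads as the delicate point. The paper's argument here is brief---it essentially invokes that at most $n-1$ other transactions can be t-incomplete at any point and gives a concrete $(n-1)$-stall execution---so your level of detail is in line with the paper's, but the justification that each other process contributes only $O(1)$ stalls across \emph{all} events of the t-read (rather than per event) would need more care in a fully rigorous version.
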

\begin{proof}
\textit{(Opacity)}
Follows from Lemma~\ref{lm:oftmopaque}

\textit{(TM-liveness and TM-progress)}
Since none of the implementations of the t-operations in Algorithm~\ref{alg:oftm}
contain unbounded loops or waiting statements, every
t-operation $op_k$ returns a matching response after taking a finite number of steps.
Thus, Algorithm~\ref{alg:oftm} provides wait-free TM-liveness.

To prove OF TM-progress, we proceed by enumerating the cases under which a transaction $T_k$ may be aborted in any execution.
\begin{itemize}
\item
Suppose that there exists a $\Read_k(X_m)$ performed by $T_k$ that returns $A_k$.
If $\Read_k(X_m)$ returns $A_k$ in Line~\ref{line:of2}, then there exists a concurrent transaction
that updated a t-object in $\Rset(T_k)$ or changed \emph{status[k]} to \emph{aborted}.
In both cases, $T_k$ returns $A_k$ only because there is step contention.
\item
Suppose that there exists a $\Write_k(X_m,v)$ performed by $T_k$ that returns $A_k$ in Line~\ref{line:of4}.
Thus, either a concurrent transaction has changed \emph{status[k]} to \emph{aborted} or the value
in $\ms{tvar}[m]$ has been updated since the event in Line~\ref{line:writeread}.
In both cases, $T_k$ returns $A_k$ only because of step contention with another transaction.
\item
Suppose that a $\Read_k(X_m)$ or $\Write_k(X_m,v)$ return $A_k$ in Lines~\ref{line:awar} and \ref{line:writeabort} respectively.
Thus, a concurrent transaction has takes steps concurrently by updating the 
\emph{status} of $\ms{owner}_m$ since the read by $T_k$ in Lines~\ref{line:linr} and \ref{line:writeread} respectively.
\item
Suppose that $\TryC_k()$ returns $A_k$ in Line~\ref{line:of5}. This is because
there exists a t-object in $\Rset(T_k)$ that has been updated by a concurrent transaction since,
\emph{i.e.}, $\TryC_k()$ returns $A_k$ only on encountering step contention.
\end{itemize}
It follows that in any step contention-free execution of a transaction $T_k$
from a $T_k$-free execution, $T_k$ must return $C_k$ after taking a finite number of steps.

The enumeration above also proves that $M$ implements a progressive TM.

\textit{(Read-write disjoint-access parallelism)}
Consider any execution $E$ of Algorithm~\ref{alg:oftm} and let
$T_i$ and $T_j$ be any two transactions
that contend on a base object $b$ in $E$.
We need to prove that there is a path between a t-object in $\Dset(T_i)$ and a t-object in $\Dset(T_j)$ 
in ${\tilde G}(T_i,T_j,E)$ or there exists $X \in \Dset(T_i) \cap \Dset(T_j)$.
Recall that there exists an edge between t-objects $X$ and $Y$ in ${\tilde G}(T_i,T_j,E)$
only if there exists a transaction $T\in \ms{txns}(E)$ such that $\{X,Y\} \in \Wset(T)$.
\begin{itemize}
\item
Suppose that $T_i$ and $T_j$ contend on base object \ms{tvar}[m] belonging to t-object $X_m$ in $E$.
By Algorithm~\ref{alg:oftm}, a transaction accesses $X_m$ only if $X_m$ is contained in $\Dset(T_m)$.
Thus, both $T_i$ and $T_j$ must access $X_m$.
\item
Suppose that $T_i$ and $T_j$ contend on base object \ms{status}[i] in $E$ (the case when $T_i$ and $T_j$ contend
on \ms{status}[j] is symmetric).
$T_j$ accesses \emph{status[i]} while performing a t-read of some t-object $X$ in Lines~\ref{line:status1} and \ref{line:awar}
only if $T_i$ is the \emph{owner} of $X$.
Also, $T_j$ accesses \emph{status[i]} while performing a t-write to $X$ in Lines~\ref{line:status2} and \ref{line:writeabort}
only if $T_i$ is the \emph{owner} of $X$.
But if $T_i$ is the \emph{owner} of $X$, then $X \in \Wset(T_i)$.
\item
Suppose that $T_i$ and $T_j$ contend on base object \ms{status}[m] belonging to some transaction $T_m$ in $E$.
Firstly, observe that $T_i$ or $T_j$ access \emph{status[m]} only if 
there exist t-objects $X$ and $Y$ in $\Dset(T_i)$ and $\Dset(T_j)$ respectively such that $\{X,Y\} \in \Wset(T_m)$.
This is because $T_i$ and $T_j$ would both read \emph{status[m]} in Lines~\ref{line:status1} (during t-read) and 
\ref{line:status2} (during t-write)
only if $T_m$ was the previous \emph{owner} of $X$ and $Y$.
Secondly, one of $T_i$ or $T_j$ applies a nontrivial primitive to \emph{status[m]}
only if $T_i$ and $T_j$ read \emph{status[m]=live} in Lines~\ref{line:status1} (during t-read) and 
\ref{line:writeread} (during t-write).
Thus, at least one of $T_i$ or $T_j$ is concurrent to $T_m$ in $E$.
It follows that there exists a path between $X$ and $Y$ 
in ${\tilde G}(T_i,T_j,E)$.
\end{itemize}
\textit{(Complexity)} 
Every t-read operation performs at most one AWAR in an execution $E$ (Line~\ref{line:awar}) of Algorithm~\ref{alg:oftm}.
It follows that any read-only transaction $T_k \in \ms{txns}(E)$ performs at most $|\Rset(T_k)|$ AWARs in $E$.

The linear step-complexity is immediate from the fact that during the t-read operations, the transaction validates its entire read
set (Line~\ref{line:rc1}). All other t-operations incur $O(1)$ step-complexity since they involve no iteration statements
like \emph{for} and \emph{while} loops.

Since at most $n-1$ transactions may be t-incomplete at any point in an execution $E$, it follows that
$E$ is at most a $(n-1)$-stall execution for any t-read $op$ and every $T\in \ms{txns}(E)$
incurs $O(n)$ stalls on account of any event performed in $E$. More specifically, consider the following
execution $E$: for all $i\in \{1,\ldots , n-1\}$, each transaction $T_i$ performs $\Write_i(X_m,v)$
in a step-contention free execution until it is poised to apply a nontrivial event on $\ms{tvar}[m]$ (Line~\ref{line:linw}).
By OF TM-progress, we construct $E$ such that each of the $T_i$ is poised to apply a nontrivial event on $\ms{tvar}[m]$
after $E$. Consider the execution fragment of $\Read_n(X_m)$ that is poised to perform an event $e$
that reads $\ms{tvar}[m]$ (Line~\ref{line:linr}) immediately after $E$.
In the constructed execution, $T_n$ incurs $O(n)$ stalls on account of $e$ and thus, produces the desired $(n-1)$-stall execution
for $\Read_n(X)$.
\end{proof}
\subsection{An opaque weak DAP TM implementation}
\label{sec:woftm}
In this section, we describe a weak DAP TM implementation in $\mathcal{OF}$ with constant step-complexity t-read operations.
\begin{algorithm}[!h]
\caption{Weak DAP opaque implementation $M\in \mathcal{OF}$; code for $T_k$
}\label{alg:oftm2}
  \begin{algorithmic}[1]
  	{\footnotesize
	
	\Part{\Read$_k(X_m)$}{
		\State $[\ms{owner}_m,\ms{oval}_m,\ms{nval}_m]$ $\gets$ $\ms{tvar}[m].\lit{read}()$ \label{line:ofread1}
		
		\If{$\ms{owner}_m \neq k$}
			
			\State $s_m\gets \ms{status}[\ms{owner}_m].\lit{read}()$ \label{line:ownerread}
			\If{$s_m=\ms{committed}$} 
				\State $\ms{curr}=\ms{nval}_m$
			\ElsIf{$s_m=\ms{aborted}$} 
			  \State $\ms{curr}=\ms{oval}_m$

			\Else
				\If{$\ms{status}[\ms{owner}_m].\lit{cas}(\ms{live},\ms{aborted})$} \label{line:ownerwrite}
				  \State $\ms{curr}=\ms{oval}_m$
				\EndIf
				\Return $A_k$ \EndReturn
			\EndIf
				
			\State $o_m\gets \ms{tvar}[m].\lit{cas}([\ms{owner}_m,\ms{oval}_m,\ms{nval}_m],[k,\ms{oval}_m,\ms{nval}_m])$ \label{line:readowner}
			\If{$o_m\wedge \ms{status}[k]= \ms{live}$} 
			    \State $\Rset(T_k).\lit{add}(\{X_m,[\ms{owner}_m,\ms{oval}_m,\ms{nval}_m]\})$
				\Return $\ms{curr}$ \EndReturn
			\EndIf
			
		\Else
			
			\Return $\Rset(T_k).\lit{locate}(X_m)$ \EndReturn
				
		\EndIf

   	 }\EndPart
	\Statex
	\Part{\TryC$_k$()}{
		
		\If{$\ms{status}[k].\lit{cas}(\ms{live},\ms{committed})$} 
			\Return $C_k$ \EndReturn
		\EndIf
		\Return $A_k$ \EndReturn 
   	 }\EndPart

	}
  \end{algorithmic}
\end{algorithm}

Algorithm~\ref{alg:oftm2} describes a weak DAP implementation in $\mathcal{OF}$ that does not satisfy read-write DAP.
The code for the t-write operations is identical to Algorithm~\ref{alg:oftm}.
During the t-read of t-object $X_m$ by transaction $T_k$, $T_k$ becomes the \emph{owner} of $X_m$
thus eliminating the per-read validation step-complexity inherent to Algorithm~\ref{alg:oftm}.
Similarly, $\TryC_k$ also not involve performing the validation of the $T_k$'s read set; the implementation
simply sets $\ms{status}[k]=\ms{committed}$ and returns $C_k$.
\begin{theorem}
\label{th:ofweakdap}
Algorithm~\ref{alg:oftm2} describes a weak TM implementation $M\in \mathcal{OF}$ such that
in any execution $E$ of $M$, for every transaction $T \in \ms{txns}(E)$,
$T$ performs $O(1)$ steps during the execution of any t-operation in $E$.
\end{theorem}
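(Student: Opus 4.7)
\textbf{Proof proposal for Theorem~\ref{th:ofweakdap}.}

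The plan is to mirror the structure of the proof of Theorem~\ref{th:ofraw} while exploiting the two crucial simplifications of Algorithm~\ref{alg:oftm2}: a t-read no longer iterates over the read set and $\TryC_k$ no longer validates. Wait-free TM-liveness and OF TM-progress would follow by the same enumeration as in Lemma~\ref{lm:oftmopaque}-style case analysis: every code path of $\Read_k$ and $\TryC_k$ is loop-free and applies a constant number of primitives, and every $A_k$ response can be directly blamed on either a concurrent update to $\ms{tvar}[m]$, a concurrent transition of $\ms{status}[k]$ or $\ms{status}[\ms{owner}_m]$, or a failed CAS. In each case a concurrent transaction has taken a step, so a step contention-free execution of $T_k$ from a $T_k$-free configuration necessarily commits. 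The $O(1)$ step complexity claim is then immediate by inspection: $\Read_k$ performs at most one read, one CAS on $\ms{status}[\ms{owner}_m]$, one CAS on $\ms{tvar}[m]$, and one read of $\ms{status}[k]$; $\TryC_k$ performs a single CAS.

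For opacity, I would reuse the linearization/serialization machinery of Lemma~\ref{lm:oftmopaque}: linearize every non-aborting $\Read_k$ at its successful CAS on $\ms{tvar}[m]$ (Line~\ref{line:readowner}) rather than at the initial read, linearize every non-aborting $\Write_k$ at the CAS on $\ms{tvar}[m]$, and linearize every committing $\TryC_k$ at its successful CAS on $\ms{status}[k]$. Order committing updating transactions by their $\TryC$ linearization points and place read-only/aborted transactions at the linearization point of their last non-aborting t-read. The analogues of Claims~\ref{cl:oseq}--\ref{cl:oreadfrom} would be verified: real-time order is preserved by construction, and legality of each $\Read_k(X_m)\to v$ follows because after the CAS in Line~\ref{line:readowner}, any subsequent writer to $X_m$ must first CAS a tuple whose owner is $k$ and whose status is no longer \emph{live}; in particular if $T_k$ commits, the only way $\Read_k$ could have returned a stale value would be to have previously read a non-committed owner, which is precluded by the branch structure of the read code.

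The central and most delicate step will be the weak DAP argument. Two transactions $T_i, T_j$ can contend only on a base object of the form $\ms{tvar}[m]$ or $\ms{status}[\ell]$. Contention on $\ms{tvar}[m]$ forces $X_m \in \Dset(T_i)\cap \Dset(T_j)$, which gives the trivial $\Dset$-overlap case. For contention on $\ms{status}[\ell]$ the argument is more subtle than in Theorem~\ref{th:ofraw} because readers now install themselves as owners: $T_j$ accesses $\ms{status}[\ell]$ only because it read $\ell$ as the owner of some $X_m \in \Dset(T_j)$, and $\ell$ being owner of $X_m$ requires $X_m \in \Dset(T_\ell)$ (either because $T_\ell$ wrote $X_m$ or because it read it and performed the CAS in Line~\ref{line:readowner}). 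Crucially, since the conflict graph ${G}(T_i,T_j,E)$ for weak DAP uses $\Dset$ edges (not $\Wset$ edges as in ${\tilde G}$), the edge from $X_m$ (in $\Dset(T_i)$) to $X_{m'}$ (in $\Dset(T_j)$) via $T_\ell$'s data set is available, yielding the required path. I would also verify that this path is witnessed by transactions in $\tau_E(T_i,T_j)$, \emph{i.e.}, the owner $T_\ell$ is concurrent with at least one of $T_i, T_j$; this holds because $T_\ell$'s ownership entry in $\ms{tvar}[m]$ has not yet been overwritten when $T_i$ and $T_j$ contend on $\ms{status}[\ell]$.

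The main obstacle will be showing that this same construction does \emph{not} yield a $\Wset$-path, thereby confirming the separation: Algorithm~\ref{alg:oftm2} gives weak DAP but \emph{not} RW DAP. I would make this precise by exhibiting the execution already sketched in the proof of the proposition Weak DAP~$\ll$~RW DAP of Chapter~\ref{ch:tm-model}: a t-incomplete transaction reads $X$ and writes $Y$, after which two solo-running readers of $X$ and writers of $Y$ (with appropriate roles swapped) would need to access a common base object via the owner chain created by the first transaction's read-install, but the corresponding t-objects are \emph{not} jointly in any write set, so there is no RW DAP edge. This gives the required non-inclusion and completes the theorem.
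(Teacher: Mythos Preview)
Your proposal is correct and tracks the paper's proof closely. The paper dispatches opacity, TM-liveness and TM-progress in one sentence (``almost identical to the analogous proofs for Algorithm~\ref{alg:oftm}''), gives the same three-way case split on $\ms{tvar}[m]$, $\ms{status}[i]$, and $\ms{status}[m]$ for weak DAP that you outline, and justifies $O(1)$ step complexity simply by observing that no t-operation contains a loop. Your weak-DAP argument is essentially the paper's, including the key observation that the owner $T_\ell$ must be concurrent with one of $T_i,T_j$ because its status was read as \emph{live}.

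Two remarks. First, your opacity treatment is more careful than the paper's: moving the linearization point of a non-aborting $\Read_k(X_m)$ from the initial read of $\ms{tvar}[m]$ to the successful CAS in Line~\ref{line:readowner} is a sensible adaptation, since at that point the read value is certified to still be current and the reader has become owner, making the legality argument cleaner. The paper does not spell this out. Second, your final paragraph on showing that Algorithm~\ref{alg:oftm2} is \emph{not} RW DAP is not required by the theorem: the statement claims only weak DAP and $O(1)$ steps, and the paper's proof stops there. The separation between weak and RW DAP is mentioned in the surrounding text and in Chapter~\ref{ch:tm-model}, but it is not part of Theorem~\ref{th:ofweakdap}, so calling it ``the main obstacle'' overstates its role here.
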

\begin{proof}
The proofs of opacity, TM-liveness and TM-progress are almost identical to the analogous proofs
for Algorithm~\ref{alg:oftm}.

\textit{(Weak disjoint-access parallelism)}
Consider any execution $E$ of Algorithm~\ref{alg:oftm2} and let
$T_i$ and $T_j$ be any two transactions
that contend on a base object $b$ in $E$.
We need to prove that there is a path between a t-object in $\Dset(T_i)$ and a t-object in $\Dset(T_j)$ 
in ${\tilde G}(T_i,T_j,E)$ or there exists $X \in \Dset(T_i) \cap \Dset(T_j)$.
Recall that there exists an edge between t-objects $X$ and $Y$ in ${G}(T_i,T_j,E)$
only if there exists a transaction $T\in \ms{txns}(E)$ such that $\{X,Y\} \in \Dset(T)$.
\begin{itemize}
\item
Suppose that $T_i$ and $T_j$ contend on base object \ms{tvar}[m] belonging to t-object $X_m$ in $E$.
By Algorithm~\ref{alg:oftm2}, a transaction accesses $X_m$ only if $X_m$ is contained in $\Dset(T_m)$.
Thus, both $T_i$ and $T_j$ must access $X_m$.
\item
Suppose that $T_i$ and $T_j$ contend on base object \ms{status}[i] in $E$ (the case when $T_i$ and $T_j$ contend
on \ms{status}[j] is symmetric).
$T_j$ accesses \emph{status[i]} while performing a t-read of some t-object $X$ in 
Lines~\ref{line:ownerread} and \ref{line:ownerwrite}
only if $T_i$ is the \emph{owner} of $X$.
Also, $T_j$ accesses \emph{status[i]} while performing a t-write to $X$ in Lines~\ref{line:status2} and \ref{line:writeabort}
only if $T_i$ is the \emph{owner} of $X$.
But if $T_i$ is the \emph{owner} of $X$, then $X \in \Dset(T_i)$.
\item
Suppose that $T_i$ and $T_j$ contend on base object \ms{status}[m] belonging to some transaction $T_m$ in $E$.
Firstly, observe that $T_i$ or $T_j$ access \emph{status[m]} only if 
there exist t-objects $X$ and $Y$ in $\Dset(T_i)$ and $\Dset(T_j)$ respectively such that $\{X,Y\} \in \Dset(T_m)$.
This is because $T_i$ and $T_j$ would both read \emph{status[m]} in Lines~\ref{line:ownerread} (during t-read) and 
\ref{line:status2} (during t-write)
only if $T_m$ was the previous \emph{owner} of $X$ and $Y$.
Secondly, one of $T_i$ or $T_j$ applies a nontrivial primitive to \emph{status[m]}
only if $T_i$ and $T_j$ read \emph{status[m]=live} in Lines~\ref{line:ownerread} (during t-read) and 
\ref{line:writeread} (during t-write).
Thus, at least one of $T_i$ or $T_j$ is concurrent to $T_m$ in $E$.
It follows that there exists a path between $X$ and $Y$ 
in ${\tilde G}(T_i,T_j,E)$.
\end{itemize}
\textit{(Complexity)}
Since no implementation of any of the t-operation contains any iteration statements like \emph{for} and \emph{while} loops), 
the proof follows.
\end{proof}

%
%
\section{Why Transactional memory should not be obstruction-free}
\label{sec:p3c3s3}
%
\begin{figure}[h]
      
     \scalebox{1}[1]{
     \begin{tabularx}{\textwidth}{c|c|c}
	~~~~~ & Obstruction-free TMs & Progressive TM $LP$ \\ \hline
	strict DAP & No~\cite{OFTM} & Yes \\ \hline
	invisible reads+weak DAP & No  & Yes\\ \hline
	stall complexity of t-reads & $\Omega(n)$  
        & $O(1)$  \\ \hline
	RAW/AWAR complexity & $\Omega(n)$
         & $O(1)$  \\ \hline
	read-write base objects, wait-free TM-liveness & No~\cite{tm-book} & Yes
   \end{tabularx}
\caption{Complexity gap between blocking and non-blocking TMs}\label{fig:main2}    
}
\end{figure}
As a synchronization abstraction, TM came as an alternative to conventional lock-based
synchronization, and it therefore appears natural that early TM
implementations~\cite{HLM+03,astm,nztm,fraser}, 
avoided using locks. 
Instead, early TM designs relied on non-blocking
synchronization, where 
a prematurely halted transaction cannot prevent
all other transactions from committing. 
Possibly the weakest
progress condition elucidating non-blocking TM-progress is obstruction-freedom.

However, in $2005$, Ennals~\cite{Ennals05} argued
that obstruction-free TMs inherently yield poor performance, because
they require  transactions to forcefully abort each other. 
Ennals further described a \emph{lock-based} TM
implementation~\cite{Ennals-code} satisfying progressiveness
that he claimed to outperform \emph{DSTM}~\cite{HLM+03},
the most referenced obstruction-free TM implementation at the time.
Inspired by~\cite{Ennals05}, more recent lock-based progressive TMs, such as \emph{TL}~\cite{DStransaction06}, 
\emph{TL2}~\cite{DSS06} and \emph{NOrec}~\cite{norec},
demonstrate better performance than obstruction-free TMs on most workloads. 

There is a considerable amount of empirical evidence on the
performance gap between non-blocking
(obstruction-free) and blocking (progressive) TM implementations but
no analytical result explains it.
We present complexity lower and upper bounds that provide
such an explanation.

To exhibit a complexity gap between blocking and non-blocking TMs, we go back to the 
the progressive opaque TM implementation $LP$ (Algorithm~\ref{alg:ic}) that beats the impossibility result
and the lower bounds we established for obstruction-free TMs.     
Recall that our implementation $LP$, (1)~uses only read-write base objects and
provides wait-free TM-liveness, (2)~ensures strict DAP, (3)~has invisible reads, 
(4)~performs $O(1)$ non-overlapping RAWs/AWARs per transaction, and
(5)~incurs $O(1)$ memory stalls
for read operations (Theorem~\ref{th:ic}).
In contrast, from prior work and our lower bounds we know that
(i) no OF TM that provides wait-free transactional operations can be implemented
using only read-write base objects~\cite{tm-book}; 
(ii) no OF TM can provide strict DAP~\cite{OFTM};
(iii)~no weak DAP OF TM has invisible reads
(Section~\ref{sec:p3c3ss1}) and (iv)~no OF TM ensures a constant
number of stalls incurred by a t-read operation (Section~\ref{sec:p3c3ss2}). 
Finally, (v)~no RW DAP \emph{opaque} OF TM has constant  RAW/AWAR complexity
(Section~\ref{sec:p3c3ss3}).
In fact, (iv) and (v) exhibit a linear separation between blocking and non-blocking TMs w.r.t 
expensive synchronization and memory stall complexity, respectively.

Altogether, our results exhibit a considerable complexity gap between progressive and obstruction-free TMs, as summarized
in Figure~\ref{fig:main2}, that seems to
justify the shift in TM practice (circa. $2005$) from non-blocking to blocking TMs.

Overcoming our lower bounds for obstruction-free TMs individually is
comparatively easy. 
Say, TL~\cite{DStransaction06} combines strict DAP with invisible
reads, but it is not read-write, and it does not provide
constant RAW/AWAR and stall complexities. 

Coming out with a single algorithm that beats all these lower bounds is
quite nontrivial. Our algorithm $LP$ incurs the cost of \emph{incremental
  validation}, \emph{i.e.}, checking that the current read set has
not changed per every new read operation. 
This is, however, unavoidable for invisible read algorithms (cf. Theorem~\ref{th:iclb}), and
is, in fact,  believed to yield better performance in practice than
``visible'' reads~\cite{Ennals-code,DStransaction06, norec}, and we show that
it enables constant stall and RAW/AWAR complexity.     
%
%
\section{Related work and Discussion}
\label{sec:p3c3disc}
In this section, we summarize the results presented in this chapter and identify some unresolved questions.

\vspace{1mm}\noindent\textbf{Lower bounds for non-blocking TMs.}
Complexity of obstruction-free TMs was first studied by Guerraoui and Kapalka~\cite{OFTM,tm-book} who proved
that they cannot provide strict DAP. However, as we show in Section~\ref{sec:oftmalgos}, 
it is possible to realize weaker than strict DAP variants
of obstruction-free opaque TMs.
Bushkov et al.~\cite{BDFG14} improved on the impossibility result in \cite{OFTM} and showed that a variant of strict DAP
cannot be combined with obstruction-free TM-progress, even if a weaker (than strictly serializability) TM-correctness 
property is assumed.
In the thesis, we do not consider relaxations of strict serializability.

Guerraoui and Kapalka~\cite{OFTM,tm-book} also proved that a strict serializable TM that provides OF TM-progress and
wait-free TM-liveness cannot be implemented using only read and write primitives.
An interesting open question is whether we can implement strict serializable TMs in $\mathcal{OF}$ 
using only read and write primitives.

Observe that, since there are at most $n$ concurrent transactions,
we cannot do better than $(n-1)$ stalls (cf. Definition~\ref{def:stalls}).
Thus, the lower bound of Theorem~\ref{th:oftmstalls} is tight. 

Moreover, we conjecture that the linear (in $n$) lower bound of Theorem~\ref{th:oftriv} for RW DAP opaque obstruction-free TMs 
can be strengthened to be linear in the size of the transaction's read set.  
Then, Algorithm~\ref{alg:oftm}, which proves a linear upper bound in the size of 
the transaction's read set, would allow us to
establish a linear tight bound (in the size of the transaction's read set) for RW DAP opaque obstruction-free TMs.

\vspace{1mm}\noindent\textbf{Blocking versus non-blocking TMs.}
As highlighted in \cite{Ennals05,DStransaction06}, 
obstruction-free TMs typically must forcefully abort pending conflicting transactions.
This observation inspires the impossibility of invisible reads (Theorem~\ref{th:ir}). 
Typically, to detect the presence of a conflicting transaction and
abort it, the reading transaction must employ a RAW or a read-modify-write primitive like \emph{compare-and-swap},
motivating the linear lower bound on expensive synchronization
(Theorem~\ref{th:oftriv}).
Also, in obstruction-free TMs, a transaction may not wait for a
concurrent inactive transaction to complete and, as a result, we may
have an execution in which a transaction incurs a distinct stall due to a
transaction run by each other process, hence the linear stall
complexity (Theorem~\ref{th:oftmstalls}).    
Intuitively, since transactions in progressive TMs may abort themselves in case
of conflicts, they can employ invisible reads and maintain
constant stall and RAW/AWAR complexities.

The lower bound and the proof technique in Theorem~\ref{th:oftmstalls} 
is inspired by an analogous lower bound on \emph{linearizable}
\emph{solo-terminating} implementations~\cite{G05,AGHK09}
of a wide class of ``perturbable'' objects that include
\emph{counters}, \emph{compare-and-swap} and \emph{single-writer snapshots}~\cite{G05,AGHK09}.
Informally,
the definition of solo-termination (adapted to the TM context) says that for every finite execution $E$, 
and every transaction $T$ that is t-incomplete in $E$, there is a finite step
contention-free extension in which $T$ eventually commits. Observe that, under this definition, $T$
is guaranteed to commit even in some executions that are not step contention-free for $T$. However,
the definition of OF TM-progress used in the thesis ensures that $T$ is guaranteed to commit only if all its events
are issued in the absence of step contention.
Moreover, \cite{AGHK09} described a single-lock 
(only the process holding the lock can invoke an operation)
implementation of these objects
that incurs $O(\log n)$ stalls, thus establishing a separation between the worst-case operation 
stall complexity of non-blocking
and blocking (\emph{i.e.}, lock-based) implementations of these objects.
In this chapter, we presented a linear separation in memory stall complexity between obstruction-free TMs and
lock-based TMs characterized by progressiveness, which is a strictly stronger (than single-lock) progress guarantee,
thus establishing the inherent cost of non-blocking progress in the TM context.

Some benefits of obstruction-free TMs, namely their ability to make progress even if some
transactions prematurely fail, are not provided by progressive TMs. 
However, several papers~\cite{DStransaction06,DSS06,Ennals05} argued that lock-based TMs tend to outperform
obstruction-free ones by allowing for simpler algorithms with lower
overhead,
and their inherent progress issues may be resolved using timeouts and \emph{contention-managers}~\cite{dstm-contention}.
%
This chapter explains the empirically observed performance gap between
blocking and non-blocking TMs via a series of lower bounds on obstruction-free TMs and a progressive TM
algorithm that beats all of them.
%
\chapter{Lower bounds for partially non-blocking TMs}
\label{ch:p3c4}
\section{Overview}
\label{sec:p3c4s0}
It is easy to see that \emph{dynamic} TMs where
the patterns in which transactions access t-objects 
are not known in advance do not allow for
\emph{wait-free} TMs~\cite{tm-book}, \emph{i.e.}, every transaction must commit in a finite number of steps of the process
executing it,  regardless of the behavior of concurrent
processes.
Suppose that a transaction
$T_1$ reads t-object $X$, then a concurrent transaction $T_2$ reads
t-object $Y$, writes to $X$ and commits, and finally $T_2$ writes to
$Y$. 
Since $T_1$ has read the ``old'' value in $X$ and $T_2$ has read the
``old'' value  in $Y$, there is no way to commit $T_1$ and order the
two transactions in a sequential execution.   
As this scenario can be repeated arbitrarily often, even the weaker guarantee of \emph{local progress} that only requires that
each transaction \emph{eventually} commits if repeated
sufficiently often, cannot be ensured by \emph{any} strictly
serializable TM implementation,
regardless of the base objects it
uses~\cite{bushkov2012}.\footnote{Note that the counter-example would
  not work if we imagine that 
  the data sets accessed by a transaction can be known in
  advance. However, in the thesis, we consider the conventional
  dynamic TM programming model.}              
 
But can we ensure that at least \emph{some} transactions commit wait-free and what are the inherent costs?
%
It is often argued that many realistic workloads are
\emph{read-dominated}: the proportion of read-only transactions is
higher than that of updating ones, or read-only transactions have much
larger data sets than updating ones~\cite{stmbench7, Attiya09-tmread}. 
Therefore, it seems natural to require that read-only transactions
commit wait-free. Since we are interested in complexity lower bounds, we require that
updating transaction provide only sequential TM-progress. 

First, we focus on strictly serializable TMs with the above TM-progress conditions 
that use invisible reads.
We show that this requirement results in maintaining unbounded sets of versions
for every data item, \emph{i.e.}, such implementations 
may not be practical due to their space complexity. 
Secondly, we prove that strictly serializable TMs with these progress conditions
cannot ensure strict DAP. 
Thus, two transactions that access mutually disjoint data sets may prevent each other from committing.
Thirdly, for weak DAP TMs, we show that a read-only transaction (with
an arbitrarily large read
set) must sometimes perform at least one
expensive synchronization pattern~\cite{AGK11-popl} per t-read
operation, \emph{i.e.},
the expensive synchronization complexity of a read-only transaction is linear in
the size of its data set. 

Formally, we denote by $\mathcal{RWF}$ the class of partially non-blocking TMs.
\begin{definition}
\label{def:rwf} 
(The class $\mathcal{RWF}$) 
A TM implementation $M \in \mathcal{RWF}$ \emph{iff}
in its every execution:
\begin{itemize}
\item 
(\emph{wait-free TM-progress for read-only transactions}) 
every read-only transaction 
commits
in a finite number of its steps, and
\item 
(\emph{sequential TM-progress and sequential TM-liveness for updating transactions}) 
\emph{i.e.}, every transaction running step contention-free from a t-quiescent
configuration, commits in a finite number of its steps.
\end{itemize}
\end{definition}
\vspace{1mm}\noindent\textbf{Roadmap of Chapter~\ref{ch:p3c4}.}
Section~\ref{sec:p3c4s1} presents a lower bound on the inherent space complexity of TMs in $\mathcal{RWF}$.
Section~\ref{sec:p3c4s2} proves the impossibility of strict DAP TMs in $\mathcal{RWF}$
while in Section~\ref{sec:p3c4s3}, assuming weak DAP, we prove a linear, in the size of the transaction's read set, lower bound
on expensive synchronization complexity.
We conclude this chapter with a discussion of the related work and open questions concerning TMs in $\mathcal{RWF}$.
\section{The space complexity of invisible reads}
\label{sec:p3c4s1}
We prove that every strictly serializable TM implementation $M\in \mathcal{RWF}$ that uses invisible reads 
must keep unbounded sets of values for every t-object.
To do so, for every $c \in \mathbb{N}$, we construct an execution of $M$ that \emph{maintains at least $c$ distinct values
for every t-object}. We require the following technical definition:
\begin{figure*}[t]
\begin{center}
	\subfloat[for all $i\in \{1,\ldots , c-1 \}$, $T_{2i-1}$ writes $v_{i_{\ell}}$ to each $X_{\ell}$; $\Read_{2i}(X_1)$ must return $v_{i_{1}}$ \label{sfig:res-0}]{\scalebox{0.7}[0.7]{\begin{tikzpicture}
\node (r1) at (1.4,0) [] {};
\node (r2) at (8.4,-1) [] {};
\node (ri) at (14.9,-3) [] {};

\node (w1) at (4.5,-1) [] {};
\node (wi) at (11,-3) [] {};

\draw (r1) node [above] {\normalsize {$R_0(X_1) \rightarrow v_{0_{1}}$}};
\draw (r2) node [above] {\normalsize {$R_2(X_1) \rightarrow v_{1_{1}}$}};
\draw (ri) node [above] {\normalsize {$R_{2i}(X_1) \rightarrow v_{i_{1}}$}};

\draw (w1) node [above] {\normalsize {$\forall X_{\ell}\in \mathcal{X}$: write $v_{1_{\ell}}$}}; 
\draw (w1) node [below] {\normalsize {$T_1$ commits}};

\draw (wi) node [above] {\normalsize {$\forall X_{\ell}\in \mathcal{X}$: write $v_{i_{\ell}}$}}; 
\draw (wi) node [below] {\normalsize {$T_{2i-1}$ commits}};

\begin{scope}   
\draw [|-,dotted] (0,0) node[left] {$\ms{Phase}~ 0|$ $T_0$} to (20,0);
\draw [|-|,thick] (0,0) node[left] {} to (2,0);
\end{scope}
\begin{scope}   
\draw [|-,dotted] (7,-1) node[left] {  $T_2$} to (20,-1);
\draw [|-|,thick] (3,-1) node[left] {$\ms{Phase}~ 1|$ ${T}_{1}$} to (6,-1);
\draw [|-|,thick] (7,-1) node[left] {} to (9,-1);
\end{scope}
\begin{scope}   
\draw [-,dotted] (10,-2) node[left] {} to (20,-2);
\end{scope}
\begin{scope}   
\draw [|-,dotted] (13.5,-3) node[left] {} to (20,-3);
\draw [|-|,thick] (9.5,-3) node[left] { $\ms{Phase}~ i|$ ${T}_{2i-1}$} to (12.5,-3);
\draw [|-|,thick] (13.5,-3) node[left] {$T_{2i}$} to (15.5,-3);
\end{scope}
\begin{scope}   
\draw [] (16,-4) node[above] {$\downarrow$} to (16,-4);
\draw [-,dotted] (15.5,-5) node[above] {extend to $c-1$ phases} to (18,-5);
\end{scope}
\end{tikzpicture}}}
	\\
	\vspace{2mm}
	\subfloat[extend every read-only transaction $T_{2i}$ in phase $i$ with t-reads of $X_2,\ldots X_{\ell}, \ldots $; each $\Read_{2i}(X_{\ell})$ must return $v_{i_{\ell}}$ \label{sfig:res-1}]{\scalebox{0.7}[0.7]{\begin{tikzpicture}
\node (r1) at (1.2,0) [] {};
\node (r2) at (8.4,-1) [] {};
\node (ri) at (14.9,-3) [] {};
\node (rf1) at (19.5,-0) [] {};
\node (rf2) at (19.5,-1) [] {};
\node (rfi) at (19.5,-3) [] {};

\node (w1) at (4.5,-1) [] {};
\node (wi) at (11,-3) [] {};

\draw (r1) node [above] {\normalsize {$R_0(X_1) \rightarrow v_{0_{1}}$}};
\draw (r2) node [above] {\normalsize {$R_2(X_1) \rightarrow v_{1_{1}}$}};
\draw (ri) node [above] {\normalsize {$R_{2i}(X_1) \rightarrow v_{i_{1}}$}};

\draw (rf1) node [above] {\normalsize {$R_0(X_2) \rightarrow v_{0_{2}} \cdots R_0(X_\ell)\rightarrow v_{0_{\ell}} \cdots$}};
\draw (rf2) node [above] {\normalsize {$R_2(X_2) \rightarrow v_{1_{2}} \cdots R_2(X_\ell)\rightarrow v_{1_{\ell}} \cdots$}};
\draw (rfi) node [above] {\normalsize {$R_{2i}(X_2) \rightarrow v_{i_{2}} \cdots R_{2i}(X_\ell)\rightarrow v_{i_{\ell}} \cdots$}};

\draw (w1) node [above] {\normalsize {$\forall X_{\ell} \in \mathcal{X}$: write $v_{1_{\ell}}$}}; 
\draw (w1) node [below] {\normalsize {$T_1$ commits}};

\draw (wi) node [above] {\normalsize {$\forall X_{\ell} \in \mathcal{X}$: write $v_{i_{\ell}}$}}; 
\draw (wi) node [below] {\normalsize {$T_{2i-1}$ commits}};

\begin{scope}   
\draw [|-,dotted] (0,0) node[left] {$T_0$} to (22,0);
\draw [|-|,thick] (0,0) node[left] {} to (2,0);
\draw [|-|,thick] (17,0) node[left] {} to (22,0);
\end{scope}
\begin{scope}   
\draw [|-,dotted] (7,-1) node[left] {$T_2$} to (22,-1);
\draw [|-|,thick] (3,-1) node[left] {${T}_{1}$} to (6,-1);
\draw [|-|,thick] (7,-1) node[left] {} to (9,-1);
\draw [|-|,thick] (17,-1) node[left] {} to (22,-1);
\end{scope}
\begin{scope}   
\draw [-,dotted] (10,-2) node[left] {} to (22,-2);
\end{scope}
\begin{scope}   
\draw [|-,dotted] (13.5,-3) node[left] {} to (22,-3);
\draw [|-|,thick] (9.5,-3) node[left] {${T}_{2i-1}$} to (12.5,-3);
\draw [|-|,thick] (13.5,-3) node[left] {$T_{2i}$} to (15.5,-3);
\draw [|-|,thick] (17,-3) node[left] {} to (22,-3);
\end{scope}
\begin{scope}   
\draw [] (16,-4) node[above] {$\downarrow$} to (16,-4);
\draw [-,dotted] (15.5,-5) node[above] {extend to $c-1$ phases} to (18,-5);
\end{scope}
\end{tikzpicture}}}
        
	\caption{Executions in the proof of Theorem~\ref{th:inv}; execution in \ref{sfig:res-0} must maintain $c$ distinct
	values of every t-object
        \label{fig:invdap}} 
\end{center}
\end{figure*}
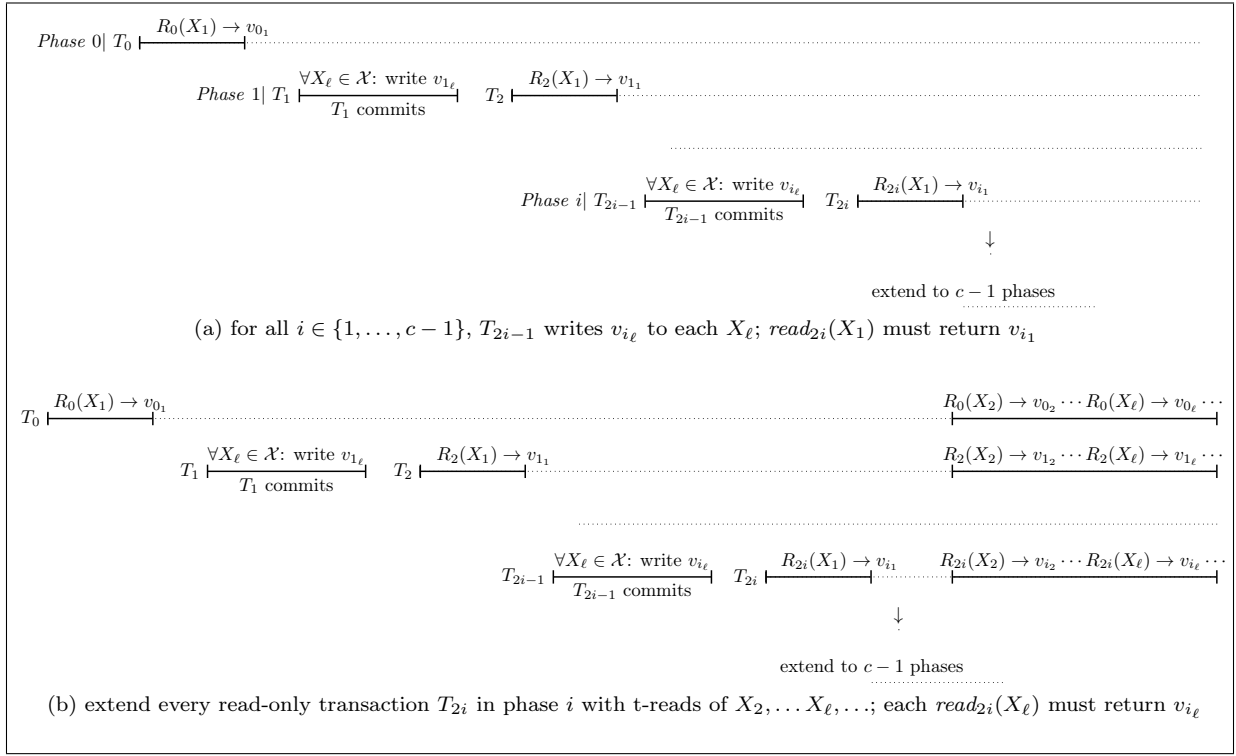
\begin{definition}
Let $E$ be any execution of a TM implementation $M$.
We say that $E$ \emph{maintains $c$ distinct values $\{v_1,\ldots , v_c\}$ of t-object $X$},
if there exists an execution $E\cdot E'$ of $M$ such that
\begin{itemize}
\item
$E'$ contains the complete executions of $c$ t-reads of $X$ and,
\item
for all $i\in \{1,\ldots , c\}$, the response of the $i^{th}$ t-read
of $X$ in $E'$ is $v_i$. 
\end{itemize}
\end{definition}
\begin{theorem}
\label{th:inv}
Let $M$ be any strictly serializable TM implementation in $\mathcal{RWF}$ that uses invisible reads, and
$\mathcal{X}$, any set of t-objects.
Then, for every $c \in \mathbb{N}$, there exists an execution $E$ of $M$
such that $E$ maintains at least $c$ distinct values of each t-object $X\in \mathcal{X}$.
\end{theorem}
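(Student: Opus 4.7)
The plan is to construct, for arbitrary $c\in\mathbb{N}$, an execution $E$ that is essentially the ``staircase'' of Figure~\ref{fig:invdap}, and then argue that we can extend $E$ with a read sweep over every t-object by each of $c$ distinct concurrent read-only transactions, forcing those transactions to expose $c$ pairwise distinct versions per t-object. Concretely, I would proceed in four steps: (i) construct the staircase $E$ phase by phase, (ii) verify by strict serializability that the single t-read in each phase returns the ``fresh'' version, (iii) use invisible reads and wait-free progress of read-only transactions to extend each read-only transaction with reads of every $X_\ell\in\mathcal{X}$, and (iv) argue by strict serializability that these extensions must return the ``correct'' version per phase.

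\textbf{Step 1 (constructing $E$).} I would build $E$ inductively. Phase $0$ consists of a pending read-only transaction $T_0$ performing $\Read_0(X_1)\to v_{0_1}$, where $v_{0_1}$ is the initial value. Inductively, given the prefix $E_{i-1}$ ending with the phase-$(i-1)$ t-read by $T_{2(i-1)}$ left pending, I would append phase $i$: an updating transaction $T_{2i-1}$ that runs step contention-free from a t-quiescent configuration (achievable because all earlier read-only transactions' only effects are invisible) writing a fresh value $v_{i_\ell}$ to every $X_\ell\in\mathcal{X}$ and committing, followed by a new read-only transaction $T_{2i}$ that performs $\Read_{2i}(X_1)$. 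By sequential TM-progress (and sequential TM-liveness) $T_{2i-1}$ commits, and by wait-free progress for read-only transactions $\Read_{2i}(X_1)$ returns some value. Strict serializability, together with the real-time order $T_{2i-1}\prec T_{2i}$, forces that value to be $v_{i_1}$. This yields the desired $E=E_{c-1}$.

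\textbf{Step 2 (read sweep via invisible reads).} After $E$, I would extend each still-pending read-only transaction $T_{2i}$, for $i=0,\ldots,c-1$, one after the other, with t-reads of $X_2, X_3,\ldots$ over all of $\mathcal{X}$. Because $M$ uses invisible reads, prepending or appending the events of a read-only transaction does not change the view of any other transaction: the extension obtained by letting each $T_{2i}$ complete its reads is indistinguishable, to every updating transaction $T_{2j-1}$ in $E$, from an execution in which $T_{2i}$ had empty read set. Hence the extended execution is a legal execution of $M$, and wait-free TM-progress for read-only transactions guarantees that each additional t-read returns a non-$A$ response.

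\textbf{Step 3 (values exposed) and main obstacle.} The main technical point is to show that, in the extension, $\Read_{2i}(X_\ell)$ must return precisely $v_{i_\ell}$. Each $T_{2i}$ is concurrent with every $T_{2j-1}$ for $j>i$ (the latter start only after $T_{2i}$ has begun), whereas $T_{2(i-1)+1}=T_{2i-1}$ precedes $T_{2i}$ in real-time order and writes $v_{i_\ell}$ to $X_\ell$. Strict serializability then leaves only one place to serialize $T_{2i}$: immediately after $T_{2i-1}$ and before $T_{2i+1}$, because any later position would contradict the already-committed $\Read_{2i}(X_1)\to v_{i_1}$, and any earlier position would violate the real-time order. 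Consequently each $\Read_{2i}(X_\ell)$ is forced to return $v_{i_\ell}$, and the $c$ values $v_{0_\ell},v_{1_\ell},\ldots,v_{(c-1)_\ell}$ are pairwise distinct by construction. This exhibits an extension of $E$ with $c$ complete t-reads of $X_\ell$ returning $c$ distinct values, so $E$ maintains at least $c$ distinct values for every $X_\ell\in\mathcal{X}$. The hard part will be carefully discharging the serializability argument in the presence of all $c$ overlapping read-only transactions simultaneously; I expect this to reduce to a clean cycle argument analogous to the one used in the proof of Theorem~\ref{th:ir}, ruling out any alternative serialization order.
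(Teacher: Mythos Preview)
Your proposal is correct and follows essentially the same approach as the paper: build the phased ``staircase'' execution using invisible reads to make the updating transactions see a t-quiescent configuration, use strict serializability to pin each $\Read_{2i}(X_1)$ to $v_{i_1}$, then extend each pending $T_{2i}$ with reads of the remaining t-objects and argue by strict serializability that $\Read_{2i}(X_\ell)$ is forced to return $v_{i_\ell}$. The only remark is that the ``hard part'' you anticipate in Step~3 is simpler than you suggest: no cycle argument is needed, just the two direct case contradictions you already sketched (a value from phase $j<i$ violates the real-time order $T_{2i-1}\prec^{RT} T_{2i}$, and a value from phase $j>i$ contradicts $\Read_{2i}(X_1)\to v_{i_1}$), and the presence of the other overlapping read-only transactions is irrelevant since they are invisible and need not even appear in the serialization.
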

\begin{proof}
Let $v_{0_{\ell}}$ be the initial value of t-object $X_{\ell}\in \mathcal{X}$.
For every $c\in \mathbb{N}$, we iteratively construct an execution $E$ of $M$ of the form depicted in Figure~\ref{sfig:res-0}.
The construction of $E$ proceeds in phases: there are at most $c-1$ phases. 
For all $i\in \{0,\ldots c-1\}$, we denote the execution after phase $i$ as $E_i$ which
is defined as follows:
\begin{itemize}
\item
$E_0$ is the complete step contention-free execution fragment $\alpha_0$ of read-only transaction $T_0$ that performs
$\Read_0(X_1)\rightarrow v_{0_{1}}$
\item
for all $i \in \{1,\ldots ,c-1\}$, $E_i$ is defined to be an execution of the form 
$\alpha_0\cdot \rho_1\cdot \alpha_1 \cdots \rho_i\cdot \alpha_i$ such that for all $j\in \{1,\ldots, i\}$,
\begin{itemize}
\item
$\rho_j$ is the t-complete step contention-free execution fragment of an updating transaction ${T}_{2j-1}$ that,
for all $X_{\ell} \in \mathcal{X}$ writes the value $v_{j_{\ell}}$ and commits
\item
$\alpha_j$ is the complete step contention-free execution fragment of a read-only transaction $T_{2j}$ that performs
$\Read_{2j}(X_1)\rightarrow v_{j_{1}}$
\end{itemize}
\end{itemize}
%
Since read-only transactions are invisible, for all $i\in \{0,\ldots, c-1\}$, the execution fragment $\alpha_i$ does not
contain any nontrivial events.
Consequently, for all $i<j\leq c-1$, the configuration after $E_i$ is indistinguishable to transaction ${T}_{2j-1}$
from a t-quiescent configuration and it must be committed in $\rho_{j}$ (by sequential progress for updating transactions).
Observe that, for all $1\leq j < i$, ${T}_{2j-1} \prec_{E}^{RT} {T}_{2i-1}$.
Strict serializability of $M$ now stipulates that, for all $i \in \{1,\ldots, c-1\}$, 
the t-read of $X_1$ performed by transaction $T_{2i}$ in the 
execution fragment $\alpha_{i}$ must return the value $v_{i_{1}}$ of $X_1$ as written by transaction ${T}_{2i-1}$ in
the execution fragment $\rho_i$ (in any serialization, 
${T}_{2i-1}$ is the latest committed transaction writing to $X_1$ that precedes $T_{2i}$). 
Thus, $M$ indeed has an execution $E$ of the form depicted in Figure~\ref{sfig:res-0}.

Consider the execution fragment $E'$ that extends $E$ in which, for all $i\in \{0,\ldots ,c-1 \}$, read-only transaction $T_{2i}$ 
is extended with the complete execution of the t-reads of every t-object 
$X_{\ell} \in \mathcal{X}\setminus \{X_1\}$ (depicted in Figure~\ref{sfig:res-1}).

We claim that, for all $i\in \{0,\ldots ,c-1 \}$, and for all $X_{\ell} \in \mathcal{X}\setminus \{X_1\}$,
$\Read_{2i}(X_{\ell})$ performed by transaction $T_{2i}$ must return the value $v_{i_{\ell}}$ of $X_{\ell}$ 
written by transaction ${T}_{2i-1}$ in the execution fragment $\rho_i$.
Indeed, by wait-free progress, $\Read_i(X_{\ell})$ must return a non-abort response in such an extension of $E$.
Suppose by contradiction that $\Read_i(X_{\ell})$ returns a response that is not $v_{i_{\ell}}$.
There are two cases: 
\begin{itemize}
\item
$\Read_{2i}(X_{\ell})$ returns the value $v_{j_{\ell}}$ written by transaction ${T}_{2j-1}$; $j<i$.
However, since for all $j < i$, ${T}_{2j} \prec_{E}^{RT} {T}_{2i}$, the execution is not strictly serializable---contradiction.
\item
$\Read_{2i}(X_{\ell})$ returns the value $v_{j_{\ell}}$ written by transaction ${T}_{2j}$; $j>i$.
Since $\Read_i(X_1)$ returns the value $v_{i_{1}}$ and ${T}_{2i} \prec_{E}^{RT} {T}_{2j}$,
there exists no such serialization---contradiction.
\end{itemize}
Thus, $E$ maintains at least $c$ distinct values of every t-object $X\in \mathcal{X}$.
\end{proof}
%
\section{Impossibility of strict DAP}
\label{sec:p3c4s2}
In this section, we prove that it is impossible to derive strictly serializable TM implementations in $\mathcal{RWF}$ which ensure
that any two transactions accessing pairwise disjoint data sets can execute without contending on the same base object.
\begin{figure*}[t]
\begin{center}
	\subfloat[By strict DAP, $T_0$ and $T_3$ do not contend on any base object \label{sfig:dap-0}]{\scalebox{0.7}[0.7]{\begin{tikzpicture}
\node (r1) at (3,0) [] {};

\node (e) at (5.5,-1) [] {};

\node (w1) at (-2,-1) [] {};
\node (w2) at (0,-1) [] {};
\node (c) at (1.5,-1) [] {};

\node (r3) at (9,-2) [] {};

\draw (r1) node [above] {\normalsize{$R_0(X_1) \rightarrow v$}};

\draw (e) node [above] {\normalsize {(event of $T_1$)}};

\draw (w1) node [above] { \normalsize{$W_1(X_1,nv)$}};
\draw (w2) node [above] { \normalsize{$W_1(X_3,nv)$}};
\draw (c) node [above] { \normalsize{$\TryC_1$}};

\draw (r3) node [above] { \normalsize{$R_3(X_3) \rightarrow nv$}};
\draw (r3) node [below] { \normalsize{$T_3$ commits}};

\begin{scope}   
\draw [-,dotted] (2,0) node[left] {$T_0$} to (18,0);
\draw [|-|,thick] (2,0) node[left] {} to (4,0);
\end{scope}
\begin{scope}   
\draw [-,thick] (-3,-1) node[left] {$T_1$} to (1.5,-1);
\draw [|-|,thick] (-3,-1) node[left] {} to (-1,-1);
\draw [|-|,thick] (-1,-1) node[left] {} to (1,-1);
\draw [|-,dotted] (1,-1) node[left] {} to (1.5,-1);
\draw  (5.5,-1) circle [fill, radius=0.05]  (5.5,-1);
\draw [-,dotted] (-3,-1) to  (5,-1);
\end{scope}
\begin{scope}   
\draw [|-|,thick] (8,-2) node[left] {$T_3$} to (10,-2);
\end{scope}
\end{tikzpicture}}}
 	\\ 
	\vspace{2mm}
	\subfloat[$\Read_0(X_2)$ must return $nv$ \label{sfig:dap-1}]{\scalebox{0.7}[0.7]{\begin{tikzpicture}
\node (r1) at (3,0) [] {};
\node (r2) at (13,0) [] {};

\node (e) at (5.5,-1) [] {};

\node (w1) at (-2,-1) [] {};
\node (w2) at (0,-1) [] {};
\node (c) at (1.5,-1) [] {};

\node (r3) at (9,-2) [] {};
\node (w3) at (-6,-2) [] {};

\draw (r1) node [above] {\normalsize{$R_0(X_1) \rightarrow v$}};
\draw (r2) node [above] {\normalsize{$R_0(X_2)\rightarrow nv$}};

\draw (e) node [above] {\normalsize{(event of $T_1$)}};

\draw (w1) node [above] { \normalsize{$W_1(X_1,nv)$}};
\draw (w2) node [above] { \normalsize{$W_1(X_3,nv)$}};
\draw (c) node [above] { \normalsize{$\TryC_1$}};

\draw (r3) node [above] { \normalsize{$R_3(X_3) \rightarrow nv$}};
\draw (r3) node [below] { \normalsize{$T_3$ commits}};

\draw (w3) node [above] { \normalsize{$W_2(X_2,nv)$}};
\draw (w3) node [below] { \normalsize{$T_2$ commits}};

\begin{scope}   
\draw [-,dotted] (2,0) node[left] {$T_0$} to (14,0);
\draw [|-|,thick] (2,0) node[left] {} to (4,0);
\draw [|-|,thick] (12,0) node[left] {} to (14,0);
\end{scope}
\begin{scope}   
\draw [-,thick] (-3,-1) node[left] {$T_1$} to (1.5,-1);
\draw [|-|,thick] (-3,-1) node[left] {} to (-1,-1);
\draw [|-|,thick] (-1,-1) node[left] {} to (1,-1);
\draw [|-,dotted] (1,-1) node[left] {} to (1.5,-1);
\draw  (5.5,-1) circle [fill, radius=0.05]  (5.5,-1);
\draw [-,dotted] (-3,-1) to  (5,-1);
\end{scope}
\begin{scope}   
\draw [|-|,thick] (8,-2) node[left] {$T_3$} to (10,-2);
\draw [|-|,thick] (-7,-2) node[left] {$T_2$} to (-5,-2);
\end{scope}
\end{tikzpicture}}}
        \\ 
        \vspace{2mm}
	\subfloat[By strict DAP, $T_0$ cannot distinguish this execution from the execution in \ref{sfig:dap-1} \label{sfig:dap-2}]{\scalebox{0.7}[0.7]{\begin{tikzpicture}
\node (r1) at (3,0) [] {};
\node (r2) at (17,0) [] {};

\node (e) at (5.5,-1) [] {};

\node (w1) at (-2,-1) [] {};
\node (w2) at (0,-1) [] {};
\node (c) at (1.5,-1) [] {};

\node (r3) at (9,-2) [] {};
\node (w3) at (13,-2) [] {};

\draw (r1) node [above] {\normalsize{$R_0(X_1) \rightarrow v$}};
\draw (r2) node [above] {\normalsize{$R_0(X_2)\rightarrow nv$}};

\draw (e) node [above] {\normalsize {(event of $T_1$)}};

\draw (w1) node [above] { \normalsize{$W_1(X_1,nv)$}};
\draw (w2) node [above] { \normalsize{$W_1(X_3,nv)$}};
\draw (c) node [above] { \normalsize{$\TryC_1$}};

\draw (r3) node [above] { \normalsize{$R_3(X_3) \rightarrow nv$}};
\draw (r3) node [below] { \normalsize{$T_3$ commits}};

\draw (w3) node [above] { \normalsize{$W_2(X_2,nv)$}};
\draw (w3) node [below] { \normalsize{$T_2$ commits}};

\begin{scope}   
\draw [-,dotted] (2,0) node[left] {$T_0$} to (18,0);
\draw [|-|,thick] (2,0) node[left] {} to (4,0);
\draw [|-|,thick] (16,0) node[left] {} to (18,0);
\end{scope}
\begin{scope}   
\draw [-,thick] (-3,-1) node[left] {$T_1$} to (1.5,-1);
\draw [|-|,thick] (-3,-1) node[left] {} to (-1,-1);
\draw [|-|,thick] (-1,-1) node[left] {} to (1,-1);
\draw [|-,dotted] (1,-1) node[left] {} to (1.5,-1);
\draw  (5.5,-1) circle [fill, radius=0.05]  (5.5,-1);
\draw [-,dotted] (-3,-1) to  (5,-1);
\end{scope}
\begin{scope}   
\draw [|-|,thick] (8,-2) node[left] {$T_3$} to (10,-2);
\draw [|-|,thick] (12,-2) node[left] {$T_2$} to (14,-2);
\end{scope}
\end{tikzpicture}}}
	
	\caption{Executions in the proof of Theorem~\ref{th:lpdap}; execution in \ref{sfig:dap-2} is not strictly serializable
        \label{fig:indisdap}} 
\end{center}
\end{figure*}
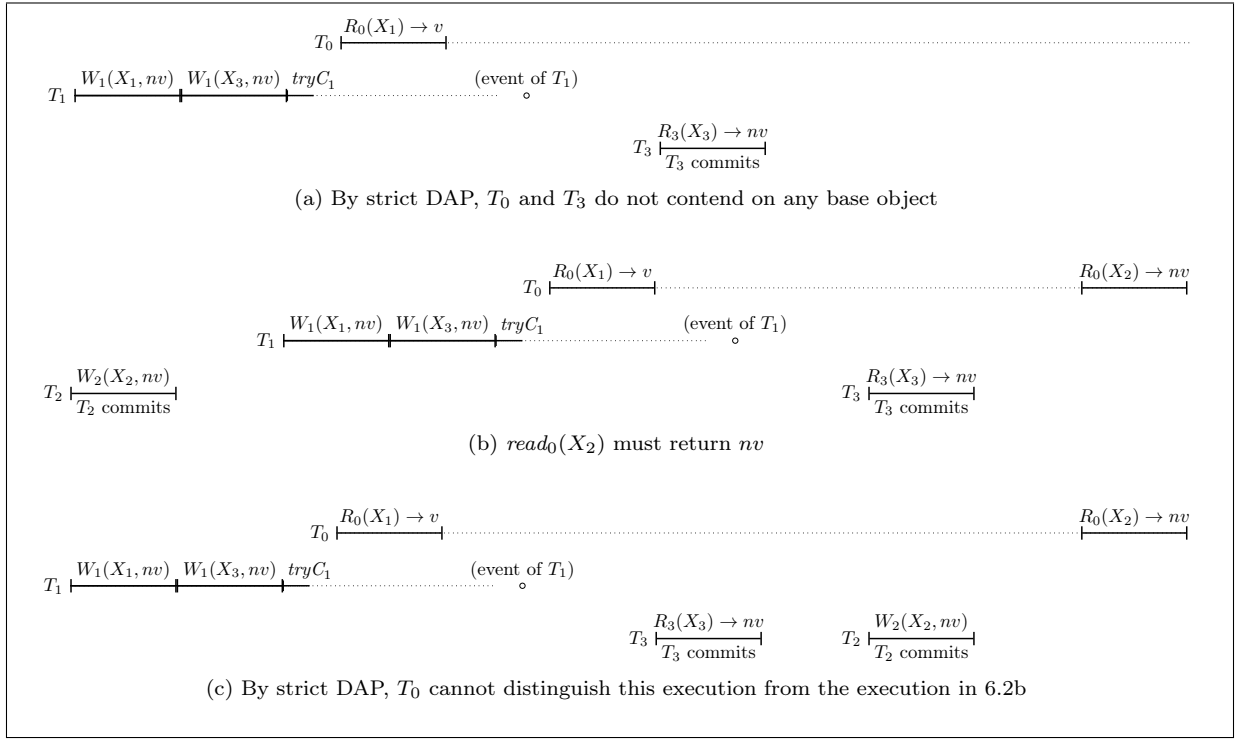
\begin{theorem}
\label{th:lpdap}
There exists no strictly serializable strict DAP TM implementation in $\mathcal{RWF}$.
\end{theorem}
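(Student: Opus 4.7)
\textbf{Proof plan for Theorem~\ref{th:lpdap}.}

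The plan is to derive a contradiction by constructing an execution that, on the one hand, must be producible by any strict DAP TM in $\mathcal{RWF}$, and on the other hand, cannot admit a legal serialization. The construction proceeds in three logical phases: (i) isolate a ``critical event'' of an updating transaction $T_1$ that writes to two t-objects, (ii) use wait-free progress for read-only transactions to place a read-only transaction $T_0$ around that critical event, and (iii) invoke strict DAP to schedule an auxiliary committed transaction $T_2$ at a position where real-time order and read legality together force a cycle in every candidate serialization.

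First, let $v$ denote the initial value of t-objects $X_1, X_2, X_3$, and fix $nv \neq v$. By sequential TM-progress and sequential TM-liveness, there is an execution $\pi$ consisting solely of $T_1$ running t-sequentially from the initial configuration, in which $T_1$ writes $nv$ to $X_1$ and $X_3$ and commits. By wait-free TM-progress for read-only transactions, the execution $\pi$ extended with a solo read-only transaction $T_3$ performing $\Read_3(X_3)$ must return $nv$. Define $E$ to be the longest prefix of $\pi$ such that $E$ cannot be extended with a step contention-free execution of a read-only transaction reading $X_3$ and returning $nv$; let $e$ be the next enabled event of $T_1$ after $E$. By construction, $E\cdot e$ extended with a step contention-free fragment of $T_3$ does make $\Read_3(X_3)\to nv$. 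Since $T_3$ and $T_1$ are disjoint-access with any read-only transaction $T_0$ whose data set is $\{X_1,X_2\}$, strict DAP will be used repeatedly to reschedule independent transactions without altering base-object behavior observed by each.

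Second, I argue (as in Figure~\ref{sfig:dap-0}) that $E$ can be extended with a complete step contention-free execution of a read-only transaction $T_0$ performing $\Read_0(X_1)$: wait-free TM-progress for read-only transactions forces this t-read to return a non-abort value, and by the definition of $E$ (and strict serializability, since $T_1$ has not yet reached $e$), it must return the initial value $v$. Then I introduce a fresh updating transaction $T_2$ that writes $nv$ to $X_2$ and commits. Because $T_2$ is disjoint-access with $T_1$ and with $T_3$, Lemma~\ref{lm:dap}-style reasoning specialized to strict DAP (two transactions with disjoint data sets never contend on any base object) lets me schedule $T_2$'s t-complete execution flexibly: either early, before the t-read of $X_1$ by $T_0$ (yielding the execution of Figure~\ref{sfig:dap-1}, where $\Read_0(X_2)$ must return $nv$ by strict serializability), or late, after $T_3$ commits (yielding Figure~\ref{sfig:dap-2}). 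Specifically, starting from the Figure~\ref{sfig:dap-1} execution, strict DAP implies that $T_2$ contends on base objects only with transactions whose data sets intersect $\{X_2\}$, hence never with $T_1$ or $T_3$. This indistinguishability allows me to reorder $T_2$'s entire t-complete fragment to appear after $T_3$ without any transaction noticing a difference in its base-object view, producing the execution of Figure~\ref{sfig:dap-2} with the same response of $nv$ from $\Read_0(X_2)$.

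Finally, I show that the execution in Figure~\ref{sfig:dap-2} admits no strictly serializable serialization. The required constraints are: $T_0 \prec T_1$ since $\Read_0(X_1)\to v$ and $T_1$ is the unique writer of $nv$ to $X_1$; $T_1 \prec T_3$ since $\Read_3(X_3)\to nv$; $T_3 \prec T_2$ by real-time order, as $T_3$ is t-complete before $T_2$ starts in this schedule; and $T_2 \prec T_0$ since $\Read_0(X_2)\to nv$ and $T_2$ is the unique writer of $nv$ to $X_2$. These four precedences form the cycle $T_0 \prec T_1 \prec T_3 \prec T_2 \prec T_0$, contradicting strict serializability.

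The main obstacle will be the second phase: carefully justifying, using strict DAP together with the definition of $E$ and the critical event $e$, that $T_2$'s fragment can be moved from an ``early'' position (where Figure~\ref{sfig:dap-1} is forced by strict serializability) to the ``late'' position of Figure~\ref{sfig:dap-2} without altering the responses of $T_0$, $T_1$, or $T_3$. The subtlety is that $T_0$ and $T_2$ do share $X_2$, so strict DAP does not directly decouple them; the argument has to exploit that the only event of $T_0$ affected by $T_2$'s scheduling is $\Read_0(X_2)$, which occurs after $T_2$ commits in both schedules, while all other interactions (between $T_2$ and $T_1$, between $T_2$ and $T_3$, and between $T_2$ and the prefix of $T_0$ preceding $\Read_0(X_2)$) are governed by disjoint data sets and thus by strict DAP.
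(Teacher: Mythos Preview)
Your overall strategy matches the paper's proof closely: the same transactions $T_0,T_1,T_2,T_3$, the same critical-event construction, the same use of strict DAP to shuttle $T_2$ between an ``early'' and a ``late'' position, and the same serialization cycle $T_0\prec T_1\prec T_3\prec T_2\prec T_0$.

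There is, however, a genuine gap in phase~(i). You define $E$ as the longest prefix of $\pi$ after which no solo read of $X_3$ can return $nv$. From this you conclude in phase~(ii) that $\Read_0(X_1)$ executed after $E$ must return the initial value $v$, citing ``the definition of $E$ (and strict serializability, since $T_1$ has not yet reached $e$)''. But your definition of $E$ says nothing whatsoever about reads of $X_1$: it is entirely possible that $T_1$'s update to $X_1$ becomes visible \emph{before} its update to $X_3$, so that after $E$ a solo read of $X_1$ already returns $nv$ while reads of $X_3$ still return $v$. Strict serializability does not rescue you here either, since a serialization that completes the incomplete $T_1$ as committed and places it before $T_0$ is perfectly legal. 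If $\Read_0(X_1)\to nv$, the cycle argument collapses (you lose $T_0\prec T_1$).

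The paper avoids this by defining the critical prefix $\pi'$ as the longest prefix after which \emph{neither} a solo read of $X_1$ \emph{nor} a solo read of $X_3$ can return $nv$. This simultaneously guarantees $\Read_0(X_1)\to v$ after $\pi'$ and that the next event $e$ makes one of the two objects (WLOG $X_3$) readable as $nv$. You should adopt this two-object definition; the rest of your argument then goes through. Regarding your self-identified subtlety in phase~(iii), the paper's ordering is cleaner: first establish indistinguishability between the $T_2$-early and $T_2$-late executions \emph{before} $T_0$ invokes $\Read_0(X_2)$ (so $\Dset(T_0)=\{X_1\}$ is still disjoint from $\Dset(T_2)=\{X_2\}$ and strict DAP applies directly), and only then extend both with $\Read_0(X_2)$.
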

\begin{proof}
Suppose by contradiction that there exists a strict DAP TM implementation $M\in \mathcal{RWF}$.

Let $v$ be the initial value of t-objects $X_1$, $X_2$ and $X_3$.
Let $\pi$ be the t-complete step contention-free execution of transaction $T_1$ that
writes the value $nv \neq v$ to t-objects $X_1$ and $X_3$.
By sequential progress for updating transactions, $T_1$ must be committed in $\pi$.

Note that any read-only transaction that runs step contention-free after some prefix of $\pi$ must return a non-abort value.
Since any such transaction reading $X_1$ or $X_3$ must return $v$ after the empty prefix of $\pi$ and $nv$ when it starts from $\pi$,
there exists $\pi'$, the longest prefix of $\pi$ that cannot be extended with the
t-complete step contention-free execution of any transaction that performs a t-read of $X_1$ and
returns $nv$ nor with the t-complete step contention-free execution of any transaction that performs a t-read of $X_3$
and returns $nv$.

Consider the execution fragment $\pi'\cdot \alpha_1$, where $\alpha_1$ is
the complete step contention-free execution of transaction $T_0$ that performs $\Read_0(X_1)\rightarrow v$.
Indeed, by definition of $\pi'$ and wait-free progress (assumed for read-only transactions),
$M$ has an execution of the form $\pi'\cdot \alpha_1$.

Let $e$ be the enabled event of transaction $T_1$ in the configuration after $\pi'$.
Without loss of generality, assume that $\pi'\cdot e$ can be extended with the t-complete
step contention-free execution of a transaction that reads $X_3$ and returns $nv$.

We now prove that $M$ has an execution of the form $\pi'\cdot \alpha_1 \cdot e \cdot \beta \cdot \gamma$, where
\begin{itemize}
\item
$\beta$ is the t-complete step contention-free execution fragment of transaction $T_3$ that performs $\Read_3(X_3)\rightarrow nv$
and commits
\item
$\gamma$ is the t-complete step contention-free execution fragment of transaction $T_2$ that
writes $nv$ to $X_2$ and commits.
\end{itemize}
Observe that, by definition of $\pi'$, $M$ has an execution of the form $\pi' \cdot e \cdot \beta$.
By construction, transaction $T_1$ applies a nontrivial primitive to a base object, say $b$ in the event $e$
that is accessed by transaction $T_3$ in the execution fragment $\beta$.
Since transactions $T_0$ and $T_3$ access mutually disjoint data sets in $\pi'\cdot \alpha_1 \cdot e \cdot \beta$,
$T_3$ does not access any base object in $\beta$ to which transaction $T_0$ applies a nontrivial primtive
in the execution fragment $\alpha_1$ (assumption of strict DAP).
Thus, $\alpha_1$ does not contain a nontrivial primitive to $b$ and $\pi'\cdot \alpha_1 \cdot e \cdot \beta$
is indistinguishable to $T_3$ from the execution $\pi' \cdot e \cdot \beta$.
This proves that $M$ has an execution of the form $\pi'\cdot \alpha_1 \cdot e \cdot \beta$ (depicted in Figure~\ref{sfig:dap-0}).

Since transaction $T_2$ writes to t-object $\Dset(T_2)=X_2 \not\in \{\Dset(T_1)\cup \Dset(T_0)\cup \Dset(T_3)\}$,
by strict DAP,
the configuration after $\pi'\cdot \alpha_1 \cdot e \cdot \beta$ is indistinguishable to $T_2$
from a t-quiescent configuration. 
Indeed, transaction $T_2$ does not contend with
any of the transactions $T_1$, $T_0$ and $T_3$ on any base object in $\pi'\cdot \alpha_1 \cdot e \cdot \beta \cdot \gamma$.
Sequential progress of $M$ requires that
$T_2$ must be committed in $\pi'\cdot \alpha_1 \cdot e \cdot \beta \cdot \gamma$.
Thus, $M$ has an execution of the form $\pi'\cdot \alpha_1 \cdot e \cdot \beta \cdot \gamma$.

By the above arguments, the execution $\pi'\cdot \alpha_1 \cdot e \cdot \beta \cdot \gamma$
is indistinguishable to each of the transactions $T_1$, $T_0$, $T_2$ and $T_3$ from
$\gamma \cdot \pi'\cdot \alpha_1 \cdot e \cdot \beta$ in which transaction $T_2$ precedes $T_1$ in real-time ordering.
Thus, $\gamma \cdot \pi'\cdot \alpha_1 \cdot e \cdot \beta$ is also an execution of $M$.

Consider the extension of the execution $\gamma \cdot \pi'\cdot \alpha_1 \cdot e \cdot \beta$
in which transaction $T_0$ performs $\Read_0(X_2)$ and commits (depicted in Figure~\ref{sfig:dap-1}). 
Strict serializability of $M$ stipulates that $\Read_0(X_2)$ must return $nv$
since $T_2$ (which writes $nv$ to $X_2$ in $\gamma$) precedes $T_0$ in this execution.

Similarly, we now extend the execution $\pi'\cdot \alpha_1 \cdot e \cdot \beta \cdot \gamma$
with the complete step contention-free execution fragment of the t-read of $X_2$ by transaction $T_0$. 
Since $T_0$ is a read-only transaction,
it must be committed in this extension.
However, as proved above, this execution is indistinguishable to $T_0$ from the execution
depicted in Figure~\ref{sfig:dap-1} in which $\Read_0(X_2)$ must return $nv$.
Thus, $M$ has an execution of the form $\pi'\cdot \alpha_1 \cdot e \cdot \beta \cdot \gamma \cdot \alpha_2$,
where $T_0$ performs $\Read_0(X_2) \rightarrow nv$ in $\alpha_2$ and commits.

However, the execution $\pi'\cdot \alpha_1 \cdot e \cdot \beta \cdot \gamma \cdot \alpha_2$ (depicted in Figure~\ref{sfig:dap-2})
is not strictly serializable.
Transaction $T_1$ must be committed in any serialization and must precede transaction $T_3$
since $\Read_3(X_3)$ returns the value of $X_3$ written by $T_m$. However, transaction $T_0$ must
must precede $T_1$ since $\Read_0(X_1)$ returns the initial the value of $X_1$.
Also, transaction $T_2$ must precede $T_0$ since $\Read_0(X_2)$ returns the value of $X_2$ written by $T_2$.
But transaction $T_3$ must precede $T_2$ to respect real-time ordering of transactions.
Thus, $T_1$ must precede $T_0$ in any serialization. But there exists no such serialization: a contradiction to the
assumption that $M$ is strictly serializable.
\end{proof}
%
%
\section{A linear lower bound on expensive synchronization for weak DAP}
\label{sec:p3c4s3}
In this section, we prove a linear lower bound (in the size of the transaction's read set) on the number of RAWs or AWARs
for weak DAP TM implementations in $\mathcal{RWF}$.
To do so, we construct an execution in which each t-read
operation of an arbitrarily long read-only transaction contains a RAW or an AWAR. 
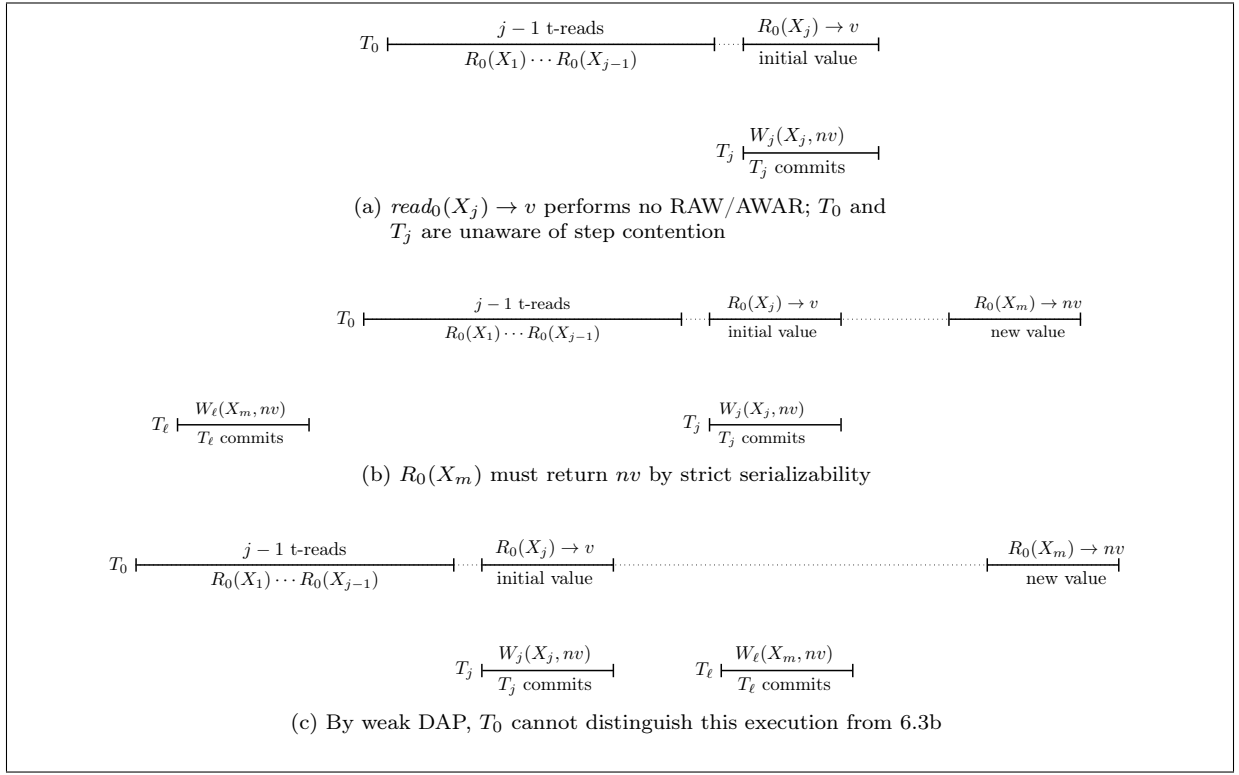
\begin{figure*}[t]
\begin{center}
	\subfloat[$\Read_0(X_j)\rightarrow v$ performs no RAW/AWAR; $T_0$ and $T_j$ are unaware of step contention \label{sfig:inv-1}]{\scalebox{0.72}[0.72]{\begin{tikzpicture}
\node (r1) at (3,0) [] {};
\node (r2) at (7.7,0) [] {};

\node (w1) at (7.5,-2) [] {};

\draw (r1) node [below] {\normalsize {$R_0(X_1) \cdots R_0(X_{j-1})$}};
\draw (r1) node [above] {\normalsize {$j-1$ t-reads}};

\draw (r2) node [above] {\normalsize {$R_0(X_j)\rightarrow v$}};
\draw (r2) node [below] {\normalsize {initial value}};

\draw (w1) node [above] {\normalsize {$W_j(X_j,nv)$}}; 
\draw (w1) node [below] {\normalsize {$T_j$ commits}};

\begin{scope}   
\draw [|-|,thick] (0,0) node[left] {$T_0$} to (6,0);
\draw [|-|,thick] (6.5,0) node[left] {} to (9,0);
\draw [-,dotted] (0,0) node[left] {} to (9,0);
\end{scope}
\begin{scope}   
\draw [|-|,thick] (6.5,-2) node[left] {$T_j$} to (9,-2);
\end{scope}
\end{tikzpicture}}}
        \\
        \vspace{2mm}
	\subfloat[$R_0(X_{m})$ must return $nv$ by strict serializability \label{sfig:inv-2}]{\scalebox{0.7}[0.7]{\begin{tikzpicture}
\node (r1) at (3,0) [] {};
\node (r2) at (7.7,0) [] {};
\node (r3) at (12.5,0) [] {};

\node (w1) at (7.5,-2) [] {};

\node (w2) at (-2.3,-2) [] {};

\draw (r1) node [below] {\small {$R_0(X_1) \cdots R_0(X_{j-1})$}};
\draw (r1) node [above] {\small {$j-1$ t-reads}};

\draw (r2) node [above] {\small {$R_0(X_j)\rightarrow v$}};
\draw (r2) node [below] {\small {initial value}};

\draw (w1) node [above] {\small {$W_j(X_j,nv)$}}; 
\draw (w1) node [below] {\small {$T_j$ commits}};

\draw (w2) node [above] {\small {$W_{\ell}(X_{m},nv)$}}; 
\draw (w2) node [below] {\small {$T_{\ell}$ commits}};

\draw (r3) node [above] {\small {$R_0(X_{m})\rightarrow nv$}};
\draw (r3) node [below] {\small {new value}};

\begin{scope}   
\draw [|-|,thick] (0,0) node[left] {$T_0$} to (6,0);
\draw [|-|,thick] (6.5,0) node[left] {} to (9,0);
\draw [|-|,thick] (-3.5,-2) node[left] {$T_{\ell}$} to (-1,-2);
\draw [|-|,thick] (11,0) node[left] {} to (13.5,0);
\draw [-,dotted] (0,0) to (13.5,0);
\end{scope}
\begin{scope}   
\draw [|-|,thick] (6.5,-2) node[left] {$T_j$} to (9,-2);
\end{scope}
\end{tikzpicture}}}
	\\ 
	\vspace{2mm}
	\subfloat[By weak DAP, $T_0$ cannot distinguish this execution from \ref{sfig:inv-2} \label{sfig:inv-3}]{\scalebox{0.7}[0.7]{\begin{tikzpicture}
\node (r1) at (3,0) [] {};
\node (r2) at (7.7,0) [] {};
\node (r3) at (17.5,0) [] {};

\node (w1) at (7.7,-2) [] {};

\node (w2) at (12.2,-2) [] {};

\draw (r1) node [below] {\normalsize {$R_0(X_1) \cdots R_0(X_{j-1})$}};
\draw (r1) node [above] {\normalsize {$j-1$ t-reads}};

\draw (r2) node [above] {\normalsize {$R_0(X_j)\rightarrow v$}};
\draw (r2) node [below] {\normalsize {initial value}};

\draw (w1) node [above] {\normalsize {$W_j(X_j,nv)$}}; 
\draw (w1) node [below] {\normalsize {$T_j$ commits}};

\draw (w2) node [above] {\normalsize {$W_{{\ell}}(X_{m},nv)$}}; 
\draw (w2) node [below] {\normalsize {$T_{\ell}$ commits}};

\draw (r3) node [above] {\normalsize {$R_0(X_{m})\rightarrow nv$}};
\draw (r3) node [below] {\normalsize {new value}};

\begin{scope}   
\draw [|-|,thick] (0,0) node[left] {$T_0$} to (6,0);
\draw [|-|,thick] (6.5,0) node[left] {} to (9,0);
\draw [|-|,thick] (11,-2) node[left] {$T_{\ell}$} to (13.5,-2);
\draw [|-|,thick] (16,0) node[left] {} to (18.5,0);
\draw [-,dotted] (0,0) to (18.5,0);
\end{scope}
\begin{scope}   
\draw [|-|,thick] (6.5,-2) node[left] {$T_j$} to (9,-2);
\end{scope}
\end{tikzpicture}}}
	\caption{Executions in the proof of Theorem~\ref{th:rwf}; execution in \ref{sfig:inv-3} is not strictly serializable
        \label{fig:indis}} 
\end{center}
\end{figure*}
\begin{theorem}
\label{th:rwf}
Every strictly serializable weakly DAP TM implementation $M \in \mathcal{RWF}$
has, for all $m\in \mathbb{N}$, an
execution in which some read-only transaction $T_0$ 
with $m=|\Rset(T_0)|$ performs $\Omega(m)$ RAWs/AWARs.
\end{theorem}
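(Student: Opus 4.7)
The plan is to fix $m$ and construct, by induction on $j \in \{1,\ldots,m\}$, an execution $E$ in which a single read-only transaction $T_0$ with $\Rset(T_0) = \{X_1,\ldots,X_m\}$ has the property that each t-read $\Read_0(X_j)$ must contain a RAW or an AWAR pattern. Since these patterns occur within sequentially executed t-reads of the same transaction, they are pairwise non-overlapping, yielding the desired $\Omega(m)$ bound. The base execution will look like Figure~\ref{sfig:inv-1}: for each $j$, $T_0$ performs the prefix $\Read_0(X_1)\cdots\Read_0(X_{j-1})$ (returning initial values) and then $\Read_0(X_j)$, while a concurrent updating transaction $T_j$ writes $nv$ to $X_j$ and commits, with $T_0$ still reading the initial value $v$. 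By wait-free TM-progress for read-only transactions and sequential TM-progress for the step-contention-free $T_j$, such an execution of $M$ exists.

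The central technical step is a \emph{decomposition-and-insertion} argument analogous to the one used to prove Theorem~\ref{th:oftriv} and Theorem~\ref{th:sl}. Supposing toward contradiction that $\Read_0(X_j)$ contains no RAW/AWAR, I would write its events as $\pi^j_1 \cdot \pi^j_2$ where $\pi^j_1$ has no nontrivial event (no write to a base object) and the first event of $\pi^j_2$ is a non-AWAR write whose target is never read again by $T_0$ in $\pi^j_2$ (using the absence of RAWs in the suffix). This splitting lets me insert, between $\pi^j_1$ and $\pi^j_2$, the t-complete step-contention-free execution of a fresh updating transaction $T_{\ell}$ that writes $nv$ to some $X_{m'}$ with $m' > j$. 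Since $T_0$ and $T_{\ell}$ have disjoint data sets and are disjoint-access from every other transaction in the resulting conflict graph, Lemma~\ref{lm:dap} (weak DAP) guarantees that $T_0$ and $T_{\ell}$ do not contend on any base object, so the inserted fragment is invisible to $T_0$, and conversely $T_{\ell}$ cannot distinguish the execution from one in which $T_0$ is absent and is scheduled after $T_j$'s commit (cf. Figure~\ref{sfig:inv-3}).

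The contradiction then follows from strict serializability applied to the constructed history. In any serialization: (i) $T_0 \prec T_j$ since $\Read_0(X_j)$ returns the initial value $v$ rather than the $nv$ written by $T_j$; (ii) $T_j \prec T_{\ell}$ since $T_j$ commits before $T_{\ell}$ starts in real time (Figure~\ref{sfig:inv-3}); and (iii) $T_{\ell} \prec T_0$ since the later $\Read_0(X_{m'})$, being wait-free and thus forced to respond, must return the value $nv$ written by $T_{\ell}$ (this is established by comparing with the execution of Figure~\ref{sfig:inv-2}, in which $T_{\ell}$ precedes $T_0$ in real time and strict serializability forces $\Read_0(X_{m'}) \to nv$; weak DAP makes the two executions indistinguishable to $T_0$). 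The cycle $T_0 \to T_j \to T_{\ell} \to T_0$ precludes any valid serialization, completing the inductive step.

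The main obstacle will be the careful accounting required for the weak DAP argument. I will need to verify that, throughout the chain of indistinguishability steps, the transactions whose relative scheduling is being permuted remain disjoint-access in the conflict graph of the particular execution under consideration — in particular that the presence of $T_j$ (which shares $X_j$ with $T_0$) does not create a path in $G(T_0,T_\ell,\cdot)$ between $\Dset(T_0)$ and $\Dset(T_\ell) = \{X_{m'}\}$. Choosing $X_{m'} \in \Rset(T_0) \setminus \{X_j\}$ means $X_{m'} \in \Dset(T_0)$, so strictly speaking one argues indistinguishability via a two-step invocation of Lemma~\ref{lm:dap} separating $T_\ell$ from $T_j$ and from the already-executed prefix of $T_0$ (for which $\pi^j_1$ contains no nontrivial events, so the prefix is invisible at the base-object level). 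Handling this carefully — and ensuring the inductive construction composes cleanly across all $m$ positions without accumulating unintended conflicts — is where the proof will require the most attention.
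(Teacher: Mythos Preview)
Your overall architecture is right—three transactions $T_0,T_j,T_\ell$, a serialization cycle $T_0\to T_j\to T_\ell\to T_0$, and the combination of a RAW/AWAR-splitting argument with weak DAP—but you have swapped which transaction each tool is applied to, and this breaks the proof.

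The transaction that must be slipped inside $\Read_0(X_j)$ via the RAW/AWAR decomposition is $T_j$, the one that \emph{conflicts} with $T_0$ on $X_j$. You take as your ``base execution'' the scenario of Figure~\ref{sfig:inv-1} in which $T_j$ commits concurrently with $\Read_0(X_j)$ while $T_0$ still returns the initial value $v$, and you justify its existence by wait-free progress for $T_0$ and sequential TM-progress for $T_j$. But sequential TM-progress only guarantees $T_j$ commits from a \emph{t-quiescent} configuration, and $T_0$ is t-incomplete. If instead you place $T_j$ before $\Read_0(X_j)$ (where weak DAP does let it commit, since $\Dset(T_0)$ is still $\{X_1,\ldots,X_{j-1}\}$), then once $\Read_0(X_j)$ runs it may well return $nv$, and your serialization constraint $T_0\prec T_j$ evaporates. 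The only way to force $\Read_0(X_j)=v$ with $T_j$ committed is to exploit the assumed absence of RAW/AWAR in $\alpha_j$: split $\alpha_j=\alpha_j^1\cdot\alpha_j^2$ and insert $\delta_j$ between the halves, so that $T_0$ cannot observe $T_j$'s writes. That is exactly Claim~\ref{cl:one} in the paper.

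Conversely, you spend the RAW/AWAR decomposition on $T_\ell$, but $T_\ell$ writes to $X_{m}$ with $m>j$, which at that point is not in $\Dset(T_0)$; hence $T_\ell$ and $T_0$ are disjoint-access and Lemma~\ref{lm:dap} alone makes $T_\ell$'s fragment commute freely—no splitting is needed. The paper handles $T_\ell$ purely via weak DAP: first prepend $\delta_m$ to obtain the execution where strict serializability forces $\Read_0(X_m)=nv$, then commute $\delta_m$ past $\delta_j$ (they are disjoint-access) to obtain $T_j\prec^{RT} T_\ell$, at which point the cycle closes. So the fix is simply to exchange the roles: use the no-RAW/AWAR splitting to hide $T_j$, and use weak DAP to position $T_\ell$.
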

\begin{proof}
Let $v$ be the initial value of each of the t-objects $X_1,\ldots , X_m$.
Consider the t-complete step contention-free execution
of transaction $T_0$ that performs $m$ t-reads $\Read_0(X_1)$, $\Read_0(X_1)$,$\ldots \Read_0(X_m)$ and commits.
We prove that each of the first $m-1$ t-reads must perform a RAW or an AWAR.

For all $j\in \{1,\ldots , m-1\}$, $M$ has an execution of the form $\alpha_1\cdot \alpha_2 \cdots \alpha_j$, 
where for all $i \in \{1,\ldots , j\}$, $\alpha_i$ is the complete step contention-free execution fragment
of $\Read_0(X_j) \rightarrow v$.
Assume inductively that each of the first $j-1$ t-reads performs a RAW or an AWAR in this execution.
We prove that $\Read_0(X_j)$ must perform a RAW or an AWAR in the execution fragment $\alpha_j$.
Suppose by contradiction that $\alpha_j$ does not contain a RAW or an AWAR.

The following claim shows that we can schedule a committed transaction $T_j$ that writes a new value to $X_j$ 
concurrent to $\Read_0(X_j)$ such that
the execution is indistinguishable to both $T_0$ and $T_j$ from a 
step contention-free execution (depicted in Figure~\ref{sfig:inv-1}).
\begin{claim}
\label{cl:one}
For all $j\in \{1,\ldots , m-1\}$, $M$ has an execution of the form 
$\alpha_1 \cdots \alpha_{j-1} \cdot \alpha^1_j \cdot \delta_j \cdot \alpha^2_j$
where,
\begin{itemize}
\item
$\delta_{j}$ is the t-complete step contention-free execution fragment of transaction $T_{j}$ that
writes $nv \neq v$ and commits
\item
$\alpha^1_j\cdot \alpha^2_j=\alpha_j$ is the complete execution fragment of the $j^{th}$ t-read $\Read_0(X_j) \rightarrow v$
such that
\begin{itemize}
\item
$\alpha^1_j$ does not contain any nontrivial events
\item
$\alpha_1 \cdots \alpha_{j-1} \cdot \alpha^1_j \cdot \delta_j \cdot \alpha^2_j$ is indistinguishable 
to $T_0$ from the step contention-free
execution fragment $\alpha_1 \cdots \alpha_{j-1} \cdot \alpha^1_j \cdot \alpha^2_j$
\end{itemize}
\end{itemize}
Moreover, $T_j$ does not access any base object to which $T_0$ applies a nontrivial event in
$\alpha_1 \cdots \alpha_{j-1} \cdot \alpha^1_j \cdot \delta_j$.
\end{claim}
\begin{proof}
By wait-free progress (for read-only transactions) and strict serializability, $M$ has an execution of the form 
$\alpha_1 \cdots \alpha_{j-1}$ in which each of the t-reads performed by $T_0$ must return the initial value of the t-objects.

Since $T_j$ is an updating transaction, by sequential progress, there exists an execution of $M$
of the form $\delta_j\cdot \alpha_1 \cdots \alpha_{j-1}$.
Since $T_0$ and $T_j$ are disjoint-access in the $\delta_j \cdot \alpha_1 \cdots \alpha_{j-1}$, by Lemma~\ref{lm:dap},
$T_0$ and $T_j$ do not contend on any base object in $\delta_j \cdot \alpha_1 \cdots \alpha_{j-1}$.
Thus, $\alpha_1 \cdots \alpha_{j-1}  \cdot \delta_j$ is indistinguishable to $T_j$ from the execution $\delta_j$
and $\alpha_1 \cdots \alpha_{j-1}  \cdot \delta_j$ is also an execution of $M$.

Let $e$ be the first event that contains a write to a base object in $\alpha_j$.
If there exists no such write event to a base object in $\alpha_j$, then $\alpha^1_j= \alpha_j$
and $\alpha^2_j$ is empty.
Otherwise, we represent the execution fragment $\alpha_j$ as $\alpha^1_{j}\cdot e \cdot \alpha^f_{j}$.

Since $\alpha^s_{j}$ does not contain any nontrivial events that write to a base object, 
$\alpha_1 \cdots \alpha_{j-1} \cdot \alpha^s_j \cdot \delta_j$ is indistinguishable to transaction $T_j$
from the execution $\alpha_1 \cdots \alpha_{j-1} \cdot \delta_j$.
Thus, $\alpha_1 \cdots \alpha_{j-1} \cdot \alpha^s_j \cdot \delta_j$ is an execution of $M$.
Since $e$ is not an atomic-write-after-read, $\alpha_1 \cdots \alpha_{j-1} \cdot \alpha^s_j \cdot \delta_j\cdot e$
is an execution of $M$.
Since $\alpha_j$ does not contain a RAW, any read performed in $\alpha^f_{j}$ may only be performed
to base objects previously written in $e \cdot \alpha^f_{j}$.
Thus, $\alpha_1 \cdots \alpha_{j-1} \cdot \alpha^s_j \cdot \delta_j \cdot e\cdot \alpha^f_j$
is indistinguishable to transaction $T_0$ from the step contention-free execution
$\alpha_1 \cdots \alpha_{j-1} \cdot \alpha^s_j \cdot e\cdot \alpha^f_j$ in which $\Read_0(X_j) \rightarrow v$.

Choosing $\alpha^2_j= e\cdot \alpha^f_j$,
it follows that $M$ has an execution of the form 
$\alpha_1 \cdots \alpha_{j-1} \cdot \alpha^1_j \cdot \delta_j \cdot \alpha^2_j$
that is indistinguishable to $T_j$ and $T_0$ from a step contention-free execution.
The proof follows. 
\end{proof}
We now prove that, for all $j\in \{1,\ldots , m-1\}$, 
$M$ has an execution of the form 
$\delta_{m}\cdot \alpha_1 \cdots \alpha_{j-1} \cdot \alpha^1_j \cdot \delta_j \cdot \alpha^2_j$
such that
\begin{itemize}
\item
$\delta_{m}$ is the t-complete step contention-free execution of transaction $T_{\ell}$
that writes $nv\neq v$ to $X_m$ and commits
\item
$T_{\ell}$ and $T_0$ do not contend on any base object in 
$\delta_{m}\cdot \alpha_1 \cdots \alpha_{j-1} \cdot \alpha^1_j \cdot \delta_j \cdot \alpha^2_j$
\item
$T_{\ell}$ and $T_j$ do not contend on any base object in
$\delta_{m}\cdot \alpha_1 \cdots \alpha_{j-1} \cdot \alpha^1_j \cdot \delta_j \cdot \alpha^2_j$.
\end{itemize}

By sequential progress for updating transactions, $T_{\ell}$ which writes the value $nv$ to $X_{m}$ must be committed in
$\delta_{m}$ since it is running in the absence of step-contention from the initial configuration.
Observe that $T_{\ell}$ and $T_0$ are disjoint-access in 
$\delta_{m}\cdot \alpha_1 \cdots \alpha_{j-1} \cdot \alpha^1_j\cdot \delta_j \cdot \alpha^2_j$.
By definition of $\alpha^1_j$ and $\alpha^2_j$, 
$\delta_{m}\cdot \alpha_1 \cdots \alpha_{j-1} \cdot \alpha^1_j \cdot \delta_j \cdot \alpha^2_j$
is indistinguishable to $T_0$ from $\delta_{m} \cdot \alpha_1 \cdots \alpha_{j-1} \cdot \alpha^1_j \cdot \alpha^2_j$.
By Lemma~\ref{lm:dap}, $T_{\ell}$ and $T_0$ do not contend on any base object in 
$\delta_{m} \cdot \alpha_1 \cdots \alpha_{j-1} \cdot \alpha^1_j \cdot \alpha^2_j$.

By Claim~\ref{cl:one}, $\delta_{m}\cdot \alpha_1 \cdots \alpha_{j-1} \cdot \alpha^1_j \cdot \delta_j$
is indistinguishable to $T_j$ from $\delta_{m}\cdot \delta_j$.
But transactions $T_{\ell}$ and $T_j$ are disjoint-access in $\delta_{m}\cdot \delta_j$, and by Lemma~\ref{lm:dap},
$T_j$ and $T_{\ell}$ do not contend on any base object in $\delta_{m}\cdot \delta_j$.

Since strict serializability of $M$ stipulates that each of the $j$ t-reads performed by $T_0$ return the initial
values of the respective t-objects, $M$ has an execution of the form 
$\delta_{m}\cdot \alpha_1 \cdots \alpha_{j-1} \cdot \alpha^1_j \cdot \delta_j \cdot \alpha^2_j$.

Consider the extension of $\delta_{m}\cdot \alpha_1 \cdots \alpha_{j-1} \cdot \alpha^1_j \cdot \delta_j \cdot \alpha^2_j$
in which $T_0$ performs $(m-j)$ t-reads
of $X_{j+1},\cdots , X_m$ step contention-free and commits (depicted in Figure~\ref{sfig:inv-2}). 
By wait-free progress of $M$ and since $T_0$ is a read-only transaction, there exists such an execution.
Notice that the $m^{th}$ t-read, $\Read_0(X_{m})$
must return the value $nv$ by strict serializability since $T_{\ell}$ precedes $T_0$ in real-time order in this execution.

Recall that neither pairs of transactions $T_{\ell}$ and $T_{j}$ nor $T_{\ell}$ and $T_0$ contend on any base object in
the execution $\delta_{m}\cdot \alpha_1 \cdots \alpha_{j-1} \cdot \alpha^1_j \cdot \delta_j \cdot \alpha^2_j$.
It follows that for all $j\in \{1,\ldots , m-1\}$, 
$M$ has an execution of the form
$\alpha_1 \cdots \alpha_{j-1} \cdot \alpha^1_j \cdot \delta_j \cdot \alpha^2_j\cdot \delta_{m}$
in which $T_{j}$ precedes $T_{\ell}$ in real-time order.

Let $\alpha'$ be the execution fragment that extends
$ \alpha_1 \cdots \alpha_{j-1} \cdot \alpha^1_j \cdot \delta_j \cdot \alpha^2_j\cdot \delta_{m}$
in which $T_0$ performs $(m-j)$ t-reads
of $X_{j+1},\cdots , X_m$ step contention-free and commits (depicted in Figure~\ref{sfig:inv-3}).
Since
$\alpha_1 \cdots \alpha_{j-1} \cdot \alpha^1_j \cdot \delta_j \cdot \alpha^2_j \cdot \delta_{m}$
is indistinguishable to $T_0$ from the execution
$\delta_{m}\cdot \alpha_1 \cdots \alpha!_{j-1} \cdot \alpha^1_j \cdot \delta_j \cdot \alpha^2_j$,
$\Read_0(X_{m})$ must return the response value $nv$ in $\alpha'$. 

The execution 
$ \alpha_1 \cdots \alpha_{j-1} \cdot \alpha^1_j \cdot \delta_j \cdot \alpha^2_j\cdot \delta_{m} \cdot \alpha'$
is not strictly serializable. 
In any serialization, $T_j$ must precede $T_{\ell}$ to respect the real-time ordering of transactions,
while $T_{\ell}$ must precede $T_0$ since $\Read_j(X_{m})$ returns the value of $X_{m}$ updated by $T_{\ell}$. Also,
transaction $T_0$ must precede $T_j$ since $\Read_0(X_j)$ returns the initial value of $X_j$. 
But there exists no such serialization: a contradiction to the assumption that $M$ is strict serializable.

Thus, for all $j\in \{1,\ldots , m-1\}$, transaction $T_0$ must perform a RAW or an AWAR during the execution of $\Read_0(X_j)$,
completing the proof. \qed
\end{proof}
Since Theorem~\ref{th:rwf} implies that read-only transactions must perform nontrivial events, we have the following corollary
that was proved directly in \cite{AHM09}.
\begin{corollary}[\cite{AHM09}]
\label{cr:inv}
There does not exist any strictly serializable weak DAP TM implementation $M\in \mathcal{RWF}$
that uses invisible reads.
\end{corollary}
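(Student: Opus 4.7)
The plan is to derive Corollary~\ref{cr:inv} as a direct contradiction from Theorem~\ref{th:rwf} combined with the definition of invisible reads. The argument is short: suppose, toward contradiction, that such an implementation $M$ exists. Since $M$ is strictly serializable, weak DAP, and belongs to $\mathcal{RWF}$, Theorem~\ref{th:rwf} applies, so for every $m\in\mathbb{N}$ there is an execution $E$ of $M$ in which some read-only transaction $T_0$ with $|\Rset(T_0)|=m$ performs $\Omega(m)$ RAWs or AWARs. In particular, for $m\geq 2$, $T_0$ must perform at least one RAW or AWAR in $E$.

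The key observation I would then invoke is that both RAW and AWAR patterns necessarily contain a \emph{nontrivial} base-object event: a RAW by definition begins with a write to a base object, and an AWAR is itself a nontrivial rmw event that atomically writes to a base object. Thus $T_0$ must apply at least one nontrivial primitive in $E$. However, the definition of invisible reads stipulates that for every read-only transaction $T_k$, no event of $E\,|\,k$ is nontrivial in $E$. Applied to $T_0$, this directly contradicts the preceding sentence, completing the proof.

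The only step that requires a bit of care is the pairing of the definitions: I would explicitly point to the first clause of the invisible reads definition (the clause dealing with read-only transactions, which forbids \emph{any} nontrivial event, not merely nontrivial events during t-reads), and to the definitions of RAW and AWAR in Section~\ref{sec:complexity}, to make clear that a read-only transaction satisfying invisible reads cannot exhibit either pattern. No further case analysis or construction is needed, since Theorem~\ref{th:rwf} has already done the hard work of forcing the RAW/AWAR to appear. I do not expect any genuine obstacle here; this is purely a two-line deduction from a previously established theorem and a definition.
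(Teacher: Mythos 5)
Your proposal is correct and matches the paper's own derivation: the paper obtains Corollary~\ref{cr:inv} exactly by noting that Theorem~\ref{th:rwf} forces a read-only transaction to perform RAW/AWAR patterns, hence nontrivial events, contradicting the read-only clause of the invisible-reads definition. Your write-up just spells out the same two-line deduction in slightly more detail.
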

%
%
\section{Related work and Discussion}
\label{sec:p3c4disc}
Attiya \emph{et al.}~\cite{AHM09} showed that it is impossible to implement 
weak DAP strictly serializable TMs in $\mathcal{RWF}$ 
if read-only transactions may only apply trivial primitives to base objects.
Attiya et al.~\cite{AHM09} also considered a stronger ``disjoint-access'' property, 
called simply DAP, referring to the
original definition proposed Israeli and
Rappoport~\cite{israeli-disjoint}.
In DAP,  two transactions are allowed to \emph{concurrently access} (even
for reading) the same base object only if they are disjoint-access.
For an $n$-process DAP TM implementation,  it is shown in~\cite{AHM09}
that a read-only transaction must
perform at least $n-3$ writes. 
Our lower bound in Theorem~\ref{th:rwf} is strictly stronger than the one in~\cite{AHM09}, as
it assumes only weak DAP,
considers a more precise RAW/AWAR metric, 
and does not depend on the number of processes in the system.
(Technically, the last point follows from the fact that the execution
constructed in the proof of Theorem~\ref{th:rwf}
uses only  $3$ concurrent processes.)
Thus, the theorem subsumes the two lower bounds of~\cite{AHM09} within a single proof.

Perelman \emph{et al.}~\cite{PFK10} considered the closely related (to $\mathcal{RWF}$) class of 
\emph{mv-permissive} TMs:
a transaction can only be aborted if it is an updating transaction that conflicts with another
updating transaction.
$\mathcal{RWF}$ is incomparable
with the class of mv-permissive TMs. On the one hand,
mv-permissiveness guarantees that read-only transactions never abort, but does not imply that they commit
in a wait-free manner.  
On the other hand, $\mathcal{RWF}$ allows an
updating transaction to abort in the presence of a concurrent
read-only transaction, which is disallowed by mv-permissive TMs.       
Observe that, technically, mv-permissiveness is a blocking TM-progress condition, although when used in conjunction with
wait-free TM-liveness, it is a partially non-blocking TM-progress condition that is strictly stronger than $\mathcal{RWF}$.

Assuming starvation-free TM-liveness,
\cite{PFK10} showed that 
implementing a weak DAP strictly serializable mv-permissive TM is impossible.
In the thesis, we showed that strictly serializable TMs in $\mathcal{RWF}$ cannot provide strict DAP, but proving
the impossibility result assuming weak DAP remains an interesting open question.

\cite{PFK10} also proved that mv-permissive TMs cannot be \emph{online
  space optimal}, \emph{i.e.}, no mv-permissive TM can keep the minimum number of old object
versions for any TM history. 
Our result on the space complexity of implementations in $\mathcal{RWF}$ that use invisible reads (Theorem~\ref{th:inv})
is different since it proves that the implementation must maintain an unbounded number of versions
of every t-object. 
Our proof technique can however be used to show that 
mv-permissive TMs considered in \cite{PFK10} should also maintain unbounded number of versions.


\chapter{Hybrid transactional memory (HyTM)}
\label{ch:p4c4}
\epigraph{\textbf{HAL}:
The $9000$ series is the most reliable computer ever made. 
No $9000$ computer has ever made a mistake or distorted information. 
We are all, by any practical definition of the words, foolproof and incapable of error.\\
$\ldots$ \\
\textbf{HAL}: I've just picked up a fault in the AE35 unit. It's going to go $100\%$ failure in $72$ hours.\\
\textbf{HAL}: It can only be attributable to human error.}
{\textit{Stanley Kubrick}-$2001$: A Space Odyssey}
\section{Overview}
\label{sec:p4c4intro}
\vspace{1mm}\noindent\textbf{Hybrid transactional memory.}
The TM abstraction, in its original manifestation from the proposal by Herlihy and Moss~\cite{HM93}, 
augmented the processor's \emph{cache-coherence protocol} and extended the CPU's instruction set with
instructions to indicate which memory accesses must be transactional~\cite{HM93}.
Most popular TM designs, subsequent to the original proposal in \cite{HM93} 
have implemented all the functionality in software~\cite{norec, ST95,HLM+03, astm, fraser} (cf. software TM model in
Chapter~\ref{ch:tm-model}).
More recently, CPUs have included hardware extensions to 
support \emph{short}, \emph{small} hardware transactions~\cite{Rei12, asf, bluegene}.

Early experience with programming \emph{Hardware transactional memory (HTM)}, 
\emph{e.g.}~\cite{DiceLMN09, DragojevicMLM11, AlistarhEMMS14}, paints an interesting picture: 
if used carefully, HTM can be an extremely useful construct, 
and can significantly speed up and simplify concurrent implementations. 
At the same time, this powerful tool is not without its limitations: 
since HTMs are usually implemented on top of the cache coherence mechanism, 
hardware transactions have inherent \emph{capacity constraints} on the number of distinct memory locations 
that can be accessed inside a single transaction.  
Moreover, all current proposals are \emph{best-effort}, as they may abort under imprecisely specified 
conditions (cache capacity overflow, interrupts \emph{etc}). 
In brief, the programmer should not solely rely on HTMs.

Several \emph{Hybrid Transactional Memory (HyTM)} schemes~\cite{hybridnorec,damronhytm, kumarhytm,phasedtm} have been
proposed to complement the fast, but best-effort nature of HTM 
with a slow, reliable software transactional memory (STM) backup. 
These proposals have explored a wide range of trade-offs between the
overhead on hardware transactions, concurrent execution of hardware and
software, and the provided progress guarantees. 

Early proposals for HyTM implementations~\cite{damronhytm, kumarhytm} 
shared some interesting features.
First, transactions that do not conflict are expected to run
concurrently, regardless of their types (software or hardware).
This property is referred to as \emph{progressiveness}~\cite{tm-theory}
and is believed to allow for increased parallelism.
Second, in addition to
changing the values of transactional objects, hardware transactions usually employ \emph{code instrumentation} techniques.
Intuitively, instrumentation is used by hardware transactions to
detect concurrency scenarios and abort in the
case of contention.
The number of instrumentation steps performed by these implementations within a hardware
transaction is usually proportional to the size of the transaction's data set. 

Recent work by Riegel \emph{et al.}~\cite{hynorecriegel} surveyed the various HyTM algorithms to date, focusing on techniques to reduce instrumentation overheads in the frequently executed hardware fast-path. 
However, it is not clear whether there are fundamental limitations when building a HyTM with non-trivial concurrency between
hardware and software transactions. In particular, what are the inherent instrumentation costs of building a HyTM, and what are the trade-offs between these
costs and the provided \emph{concurrency}, \emph{i.e.}, the ability of the HyTM
system to run software and hardware transactions in parallel?

\vspace{1mm}\noindent\textbf{Modelling HyTM.}
To address these questions, 
the thesis proposes the first model for hybrid TM systems which formally captures the notion of
\emph{cached} accesses provided by hardware transactions, and
precisely defines instrumentation costs in a quantifiable way.

We model a hardware transaction as a series of
memory accesses that operate on locally cached copies of the variables, followed by a \emph{cache-commit} operation.
In case a concurrent transaction performs a (read-write or write-write) conflicting access
to a cached object, the cached copy is invalidated and the hardware transaction aborts.   

Our model for instrumentation is motivated by recent experimental evidence 
which suggests that the overhead on hardware transactions imposed by code which 
detects concurrent software transactions is a significant performance bottleneck~\cite{MS13}. 
In particular, we say that a HyTM implementation imposes a logical partitioning of shared memory 
into \emph{data} and \emph{metadata} locations. 
Intuitively, metadata is used by transactions to exchange information about contention and 
conflicts while data locations only store the \emph{values} of data items read and updated within transactions. 
We quantify instrumentation cost by measuring the number of accesses to \emph{metadata objects} which transactions perform. 
Our framework captures all known HyTM proposals which combine
HTMs with an STM fallback~\cite{hybridnorec,riegel-thesis,phasedtm,damronhytm, kumarhytm}.

\vspace{1mm}\noindent\textbf{The cost of instrumentation.}
Once this general model is in place, we derive two lower bounds on the cost of implementing a HyTM.  
First, we show that some instrumentation is necessary in a HyTM implementation even if we only
intend to provide \emph{sequential} progress,
where a transaction is only guaranteed to commit if it runs in the
absence of concurrency. 

Second, we prove that any progressive HyTM implementation providing \emph{obstruction-free liveness} (every operation
running \emph{solo} returns some response) and  has
executions in which an arbitrarily long read-only hardware transaction running in the
absence of concurrency \emph{must} access a number of 
distinct metadata objects proportional to the size of its data set.
Our proof technique is interesting in its own right.
Inductively,  we start with a 
sequential execution in which a ``large'' set $S_m$ of
read-only hardware transactions, each accessing $m$
distinct data items and $m$ distinct metadata memory locations,  
run after an execution $E_m$.
We then construct execution $E_{m+1}$, an extension of $E_m$,
which forces at least half of the transactions in $S_m$ to access a \emph{new} metadata base object when reading a new 
$(m+1)^{\textit{th}}$ data item, running after $E_{m+1}$. 
The technical challenge, and the key departure from prior work on 
STM lower bounds, \emph{e.g.}~\cite{OFTM, tm-book, AHM09}, is that hardware transactions 
practically possess ``automatic'' conflict detection, aborting on contention. 
This is in contrast to STMs, which must take steps to detect contention on memory locations. 

We match this lower bound with an HyTM
algorithm that, additionally, allows for uninstrumented writes and
\emph{invisible reads} and is provably \emph{opaque}~\cite{tm-book}. 
To the best of our knowledge, this is the first formal proof of
correctness of a HyTM algorithm.

\vspace{1mm}\noindent\textbf{Low-instrumentation HyTM.}
The high instrumentation costs of early HyTM designs, 
which we show to be inherent, stimulated more recent HyTM schemes~\cite{phasedtm,hybridnorec,hynorecriegel,MS13}  
to sacrifice progressiveness for \emph{constant} instrumentation cost (\emph{i.e.}, not
depending on the size of the transaction). 
In the past two years, Dalessandro \emph{et al.}~\cite{hybridnorec} and 
Riegel \emph{et al.}~\cite{hynorecriegel} have proposed HyTMs based on the efficient \emph{NOrec STM}~\cite{norec}. 
These HyTMs schemes do not guarantee any parallelism among transactions; only
sequential progress is ensured. 
Despite this, they are among the best-performing HyTMs to date due to
the limited instrumentation in hardware transactions. 

Starting from this observation, we provide a more precise upper bound
for \emph{low-instrumentation} HyTMs
by presenting a HyTM algorithm with invisible reads \emph{and} uninstrumented
hardware writes which guarantees that a hardware transaction accesses at most one metadata object in the course of its execution.
Software transactions in this implementation remain progressive, while hardware transactions are guaranteed to commit
only if they do not run concurrently with an updating software transaction (or exceed capacity).
Therefore, the cost of avoiding the linear lower bound for progressive implementations is that hardware
transactions may be aborted by non-conflicting software ones.

\vspace{1mm}\noindent\textbf{Roadmap of Chapter~\ref{ch:p4c4}.}
In Section~\ref{sec:p4c1}, we introduce the model of HyTMs and
Section~\ref{sec:p4c2s1} studies the inherent cost of concurrency in progressive HyTMs by presenting a linear
lower bound on the cost of instrumentation while Section~\ref{sec:p4c4s2} presents a matching upper bound.
In Section~\ref{sec:p4c4s3} discusses providing partial concurrency with instrumentation cost and in
Section~\ref{sec:p4disc}, we elaborate on prior work related to HyTMs.
%
\section{Modelling HyTM}
\label{sec:p4c1}
In this chapter, we introduce the model of HyTMs, extending the TM model from Chapter~\ref{ch:tm-model}, 
that intuitively captures the cache-coherence protocols
employed in shared memory systems.
\subsection{Direct and cached accesses}
\label{sec:p4c1s1}
We now describe the operation of a \emph{Hybrid Transactional Memory  (HyTM)} implementation.
%
In our model, every base object can be accessed with two kinds of
primitives, \emph{direct} and \emph{cached}.

In a direct access, the rmw primitive operates on the memory state:
the direct-access event atomically reads the value of the object in
the shared memory and, if necessary, modifies it.

In a cached access performed by a process $i$, the rmw primitive operates on the \emph{cached}
state recorded in process $i$'s \emph{tracking set} $\tau_i$. 
One can think of $\tau_i$ as the \emph{L1 cache} of process $i$.
A \emph{hardware transaction} is a series of cached rmw primitives performed on $\tau_i$ followed by
a \emph{cache-commit} primitive. 
 
More precisely, $\tau_i$ is a set of triples $(b, v, m)$ where $b$ is a base object identifier, $v$ is a value, 
and $m \in \{\shared, \exclusive\}$ is an access \emph{mode}. 
The triple $(b, v, m)$ is added to the tracking set when $i$ performs a cached
rmw access of $b$, where $m$ is set to $\exclusive$ if the access is
nontrivial, and to $\shared$ otherwise.  
We assume that there exists some constant $\TS$ (representing the size of the L1 cache)
such that the condition $|\tau_i| \leq \TS$ must always hold; this
condition will be enforced by our model.
A base object $b$ is \emph{present} in $\tau_i$ with mode $m$ if $\exists v, (b,v,m) \in \tau_i$.

A trivial (resp.\ nontrivial) 
cached primitive $\langle g,h \rangle$ applied to $b$ 
by process $i$ first checks the condition $|\tau_i|=\TS$ and if so, it
sets $\tau_i=\emptyset$ and immediately returns $\bot$ (we call this event a
\emph{capacity abort}). 
We assume that $\TS$ is large enough so that no transaction 
with data set of size $1$ can incur a capacity abort.
If the transaction does not incur a capacity abort, the process checks whether $b$ is present in exclusive
(resp.\ any) mode in $\tau_j$ 
for any $j\neq i$. If so, $\tau_i$ is set to $\emptyset$ and the
primitive returns $\bot$. 
Otherwise, the triple $(b, v, \shared)$ (resp. $(b, g(v), \exclusive)$)
is added to $\tau_i$,  where $v$ is the most recent cached value of $b$ in $\tau_i$
(in case $b$ was previously accessed by $i$ within the current
hardware transaction) or the value of $b$ in the current
memory configuration, and finally $h(v)$ is returned.

A tracking set can be \emph{invalidated} by a concurrent process: 
if, in a configuration $C$ where  $(b,v,\exclusive)\in\tau_i$
(resp.\ $(b,v,\shared)\in\tau_i)$,  a process $j\neq i$ applies any primitive 
(resp.\ any \emph{nontrivial} primitive) to $b$, then $\tau_i$ becomes
\emph{invalid} and any subsequent cached primitive invoked by $i$
sets $\tau_i$ to $\emptyset$ and returns $\bot$. We refer to this event as a \emph{tracking set abort}.

Finally, the \emph{cache-commit} primitive issued by process $i$ with
a valid $\tau_i$ does the following: for each base object $b$ such that $(b,v,\exclusive) \in \tau_i$, the value of $b$ in $C$ is updated to $v$. 
Finally, $\tau_i$ is set to $\emptyset$ and the primitive 
returns $\textit{commit}$. 

Note that HTM may also abort spuriously, or because of unsupported operations~\cite{Rei12}. 
The first cause can be modelled probabilistically in the above
framework, which would not however significantly affect our claims and proofs, except for a more cumbersome presentation. 
Also, our lower bounds are based exclusively on executions containing t-reads and t-writes. 
Therefore, in the following, we only consider contention and capacity aborts.  
%
\subsection{Slow-path and fast-path transactions}
\label{sec:p4c1s2}
In the following, we partition HyTM transactions into \emph{fast-path
  transactions} and \emph{slow-path transactions}.
Practically,  two separate algorithms (fast-path one and slow-path one) 
are provided for each t-operation. 

A slow-path transaction models a regular software transaction.
An event of a slow-path transaction is either an invocation or response of a t-operation, or
a  rmw primitive on a base object. 

A fast-path transaction essentially encapsulates a hardware transaction. 
An event of a fast-path transaction is either an invocation or response of a t-operation, 
a cached primitive on a base object, or a \emph{cache-commit}:
\textit{t-read} and \emph{t-write} are only allowed to contain cached
primitives, and \textit{tryC} consists of invoking \emph{cache-commit}.  
Furthermore, we assume that a fast-path transaction $T_k$ returns $A_k$
as soon an underlying cached primitive or \emph{cache-commit} returns $\bot$. 
Figure~\ref{fig:tracking-set} depicts such a scenario illustrating a tracking set abort: 
fast-path transaction $T_2$ executed by process $p_2$
accesses a base object $b$ in shared (and resp. exclusive) mode and it is added to its tracking set $\tau_2$. 
Immediately after the access of $b$ by $T_2$, a concurrent transaction $T_1$ applies a nontrivial primitive to $b$ 
(and resp. accesses $b$). Thus, the tracking of $p_2$ is invalidated and $T_2$ must be aborted in any extension of this execution.
\begin{figure*}[t]
\begin{center}
	\subfloat[$\tau_2$ is invalidated by (fast-path or slow-path) transaction $T_1$'s access of base object $b$ \label{sfig:hinv-1}]{\scalebox{0.6}[0.6]{\begin{tikzpicture}
\node (e) at (13,-2) [] {};

\node[draw,align=left] at (10,1) {Fast-path};
\draw (e) node [above] {\small {(access of $b$)}};

\begin{scope}   
\draw [|-,thick] (9,0) node[left] {$T_2$} to (11,0);
\draw [-|,dashed] (11,0)  to (13.4,0) node[right] {$A_2$} ;
\draw [|-,thick] (11.5,-2) node[left] {$T_1$} to (13,-2);

\draw [-,dotted] (11.2,-1.7)  to (11.2,0.5) node[right] {$E$};
\end{scope}
\draw  (13,-2) circle [fill, radius=0.05]  (13,-2);

\node[draw,align=right] at (12,2) { $(b,v,exclusive) \in \tau_2$ after $E$};

\end{tikzpicture}}}
        \hspace{50mm}
	\subfloat[$\tau_2$ is invalidated by (fast-path or slow-path) transaction $T_1$'s write to base object $b$ \label{sfig:hinv-2}]{\scalebox{0.6}[0.6]{\begin{tikzpicture}
\node (e) at (13+6,-2) [] {};

\node[draw,align=left] at (10+6,1) {Fast-path};
\draw (e) node [above] {\small {(write to $b$)}};

\begin{scope}   
\draw [|-,thick] (9+6,0) node[left] {$T_2$} to (11+6,0);
\draw [-|,dashed] (11+6,0)  to (13.4+6,0) node[right] {$A_2$} ;
\draw [|-,thick] (11.5+6,-2) node[left] {$T_1$} to (13+6,-2);
\draw [|-,thick] (11.5+6,-2) node[left] {$T_1$} to (13+6,-2);

\draw [-,dotted] (11.2+6,-1.7)  to (11.2+6,0.5) node[right] {$E$};
\end{scope}
\draw  (13+6,-2) circle [fill, radius=0.05]  (13+6,-2);

\node[draw,align=right] at (10+8,2) { $(b,v,shared) \in \tau_2$ after $E$};

\end{tikzpicture}}}
	
\end{center}

\caption{Tracking set aborts in fast-path transactions;
we denote a fast-path (and resp. slow-path) transaction by $F$ (and resp. $S$)
\label{fig:tracking-set}} 
\end{figure*}
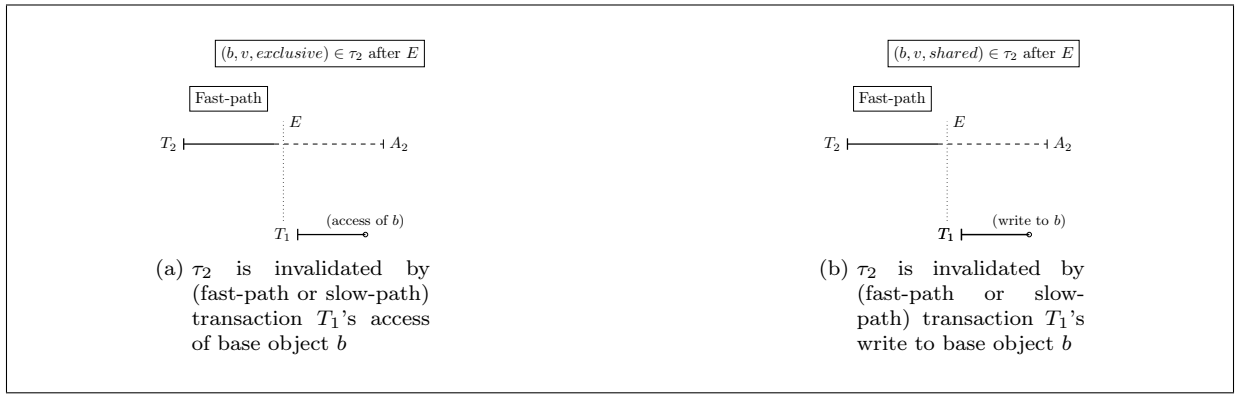
%

We provide two key observations on this model regarding the interactions of non-committed fast path transactions 
with other transactions. 
Let $E$ be any execution of a HyTM implementation $\mathcal{M}$ in
which a fast-path transaction $T_k$ is either
t-incomplete or aborted. 
Then the sequence of events $E'$ derived by removing all events of $E|k$
from $E$ is an execution  $\mathcal{M}$. Moreover: 
%
\begin{observation} 
\label{ob:one}
To every slow-path transaction $T_m \in \ms{txns}(E)$, $E$ is indistinguishable 
from $E'$. 
\end{observation}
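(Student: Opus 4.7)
The plan is to show that removing $T_k$'s events from $E$ leaves the shared memory state seen by any slow-path transaction unchanged, and therefore every response of a $T_m$-event agrees in $E$ and $E'$. Since $T_m$ is a slow-path transaction, all its rmw events are direct accesses, whose responses are determined solely by the current value of the accessed base object in shared memory. So the crux is to track the evolution of shared memory along the two executions.

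First, I would argue by a straightforward structural induction on the length of the prefix of $E$ that the value of every base object $b$ in shared memory agrees with the value of $b$ in the corresponding prefix of $E'$. The key fact the induction relies on is the following: by the HyTM model, a fast-path transaction's cached rmw events only write to the executing process's tracking set $\tau_k$, and they only write to shared memory via a successful \emph{cache-commit} primitive. Since $T_k$ is t-incomplete or aborted in $E$, no such successful \emph{cache-commit} by $T_k$ appears in $E$ (on abort, $\tau_k$ is discarded without being applied to memory; on t-incompleteness, \emph{cache-commit} has not occurred). Hence, deleting the events $E|k$ does not remove any write to shared memory, and, since the remaining events on both sides are identical, the two memory states evolve identically.

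Using this invariant, I would then verify that $E'$ is indeed a legal execution of $\mathcal{M}$ and that, for every slow-path transaction $T_m$, each response of a direct rmw primitive by $T_m$ in $E$ matches the response that the same primitive would produce in $E'$. Invocations and responses of t-operations by $T_m$ are unaffected since they are not memory events, and the algorithmic state of $T_m$ depends only on the sequence of responses it has seen. Together, this gives $E|m = E'|m$ for every slow-path $T_m \in \ms{txns}(E)$, which is the definition of $E$ being indistinguishable from $E'$ to $T_m$.

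The only subtlety, and the main thing to be careful about, is to rule out \emph{indirect} influence of $T_k$ on slow-path transactions. A fast-path transaction could in principle affect another transaction's behavior by triggering a tracking-set abort, but tracking sets are only maintained by fast-path transactions, so slow-path transactions are immune to this mechanism. Similarly, while $T_k$'s cached reads of a base object $b$ read $b$'s current shared-memory value, these reads do not modify the shared state. Once these observations are in place, the invariant above suffices and the proof reduces to a clean induction with no further casework.
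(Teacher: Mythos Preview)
Your argument is correct and matches the paper's reasoning, though the paper does not give a formal proof: it states the observation as immediate from the model, with only the informal remark that uncommitted fast-path transactions are ``invisible to slow-path transactions.'' Your induction makes explicit exactly why---$T_k$ never executes a successful \emph{cache-commit}, so shared memory is unaffected, and slow-path transactions interact only with shared memory via direct accesses. One minor note: since $E'$ is obtained by deleting only $T_k$'s events, the equality $E|m = E'|m$ for $m \neq k$ is immediate by construction of $E'$; the substantive content (which your shared-memory invariant does establish) is that $E'$ remains a valid execution of $\mathcal{M}$, i.e., that the retained responses of $T_m$'s direct rmw events are consistent with the shared-memory state once $T_k$ is removed.
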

\begin{figure*}[t]
\begin{center}
	\subfloat[\label{sfig:ob-01}]{\scalebox{0.65}[0.65]{\begin{tikzpicture}
\node (w2) at (10,0) [] {};
\node (w1) at (10,-2) [] {};
\node (w3) at (18,-2) [] {};

\draw (w2) node [above] {\small {$W_2(X,v)$}};

\draw (w1) node [above] {\small {$W_1(X,v)$}};

\node[draw,align=left] at (10,1) {Fast-path};
\node[draw,align=left] at (10,-1) {Slow-path};

\begin{scope}   
\draw [|-,thick] (9,0) node[left] {$T_2$} to (11,0);
\draw [-,dashed] (11,0) to (11.7,0);
\draw [|-|,thick] (8,-2) node[left] {$T_1$} to (12,-2);
\draw [-,dotted] (12.2,-2.5)  to (12.2,1) node[right] {$E$};
\end{scope}
\node[draw,align=right] at (15,.5) {Aborted or incomplete\\ fast-path transaction $T_2$};

\end{tikzpicture}}}
	\hspace{10mm}
	\subfloat[\label{sfig:ob-02}]{\scalebox{0.65}[0.65]{\begin{tikzpicture}

\node (w1) at (11,-2) [] {};

\draw (w1) node [above] {\small {$W_1(X,v)$}};

\node[draw,align=left] at (11,-1) {Slow-path};

\begin{scope}   
\draw [|-|,thick] (9,-2) node[left] {$T_1$} to (13,-2);
\draw [-,dotted] (13.2,-2.5)  to (13.2,1) node[right] {$E'$};
\end{scope}
\end{tikzpicture}}}
	 
\end{center}
\caption{
 \label{fig:ob1}
 Execution $E$ in Figure~\ref{sfig:ob-01} is indistinguishable
to $T_1$ from the execution $E'$ in Figure~\ref{sfig:ob-02}}
\end{figure*}
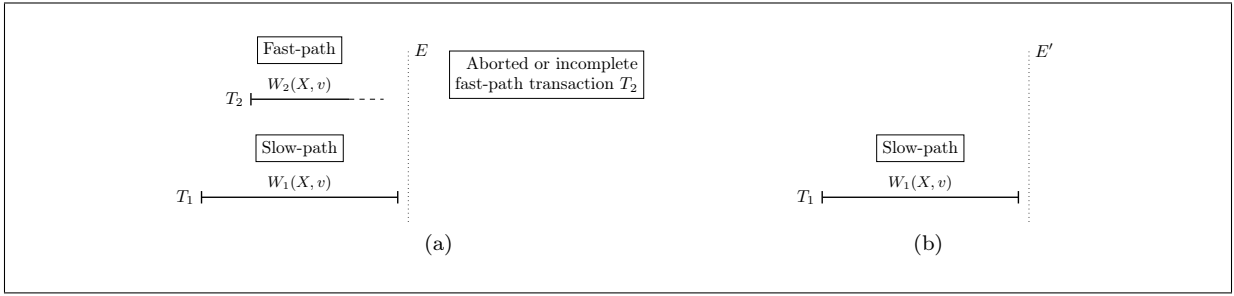
\begin{observation} 
\label{ob:two}
If a fast-path transaction $T_m\in \ms{txns}(E) \setminus \{T_k\}$ does not incur a tracking set abort in $E$, 
then $E$ is indistinguishable to $T_m$ from $E'$.
\end{observation}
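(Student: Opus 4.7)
The plan is to establish indistinguishability by showing that removing the events of $T_k$ from $E$ does not alter the enabling condition or the response of any event performed by $T_m$, so $E|m=E'|m$ and the evolution of the state of $T_m$ proceeds identically in both executions. The central fact I will exploit is that, since $T_k$ is a fast-path transaction that is either t-incomplete or aborted in $E$, it never executes a successful \emph{cache-commit} primitive. Consequently, none of $T_k$'s cached nontrivial primitives are ever propagated to the shared memory, and for every base object $b$ the value of $b$ in the shared memory after any prefix of $E$ coincides with the value of $b$ after the corresponding prefix of $E'$.

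Concretely, I would proceed by induction on the prefixes of $E$. For each event $e$ of $T_m$ in $E$, I analyze the three possible sources that could influence the response of a cached primitive in our model: (i) the current contents of $\tau_m$, which are determined solely by $T_m$'s own prior events and so agree in $E$ and $E'$; (ii) the memory value of the targeted base object $b$, which, by the preceding paragraph, coincides in the two executions; and (iii) the presence of $b$ in some other process's tracking set $\tau_j$ in a conflicting mode. For (iii), the only way $T_k$'s presence could change $T_m$'s response would be if at event $e$ the object $b$ were present in $\tau_k$ in conflicting mode---in which case $e$ would return $\bot$ and abort $T_m$. Similarly, after $e$ adds $b$ to $\tau_m$, $\tau_m$ could be invalidated by a subsequent event of $T_k$ on $b$, again yielding $\bot$ on $T_m$'s next cached primitive. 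Both situations fall under what the model characterizes as an abort caused by contention with $T_k$'s cached state, which by hypothesis does not happen in $E$ for $T_m$. Hence for each event $e$ of $T_m$, the enabling condition and the response in $E'$ coincide with those in $E$, and by induction $E'|m=E|m$.

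The main obstacle in writing this up formally is to be precise about the scope of ``tracking set abort'' as used in the observation: one must argue that under the hypothesis, \emph{neither} kind of contention-induced $\bot$ response involving $T_k$ can occur in $E$---neither a primitive of $T_m$ finding $b$ in $\tau_k$ in conflicting mode, nor a later primitive of $T_m$ finding its own tracking set previously invalidated by an access of $T_k$. Once those two cases are ruled out, the inductive argument on prefixes of $E$ proceeds mechanically, because removing $T_k$'s events can only reduce contention and therefore cannot newly invalidate $\tau_m$ in $E'$ either.
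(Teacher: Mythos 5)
Your argument is correct, and it fills in what the paper leaves implicit: the paper states this observation without proof, treating it as an immediate consequence of the model (uncommitted fast-path transactions never execute a successful \emph{cache-commit}, hence never touch shared memory, so their only channel of influence on another fast-path transaction is through tracking-set contention). Your induction on prefixes, separating the three possible influences (own tracking set, shared-memory values, other processes' tracking sets), is exactly the intended reasoning, and your explicit handling of both contention-induced $\bot$ cases -- $T_m$ finding $b$ held in $\tau_k$ in a conflicting mode, and $\tau_m$ being invalidated by an access of $T_k$ -- matches the broad reading of ``tracking set abort'' that the observation requires and that the paper's surrounding discussion (fast-path transactions communicate with other fast-path transactions only by incurring their tracking-set aborts) presupposes.
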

Intuitively, these observations say that fast-path transactions which are not yet committed are 
invisible to slow-path transactions, and can communicate with other
fast-path transactions only by incurring their tracking-set aborts.
Figure~\ref{fig:ob1} illustrates Observation~\ref{ob:one}: a fast-path transaction $T_2$ is concurrent to a slow-path transaction
$T_1$ in an execution $E$. Since $T_2$ is t-incomplete or aborted in this execution, $E$ is indistinguishable to $T_1$
from an execution $E'$ derived by removing all events of $T_2$ from $E$.
Analogously, to illustrate Observation~\ref{ob:two}, if $T_1$ is a fast-path transaction that does not incur a tracking set abort in $E$, then
$E$ is indistinguishable to $T_1$ from $E'$.
%
\subsection{Instrumentation}
\label{sec:p4c1s3}
Now we define the notion of \emph{code instrumentation} in fast-path transactions.
Intuitively, instrumentation characterizes the number of extra ``metadata'' accesses performed by a fast-path transaction.

We start with the following technical definition.
An execution $E$ of a HyTM $\mathcal{M}$ \emph{appears t-sequential}
to a transaction $T_{k} \in \ms{txns}(E)$ if there exists an execution
$E'$ of $\mathcal{M}$ such that:
\begin{itemize}
\item $\ms{txns}(E')\subseteq \ms{txns}(E)\setminus \{T_k\}$ and the configuration after $E'$ is t-quiescent,
\item every transaction $T_m\in \ms{txns}(E)$
that precedes $T_k$ in real-time order is included in $E'$ such that $E|m=E'|m$,
\item for every transaction $T_m\in \ms{txns}(E')$, 
$\Rset_{E'}(T_m)\subseteq \Rset_E(T_m)$ and $\Wset_{E'}(T_m)\subseteq \Wset_E(T_m)$, and
\item $E'\cdot E|k$ is an execution of $\mathcal{M}$.
\end{itemize}
%
%
%

\begin{definition}[Data and metadata base objects] 
\label{def:metadata}
Let $\mathcal{X}$ be the set of t-objects operated by a HyTM implementation $\mathcal{M}$. 
Now we partition the set of base objects used by $\mathcal{M}$ into a set $\mathbb{D}$ of \emph{data} objects and 
a set $\mathbb{M}$ of \emph{metadata} objects ($\mathbb{D}\cap \mathbb{M} = \emptyset$). We further partition
$\mathbb{D}$ into sets $\mathbb{D}_X$ associated with each t-object $X
\in \mathcal{X}$:  $\mathbb{D} = \bigcup\limits_{X\in\mathcal{X}} \mathbb{D}_X$,
for all $X\neq Y$ in $\X$, $\mathbb{D}_X \cap \mathbb{D}_Y =
\emptyset$,
such that:
%
\begin{enumerate}
\item In every execution $E$, each fast-path transaction $T_k \in \ms{txns}(E)$ only 
accesses base objects in $\bigcup\limits_{X\in DSet(T_k)}
\mathbb{D}_X$ or $\mathbb{M}$.
\item
Let $E\cdot\rho$ and $E\cdot E'\cdot\rho'$ be two t-complete
executions, such that $E$ and $E\cdot E'$ are t-complete, 
$\rho$ and $\rho'$ are complete executions of a transaction
$T_k\notin\ms{txns}(E\cdot E')$, $H_{\rho}=H_{\rho'}$, and $\forall
T_m\in \ms{txns}(E')$, $\Dset(T_m)\cap \Dset(T_k)=\emptyset$. 
Then the states of the base objects $\bigcup\limits_{X\in DSet(T_k)} \mathbb{D}_X$ 
in the configuration after $E\cdot \rho$ and $E\cdot E' \cdot {\rho'}$
are the same.    


\item 
Let execution $E$ appear t-sequential to a transaction $T_k$ and let
the enabled event $e$ of $T_k$ after $E$ be a primitive on a base
object $b\in \mathbb{D}$. Then, unless $e$ returns $\bot$, $E\cdot e$
also appears t-sequential to $T_k$. 
 
%
\end{enumerate}
\end{definition}
Intuitively, the first condition says that a transaction is only
allowed to access data objects based on its data set. The second
condition says that transactions with disjoint data sets can
communicate only via metadata objects.
Finally, the last condition means that base objects in
$\mathbb{D}$ may only contain the ``values'' of t-objects, and cannot be used to detect concurrent transactions. Note that our results will lower bound the number of metadata objects that must be accessed under particular assumptions, thus from a cost perspective, $\mathbb{D}$ should be made as large as possible.

All HyTM proposals we aware of, such as 
\emph{HybridNOrec}~\cite{hybridnorec,riegel-thesis}, \emph{PhTM}~\cite{phasedtm} and others~\cite{damronhytm, kumarhytm},
conform to our definition of instrumentation in fast-path transactions.
For instance, HybridNOrec~\cite{hybridnorec,riegel-thesis} employs a distinct base object in $\mathbb{D}$ for each
t-object and a global \emph{sequence lock} as the metadata that is accessed by fast-path transactions to detect
concurrency with slow-path transactions.
Similarly, the HyTM implementation by \emph{Damron et al.}~\cite{damronhytm} also 
associates a distinct base object in $\mathbb{D}$ for each
t-object and additionally, a \emph{transaction header} and \emph{ownership record} as metadata base objects.

%
%
\begin{definition}[Uninstrumented HyTMs]
\label{def:ins}
A HyTM implementation $\mathcal{M}$ provides \emph{uninstrumented writes (resp.\ reads)} 
if in every execution $E$ of $\mathcal{M}$, for every write-only (resp.\ read-only) 
fast-path
 transaction $T_k$, 
all primitives in $E|k$ are performed on base objects in $\mathbb{D}$.
A HyTM is uninstrumented if both its reads and writes are uninstrumented. 
\end{definition}
%
%
\begin{observation}
\label{ob:ins}
Consider any execution $E$ of a HyTM implementation $\mathcal{M}$ which provides uninstrumented reads (resp. writes). 
For any fast-path read-only (resp.\ write-only) transaction $T_k \not\in \ms{txns}(E)$, 
that runs step-contention free after $E$, 
the execution $E$ appears t-sequential to $T_k$.
\end{observation}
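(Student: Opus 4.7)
My strategy is to construct the witness execution $E'$ by stripping from $E$ exactly those transactions whose presence is either invisible to $T_k$ or provably irrelevant to the state of the base objects that $T_k$ actually touches. Let $\rho$ denote the step contention-free extension of $E$ by $T_k$, so that $E\cdot\rho$ is an execution of $\mathcal{M}$. By hypothesis $T_k$ is a fast-path read-only (resp.\ write-only) transaction and $\mathcal{M}$ provides uninstrumented reads (resp.\ writes), so by Definition~\ref{def:ins} every event of $\rho$ is a cached primitive on some base object in $\bigcup_{X\in\Dset(T_k)}\mathbb{D}_X$. It therefore suffices to produce a t-complete $E'$ satisfying the conditions of t-sequential appearance for which (i) the shared-memory state of $\bigcup_{X\in\Dset(T_k)}\mathbb{D}_X$ after $E'$ coincides with that after $E$, and (ii) no process $j\neq i$ (where $i$ executes $T_k$) holds a tracking-set entry for any such base object in a mode that would cause a cached primitive of $T_k$ to abort.

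First I would apply Observations~\ref{ob:one} and~\ref{ob:two} iteratively to remove every t-incomplete fast-path transaction of $E$, obtaining an execution $E_1$ of $\mathcal{M}$ that is indistinguishable from $E$ to each remaining slow-path transaction and to each remaining fast-path transaction that has not incurred a tracking-set abort. Because an uncommitted fast-path transaction has not executed its cache-commit, its cached modifications never reach shared memory; consequently the state of every base object is unchanged between $E$ and $E_1$, and $E_1\cdot\rho$ remains an execution of $\mathcal{M}$ in which $T_k$ exhibits the same events as in $E\cdot\rho$.

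Second, I would eliminate the remaining t-incomplete slow-path transactions. For any such $T_m$ with $\Dset(T_m)\cap\Dset(T_k)=\emptyset$, Condition~2 of Definition~\ref{def:metadata} asserts that the state of $\bigcup_{X\in\Dset(T_k)}\mathbb{D}_X$ is insensitive to the presence of $T_m$, so we may excise (or reorder away) the events of such $T_m$ without disturbing $T_k$'s view; the resulting execution still extends validly by $\rho$, and the $\Rset/\Wset$ inclusions of the t-sequential appearance definition hold vacuously for the removed transactions. For each remaining t-incomplete slow-path transaction $T_m$ whose data set intersects $\Dset(T_k)$, I plan to show, using Condition~1 of Definition~\ref{def:metadata} together with the fact that $T_k$'s cached primitives in $\rho$ return matching (non-$\bot$) responses, that such a $T_m$ can be safely t-completed in $E'$ with a truncation of its operations (aborting the pending t-operation and appending $\TryC_m\cdot A_m$), which leaves $\bigcup_{X\in\Dset(T_k)}\mathbb{D}_X$ unchanged and shrinks $\Rset/\Wset$ as required.

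The main obstacle is exactly this last step: certifying that slow-path transactions with data sets intersecting $\Dset(T_k)$ can be truncated into aborted, t-complete transactions without rendering $E'$ invalid or altering the relevant $\mathbb{D}_X$ values. I expect to handle it by induction along the prefix of $\rho$, invoking Condition~3 of Definition~\ref{def:metadata} to propagate the ``appears t-sequential'' property across each successive data-primitive of $T_k$ once a base case (the empty prefix) has been established from the reductions above. The write-only case is symmetric: Condition~1 restricts an uninstrumented fast-path write-only $T_k$ to the very same data base objects, so the construction of $E'$ proceeds verbatim with the roles of cached writes replacing cached reads.
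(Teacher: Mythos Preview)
The paper does not prove this statement; it is labelled an observation and taken as following directly from the definitions. The operative definition is Condition~3 of Definition~\ref{def:metadata}: if an execution appears t-sequential to $T_k$ and $T_k$'s enabled event $e$ is a primitive on some $b\in\mathbb{D}$ that does not return~$\bot$, then the execution extended by $e$ also appears t-sequential to $T_k$. Since an uninstrumented read-only (resp.\ write-only) fast-path transaction applies primitives only to base objects in $\mathbb{D}$, an induction along the events of $T_k$ using Condition~3 reduces the observation to its base case (where $T_k$ has merely issued its first invocation). The paper implicitly treats that base case as immediate and offers no explicit construction of the witness~$E'$.

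Your proposal correctly locates Condition~3 as the inductive engine, but the extensive pre-processing you sketch for the base case contains real gaps. Condition~2 of Definition~\ref{def:metadata} concerns the invariance of the $\mathbb{D}_X$ states under \emph{insertion} of a t-complete fragment of transactions with disjoint data sets between two t-complete prefixes; it does not sanction \emph{deleting} an incomplete slow-path transaction from $E$, and in particular gives no guarantee that the residual event sequence is still an execution of $\mathcal{M}$ (a surviving transaction may have read metadata written by the removed one). Your proposed truncation of slow-path transactions whose data sets intersect $\Dset(T_k)$---replacing their pending t-operation by an abort and appending $\TryC_m\cdot A_m$---likewise assumes the implementation would return $A_m$ at that point, which nothing in the model forces for a deterministic implementation. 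These are genuine obstacles to your construction of $E'$, not details to be filled in later; the paper's route sidesteps them entirely by leaning on Condition~3, which is part of the assumed data/metadata partition rather than something to be proved.
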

%
\subsection{Impossibility of uninstrumented HyTMs}
\label{sec:p4c1s4}
\begin{figure*}[t]
\begin{center}
	\subfloat[$T_y$ must return the new value\label{sfig:hinv-0}]{\scalebox{0.65}[0.65]{\begin{tikzpicture}
\node (r1) at (-1,0) [] {};
\node (w0) at (1,0) [] {};
\node (w1) at (3,0) [] {};
\node (c1) at (5,0) [] {};

\node (e) at (7.5,0) [] {};

\node (r3) at (10,0) [] {};

\draw (r1) node [above] {\small {$R_0(Z)\rightarrow v$}};
\draw (w1) node [above] {\small {$W_0(X,nv)$}};
\draw (c1) node [above] {\small {$\TryC_0$}};
\draw (w0) node [above] {\small {$W_0(Y,nv)$}};

\draw (e) node [above] {\tiny {(event of $T_0$)}};
\draw (e) node [below] {\small {$e$}};

\draw (r3) node [above] {\small {$R_y(Y)\rightarrow nv$}};
\draw (r3) node [below] {\tiny {returns new value}};

\node[draw,align=left] at (2.5,1) {S};
\node[draw,align=left] at (10,1) {F};
\begin{scope}   
\draw [|-|,thick] (-2,0) node[left] {$T_0$} to (0,0);
\draw [-|,thick] (0,0) node[left] {} to (2,0);
\draw [-|,thick] (2,0) node[left] {} to (4,0);
\draw [-,thick] (4,0) node[left] {} to (5.5,0);
\draw  (7.5,0) circle [fill, radius=0.05]  (7.5,0);
\draw [|-|,thick] (9,0) node[left] {$T_y$} to (11,0);
\end{scope}
\end{tikzpicture}}}
	\\
	\subfloat[Since $T_z$ is uninstrumented, by Observation~\ref{ob:ins} and sequential TM-progress, $T_z$ must commit\label{sfig:hinv-1}]{\scalebox{0.65}[0.65]{\begin{tikzpicture}
\node (r1) at (-1,0) [] {};
\node (w0) at (1,0) [] {};
\node (w1) at (3,0) [] {};
\node (c1) at (5,0) [] {};

\node (w3) at (8.5,0) [] {};

\draw (r1) node [above] {\small {$R_0(Z)\rightarrow v$}};
\draw (w1) node [above] {\small {$W_0(X,nv)$}};
\draw (c1) node [above] {\small {$\TryC_0$}};
\draw (w0) node [above] {\small {$W_0(Y,nv)$}};

\draw (w3) node [above] {\small {$W_z(Z,nv)$}};
\draw (w3) node [below] {\tiny {write new value}};

\node[draw,align=left] at (2.5,1) {S};
\node[draw,align=left] at (8.5,1) {F};

\begin{scope}   
\draw [|-|,thick] (-2,0) node[left] {$T_0$} to (0,0);
\draw [-|,thick] (0,0) node[left] {} to (2,0);
\draw [-|,thick] (2,0) node[left] {} to (4,0);
\draw [-,thick] (4,0) node[left] {} to (5.5,0);
\draw [|-|,thick] (7.5,0) node[left] {$T_z$} to (9.5,0);
\end{scope}
\end{tikzpicture}}}
        \\
        \vspace{2mm}
	\subfloat[Since $T_x$ does not access any metadata, by Observation~\ref{ob:ins}, it cannot abort and must 
	return the initial value value of $X$\label{sfig:hinv-2}]{\scalebox{0.65}[0.65]{\begin{tikzpicture}
\node (r1) at (-1,0) [] {};
\node (w0) at (1,0) [] {};
\node (w1) at (3,0) [] {};
\node (c1) at (5,0) [] {};

\node (r2) at (12,0) [] {};

\node (w3) at (8.5,0) [] {};

\draw (r1) node [above] {\small {$R_0(Z)\rightarrow v$}};
\draw (w1) node [above] {\small {$W_0(X,nv)$}};
\draw (c1) node [above] {\small {$\TryC_0$}};
\draw (w0) node [above] {\small {$W_0(Y,nv)$}};

\draw (r2) node [above] {\small {$R_x(X)\rightarrow v$}};
\draw (r2) node [below] {\tiny {returns initial value}};

\draw (w3) node [above] {\small {$W_z(Z,nv)$}};
\draw (w3) node [below] {\tiny {write new value}};

\node[draw,align=left] at (2.5,1) {S};
\node[draw,align=left] at (8.5,1) {F};
\node[draw,align=left] at (12,1) {F};

\begin{scope}   
\draw [|-|,thick] (-2,0) node[left] {$T_0$} to (0,0);
\draw [-|,thick] (0,0) node[left] {} to (2,0);
\draw [-|,thick] (2,0) node[left] {} to (4,0);
\draw [-,thick] (4,0) node[left] {} to (5.5,0);
\draw [|-|,thick] (7.5,0) node[left] {$T_z$} to (9.5,0);
\draw [|-|,thick] (11,0) node[left] {$T_x$} to (13,0);
\end{scope}
\end{tikzpicture}}}
	\\
	\vspace{2mm}
	\subfloat[$T_y$ does not contend with $T_x$ or $T_z$ on any base object \label{sfig:hinv-3}]{\scalebox{0.65}[0.65]{\begin{tikzpicture}
\node (r1) at (-1,0) [] {};
\node (w0) at (1,0) [] {};
\node (w1) at (3,0) [] {};
\node (c1) at (5,0) [] {};

\node (r2) at (12,0) [] {};

\node (e) at (15,0) [] {};

\node (r3) at (18.5,0) [] {};

\node (w3) at (8.5,0) [] {};

\draw (r1) node [above] {\small {$R_0(Z)\rightarrow v$}};
\draw (w1) node [above] {\small {$W_0(X,nv)$}};
\draw (c1) node [above] {\small {$\TryC_0$}};
\draw (w0) node [above] {\small {$W_0(Y,nv)$}};

\draw (r2) node [above] {\small {$R_x(X)\rightarrow v$}};
\draw (r2) node [below] {\tiny {returns initial value}};

\draw (e) node [above] {\tiny {(event of $T_0$)}};
\draw (e) node [below] {\small {$e$}};

\draw (r3) node [above] {\small {$R_y(Y)\rightarrow nv$}};
\draw (r3) node [below] {\tiny {returns new value}};

\draw (w3) node [above] {\small {$W_z(Z,nv)$}};
\draw (w3) node [below] {\tiny {write new value}};

\node[draw,align=left] at (2.5,1) {S};
\node[draw,align=left] at (8.5,1) {F};
\node[draw,align=left] at (18.5,1) {F};
\node[draw,align=left] at (12,1) {F};

\begin{scope}   
\draw [|-|,thick] (-2,0) node[left] {$T_0$} to (0,0);
\draw [-|,thick] (0,0) node[left] {} to (2,0);
\draw [-|,thick] (2,0) node[left] {} to (4,0);
\draw [-,thick] (4,0) node[left] {} to (5.5,0);
\draw [|-|,thick] (7.5,0) node[left] {$T_z$} to (9.5,0);
\draw [|-|,thick] (11,0) node[left] {$T_x$} to (13,0);
\draw  (15,0) circle [fill, radius=0.05]  (15,0);
\draw [|-|,thick] (17.5,0) node[left] {$T_y$} to (19.5,0);
\end{scope}
\end{tikzpicture}}}
	\caption{Executions in the proof of Theorem~\ref{instrumentation}; execution in \ref{sfig:hinv-3} is not strictly serializable
          \label{fig:indis}} 
\end{center}
\end{figure*}
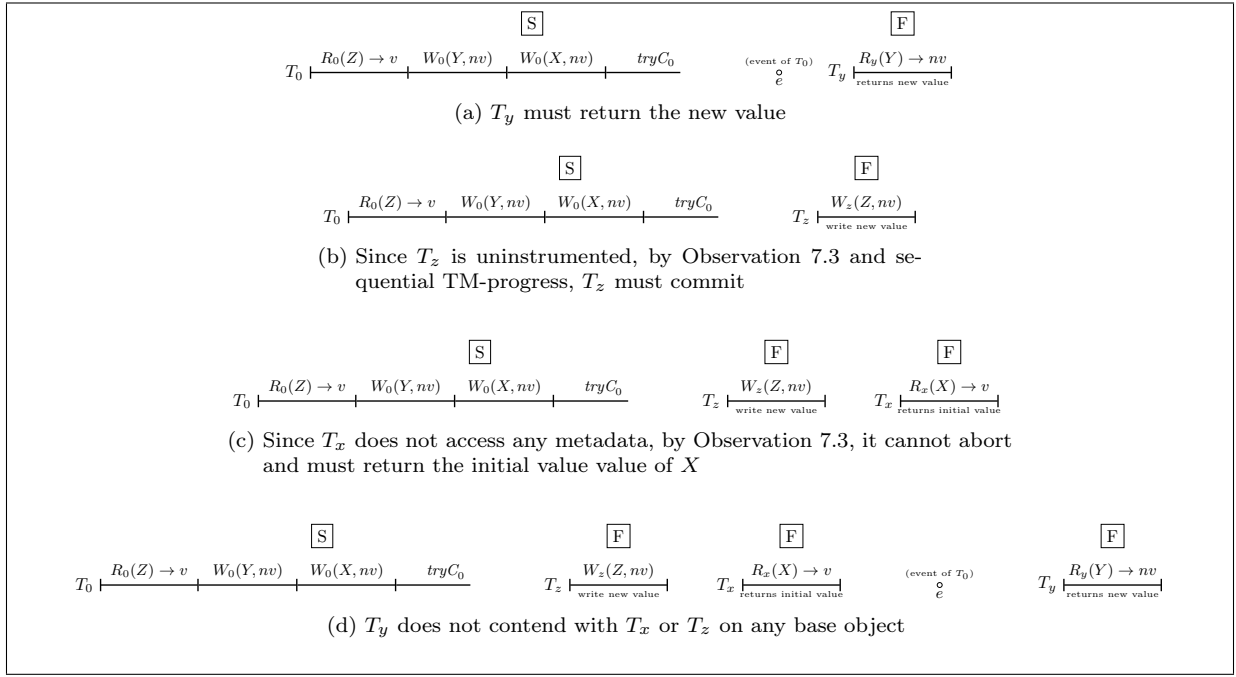
In this section, we show that any strictly serializable HyTM must be
instrumented, even under a very weak progress assumption by which  
a transaction is guaranteed to commit only when run t-sequentially:

\begin{definition}[Sequential TM-progress for HyTMs]
\label{def:seqtmprogress}
A HyTM implementation $\mathcal{M}$ provides \emph{sequential TM-progress for fast-path transactions (and resp. slow-path)} if 
in every execution $E$ of $\mathcal{M}$, a fast-path 
(and resp. slow-path) 
transaction $T_k$ returns $A_k$ in $E$ only if 
$T_k$ incurs a capacity abort or $T_k$ is concurrent to another transaction. 
We say that $\mathcal{M}$ provides sequential TM-progress if it provides sequential TM-progress for fast-path and slow-path
transactions. 
\end{definition}

\begin{theorem}
\label{instrumentation}
There does not exist a strictly serializable uninstrumented HyTM
implementation 
that ensures sequential TM-progress and TM-liveness.
\end{theorem}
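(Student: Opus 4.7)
My plan is to argue by contradiction: assume such an uninstrumented strictly serializable HyTM $M$ with sequential TM-progress and liveness exists, and construct a four-transaction execution that admits no valid serialization. First, I will consider a single slow-path transaction $T_0$ that performs $\Read_0(Z)\to v$, $\Write_0(X,nv)$, $\Write_0(Y,nv)$, then invokes $\TryC_0$; by sequential TM-progress and TM-liveness, $M$ admits an execution $\alpha$ in which $T_0$ runs alone and commits. Next, I will identify a ``critical'' event $e$ within $\TryC_0$ at which $T_0$ publishes its update to $Y$: just before $e$, an uninstrumented read-only fast-path transaction subsequently invoked and reading $Y$ returns $v$, whereas just after $e$ it returns $nv$. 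Such an $e$ must exist because (i) at the empty prefix a fast-path read of $Y$ returns $v$, (ii) strict serializability forces a fast-path read of $Y$ after $\alpha$ to return $nv$, and (iii) each event in $\alpha$ is a primitive on a single base object while $\mathbb{D}_X\cap\mathbb{D}_Y=\emptyset$ by Definition~\ref{def:metadata}. By symmetry in $T_0$'s treatment of $X$ and $Y$ (swapping their names if necessary), I may assume the analogous critical event for $X$ occurs no earlier than $e$, so that a fast-path read of $X$ immediately before $e$ still returns $v$.

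I will then build the contradictory execution by extending the prefix $\alpha'$ of $\alpha$ up to but not including $e$: (a) append an uninstrumented write-only fast-path transaction $T_z$ writing $nv$ to $Z$, which must commit by Observation~\ref{ob:ins} combined with sequential TM-progress for fast-path transactions; (b) append an uninstrumented read-only fast-path $T_x$ reading $X$, which by Observation~\ref{ob:ins}, clause~(3) of Definition~\ref{def:metadata}, and the choice of $e$, commits and returns $v$; (c) apply event $e$; (d) append an uninstrumented read-only fast-path $T_y$ reading $Y$, which commits and returns $nv$ by the defining property of $e$. Because $T_x$, $T_y$, $T_z$ are uninstrumented they touch only $\mathbb{D}_X$, $\mathbb{D}_Y$, $\mathbb{D}_Z$ respectively, so none of them contend with each other; and by the choice of $e$ they contend with $T_0$ only on data base objects that $T_0$ has not yet modified at the points where they are scheduled, so no tracking-set abort can spoil the construction.

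The final step is the serialization argument. $T_y$'s read of $nv$ in $Y$ forces $T_0$ to commit and be serialized before $T_y$; $T_x$'s read of $v$ in $X$ (together with $T_0$ writing $nv$ to $X$) forces $T_x$ before $T_0$; $T_0$'s read of $v$ in $Z$ (together with committed $T_z$ writing $nv$ to $Z$) forces $T_0$ before $T_z$; and the real-time order between $T_z$ and $T_x$ forces $T_z$ before $T_x$. Composing these gives a cycle $T_0 \to T_z \to T_x \to T_0$, contradicting strict serializability. The hard part will be the definition of the critical event $e$ and the proof that $T_x$ genuinely returns $v$ rather than $nv$: this requires carefully combining the uninstrumentedness of read-only fast-paths, the data partitioning of Definition~\ref{def:metadata} (especially clause~(3), which ties the state of data base objects to some t-sequential view), and the WLOG symmetry between $X$ and $Y$ that lets us pick $e$ to be the earlier of the two publication events.
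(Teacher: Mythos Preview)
Your proposal is correct and follows essentially the same route as the paper: the paper phrases your ``critical publishing event'' as the longest prefix $E'$ after which neither $X$ nor $Y$ can yet be read as $nv$ by a fast-path transaction (and then WLOG $Y$ flips first at the next event $e$), inserts $T_z, T_x, T_y$ around $e$ using disjointness of $\mathbb{D}_X,\mathbb{D}_Y,\mathbb{D}_Z$, and derives the same serialization cycle $T_0 \to T_z \to T_x \to T_0$. One small imprecision worth fixing: $e$ need not lie within $\TryC_0$, since nothing in the model forces a slow-path implementation to defer its updates---$e$ could occur during $\Write_0(Y,nv)$ itself---but this does not affect the structure of your argument.
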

\begin{proof}
Suppose by contradiction that such a HyTM $\M$ exists. 
For simplicity, assume that $v$ is the initial value of t-objects $X$, $Y$ and $Z$.
Let $E$ be the t-complete step contention-free execution of a slow-path transaction $T_0$
that performs $\Read_0(Z) \rightarrow v$,  $\Write_0(X, nv)$,
$\Write_0(Y, nv)$ ($nv\neq v$), and commits. 
Such an execution exists since $\M$ ensures sequential TM-progress. 

By Observation~\ref{ob:ins}, any transaction that 
runs step contention-free starting from a prefix of $E$ must return a
non-abort value.
Since any such transaction reading $X$ or $Y$ must return $v$ 
when it starts from the
empty prefix of $E$ and $nv$ when it starts from $E$.

Thus, there exists  $E'$, the longest prefix of $E$ that cannot be extended with the 
t-complete step contention-free execution of a \emph{fast-path} transaction 
reading $X$ or $Y$ and returning $nv$.
Let $e$ is the enabled event of $T_0$ in the configuration after $E'$.
Without loss of generality, suppose that there exists an execution
$E'\cdot e\cdot E_y$ where $E_y$ is the t-complete step contention-free
execution fragment of some fast-path transaction $T_y$ that reads $Y$ is returns $nv$ (Figure~\ref{sfig:hinv-0}). 
%
%
\begin{claim}\label{claim:concat}
$\M$ has an execution $E' \cdot E_z \cdot  E_x$, where
\begin{itemize}
\item
$E_z$ is the t-complete step contention-free execution fragment of a fast-path transaction $T_{z}$ that 
writes $nv \neq v$ to $Z$ and commits
\item
$E_x$ is the t-complete step contention-free execution fragment of a fast-path transaction $T_x$ that performs
a single t-read $\Read_x(X) \rightarrow v$ and commits.
\end{itemize}
\end{claim}
\begin{proof}
%
By Observation~\ref{ob:ins}, the extension of $E'$ in which $T_z$ writes to
$Z$ and tries to commit appears t-sequential to $T_z$.
By sequential TM-progress, $T_z$ complets the write and commits. 
Let  $E' \cdot E_z$ (Figure~\ref{sfig:hinv-1})
be the resulting execution of $\M$.

Similarly, the extension of $E'$ in which $T_x$ reads $X$ 
and tries to commit appears t-sequential to $T_x$.
By sequential TM-progress, $T_x$ commits and let  $E' \cdot E_x$
be the resulting execution of $\M$.
By the definition of $E'$, 
$\Read_x(X)$ must return $v$ in $E' \cdot E_x$.

Since $\M$ is uninstrumented and the data sets of $T_x$
and $T_z$ are disjoint, 
the sets of base objects accessed in the execution fragments $E_x$ and
$E_y$ are also disjoint.
Thus, $E' \cdot E_z \cdot E_x $
is indistinguishable to $T_x$ from the execution $E' \cdot E_x$, which
implies that $E' \cdot E_z \cdot E_x$ is an execution of $\M$ (Figure~\ref{sfig:hinv-2}).
\end{proof} 
Finally, we prove that the sequence of events, ${E' \cdot E_z \cdot E_x \cdot e \cdot E_y}$ is an execution of $\M$.

Since the transactions $T_x$, $T_y$, $T_z$ have
pairwise disjoint data sets in ${E' \cdot E_z \cdot E_x \cdot e \cdot E_y}$,
no base object accessed ib $E_y$ can be accessed in $E_x$ and $E_z$.
The read operation on $X$ performed by $T_y$ in $E'\cdot e\cdot E_y$
returns $nv$ and, by the definition of $E'$ and $e$, $T_y$ must have accessed
the base object $b$ modified in the event $e$ by $T_0$.
Thus, $b$ is not accessed in $E_x$ and $E_z$ and 
$E' \cdot E_z \cdot E_x \cdot e$ is an execution of $\M$.
Summing up, $E' \cdot E_z \cdot E_x \cdot e \cdot E_y$
is indistinguishable to $T_y$ from
$E'  \cdot e \cdot E_y$, which implies that 
$E' \cdot E_z \cdot E_x \cdot e \cdot E_y$ is an execution of $\M$
(Figure~\ref{sfig:hinv-3}).

But the resulting execution is not strictly serializable.
Indeed, suppose that a serialization exists.  
As the value written by $T_0$ is returned by a committed transaction
$T_y$, $T_0$ must be committed and precede $T_y$ in the serialization.
Since $T_x$ returns the initial value of $X$, $T_x$ must precede
$T_0$. 
Since $T_0$ reads the initial value of $Z$, $T_0$ must
precede $T_z$.
Finally, $T_z$ must precede $T_x$ to respect the real-time order. 
The cycle in the serialization establishes a contradiction.
\end{proof}
%
\section{A linear lower bound on instrumentation for progressive HyTMs}
\label{sec:p4c2s1}
In this section, we show that giving HyTM the ability to run and commit
transactions in parallel brings 
considerable 
instrumentation costs.   
We focus on a 
natural
progress condition called
progressiveness~\cite{GK08-opacity,GK09-progressiveness,tm-theory} that allows a
transaction to abort only if it experiences a read-write or write-write
conflict with a concurrent transaction: 
\begin{definition}[Progressiveness for HyTMs]
\label{def:prog}
We say that transactions $T_i$ and $T_j$ \emph{conflict} in an execution $E$ 
on a t-object $X$ if
$X\in\Dset(T_i)\cap\Dset(T_j)$ and $X\in\Wset(T_i)\cup\Wset(T_j)$.

A HyTM implementation $\M$ 
is \emph{fast-path} (resp. \emph{slow-path}) \emph{progressive} 
if in every execution $E$ of $\M$ and for every fast-path (and resp. slow-path) transaction 
$T_i $ that aborts in $E$, 
either $A_i$ is a capacity abort or  $T_i$ conflicts with some transaction $T_j$ that is concurrent to $T_i$ in $E$.  
We say $\M$ is \emph{progressive} if it is both fast-path and slow-path progressive.
%
\end{definition}
We show that for every opaque fast-path progressive HyTM that provides
obstruction-free TM-liveness, an arbitrarily long read-only
transaction might access a number of distinct
metadata base objects that is linear in the size of its read set or
experience a capacity abort.

The following auxiliary results will be crucial in proving our lower
bound.
We observe first that a fast path transaction in a progressive HyTM can 
contend on a base object only with a conflicting transaction.
\begin{lemma}
\label{lm:pgtwo}
Let $\mathcal{M}$ be any fast-path progressive HyTM implementation.
Let $E\cdot E_1 \cdot E_2$ be an execution of $\mathcal{M}$ where
$E_1$ (and resp. $E_2$) is the step contention-free
execution fragment of transaction $T_1 \not\in \ms{txns}(E)$ (and
resp. $T_2 \not\in \ms{txns}(E)$),
$T_1$ (and resp. $T_2$) does not conflict with any transaction in $E\cdot E_1 \cdot E_2$, and
at least one of $T_1$ or $T_2$ is a fast-path transaction. 
Then, $T_1$ and $T_2$ do not contend on any base object in $E\cdot E_1 \cdot E_2$.
\end{lemma}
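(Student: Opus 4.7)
The plan is to prove the lemma by contradiction, following a structure analogous to Lemma~\ref{lm:dap} but leveraging fast-path progressiveness and the HyTM tracking-set mechanism in place of weak DAP. Suppose $T_1$ and $T_2$ contend on some base object $b$ in $E\cdot E_1\cdot E_2$. Without loss of generality, assume $T_1$ is fast-path; the case where $T_1$ is slow-path and only $T_2$ is fast-path is symmetric, with the roles of the two transactions interchanged.

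First, I would locate the contending events. Let $e_1$ be the first event of $T_1$ on $b$ in $E_1$ (necessarily a cached primitive that inserts $b$ into $\tau_1$) and let $e_2$ be the first event of $T_2$ on $b$ in $E_2$. Since the two events contend, at least one of them is nontrivial, which splits the analysis into two cases: either $e_1$ is nontrivial, so $b$ enters $\tau_1$ in mode \emph{exclusive}; or $e_1$ is trivial and $e_2$ is nontrivial, so $b$ enters $\tau_1$ in mode \emph{shared}. Let $\rho_1'$ denote the prefix of $E_1$ ending with $e_1$ and $\rho_2$ the prefix of $E_2$ ending just before $e_2$.

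Next, I would consider the reordered sequence $E\cdot\rho_1'\cdot\rho_2\cdot e_2\cdot e'$, where $e'$ is $T_1$'s next event after $\rho_1'$ (either the next event actually appearing in $E_1$ or, if $e_1$ is already the last event of $E_1$, $T_1$'s subsequent \emph{cache-commit} step obtained by continuing the fast-path transaction, which it must eventually invoke). I would argue this is a valid execution of $\mathcal{M}$ via Observations~\ref{ob:one} and~\ref{ob:two}: since $T_1$ has not invoked \emph{cache-commit} by the end of $\rho_1'$, its events are invisible to every slow-path transaction and to every fast-path transaction that has not yet suffered a tracking-set abort, so $T_2$'s events $\rho_2\cdot e_2$ can be rescheduled immediately after $\rho_1'$ without altering their responses. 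By the definition of cached primitives, $e_2$ then invalidates $\tau_1$---an \emph{exclusive} entry is invalidated by any access and a \emph{shared} entry is invalidated by any nontrivial access---so the appended event $e'$ returns $\bot$ and forces $T_1$ to abort.

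The contradiction now follows: $T_1$'s data set and its set of concurrent transactions in the reordered execution are the same as in $E\cdot E_1\cdot E_2$, where by hypothesis $T_1$ does not conflict with any transaction; thus $T_1$'s forced abort (which is neither a capacity abort nor occasioned by any conflict) violates fast-path progressiveness. The hard part will be rigorously justifying the reordering in the third step, particularly verifying that $T_2$'s events in $\rho_2$ behave identically when scheduled immediately after $\rho_1'$ rather than after the full $E_1$; this will require a careful case analysis of whether $T_1$ completes \emph{cache-commit} in $E_1$ and of how the tracking-set mechanism routes the interaction between $T_1$'s uncommitted cached state and $T_2$'s subsequent accesses to base objects other than $b$.
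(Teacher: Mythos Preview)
Your overall strategy---derive a contradiction by reordering to a configuration where the fast-path transaction's tracking set contains $b$, then let the other transaction's access invalidate it, violating fast-path progressiveness---matches the paper. However, two concrete choices in your proposal create gaps that the paper handles differently.

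\textbf{First accesses need not contend.} You take $e_1,e_2$ as the \emph{first} events of $T_1,T_2$ on $b$ and then assert ``since the two events contend, at least one of them is nontrivial''. This is not justified: contention on $b$ only guarantees that \emph{some} pair of events on $b$ contends, not the first pair. For instance, $T_1$ might read $b$ then later write $b$, while $T_2$ only reads $b$; the first accesses are both trivial, so after $e_1$ the object $b$ sits in $\tau_1$ in \emph{shared} mode and a trivial $e_2$ does not invalidate it. The paper avoids this by a case split: if $T_1$ has a nontrivial event on some contended base object, take $e_1$ to be the \emph{last} such event and $e_2$ the first event of $T_2$ on that same object; otherwise take $e_2$ to be the first nontrivial event of $T_2$ on a contended object and $e_1$ the last event of $T_1$ on it. Either way one of $e_1,e_2$ is guaranteed nontrivial.

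\textbf{Reordering via Observations~\ref{ob:one} and~\ref{ob:two} is incomplete.} You want $T_2$'s fragment $\rho_2\cdot e_2$ to behave identically after $E\cdot\rho_1'$ as it did after $E\cdot E_1$. Observations~\ref{ob:one}/\ref{ob:two} give indistinguishability only when the fast-path transaction is t-incomplete or aborted; if $T_1$ cache-commits inside $E_1$, they do not apply, and $T_2$'s view after $E_1$ (where $T_1$'s writes are in shared memory) genuinely differs from its view after $\rho_1'$ (where they are not). You flag this as the hard part, but it is not a detail that can be filled in within your framework. The paper sidesteps the issue by truncating $E_1$ \emph{before} $e_1$ (to $E_1'$) and truncating $E_2$ before $e_2$ (to $E_2'$), then arguing directly that the suffix $E_1\setminus E_1'$ contains no nontrivial event on any contended base object except $e_1$ itself on $b$; since $T_2$ does not touch $b$ in $E_2'$, the execution $E\cdot E_1\cdot E_2'$ is indistinguishable to $T_2$ from $E\cdot E_1'\cdot E_2'$. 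In this configuration both $e_1$ and $e_2$ are enabled; the paper then applies $e_1$ followed by $e_2$ (not $e_2$ followed by $e'$), so that $b$ is freshly placed in $\tau_1$ and immediately invalidated, forcing $T_1$ to abort in any extension. The choice of $e_1$ as \emph{last} nontrivial is exactly what makes this direct indistinguishability argument work.
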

\begin{proof}
Suppose, by contradiction that $T_1$ or $T_2$ 
contend on the same base object in $E\cdot E_1\cdot E_2$.

If in $E_1$, $T_1$ performs a nontrivial event on a base object on which they contend, let $e_1$ be the last
event in $E_1$ in which $T_1$ performs such an event to some base object $b$ and $e_2$, the first event
in $E_2$ that accesses $b$.
Otherwise, $T_1$
only performs trivial events in $E_1$ to base objects on which it contends with $T_2$ in $E\cdot E_1\cdot E_2$:
let $e_2$ be the first event in $E_2$ in which $E_2$ performs a nontrivial event to some base object $b$
on which they contend and $e_1$, the last event of $E_1$ in $T_1$ that accesses $b$.

Let $E_1'$ (and resp. $E_2'$) be the longest prefix of $E_1$ (and resp. $E_2$) that does not include
$e_1$ (and resp. $e_2$).
Since before accessing $b$, the execution is step contention-free for $T_1$, $E \cdot
E_1'\cdot E_2'$ is an execution of $\mathcal{M}$.
By construction, $T_1$ and $T_2$ do not conflict in $E \cdot E_1'\cdot E_2'$.
Moreover, $E\cdot E_1 \cdot E_2'$ is indistinguishable to $T_2$ from $E\cdot E_1' \cdot E_2'$.
Hence, $T_1$ and
$T_2$ are poised to apply contending events $e_1$ and $e_2$ on $b$ in the execution
$\tilde E=E\cdot E_1' \cdot E_2'$.
Recall that at least one event of $e_1$ and $e_2$ must be nontrivial.  

Consider the execution $\tilde E\cdot e_1 \cdot e_2'$ where $e_2'$ is the
event of $p_2$ in which it applies the primitive of $e_2$ to the
configuration after $\tilde E \cdot e_1$. 
After $\tilde E\cdot e_1$, $b$ is contained in the tracking set of process
$p_1$.
If $b$ is contained in $\tau_1$ in the shared mode, then $e_2'$ is a
nontrivial primitive on $b$, which invalidates $\tau_1$ in $\tilde E\cdot
e_1\cdot e_2'$.
If  $b$ is contained in $\tau_1$ in the exclusive mode, then any
subsequent access of $b$ invalidates $\tau_1$ in  $\tilde E\cdot
e_1\cdot e_2'$.
In both cases,   $\tau_1$ is invalidated and $T_1$ incurs a tracking set abort.
Thus, transaction $T_1$ must return $A_1$ in any extension of $E\cdot
e_1\cdot e_2$---a contradiction
to the assumption that $\mathcal{M}$ is progressive.   
\end{proof}
Iterative application of Lemma~\ref{lm:pgtwo} implies the following:  
\begin{corollary}
\label{lm:pgthree}
Let $\mathcal{M}$ be any fast-path progressive HyTM implementation.
Let $E\cdot E_1 \cdots E_i \cdot E_{i+1} \cdots E_m$ be any execution of $\mathcal{M}$ where
for all $i\in \{1,\ldots , m\}$, $E_i$ is the step contention-free
execution fragment of transaction $T_i \not\in \ms{txns}(E)$
and any two transactions in $E_1 \cdots E_m$ do not conflict.
For all $i,j=1,\ldots,m$, $i\neq j$, if $T_i$ is fast-path, then $T_i$
and $T_j$ do not contend on a
base object in  $E\cdot E_1 \cdots E_{i} \cdots E_m$   
\end{corollary}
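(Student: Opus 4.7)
The plan is to proceed by induction on $m$. For $m=2$ the statement reduces to Lemma~\ref{lm:pgtwo}. For the inductive step, fix an execution $E\cdot E_1\cdots E_m$ meeting the hypotheses and a pair $(T_i,T_j)$ with $T_i$ fast-path and $i\ne j$.

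If both indices are at most $m-1$, then every event of $T_i$ and every event of $T_j$ lies in the prefix $E\cdot E_1\cdots E_{m-1}$, which is itself an execution satisfying the hypotheses with $m-1$ step-contention-free fragments, so the inductive hypothesis applied to this prefix rules out any contention between $T_i$ and $T_j$. The interesting case is therefore when one of the indices equals $m$; without loss of generality let $j=m$, and suppose for contradiction that $T_i$ and $T_m$ contend on some base object $b$ in $E\cdot E_1\cdots E_m$.

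Since $T_i$ and $T_m$ do not conflict and at least one of the contending accesses is nontrivial, Definition~\ref{def:metadata}(1) forces $b\in\mathbb{M}$: otherwise $b\in\mathbb{D}_X$ for some $X\in\Dset(T_i)\cap\Dset(T_m)$ and the nontrivial access would itself be a conflict on $X$. Let $e_i\in E_i$ and $e_m\in E_m$ be contending events chosen exactly as in the proof of Lemma~\ref{lm:pgtwo} (the last nontrivial access of $T_i$ on $b$ together with the first access of $T_m$ on $b$, or the symmetric choice if $T_i$ performs only trivial accesses to $b$), and split $E_i=E_i'\cdot e_i\cdot E_i''$ and $E_m=E_m'\cdot e_m\cdot E_m''$. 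I then construct the modified schedule
\[
\tilde E \;=\; E\cdot E_1\cdots E_{i-1}\cdot E_i'\cdot E_{i+1}\cdots E_{m-1}\cdot E_m'\cdot e_i\cdot e_m,
\]
in which $T_i$ is paused just before $e_i$, the intermediate fragments run, $T_m$ runs up to but not including $e_m$, and finally $T_i$ resumes with $e_i$ immediately followed by $e_m$. To see that $\tilde E$ is a valid execution of $\mathcal{M}$, I apply the inductive hypothesis to the shorter execution $E\cdot E_1\cdots E_{i-1}\cdot E_i'\cdot E_{i+1}\cdots E_{m-1}$ (valid because $E_i'$ is a prefix of the step-contention-free $E_i$): it says no intermediate $T_k$ contends with $T_i$, so none of them can access any object currently in $\tau_i$ in a way that would invalidate it; and combined with Observation~\ref{ob:two} for intermediate fast-path $T_k$ and Observation~\ref{ob:one} for intermediate slow-path $T_k$, the partial presence of $T_i$ does not alter the behavior of any $T_k$, so each $E_k$ runs in $\tilde E$ just as in the original.

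Once $\tilde E$ is established as a valid execution, after $\tilde E\cdot e_i$ the object $b$ sits in $\tau_i$ in shared or exclusive mode, and $e_m$ is a contending access to $b$, so exactly as in the proof of Lemma~\ref{lm:pgtwo} the tracking set $\tau_i$ is invalidated; the next cached primitive of $T_i$ in $E_i''$ returns $\bot$ and forces $A_i$, even though $T_i$ has no conflicting concurrent transaction in $\tilde E$, contradicting fast-path progressiveness. The main obstacle in the plan is precisely the validity of $\tilde E$: the reordering must simultaneously preserve $T_i$'s behavior up to $e_i$ (so $b$ is still in $\tau_i$ with the correct mode) and each intermediate transaction's behavior (so $\tau_i$ is not prematurely invalidated). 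This is where the induction pays off, since it supplies exactly the non-contention facts needed to transport the tracking-set argument of Lemma~\ref{lm:pgtwo} from two transactions to the multi-transaction setting.
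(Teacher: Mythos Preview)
Your approach is substantially more complicated than the paper's, and as written it has real gaps. The paper's proof is a two-line commuting argument: since $T_i$ is fast-path, Lemma~\ref{lm:pgtwo} applied to the adjacent pair $(E_{i-1},E_i)$ or $(E_i,E_{i+1})$ shows they do not contend, hence these fragments commute; iterating, one bubbles $E_i$ adjacent to any $E_j$ and applies Lemma~\ref{lm:pgtwo} once more. No induction on $m$, no reconstruction of the tracking-set argument.

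Your argument has three issues. First, the justification ``valid because $E_i'$ is a prefix of the step-contention-free $E_i$'' only establishes that $E\cdot E_1\cdots E_{i-1}\cdot E_i'$ is an execution; it says nothing about why $E_{i+1},\ldots,E_{m-1}$ proceed identically after the truncated $E_i'$. You then apply the inductive hypothesis to this modified execution to derive the very non-contention facts you need to justify its validity---this is circular. The fix is to apply the inductive hypothesis to the \emph{original} prefix $E\cdot E_1\cdots E_{m-1}$ (which is trivially an execution), obtain that the fast-path $T_i$ contends with no $T_k$ for $k\le m-1$, and only then argue that dropping $e_i\cdot E_i''$ leaves the later fragments unchanged. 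Second, ``without loss of generality let $j=m$'' is not justified: the roles of $i$ and $j$ are asymmetric since $T_i$ is the one assumed fast-path. When $i=m$ and $T_j$ is slow-path you need a different construction (your delayed-$e_i$ trick relies on the fast-path transaction occurring \emph{before} the other), and you do not supply one. Third, the claim that $b\in\mathbb{M}$ is wrong when $T_m$ is slow-path (Definition~\ref{def:metadata}(1) constrains only fast-path accesses) and in any case a nontrivial base-object access to $b\in\mathbb{D}_X$ need not correspond to a t-write of $X$; fortunately this claim is never used and can simply be deleted.
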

\begin{proof}
Let $T_i$ be a fast-path transaction.
By Lemma~\ref{lm:pgtwo}, in $E\cdot E_1 \cdots E_{i} \cdots E_m$, 
$T_i$ does not contend with $T_{i-1}$ (if $i>1$) or
$T_{i+1}$ (if $i<m$) on any base object
and, thus, $E_i$ commutes with $E_{i-1}$ and $E_{i+1}$.  
Thus, 
$E\cdot E_1 \cdots E_{i-2}\cdot E_{i}\cdot E_{i-1}\cdot E_{i+1}\cdots E_m$  (if $i>1$) and 
$E\cdot E_1 \cdots E_{i-1}\cdot E_{i+1}\cdot E_{i}\cdot E_{i+2}\cdots E_m$  (if $i<m$) are executions of
$\M$.
By iteratively applying Lemma~\ref{lm:pgtwo}, we derive that $T_i$
does not contend with any $T_j$, $j\neq i$.
\end{proof}
%
%
%
%
We say that execution fragments $E$ and $E'$ are \emph{similar} if
they export equivalent histories, \emph{i.e.}, no process can see the
difference between them by looking at the invocations and responses of
t-operations.   
We now use Corollary~\ref{lm:pgthree} to show that t-operations
only accessing data base objects cannot detect contention with
non-conflicting transactions.
\begin{lemma}
\label{lm:finallm}
Let $E$ be any t-complete execution of 
a progressive  HyTM implementation 
$\M$ that provides OF TM-liveness.
For any $m\in \mathbb{N}$, consider a set of $m$ executions of $\M$ of the form
$E\cdot E_{i} \cdot \gamma_i \cdot \rho_i$ where
$E_{i}$ is the t-complete step contention-free execution fragment of
a transaction $T_{m+i}$,
$\gamma_i$ is a complete step contention-free execution fragment of
a \emph{fast-path} transaction $T_i$ such that
$\Dset(T_i)\cap \Dset(T_{m+i})=\emptyset$ in $E\cdot E_{i} \cdot \gamma_i$, and
$\rho_i$ is the execution fragment of a t-operation by $T_i$ 
that does not contain accesses to any metadata base object.
If, for all $i,j\in \{1,\ldots , m\}$, $i\neq j$,
$\Dset(T_i)\cap \Dset(T_{m+j})=\emptyset$,
$\Dset(T_i)\cap \Dset(T_{j})=\emptyset$ and
$\Dset(T_{m+i})\cap \Dset(T_{m+j})=\emptyset$, then
there exists a t-complete step 
contention-free execution fragment $E'$ that is similar to $E_{1}\cdots E_{m}$ such that 
for all $i\in \{1,\ldots, m\}$, $E\cdot E' \cdot \gamma_i\cdot  \rho_i$ is 
an execution of $\M$.
\end{lemma}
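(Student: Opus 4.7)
The approach is to construct $E'$ as a sequential composition of step-contention-free executions of the transactions $T_{m+1}, \ldots, T_{2m}$, and then to show that for each $i$ the fast-path continuation $\gamma_i \cdot \rho_i$ can be appended after $E \cdot E'$ with its original events preserved. I would build $E'$ by induction on $k = 0, 1, \ldots, m$: assuming $E \cdot F_1 \cdots F_k$ is an execution of $\M$ in which each $F_j$ is a step-contention-free fragment of $T_{m+j}$ whose history is equivalent to that of $E_j$, let $T_{m+k+1}$ run solo from the resulting configuration. By OF TM-liveness every t-operation of $T_{m+k+1}$ returns a matching response, and since $\Dset(T_{m+k+1})$ is pairwise disjoint from $\Dset(T_{m+1}), \ldots, \Dset(T_{m+k})$, $T_{m+k+1}$ does not conflict with any earlier transaction, so by progressiveness it must commit, yielding a fragment $F_{k+1}$. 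Opacity combined with data-set disjointness forces each t-read of $T_{m+k+1}$ in $F_{k+1}$ to return the same value as in $E_{k+1}$, so the histories of $F_{k+1}$ and $E_{k+1}$ are equivalent; setting $E' = F_1 \cdots F_m$ then produces a t-complete step-contention-free execution similar to $E_1 \cdots E_m$.

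\textbf{Appending $\gamma_i$.} I would show inductively that $E \cdot F_1 \cdots F_k \cdot \gamma_i$ is an execution of $\M$ for $k = 0, \ldots, m$. For the base case $k=0$, Lemma~\ref{lm:pgtwo} applied to $E \cdot E_i \cdot \gamma_i$ (where $T_i$ is fast-path and does not conflict with $T_{m+i}$) implies that $T_i$ and $T_{m+i}$ do not contend on any base object in $E_i \cdot \gamma_i$, so every base object accessed by $T_i$ in $\gamma_i$ holds the same value after $E$ as after $E \cdot E_i$; running $T_i$ solo after $E$ therefore reproduces precisely the events of $\gamma_i$. For the inductive step, I would extend $E \cdot F_1 \cdots F_{k+1}$ by a solo run of $T_i$---which completes by OF TM-liveness and progressiveness, producing some fragment $\tilde\gamma_i$---and apply Corollary~\ref{lm:pgthree} to the resulting execution to conclude that $T_i$ does not contend with any $T_{m+j}$ for $j \leq k+1$. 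Consequently every base object accessed by $T_i$ retains the value it has after $E$, and the algorithm of $T_i$ reproduces the events of $\gamma_i$ verbatim, giving $\tilde\gamma_i = \gamma_i$.

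\textbf{Appending $\rho_i$ and main obstacle.} Since $\rho_i$ accesses only base objects in $\bigcup_{X \in \Dset(T_i)} \mathbb{D}_X$ (by the no-metadata-access hypothesis together with Definition~\ref{def:metadata}), and data-set disjointness ensures that no $T_{m+j}$ writes any base object in this set, the values of these base objects after $E \cdot E' \cdot \gamma_i$ coincide with those after $E \cdot E_i \cdot \gamma_i$; hence $\rho_i$ extends the execution identically, completing the argument. The principal technical difficulty is the apparent circularity between showing that $E \cdot F_1 \cdots F_k \cdot \gamma_i$ is an execution and invoking Lemma~\ref{lm:pgtwo}/Corollary~\ref{lm:pgthree}, which themselves require such an execution to exist; I would break this cycle by first running $T_i$ solo to obtain a candidate $\tilde\gamma_i$, whose existence and commit status are guaranteed by OF TM-liveness and progressiveness alone, and only then applying the non-contention results to identify $\tilde\gamma_i$ with $\gamma_i$.
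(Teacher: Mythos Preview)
Your approach is essentially the same as the paper's, and the overall structure is sound. A few points of comparison and one minor issue are worth noting.

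The paper's proof organizes the reordering slightly differently: rather than keeping $E'$ at the front and arguing that $\gamma_i$ can be appended, it first commutes $\gamma_i$ to the front to obtain $E \cdot \gamma_i \cdot E'_1 \cdots E'_m$ (via Corollary~\ref{lm:pgthree}), then inserts $\rho_i$ after $E'_i$, using Definition~\ref{def:metadata}(2) to argue that the data base objects in $\bigcup_{X \in \Dset(T_i)} \mathbb{D}_X$ have the same state after $E \cdot \gamma_i \cdot E'_1 \cdots E'_i$ as after $E \cdot \gamma_i \cdot E_i$, and finally commutes back. Your approach of keeping $E'$ in place and running $T_i$ solo to get a candidate $\tilde\gamma_i$, then identifying $\tilde\gamma_i = \gamma_i$ via non-contention, avoids the back-and-forth commuting and handles the circularity you flagged more explicitly; the paper's proof implicitly relies on the same mechanism but states it more tersely.

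One caveat: you invoke opacity to justify that the histories of $F_{k+1}$ and $E_{k+1}$ are equivalent, but opacity is not among the hypotheses of the lemma as stated (it only assumes progressiveness and OF TM-liveness). The paper's proof simply asserts the existence of a similar $E'$ ``since $\M$ is progressive and provides OF TM-liveness'' without further argument, so this is a point both proofs gloss over. In the actual application within Theorem~\ref{linearlowerbound}, the $T_{m+i}$ are slow-path transactions $T_{s_i}$ whose data sets are pairwise disjoint, and the ambient assumption there \emph{is} opacity, so your justification is appropriate in context; but strictly speaking, for the lemma as stated you would want either to add opacity (or at least strict serializability) to the hypotheses or to argue similarity directly from data-set disjointness together with Definition~\ref{def:metadata}(2).
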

\begin{proof}
Observe that any two transactions in the execution fragment
$E_{1}\cdots E_{m}$ access mutually disjoint data sets.
Since $\M$ is progressive and provides OF TM-liveness, there exists a t-sequential execution fragment $E'=E'_{1}\cdots E'_{m}$
such that, for all $i\in \{1,\ldots , m\}$, the execution fragments $E_{i}$ and $E'_{i}$ are similar and
$E\cdot E'$ is an execution of $\M$.
Corollary~\ref{lm:pgthree} implies that, for all for all $i\in \{1,\ldots , m\}$,
$\M$ has an execution of the form $E\cdot E'_{1}\cdots E'_{i}\cdots  E'_{m} \cdot \gamma_i$.
More specifically, 
$\M$ has an execution of the form
$E\cdot \gamma_i \cdot E'_{1} \cdots E'_{i}\cdots  E'_{m}$.
Recall that the execution fragment $\rho_i$ of fast-path transaction $T_i$ that extends $\gamma_i$ contains accesses only to
base objects in $\bigcup\limits_{X\in DSet(T_i)} \mathbb{D}_X$.
Moreover, for all $i,j\in \{1,\ldots , m\}$; $i\neq j$,
$\Dset(T_i)\cap \Dset(T_{m+j})=\emptyset$ and $\Dset(T_{m+i})\cap \Dset(T_{m+j})=\emptyset$.

It follows that
$\M$ has an execution of the form
$E\cdot \gamma_i \cdot E'_{1} \cdots E'_{i}\cdot \rho_i \cdot E'_{i+1} \cdots  E'_{m}$.
and the states of each of the base objects $\bigcup\limits_{X\in DSet(T_i)} \mathbb{D}_X$ accessed by $T_i$
in the configuration after 
$E\cdot \gamma_i \cdot E'_{1} \cdots E'_{i}$
and $E\cdot \gamma_i \cdot E_{i}$ are the same.
But $E\cdot \gamma_i \cdot E_{i}\cdot \rho_i$ is an execution of $\M$.
Thus, for all $i\in \{1,\ldots , m\}$, $\M$ has an execution of the form
$E\cdot E'\cdot \gamma_i \cdot \rho_i$.
\end{proof}
Finally, we are now ready to derive our lower bound.
\begin{theorem}
\label{linearlowerbound}
Let $\M$ be any progressive, opaque HyTM implementation that provides OF TM-liveness.
For every $m\in\Nat$, there
exists an execution $E$ in which some fast-path read-only
transaction $T_k\in \ms{txns}(E)$ satisfies either (1) $\Dset(T_k) \leq m$ and $T_k$ incurs a capacity abort in $E$ or 
(2) $\Dset(T_k) = m$ and $T_k$ accesses $\Omega(m)$ distinct metadata base objects in $E$.
\end{theorem}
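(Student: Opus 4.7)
The plan is to proceed by strong induction on $m$, constructing at each step $m$ a t-complete execution $E_m$ together with a sufficiently large family $\mathcal{S}_m$ of fast-path read-only transactions such that each $T \in \mathcal{S}_m$, when run step contention-free immediately after $E_m$, either (a) incurs a capacity abort with $|\Dset(T)| \leq m$, or (b) reads $m$ distinct t-objects and accesses at least $m$ distinct metadata base objects. Picking any $T \in \mathcal{S}_m$ then delivers the conclusion of the theorem. The base case $m=1$ is essentially immediate: by OF TM-liveness and opacity, a fast-path transaction reading a fresh t-object after the empty execution must either capacity-abort or access some base object, and the family $\mathcal{S}_1$ is taken to be an exponentially large (in $m$) collection of such transactions on pairwise disjoint data sets.

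For the inductive step, given $E_m$ and $\mathcal{S}_m$, I would form $E_{m+1}$ by appending, after $E_m$, the t-complete step contention-free execution of a fresh slow-path updating transaction $U_{m+1}$ writing a new value to a t-object $X_{m+1}$ disjoint from all t-objects used so far and committing; such an extension exists by progressiveness and OF TM-liveness. For each $T \in \mathcal{S}_m$, let $T'$ be the one-t-read extension of $T$ that additionally reads $X_{m+1}$ at the end. By opacity, $T'$ run step contention-free after $E_{m+1}$ must return the new value of $X_{m+1}$ in its $(m+1)$-th t-read.

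The crux is to show that for at least half of the transactions $T\in\mathcal{S}_m$ (put into $\mathcal{S}_{m+1}$), the extra t-read of $X_{m+1}$ by $T'$ must access a metadata base object not already touched during $T$'s first $m$ t-reads. Suppose for contradiction some $T'$ performs its $(m+1)$-th t-read without accessing any new metadata. I would then invoke Lemma~\ref{lm:finallm} (with $\rho_i$ being precisely the extra t-read of $X_{m+1}$) to insert a non-conflicting fast-path updating transaction $W$ that writes a different new value to $X_{m+1}$ between $E_m$ and the execution of $T'$, and argue that $T'$ cannot distinguish this execution from one in which $U_{m+1}$'s write to $X_{m+1}$ is the last one committed. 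By opacity, $T'$ would have to simultaneously return two incompatible values, producing the contradiction.

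The main obstacle is twofold. First, the combinatorial pigeonhole: I must ensure the metadata object newly accessed for the $(m+1)$-th t-read is distinct from the $m$ metadata objects already committed to by the inductive hypothesis, and that this happens for at least half of $\mathcal{S}_m$. Starting with $|\mathcal{S}_1|\ge 2^{m}$ and halving per step still leaves $|\mathcal{S}_m|\ge 1$. The subtler and more novel difficulty is handling HyTM's built-in tracking-set conflict detection: unlike in pure-STM lower bounds, I cannot simply make a concurrent writer invisible by removing its events, because a cache-line collision on a data base object would force $T'$ into a tracking-set abort that opacity alone would not prohibit. Here Lemma~\ref{lm:pgtwo} and Corollary~\ref{lm:pgthree}, together with condition~(2) of Definition~\ref{def:metadata} on disjoint data sets, are essential: they guarantee that $W$ and $T'$ touch disjoint data base objects and (by the contradiction hypothesis) disjoint metadata base objects, so they commute and the required indistinguishable execution indeed exists.
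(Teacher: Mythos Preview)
Your overall induction scaffold (halving a large family $\mathcal{S}_m$ of disjoint fast-path readers) matches the paper, but the inductive step has a genuine gap in the contradiction argument, and it is not the tracking-set issue you flag.

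The problem is your choice to make $U_{m+1}$ a \emph{t-complete} slow-path transaction that precedes $T'$ in real time. Once $U_{m+1}$ has committed before $T'$ starts, there is nothing preventing $T'$'s $(m+1)$-th t-read from touching only data objects in $\mathbb{D}_{X_{m+1}}$ and correctly returning the new value; the execution already appears t-sequential to $T'$, so condition~(3) of Definition~\ref{def:metadata} is satisfied and no contradiction arises. Your proposed fix---inserting a fast-path writer $W$ on $X_{m+1}$---does not help: $W$ and $T'$ \emph{conflict} (both access $X_{m+1}$), so Lemma~\ref{lm:pgtwo}, Corollary~\ref{lm:pgthree}, and Lemma~\ref{lm:finallm} (which all require disjoint data sets) give you nothing. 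In particular, $W$'s write to a base object in $\mathbb{D}_{X_{m+1}}$ would legitimately invalidate $T'$'s tracking set, and you have no indistinguishability.

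The paper's step is structurally different in two ways. First, it pairs the readers: for each pair $(T_{f_{2i+1}},T_{f_{2i+2}})$ it uses a single slow-path writer $T_{s_i}$ that writes to \emph{two} fresh t-objects $X_{2i+1}$ and $X_{2i+2}$, one per reader (so readers keep pairwise disjoint data sets, and the halving is exactly the ``one survivor per pair''). Second, and crucially, $T_{s_i}$ is \emph{not} run to completion: one takes the longest prefix $E'_{s_i}$ after which neither reader can yet see the new value, and the single enabled event $e_i$ flips one of them. If that reader's $(m+1)$-th t-read accesses no new metadata, one lets it commit (seeing the new value), then runs the other reader, which---absent a new metadata access---still sees the old value of its own t-object. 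Now opacity forces $T_{s_i}$ to be committed (first reader saw $nv$), to precede the second reader in serialization, yet the second reader saw $v$; together with the real-time order between the two readers this is a cycle. The incomplete status of $T_{s_i}$ and the two-object write are precisely what manufactures the opacity violation; a completed single-object writer cannot.
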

Here is a high-level overview of the proof technique. 
Let $\kappa$ be the smallest integer such that some fast-path transaction
running step contention-free after a t-quiescent configuration performs $\kappa$ t-reads and incurs a capacity abort. 

We prove that, for all $m \leq \kappa-1$,
there exists a t-complete execution $E_{m}$ and a set $S_m$ with $|S_m|=2^{\kappa-m}$ of
read-only fast-path transactions that access mutually disjoint data sets such that each
transaction in $S_m$
that runs step contention-free from $E_{_m}$ and 
performs t-reads of $m$ distinct t-objects accesses at least one distinct metadata base object
within the execution of each t-read operation. 

We proceed by induction. Assume that the induction statement holds for all $m <kappa -1$.
We prove that a set $S_{m+1}$; $|S_{m+1}| = 2^{\kappa-(m+1)}$ of fast-path transactions, each of which run 
step contention-free after the same t-complete execution $E_{m+1}$, perform
$m+1$ t-reads of distinct t-objects so that at least one distinct metadata base object is accessed within the execution
of each t-read operation.
In our construction, we pick any two new transactions from the set $S_m$ and show that one of them
running step contention-free from a t-complete execution that extends $E_m$ performs
$m+1$ t-reads of distinct t-objects so that at least one distinct metadata base object is accessed within the execution
of each t-read operation.
In this way, the set of transactions is reduced by half in each step of the induction 
until one transaction remains which must have accessed a distinct metadata base object in every one of its $m+1$ t-reads.

Intuitively, since all the transactions that we use in our construction access mutually disjoint data sets, 
we can apply Lemma~\ref{lm:pgtwo} to construct a t-complete execution $E_{m+1}$ such that
each of the fast-path transactions in $S_{m+1}$ when running step contention-free after $E_{m+1}$ perform
$m+1$ t-reads so that at least one distinct metadata base object is accessed within the execution of each t-read operation.

We now present the formal proof:
\begin{proof}
In the constructions which follow, every fast-path transaction executes at most $m+1$ t-reads. 
Let $\kappa$ be the smallest integer such that some fast-path transaction
running step contention-free after a t-quiescent configuration performs $\kappa$ t-reads and incurs a capacity abort. 
We proceed by induction.

\vspace{1mm}\noindent\textbf{Induction statement.}
We prove that, for all $m \leq \kappa-1$,
there exists a t-complete execution $E_{m}$ and a set $S_m$ with $|S_m|=2^{\kappa-m}$ of
read-only fast-path transactions that access mutually disjoint data sets such that each
transaction $T_{f_i}\in S_m$
that runs step contention-free from $E_{_m}$ and 
performs t-reads of $m$ distinct t-objects accesses at least one distinct metadata base object
within the execution of each t-read operation. 
Let $E_{f_i}$ be the step contention-free execution of $T_{f_i}$ after $E_{m}$ and 
let $\Dset(T_{f_i}) = \{X_{i,1}, \dots, X_{i,m}\}$. 

\vspace{1mm}\noindent\textbf{The induction.}
Assume that the induction statement holds for all $m \leq \kappa-1$.
The statement is trivially true for the base case $m=0$ for every $\kappa \in \mathbb{N}$.

We will prove that a set
$S_{m+1}$; $|S_{m+1}| = 2^{\kappa-(m+1)}$ of fast-path transactions, each of which run 
step contention-free from the same t-quiescent configuration $E_{m+1}$, perform
$m+1$ t-reads of distinct t-objects so that at least one distinct metadata base object is accessed within the execution
of each t-read operation.

The construction proceeds in \emph{phases}: there are exactly $\frac{|S_m|}{2}$ phases.
In each phase, we pick any two new transactions from the set $S_m$ and show that one of them
running step contention-free after a t-complete execution that extends $E_m$ performs
$m+1$ t-reads of distinct t-objects so that at least one distinct metadata base object is accessed within the execution
of each t-read operation.

Throughout this proof, we will assume that any two transactions (and resp. execution fragments) with distinct subscripts
represent distinct identifiers.

For all $i\in \{0,\ldots , \frac{|S_m|}{2} -1\}$, 
let $X_{2i+1},X_{2i+2}\not\in \displaystyle\bigcup_{i=0}^{|S_m|-1}\{X_{i,1},\ldots, X_{i,m}\}$ be distinct t-objects and 
let $v$ be the value of $X_{2i+1}$ and $X_{2i+2}$ after $E_m$. 
Let $T_{s_i}$ denote a slow-path transaction which writes $nv \neq v$ to $X_{2i+1}$ and $X_{2i+2}$. 
Let $E_{s_i}$ be the t-complete step contention-free execution fragment of $T_{s_i}$ running immediately after $E_m$.

Let $E'_{s_i}$ be the longest prefix of the execution $E_{s_i}$ such that 
$E_m\cdot E'_{s_i}$ can be extended neither with the 
complete step contention-free execution fragment of transaction 
$T_{f_{2i+1}}$ that performs its $m$ t-reads of $X_{2i+1,1},\ldots , X_{2i+1,m}$ and then 
performs $\Read_{f_{2i+1}}(X_{2i+1})$ and
returns $nv$, nor with the complete step contention-free execution fragment of some transaction $T_{f_{2i+2}}$ that performs 
t-reads of $X_{{2i+2}_{1}},\ldots , X_{2i+2,m}$ and then performs $\Read_{f_{2i+2}}(X_{2i+2})$ and returns $nv$.
Progressiveness and OF TM-liveness of $\M$ stipulates that such an execution exists.

Let $e_{i}$ be the enabled event of $T_{s_{i}}$ in the configuration after $E_m\cdot E'_{s_{i}}$.
By construction, the execution $E_m\cdot E'_{s_{i}}$ can be
extended with at least one of the complete step contention-free executions of transaction
$T_{f_{2i+1}}$ performing $(m+1)$ t-reads of $X_{{2i+1,1}},\ldots , X_{{2i+1,m}},X_{2i+1}$ such that
$\Read_{f_{2i+1}}(X_{2i+1})\rightarrow nv$ or
transaction $T_{f_{2i+2}}$ performing 
t-reads of $X_{{2i+2,1}},\ldots , X_{{2i+2,m}},  X_{2i+2}$ such that $\Read_{f_{2i+2}}(X_{2i+2})\rightarrow nv$.
Without loss of generality, suppose that $T_{f_{2i+1}}$ reads the value of $X_{2i+1}$ to be $nv$ after 
$E_m\cdot E'_{0_{i}} \cdot e_{i}$.

For any $i\in \{0,\ldots , \frac{|S_m|}{2} -1\}$, we will denote by $\alpha_i$ the execution fragment which we will
construct in phase $i$. 
For any $i\in \{0,\ldots , \frac{|S_m|}{2} -1\}$, we prove that
$\M$ has an execution of the form $E_m\cdot \alpha_i$ in which
$T_{f_{2i+1}}$ (or $T_{f_{2i+2}}$) running step contention-free after
a t-complete execution that extends $E_m$ performs $m+1$ t-reads of distinct t-objects so that at least one distinct metadata
base object is accessed within the execution of each first $m$ t-read operations and $T_{f_{2i+1}}$ (or $T_{f_{2i+2}}$) 
is poised to apply an event
after $E_m\cdot \alpha_i$
that accesses a distinct metadata base object during the $(m+1)^{th}$ t-read. 
Furthermore, we will show that $E_m\cdot \alpha_i$ appears t-sequential to $T_{f_{2i+1}}$ (or $T_{f_{2i+2}}$). 

\vspace{1mm}\noindent\textit{(Construction of phase $i$)}

Let $E_{f_{2i+1}}$ (and resp. $E_{f_{2i+2}}$) be the complete step contention-free execution of the t-reads of 
$X_{{2i+1},1},\dots,X_{{2i+1},m}$
(and resp. $X_{{2i+2},1},\ldots , X_{{2i+2},m}$)
running after $E_{m}$ by $T_{f_{2i+1}}$ (and resp. $T_{f_{2i+2}}$). 
By the inductive hypothesis, transaction $T_{f_{2i+1}}$ (and resp. $T_{f_{2i+2}}$)
accesses $m$ distinct metadata objects in the execution $E_m\cdot E_{f_{2i+1}}$ (and resp. $E_m\cdot E_{f_{2i+2}}$).  
Recall that transaction $T_{f_{2i+1}}$ does not conflict with transaction $T_{s_i}$. 
Thus, by Corollary~\ref{lm:pgthree},
$\M$ has an execution of the form
$E_m\cdot E'_{s_{i}} \cdot e_{i} \cdot E_{f_{2i+1}}$ 
(and resp. $E_m\cdot E'_{s_{i}} \cdot e_{i} \cdot E_{f_{2i+2}}$). 

Let $E_{rf_{2i+1}}$ be the complete step contention-free execution fragment of $\Read_{f_{2i+1}}(X_{2i+1})$ that extends
$E_{2i+1}=E_m\cdot E'_{s_{i}} \cdot e_{i} \cdot E_{f_{2i+1}} $. 
By OF TM-liveness, $\Read_{f_{2i+1}}(X_{2i+1})$ must return a matching 
response in $E_{2i+1}\cdot E_{rf_{2i+1}}$.
We now consider two cases.

\vspace{1mm}\noindent\textit{\textbf{Case \RNum{1}}: Suppose $E_{rf_{2i+1}}$ accesses at least one metadata base object 
$b$ not previously accessed by $T_{f_{2i+1}}$.}

Let $E'_{rf_{2i+1}}$ be the longest prefix of $E_{rf_{2i+1}}$ which does not apply 
any primitives to any metadata base object $b$ not previously accessed by $T_{f_{2i+1}}$. 
The execution $E_m\cdot E'_{s_{i}} \cdot e_{i} \cdot E_{f_{2i+1}} \cdot E'_{rf_{2i+1}}$ 
appears t-sequential to $T_{f_{2i+1}}$ because $E_{f_{2i+1}}$ does not contend with $T_{s_{i}}$ on any base object
and any common base object accessed in the execution fragments $E'_{rx_{2i+1}}$ and $E_{s_{i}}$
by $T_{f_{2i+1}}$ and $T_{s_{i}}$ respectively must be data objects contained in $\mathbb{D}$. 
Thus, we have that
$|\Dset(T_{f_{2i+1}})| = m+1$ and that 
$T_{f_{2i+1}}$ accesses $m$ distinct metadata base objects within each of its first $m$ t-read operations
and is poised to access a distinct metadata base object during the execution of the $(m+1)^{th}$ t-read.
In this 
case, let $\alpha_i = E_m\cdot E'_{s_{i}} \cdot e_{i} \cdot E_{f_{2i+1}} \cdot E'_{rf_{2i+1}}$.

\vspace{1.5mm}\noindent\textit{\textbf{Case \RNum{2}}: Suppose 
$E_{rf_{2i+1}}$ does not access any metadata base object not previously accessed by $T_{f_{2i+1}}$.}

In this case, we will first prove the following:
\begin{claim}
\label{cl:iterationone}
$\M$ has an execution of the form 
$E_{2i+2} = E_m\cdot E'_{s_{i}}\cdot e_{i}\cdot {\bar E}_{f_{2i+1}} \cdot E_{f_{2i+2}}$ 
where ${\bar E}_{f_{2i+1}}$ is the t-complete step contention-free execution of $T_{f_{2i+1}}$ in which 
$\Read_{f_{2i+1}}(X_{2i+1})\rightarrow nv$,
$T_{f_{2i+1}}$ invokes $\TryC_{f_{2i+1}}$ and returns a matching response.
\end{claim}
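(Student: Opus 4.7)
The plan is to construct $E_{2i+2}$ by two successive step contention-free extensions of the already-established execution $E_m \cdot E'_{s_i} \cdot e_i \cdot E_{f_{2i+1}} \cdot E_{rf_{2i+1}}$ (the latter existing by the Case~\RNum{2} setup), each extension justified by OF TM-liveness together with a disjointness-of-data-sets argument.

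First, I would append $T_{f_{2i+1}}$'s invocation of $\TryC_{f_{2i+1}}$. Since this continued run remains step contention-free for $T_{f_{2i+1}}$, OF TM-liveness forces $\TryC_{f_{2i+1}}$ to return a matching response (either $C_{f_{2i+1}}$ or $A_{f_{2i+1}}$), yielding the t-complete step contention-free fragment ${\bar E}_{f_{2i+1}}$ described in the claim.

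Second, I would append the step contention-free execution $E_{f_{2i+2}}$ of $T_{f_{2i+2}}$'s $m$ t-reads of $X_{2i+2,1},\ldots,X_{2i+2,m}$. The key observation is that these t-objects are pairwise disjoint from the data sets of all transactions in $E_m$, from $\Wset(T_{s_i}) = \{X_{2i+1},X_{2i+2}\}$, and from $\Dset(T_{f_{2i+1}}) = \{X_{2i+1,1},\ldots,X_{2i+1,m},X_{2i+1}\}$ (the last by the inductive hypothesis that the transactions in $S_m$ have pairwise disjoint data sets). Hence $T_{f_{2i+2}}$ does not conflict with any concurrent transaction; being step contention-free, it cannot suffer any tracking-set abort; and since $m \le \kappa-1 < \kappa$, no capacity abort can arise either. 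Progressiveness together with OF TM-liveness then guarantees that each of $T_{f_{2i+2}}$'s $m$ t-reads returns a matching non-abort response, yielding $E_{f_{2i+2}}$.

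The main obstacle I anticipate is the temptation to argue that $T_{f_{2i+1}}$'s $\TryC_{f_{2i+1}}$ actually \emph{commits} despite the pending slow-path transaction $T_{s_i}$ being a conflicting concurrent transaction on $X_{2i+1}$, which would not be justified by progressiveness alone. Fortunately, the claim only asks for ${\bar E}_{f_{2i+1}}$ to end with a matching response to $\TryC_{f_{2i+1}}$, so OF TM-liveness alone suffices here; the finer case split on whether the response is $C_{f_{2i+1}}$ or $A_{f_{2i+1}}$ is deferred to the subsequent part of the induction where the witnessing fragment $\alpha_i$ for phase $i$ is completed.
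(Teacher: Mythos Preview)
Your approach is essentially the same as the paper's: extend by $\TryC_{f_{2i+1}}$ via OF TM-liveness, then append $E_{f_{2i+2}}$ using disjoint data sets. One imprecision worth noting: the claim refers to the \emph{specific} fragment $E_{f_{2i+2}}$ already defined as $T_{f_{2i+2}}$'s step contention-free run after $E_m$ (this matters because the inductive hypothesis about $m$ distinct metadata accesses is stated for that particular fragment). Your argument ``progressiveness plus OF TM-liveness guarantees non-abort responses, yielding $E_{f_{2i+2}}$'' only produces \emph{some} step contention-free extension. To conclude it is the \emph{same} $E_{f_{2i+2}}$ you need indistinguishability from the run after $E_m$, which is exactly what Lemma~\ref{lm:pgtwo} (your disjoint-data-set observation, stated at the base-object level) delivers: since $T_{f_{2i+2}}$ does not contend on any base object with $T_{s_i}$ or $T_{f_{2i+1}}$, it sees the same configuration as after $E_m$. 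The paper makes this explicit by first observing that $E_m\cdot E'_{s_i}\cdot e_i\cdot E_{f_{2i+2}}$ is an execution (no contention with $T_{s_i}$), and then splicing in ${\bar E}_{f_{2i+1}}$ via Lemma~\ref{lm:pgtwo} (no contention with $T_{f_{2i+1}}$). Your final paragraph correctly identifies that only a matching response to $\TryC_{f_{2i+1}}$ is needed, not commit.
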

\begin{proof}
Since $E_{rf_{2i+1}}$ does not contain accesses to any distinct metadata base objects,
the execution $E_m\cdot E'_{s_{i}}\cdot e_{i} \cdot E_{f_{2i+1}}\cdot E_{rf_{2i+1}}$ appears t-sequential to $T_{f_{2i+1}}$.
By definition of the event $e_{i}$, $\Read_{f_{2i+1}}(X_{2i+1})$ must access the base object to which the event $e_i$ 
applies a nontrivial
primitive and return the response $nv$ in $E'_{s_{i}}\cdot e_{i}\cdot E_{f_{2i+1}}\cdot E_{rf_{2i+1}}$.
By OF TM-liveness, it follows that $E_m\cdot E'_{s_{i}}\cdot e_{i}\cdot {\bar E}_{f_{2i+1}}$ is an execution of $\M$.

Now recall that $E_m\cdot E'_{s_i} \cdot e_{i}\cdot E_{f_{2i+2}}$ is an execution of $\mathcal{M}$ because
transactions $T_{f_{2i+2}}$ and $T_{s_{i}}$ do not conflict in this execution and thus, cannot contend on any base object.
Finally, because $T_{f_{2i+1}}$ and $T_{f_{2i+2}}$ access disjoint data sets in 
$E_m\cdot E'_{s_{i}}\cdot e_{i}\cdot {\bar E}_{f_{2i+1}} \cdot E_{f_{2i+2}}$,
by Lemma~\ref{lm:pgtwo} again, 
we have that $E_m\cdot E'_{s_{i}} \cdot e_{i}\cdot \bar{E}_{f_{2i+1}} \cdot E_{f_{2i+2}}$ is an execution of $\mathcal{M}$.
\end{proof}
Let $E_{rf_{2i+2}}$ be the complete step contention-free execution fragment of $\Read_{f_{2i+2}}(X_{2i+2})$ after 
$E_m\cdot E'_{s_{i}}\cdot e_{i}\cdot {\bar E}_{f_{2i+1}} \cdot E_{f_{2i+2}}$. 
By the induction hypothesis and Claim~\ref{cl:iterationone}, 
transaction $T_{f_{2i+2}}$ must access $m$ distinct metadata base objects
in the execution $E_m\cdot E'_{s_{i}}\cdot e_{i}\cdot {\bar E}_{f_{2i+1}} \cdot E_{f_{2i+2}}$. 

If $E_{rf_{2i+2}}$ accesses some metadata base object, then
by the argument given in Case~\RNum{1} applied to transaction $T_{f_{2i+2}}$, we get that
$T_{f_{2i+2}}$ accesses $m$ distinct metadata base objects within each of the first $m$ t-read operations
and is poised to access a distinct metadata base object during the execution of the $(m+1)^{th}$ t-read.

Thus, suppose that $E_{rf_{2i+2}}$ does not access any metadata base object previously accessed by $T_{f_{2i+2}}$. 
We claim that this is impossible and proceed to derive a contradiction. 
In particular, $E_{rf_{2i+2}}$ does not contend with $T_{s_i}$ on any metadata base object.
Consequently, the execution $E_m\cdot  E'_{s_{i}} \cdot e_{i} \cdot {\bar E}_{f_{2i+1}} \cdot E_{f_{2i+2}}$ 
appears t-sequential to $T_{x_{2i+2}}$ since 
$E_{rx_{2i+2}}$ only contends with $T_{s_{i}}$ on base objects in $\mathbb{D}$. 
It follows that $E_{2i+2}\cdot E_{rf_{2i+2}}$ must also appear t-sequential to $T_{f_{2i+2}}$ and so 
$E_{rf_{2i+2}}$ cannot abort.
Recall that the base object, say $b$, to which $T_{s_{i}}$ applies a nontrivial primitive in the event $e_{i}$
is accessed by $T_{f_{2i+1}}$ in $E_m\cdot E'_{s_{i}}\cdot e_{i}\cdot {\bar E}_{f_{2i+1}} \cdot E_{f_{2i+2}}$; 
thus, $b \in \mathbb{D}_{X_{2i+1}}$. 
Since $X_{2i+1} \not\in \Dset(T_{f_{2i+2}})$,
$b$ cannot be accessed by $T_{f_{2i+2}}$. Thus, the execution 
$E_m\cdot E'_{s_{i}}\cdot e_{i}\cdot {\bar E}_{f_{2i+1}} \cdot E_{f_{2i+2}} \cdot E_{rf_{2i+2}}$ is indistinguishable to 
$T_{f_{2i+2}}$ from the execution ${\hat E}_i\cdot E'_{s_{i}} \cdot E_{f_{2i+2}}\cdot E_{rf_{2i+2}}$
in which $\Read_{f_{2i+2}}(X_{2i+2})$ must return the response $v$ (by construction of $E'_{s_i}$).

But we observe now that the execution 
$E_m\cdot E'_{s_{i}}\cdot e_{i}\cdot {\bar E}_{f_{2i+1}} \cdot E_{f_{2i+2}}\cdot E_{rf_{2i+2}}$ is not opaque. 
In any serialization corresponding to this execution, 
$T_{s_{i}}$ must be committed and must precede $T_{f_{2i+1}}$ because $T_{f_{2i+1}}$ read $nv$ from $X_{2i+1}$.
Also, transaction $T_{f_{2i+2}}$ must precede $T_{s_{i}}$ 
because $T_{f_{2i+2}}$ read $v$ from $X_{2i+2}$. 
However $T_{f_{2i+1}}$ must precede $T_{f_{2i+2}}$ to respect real-time ordering of transactions.
Clearly, there exists no such serialization---contradiction. 

Letting $E'_{rf_{2i+2}}$ be the longest prefix of $E_{rf_{2i+2}}$ which does not access a base object $b\in \mathbb{M}$ not previously accessed by
 $T_{f_{2i+2}}$, we can let 
 $\alpha_i=E'_{s_{i}}\cdot e_{i}\cdot {\bar E}_{f_{2i+1}} \cdot E_{f_{2i+2}}\cdot E'_{rf_{2i+2}}$ in this case.

Combining Cases~\RNum{1} and~\RNum{2}, the following claim holds.
\begin{claim}
\label{cl:final}
For each $i\in \{0,\ldots , \frac{|S_m|}{2} -1\}$, $\M$ has an execution of the form 
$E_m\cdot \alpha_i$ in which
\begin{enumerate}
\item[(1)]
some fast-path transaction 
$T_{i}\in \ms{txns}(\alpha_i)$ performs t-reads of $m+1$ distinct t-objects so that at least one distinct metadata base object
is accessed within the execution of each of the first $m$ t-reads, 
$T_i$ is poised to access a distinct metadata base object
after $E_m\cdot \alpha_i$ during the execution of the $(m+1)^{th}$ t-read and the execution appears t-sequential to $T_i$,
\item[(2)] 
the two fast-path transactions in the execution fragment $\alpha_i$ do not contend on the same base object.
\end{enumerate}
\end{claim}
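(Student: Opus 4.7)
The plan is to establish the claim by assembling the observations already made in Cases~\RNum{1} and~\RNum{2}, making explicit the choice of $\alpha_i$ and the fast-path transaction $T_i$ in each case, and then verifying the two conditions.

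First I would define $\alpha_i$ by case. In Case~\RNum{1}, I take $\alpha_i = E'_{s_i} \cdot e_i \cdot E_{f_{2i+1}} \cdot E'_{rf_{2i+1}}$ and $T_i = T_{f_{2i+1}}$, where $E'_{rf_{2i+1}}$ is chosen as the longest prefix of $E_{rf_{2i+1}}$ that does not access a metadata base object previously untouched by $T_{f_{2i+1}}$. In Case~\RNum{2}, I take $\alpha_i = E'_{s_i} \cdot e_i \cdot \bar{E}_{f_{2i+1}} \cdot E_{f_{2i+2}} \cdot E'_{rf_{2i+2}}$ and $T_i = T_{f_{2i+2}}$, with $E'_{rf_{2i+2}}$ defined analogously. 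The existence of each of these as an execution of $\M$ has already been justified in the two case analyses via Claim~\ref{cl:iterationone} together with Corollary~\ref{lm:pgthree}.

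For condition~(1), I would read off the required properties directly from each case. In both cases the chosen $T_i$ has data-set of size $m+1$; by the induction hypothesis it accesses $m$ distinct metadata objects, one during each of its first $m$ t-reads; by the maximality of $E'_{rf_{2i+1}}$ (resp.\ $E'_{rf_{2i+2}}$) it is poised to access a metadata object not previously accessed by $T_i$ as the very next event of its $(m{+}1)^{\text{th}}$ t-read. The t-sequentiality of $E_m\cdot \alpha_i$ to $T_i$ has already been argued in the two cases: in Case~\RNum{1} it follows because $T_{f_{2i+1}}$ and $T_{s_i}$ share no data base object (their data sets are disjoint apart from $X_{2i+1}$, whose base objects in $\mathbb{D}_{X_{2i+1}}$ were written by $e_i$ and then read by $T_{f_{2i+1}}$), while in Case~\RNum{2} the execution appears t-sequential to $T_{f_{2i+2}}$ because the nontrivial event $e_i$ only touches a base object in $\mathbb{D}_{X_{2i+1}}$, which $T_{f_{2i+2}}$ never accesses, and because $\bar{E}_{f_{2i+1}}$ does not conflict with $T_{f_{2i+2}}$.

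For condition~(2), the only nontrivial case is Case~\RNum{2}, since Case~\RNum{1} contains a single fast-path transaction. In Case~\RNum{2} the two fast-path transactions $T_{f_{2i+1}}$ and $T_{f_{2i+2}}$ have disjoint data sets and therefore do not conflict, and neither conflicts with the slow-path $T_{s_i}$ on any t-object in the region where each runs step-contention-free. Applying Corollary~\ref{lm:pgthree} (whose premise of pairwise non-conflicting step-contention-free extensions is met here, with at least one fast-path transaction in each pair) yields that $T_{f_{2i+1}}$ and $T_{f_{2i+2}}$ do not contend on any base object in $E_m\cdot \alpha_i$, as required. The main obstacle in writing this out cleanly is bookkeeping the indistinguishability chains for the t-sequentiality argument in Case~\RNum{2}, since $T_{f_{2i+2}}$ must be shown to witness an execution equivalent to running after a t-quiescent configuration despite the interleaved presence of both $T_{s_i}$ and $T_{f_{2i+1}}$; this is handled by combining Observation~\ref{ob:two} applied to $T_{f_{2i+1}}$ (no tracking-set abort, by Corollary~\ref{lm:pgthree}) with the data/metadata partitioning from Definition~\ref{def:metadata}.
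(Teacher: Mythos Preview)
Your proposal is correct and follows essentially the same approach as the paper: the paper does not give a separate proof of this claim at all, but simply states ``Combining Cases~\RNum{1} and~\RNum{2}, the following claim holds,'' with $\alpha_i$ defined exactly as you describe in each case. Your write-up fills in the bookkeeping the paper leaves implicit; the only minor point is that the paper handles the non-contention in Case~\RNum{2} directly via Lemma~\ref{lm:pgtwo} (inside the proof of Claim~\ref{cl:iterationone}) rather than the corollary, but this is an inessential difference.
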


\vspace{1mm}\noindent\textit{(Collecting the phases)}

We will now describe how we can construct the set $S_{m+1}$ of fast-path transactions
from these $\frac{|S_m|}{2}$ phases and force each of them to access $m+1$ distinct metadata base objects when
running step contention-free after the same t-complete execution.

For each $i\in \{0,\ldots , \frac{|S_m|}{2} -1\}$,
let $\beta_i$ be the subsequence of the execution $\alpha_i$ consisting of all the events of the fast-path transaction
that is poised to access a $(m+1)^{th}$ distinct metadata base object.
Henceforth, we denote by $T_i$ the fast-path transaction that participates in $\beta_i$.
Then, from Claim~\ref{cl:final}, it follows that, for each $i\in \{0,\ldots , \frac{|S_m|}{2} -1\}$, 
$\M$ has an execution of the form $E_m\cdot E'_{s_i}\cdot e_i \cdot \beta_i$ in which
the fast-path transaction $T_i$ performs t-reads of $m+1$ distinct t-objects 
so that at least one distinct metadata base object
is accessed within the execution of each of the first $m$ t-reads, 
$T_i$ is poised to access a distinct metadata base object
after $E_m\cdot E'_{s_i}\cdot e_i \cdot \beta_i$ during the execution of the $(m+1)^{th}$ t-read and the 
execution appears t-sequential to $T_i$. 

The following result is a corollary to the above claim that is obtained by applying the definition of ``appears t-sequential''.
Recall that $E'_{s_i}\cdot e_i$ is the t-incomplete execution of slow-path transaction $T_{s_i}$ that accesses
t-objects $X_{2i+1}$ and $X_{2i+2}$.
\begin{corollary}
\label{cr:pgone}
For all $i\in \{0,\ldots , \frac{|(S_m|}{2} -1\}$, $\M$ has an execution of the form
$E_m\cdot E_{{i}}\cdot \beta_i$ such that the configuration after $E_m\cdot E_i$ is t-quiescent,
$\ms{txns}(E_i)\subseteq \{T_{s_i}\}$
and $\Dset(T_{s_i})\subseteq \{X_{2i+1},X_{2i+2}\}$ in $E_i$.
\end{corollary}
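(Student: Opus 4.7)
\textbf{Proof plan for Corollary~\ref{cr:pgone}.}
The plan is to derive the corollary as a direct consequence of Claim~\ref{cl:final} together with the definition of ``appears t-sequential'' from Section~\ref{sec:p4c1s3}. Recall that Claim~\ref{cl:final} guarantees, for each $i$, an execution $E_m\cdot E'_{s_i}\cdot e_i\cdot \beta_i$ of $\M$ which appears t-sequential to the fast-path transaction $T_i\in\ms{txns}(\beta_i)$. Unwinding the definition yields an execution $\hat E$ of $\M$ satisfying: (a)~$\ms{txns}(\hat E)\subseteq \ms{txns}(E_m\cdot E'_{s_i}\cdot e_i\cdot \beta_i)\setminus\{T_i\}=\ms{txns}(E_m)\cup\{T_{s_i}\}$; (b)~the configuration after $\hat E$ is t-quiescent; (c)~every transaction of $E_m$ (each of which precedes $T_i$ in real-time order) is contained in $\hat E$ with its original per-transaction subsequence; (d)~$\Rset_{\hat E}(T)\subseteq \Rset(T)$ and $\Wset_{\hat E}(T)\subseteq \Wset(T)$ for every $T\in\ms{txns}(\hat E)$; and (e)~$\hat E\cdot\beta_i$ is an execution of $\M$.

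Next, I will argue that $\hat E$ can be taken to have $E_m$ as a literal prefix, so that we may write $\hat E = E_m\cdot E_i$ with $E_i$ containing only events of $T_{s_i}$. The key observation is that, by the inductive hypothesis, $E_m$ is itself a t-complete execution leaving the system in a t-quiescent configuration, and in the original execution $E_m\cdot E'_{s_i}\cdot e_i\cdot\beta_i$ every transaction of $E_m$ is t-complete before $T_{s_i}$ takes its first event. Thus running $E_m$ as originally and then replaying the (at most one transaction's worth of) $T_{s_i}$-events that appear in $\hat E$ is a valid execution of $\M$ that is indistinguishable from $\hat E$ to every participating transaction; this preserves property (e), so $E_m\cdot E_i\cdot\beta_i$ is an execution of $\M$ with $\ms{txns}(E_i)\subseteq\{T_{s_i}\}$ and a t-quiescent configuration after $E_m\cdot E_i$. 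The bound $\Dset(T_{s_i})\subseteq\{X_{2i+1},X_{2i+2}\}$ in $E_i$ then follows from clause (d) combined with the construction of $T_{s_i}$, which writes only to $X_{2i+1}$ and $X_{2i+2}$ and reads nothing in $E'_{s_i}\cdot e_i$.

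The main obstacle is making the rearrangement precise, since ``appears t-sequential'' only guarantees per-transaction subsequence preservation and a globally t-quiescent outcome, not a specific global ordering with $E_m$ as a prefix. I will handle this via a standard commutativity/indistinguishability argument: since $E_m$ and the $T_{s_i}$-events of $\hat E$ are temporally separated in the original execution and $E_m$ alone yields a t-quiescent configuration, the $T_{s_i}$-events can be pushed to a tail without changing any transaction's view or the final configuration on base objects accessed by $\beta_i$; the resulting execution is the desired $E_m\cdot E_i\cdot\beta_i$.
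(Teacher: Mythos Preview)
Your approach is the paper's approach: the paper offers no proof beyond the one sentence ``obtained by applying the definition of `appears t-sequential','' and you unpack exactly that. You are also right to flag that the definition by itself only yields an abstract witness $\hat E$ with matching per-transaction subsequences, not one that literally has $E_m$ as a prefix---this is a point the paper simply glosses over.

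One caution on your fix: ``replaying the $T_{s_i}$-events that appear in $\hat E$'' after $E_m$ is not literally well-posed, because slow-path events are rmw primitives whose \emph{responses} are determined by the memory state at the moment of application; the state $T_{s_i}$ sees interleaved inside $\hat E$ need not match the state after $E_m$, so the replayed sequence might not be a valid execution fragment of $\M$. What you actually want is to let $T_{s_i}$ run step-contention-free from the t-quiescent configuration after $E_m$ (progressiveness and OF TM-liveness give a t-complete, t-quiescent outcome with $\Dset(T_{s_i})\subseteq\{X_{2i+1},X_{2i+2}\}$), call the result $E_m\cdot E_i$, and then argue that $\beta_i=\gamma_i\cdot\rho_i$ can be appended: $\gamma_i$ and $T_{s_i}$ have disjoint data sets, hence do not contend (Corollary~\ref{lm:pgthree}), and $\rho_i$ touches only data base objects---which is exactly how the paper proceeds immediately after the corollary anyway. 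In short, the commutativity intuition is right; it just needs to be phrased as ``re-run $T_{s_i}$'' rather than ``replay its $\hat E$-events.''
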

We can represent the execution $\beta_i=\gamma_{i}\cdot \rho_{i}$ where
fast-path transaction $T_i$ performs complete t-reads of $m$ distinct t-objects in $\gamma_{i}$
and then performs an incomplete t-read of the $(m+1)^{th}$ t-object in $\rho_{i}$ in which $T_i$ only
accesses base objects in $\displaystyle\bigcup_{X\in DSet(T_i)}\{X\}$.
Recall that $T_i$ and $T_{s_i}$ do not contend on the same base object
in the execution $E_m\cdot E_i\cdot \gamma_i$.
Thus, for all $i\in \{0,\ldots , \frac{|S_m|}{2} -1\}$,
$\M$ has an execution of the form $E_m\cdot \gamma_i\cdot E_i \cdot \rho_i$.

Observe that the fast-path transaction $T_i\in \gamma_i$ does not access any t-object
that is accessed by any slow-path transaction in the execution fragment $E_0\cdots E_{\frac{|S_m|}{2} -1}$.
By Lemma~\ref{lm:finallm}, there exists a t-complete step contention-free execution fragment $E'$
that is similar to $E_0\cdots E_{\frac{|S_m|}{2} -1}$
such that
for all $i\in \{0,\ldots ,  \frac{|S_m|}{2} -1\}$, $\M$ has an execution of the form
$E_{m}\cdot E'\cdot  \gamma_i \cdot \rho_i$.
By our construction, the enabled event of each fast-path transaction $T_i\in \beta_i$ in this execution
is an access to a distinct metadata base object.

Let $S_{m+1}$ denote the set of all fast-path transactions that participate in the execution fragment
$\beta_0\cdots \beta_{ \frac{|(S_m|}{2} -1}$ and $E_{m+1}=E_m\cdot E'$.
Thus, $|S_{m+1}|$ fast-path transactions, each of which run 
step contention-free from the same t-quiescent configuration, perform
$m+1$ t-reads of distinct t-objects so that at least one distinct metadata base object is accessed within the execution
of each t-read operation. This completes the proof.
\end{proof}
%
\section{Instrumentation-optimal progressive HyTM}
\label{sec:p4c4s2}
\begin{algorithm}[!h]
\caption{Progressive opaque HyTM implementation that provides uninstrumented writes and invisible reads; code for process $p_i$
executing transaction $T_k$}
\label{alg:inswrite}
\begin{algorithmic}[1]
  	\begin{multicols}{2}
  	{
  	\footnotesize
	\Part{Shared objects}{
		\State $v_j \in \mathbb{D}$, for each t-object $X_j$ 
		\State ~~~~~allows reads, writes and cas
		\State $r_{j} \in \mathbb{M}$, for each t-object $X_j$
		\State ~~~~~allows reads, writes and cas
	}\EndPart	
	\Statex
	\Part{Local objects}{
		\State $\ms{Lset}(T_k) \subseteq \Wset(T_k)$, initially empty
		\State $\ms{Oset}(T_k) \subseteq \Wset(T_k)$, initially empty
	}\EndPart
	\Statex
	\textbf{Code for slow-path transactions}
	\Statex
	\Part{\Read$_k(X_j)$}\quad\Comment{slow-path}{
		\If{$X_j \not\in \Rset_k$}
		  \State $[\textit{ov}_j,k_j] := \Read(v_j)$ \label{line:hread2}
		  \State $\Rset(T_k) := \Rset(T_k)\cup\{X_j,[\textit{ov}_j,k_j]\}$ \label{line:hrset}
		  \If{$r_j\neq 0$} \label{line:habort0}
		    \Return $A_k$ \EndReturn
		  \EndIf
		  \If{$\exists X_j \in Rset(T_k)$:$(\textit{ov}_j,k_j)\neq \Read(v_j)$} \label{line:hvalid}
			\Return $A_k$ \EndReturn
		  \EndIf
%
		  \Return $\textit{ov}_j$ \EndReturn
		\Else
		    
		  \State $\textit{ov}_j :=\Rset(T_k).\lit{locate}(X_j)$
		  \Return $\textit{ov}_j$ \EndReturn
		\EndIf
   	 }\EndPart
	\Statex
	\Part{\Write$_k(X_j,v)$}\quad\Comment{slow-path}{
		
			\State $(\textit{ov}_j,k_j) := \Read(v_j)$
			\State $\textit{nv}_j := v$
			\State $\Wset(T_k) := \Wset(T_k)\cup\{X_j,[\textit{ov}_j,k_j]\}$
			\Return $\ok$ \EndReturn
		
   	}\EndPart
	\Statex
	
	\Part{\TryC$_k$()}\quad\Comment{slow-path}{
		\If{$\Wset(T_k)= \emptyset$}
			\Return $C_k$ \EndReturn \label{line:hreturn}
		\EndIf
		\State locked := $\lit{acquire}(\Wset(T_k))$\label{line:hacq} 
		\If{$\neg$ locked} \label{line:habort2} 
	 		\Return $A_k$ \EndReturn
	 	\EndIf
		\If{$\lit{isAbortable}()$} \label{line:habort3}
			\State $\lit{release}( \ms{Lset}(T_k))$ 
			\Return $A_k$ \EndReturn
		\EndIf
		\ForAll{$X_j \in \Wset(T_k)$}
	 		 \If{ $v_j.\lit{cas}([ov_j,k_j],[\textit{nv}_j,k])$} \label{line:hwrite}
			      \State $\ms{Oset}(T_k):=\ms{Oset}(T_k)\cup \{X_j\}$
			 \Else
			 
			      \State $\lit{undo}(\ms{Oset}(T_k))$
			 
			 \EndIf
			 
	 	\EndFor		
		\State $\lit{release}(\Wset(T_k))$   \label{line:hrel}		
   		\Return $C_k$ \EndReturn
   	 }\EndPart		
	 
 	\newpage
 	\Part{Function: $\lit{acquire}(Q$)}{
   		\ForAll{$X_j \in Q$}	
			\If{$r_j.\lit{cas}(0,1)$} \label{line:hacq1}
			  \State $\ms{Lset}(T_k) := \ms{Lset}(T_k)\cup \{X_j\}$
			\Else
			  \State $\lit{release}(\ms{Lset}(T_k))$
			  \Return $\false$ \EndReturn
			\EndIf
			
		\EndFor
		
		\Return $\true$ \EndReturn 
	}\EndPart		
	 \Statex
	 \Part{Function: $\lit{release}(Q)$}{
  		\ForAll{$X_j \in Q$}	
 			\State $r_j.\Write(0)$ \label{line:hrel1}
		\EndFor
		\Return $ok$ \EndReturn
	}\EndPart
	
	\Statex
	
	\Part{Function: $\lit{undo}(\ms{Oset}(T_k))$}{
		\ForAll{$X_j \in \ms{Oset}(T_k)$}
		    \State $v_j.\lit{cas}([nv_j,k],[ov_j,k_j])$
		 \EndFor
		\State $\lit{release}( \ms{Wset}(T_k))$ 
		\Return $A_k$ \EndReturn
	 }\EndPart
	
	\Statex
	 \Part{Function: $\lit{isAbortable()}$ }{
		\If{$\exists X_j \in \Rset(T_k)$: $X_j\not\in \Wset(T_k)\wedge \Read(r_{j}) \neq 0$}
			\Return $\true$ \EndReturn
		\EndIf
		\If{$\exists X_j \in Rset(T_k)$:$[\textit{ov}_j,k_j]\neq \Read(v_j)$} \label{line:hvalid}
			\Return $\true$ \EndReturn
		\EndIf
		\Return $\false$ \EndReturn
	}\EndPart
	\Statex
	\Statex
	\textbf{Code for fast-path transactions}
	\Statex
	\Part{$\textit{read}_k(X_j)$}\quad\Comment{fast-path}{
		\State $[\textit{ov}_j,k_j] := \Read(v_j)$ \Comment{cached read} \label{line:hlin1}
		
		\If{$\Read(r_{j}) \neq 0$}  
		\label{line:hread}
			\Return $A_k$ \EndReturn
		\EndIf
		
		\Return $\textit{ov}_j$ \EndReturn
		
   	 }\EndPart
	\Statex
	\Part{$\textit{write}_k(X_j,v)$}{\quad\Comment{fast-path}
		\State $\Write(v_j,[\textit{nv}_j,k])$ \Comment{cached write} \label{line:hlin2}
		\Return $\ok$ \EndReturn
		
   	}\EndPart
	\Statex
	
	\Part{$\textit{tryC}_k$()}{\quad\Comment{fast-path}
		\State $\ms{commit-cache}_i$ \label{line:hlin3} \Comment{returns $C_k$ or $A_k$}
   	 }\EndPart		

	}
	\end{multicols}
  \end{algorithmic}
\end{algorithm}
We prove that the lower bound in Theorem~\ref{linearlowerbound} is
tight by describing an `instrumentation-optimal'' 
HyTM implementation (Algorithm~\ref{alg:inswrite}) that is opaque, progressive, provides wait-free TM-liveness,
uses \emph{invisible reads}.

\vspace{1mm}\noindent\textbf{Base objects.}
For every t-object $X_j$, our implementation maintains a base object $v_j\in \mathbb{D}$ that stores the value of $X_j$
and a metadata base object $r_{j}$, which is a \emph{lock bit} that stores $0$ or $1$.

\vspace{1mm}\noindent\textbf{Fast-path transactions.}
For a fast-path transaction $T_k$, the $\Read_k(X_j)$ implementation first reads $r_j$ 
to check if $X_j$ is locked by a concurrent updating transaction. 
If so, it returns $A_k$,
else it returns the value of $X_j$.
Updating fast-path transactions use uninstrumented writes:
$\Write (X_j,v)$ simply stores the cached state of $X_j$ along with its value $v$ and
if the cache has not been invalidated, updates the shared memory
during $\TryC_k$ by invoking the $\ms{commit-cache}$ primitive.

\vspace{1mm}\noindent\textbf{Slow-path read-only transactions.}
Any $\Read_k(X_j)$ invoked by a slow-path transaction first reads the value of the object from $v_j$, 
checks if $r_j$ is set
and then performs \emph{value-based validation} on its entire read set to check if any of them have been modified. 
If either of these conditions is true,
the transaction returns $A_k$. Otherwise, it returns the value of $X_j$. 
A read-only transaction simply returns $C_k$ during the tryCommit.

\vspace{1mm}\noindent\textbf{Slow-path updating transactions.}
The $\Write_k(X,v)$ implementation of a slow-path transaction stores
$v$ and the current value of $X_j$ locally, 
deferring the actual update in shared memory to tryCommit. 

During $\TryC_k$, an updating slow-path transaction $T_k$ attempts to obtain exclusive write access to its 
entire write set as follows:
for every t-object $X_j \in \Wset(T_k)$, it writes $1$ to each base
object $r_{j}$ by performing a \emph{compare-and-set} (\emph{cas})
primitive that checks if the value of $r_j$ is not $1$ 
and, if so, replaces it with $1$.   
If the \emph{cas} fails, then $T_k$ releases the locks on all objects $X_{\ell}$ 
it had previously acquired 
by writing $0$ to $r_{\ell}$ and then returns $A_k$. Intuitively, if the \emph{cas} fails, some concurrent transaction
is performing a t-write to a t-object in $\Wset(T_k)$.
If all the locks on the write set were acquired successfully,
$T_k$ checks if any t-object in $\Rset(T_k)$ is concurrently being updated by another transaction
and then performs value-based validation of the read set. If a conflict is detected from the these checks,
the transaction is aborted.
Finally, $\TryC_k$ attempts to write the values of the t-objects via \emph{cas} operations.
If any \emph{cas} on the individual base objects fails, there must be a concurrent fast-path writer, and so $T_k$ rolls back the
state of the base objects that were updated, releases locks on its write set and returns $A_k$. 
The roll backs are performed with \emph{cas} operations,
skipping any which fail to allow for concurrent fast-path writes to
locked locations. Note that if a concurrent read operation of a
fast-path transaction $T_{\ell}$ finds an ``invalid'' value in $v_j$ that was
written by such transaction $T_k$ but has not
been rolled back yet, then $T_{\ell}$ either incurs a tracking set
abort later because $T_k$ has updated $v_j$ or finds $r_j$ to be $1$. 
In both cases, the read operation of $T_{\ell}$ aborts.     

The implementation uses invisible reads (no nontrivial primitives are applied by reading transactions).
Every t-operation returns a matching response within a finite number of its steps.

\vspace{1mm}\noindent\textbf{Complexity.}
Every t-read operation performed by a fast-path transaction accesses a metadata base object
once (the lock bit corresponding to the t-object), 
which is the price to pay for detecting conflicting updating slow-path
transactions. Write operations of fast-path transactions are uninstrumented. 
%
\begin{lemma}
\label{lm:opacityh1}
Algorithm~\ref{alg:inswrite} implements an opaque TM.
\end{lemma}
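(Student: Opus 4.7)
\textbf{Proof plan for Lemma~\ref{lm:opacityh1}.} The plan is to show opacity by explicitly constructing, for every finite execution $E$ of Algorithm~\ref{alg:inswrite}, a legal t-complete t-sequential history $S$ equivalent to some completion of $E$ that respects the real-time order of transactions. Since opacity is prefix-closed by definition, it suffices to argue about finite executions; for any infinite execution, the result then follows prefix-by-prefix.

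First I would fix the \emph{completion} $\bar E$ of $E$: every incomplete $\Read_k/\Write_k$ is removed; an incomplete $\TryC_k$ of a slow-path transaction is completed with $C_k$ if $T_k$ has already executed at least one successful \emph{cas} in Line~\ref{line:hwrite} and with $A_k$ otherwise; an incomplete fast-path $\TryC_k$ is completed with $C_k$ iff the $\ms{commit-cache}_i$ event in Line~\ref{line:hlin3} has already been linearized as a commit, and $A_k$ otherwise. Next I would assign linearization/serialization points: for a slow-path committed updating $T_k$ the point is the first successful \emph{cas} in Line~\ref{line:hacq1} on some $r_j$ (equivalently the response of $\lit{acquire}$ in Line~\ref{line:hacq}); for a slow-path read-only or aborted $T_k$ the point is the linearization of its last non-aborting $\Read_k$, chosen at the \lit{read} of $v_j$ in Line~\ref{line:hread2}; for a fast-path $T_k$ that commits, the point is its $\ms{commit-cache}$ event, and for every fast-path $\Read_k$ it is Line~\ref{line:hlin1}. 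This yields a t-sequential history $S$; ordering transactions by these points gives a candidate serialization.

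The main work is to prove \emph{legality} of $S$: every t-read returns the value written by the latest preceding committed writer of the same t-object. I would argue this by cases on the type of the reader and the writer. The slow-path writer case uses the locking discipline encoded by $r_j$: by Lemma~\ref{lm:mutex}-style reasoning on the \emph{cas} at Line~\ref{line:hacq1}, no two committed updating slow-path transactions can hold the lock on $X_j$ simultaneously, and a read of $v_j$ that passes both the $r_j=0$ check (Line~\ref{line:habort0}) and the value-revalidation (Line~\ref{line:hvalid}) is consistent with the write linearization order. For a fast-path writer, the key is that its uninstrumented writes become visible atomically at $\ms{commit-cache}$ (so a fast-path transaction that does not commit never exposes its writes, per Observations~\ref{ob:one} and \ref{ob:two}); and for a fast-path reader the tracking-set mechanism guarantees that if any committed transaction modifies $v_j$ between Line~\ref{line:hlin1} and the reader's commit, the reader incurs a tracking-set abort. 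The trickiest subcase is a fast-path reader concurrent with a slow-path committing writer that has already performed a successful \emph{cas} in Line~\ref{line:hwrite} but not yet released its locks: here I would use the fact that the reader reads $r_j$ in Line~\ref{line:hread} after reading $v_j$ in Line~\ref{line:hlin1}, so either it observes $r_j\neq 0$ and aborts, or it observes $r_j=0$ which, combined with the cache-invalidation semantics, forces the writer's \emph{cas} on $v_j$ to happen after the reader's cached read, and hence after the reader's linearization point. The symmetric subcase of a slow-path reader seeing a fast-path writer is handled by the \emph{cas} at Line~\ref{line:hwrite} failing whenever a fast-path write to $v_j$ intervenes, coupled with the $\lit{undo}$ in Line~\ref{line:hwrite}'s else-branch to restore $v_j$, so a slow-path validate at Line~\ref{line:hvalid} still sees a consistent snapshot.

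The hard part, and the one I expect to require the most care, is precisely this mixed fast-path/slow-path interaction during the window where a slow-path writer holds locks and has partially propagated writes via \emph{cas} to the $v_j$'s, while a fast-path reader or writer is concurrently in its tracking-set phase. I would handle it by proving two supporting invariants: (i) whenever a slow-path $T_k$ is past the successful $\lit{acquire}$ of $\Wset(T_k)$ and before $\lit{release}$ in Line~\ref{line:hrel}, every $X_j\in\Wset(T_k)$ has $r_j=1$, so any fast-path reader of $X_j$ aborts at Line~\ref{line:hread} or via tracking-set invalidation; and (ii) any failed $\lit{undo}$ path (a \emph{cas} on $v_j$ failing in Line~\ref{line:hwrite}) witnesses a concurrent fast-path commit that is linearized before $T_k$ in $S$, so rolling back and aborting $T_k$ preserves legality. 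Real-time order follows directly because linearization points lie between each transaction's invocation and response. Finally, to upgrade strict serializability to full opacity, I would verify that even t-operations of aborted or incomplete transactions return values consistent with the partial serialization up to their invocation, using exactly the same read-validation argument used for committed readers — namely, a non-aborting t-read always passes the $r_j$-check and value-revalidation, and these together rule out exposing any write of a non-serializable transaction.
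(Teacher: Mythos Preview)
Your overall strategy—fix a completion, assign serialization points, then prove legality by case analysis on reader/writer types—is exactly the paper's strategy. But your concrete choice of serialization point for a committed updating slow-path transaction is wrong, and the gap is not cosmetic.

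You place $\delta_{T_i}$ at the response of $\lit{acquire}$ (Line~\ref{line:hacq}), i.e., \emph{before} $T_i$ performs any of its data writes in Line~\ref{line:hwrite}. The paper instead places it at the first write performed by $\lit{release}$ in Line~\ref{line:hrel}, i.e., \emph{after} every $v_j$ has been updated. Here is the execution that breaks your choice. Let $X_m\in\Wset(T_i)$ and $X_p\notin\Wset(T_i)$. A read-only slow-path $T_j$ performs $\Read_j(X_m)$ at time $t_{-1}$ before $T_i$'s acquire, reading the old value $v$; then $T_i$'s acquire returns at $t_1>t_{-1}$; then $T_j$ performs $\Read_j(X_p)$ at $t_0$ with $t_1<t_0<t_w$, where $t_w$ is when $T_i$ executes its \emph{cas} on $v_m$. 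The check at Line~\ref{line:habort0} sees $r_p=0$, and the revalidation at Line~\ref{line:hvalid} re-reads $v_m$, which still holds the old pair since $t_0<t_w$; so $\Read_j(X_p)$ returns a non-$A_j$ value and $T_j$ commits as read-only. With your points, $\delta_{T_i}=t_1<t_0=\delta_{T_j}$, so $T_i<_S T_j$; but $T_j$ returned the \emph{old} value of $X_m$, which is illegal. With the paper's point at release (time $t_2>t_w>t_0$) you get $T_j<_S T_i$ and legality holds.

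Your supporting invariant (i) is correct but does not help here: it only guarantees that a reader of an object \emph{in} $\Wset(T_i)$ aborts during the lock window, whereas the counterexample has $T_j$'s last read touch an object \emph{outside} $\Wset(T_i)$, so neither the $r_j$-check nor the value revalidation fires. The fix is simply to move the updating slow-path serialization point to after the data writes (the paper uses the first event of $\lit{release}$); the rest of your case analysis then goes through essentially as written.
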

\begin{proof}
Let $E$ by any execution of Algorithm~\ref{alg:inswrite}. 
Since opacity is a safety property, it is sufficient to prove that every finite execution is opaque~\cite{icdcs-opacity}.
Let $<_E$ denote a total-order on events in $E$.

Let $H$ denote a subsequence of $E$ constructed by selecting
\emph{linearization points} of t-operations performed in $E$.
The linearization point of a t-operation $op$, denoted as $\ell_{op}$ is associated with  
a base object event or an event performed during 
the execution of $op$ using the following procedure. 

\vspace{1mm}\noindent\textbf{Completions.}
First, we obtain a completion of $E$ by removing some pending
invocations or adding responses to the remaining pending invocations
as follows:
\begin{itemize}
\item
incomplete $\Read_k$, $\Write_k$ operation performed by a slow-path transaction $T_k$ is removed from $E$;
an incomplete $\TryC_k$ is removed from $E$ if $T_k$ has not performed any write to a base object $r_j$; $X_j \in \Wset(T_k)$
in Line~\ref{line:hwrite}, otherwise it is completed by including $C_k$ after $E$.
\item
every incomplete $\Read_k$, $\TryA_k$, $\Write_k$ and $\TryC_k$ performed by a fast-path transaction $T_k$ is removed from $E$.
\end{itemize}
\vspace{1mm}\noindent\textbf{Linearization points.}
Now a linearization $H$ of $E$ is obtained by associating linearization points to
t-operations in the obtained completion of $E$.
For all t-operations performed a slow-path transaction $T_k$, linearization points as assigned as follows:
\begin{itemize}
\item For every t-read $op_k$ that returns a non-A$_k$ value, $\ell_{op_k}$ is chosen as the event in Line~\ref{line:hread2}
of Algorithm~\ref{alg:inswrite}, else, $\ell_{op_k}$ is chosen as invocation event of $op_k$
\item For every $op_k=\Write_k$ that returns, $\ell_{op_k}$ is chosen as the invocation event of $op_k$
\item For every $op_k=\TryC_k$ that returns $C_k$ such that $\Wset(T_k)
  \neq \emptyset$, $\ell_{op_k}$ is associated with the first write to a base object performed by $\lit{release}$
  when invoked in Line~\ref{line:hrel}, 
  else if $op_k$ returns $A_k$, $\ell_{op_k}$ is associated with the invocation event of $op_k$
\item For every $op_k=\TryC_k$ that returns $C_k$ such that $\Wset(T_k) = \emptyset$, 
$\ell_{op_k}$ is associated with Line~\ref{line:hreturn}
\end{itemize}
For all t-operations performed a fast-path transaction $T_k$, linearization points as assigned as follows:
\begin{itemize}
\item For every t-read $op_k$ that returns a non-A$_k$ value, $\ell_{op_k}$ is chosen as the event in Line~\ref{line:hlin1}
of Algorithm~\ref{alg:inswrite}, else, $\ell_{op_k}$ is chosen as invocation event of $op_k$
\item
For every $op_k$ that is a $\TryC_k$, $\ell_{op_k}$ is the $\ms{commit-cache}_k$ primitive invoked by $T_k$
\item
For every $op_k$ that is a $\Write_k$, $\ell_{op_k}$ is the event in Line~\ref{line:hlin2}.
\end{itemize}
$<_H$ denotes a total-order on t-operations in the complete sequential history $H$.

\vspace{1mm}\noindent\textbf{Serialization points.}
The serialization of a transaction $T_j$, denoted as $\delta_{T_j}$ is
associated with the linearization point of a t-operation 
performed by the transaction.

We obtain a t-complete history ${\bar H}$ from $H$ as follows. 
A serialization $S$ is obtained by associating serialization points to transactions in ${\bar H}$ as follows:
for every transaction $T_k$ in $H$ that is complete, but not t-complete, 
we insert $\textit{tryC}_k\cdot A_k$ immediately 
after the last event of $T_k$ in $H$. 
\begin{itemize}
\item If $T_k$ is an updating transaction that commits, then $\delta_{T_k}$ is $\ell_{\TryC_k}$
\item If $T_k$ is a read-only or aborted transaction,
then $\delta_{T_k}$ is assigned to the linearization point of the last t-read that returned a non-A$_k$ value in $T_k$
\end{itemize}
$<_S$ denotes a total-order on transactions in the t-sequential history $S$.
\begin{claim}
\label{cl:hseq}
If $T_i \prec_{H}T_j$, then $T_i <_S T_j$
\end{claim}
\begin{proof}
This follows from the fact that for a given transaction, its
serialization point is chosen between the first and last event of the transaction
implying if $T_i \prec_{H} T_j$, then $\delta_{T_i} <_{E} \delta_{T_j}$ implies $T_i <_S T_j$.
\end{proof}
\begin{claim}
\label{cl:hreadfrom}
$S$ is legal.
\end{claim}
\begin{proof}
We claim that for every $\Read_j(X_m) \rightarrow v$, there exists some slow-path transaction $T_i$ (or resp. fast-path)
that performs $\Write_i(X_m,v)$ and completes the event in Line~\ref{line:hwrite} (or resp. Line~\ref{line:hlin2}) such that
$\Read_j(X_m) \not\prec_H^{RT} \Write_i(X_m,v)$.

Suppose that $T_i$ is a slow-path transaction:
since $\Read_j(X_m)$ returns the response $v$, the event in Line~\ref{line:hread2}
succeeds the event in Line~\ref{line:hwrite} performed by $\TryC_i$. 
Since $\Read_j(X_m)$ can return a non-abort response only after $T_i$ writes $0$ to $r_m$ in
Line~\ref{line:hrel1}, $T_i$ must be committed in $S$.
Consequently,
$\ell_{\TryC_i} <_E \ell_{\Read_j(X_m)}$.
Since, for any updating
committing transaction $T_i$, $\delta_{T_i}=\ell_{\TryC_i}$, it follows that
$\delta_{T_{i}} <_E \delta_{T_{j}}$.

Otherwise if $T_i$ is a fast-path transaction, then clearly $T_i$ is a committed transaction in $S$.
Recall that $\Read_j(X_m)$ can read $v$ during the event in Line~\ref{line:hread2}
only after $T_i$ applies the $\ms{commit-cache}$ primitive.
By the assignment of linearization points, 
$\ell_{\TryC_i} <_E \ell_{\Read_j(X_m)}$ and thus, $\delta_{T_{i}} <_E \ell_{\Read_j(X_m)}$.

Thus, to prove that $S$ is legal, it suffices to show that  
there does not exist a
transaction $T_k$ that returns $C_k$ in $S$ and performs $\Write_k(X_m,v')$; $v'\neq v$ such that $T_i <_S T_k <_S T_j$. 

$T_i$ and $T_k$ are both updating transactions that commit. Thus, 
\begin{center}
($T_i <_S T_k$) $\Longleftrightarrow$ ($\delta_{T_i} <_{E} \delta_{T_k}$) \\
($\delta_{T_i} <_{E} \delta_{T_k}$) $\Longleftrightarrow$ ($\ell_{\TryC_i} <_{E} \ell_{\TryC_k}$) 
\end{center}
Since, $T_j$ reads the value of $X$ written by $T_i$, one of the following is true:
$\ell_{\TryC_i} <_{E} \ell_{\TryC_k} <_{E} \ell_{\Read_j(X_m)}$ or
$\ell_{\TryC_i} <_{E} \ell_{\Read_j(X_m)} <_{E} \ell_{\TryC_k}$.

Suppose that $\ell_{\TryC_i} <_{E} \ell_{\TryC_k} <_{E} \ell_{\Read_j(X_m)}$.

(\textit{Case \RNum{1}:}) $T_i$ and $T_k$ are slow-path transactions.

Thus, $T_k$ returns a response from the event in Line~\ref{line:hacq} 
before the read of the base object associated with $X_m$ by $T_j$ in Line~\ref{line:hread2}. 
Since $T_i$ and $T_k$ are both committed in $E$, $T_k$ returns \emph{true} from the event in
Line~\ref{line:hacq} only after $T_i$ writes $0$ to $r_{m}$ in Line~\ref{line:hrel1}.

If $T_j$ is a slow-path transaction, 
recall that $\Read_j(X_m)$ checks if $X_j$ is locked by a concurrent transaction, 
then performs read-validation (Line~\ref{line:habort0}) before returning a matching response. 
We claim that $\Read_j(X_m)$ must return $A_j$ in any such execution.

Consider the following possible sequence of events: 
$T_k$ returns \emph{true} from \emph{acquire} function invocation, 
updates the value of $X_m$ to shared-memory (Line~\ref{line:hwrite}), 
$T_j$ reads the base object $v_m$ associated with $X_m$, 
$T_k$ releases $X_m$ by writing $0$ to $r_{m}$ and finally $T_j$ performs the check in Line~\ref{line:habort0}. 
But in this case, $\Read_j(X_m)$ is forced to return the value $v'$ written by $T_m$--- 
contradiction to the assumption that $\Read_j(X_m)$ returns $v$. 

Otherwise suppose that $T_k$ acquires exclusive access to $X_m$ by writing $1$ to $r_{m}$ and returns \emph{true}
from the invocation of \emph{acquire}, updates $v_m$ in Line~\ref{line:hwrite}), 
$T_j$ reads $v_m$, $T_j$ performs the check in Line~\ref{line:habort0} and finally $T_k$ 
releases $X_m$ by writing $0$ to $r_{m}$. 
Again, $\Read_j(X_m)$ must return $A_j$ since $T_j$ reads that $r_{m}$ is $1$---contradiction.

A similar argument applies to the case that $T_j$ is a fast-path transaction.
Indeed, since every \emph{data} base object read by $T_j$ is contained in its tracking set, if any concurrent
transaction updates any t-object in its read set, $T_j$ is aborted immediately by our model(cf. Section~\ref{sec:p4c1s2}).

Thus, $\ell_{\TryC_i} <_E \ell_{\Read_j(X)} <_{E} \ell_{\TryC_k}$.

(\textit{Case \RNum{2}:}) $T_i$ is a slow-path transaction and $T_k$ is a fast-path transaction.
Thus, $T_k$ returns $C_k$ 
before the read of the base object associated with $X_m$ by $T_j$ in Line~\ref{line:hread2}, but after the response
of \emph{acquire} by $T_i$ in Line~\ref{line:hacq}.
Since $\Read_j(X_m)$ reads the value of $X_m$ to be $v$ and not $v'$, $T_i$ performs the \emph{cas}
to $v_m$ in Line~\ref{line:hwrite} after the $T_k$ performs the $\ms{commit-cache}$ primitive (since if
otherwise, $T_k$ would be aborted in $E$).
But then the \emph{cas} on $v_m$ performed by $T_i$ would return $\false$ and $T_i$ would return $A_i$---contradiction.

(\textit{Case \RNum{3}:}) $T_k$ is a slow-path transaction and $T_i$ is a fast-path transaction.
This is analogous to the above case.

(\textit{Case \RNum{4}:}) $T_i$ and $T_k$ are fast-path transactions.
Thus, $T_k$ returns $C_k$ 
before the read of the base object associated with $X_m$ by $T_j$ in Line~\ref{line:hread2}, but before $T_i$
returns $C_i$ (this follows from Observations~\ref{ob:one} and \ref{ob:two}).
Consequently, $\Read_j(X_m)$ must read the value of $X_m$ to be $v'$ and return $v'$---contradiction.

We now need to prove that $\delta_{T_{j}}$ indeed precedes $\ell_{\TryC_k}$ in $E$.

Consider the two possible cases:
\begin{itemize}
\item
Suppose that $T_j$ is a read-only transaction. 
Then, $\delta_{T_j}$ is assigned to the last t-read performed by $T_j$ that returns a non-A$_j$ value. 
If $\Read_j(X_m)$ is not the last t-read that returned a non-A$_j$ value, then there exists a $\Read_j(X')$ such that 
$\ell_{\Read_j(X_m)} <_{E} \ell_{\TryC_k} <_E \ell_{read_j(X')}$.
But then this t-read of $X'$ must abort by performing the checks in Line~\ref{line:habort0} or incur a tracking set abort---contradiction.
\item
Suppose that $T_j$ is an updating transaction that commits, then $\delta_{T_j}=\ell_{\TryC_j}$ which implies that
$\ell_{read_j(X)} <_{E} \ell_{\TryC_k} <_E \ell_{\TryC_j}$. Then, $T_j$ must neccesarily perform the checks
in Line~\ref{line:habort3} and return $A_j$ or incur a tracking set abort---contradiction to the assumption that $T_j$ is a committed transaction.
\end{itemize}
The proof follows.
\end{proof}
The conjunction of Claims~\ref{cl:hseq} and \ref{cl:hreadfrom} establish that Algorithm~\ref{alg:inswrite} is opaque.
\end{proof}
\begin{theorem}
\label{th:inswrite}
There exists an opaque HyTM implementation that provides uninstrumented writes, invisible reads, progressiveness
and wait-free TM-liveness such that
in its every execution $E$, every read-only fast-path transaction $T\in \ms{txns}(E)$
accesses $O(|\Rset(T)|)$ distinct metadata base objects.
\end{theorem}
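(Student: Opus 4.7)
The plan is to verify each of the claimed properties of Algorithm~\ref{alg:inswrite} by direct inspection of the pseudocode, since opacity has already been established by Lemma~\ref{lm:opacityh1}. What remains is to argue (i)~uninstrumented writes, (ii)~invisible reads, (iii)~progressiveness for both fast-path and slow-path transactions, (iv)~wait-free TM-liveness, and (v)~the $O(|\Rset(T)|)$ metadata bound for read-only fast-path transactions.

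First I would address the easy structural properties. For \emph{uninstrumented writes}, note that the fast-path $\Write_k(X_j,v)$ simply performs a cached write to $v_j\in\mathbb{D}$ (Line~\ref{line:hlin2}) and $\TryC_k$ invokes only the $\ms{commit\text{-}cache}$ primitive on $\mathbb{D}$ (Line~\ref{line:hlin3}); no metadata base object is accessed by a write-only fast-path transaction. For \emph{invisible reads}, observe that slow-path $\Read_k$ issues only $\lit{read}$ primitives on $v_j$ and $r_j$ (Lines~\ref{line:hread2} and \ref{line:hvalid}), and fast-path $\Read_k$ issues only cached reads (Lines~\ref{line:hlin1} and \ref{line:hread}); moreover, a read-only slow-path transaction executes $\TryC_k$ via the short-circuit in Line~\ref{line:hreturn}, applying no nontrivial primitives at all. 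For \emph{wait-free TM-liveness}, every loop in the pseudocode ranges over the (finite) read or write set of the invoking transaction, and every primitive invocation, including the $\ms{commit\text{-}cache}$, completes in a bounded number of steps; hence every t-operation returns a matching response in a finite number of its own steps.

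Next I would prove \emph{progressiveness} by enumerating every branch that returns $A_k$. A slow-path $\Read_k(X_j)$ aborts in Line~\ref{line:habort0} only if some concurrent transaction holds the lock $r_j$ (necessarily a write to $X_j$), and in Line~\ref{line:hvalid} only if some previously read t-object in $\Rset(T_k)$ was updated by a concurrent committed writer; both are read-write conflicts on $\Dset(T_k)$. A slow-path $\TryC_k$ aborts in Line~\ref{line:habort2} only if a $\lit{cas}$ on some $r_j$, $X_j\in\Wset(T_k)$, fails because of a concurrent holder of $r_j$ (a write-write conflict), in Line~\ref{line:habort3} as in the t-read case, or in the $\lit{undo}$ path only if a concurrent fast-path writer performed a successful $\lit{cas}$ on $v_j$ (again a write-write conflict). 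For fast-path transactions, the only ways to abort are (a)~reading $r_j\neq 0$ in Line~\ref{line:hread}, which witnesses a concurrent updating slow-path transaction that has written to $X_j\in\Wset$, i.e.\ a conflict; (b)~a tracking-set abort triggered by a conflicting event of another transaction on some $v_j\in\Dset(T_k)$ per our HyTM model; or (c)~a capacity abort, which is explicitly permitted by Definition~\ref{def:prog}.

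Finally, for the \emph{metadata complexity bound} on read-only fast-path transactions, I would observe from Lines~\ref{line:hlin1}--\ref{line:hread} that each invocation of $\Read_k(X_j)$ accesses exactly one metadata base object, namely $r_j$, and that distinct t-reads of $T_k$ target distinct $r_j$ (one per $X_j \in \Rset(T_k)$). Since a read-only fast-path transaction executes $|\Rset(T)|$ such t-reads and its $\TryC_k$ only invokes $\ms{commit\text{-}cache}$ over data objects (an empty write set leaves nothing to commit), the total number of distinct metadata base objects accessed is exactly $|\Rset(T)|$, matching the lower bound of Theorem~\ref{linearlowerbound} up to a constant factor. I do not anticipate a hard obstacle here: the potentially delicate point is the progressiveness argument for fast-path $\TryC_k$, where we must use Observations~\ref{ob:one}--\ref{ob:two} together with the semantics of $\ms{commit\text{-}cache}$ to conclude that a failure of cache commit is attributable to a conflicting concurrent transaction or to a capacity abort, and nothing else.
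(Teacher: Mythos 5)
Your proposal is correct and follows essentially the same route as the paper's proof: opacity via Lemma~\ref{lm:opacityh1}, wait-free TM-liveness from the absence of unbounded loops or waiting statements, progressiveness by exhaustively enumerating the abort branches of the slow-path and fast-path code, and the metadata bound from the single access to $r_j$ per fast-path t-read together with the metadata-free $\ms{commit\text{-}cache}$. The only cosmetic difference is that you attribute tracking-set aborts to contention on data objects $v_j$, whereas the invalidation could equally come from a concurrent write to a cached $r_j$; the same conflict argument applies there, as the paper handles by reasoning about contention on an arbitrary base object $b$.
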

\begin{proof}
\textit{(Opacity)} Follows from Lemma~\ref{lm:opacityh1}.

\textit{(TM-liveness and TM-progress)}
Since none of the implementations of the t-operations in Algorithm~\ref{alg:inswrite}
contain unbounded loops or waiting statements, Algorithm~\ref{alg:inswrite} provides wait-free TM-liveness,
\emph{i.e.}, every t-operation returns a matching response after taking a finite number of steps.

Consider the cases under which a slow-path transaction $T_k$ may be aborted in any execution.
\begin{itemize}
\item
Suppose that there exists a $\Read_k(X_j)$ performed by $T_k$ that returns $A_k$
from Line~\ref{line:habort0}.
Thus, there exists a transaction
that has written $1$ to $r_{j}$ in Line~\ref{line:hacq1}, but has not yet written
$0$ to $r_{j}$ in Line~\ref{line:hrel1} or
some t-object in $\Rset(T_k)$ has been updated since its t-read by $T_k$.
In both cases, there exists a concurrent transaction performing a 
t-write to some t-object in $\Rset(T_k)$, thus forcing a read-write conflict.
\item
Suppose that $\TryC_k$ performed by $T_k$ that returns $A_k$
from Line~\ref{line:habort2}.
Thus, there exists a transaction
that has written $1$ to $r_{j}$ in Line~\ref{line:hacq1}, but has not yet written
$0$ to $r_{j}$ in Line~\ref{line:hrel1}. Thus, $T_k$ encounters write-write conflict with another
transaction that concurrently attempts to update a t-object in $\Wset(T_k)$.
\item
Suppose that $\TryC_k$ performed by $T_k$ that returns $A_k$
from Line~\ref{line:habort3}.
Since $T_k$ returns $A_k$ from Line~\ref{line:habort3} for the same reason it
returns $A_k$ after Line~\ref{line:habort0}, the proof follows.
\end{itemize}
Consider the cases under which a fast-path transaction $T_k$ may be aborted in any execution $E$.
\begin{itemize}
\item
Suppose that a $\Read_k(X_m)$ performed by $T_k$ returns $A_k$
from Line~\ref{line:hread}.
Thus, there exists a concurrent slow-path transaction that is pending in its tryCommit and 
has written $1$ to $r_m$, but not released the lock on $X_m$ i.e. $T_k$ conflicts with another transaction in $E$.
\item
Suppose that $T_k$ returns $A_k$
while performing a cached access of some base object $b$ via a trivial (and resp. nontrivial) primitive. 
Indeed, this is possible only if some concurrent transaction writes (and resp. reads or writes) to $b$.
However, two transactions $T_k$ and $T_m$ may contend on $b$ in $E$
only if there exists $X\in\Dset(T_i)\cap\Dset(T_j)$ and $X\in\Wset(T_i)\cup\Wset(T_j)$.
from Line~\ref{line:habort2}.
The same argument applies for the case when $T_k$ returns $A_k$
while performing $\ms{commit-cache}_k$ in $E$.
\end{itemize}
\textit{(Complexity)}
The implementation uses uninstrumented writes since each $\Write_k(X_m)$ simply writes to $v_m \in \mathbb{D}_{X_{m}}$
and does not access any metadata base object.
The complexity of each $\Read_k(X_m)$ is a single access to a metadata base object $r_m$ in Line~\ref{line:hread}
that is not accessed any other transaction $T_i$ unless $X_m \in \Dset(T_i)$.
while the $\TryC_k$ just calls $\ms{cache-commit}_k$ that returns $C_k$.
Thus, each read-only transaction $T_k$ accesses $O(|\Rset(T_k)|)$ distinct metadata base objects in any execution.
\end{proof}
%
\section{Providing partial concurrency at low cost}
\label{sec:p4c4s3}
\begin{algorithm}[!h]
\caption{Opaque HyTM implementation with progressive slow-path and sequential fast-path TM-progress; code for $T_k$ by process $p_i$}
\label{alg:inswrite2}
\begin{algorithmic}[1]
  	\begin{multicols}{2}
  	{
  	\footnotesize
	\Part{Shared objects}{
		\State $v_j \in \mathbb{D}$, for each t-object $X_j$ 
		\State ~~~~~allows reads, writes and cas
		\State $r_{j} \in \mathbb{M}$, for each t-object $X_j$
		\State ~~~~~allows reads, writes and cas
		\State $\ms{fa}$, fetch-and-add object 
	}\EndPart	
	
	\Statex	
	\textbf{Code for slow-path transactions}
	\Part{\TryC$_k$()}{\quad\Comment{slow-path}
		\If{$\Wset(T_k)= \emptyset$}
			\Return $C_k$  \EndReturn
		\EndIf
				
		\State locked := $\lit{acquire}(\Wset(T_k))$ \label{line:hacq2}
		\If{$\neg$ locked} 
	 		\Return $A_k$ \EndReturn
	 	\EndIf
	 	\State $\ms{fa}.\lit{add}(1)$ \label{line:hinc}
		\If{$\lit{isAbortable}()$} 
			\State $\lit{release}( \ms{Lset}(T_k))$ 
			\Return $A_k$ \EndReturn
		\EndIf
		\ForAll{$X_j \in \Wset(T_k)$}
	 		 \If{ $v_j.\lit{cas}((ov_j, k_j),(\textit{nv}_j,k))$} 
			      \State $\ms{Oset}(T_k):=\ms{Oset}(T_k)\cup \{X_j\}$
			 \Else
			      \Return $\lit{undo}(\ms{Oset}(T_k))$ \EndReturn
			 \EndIf
			 
	 	\EndFor		
		\State $\lit{release}(\Wset(T_k))$ \label{line:hrel2}
   		\Return $C_k$ \EndReturn
   	 }\EndPart		
	 
 	\newpage
 	
	 \Part{Function: $\lit{release}(Q)$}{
  		\ForAll{$X_j \in Q$}	
 			\State $r_j.\Write(0)$
		\EndFor
		\State $\ms{fa}.\lit{add}(-1)$ \label{line:hdec}
		\Return \ok \EndReturn
	}\EndPart
	\Statex
	\Statex

	\textbf{Code for fast-path transactions}	
	
	\Part{$\textit{read}_k(X_j)$}{\quad\Comment{fast-path}
		\If{$Rset(T_k) = \emptyset$}
			\State $l \gets \Read(\ms{fa})$ \Comment{cached read} \label{line:hfread}
		
			\If{$\ms{l}\neq 0$}
			    \Return $A_k$ \EndReturn
			\EndIf
		\EndIf
		\State $(\textit{ov}_j,k_j) := \Read(v_j)$ \Comment{cached read}
		
		\Return $\textit{ov}_j$ \EndReturn
		
   	 }\EndPart
	\Statex
	\Part{$\textit{write}_k(X_j,v)$}{\quad\Comment{fast-path}
		\State $v_j.\Write(\textit{nv}_j,k)$ \Comment{cached write} 
		\Return $\ok$ \EndReturn
		
   	}\EndPart
	\Statex
	
	\Part{$\textit{tryC}_k$()}{\quad\Comment{fast-path}
		\State $\ms{commit-cache}_i$ \Comment{returns $C_k$ or $A_k$}

   	 }\EndPart		

	}
	\end{multicols}
  \end{algorithmic}
\end{algorithm}

We showed that allowing fast-path transactions to run concurrently in
HyTM results in an instrumentation cost that is
proportional to the read-set size of a fast-path transaction.   
But can we run at least \emph{some} transactions 
concurrently with constant instrumentation cost, while still keeping invisible reads?  

Algorithm~\ref{alg:inswrite2} 
implements a \emph{slow-path progressive} opaque HyTM
with invisible reads and wait-free TM-liveness. 
To fast-path transactions, it only provides \emph{sequential}
TM-progress (they are only guaranteed to commit in the absence
of concurrency), but in return the algorithm is only using a single
metadata base object $\ms{fa}$ 
that is read once by a fast-path transaction and accessed twice with a \emph{fetch-and-add}
primitive by an updating slow-path transaction.
Thus, the instrumentation cost of the algorithm is constant.   

Intuitively, $\ms{fa}$ allows fast-path transactions to detect the
existence of concurrent updating slow-path transactions.
Each time an updating slow-path updating transaction tries to commit, it increments
$\ms{fa}$ and once all writes to data base objects are completed (this
part of the algorithm is identical to Algorithm~\ref{alg:inswrite})
or the transaction is aborted,
it decrements $\ms{fa}$. Therefore, $\ms{fa}\neq 0$ means
that at least one slow-path updating transaction is incomplete.  
A fast-path transaction simply checks if $\ms{fa}\neq 0$ in the
beginning and aborts if so, 
otherwise, its code is identical to that in
Algorithm~\ref{alg:inswrite}.
Note that this way, any update of $\ms{fa}$ automatically causes a tracking set abort of any
incomplete fast-path transaction.

%
\begin{theorem}
\label{th:inswrite2}
There exists an opaque HyTM implementation that provides uninstrumented writes, invisible reads,
progressiveness for slow-path transactions, sequential TM-progress for fast-path transactions and wait-free TM-liveness
such that in every its execution $E$, every fast-path transaction
accesses at most one metadata base object.
\end{theorem}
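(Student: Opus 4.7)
\textbf{Proof plan for Theorem~\ref{th:inswrite2}.} I would follow the structure of the proof of Theorem~\ref{th:inswrite}, exploiting the fact that Algorithm~\ref{alg:inswrite2} differs from Algorithm~\ref{alg:inswrite} only in two respects: (i) slow-path updating transactions bracket the execution fragment between \emph{acquire} and \emph{release} of their write set with a single $\lit{fetch-and-add}(+1)$ before the critical section and a matching $\lit{fetch-and-add}(-1)$ at the end of \emph{release}; and (ii) a fast-path transaction, on its first t-read, performs one cached read of the metadata object $\ms{fa}$ and aborts if the value is non-zero. The cheap structural claims (uninstrumented writes, invisible reads, wait-free TM-liveness, single metadata access per fast-path transaction) fall out by direct inspection of the pseudocode: fast-path writes touch only $v_j\in\mathbb{D}_{X_j}$; fast-path and slow-path reads apply no nontrivial primitives; no t-operation contains an unbounded loop or wait; and the sole metadata access by a fast-path transaction is the single $\Read(\ms{fa})$ guarded by $\ms{Rset}(T_k)=\emptyset$, since on every later t-read the cached value of $\ms{fa}$ is either still present in the tracking set (so no new access to $\ms{fa}$ is needed) or has been invalidated (in which case the transaction aborts).

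Next I would establish the progress conditions. Slow-path progressiveness is inherited almost verbatim from Algorithm~\ref{alg:inswrite}: the only new abort point is the \emph{cas}-based \emph{acquire}, and a failed \emph{cas} on any $r_j$ always implies a write-write conflict with a concurrent slow-path or fast-path writer on $X_j\in\Wset(T_k)$; the remaining abort paths through \emph{isAbortable} remain read-write or write-write conflicts. For sequential TM-progress of fast-path transactions, I would argue that a fast-path $T_k$ running from a t-quiescent configuration with no concurrent transaction reads $\ms{fa}=0$, adds $(\ms{fa},0,\shared)$ to $\tau_k$, and never encounters a concurrent primitive on any $r_j$, $v_j$, or $\ms{fa}$, hence its tracking set is never invalidated and $\ms{commit-cache}_k$ returns $C_k$.

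The main obstacle, and the one I would spend the most care on, is opacity. I would reuse the linearization and serialization mapping from the proof of Lemma~\ref{lm:opacityh1}: slow-path t-reads are linearized at the read of $v_j$ in Line~\ref{line:hread2}; slow-path committing updating tryCommits at the first base-object write performed by the final \emph{release} (Line~\ref{line:hrel2}); fast-path t-reads at the cached read of $v_j$; and fast-path tryCommits at $\ms{commit-cache}_k$. The key new legality claim to prove is the following: \emph{if a committed fast-path transaction $T_k$ reads a value of $X_m$ written by a committed transaction $T_i$, then no committed transaction $T_\ell$ serialized between $T_i$ and $T_k$ writes to $X_m$.} The proof splits on the type of the conflicting $T_\ell$. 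If $T_\ell$ is a fast-path writer, then by the tracking-set semantics on the data object $v_m$, the cached read of $v_m$ by $T_k$ and $T_\ell$'s $\ms{commit-cache}$ cannot straddle each other without one of them aborting, exactly as in Case~\RNum{4} of Claim~\ref{cl:hreadfrom}. If $T_\ell$ is a slow-path updating transaction, then $T_\ell$ executes $\ms{fa}.\lit{add}(1)$ in Line~\ref{line:hinc} before its write to $v_m$ and $\ms{fa}.\lit{add}(-1)$ in Line~\ref{line:hdec} after its \emph{release}; since $T_k$ has cached $\ms{fa}$ in shared mode on its first t-read, either of these fetch-and-adds invalidates $\tau_k$ and forces $T_k$ to abort at its next cached primitive or at $\ms{commit-cache}_k$---contradicting that $T_k$ commits. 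Thus every slow-path update that is linearized inside the fast-path interval is invisible to $T_k$, which is impossible if $T_\ell$ actually overwrites $X_m$ between $T_i$ and $T_k$.

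Once this case analysis is in place, the remainder of the opacity argument is identical to Claims~\ref{cl:hseq} and~\ref{cl:hreadfrom}: real-time order is preserved because each transaction's serialization point lies between its first and last event, and read legality for slow-path reading transactions and for the symmetric situation where $T_i$ is fast-path and $T_k$ is slow-path follows by the same tracking-set / \emph{acquire}-bit / value-validation arguments as in Algorithm~\ref{alg:inswrite}, with the $\ms{fa}$-based synchronization only adding new ways to abort rather than new committed executions to justify. Gathering the structural, progress, and opacity parts yields the theorem.
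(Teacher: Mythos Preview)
Your proposal is correct and follows essentially the same approach as the paper's proof, which is itself very terse---it simply states that opacity is ``almost identical'' to Lemma~\ref{lm:opacityh1}, enumerates the abort cases for progress, and reads off the single-metadata-access property from the code. You correctly identify and flesh out the one genuinely new ingredient: for a committed fast-path reader $T_k$ versus a concurrent slow-path updater $T_\ell$, the cached read of $\ms{fa}$ (rather than the per-object $r_j$ check of Algorithm~\ref{alg:inswrite}) is what guarantees that $T_\ell$'s $\lit{fetch}\text{-}\lit{and}\text{-}\lit{add}$ invalidates $\tau_k$ before $T_\ell$ can touch any $v_m$. One small clarification: the reason later t-reads do not re-access $\ms{fa}$ is simply the $\Rset(T_k)=\emptyset$ guard in the code, not caching per se; the tracking-set argument you give is the correctness reason why a single access suffices, which is exactly the point needed for the opacity case analysis.
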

\begin{proof}
The proof of opacity is almost identical to the analogous proof for Algorithm~\ref{alg:inswrite} in Lemma~\ref{lm:opacityh1}.

As with Algorithm~\ref{alg:inswrite}, enumerating the cases under which a slow-path transaction $T_k$
returns $A_k$ proves that Algorithm~\ref{alg:inswrite2} satisfies progressiveness for slow-path transactions.
Any fast-path transaction $T_k$; $\Rset(T_k) \neq \emptyset$ reads the metadata base object $\ms{fa}$
and adds it to the process's tracking set (Line~\ref{line:hfread}).
If the value of $\ms{fa}$ is not $0$, indicating that there exists a concurrent slow-path transaction pending in its
tryCommit, $T_k$ returns $A_k$. Thus, the implementation provides sequential TM-progress for fast-path transactions.

Also, in every execution $E$ of $\mathcal{M}$, no fast-path write-only transaction accesses any metadata base object
and a fast-path reading transaction accesses the metadata base object $\ms{fa}$ exactly once, during the first t-read.
\end{proof}
%
\section{Related work and Discussion}
\label{sec:p4disc}
\vspace{1mm}\noindent\textbf{HyTM model.}
Our HyTM model is a natural extension of the model we specified for Software Transactional memory (cf. Chapter~\ref{ch:tm-model}), 
and has the advantage of being relatively simple.
The term \emph{instrumentation} was originally used in the context of HyTMs~\cite{hybridnorec,riegel-thesis,phasedtm}
to indicate the overhead a hardware transaction induces in order to detect pending software transactions.
The impossibility of designing HyTMs without any code instrumentation
was intuitively suggested in~\cite{hybridnorec}, we present a formal proof in this paper.

In~\cite{attiyaH13}, Attiya and Hillel considered the instrumentation cost of \emph{privatization}, \emph{i.e.}, 
allowing transactions to isolate data items by making them private to a process so that no other process is allowed to
modify the privatized item.
Just as we capture a tradeoff between the cost of hardware instrumentation and the amount
of concurrency allowed between hardware and software transactions, \cite{attiyaH13} 
captures a tradeoff between the cost of privatization and the number of transactions guaranteed to make
progress concurrently in $\ell$-progressive STMs. 
The model we consider is fundamentally different to \cite{attiyaH13}, in that we 
model hardware transactions at the level of cache coherence, 
and do not consider non-transactional accesses, \emph{i.e.}, neither data nor meta-data base objects are private in our HyTM model.
The proof techniques we employ are also different. 

Uninstrumented HTMs may be viewed as being  \emph{disjoint-access parallel (DAP)}~\cite{israeli-disjoint, AHM09}. 
As such, some of the techniques used in the proof of Theorem~\ref{instrumentation} 
resemble those used in~\cite{OFTM, tm-book, AHM09}.

We have proved that it is impossible to completely
forgo instrumentation in a 
HyTM even if only sequential TM-progress is required,
and that any opaque HyTM implementation providing non-trivial
progress either has to pay a \emph{linear} number of metadata
accesses, or will have to allow slow-path transactions to abort
fast-path operations.  
The main motivation for our definition of metadata base objects (Definition~\ref{def:metadata}) is given by 
experiments suggesting that the cost of concurrency detection is a significant bottleneck 
for many HyTM implementations~\cite{MS13}.
To precisely characterize the costs incurred by hardware transactions, we made a distinction between the set of memory
locations that store the data values of the t-objects and the locations that store the metadata information.
To the best of our knowledge, all known 
HyTM proposals, such as  
\emph{HybridNOrec}~\cite{hybridnorec,riegel-thesis}, \emph{PhTM}~\cite{phasedtm} and others~\cite{damronhytm, kumarhytm}
avoid co-locating the data and metadata within a single base object. 

\vspace{1mm}\noindent\textbf{HyTM algorithms.}
Circa 2005, several papers introduced HyTM implementations~\cite{unboundedhtm1, damronhytm, kumarhytm}
that integrated HTMs with variants of \emph{DSTM}~\cite{HLM+03}.
These implementations provide nontrivial concurrency between hardware
and software transactions (progressiveness), by imposing instrumentation on  hardware
transactions: every t-read operation incurs at least one extra access to a
metadata base object.
Our Theorem~\ref{linearlowerbound} shows that this overhead is unavoidable.
Of note, write operations of these HyTMs are also instrumented, but
our Algorithm~\ref{alg:inswrite} shows that it is not necessary. 

Implementations like \emph{PhTM}~\cite{phasedtm} and \emph{HybridNOrec}~\cite{hybridnorec}
overcome the per-access instrumentation cost of \cite{damronhytm,kumarhytm} by realizing that if 
one is prepared to sacrifice progress, hardware transactions need
instrumentation only at the boundaries of transactions to detect pending software transactions.
Inspired by this observation, our HyTM implementation described in Algorithm~\ref{alg:inswrite2} overcomes the linear per-read
instrumentation cost by allowing hardware readers to abort due to a concurrent software writer, but
maintains progressiveness for software transactions, unlike \cite{phasedtm,hybridnorec,MS13}.

References~\cite{HLR10, riegel-thesis} provide detailed overviews on HyTM designs and  implementations. 
The software component of the HyTM algorithms presented in this paper is inspired by progressive STM
implementations~\cite{DSS06,norec,KR11} and is subject to the lower bounds for progressive STMs
established in~\cite{GK09-progressiveness,attiyaH13,tm-book,KR11}. 
%


\chapter{Optimism for boosting concurrency}
\label{ch:p2c1}
\epigraph{The wickedness and the foolishness of no man can avail against the fond optimism of mankind.}
{\textit{James Branch Cabell}-The Silver Stallion}
\section{Overview}
\label{sec:p1intro}
In previous chapters, we were concerned with the inherent complexities of implementing TM.
In this chapter, we are concerned with using TM to derive concurrent implementations and raise 
a fundamental question about the ability of the TM abstraction
to transform a sequential implementation to a concurrent one.
Specifically, does the optimistic nature of TM give it an inherent advantage in exploiting concurrency
that is lacking in pessimistic synchronization techniques like locking?
To exploit concurrency, conventional lock-based synchronization 
pessimistically protects accesses to the shared memory before
executing them.
Speculative synchronization, achieved using TMs,
optimistically executes memory operations with a risk of aborting them in the future.
A programmer typically uses these synchronization techniques 
as ``wrappers'' to allow every process (or thread) to \emph{locally} run its sequential code while ensuring 
that the resulting concurrent execution is \emph{globally} correct.

Unfortunately, it is difficult for programmers to tell in advance 
which of the synchronization techniques
will establish more concurrency in their resulting programs.
In this chapter, we analyze the ``amount of concurrency'' one can obtain by turning a 
sequential program into a concurrent one.
In particular, we compare the use of optimistic and pessimistic
synchronization techniques, whose prime examples are TMs and locks respectively.

To fairly compare concurrency provided by implementations
based on various techniques,   
one has (1)~to  define what it means for a concurrent program to be
correct regardless of the type of synchronization it uses and 
(2)~to define a metric of concurrency. 

\vspace{1mm}\noindent\textbf{Correctness.}
We begin by defining a consistency criterion,
namely \emph{locally-serializable linearizability}.
We say that a concurrent implementation of a given sequential data type is
\emph{locally serializable} if it
ensures that the local execution of each 
operation 
is equivalent to \emph{some} execution of its sequential implementation.
This condition is weaker than serializability
since it does not require that there exists a \emph{single} sequential 
execution  that is consistent with all local executions.
It is however sufficient to guarantee that optimistic
executions do not observe an inconsistent transient state that could 
lead, for example, to a fatal error like division-by-zero.

Furthermore, the implementation should ``make sense'' globally, 
given the \emph{sequential type} of the data structure we implement.
The high-level history of every execution 
of a concurrent implementation must be 
\emph{linearizable}~\cite{HW90,AW04} with respect to 
this sequential type.
The combination of local serializability and linearizability gives
a correctness criterion that we call \emph{\LS-linearizability},
where {\LS} stands for ``locally serializable''.
We show that LS-linearizability is, as the original  linearizability,
compositional~\cite{HW90,HS08-book}: a composition of LS-linearizable 
implementations is also LS-linearizable. 

We apply the criterion of LS-linearizability to  
two broad classes of \emph{pessimistic} and \emph{optimistic}
synchronization techniques. 
Pessimistic implementations capture what can be achieved 
using classic locks; in contrast, optimistic implementations proceed speculatively and
fail to return a response to the process in the case of conflicts, \emph{e.g.},  
relying on transactional memory.

\vspace{1mm}\noindent\textbf{Measuring concurrency.}
We characterize the amount of concurrency provided by an LS-linearizable implementation as the set of schedules it accepts.
To this end, we define a concurrency metric 
inspired by the analysis of parallelism in database concurrency control~\cite{Yan84,Her90}.
More specifically, we assume an external scheduler that defines which
processes execute which steps of the corresponding sequential program 
in a dynamic and unpredictable fashion. 
This allows us to define concurrency provided by an implementation as the set of \emph{schedules} 
(interleavings of steps of concurrent sequential operations) 
it \emph{accepts} (is able to effectively process).
Then, the more schedules the implementation would accept, the more concurrent it would be.

We provide a framework to compare the concurrency one can get
by choosing a particular synchronization technique for a specific data type.
For the first time, we analytically capture the
inherent concurrency provided by optimism-based and pessimism-based
implementations in exploiting concurrency.
We illustrate this using a popular sequential list-based set
implementation~\cite{HS08-book}, concurrent implementations of which
are our running examples.
More precisely, we show that there exist TM-based implementations that, for some workloads, 
allow for more concurrency than \emph{any} pessimistic implementation,
but we also show that there exist pessimistic implementations that, for other workloads, allow for more 
concurrency than \emph{any} TM-based implementation.

Intuitively, an implementation based on transactions 
may abort an operation based on the way
concurrent steps are scheduled, 
while a pessimistic implementation 
has to proceed eagerly without knowing about how future steps will be 
scheduled, sometimes over-conservatively rejecting 
a potentially acceptable schedule.  
By contrast, pessimistic implementations designed to exploit
the semantics of the data type can supersede the
``semantics-oblivious'' TM-based implementations.
More surprisingly, we demonstrate that combining the benefit of pessimistic implementations, 
namely their semantics awareness, and the benefit of TMs, namely their optimism, enables implementations that are strictly 
better-suited for exploiting concurrency than any of them individually.
We describe a generic optimistic implementation of 
the list-based set that is \emph{optimal} with respect to our
concurrency metric: we show that, essentially, it accepts \emph{all} 
correct concurrent schedules.

Our results suggest that ``relaxed'' TM models that are designed with the semantics of the high-level object in mind
might be central to exploiting concurrency.

\vspace{1mm}\noindent\textbf{Roadmap of Chapter~\ref{ch:p2c1}.}
In Section~\ref{sec:otpl}, we introduce the class of optimistic and pessimistic concurrent implementations we 
consider in this chapter. Section~\ref{sec:p2c2} introduces the definition of locally serializable linearizability
and Section~\ref{sec:p2c3} is devoted to the concurrency analysis of optimistic and pessimistic synchronization
techniques in the context of the list-based set. We wrap up with concluding remarks in Section~\ref{sec:p2c5}.
%
%
\section{Concurrent implementations}
\label{sec:otpl}
\vspace{1mm}\noindent\textbf{Objects and implementations.}
As with Chapter~\ref{ch:tm-model}, we assume an asynchronous shared-memory system in which a set of $n>1$ processes $p_1,\ldots , p_n$ communicate by
applying \emph{operations} on shared \emph{objects}.

An object is an instance of an \emph{abstract data type} which specifies a set of operations that provide the only means to
manipulate the object. Recall that an \emph{abstract data type} $\tau$ is a tuple
$(\Phi,\Gamma, Q, q_0, \delta)$ where $\Phi$ is a set of operations,
$\Gamma$ is a set of responses, $Q$ is a set of states, $q_0\in Q$ is an
initial state and $\delta \subseteq Q\times \Phi \times Q\times \Gamma$ 
is a transition relation that determines, for each state
and each operation, the set of possible
resulting states and produced responses. 
In this chapter, we consider only types that are \emph{total}, \emph{i.e.}, for every $q\in Q$,
$\pi \in \Phi$, there exist  $q' \in Q$ and
$r\in \Gamma$ such that $(q,\pi,q',r) \in \delta$.
We assume that every type $\tau=(\Phi,\Gamma, Q, q_0, \delta)$ is \emph{computable}, i.e., 
there exists a Turing machine that, 
for each input $(q,\pi)$, $q \in Q$, $\pi\in \Phi$, computes
a pair $(q',r)$ such that $(q,\pi,q',r) \in \delta$.

For any type $\tau$, each high-level object $O_{\tau}$ of this type has a \emph{sequential implementation}. 
For each operation $\pi \in \Phi$, 
$\ms{IS}$ specifies a deterministic procedure that 
performs \emph{reads} and \emph{writes} on a collection of objects
$X_1,\ldots , X_m$ that encode a state of $O_{\tau}$, and returns a response $r\in \Gamma$. 

\vspace{1mm}\noindent\textbf{Sequential list-based set.}
As a running example, we consider the sorted linked-list based implementation of the type \emph{set}, commonly referred to 
as the list-based set~\cite{HS08-book}.
Recall that the set type exports operations $\lit{insert}(v)$, $\lit{remove}(v)$ and
$\lit{contains}(v)$, with $v\in\mathbb{Z}$.
Formally, the \emph{set} type is defined by the tuple $(\Phi,\Gamma, Q, q_0, \delta)$ where:
\begin{enumerate}
\item[$\Phi$] $=\{\lit{insert}(v), \lit{remove}(v), \lit{contains}(v) \}$; $v \in \mathbb{Z}$ 
\item[$\Gamma$] $=\{\true,\false\}$ 
\item[$Q$] is the set of all finite subsets of $\mathbb{Z}$; $q_0=\emptyset$
\item[$\delta$] is defined as follows:
\begin{enumerate}
\item[$(1)$:]
$(q,\lit{contains}(v),q,(v\in q))$
\item[$(2)$:]
$(q,\lit{insert}(v),q \cup \{v\},(v \not\in q))$
\item[$(3)$:]
$(q,\lit{remove}(v),q \setminus \{v\},(v \in q))$
\end{enumerate}
\end{enumerate}
\begin{algorithm*}[t]
\caption{Sequential implementation {\LL} (\textit{sorted linked list}) of \emph{set} type}
\label{alg:lists}
  \begin{algorithmic}[1]
  	
	{\footnotesize
	\Part{Shared variables}{
		\State Initially $\ms{head}$, $\ms{tail}$,
		\State ~~~$\ms{head}.val=-\infty$, $\ms{tail}.val=+\infty$
		\State ~~~$\ms{head}.next=\ms{tail}$
	}\EndPart
	
	\Statex
	
	\	\Part{$\lit{insert}(v$)}{
		\State $\ms{prev} \gets \ms{head}$  			                \Comment{copy the address}
		\State $\ms{curr} \gets \lit{read}(\ms{prev.next})$ 		\Comment{fetch the next element}
		\While{$(\ms{tval} \gets \lit{read}(\ms{curr.val})) < v $}        
			\State $\ms{prev} \gets \ms{curr}$ 				
			\State $\ms{curr} \gets \lit{read}(\ms{curr.next})$ 	\Comment{fetch from memory}
		\EndWhile
		\If{$\ms{tval} \neq v$}							\Comment{$tval$ is stored locally}
			\State $X \gets \lit{new-node}(v,\ms{prev.next})$ 	\Comment{$v$ and address of \ms{curr}}\label{line:newnode}
                        \State $\lit{write}(\ms{prev.next}, X)$ 	\Comment{next points to the new element}    \label{line:seqinswrite}          
		\EndIf
		\Return $(\ms{tval}\neq v)$ 						
		\EndReturn
   	}\EndPart
	
	\newpage
	
	\Part{$\lit{remove}(v$)}{

		\State $\ms{prev} \gets \ms{head}$  			\Comment{copy the address}
		\State $\ms{curr} \gets \lit{read}(\ms{prev.next})$ 		\Comment{fetch next field}
		\While{$(\ms{tval} \gets \lit{read}(\ms{curr.val})) < v $} 	
					\Comment{$val$ local copy}
			\State $\ms{prev} \gets \ms{curr}$ 			
			\State $\ms{curr} \gets \lit{read}(\ms{curr.next})$ 	
		\EndWhile
		\If{$\ms{tval} = v$}							
                        \State $tnext \gets \lit{read}(\ms{curr.next})$ \Comment{fetch the node after \ms{curr}} 
			\State $\lit{write}(\ms{prev.next}, tnext)$  \Comment{delete the node} 
		\EndIf
		\Return $(\ms{tval}=v)$ 						
		\EndReturn	
   	}\EndPart
	
	
	\Part{$\lit{contains}(v$)}{
	  \State $\ms{curr} \gets \ms{head}$  			            
		\State $\ms{curr} \gets \lit{read}(\ms{prev.next})$ 		
		\While{$(\ms{tval} \gets \lit{read}(\ms{curr.val})) < v $} 	
			\State $\ms{curr} \gets \lit{read}(\ms{curr.next})$ 	
		\EndWhile
 	   	\Return $(\ms{tval}=v)$						
 	   	\EndReturn
   	 }\EndPart		

}

  \end{algorithmic}
\end{algorithm*}

We consider a sequential implementation $\LL$ (Algorithm~\ref{alg:lists})
of the set type using a sorted linked list where 
each element (or \emph{object}) stores an integer value, $\ms{val}$, and a pointer to its successor, $\ms{next}$, so that elements are 
sorted in the ascending order of their value.  

Every operation invoked with a parameter $v$ traverses the list starting from the
$\ms{head}$ up to the element storing value $v'\geq v$.
If $v'=v$, then $\lit{contains}(v)$ returns $\lit{true}$, $\lit{remove}(v)$ unlinks the 
corresponding element and returns $\lit{true}$, and $\lit{insert}(v)$ returns $\lit{false}$. Otherwise, 
$\lit{contains}(v)$ and $\lit{remove}(v)$ return
$\lit{false}$ while $\lit{insert}(v)$ adds a new element with value
$v$ to the list and returns $\lit{true}$. 
The list-based set 
is denoted by $(\LL,\ms{set})$.

\vspace{1mm}\noindent\textbf{Concurrent implementations.}
We tackle the problem of turning the sequential
implementation $\ms{IS}$ of type $\tau$ into a \emph{concurrent} one, shared by 
$n$ processes. The implementation provides the processes with algorithms 
for the reads and writes on objects.
We refer to the resulting implementation as a concurrent implementation of $(\id{IS},\tau)$.
As in Chapter~\ref{ch:tm-model}, we assume an asynchronous shared-memory system in which the processes communicate by
applying primitives on shared \emph{base objects}~\cite{Her91}.
We place no upper bounds on the number of versions an object may maintain or on the size of this object.

Throughout this chapter, the term \emph{operation} refers to some
high-level operation of the type, 
while read-write operations on objects are referred simply 
as \emph{reads} and \emph{writes}.

An implemented read or write may \emph{abort} by returning a special response
$\bot$. In this case, we say that the corresponding high-level
operation is \emph{aborted}. 
The $\bot$ event is treated both as the response event of the read or
write operation and as the response of the corresponding high-level operation.   

\vspace{1mm}\noindent\textbf{Executions and histories.}
An \emph{execution} of a concurrent implementation (of $(\id{IS},\tau)$) is a sequence
of invocations and responses of high-level operations of type $\tau$, 
invocations and responses of read and write
operations, and primitives applied on base-objects.
We assume that executions are \emph{well-formed}:
no process invokes a new read or write, or high-level operation before
the previous read or write, or a high-level operation, resp., 
returns, or takes steps outside its read or write operation's interval.

Let $\alpha|p_i$ denote the subsequence of an execution $\alpha$
restricted to the events of process $p_i$.
Executions $\alpha$ and $\alpha'$ are \emph{equivalent} if for every process
$p_i$, $\alpha|p_i=\alpha'|p_i$.
An operation $\pi$ \emph{precedes} another operation $\pi'$ in an execution
$\alpha$, 
denoted $\pi \rightarrow_{\alpha} \pi'$, 
if the response of $\pi$ occurs before the invocation of $\pi'$.
Two operations are \emph{concurrent} if neither precedes
the other. 
An execution is \emph{sequential} if it has no concurrent 
operations. 
A sequential execution $\alpha$ is \emph{legal} 
if for every object $X$, every read of $X$ in $\alpha$ 
returns the latest written value of $X$.
An operation is \emph{complete} in $\alpha$ if the invocation event is
followed by a \emph{matching} (non-$\bot$) response or aborted; otherwise, it is \emph{incomplete} in $\alpha$.
Execution $\alpha$ is \emph{complete} if every operation is complete in $\alpha$.

The \emph{history exported by an execution $\alpha$} is
the subsequence of $\alpha$ reduced to the invocations and responses
of operations,  reads and writes, except for the reads
and writes that return $\bot$. 

\vspace{1mm}\noindent\textbf{High-level histories and linearizability.}
A \emph{high-level history} $\tilde H$ of an execution $\alpha$ is the subsequence of $\alpha$ consisting of all
invocations and responses of 
(high-level) operations.
\begin{definition}[Linearizability]
A complete high-level history $\tilde H$ is \emph{linearizable} with 
respect to an object type $\tau$ if there exists
a sequential high-level history $S$ equivalent to $\tilde H$ such that
(1) $\rightarrow_{\tilde H}\subseteq \rightarrow_S$ and
(2) \emph{$S$ is consistent with the sequential specification of type $\tau$}.

Now a high-level history $\tilde H$ is linearizable if it can be
\emph{completed} (by adding matching responses to a subset of
incomplete operations in $\tilde H$ and removing the rest)
to a linearizable high-level history~\cite{HW90,AW04}.
\end{definition}
\vspace{1mm}\noindent\textbf{Obedient implementations.}
We only consider implementations that satisfy the following condition:
Let $\alpha$ be any complete sequential execution of a concurrent implementation $I$.
Then in every execution of $I$ of the form $\alpha\cdot\rho_1\cdots \rho_k$
where each $\rho_i$ ($i=1,\ldots,k$) is the complete execution of a
read, every read returns the value written by the last write that does
not belong to an aborted operation.

Intuitively, this assumption restricts our scope to
``obedient'' implementations of reads and writes, where no
read value may depend on some future write.   
In particular, we filter out implementations in which the
complete execution of a high-level operation is performed within the
first read or write of its sequential implementation.

\vspace{1mm}\noindent\textbf{Pessimistic implementations.}
Informally, a concurrent implementation is \emph{pessimistic} if the exported history contains
every read-write event that appears in the execution. 
More precisely, no execution of a pessimistic implementation includes
operations that returned $\bot$.  

For example, a class of pessimistic implementations are those based on locks.
A lock provides
shared or exclusive access to an object $X$ through 
synchronization primitives $\lit{lock}^S(X)$ (\emph{shared mode}),
$\lit{lock}(X)$ (\emph{exclusive mode}),  
and $\lit{unlock}(X)$.
When $\lit{lock}^S(X)$ (resp. $\lit{lock}(X)$) invoked
by a process $p_i$ returns, we say that $p_i$ \emph{holds
a lock on $X$ in shared (resp. exclusive) mode}.
A process \emph{releases} the object it holds by invoking
$\lit{unlock}(X)$.  
If no process holds a shared or exclusive
lock on $X$, then $\lit{lock}(X)$
eventually returns;
if no process holds an exclusive
lock on $X$, then $\lit{lock}^S(X)$
eventually returns; and
if no process holds a lock on $X$ forever, then every $\lit{lock}(X)$ or $\lit{lock}^S(X)$
eventually returns. Given a sequential implementation of a data type, 
a corresponding lock-based concurrent one 
is derived by inserting the synchronization primitives
to provide read-write access to an object. 

\vspace{1mm}\noindent\textbf{Optimistic implementations.}
In contrast with pessimistic ones, optimistic implementations may, under
certain conditions, abort an operation:
some read or write may return $\bot$,
in which case the corresponding operation also returns $\bot$.

Popular classes of optimistic implementations are those based on
``lazy synchronization''~\cite{HHL+05,HS08-book} (with the ability of
returning $\bot$ and re-invoking an operation) or transactional memory.
%
%
\section{Locally serializable linearizability}
\label{sec:p2c2}
We are now ready to define the correctness criterion that we impose on our
concurrent implementations.

Let $H$ be a history and let $\pi$ be a high-level operation in $H$. 
Then $H|\pi$ denotes the subsequence of $H$ consisting of the events
of $\pi$, except for the last aborted read or write, if any.
Let $\id{IS}$ be a sequential implementation of an object of type
$\tau$ and $\Sigma_{\id{IS}}$, the set of histories of $\id{IS}$. 
\begin{definition}[LS-linearizability]
\label{def:lin}
A history ${H}$ is \emph{locally serializable with respect to}
${\id{IS}}$ if for every high-level operation $\pi$ in $H$,
there exists $S \in \Sigma_{\id{IS}}$ such that $H|\pi=S|\pi$.
A history ${H}$ is \emph{\LS-linearizable with respect to
$(\id{IS},\tau)$}  (we also write $H$ is $(\id{IS},\tau)$-LSL)  if:
(1) ${H}$ is locally serializable with respect to
$\id{IS}$ and (2) the corresponding high-level history $\tilde H$ 
is linearizable with respect to $\tau$.
\end{definition}
Observe that local serializability stipulates that the execution is 
witnessed sequential by every operation.
Two different operations (even when invoked by the same process) are not
required to witness mutually consistent sequential executions.

A concurrent implementation $I$ is \emph{\LS-linearizable with respect to
$(\id{IS},\tau)$} (we also write $I$ is $(\id{IS},\tau)$-LSL)
if every history exported by $I$ is $(\id{IS},\tau)$-LSL.   
Throughout this paper, when we refer to a concurrent implementation of $(\id{IS},\tau)$, 
we assume that it is \LS-linearizable with respect to $(\id{IS},\tau)$.

\begin{figure*}[t]
 \includegraphics[scale=0.45]{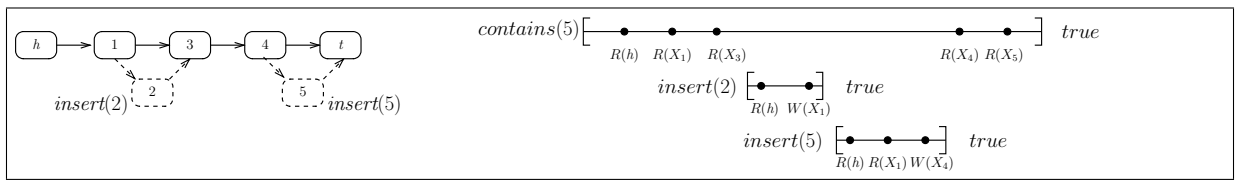}
 \caption{\small{A concurrency scenario for a list-based set, initially $\{1,3,4\}$, where value $i$ is stored at node $X_i$:
   $\lit{insert}(2)$ and  $\lit{insert}(5)$ can proceed
   concurrently with $\lit{contains}(5)$, the history is
   LS-linearizable but not serializable. (We only depict important read-write
   events here.)}}\label{fig:ex1}%
\end{figure*}
\vspace{1mm}\noindent\textbf{LS-linearizability is compositional.}
Just as linearizability, \LS-linearizability is
\emph{compositional}~\cite{HW90,HS08-book}: a composition of LSL 
implementations is also LSL.
We define the composition of two distinct object types $\tau_1$ and $\tau_2$ 
as a type $\tau_1\times\tau_2=(\Phi,\Gamma,Q,q_0,\delta)$ as follows: 
$\Phi=\Phi_1\cup \Phi_2$, $\Gamma=\Gamma_1\cup 
\Gamma_2$,\footnote{Here we treat each $\tau_i$ as a distinct type by adding
index $i$ to all elements of $\Phi_i$, $\Gamma_i$, and $Q_i$.}   
$Q=Q_1\times Q_2$,
$q_0=({q_0}_1,{q_0}_2)$, and  $\delta \subseteq Q\times \Phi \times Q\times
\Gamma$ is such that $((q_1,q_2),\pi,(q_1'q_2'),r)\in\delta$ if and only if for $i\in \{1,2\}$, if 
$\pi\in \Phi_i$ then $(q_i,\pi,q_i',r)\in\delta_i$ $\wedge$ $q_{3-i}=q^{\prime}_{3-i}$.

Every sequential implementation $\id{IS}$ of an object  $O_1\times O_2$ of a
composed type $\tau_1\times\tau_2$ naturally induces two sequential
implementations $I_{S1}$ and $I_{S2}$ of objects $O_1$ and $O_2$,
respectively. 
Now a correctness criterion 
$\Psi$
is \emph{compositional} if for every
history $H$ on an object composition $O_1\times O_2$, 
if 
$\Psi$
holds for $H|O_i$ with
respect to $I_{Si}$, for $i \in \{1,2\}$, then
$\Psi$
holds for $H$ with
respect to $\id{IS}=I_{S1}\times I_{S2}$.
Here, $H|O_i$ denotes the subsequence of $H$ consisting of events on $O_i$.
\begin{theorem}
\label{th:comp}
\LS-linearizability is compositional. 
\end{theorem}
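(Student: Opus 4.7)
The plan is to decompose LS-linearizability into its two defining clauses and dispatch each clause separately on the two components $O_1$ and $O_2$, then reassemble. Let $H$ be a history on $O_1 \times O_2$ and assume that $H|O_i$ is $(I_{Si},\tau_i)$-LSL for $i\in\{1,2\}$. I must produce, for every high-level operation $\pi$ in $H$, a witness $S \in \Sigma_{IS}$ with $H|\pi = S|\pi$, and I must show that $\tilde H$ is linearizable with respect to $\tau_1 \times \tau_2$.

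First I would handle the high-level linearizability clause. This is essentially a restatement of the classical compositionality of linearizability due to Herlihy and Wing~\cite{HW90}. Since $\widetilde{H|O_i} = \tilde H|O_i$ and each $\tilde H|O_i$ is linearizable with respect to $\tau_i$ by assumption, the standard argument (build a partial order from the per-object linearizations plus the real-time precedence on $\tilde H$, show it is acyclic, extend it to a total order) yields a sequential high-level history $\tilde S$ equivalent to $\tilde H$ that respects $\to_{\tilde H}$ and is consistent with the sequential specification of $\tau_1 \times \tau_2$ by the definition of $\delta$ for the composed type (each transition affects only one component). Completions are handled as usual by choosing the same completion of incomplete operations on each $O_i$ as the component-level proof would provide; these are mutually consistent because each incomplete operation touches only one component.

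Next I would dispatch local serializability. A key simplifying observation is that, by the definition of $\tau_1\times\tau_2$, every operation $\pi \in \Phi$ lies in exactly one $\Phi_i$, and hence (under an obedient implementation $IS = I_{S1}\times I_{S2}$) the read/write events of $\pi$ touch only base objects belonging to $O_i$. Consequently $H|\pi = (H|O_i)|\pi$. By the assumed local serializability of $H|O_i$ with respect to $I_{Si}$, there is $S_i \in \Sigma_{I_{Si}}$ with $(H|O_i)|\pi = S_i|\pi$. It then suffices to view $S_i$ as a history $S \in \Sigma_{IS}$ by regarding it as an execution of $IS$ in which no operation on the other component is invoked; this is a legal sequential history of $IS$ because the transition relation of $\tau_1\times\tau_2$ leaves the $O_{3-i}$-component untouched on $\Phi_i$-operations, and $I_{S1}$ and $I_{S2}$ operate on disjoint sets of base objects. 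Then $S|\pi = S_i|\pi = H|\pi$, as required.

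The main obstacle I anticipate is not the local-serializability clause, which is almost immediate from the per-operation locality of the definition, but the careful bookkeeping in the linearizability part: namely, justifying that completions of pending operations chosen separately on $O_1$ and on $O_2$ can be combined without clashing, and that the real-time order on $\tilde H$ is preserved when the two per-component linearizations are merged. This is exactly the content of the classical proof of compositionality of linearizability~\cite{HW90,HS08-book}, which I would invoke directly rather than re-derive; the only thing needing a short verification is that the completion convention for LS-linearizability (which may drop a trailing aborted read/write inside an operation) is compatible with the high-level completion step, and this is true because dropped aborted events are base-object events, not high-level invocations or responses, and so do not appear in $\tilde H$ at all.
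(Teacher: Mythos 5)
Your proposal is correct and follows essentially the same route as the paper's proof: the high-level clause is discharged by invoking the classical compositionality of linearizability, and local serializability is obtained from per-component witnesses together with the fact that the two components act on disjoint sets of objects, so a component witness lifts to a sequential history of $I_{S1}\times I_{S2}$. The only cosmetic difference is that the paper interleaves witnesses $S^1_\pi$ and $S^2_\pi$ from both components into a single history of $\id{IS}$, whereas you embed the single relevant component witness directly; both rest on the same disjointness observation (your appeal to ``obedience'' is a misnomer here, but the underlying claim is exactly what the paper uses).
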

\begin{proof}
Let $H$, a history on $O_1\times O_2$,  be \LS-linearizable
with respect to $\id{IS}$. 
Let each $H|O_i$, 
$i\in\{1,2\}$, 
be \LS-linearizable with respect to $I_{Si}$. 
Without loss of generality,  we assume that $H$ is complete (if $H$
is incomplete, we consider any completion of it containing
\LS-linearizable completions of  $H|O_1$ and $H|O_1$).

Let $\tilde H$ be a completion of the high-level history corresponding to $H$ such that
$\tilde H|O_1$ and $\tilde H|O_2$ are linearizable with respect to $\tau_1$
and $\tau_2$, respectively. Since linearizability is
compositional~\cite{HW90,HS08-book}, $\tilde H$ is linearizable with respect to $\tau_1\times\tau_2$.

Now let, for each operation $\pi$, $S_{\pi}^1$ and $S_{\pi}^2$ be any two sequential histories of
$I_{S1}$ and $I_{S2}$  such that $H|\pi|O_j=S_{\pi}^j|\pi$, for $j \in \{1,2\}$
(since 
$H|O_1$
and $H|O_2$ are \LS-linearizable such histories exist).
We construct a sequential history $S_{\pi}$ by interleaving events of
$S_{\pi}^1$ and $S_{\pi}^2$ so that $S_{\pi}|O_j=S_{\pi}^j$, 
$j\in\{1,2\}$.
Since each $S_{\pi}^j$ acts on a distinct component $O_j$ of $O_1\times
O_2$, every such $S_{\pi}$ is a sequential history of $\id{IS}$.
We pick one $S_{\pi}$ that respects the local history $H|\pi$,
which is possible, since $H|\pi$ is consistent with both    
$S_1|\pi$ and $S_2|\pi$. 

Thus, for each $\pi$, we obtain a history of $\id{IS}$ that agrees with
$H|\pi$. Moreover, the high-level history of $H$ is linearizable. Thus, $H$ is \LS-linearizable with respect to $\id{IS}$. 
\end{proof}
\vspace{1mm}\noindent\textbf{LS-linearizability versus other criteria.}
LS-linearizability is a two-level consistency criterion which makes it
suitable to compare concurrent implementations of a sequential data
structure, regardless of synchronization techniques they use.
It is quite distinct from related criteria designed for database and software
transactions, such as serializability~\cite{Pap79-serial,WV02-book} and
multilevel serializability~\cite{Wei86,WV02-book}.

For example, serializability~\cite{Pap79-serial} prevents sequences of reads and writes from conflicting in a cyclic way, 
establishing a  global order of transactions.
Reasoning only at the level of reads and writes may be overly conservative:
higher-level operations may commute even if their reads and writes conflict~\cite{Wei88}.
Consider an execution of a concurrent \emph{list-based set} depicted in 
Figure~\ref{fig:ex1}.
We assume here that the set initial state is $\{1,3,4\}$.
Operation $\lit{contains}(5)$ is concurrent, first with 
operation $\lit{insert}(2)$ and then with operation $\textsf{insert}(5)$. 
The history is not serializable:
$\lit{insert}(5)$ sees 
the effect of $\lit{insert}(2)$ because $R(X_1)$ by $\lit{insert}(5)$ returns the value
of $X_1$ that is updated by $\lit{insert}(2)$ and
thus should be serialized after it. But $\lit{contains}(5)$ misses
element $2$ in the linked list, but must see the
effect of $\lit{insert}(5)$ to perform the read of $X_5$, i.e., the element created by $\lit{insert}(5)$.  
However, this history is LSL since each of the three local histories is consistent with some
sequential history of $\LL$. 

Multilevel serializability~\cite{Wei86,WV02-book} was 
proposed to reason in terms of multiple semantic levels in the same execution.
\LS-linearizability, being defined for two levels only, does not require a global serialization of low-level operations as
$2$-level serializability does. 
LS-linearizability simply requires each process  to observe a local serialization, which can be different from one
process to another. Also, to make it more suitable for concurrency
analysis of a concrete data structure, instead of semantic-based commutativity~\cite{Wei88}, we use the sequential
specification of the high-level behavior of the object~\cite{HW90}.

Linearizability~\cite{HW90,AW04} only accounts for high-level
behavior of a data structure,  so it does not imply
LS-linearizability. For example, Herlihy's universal
construction~\cite{Her91} provides a linearizable implementation for
any given object type, but does not guarantee that each execution locally appears
sequential with respect to any sequential implementation of the type.    
Local serializability, by itself, does not require any synchronization
between processes and can be trivially implemented without
communication among the processes.
Therefore, the two parts of LS-linearizability indeed complement each other.  

%

%
\section{Pessimistic vs. optimistic synchronization}
\label{sec:p2c3}
In this section, we compare the relative abilities of optimistic and pessimistic synchronization techniques
to exploit concurrency in the context of the list-based set.

To characterize the ability of a concurrent implementation to process arbitrary interleavings of sequential code, we introduce 
the notion of a \emph{schedule}.
Intuitively, a schedule describes the order in which complete high-level
operations, and \emph{sequential} reads and writes are invoked by the user. 
More precisely, a schedule is 
an equivalence class of complete histories that agree on
the \emph{order} of invocation and response events of reads, writes and high-level operations, but 
not necessarily on read \emph{values} or high-level responses.
Thus, a schedule can be treated as a history, where responses of reads and operations
are not specified. 

We say that an implementation $I$ \emph{accepts} a schedule $\sigma$ if 
it exports a history $H$ such that $\ms{complete}(H)$ exhibits
the order of $\sigma$, where $\ms{complete}(H)$ is the subsequence of $H$
that consists of the events of the complete operations that returned a matching response. 
We then say that the execution (or history) \emph{exports} $\sigma$. 
A schedule $\sigma$ is 
$(\ms{IS},\tau)$-LSL if there
exists an $(\id{IS},\tau)$-LSL history that exports $\sigma$.

A \emph{synchronization technique} is a set of concurrent implementations.
We define a specific optimistic synchronization technique and then
a specific pessimistic one.

\vspace{1mm}\noindent\textbf{The class $\mathcal{SM}$.}
Formally, $\mathcal{SM}$ denotes the set of optimistic, safe-strict serializable LSL implementations.

Let $\alpha$ denote the execution of a concurrent implementation and
$\ms{ops}(\alpha)$, 
the set of operations each of which performs at least one event in $\alpha$.
Let ${\alpha}^k$ denote the prefix of $\alpha$ up to the last event of operation $\pi_k$.
Let $\ms{Cseq}(\alpha)$ denote the set of subsequences of ${\alpha}$  that
consist of all the events of operations that are complete in $\alpha$. 
We say that $\alpha$ is \emph{strictly serializable} if 
there exists a legal sequential execution $\alpha'$ equivalent to
a sequence in $\sigma\in\ms{Cseq}(\alpha)$
such that $\rightarrow_{\sigma} \subseteq \rightarrow_{\alpha'}$. 

We focus on optimistic implementations that are strictly
serializable and, in addition, guarantee that every operation (even aborted or incomplete) observes correct (serial)
behavior.  
More precisely, an execution $\alpha$ is  \emph{safe-strict serializable} if
(1) $\alpha$ is strictly serializable, and
(2) for each operation $\pi_k$, 
there exists a legal sequential execution
$\alpha'=\pi_0\cdots \pi_i\cdot \pi_k$ and
$\sigma\in\ms{Cseq}(\alpha^k)$ such that $\{\pi_0,\cdots, \pi_i\}
\subseteq \ms{ops}(\sigma)$ and $\forall \pi_m\in \ms{ops}(\alpha'):{\alpha'}|m={\alpha^k}|m$.

Similar to other relaxations of opacity~\cite{tm-book} like \emph{TMS1}~\cite{DGLM13} and \emph{VWC}~\cite{damien-vw-jv},  
safe-strict serializable implementations ($\mathcal{SM}$) require that every transaction (even aborted and incomplete) observes
``correct'' serial behavior. 
Safe-strict serializability captures nicely both local serializability
and linearizability. 
If we transform a sequential implementation
$\id{IS}$ of a type $\tau$ into a \emph{safe-strict serializable} concurrent one, 
we obtain an LSL implementation of $(\id{IS},\tau)$. 
Thus, the following lemma is immediate.
\begin{lemma}
Let $I$ be a safe-strict serializable implementation of $(\id{IS},\tau)$.
Then, $I$ is \LS-linearizable with respect to $(\id{IS},\tau)$.
\end{lemma}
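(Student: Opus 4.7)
Let $\alpha$ be any execution of $I$, let $H$ be its exported history and let $\tilde H$ be the corresponding high-level history. I need to establish two things: (i) $\tilde H$ is linearizable with respect to $\tau$, and (ii) for every high-level operation $\pi$ appearing in $H$, there is a sequential history $S\in\Sigma_{\id{IS}}$ with $H|\pi = S|\pi$.

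First, I would dispatch linearizability of $\tilde H$ using the strict-serializability part of the hypothesis. Strict serializability hands us a legal sequential execution $\alpha''$ equivalent to some $\sigma\in\ms{Cseq}(\alpha)$ with $\rightarrow_\sigma\subseteq\rightarrow_{\alpha''}$. Projecting $\alpha''$ onto its high-level invocation/response events yields a sequential high-level history $\tilde S$ that is consistent with $\tau$ (since each $\pi_j$ in $\alpha''$ runs its sequential code $\id{IS}$ on a legal sequential interleaving of reads and writes) and that respects the real-time order of complete operations in $\tilde H$. To cover operations of $\tilde H$ that are not in $\sigma$, I would invoke the standard completion step from the definition of linearizability: incomplete high-level operations that do not appear in $\tilde S$ are simply discarded, while those whose response was committed but omitted from $\tilde H$ can be assigned a matching response consistent with $\tilde S$. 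This yields a linearization of $\tilde H$.

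Next, for local serializability, I would fix an arbitrary high-level operation $\pi_k$ in $H$ and apply condition~(2) of safe-strict serializability to the prefix $\alpha^k$ of $\alpha$. This yields a legal sequential execution $\alpha'=\pi_0\cdots \pi_i\cdot\pi_k$ with $\alpha'|m = \alpha^k|m$ for every $\pi_m\in\ms{ops}(\alpha')$, in particular $\alpha'|k = \alpha^k|k$. Because $\alpha'$ is sequential and legal, each $\pi_j$ in it executes the sequential code of $\id{IS}$ against a consistent read/write interleaving, so the history $S$ exported by $\alpha'$ lies in $\Sigma_{\id{IS}}$. Since $H|\pi_k$ is obtained from $\alpha|k$ by dropping at most a trailing aborted read or write, and $\alpha'|k$ equals $\alpha^k|k = \alpha|k$ up to that same trailing $\bot$ event (which cannot appear in the legal sequential $\alpha'$), we get $S|\pi_k = H|\pi_k$, as required.

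The step I expect to require the most care is precisely the reconciliation between the ``except for the last aborted read or write'' clause in the definition of $H|\pi$ and the shape of $\alpha'|k$ promised by safe-strict serializability: one must argue that, if $\pi_k$ aborted in $\alpha$, then $\alpha^k|k$ agrees with the legal sequential $\alpha'|k$ exactly on the non-$\bot$ prefix, so that deleting the aborting event from $H|\pi_k$ matches $S|\pi_k$ on the nose. Once that alignment is spelled out, combining it with the linearizability argument above gives LS-linearizability of every execution of $I$, and hence of $I$ itself.
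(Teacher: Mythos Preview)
Your proposal is correct and follows the same approach the paper intends: the paper actually states the lemma as ``immediate'' and offers only a one-sentence justification (completed operations witness a common sequential execution by strict serializability, and every operation---including aborted ones---observes a consistent serial prefix by condition~(2)), so your write-up simply unpacks that intuition into explicit steps. The alignment issue you flag around the trailing $\bot$ event is indeed the only place requiring care, and you have identified it correctly.
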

Indeed, by running each operation of $\id{IS}$ within a transaction of
a safe-strict serializable TM, we make sure that completed operations witness the same execution of $\id{IS}$, and
every operation that returned $\bot$ is consistent with some
execution of $\id{IS}$ based on previously completed operations. 

\vspace{1mm}\noindent\textbf{The class $\mathcal{P}$.}
This denotes the set of \emph{deadlock-free} pessimistic
LSL implementations: assuming that every process takes
enough steps, at least one of the concurrent operations return a matching response~\cite{HS11-progress}.
Note that $\mathcal{P}$ includes implementations that are not necessarily safe-strict serializable.
\subsection{Concurrency analysis}
\label{sec:p2c3s1}
\begin{figure*}
 \includegraphics[scale=0.45]{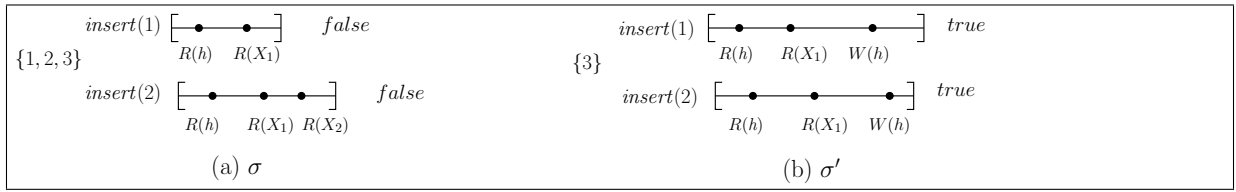}
 \caption{\small{%
(a) a history exporting schedule $\sigma$, with initial state
   $\{1,2,3\}$, accepted by $I^{LP}\in \mathcal{SM}$; 
(b) a history exporting a problematic schedule $\sigma'$, with initial state 
   $\{3\}$, which should be accepted by any $I\in\mathcal{P}$ if it accepts $\sigma$}}\label{fig:ex2}%
\vspace{-0.35mm}
\end{figure*}
We now provide a concurrency analysis of synchronization techniques $\mathcal{SM}$ and $\mathcal{P}$
in the context of the list-based set.

\vspace{1mm}\noindent\textbf{A pessimistic implementation $I^H\in \mathcal{P}$ of $(\LL,\ms{set})$.}
We describe a pessimistic implementation of $(\LL,\ms{set})$, $I^H \in \mathcal{P}$,
that accepts non-serializable schedules: each read operation performed by $\lit{contains}$ 
acquires the \emph{shared lock} on the object, 
reads the $\ms{next}$ field of the element before releasing the shared lock on the predecessor element 
in a \emph{hand-over-hand} manner~\cite{BS88}.
Update operations ($\lit{insert}$ and
$\lit{remove}$) acquire the \emph{exclusive lock} on the $\ms{head}$ during $\lit{read}(\ms{head})$ 
and release it at the end. Every other read operation performed 
by update operations simply reads the element $\ms{next}$ field to traverse the list. The write operation
performed by an $\lit{insert}$ or a $\lit{remove}$ acquires the exclusive lock, writes the value
to the element and releases the lock.
There is no real concurrency between any two update operations since the process holds the 
exclusive lock on the $\ms{head}$ throughout the operation execution.
Thus:
\begin{lemma}
$I^H$ is deadlock-free and LSL implementation of $(\LL,\ms{set})$.
\end{lemma}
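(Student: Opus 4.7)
I will establish the two parts of the lemma separately, beginning with deadlock-freedom and then turning to LS-linearizability. For deadlock-freedom, the plan is to exhibit a partial order on the lock acquisitions performed by $I^H$ that precludes any circular wait. Observe that every update operation acquires the exclusive lock on $\ms{head}$ as its first synchronization step and releases all locks only when its operation completes; thus the set of executing updates at any point in an execution is totally ordered by the head-lock acquisition order, and at most one update holds head-exclusive at a time. A $\lit{contains}$ operation acquires only shared locks, and does so in the strictly forward direction of the list (hand-over-hand from $\ms{head}$ to its target), so a $\lit{contains}$ never waits on a lock held by another $\lit{contains}$. The only remaining wait is an update $U$ that holds head-exclusive and is requesting exclusive access to some element $X_k$ already shared-locked by one or more $\lit{contains}$ operations; but each such $\lit{contains}$ is poised either to release its shared lock on $X_k$ (when hand-over-handing to $X_{k+1}$) or to return, and none of them wait on any lock held by $U$. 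Hence $U$'s wait is bounded and eventually resolves, giving progress.

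For LS-linearizability, the plan has two parts. First, I exhibit a linearization of the high-level history: update operations are linearized at the instant they acquire head-exclusive (this gives a total order on all updates consistent with real-time order, since updates hold head-exclusive throughout), and each $\lit{contains}(v)$ is linearized at the instant it acquires the shared lock on the first element $X$ with $X.\ms{val}\geq v$ in its traversal (or at its completion if no such element exists). Using the standard hand-over-hand invariant---that while a $\lit{contains}$ holds the shared lock on $X$, no update can modify $X$ or the incoming pointer to $X$---one checks that the set state at each such linearization point is consistent with the return value of the operation, and that real-time order is preserved. Second, I verify local serializability: for every operation $\pi$ in the exported history $H$, I produce a sequential history $S$ of $\LL$ with $H|\pi=S|\pi$. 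For an update $\pi$, $S$ is obtained by serially replaying all updates that linearized before $\pi$ followed by $\pi$ itself; since updates run mutually exclusive on head, $\pi$'s local reads and writes coincide with those prescribed by $\LL$ on this replayed state. For a $\lit{contains}$ $\pi$, $S$ is obtained by serially replaying all updates that linearized before $\pi$ followed by $\pi$; the shared-lock discipline ensures that the specific chain of elements traversed by $\pi$ in $H$ coincides with the chain that $\LL$ would traverse on that replayed state.

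The main obstacle is the second part of the LSL argument: a $\lit{contains}$ operation may hand-over-hand past an element $X_k$ and then observe, at a later element, pointer values that reflect updates which in real time occurred \emph{after} $\pi$ began but whose effects on already-traversed elements would break a single global serialization. The plan for handling this is to exploit the very weakness of LS-linearizability: unlike serializability, each local history is permitted its own witness sequential execution. Concretely, the key invariant to isolate and prove is that for any contiguous suffix of the list starting at the element currently shared-locked by $\pi$, the pointers and values observed by $\pi$ along its remaining traversal are consistent with the list state at the moment $\pi$ acquired its current lock; modifications made by updates to elements already released by $\pi$ are immaterial to $\pi$'s local history and can simply be placed after $\pi$ in $\pi$'s witness. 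Combining this invariant with the earlier observation that updates are totally ordered by the head-exclusive lock yields the required witness $S$ for each operation, completing the LSL proof.
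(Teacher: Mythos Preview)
The paper itself gives no detailed proof of this lemma (it is stated immediately after the sentence observing that updates are serialized via the head-exclusive lock), so your plan must stand on its own. The deadlock-freedom argument is fine. The linearizability argument, however, has a concrete error: linearizing an update at the instant it \emph{acquires} head-exclusive is too early, because the update's write may not yet have taken effect. Take the list $\{1,3,5\}$ and run $\lit{contains}(4)$ hand-over-hand; once it releases $\ms{head}$, $\lit{insert}(4)$ can acquire head-exclusive (your linearization point for the insert), but it cannot write $X_3.\ms{next}$ until $\lit{contains}(4)$ releases its shared lock on $X_3$, which happens only \emph{after} $\lit{contains}(4)$ acquires the shared lock on $X_5$ (your linearization point for the contains). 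Your order is thus $\lit{insert}(4)$ before $\lit{contains}(4)$, yet $\lit{contains}(4)$ returns $\lit{false}$ --- violating the set specification. The fix is to linearize a successful update at its effective write to $\ms{prev}.\ms{next}$; this still totally orders updates (the write occurs while head-exclusive is held) and correctly places $\lit{insert}(4)$ after $\lit{contains}(4)$ in the example.

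Your local-serializability argument for $\lit{contains}$ is also off. ``Replay all updates linearized before $\pi$, then $\pi$'' does not reproduce $H|\pi$: in the Figure~\ref{fig:ex1} scenario $\lit{contains}(5)$ skips $X_2$ even though $\lit{insert}(2)$ may linearize before it, so that witness would force a read of $X_2$ that is absent from $H|\pi$. Your proposed repair (the invariant that the suffix from the currently-held lock matches the list state at the moment that lock was acquired) is false for the same reason: an insert \emph{ahead} of the traversal --- e.g., $\lit{insert}(5)$ while $\lit{contains}(5)$ still holds $X_1$ --- changes that suffix before $\pi$ reaches it. The far simpler argument, the one the paper uses for $I^{RM}$ in Lemma~\ref{lem:rls}, is that the elements traversed by a $\lit{contains}$ have strictly increasing values (hand-over-hand guarantees each $\ms{next}$ pointer read is current while the lock is held), so the witness $S^{\pi}$ is just the sequence of inserts of exactly those values followed by $\pi$.
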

On the one hand, the schedule of $(\LL,\ms{set})$ depicted in Figure~\ref{fig:ex1}, which we denote by $\sigma_0$, is not 
serializable and must be rejected by any implementation in $\mathcal{SM}$.
However, there exists an execution of $I^H$ that exports $\sigma_0$
since there is no read-write conflict on any two consecutive elements accessed.

On the other hand, consider the schedule $\sigma$ of $(\LL,\ms{set})$ in Figure~\ref{fig:ex2}(a).
Clearly, $\sigma$ is serializable and is accepted by implementations based on most progressive TMs since there
is no read-write conflict.
For example, let $I^{LP}$ denote an implementation of $(\id{IS},\tau)$ based on the progressive opaque
TM implementation $LP$ in Algorithm~\ref{alg:ic} (Chapter~\ref{ch:p3c2}). Then, $I^{LP}\in \mathcal{SM}$
and the schedule $\sigma$ is accepted by $I^{LP}$.
However, we prove that $\sigma$ is not accepted by any implementation in $\mathcal{P}$.
Our proof technique is interesting in its own right: we show that
if there exists any implementation in $\mathcal{P}$ that accepts $\sigma$, it must also
accept the schedule $\sigma'$ depicted in Figure~\ref{fig:ex2}(b). In $\sigma'$, 
$\lit{insert}(2)$ overwrites the write on \emph{head} performed by $\lit{insert}(1)$
resulting in a lost update. By deadlock-freedom, there exists an extension of $\sigma'$ in which
a $\lit{contains}(1)$ returns $\false$; but this is not a linearizable schedule. 
\begin{theorem}
\label{th:plm}
There exists a schedule $\sigma_0$ of $(\LL,\ms{set})$ that is accepted by an implementation in
$\mathcal{PL}$, but not accepted by \emph{any} implementation $I\in \mathcal{SM}$.
\end{theorem}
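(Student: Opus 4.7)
The plan is to exhibit a concrete witness schedule $\sigma_0$ and argue separately that (a) the pessimistic hand-over-hand implementation $I^H \in \mathcal{P}$ does accept it, and (b) no implementation in $\mathcal{SM}$ can, because $\sigma_0$ is not serializable. The natural candidate is the schedule already previewed in Figure~\ref{fig:ex1} (with initial set $\{1,3,4\}$ and concurrent operations $\lit{insert}(2)$, $\lit{insert}(5)$, $\lit{contains}(5)$), since the text has already noted that this schedule is LSL but not serializable.

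First I would verify that $I^H$ accepts $\sigma_0$ by tracing through its locking discipline on the particular interleaving. The point is that $\lit{contains}(5)$ uses hand-over-hand shared locks, so by the time each updating operation grabs the exclusive lock on $\ms{head}$ (or on the cells it actually mutates), $\lit{contains}(5)$ has already released the shared lock on that cell, and vice versa. A careful event-by-event check shows that the only lock-conflict points lie outside the interleaving prescribed by $\sigma_0$, so the history exported by $I^H$ on this interleaving completes without blocking, hence $\sigma_0$ is accepted.

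Next I would show that every $I \in \mathcal{SM}$ rejects $\sigma_0$. By definition, a safe-strict serializable execution of $I$ must be strictly serializable, so in particular every complete history exporting $\sigma_0$ would have to be equivalent to a legal sequential history of $\LL$ that respects the real-time order. I would then build the serialization graph for $\sigma_0$ and exhibit a cycle: $\lit{insert}(5)$ reads the pointer in $X_1$ updated by $\lit{insert}(2)$, forcing $\lit{insert}(2) \to \lit{insert}(5)$; $\lit{contains}(5)$ traverses past $X_1$ without observing value $2$, forcing $\lit{contains}(5) \to \lit{insert}(2)$; but $\lit{contains}(5)$ later reads the node $X_5$ created by $\lit{insert}(5)$, forcing $\lit{insert}(5) \to \lit{contains}(5)$. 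Since no sequential legal history of $\LL$ realizes this cycle, no $I \in \mathcal{SM}$ can export a history with the order of $\sigma_0$.

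The main obstacle is the second direction: one has to rule out \emph{every} $\mathcal{SM}$ implementation, not just ``reasonable'' ones. The subtlety is that $\mathcal{SM}$ allows aborted or incomplete operations to appear in a history, so I must be careful that the argument uses only the completed-operations projection $\ms{Cseq}(\alpha)$ on which safe-strict serializability constrains the ordering, and that the read values witnessed in $\sigma_0$ (which a schedule leaves unspecified) can be pinned down using $\LL$'s sequential code---\emph{i.e.}, that any completion consistent with $\sigma_0$'s interleaving must yield exactly the read-from dependencies producing the cycle above. Once this reduction from ``schedule accepted'' to ``cycle in the dependency graph'' is made rigorous, the impossibility for $\mathcal{SM}$ follows directly from the definition of strict serializability.
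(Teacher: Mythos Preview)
Your proposal is correct and follows essentially the same route as the paper: both use the schedule $\sigma_0$ of Figure~\ref{fig:ex1}, argue that $I^H$ accepts it via hand-over-hand shared locking, and rule out $\mathcal{SM}$ by exhibiting the same dependency cycle $\lit{contains}(5)\to\lit{insert}(2)\to\lit{insert}(5)\to\lit{contains}(5)$.

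One small point worth noting: the obstacle you flag---pinning down the read-from dependencies when a schedule does not fix read values---is precisely what the paper handles by an explicit case split. Rather than appealing to a serialization graph, the paper asks whether the read of $X_4$ by $\lit{contains}(5)$ returns the value written by $\lit{insert}(5)$ or the initial value. In the first case the cycle you describe appears; in the second case $X_4.\ms{next}$ still points to $\ms{tail}$, so according to $\LL$ the operation $\lit{contains}(5)$ cannot reach $X_5$ at all, contradicting the schedule. This two-case argument is exactly the ``reduction from schedule accepted to cycle'' that you say must be made rigorous, so when you write the proof out you should include it explicitly.
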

\begin{proof}
Let $\sigma_0$ be the schedule of $(\LL ,\ms{set})$ depicted in Figure~\ref{fig:ex1}.
Suppose by contradiction that $\sigma_0 \in\S(I)$, where $I$ is an implementation of $(\LL, \ms{set})$ based on any safe-strict serializable TM.
Thus, there exists an execution $\alpha$ of $I$ that exports $\sigma_0$.
Now consider two cases:
(1) Suppose that the read of $X_4$ by $\lit{contains}(5)$ 
returns the value of $X_4$ that is updated by $\lit{insert}(5)$.
Since $\lit{insert}(2) \rightarrow_{\alpha} \lit{insert}(5)$, 
$\lit{insert}(2)$ must precede $\lit{insert}(5)$ in any sequential execution $\alpha'$ equivalent to $\alpha$. 
Also, since $\lit{contains}(5)$ reads $X_1$ prior to its update by $\lit{insert}(2)$, 
$\lit{contains}(5)$ must precede $\lit{insert}(2)$ in $\alpha'$. 
But then the read of $X_4$ is not legal in $\alpha'$---a contradiction since $\alpha$ must be serializable.
(2) Suppose that $\lit{contains}(5)$
reads the initial value of $X_4$, \emph{i.e.}, its value prior to the write to $X_4$ by $\lit{insert}(5)$, 
where $X_4.\ms{next}$ points to the
\emph{tail} of the list (according to our sequential implementation $\LL$).
But then, according to $\LL$, $\lit{contains}(5)$ cannot access
$X_5$ in $\sigma_0$---a contradiction.  

Consider the pessimistic implementation $I^H \in \mathcal{P}$:
since the $\lit{contains}$ operation traverses the list using shared hand-over-hand locking, 
the process $p_i$ executing $\lit{contains}(5)$
can release the lock on element $X_1$ prior to the acquisition of the exclusive lock on $X_1$ by $\lit{insert}(2)$.
Similarly, $p_i$ can acquire the shared lock on $X_4$ immediately after the release of the 
exclusive lock on $X_4$ by the process executing $\lit{insert}(5)$ while still holding 
the shared lock on element $X_3$. Thus, there exists an execution of $I^H$ that exports $\sigma_0$.
\end{proof}%
\begin{theorem}
\label{th:mpl}
There exists a schedule $\sigma$ of $(\LL,\ms{set})$ 
that is accepted by an implementation in $ \mathcal{SM}$,
but not accepted by \emph{any} implementation in $\mathcal{P}$.
\end{theorem}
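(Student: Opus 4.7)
I take $\sigma$ to be the schedule of $(\LL,\ms{set})$ shown in Figure~\ref{fig:ex2}(a), starting from initial set state $\{1,2,3\}$. The proof has two parts: (i) exhibit an implementation in $\mathcal{SM}$ that accepts $\sigma$, and (ii) show that \emph{no} deadlock-free pessimistic LSL implementation can accept $\sigma$. Part (i) is routine: take $I^{LP}\in\mathcal{SM}$ obtained by wrapping the sequential code $\LL$ in the progressive opaque TM $LP$ (Algorithm~\ref{alg:ic}). Inspection of $\sigma$ together with $\LL$ shows that the concurrent high-level operations access pairwise disjoint t-objects; progressiveness of $LP$ then guarantees no transaction aborts, so $I^{LP}$ has an execution exporting $\sigma$.

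For part (ii), I argue by contradiction: suppose $I\in\mathcal{P}$ has an execution $\alpha$ exporting $\sigma$. Using $\alpha$, I construct a second execution $\alpha'$ of $I$ that exports the schedule $\sigma'$ depicted in Figure~\ref{fig:ex2}(b), whose only differences from $\sigma$ are the initial state ($\{3\}$ instead of $\{1,2,3\}$) and the parameters supplied to the concurrent update operations (two concurrent $\lit{insert}(1)$ and $\lit{insert}(2)$, both of which, given the sequential code $\LL$, start by reading $\ms{head}$ and must then write to $\ms{head}$). The construction of $\alpha'$ proceeds step by step, matching the read/write events of $\alpha$: for each process, the sequential code $\LL$ dictates the \emph{same} pattern of read and write invocations as in $\alpha$ because pessimism forbids any read or write from returning $\bot$, so no process can be ``short-circuited'' by an abort; and by obedience each read in $\alpha'$ returns the latest written value, which I argue is consistent with a legal continuation of $\LL$ under the new initial state. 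The locking or other metadata pattern forced by $I$ can be carried over verbatim since it is determined by the read/write interleaving and cannot block indefinitely (by deadlock-freedom).

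Once $\alpha'$ is in hand, the lost-update phenomenon is immediate: both $\lit{insert}(1)$ and $\lit{insert}(2)$ read the initial value of $\ms{head}$ and then each writes its own fresh node's address to $\ms{head}$, so whichever write comes last erases the other insertion from the list. I then extend $\alpha'$ by running a new process that invokes $\lit{contains}(1)$ solo; deadlock-freedom forces a matching response, and since the node created by $\lit{insert}(1)$ is no longer reachable from $\ms{head}$, the traversal in $\LL$ returns $\false$. But the resulting high-level history has both $\lit{insert}(1)$ and $\lit{insert}(2)$ returning $\true$ and strictly preceding a $\lit{contains}(1)$ that returns $\false$, which admits no linearization with respect to the \emph{set} type. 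This contradicts the assumption that $I$ is LSL, completing the proof.

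\textbf{Main obstacle.} The delicate step is the construction of $\alpha'$ from $\alpha$, namely the indistinguishability argument showing that $I$ truly cannot tell the two initial states apart at the read/write level. I must verify that (a) the sequential code $\LL$, when driven by the read values that appear in $\alpha'$, issues \emph{exactly} the read/write invocations required by $\sigma'$, and (b) the pessimistic synchronization layer of $I$ cannot insert extra blocking that is inconsistent across the two executions. Point (a) relies on the specific structure of $\LL$ (linear traversal, each step depending only on the locally read next pointer), while (b) exploits the assumption that $I$ is obedient and deadlock-free so that its lock-acquisition decisions depend only on the stream of reads and writes it has already observed, not on the semantic meaning of inserted values.
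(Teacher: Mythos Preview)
Your approach mirrors the paper's: use the schedule $\sigma$ of Figure~\ref{fig:ex2}(a), show it is accepted by $I^{LP}\in\mathcal{SM}$, and show that any $I\in\mathcal{P}$ accepting $\sigma$ would also have to accept the lost-update schedule $\sigma'$ of Figure~\ref{fig:ex2}(b), contradicting linearizability via a trailing $\lit{contains}$.

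Two points need tightening. First, in part (i), the concurrent operations in $\sigma$ do \emph{not} access disjoint t-objects (both read $\ms{head}$ and $X_1$). The reason $I^{LP}$ accepts $\sigma$ is that with initial state $\{1,2,3\}$ both $\lit{insert}(1)$ and $\lit{insert}(2)$ are read-only (they return $\false$), so there is no read-write \emph{conflict}; that is what progressiveness requires, and it is strictly weaker than disjointness. Also, the operations in $\sigma$ and $\sigma'$ carry the \emph{same} parameters; only the initial state differs.

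Second, and more substantively, your indistinguishability claim overshoots. You write that in $\alpha'$ ``the sequential code $\LL$ dictates the \emph{same} pattern of read and write invocations as in $\alpha$,'' but this is false: in $\alpha$ neither insert writes (each finds its value already present), whereas in $\alpha'$ both must write to $\ms{head}$. The two executions agree only on a \emph{prefix}---up to the point where both operations have invoked their read of $X_1$. The paper makes precisely this cut: no process can distinguish $\alpha$ from $\alpha'$ \emph{before the reads of $X_1$ return}; it then takes the prefix $\alpha''$ of $\alpha'$ ending with $\lit{insert}(2)$'s $R(X_1)$, invokes deadlock-freedom to get a complete extension, and uses \emph{local serializability} (not just obedience) to conclude that in that extension both operations must write to $\ms{head}$. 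Your construction needs this two-stage structure---match on the prefix, extend by deadlock-freedom, then argue the writes are forced by local serializability. As written, ``carry over verbatim'' cannot produce an $\alpha'$ containing the writes, because there are no writes in $\alpha$ to carry over.
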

\begin{proof}
We show first that the schedule $\sigma$ of $(\LL,\ms{set})$ depicted
in Figure~\ref{fig:ex2}(a) is not accepted by any implementation in $\mathcal{P}$.
Suppose the contrary and let $\sigma$ be exported by an execution $\alpha$. 
Here $\alpha$ starts with three sequential $\lit{insert}$ operations with
parameters $1$, $2$, and $3$. The resulting ``state'' of the set is
$\{1,2,3\}$, where value $i\in \{1,2,3\}$ is stored in object $X_i$.   

Suppose, by contradiction, that some $I\in \mathcal{P}$ accepts $\sigma$. 
We show that $I$ then accepts the schedule $\sigma'$ depicted in Figure~\ref{fig:ex2}(b), which starts with a sequential
execution of $\lit{insert}(3)$ storing value $3$ in object $X_1$. 

Let $\alpha'$ be any history of $I$ that exports $\sigma'$.
Recall that we only consider obedient implementations:
in $\alpha'$: the read of \emph{head} by $\lit{insert}(2)$ in $\sigma'$ refers to $X_1$ (the next element to be read by $\lit{insert}(2)$). 
In $\alpha$, element $X_1$ stores value $1$,
\emph{i.e.}, $\lit{insert}(1)$ can safely return $\lit{false}$, while in
$\sigma'$, $X_1$ stores value $3$, i.e., the next step of
$\lit{insert}(1)$ must be a write to \emph{head}.       
Thus, no process
can distinguish $\alpha$ and $\alpha'$ 
before the
read operations on $X_1$ return. 
Let $\alpha''$ be the prefix of $\alpha'$ ending with $R(X_1)$
executed by $\lit{insert}(2)$.
Since $I$ is deadlock-free, we have an extension of $\alpha''$ in
which both $\lit{insert}(1)$ and $\lit{insert}(2)$ terminate; we show that this extension violates linearizability. 
Since $I$ is locally-serializable, to respect our sequential implementation
of $(LL,\ms{set})$, both operations should complete the write to \emph{head}
before returning. 
Let $\pi_1=\lit{insert}(1)$ be the first operation to write to
\emph{head} in this extended execution. 
Let $\pi_2=\lit{insert}(2)$ be the other insert operation.
It is clear that $\pi_1$ returns $\true$ even though $\pi_2$ overwrites the update of $\pi_1$ on $\ms{head}$
and also returns $\true$. 
Recall that implementations in $\mathcal{P}$ are deadlock-free. 
Thus,   
we can further extend the execution with a complete $\lit{contains}(1)$
that will return $\false$ (the element inserted to the list by $\pi_1$
is lost)---a contradiction since $I$ is linearizable with respect to \emph{set}. 
Thus, $\sigma\notin\S(I)$ for any $I\in \mathcal{P}$.

On the other hand, the schedule $\sigma$ is accepted by $I^{LP}\in \mathcal{SM}$, since
there is no conflict between the two concurrent update operations.
\end{proof}
\subsection{Concurrency optimality}
\label{sec:p2c3s2}
We now combine the benefits of semantics awareness of implementations in $\mathcal{P}$ and 
the optimism of $\mathcal{SM}$ to derive a generic optimistic implementation 
of the list-based set
that supersedes every implementation in classes $\mathcal{P}$ and
$\mathcal{SM}$ in terms of concurrency.
Our implementation, denoted $I^{RM}$ provides processes with algorithms for 
implementing read and write operations on the elements of the list for each operation of the 
list-based set (Algorithm~\ref{alg:elastic}).
\begin{algorithm*}[t]
\caption{Code for process $p_k$ implementing reads and writes in implementation $I^{RM}$}
\label{alg:elastic}
  \begin{algorithmic}[1]
	{\footnotesize
	
	\Part{Shared variables}{
	  \State for each object $X_{\ell}$:
	  \State ~~~~~~$\ms{t-var}[\ell]$, initially 0
	  \State ~~~~~~$r[\ell]$, initially $\false$
	  \State ~~~~~~$L[\ell]\in \N \times \{\lit{true}, \lit{false}\}$ supports $\lit{read}$
	  \State ~~~~~~~~~~~$\lit{write}, \lit{cas}$ operations, initially $\tup{0, \lit{false}}$
	}\EndPart	
	
   	\Statex
	  
	\Part{Local variables of process $p_k$}{
	  \State $\ms{rbuf}_k[i] \subset X \times \N$; $i=\{1,2\}$ cyclic buffer of size $2$,  
	  \State ~~~initially $\emptyset$
	}\EndPart	
	
	\Statex
%
	
   	\Part{$\lit{read}_k(X_{\ell})$ executed by $\lit{insert}$, $\lit{remove}$, $\lit{contains}$}{	  
	  \State $\tup{\ms{ver_1}, *} \gets L[\ell].read()$ \label{line:rver} \Comment{get versioned lock}
	  \State $val \gets \ms{t-var}[\ell].read()$ \Comment{get value} \label{line:linr}
	  \State $r \gets r[\ell].read()$ \label{line:reread}
	  \State $\tup{\ms{ver_2}, *} \gets L[\ell].read()$ \Comment{reget versioned lock}
	  \If{$(\ms{ver}_1 \neq \ms{ver}_2) \vee r$} \label{line:flagcheck}
	  	\Return $\bot$ \label{line:elastic:abort1} \EndReturn
	  \EndIf
	  \State $\ms{rbuf}_k.\lit{add}(\tup{X_{\ell},\ms{ver}_{1}})$
          \Comment{override penultimate entry}
	  \Return $\ms{val}$ \EndReturn \label{line:tx-read}
	}\EndPart
	  
	
	\Statex

	\Part{$\lit{write}_k(X_{\ell},v)$ executed by $\lit{remove}$}{
	  \State {\bf let} $\ms{oldver}_\ell$ be such that $\tup{X_\ell, \ms{oldver}_\ell} \in \ms{rbuf}_k$
	  \State $\ms{ver} \gets \ms{oldver}_\ell$
	  \If{$ \neg L[\ell].cas(\tup{\ms{ver}, \lit{false}}, \tup{\ms{ver}, \lit{true}})$} \label{line:linw}
	  		\Return $\bot$  \label{line:elastic:abort2}\Comment{grab lock or abort}	  	\EndReturn
	  \EndIf
	  \State {\bf let} $X_{\ell'}\neq X_{\ell}$ be such that $\{ X_{{\ell}^\prime},ver_{\ell'} \}\in \ms{rbuf}_k$
	  \If{$ \neg L[\ell'].cas(\tup{\ms{ver}_{\ell'}, \lit{false}}, \tup{\ms{ver}_{\ell'}, \lit{true}})$} \label{line:linw2}
	  		\Return $\bot$  \label{line:elastic:abort2}\Comment{grab lock or abort}	  	\EndReturn
	  \EndIf
	  \State $r[\ell^\prime].write(\true)$ \label{line:elastic:gc}  \Comment{mark element for deletion}
	      
	  \State $\ms{t-var}[\ell].write(v)$ \Comment{update memory} \label{line:commit-update}
	  \State $L[\ell].\lit{write}(\tup{\ms{ver}+1, \lit{false}})$\Comment{release locks} \label{line:release}
	  \State $L[\ell'].\lit{write}(\tup{\ms{ver}_{\ell'}+1, \lit{false}})$ \label{line:release2}
	  \Return $\lit{ok}$ \EndReturn
	}\EndPart

	\Statex
	\Part{$\lit{write}_k(X_{\ell},v)$ executed by $\lit{insert}$}{
	  \State {\bf let} $\ms{oldver}_\ell$ be such that $\tup{X_\ell, \ms{oldver}_\ell} \in \ms{rbuf}_k$
	  \State $\ms{ver} \gets \ms{oldver}_\ell$
	  \If{$ \neg L[\ell].cas(\tup{\ms{ver}, \lit{false}}, \tup{\ms{ver}, \lit{true}})$} \label{line:inswrite}
	  		\Return $\bot$  \label{line:elastic:abort3}\Comment{grab lock or abort}	  	\EndReturn
	  \EndIf
      \State $\ms{t-var}[\ell].write(v)$ \Comment{update memory} \label{line:inscommit}
      \State $L[\ell].\lit{write}(\tup{\ms{ver}+1, \lit{false}})$\Comment{release locks} \label{line:insrel}
	  \Return $\lit{ok}$ \EndReturn
	}\EndPart

  }
  \end{algorithmic}
\end{algorithm*}

%

%
%

Every object (or element) $X_{\ell}$ is
specified 
by the following shared variables: $\ms{t-var}[\ell]$
stores the \emph{value} $v\in V$ of $X_{\ell}$, $r[\ell]$ stores
a boolean indicating if $X_{\ell}$ is \emph{marked for deletion}, 
$L[\ell]$ stores a tuple of the \emph{version number} of $X_{\ell}$ and a \emph{locked} flag; 
the latter indicates whether a concurrent process is performing a write to $X_{\ell}$.

Any operation with input parameter $v$ traverses the list starting from the
$\ms{head}$ element up to the element storing value $v'\geq v$ without writing to shared memory.
If a read operation on an element conflicts with a write operation to the same element or
if the element is marked for deletion, the operation terminates by returning $\bot$.
While traversing the list, the process maintains the last two read elements and their version numbers 
in the local rotating buffer $\ms{rbuf}$. If none of the read operations performed by $\lit{contains}(v)$ return $\bot$
and if $v'=v$, then $\lit{contains}(v)$ returns $\lit{true}$; otherwise it returns $\lit{false}$.
Thus, the $\lit{contains}$ does not write to shared memory.

To perform write operation to an element as part of an update operation ($\lit{insert}$ and $\lit{remove}$), the process 
first retrieves the version of the object that belongs to its rotating buffer.
It returns $\bot$ if the version has been changed since the previous read of the element 
or if a concurrent process is executing a write to the same element.
Note that, technically, $\bot$ is returned only if $\ms{prev.next}\not\rightarrow \ms{curr}$.
If $\ms{prev.next}\rightarrow \ms{curr}$, then we attempt to lock the element with the current version
and return $\bot$ if there is a concurrent process executing a write to the same element.
But we avoid expanding
on this step in our algorithm pseudocode.
The write operation performed by the $\lit{remove}$ operation, additionally checks if the element to be removed
from the list is locked by another process; if not, it sets a flag on the element to mark it for deletion.
If none of the read or write operations performed during the $\lit{insert}(v)$ or $\lit{remove}(v)$ returned $\bot$,
appropriate matching responses are returned as prescribed by the sequential implementation $\LL$.
Any update operation of $I^{RM}$ uses at most two expensive synchronization
patterns~\cite{AGK11-popl}.

\vspace{1mm}\noindent\textbf{Proof of LS-linearizability.}
Let $\alpha$ be an execution of $I^{RM}$ and $<_\alpha$
denote the total-order on events in $\alpha$.
For simplicity, we assume that $\alpha$ starts with an artificial
sequential execution of an insert operation $\pi_0$ that inserts $\ms{tail}$ and sets $\ms{head}.\ms{next}=\ms{tail}$. 
Let $H$ be the history exported by $\alpha$, where 
all reads and writes are sequential. 
We construct $H$ by associating a linearization point $\ell_{op}$ with each
non-aborted read or write operation $op$ performed in $\alpha$  
as follows:
\begin{itemize}
\item  if $op$ is a read, then performed by process $p_k$,
  $\ell_{op}$ is the base-object $\lit{read}$ in line~\ref{line:linr};
\item  if $op$ is a write within an $\lit{insert}$ operation,
  $\ell_{op}$ is the base-object $\lit{cas}$ in line~\ref{line:linw};
\item  if $op$ is a write  within a $\lit{remove}$ operation,
  $\ell_{op}$ is the base-object  $\lit{cas}$ in line~\ref{line:inswrite}.
\end{itemize}
We say that a $\lit{read}$ of an element $X$ within an operation $\pi$
is \emph{valid} in $H$ 
(we also say that $X$ is \emph{valid}) if 
there does not exist any $\lit{remove}$ operation $\pi_1$ that \emph{deallocates} $X$ (removes $X$ from the list)
such that $\ell_{\pi_1.\lit{write}(X)} <_{\alpha}
\ell_{\pi.\lit{read}(X)}$. 
\begin{lemma}
\label{lem:invar1}
Let $\pi$ be any operation performing $\lit{read}(X)$ followed by $\lit{read}(Y)$ in $H$. 
Then (1) there exists an $\lit{insert}$ operation that sets $X.\ms{next}=Y$ prior to $\pi.\lit{read}(X)$, and
(2) $\pi.\lit{read}(X)$ and $\pi.\lit{read}(Y)$ are \emph{valid} in $H$.
\end{lemma}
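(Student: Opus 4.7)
The plan is to prove both parts by exploiting the algorithm's use of versioned locks $L[\cdot]$ and deletion flags $r[\cdot]$, which serve as the sole means by which a successful read attests to the absence of concurrent structural changes. My argument will reduce each claim to a careful analysis of the orderings among three reads in the implementation of $\lit{read}(X)$ (the two reads of $L[X]$ in lines~\ref{line:rver} and after line~\ref{line:reread}, plus the read of $r[X]$) and the critical section of any concurrent removing write.

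For part (1), I would proceed by induction on the prefix of $\alpha$ ending just before $\pi.\lit{read}(X)$, maintaining the invariant that for every currently-stored value in $\ms{t-var}[X]$ whose next-field is $Y$, some preceding non-aborted write has installed this value. The base case is the artificial initial insert $\pi_0$ establishing $\ms{head}.\ms{next}=\ms{tail}$. For the inductive step, the only writes to $\ms{t-var}[X]$ occur at line~\ref{line:commit-update} (within a remove) and line~\ref{line:inscommit} (within an insert), each protected by a successful $\lit{cas}$ on $L[X]$ that serializes concurrent writers on $X$. Since $\pi$'s read of $X$ must have observed a next-field pointing to $Y$ in order to proceed to $\lit{read}(Y)$ according to the sequential code of $\LL$, some such write installed the link $X.\ms{next}=Y$ before $\ell_{\pi.\lit{read}(X)}$. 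I will interpret the paper's phrase ``insert operation'' as covering any write operation that establishes the link (a remove that skips an intermediate element also qualifies), and argue uniformly for both cases.

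For part (2), I would proceed by contradiction for $\pi.\lit{read}(X)$ (the argument for $\pi.\lit{read}(Y)$ is symmetric). Suppose some remove $\pi_1$ deallocates $X$ with $\ell_{\pi_1.\lit{write}(X)} <_\alpha \ell_{\pi.\lit{read}(X)}$. The deallocation of $X$ by $\pi_1$ entails: locking $X$ via $\lit{cas}$ on $L[X]$ at line~\ref{line:linw2}, marking $r[X].\lit{write}(\true)$ at line~\ref{line:elastic:gc}, performing the linearizing update at line~\ref{line:commit-update} on the predecessor, and finally releasing $L[X]$ with an incremented version at line~\ref{line:release2}. I would then case-split on where the three reads of $\pi.\lit{read}(X)$ interleave with this protocol: (i) if both version reads straddle $\pi_1$'s $\lit{cas}$ on $L[X]$ or its release, the versions differ and $\pi$ aborts at line~\ref{line:flagcheck}; (ii) if the read of $r[X]$ falls strictly after line~\ref{line:elastic:gc} and strictly before line~\ref{line:release2}, then $r[X]=\true$ is observed and $\pi$ aborts at line~\ref{line:flagcheck}; (iii) otherwise, the entire read sequence lies either wholly before $\pi_1$'s $\lit{cas}$ on $L[X]$ or wholly after its release, contradicting either the assumption that $\ell_{\pi_1.\lit{write}(X)}$ precedes $\ell_{\pi.\lit{read}(X)}$ or the assumption that the read succeeded with a stale, now-deallocated value. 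In every case, $\pi.\lit{read}(X)$ must return $\bot$, contradicting the hypothesis that $\pi.\lit{read}(X)$ belongs to $H$.

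The main obstacle will be (iii): cleanly arguing that there is no interleaving in which $\pi$'s three reads all fall between $\pi_1$'s linearization point and its release of $L[X]$ without either the flag check or one of the version checks detecting contention. This requires pinning down the exact ordering of writes within $\pi_1$'s critical section (in particular, that $r[X].\lit{write}(\true)$ strictly precedes the release at line~\ref{line:release2} and that both follow the $\lit{cas}$ that linearizes the write), and observing that $\pi$'s re-read of $L[X]$ occurs strictly after its read of $r[X]$, which in turn is strictly after its first read of $L[X]$.
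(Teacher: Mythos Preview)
Your plan for part~(1) is reasonable, and you are right that a $\lit{remove}$ can also install a link $X.\ms{next}=Y$; the paper finesses this by arguing inductively along the traversal from $\ms{head}$ rather than by a global invariant on $\ms{t\text{-}var}$.

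Part~(2) has a concrete gap in case~(i). The $\lit{cas}$ on $L[X]$ at line~\ref{line:linw2} toggles only the \emph{locked} flag; the version number is incremented only at the release in line~\ref{line:release2}. Since the check at line~\ref{line:flagcheck} compares just the version component (the code pattern-matches $\langle\ms{ver}_1,*\rangle$ and $\langle\ms{ver}_2,*\rangle$), two version reads that both fall between the $\lit{cas}$ and the release agree and do not abort. This leaves an interleaving uncovered by any of your three cases: $\pi_1$ performs its linearizing $\lit{cas}$ on the predecessor (line~\ref{line:linw}) and then the $\lit{cas}$ on $L[X]$; then $\pi$ executes lines~\ref{line:rver}, \ref{line:linr}, \ref{line:reread} (reading $r[X]=\false$, since line~\ref{line:elastic:gc} has not yet executed); then $\pi_1$ sets $r[X]$; then $\pi$ re-reads $L[X]$. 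Here $\ms{ver}_1=\ms{ver}_2$ and $r=\false$, so $\pi.\lit{read}(X)$ succeeds even though $\ell_{\pi_1.\lit{write}}$ (at line~\ref{line:linw}) precedes $\ell_{\pi.\lit{read}(X)}$ (at line~\ref{line:linr}). This is exactly the ``main obstacle'' you flag, and your proposed resolution---ordering $r[X].\lit{write}(\true)$ before the release---does not close it, because $\pi$'s read of $r[X]$ can simply land \emph{before} that write while both of $\pi$'s version reads land inside the critical section.

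For comparison, the paper's proof does not attempt this interleaving case split. It proceeds by induction along the traversal and, for validity of $\lit{read}(Y)$, asserts directly that deallocation of $Y$ sets its $r$-flag so that line~\ref{line:flagcheck} forces $\bot$; the supporting claim that line~\ref{line:commit-update} precedes line~\ref{line:linr} is stated rather than derived from the declared linearization points, and is subject to the same scrutiny you are applying.
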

\begin{proof} 
Let $\pi$ be any operation in $I^{RM}$ that performs $\lit{read}(X)$ followed by $\lit{read}(Y)$.
If $X$ and $Y$ are \emph{head} and \emph{tail} respectively, $\ms{head}.\ms{next}=\ms{tail}$ (by assumption). Since no $\lit{remove}$ operation
deallocates the \emph{head} or \emph{tail}, the $\lit{read}$ of $X$ and $Y$ are \emph{valid} in $H$.

Now, let $X$ be the \emph{head} element and suppose that $\pi$ performs $\lit{read}(X)$ followed by $\lit{read}(Y)$; $Y\neq \ms{tail}$ in $H$.
Clearly, if $\pi$ performs a $\lit{read}(Y)$, there exists an
operation $\pi'=\lit{insert}$ that has previously set $\emph{head}.\ms{next}=Y$.
More specifically, $\pi.\lit{read}(X)$ performs the action in line~\ref{line:linr} after the
write to shared memory by $\pi'$ in line~\ref{line:inscommit}. 
By the assignment of linearization points to tx-operations, $\ell_{\pi'} <_{\alpha} \ell_{\pi.\lit{read}(X)}$.
Thus, there exists an $\lit{insert}$ operation that sets $X.\ms{next}=Y$ prior to $\pi.\lit{read}(X)$ in $H$.

For the second claim, we need to prove that the $\lit{read}(Y)$ by $\pi$ is \emph{valid} in $H$.
Suppose by contradiction that $Y$ has been deallocated by some $\pi''=\lit{remove}$ operation prior to $\lit{read}(Y)$ by $\pi$.
By the rules for linearization of read and write operations, the action in line~\ref{line:commit-update} precedes the action in line~\ref{line:linr}.
However, $\pi$ proceeds to perform the check in line~\ref{line:flagcheck} and returns $\bot$ since the flag corresponding to the element $Y$ is previously set by $\pi''$. Thus, $H$ does not contain $\pi.\lit{read}(Y)$---contradiction.

Inductively, by the above arguments, every non-\emph{head} $\lit{read}$ by $\pi$ 
is performed on an element previously created by an $\lit{insert}$ operation
and is valid in $H$.
\end{proof} 
\begin{lemma}
\label{lem:rls}
$H$ is locally serializable with respect to $\LL$.
\end{lemma}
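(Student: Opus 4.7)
The plan is to show that for every high-level operation $\pi$ appearing in $H$, one can exhibit a sequential history $S\in\Sigma_{\LL}$ satisfying $S|\pi=H|\pi$. The central structural tool is Lemma~\ref{lem:invar1}: every two consecutive object reads $\lit{read}(X),\lit{read}(Y)$ in $H|\pi$ are backed by a genuine prior $\lit{insert}$ that set $X.\ms{next}=Y$, and both $X$ and $Y$ are valid (not yet unlinked from the list) at the moments $\pi$ reads them. In particular, the reads of $\pi$ trace a real chain $\ms{head}=X_0,X_1,\ldots,X_m$ of elements with strictly increasing observed values $v_1<\cdots<v_m$.

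First I would handle $\pi=\lit{contains}(v)$, whose $H|\pi$ consists only of reads. I build $S$ by starting from the empty initial list, sequentially inserting $v_1,\ldots,v_m$ to create the list $X_0\to X_1\to\cdots\to X_m\to\ms{tail}$, and then executing $\pi$. Because $\LL$ is deterministic and the list state in $S$ at the invocation of $\pi$ matches the chain $\pi$ traversed in $\alpha$, the sequential run of $\pi$ performs the same reads with the same returned values and the same final boolean response, giving $S|\pi=H|\pi$.

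For $\pi=\lit{insert}(v)$ or $\pi=\lit{remove}(v)$, the same bootstrapping applies, but I must additionally reproduce the write event(s) recorded in $H|\pi$. For $\lit{insert}(v)$ returning $\true$, $\pi$'s write sets $\ms{prev.next}$ to a freshly allocated element encoding $v$, and running $\pi$ on the bootstrapped list in $\LL$ performs precisely this write, after allocating the same element identifier as in $\alpha$ (which is legitimate because $\LL$ is free to choose any fresh address on allocation). For $\lit{remove}(v)$ returning $\true$, $\pi$ first reads $\ms{curr.next}$ to obtain some successor $X_{m+1}$ with value $v_{m+1}>v$; by Lemma~\ref{lem:invar1} applied to these two reads this successor exists and is valid, so I prepend an extra $\lit{insert}(v_{m+1})$ to the bootstrap, after which the sequential execution of $\pi$ reads the same successor and issues the same $\lit{write}(\ms{prev.next},X_{m+1})$.

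The hard part will be making the equality $S|\pi=H|\pi$ rigorous event by event, since in $H$ the object-level reads and writes are implemented through $I^{RM}$'s versioned-CAS and lock machinery, whereas in $S$ they are issued by the plain sequential code of $\LL$. Two observations tame this gap: (i) $H$ records only object-level invocations and responses, not base-object primitives; and (ii) the arguments and return values of $\pi$'s object-level operations are pure functions of the chain $\pi$ traversed, and the bootstrapping makes that chain literally the list state at the start of $\pi$ in $S$. A final side remark: the chain observed by $\pi$ in $\alpha$ need not have been an instantaneous snapshot of the list, but local serializability only demands a per-operation witness, and the bootstrapping builds such a witness by construction, independently of what other operations did in $\alpha$.
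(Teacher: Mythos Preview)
Your proposal is correct and follows essentially the same approach as the paper: construct $S^{\pi}$ by first sequentially inserting the values $\pi$ observed along its traversal (so the list state matches the chain $\pi$ read, guaranteed by Lemma~\ref{lem:invar1}), and then run $\pi$ itself. The paper's own proof compresses your case analysis for $\lit{contains}$, $\lit{insert}$, and $\lit{remove}$ into a single sentence (``insert values read by $\pi$, one by one, followed by $\pi$''), whereas you explicitly handle the successor element for a successful $\lit{remove}$ and the fresh-address issue for a successful $\lit{insert}$---both are points the paper glosses over but that your treatment makes precise.
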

\begin{proof}
By Lemma~\ref{lem:invar1}, every element $X$ read 
within an operation $\pi$ 
is previously created by an $\lit{insert}$ operation and is valid in $H$.
Moreover, if the read operation on $X$ returns $v'$, then 
$X.\textit{next}$ stores a pointer to another valid element that
stores an integer value $v''>v'$.
Note that the series of reads performed by $\pi$ terminates as soon as 
an element storing value $v$ or higher is found. Thus, $\pi$ performs at most
$O(|v-v_0|)$ reads, where $v_0$ is the value of the second element read by $\pi$.  
Now we construct $S^{\pi}$ as a sequence of $\lit{insert}$ operations,
that insert values read by $\pi$, one by one, followed by $\pi$. 
By construction, $S^{\pi}\in\Sigma_{\ms{LL}}$.
\end{proof}
It is sufficient for us to prove that every finite high-level history $H$
of $I^{RM}$ is linearizable.
%
First, we obtain a completion $\tilde H$ of $H$ as follows.
The invocation of an incomplete \lit{contains} operation is discarded.
The invocation of an incomplete $\pi=\lit{insert} \vee \lit{remove}$
operation that has not returned successfully from the $\lit{write}$
operation is discarded; otherwise, it is completed with response $\true$.

We obtain a sequential high-level history $\tilde S$ equivalent to $\tilde H$ by associating a linearization point $\ell_{\pi}$ 
with each operation $\pi$ as follows.
For each $\pi=\lit{insert}\vee\lit{remove}$ that returns $\true$ in $\tilde H$, $\ell_{\pi}$ is associated with 
the first $\lit{write}$ performed by $\pi$ in $H$; otherwise
$\ell_{\pi}$ is associated with the last $\lit{read}$ performed by $\pi$ in $H$. For $\pi=\lit{contains}$ that returns $\true$, $\ell_{\pi}$
is associated with the last $\lit{read}$ performed in $I^{RM}$; otherwise $\ell_{\pi}$ is associated with the $\lit{read}$ of \emph{head}.
Since linearization points are chosen within the intervals of 
operations of $I^{RM}$, for any two operations
$\pi_i$ and $\pi_j$ in ${\tilde H}$, if $\pi_i \rightarrow_{\tilde H}
\pi_j$, then $\pi_i \rightarrow_{\tilde S} \pi_j$.
\begin{lemma}
\label{lem:rlegal}
$\tilde S$ is consistent with the sequential specification of type \textit{set}.
\end{lemma}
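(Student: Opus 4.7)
The plan is to verify, by induction on the linearization order of $\tilde S$, that each operation's response matches the abstract state of the set implied by previously linearized insertions and removals. Concretely, for each prefix of $\tilde S$ ending in an operation $\pi$, I maintain an abstract set $Q$ consisting of the values $v$ such that the latest preceding linearized $\lit{insert}(v)$ (returning $\true$) is not yet followed by a linearized $\lit{remove}(v)$ (returning $\true$); I then argue that $\pi$'s response agrees with $Q$.

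The key auxiliary fact I would prove first is a \emph{snapshot lemma}: if an operation $\pi$ performs successive valid reads of elements $X_1,\ldots,X_m$ in $H$ (with $X_1=\ms{head}$) and none of the consistency checks in line~\ref{line:flagcheck} fail, then at $\ell_\pi$ the elements $X_1,\ldots,X_m$ are consecutively linked in the list, carry the values observed by $\pi$, and none of them is marked for deletion. This follows because any concurrent modification of an $X_i$ (whether an update of $\ms{t-var}[i]$, a marker write to $r[i]$, or an unlinking) is always wrapped between a successful version-CAS in line~\ref{line:linw}/\ref{line:linw2}/\ref{line:inswrite} and a version bump in line~\ref{line:release}/\ref{line:release2}/\ref{line:insrel}; hence an intervening writer would change the version read by $\pi$ and the reread in line~\ref{line:flagcheck} would force $\pi$ to abort, contradicting the assumption that $\pi$ completed. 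Lemma~\ref{lem:invar1} gives that such an $X_i$ was once in the list, and the snapshot lemma upgrades this to ``still in the list at $\ell_\pi$.''

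With the snapshot lemma in hand, each case is then straightforward. For $\pi=\lit{insert}(v)$ or $\lit{remove}(v)$ returning $\true$, linearized at the first successful version-CAS: the CAS succeeds with the version read from the buffer, so by the snapshot property the traversed segment (in particular the predecessor-successor pair $\ms{prev}, \ms{curr}$) is exactly the current list fragment at $\ell_\pi$; the splice therefore correctly removes/installs $v$, and the precondition on $Q$ (presence for remove, absence for insert) is met because $\ms{curr}.\ms{val}$ equals or differs from $v$ as dictated by $\LL$. For $\pi=\lit{contains}(v)$ returning $\true$, linearized at the last read: the snapshot lemma ensures the element holding $v$ is live at $\ell_\pi$, so $v\in Q$. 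The symmetric cases where the response is $\false$ (for $\lit{insert}$, $\lit{remove}$, and also $\lit{contains}$ linearized at the last read) follow similarly: the snapshot lemma applied to the final valid read shows that at $\ell_\pi$ the traversed segment contains a pair $\ms{prev}, \ms{curr}$ with $\ms{prev}.\ms{val} < v < \ms{curr}.\ms{val}$ or $\ms{curr}.\ms{val}=v$, directly determining membership.

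The genuinely delicate case will be $\pi=\lit{contains}(v)$ returning $\false$, whose linearization point is set at the very first read (the read of $\ms{head}$), potentially long before the traversal completes. Here I cannot invoke the snapshot lemma at $\ell_\pi$ directly, because $\pi$ has not yet read beyond $\ms{head}$. The resolution I plan to use is to shift the argument to the \emph{end} of $\pi$: assume towards contradiction that some concurrent $\lit{insert}(v)$ (returning $\true$) is linearized between $\ell_\pi$ and $\pi$'s final read; then its successful version-CAS would have bumped the version of some element $X_i$ that $\pi$ traverses, making $\pi$'s reread check in line~\ref{line:flagcheck} fail and preventing $\pi$ from completing its search, contradiction. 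A symmetric argument rules out a concurrent $\lit{remove}(v)$ that would leave $v$ in $Q$ at $\ell_\pi$. This reduction — proving absence at the early linearization point by reasoning about detectability at the late final read — is the only step that does not flow directly from the snapshot lemma, and will be the main technical burden of the proof.
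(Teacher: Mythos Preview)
Your central auxiliary fact---the ``snapshot lemma'' asserting that at $\ell_\pi$ the traversed elements $X_1,\ldots,X_m$ are consecutively linked and carry the observed values---is false for this implementation, and the whole plan collapses with it. The consistency check in line~\ref{line:flagcheck} is \emph{per element}: it only certifies that $X_i$ was not written between the two version reads bracketing $\pi$'s read of $X_i$. Once $\pi$ moves on to $X_{i+1}$, nothing prevents a writer from modifying $X_i$; $\pi$ never rereads $X_i$'s version. The schedule of Figure~\ref{fig:ex1}, which the paper explicitly exhibits as a non-serializable schedule accepted by $I^{RM}$, is a direct counterexample: $\lit{contains}(5)$ reads $X_1$ before $\lit{insert}(2)$ rewrites it and reads $X_4$ after $\lit{insert}(5)$ rewrites it, so at $\ell_\pi$ (the last read, of $X_5$) the element $X_1$ no longer points to $X_3$ as $\pi$ observed. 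The traversed sequence is not a snapshot at any instant.

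Your contingency argument for $\lit{contains}(v)$ returning $\false$ fails for the same reason: you claim that a concurrent $\lit{insert}(v)$ linearized after $\ell_\pi$ but before $\pi$'s last read would bump the version of some $X_i$ that $\pi$ traverses and thereby make $\pi$'s check fail. But if the insert's CAS on $X_i$ occurs \emph{after} $\pi$ has already completed its read of $X_i$ (and passed the local check), $\pi$ never revisits $X_i$ and cannot detect the change. The paper's proof does not attempt any snapshot property; it works directly from the weaker invariant of Lemma~\ref{lem:invar1} (each read element was inserted and is not yet deallocated at the moment it is read) together with a case-by-case argument about where the linearization points of conflicting updates fall relative to $\pi$'s traversal, using the lock/version mechanism only locally on the one or two elements in $\ms{rbuf}_k$. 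You need to abandon the global snapshot idea and argue reachability element by element.
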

\begin{proof}
Let ${\tilde S}^k$ be the prefix of $\tilde S$ consisting of
the first $k$ complete operations. 
We associate each ${\tilde S}^k$ with a set $q^k$ of objects that were
successfully inserted and not subsequently successfully removed in ${\tilde S}^k$.
We show by induction on $k$ that the sequence of state transitions in
${\tilde S}^k$ is consistent with operations' responses in ${\tilde
  S}^k$ with respect to the \textit{set} type. 

The base case $k=1$ is trivial: the \textit{tail} element containing
$+\infty$ is successfully inserted.
Suppose that ${\tilde S}^k$ is consistent with the \emph{set} type and
let $\pi_1$ with argument $v\in \mathbb{Z}$ and response $r_{\pi_{1}}$
be the last operation of ${\tilde S}^{k+1}$.  
We want to show that $(q^k,\pi_1,q^{k+1},r_{\pi_{1}})$ is consistent with the \textit{set} type. 
%
%
\begin{enumerate}
\item[(1)]
If $\pi_1=\lit{insert}(v)$ returns $\true$ in ${\tilde S}^{k+1}$, there does not exist any other $\pi_2=\lit{insert}(v)$ that returns $\true$ in ${\tilde S}^{k+1}$ such that there does not exist any $\lit{remove}(v)$ that returns $\true$; $\pi_2 \rightarrow_{{\tilde S}^{k+1}} \lit{remove}(v) \rightarrow_{{\tilde S}^{k+1}} \pi_1$.
Suppose by contradiction that such a $\pi_1$ and $\pi_2$ exist.
Every successful $\lit{insert}(v)$ operation performs its penultimate $\lit{read}$ on an element $X$ that stores a value $v'<v$ and the last read is performed on an element that stores a value $v''>v$. Clearly, $\pi_1$ also performs a $\lit{write}$ on $X$.
By construction of $\tilde S$, $\pi_1$ is linearized at the release of the \emph{cas} lock on element $X$.
Observe that $\pi_2$ must also perform a $\lit{write}$ to the element $X$ (otherwise one of $\pi_1$ or $\pi_2$ would return $\false$).
By assumption, the write to $X$ in shared-memory by $\pi_2$
(line~\ref{line:inscommit}) precedes the corresponding write to $X$ in
shared-memory by $\pi_2$. If $\ell_{\pi_2} <_{\alpha}
\ell_{\pi_{1}.\lit{read}(X)}$, then $\pi_1$ cannot return $\true$---a contradiction.
Otherwise, if $\ell_{\pi_{1}.\lit{read}(X)}  <_{\alpha} \ell_{\pi_2}$,
then $\pi_1$ reaches line~\ref{line:linw} and return
$\bot$. This is because either $\pi_1$ attempts to acquire the
\emph{cas} lock on $X$ while it is still held by $\pi_2$ or the value
of $X$ contained in the \emph{rbuf} of the process executing $\pi_1$
has changed---a contradiction. 

If $\pi_1=\lit{insert}(v)$ returns $\false$ in ${\tilde S}^{k+1}$, there exists a $\pi_2=\lit{insert}(v)$ that returns $\true$ in ${\tilde S}^{k+1}$ such that there does not exist any $\pi_3=\lit{remove}(v)$ that returns $\true$; $\pi_2 \rightarrow_{{\tilde S}^{k+1}} \pi_3 \rightarrow_{{\tilde S}^{k+1}} \pi_1$. 
Suppose that such a $\pi_2$ does not exist. Thus, $\pi_1$ must perform
its last $\lit{read}$ on an element that stores value $v''>v$, perform
the action in Line~\ref{line:inscommit} and return $\true$---a contradiction.

It is easy to verify that the conjunction of the above two claims prove that $\forall q\in Q$; $\forall v\in \mathbb{Z}$, ${\tilde S}^{k+1}$ satisfies $(q,\lit{insert}(v),q \cup \{v\},(v \not\in q))$.
\item[(2)]
If $\pi_1=\lit{remove}(v)$, similar arguments as applied to $\lit{insert}(v)$ prove that $\forall q\in Q$; $\forall v\in \mathbb{Z}$, ${\tilde S}^{k+1}$ satisfies $(q,\lit{remove}(v),q \setminus \{v\},(v \in q))$.

\item[(3)]
If $\pi_1=\lit{contains}(v)$ returns $\true$ in ${\tilde S}^{k+1}$, there exists $\pi_2=\lit{insert}(v)$ that returns \emph{true} in ${\tilde S}^{k+1}$ such that there does not exist any $\lit{remove}(v)$ that returns \emph{true} in ${\tilde S}^{k+1}$ such that $\pi_2 \rightarrow_{{\tilde S}^{k+1}} \lit{remove}(v) \rightarrow_{{\tilde S}^{k+1}} \pi_1$.
The proof of this claim immediately follows from Lemma~\ref{lem:invar1}.

Now, if $\pi_1=\lit{contains}(v)$ returns $\false$ in ${\tilde S}^{k+1}$, 
there does not exist an $\pi_2=\lit{insert}(v)$ that returns \emph{true} such that 
there does not exist any $\lit{remove}(v)$ that returns \emph{true}; 
$\pi_2 \rightarrow_{{\tilde S}^{k+1}} \lit{remove}(v) \rightarrow_{{\tilde S}^{k+1}} \lit{contains}(v)$.
Suppose by contradiction that such a $\pi_1$ and $\pi_2$ exist. 
Thus, the action in line~\ref{line:inscommit} by the $\lit{insert}(v)$ operation that updates some element, 
say $X$ precedes the action in line~\ref{line:linr} by $\lit{contains}(v)$ that is associated with its 
first $\lit{read}$ (the \emph{head}).
We claim that $\lit{contains}(v)$ must read the element $X'$ newly created by
$\lit{insert}(v)$ and return $\true$---a contradiction to the initial assumption that it returns $\false$.
The only case when this can happen is if there exists a $\lit{remove}$
operation that forces $X'$ to be unreachable from \emph{head} i.e. concurrent to the $\lit{write}$
to $X$ by $\lit{insert}$, there exists a $\lit{remove}$ that sets $X''.\textit{next}$ to $X.\textit{next}$
after the action in line~\ref{line:inswrite} by $\lit{insert}$.
But this is not possible since the \emph{cas} on $X$ performed by the $\lit{remove}$
would return $\false$.
\end{enumerate}
Thus, inductively, the sequence of state transitions in ${\tilde S}$
satisfies the sequential specification of the \textit{set} type. 
\end{proof}
Lemmas~\ref{lem:rls} and~\ref{lem:rlegal} imply:
\begin{theorem}
\label{th:lr}
$I^{RM}$ is \LS-linearizable with respect to $(\LL,\ms{set})$.
\end{theorem}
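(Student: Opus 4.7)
The plan is to split the proof into two parts matching the two clauses of LS-linearizability: local serializability of $H$ with respect to $\LL$, and linearizability of the corresponding high-level history $\tilde H$ with respect to the \emph{set} type. For each read and write operation in an execution $\alpha$ of $I^{RM}$, I would first fix linearization points: a read is linearized at its $\ms{t\text{-}var}[\ell].\lit{read}$ in Line~\ref{line:linr}, and a write is linearized at the successful $\lit{cas}$ on the versioned lock (Line~\ref{line:linw} for remove, Line~\ref{line:inswrite} for insert). This gives a sequential history $H$ of reads and writes from the (possibly concurrent) execution $\alpha$.

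Next I would establish the central invariant: whenever an operation $\pi$ performs $\lit{read}(X)$ followed by $\lit{read}(Y)$ in $H$, there exists a committed $\lit{insert}$ operation whose write set $X.\ms{next}=Y$ linearizes before $\pi.\lit{read}(X)$, and neither read touches an element previously deallocated by a linearized $\lit{remove}$. The key use of the algorithm here is the version-and-flag re-check in Line~\ref{line:flagcheck}: if a concurrent $\lit{remove}$ marked $Y$ for deletion (Line~\ref{line:elastic:gc}) or bumped its version before $\pi$ re-reads the lock, $\pi$ aborts and the read is not exported in $H$. With this invariant, constructing a local serialization $S^{\pi}\in \Sigma_{\LL}$ for each $\pi$ is direct: replay the sequence of $\lit{insert}$s that produced each element read by $\pi$, then append $\pi$ itself. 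This proves local serializability (Lemma~\ref{lem:rls}).

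For linearizability, I would complete $\tilde H$ by discarding incomplete $\lit{contains}$ and incomplete updates that never reached their write, and committing updates whose write succeeded; then assign high-level linearization points: for a successful $\lit{insert}$ or $\lit{remove}$, at its first successful $\lit{write}$; for an unsuccessful update or a $\lit{contains}$ returning $\false$, at the $\lit{read}$ of \emph{head}; for a successful $\lit{contains}$, at the last $\lit{read}$. These points lie within operation intervals, so real-time order is preserved. Legality (Lemma~\ref{lem:rlegal}) is shown by induction on the prefix ${\tilde S}^k$: each case (\emph{insert} true/false, \emph{remove} true/false, \emph{contains} true/false) is verified by combining the algorithm's \lit{cas}-based conflict detection on the target element with the invariants above. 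The delicate case is $\lit{contains}(v)\to\false$ concurrent with $\lit{insert}(v)\to\true$: here I would argue that if the insert's commit (Line~\ref{line:inscommit}) precedes the contains' head read, the contains must reach the newly created node $X'$ unless a concurrent $\lit{remove}$ detaches it; but such a remove would have to \lit{cas} on the same predecessor and would fail, contradicting its success.

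The main obstacle, as usual for list-based sets with physical deletion, will be the last case: reasoning about a $\lit{contains}$ that returns $\false$ while concurrent updates restructure the list, especially when a node is marked-for-deletion between its version read and the re-check. The $2$-element rotating buffer $\ms{rbuf}$ and the version-plus-flag scheme were chosen precisely to preclude the "missed node" anomaly, and the proof hinges on carefully ordering: (a) the write in Line~\ref{line:commit-update}/\ref{line:inscommit}, (b) the version bump in Line~\ref{line:release}/\ref{line:insrel}, and (c) the predecessor-lock \lit{cas} of any competing remove. I expect this case analysis, together with verifying that discarded incomplete updates cannot leave the list in an inconsistent state reachable by a subsequent read, to require the bulk of the work.
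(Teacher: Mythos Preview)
Your proposal is correct and follows essentially the same approach as the paper: the same low-level linearization points for reads and writes, the same central ``validity'' invariant (Lemma~\ref{lem:invar1}) on consecutive reads, the same construction of $S^{\pi}$ for local serializability, and the same induction on prefixes ${\tilde S}^k$ with the identical case analysis for legality, including the $\lit{contains}(v)\to\false$ versus $\lit{insert}(v)\to\true$ case. One small divergence: for an \emph{unsuccessful} update the paper linearizes at the \emph{last} read (where the element with value $v$, or the gap, is actually observed) rather than at the read of \emph{head}; your choice would require an extra argument that $v$'s membership status at the head-read moment already matches the returned response, which is not immediate when a concurrent update commits in between, so the paper's choice is the safer one.
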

\vspace{1mm}\noindent\textbf{Proof of concurrency optimality.}
Now we show that $I^{RM}$ supersedes, in terms of concurrency, \emph{any} implementation in classes
$\mathcal{P}$ or $\mathcal{SM}$.
The proof is based on a more general optimality result, interesting in its own right: 
any finite schedule rejected by $I^{RM}$ is not \emph{observably
LS-linearizable} (or simply \emph{observable}). 
We show that any finite schedule rejected by our algorithm  is not \emph{observably correct}. 

A correct schedule $\sigma$ is observably correct
if by completing update operations in $\sigma$ and 
extending, for any $v\in \mathbb{Z}$, the resulting schedule with a complete sequential execution 
$\lit{contains}(v)$, applied to the resulting contents of the list,
we obtain a correct schedule.
Here the contents of the list after a given correct schedule 
is determined based on the order of its write operations. For each
element, we define the resulting state of its
\textit{next} field based on the last write in the schedule. 
Since in a correct schedule, each new element is first created and then
linked to the list, we can reconstruct the \emph{state of the list} by
iteratively traversing it, starting from $\ms{head}$.    
 
Intuitively, a schedule is observably correct if it incurs no ``lost updates''.
Consider, for example a schedule (cf. Figure~\ref{fig:ex2}(b)) in which two operations,
$\lit{insert}(1)$ and $\lit{insert}(2)$ are applied to the list with state $\{3\}$. 
The resulting schedule is trivially correct (both operations
return $\ms{true}$ so the schedule can some from a complete linearizable history).  
However, in the schedule, one of the operations, say $\lit{insert}(1)$,
overwrites the effect of the other one.  
Thus, if we extend the schedule with a complete execution of
$\lit{contains}(2)$, the only possible response it may give is
$\ms{false}$ which obviously does not produce a linearizable high-level history.  
\begin{theorem}[Optimality]
\label{th:lrelaxed}
$I^{RM}$ accepts all schedules that are observable with respect to $(\ms{LL},\ms{set})$.
\end{theorem}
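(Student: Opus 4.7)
The plan is to prove the contrapositive: every schedule rejected by $I^{RM}$ is not observable with respect to $(\ms{LL},\ms{set})$. Concurrency optimality with respect to $\mathcal{P}$ and $\mathcal{SM}$ then follows because all LSL implementations accept only observable schedules (every schedule they accept comes from an execution with an LS-linearizable history, which is in particular linearizable at the high level, and can be extended with any sequential $\lit{contains}(v)$).

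First I would enumerate the abort sites in Algorithm~\ref{alg:elastic}. Any execution exporting a rejected schedule $\sigma$ contains a first $\bot$-returning step; by inspection this step is either (i) the abort in Line~\ref{line:elastic:abort1} of a $\lit{read}(X_\ell)$, caused by a version mismatch between the two reads of $L[\ell]$ or by $r[\ell]=\lit{true}$; or (ii) a failing CAS in Line~\ref{line:linw}/\ref{line:linw2} of a $\lit{remove}$-write, or (iii) a failing CAS in Line~\ref{line:inswrite} of an $\lit{insert}$-write. For each case I would extract, from the shape of $\sigma$, the identity of a concurrent operation $\pi'$ that is responsible for the abort: a version mismatch pins down a successful writer to $X_\ell$ whose CAS/release in Line~\ref{line:release} or Line~\ref{line:insrel} falls between the two reads of $L[\ell]$; an $r[\ell]$-set-to-true pins down a $\lit{remove}$ that executed Line~\ref{line:elastic:gc} before $\pi$'s check; a failed write-CAS pins down a concurrent writer that acquired (and possibly already released with a bumped version) the lock on $X_\ell$ after $\pi$ last refreshed its \emph{rbuf} entry for $X_\ell$.

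Next I would, for each of these cases, assume toward a contradiction that $\sigma$ is observable, pick the \emph{witness extension} that makes the contradiction visible, and exploit the rigid linearization strategy of Lemma~\ref{lem:rlegal}: updates are linearized at their successful lock-grab, $\lit{contains}$ is linearized at its last $\lit{read}$. In case (iii), both writers are $\lit{insert}$s that must overwrite the same $\ms{next}$ field of a common predecessor $Y$; since the reads of $Y$ by both operations preceded either write, both operations are committed to returning $\lit{true}$ in any completion, yet the sequence of writes in $\sigma$ records only one surviving successor of $Y$. Extending with a $\lit{contains}(v)$ whose argument is the ``lost'' inserted value then forces the response $\lit{false}$, producing a non-linearizable high-level history (the classical lost-update scenario already exhibited in Figure~\ref{fig:ex2}(b)). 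Case (ii) is handled symmetrically via an argument about lost-removal: the \emph{next}-write of the surviving $\lit{remove}$ restores the deleted element's predecessor pointer to a value that reinstates the element that the other $\lit{remove}$ was supposed to have excised, again contradicted by an extension with $\lit{contains}$ of the ``undeleted'' value. Case (i) is reduced to cases (ii)/(iii): the responsible $\pi'$ is itself an update whose write happened between $\pi$'s bracketing reads of $L[\ell]$ or whose deletion flag was set, and I would show that the schedule projected onto $\pi$ and $\pi'$ already forces a lost-update or a read-of-deallocated-element situation that no response assignment can heal while keeping $\pi$'s local history consistent with $\LL$.

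The main obstacle is case (i) with the flag-set branch and, more generally, showing that the hand-over-hand nature of reads does not give the adversarial scheduler the slack to ``repair'' a rejected schedule by choosing favourable responses and a clever completion. Concretely, the argument must exclude the possibility that some other complete history $H$ exporting $\sigma$, in which the reads return different values from the natural ones, is LS-linearizable; the difficulty is that local serializability permits each operation to witness its own sequentialization, so one cannot simply point to a single global inconsistency. To overcome this, I would pair the local sequentialization $S^{\pi}$ guaranteed for $\pi$ by Lemma~\ref{lem:rls} with the global linearization order demanded by the high-level history, and show that the responsible concurrent operation $\pi'$ forces these two orders to disagree on the state of $X_\ell$ exactly at $\pi$'s abort point, so that no choice of response for $\pi$ and no completion of the incomplete operations in $\sigma$ can simultaneously satisfy both constraints plus the linearizability of a suitably chosen extending $\lit{contains}$.
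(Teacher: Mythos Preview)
Your approach is essentially the paper's: prove the contrapositive by enumerating the abort sites of Algorithm~\ref{alg:elastic} and, for each, exhibit a violation of observability. Your cases (i)/(ii)/(iii) correspond exactly to the paper's cases (1)/(2)/(3), and your lost-update argument for (iii) (and symmetrically (ii)) matches the paper's reasoning that ``if we did not abort the write to $X_\ell$, then the schedule extended by a complete execution of $\lit{contains}$ is not LSL.''

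The one place where you over-complicate is subcase~(1a), the $r[\ell]=\lit{true}$ branch. You propose to ``reduce'' it to cases (ii)/(iii) and then invoke a disagreement between the local sequentialization $S^\pi$ and the global linearization. The paper dispatches this subcase much more directly: if $r[\ell]$ is set, a $\lit{remove}$ has already marked $X_\ell$ as deallocated; but in any sequential execution of $\LL$, a removed element is unreachable, so the local history $H|\pi$ containing a read of $X_\ell$ cannot equal $S|\pi$ for any $S\in\Sigma_{\LL}$. Local serializability fails outright---no $\lit{contains}$ extension or linearization argument is needed. Your reduction machinery is unnecessary here, and the detour through ``lost-update'' is the wrong shape for a read abort.

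One further remark on your case (iii): you assume ``both writers are $\lit{insert}$s,'' but the CAS in line~\ref{line:inswrite} can fail because a \emph{remove} grabbed or bumped the lock on $X_\ell$. The paper's proof is equally terse on this point, but you should be aware that the responsible $\pi'$ need not be an insert, and the lost-update argument must cover the insert-vs-remove interaction as well.
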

\begin{proof}
We prove that any schedule rejected by $I^{RM}$ is  
not observable.
We go through the cases when a read or write returns $\bot$ (implying the operation fails to return a matching response) and
thus the current schedule is rejected:
(1) $\lit{read}(X_{\ell})$ returns $\bot$ in line~\ref{line:elastic:abort1}
when $r[\ell]=\true$ or when $\ms{ver}_1\neq \ms{ver}_2$, (2)
$\lit{write}(X_{\ell})$ performed by $\lit{remove}(v)$ either returns $\bot$
in line~\ref{line:linw} when the $\lit{cas}$ operation on
$L[\ell]$ returns $\false$ or returns $\bot$
in line~\ref{line:linw2} when the $\lit{cas}$ operation on the element that
stores $v$ returns $\false$, and (3) $\lit{write}(X_{\ell})$ performed by $\lit{insert}$ returns $\bot$
in line~\ref{line:inswrite} when the $\lit{cas}$ operation on
$L[\ell]$ returns $\false$.

Consider the subcase (1a), $r[\ell]$ is set $\true$ by a
preceding or concurrent $\lit{write}(X_{\ell})$ (line~\ref{line:elastic:gc}).
The high-level operation performing this $\lit{write}$ is a $\lit{remove}$ 
that marks the corresponding list element as removed. 
Since no removed element can be read in a sequential execution
of $LL$, the corresponding history is not locally serializable.
Alternatively, in subcase (1b), the version of $X_{\ell}$ read previously in line~\ref{line:rver}
has changed. Thus, an update operation has concurrently performed a write to $X_{\ell}$.
However, there exist executions that export such schedules.

In case $(2)$, the $\lit{write}$ performed by a $\lit{remove}$ operation returns $\bot$.
In subcase (2a), $X_{\ell}$ is currently
locked. 
Thus, a concurrent high-level operation has previously locked $X_{\ell}$ (by successfully
performing $L[\ell].\lit{cas}()$ in line~\ref{line:linw}) and has not yet released
the lock (by writing $\tup{\ms{ver}', \lit{false}}$ to
$L[\ell]$ in line~\ref{line:release}). 
In subcase (2b), the current version of $X_{\ell}$ (stored in $L[\ell]$) differs 
from the version of $X_{\ell}$ witnessed by a preceding $\lit{read}$. 
Thus, a concurrent high-level operation completed a write to $X_{\ell}$
\emph{after} the current high-level operation $\pi$ performed a $\lit{read}$ of $X_{\ell}$.
In both (2a) and (2b), a concurrent high-level updating operation $\pi'$
($\lit{remove}$ or $\lit{insert}$) has written or is about to perform a $\lit{write}$ to $X_{\ell}$.  
In subcase (2c), the $\lit{cas}$ on the element $X_{\ell'}$ (element that stores the value $v$) executed by $\lit{remove}(v)$ 
returns $\false$ (line~\ref{line:linw2}).
Recall that by the sequential implementation $\LL$, 
$\lit{remove}(v)$ performs a $\lit{read}$ of
$X_{\ell'}$ prior to the $\lit{write}(X_{\ell})$, where $X_{\ell}.\textit{next}$
refers to $X_{\ell'}$.
If the \emph{cas} on $X_{\ell'}$ fails, there exists a process 
that concurrently performed a $\lit{write}$ to $X_{\ell'}$, but
after the $\lit{read}$ of $X_{\ell'}$ by $\lit{remove}(v)$.
In all cases, we observe that if we did not abort the write to $X_{\ell}$, then
the schedule extended by a complete execution of $\lit{contains}$ is not LSL.

In case $(3)$, the $\lit{write}$ performed by an $\lit{insert}$ operation returns $\bot$. Similar arguments to
case $(2)$ prove that any schedule rejected is not observable LSL.
\end{proof}
Theorem~\ref{th:lrelaxed} implies that the schedules exported by the
histories in Figures~\ref{fig:ex1} and \ref{fig:ex2}(a) and that are not
accepted by any $I'\in \mathcal{SM}$ and any $I\in \mathcal{P}$, respectively,
are indeed accepted by $I^{RM}$.
But it is easy to see that implementations in $\mathcal{SM}$ and $\mathcal{P}$ can only
accept observable schedules.  
As a result, $I^{RM}$ can be shown to strictly supersede any
pessimistic or TM-based implementation of the list-based set.  
\begin{corollary}
\label{cr:mrp}
$I^{RM}$ accepts every schedule accepted by any implementation in $\mathcal{P}$ and $\mathcal{SM}$.
Moreover, $I^{RM}$ accepts schedules $\sigma$ and $\sigma'$ that are rejected by any
implementation in $\mathcal{P}$ and $\mathcal{SM}$, respectively. 
\end{corollary}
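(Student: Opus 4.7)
The plan is to derive both claims directly from Theorem~\ref{th:lrelaxed}, which already characterizes the schedules accepted by $I^{RM}$ as exactly those that are observable with respect to $(\LL,\ms{set})$. The first (containment) claim therefore reduces to showing that every schedule accepted by some $I\in\mathcal{P}\cup\mathcal{SM}$ is observable, and the second (strictness) claim reduces to showing that the two separating schedules from Theorems~\ref{th:plm} and~\ref{th:mpl} are observable.

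First I would prove the containment part. Let $\sigma$ be a schedule accepted by some $I\in\mathcal{P}\cup\mathcal{SM}$, and let $H$ be an execution history of $I$ exporting $\sigma$ (so $\ms{complete}(H)$ has the order of $\sigma$). To argue observability, I would take any completion of the updating operations in $\sigma$ and append, for an arbitrary $v\in\mathbb{Z}$, a complete sequential execution of $\lit{contains}(v)$ run against the list state produced by $\sigma$. For $I\in\mathcal{P}$, deadlock-freedom lets us actually produce such an extension inside $I$: every pending update is guaranteed to complete when given steps, and appending a solo $\lit{contains}(v)$ is allowed because pessimistic implementations never return $\bot$. For $I\in\mathcal{SM}$, safe-strict serializability ensures the extended history remains strictly serializable, and the appended sequential $\lit{contains}(v)$ commits, so once more we get an LSL and linearizable extension. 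Because the implementation is LSL-linearizable, the extended schedule is correct; hence $\sigma$ is observable, and by Theorem~\ref{th:lrelaxed} it is accepted by $I^{RM}$.

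Next I would address the strict separation. Theorem~\ref{th:mpl} exhibits a schedule $\sigma$ (Figure~\ref{fig:ex2}(a)) accepted by $I^{LP}\in\mathcal{SM}$ but not by any implementation in $\mathcal{P}$; Theorem~\ref{th:plm} exhibits a schedule $\sigma'$ (which plays the role of the excerpt's $\sigma_0$, Figure~\ref{fig:ex1}) accepted by $I^H\in\mathcal{P}$ but not by any implementation in $\mathcal{SM}$. In both cases an accepting implementation already lies in $\mathcal{P}\cup\mathcal{SM}$, so the containment argument above immediately yields that $\sigma$ and $\sigma'$ are observable and hence accepted by $I^{RM}$. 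Combined with their rejection by $\mathcal{P}$ and $\mathcal{SM}$ respectively, this gives the strict separation demanded by the corollary.

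The main obstacle is the first paragraph: carefully justifying that every schedule accepted by an implementation in $\mathcal{SM}$ survives the ``complete updates and append a solo $\lit{contains}(v)$'' transformation. Implementations in $\mathcal{SM}$ can abort operations, so one must argue that the \emph{committed} portion of $\sigma$ already determines a well-defined list state and that extending with a sequential contains cannot turn a safe-strict-serializable history into a non-LSL or non-linearizable one. The argument hinges on the fact that the appended $\lit{contains}(v)$ runs solo from a quiescent configuration (all concurrent operations having been completed) and that safe-strict serializability forces a legal sequential view of the committed transactions, so the response of the appended operation is consistent with the state produced by that view, preserving both local serializability and linearizability.
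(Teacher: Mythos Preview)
Your approach is essentially the same as the paper's: reduce the containment claim to ``every schedule accepted by an implementation in $\mathcal{P}\cup\mathcal{SM}$ is observable'' and then invoke Theorem~\ref{th:lrelaxed}, and derive the strictness claim from the separating schedules of Theorems~\ref{th:plm} and~\ref{th:mpl}. The paper dispatches the observability of $\mathcal{P}$- and $\mathcal{SM}$-accepted schedules in one line (``it is easy to see''), whereas you sketch the actual extension argument; your idea of reusing the containment step to conclude observability of the separating schedules (since each is accepted by something in $\mathcal{P}\cup\mathcal{SM}$) is a clean shortcut that the paper leaves implicit.

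One small inaccuracy: Theorem~\ref{th:lrelaxed} only asserts that $I^{RM}$ accepts \emph{all} observable schedules (its proof is the contrapositive ``rejected $\Rightarrow$ not observable''), not that it accepts \emph{exactly} the observable ones. Your argument never actually uses the other direction, so this overstatement is harmless for the corollary, but you should drop the word ``exactly.''
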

%
%
%
\section{Related work and Discussion}
\label{sec:p2c5}
\vspace{1mm}\noindent\textbf{Measuring concurrency.}
Sets of accepted schedules are commonly used as a
metric of concurrency provided by a shared memory
implementation.
Gramoli et al.~\cite{GHF10} defined a concurrency metric, the \emph{input
acceptance}, as the ratio of committed transactions over aborted
transactions when TM executes the given schedule.   
Unlike our metric, input acceptance does not apply to
lock-based programs. 

For static database transactions, 
Kung and Papadimitriou~\cite{KP79} use the metric to 
capture the parallelism of a locking scheme,
While acknowledging that the metric is theoretical, they 
insist that it may
have ``practical significance as
well, if the schedulers in question have relatively small
scheduling times as compared with waiting and execution
times.'' 
Herlihy~\cite{Her90} employed the metric to compare various
optimistic and pessimistic synchronization techniques using
commutativity
of operations constituting high-level transactions.   
A synchronization technique is implicitly considered in~\cite{Her90} as highly
concurrent, namely ``optimal'',
if no other technique accepts more schedules. 
By contrast, we focus here on a \emph{dynamic} model where the scheduler cannot 
use the prior knowledge of all the shared addresses to be accessed. 
Also, unlike~\cite{KP79,Her90}, 
the results in this chapter require \emph{all} operations, including aborted ones, to observe (locally) consistent states.

\vspace{1mm}\noindent\textbf{Concurrency optimality.}
This chapter shows that ``semantics-oblivious'' optimistic TM and ``semantics-aware'' pessimistic locking are 
incomparable with respect
to exploiting concurrency of the list-based set. Yet, we have shown how to use the benefits
of optimism to derive a concurrency optimal implementation that is fine-tuned to the semantics of the list-based set.
Intuitively, the ability of an implementation to successfully process
interleaving steps of concurrent threads is an appealing property that
should be met by performance gains.
We believe this to be so.

In work that is not part of the thesis~\cite{optimistic-list15}, we confirm experimentally that
the concurrency optimal optimistic implementation of the list-based set based on $I^{RM}$
outperforms the state-of-the-art implementations of the list-based set, namely, the \emph{Lazy linked list}~\cite{HHL+05}
and the \emph{Harris-Michael linked list}~\cite{harris-set,michael-set}.
Does the claim also hold for other data structures? We suspect so. For
example, similar but more general data structures, such as skip-lists
or tree-based dictionaries, may allow for optimizations similar to 
proposed in this paper.      
Our results provides some preliminary hints in the quest for 
the ``right'' synchronization technique to develop highly concurrent and efficient implementations of data 
types.
%

%
\chapter{Concluding remarks}
\label{ch:conc}
\epigraph{Everything has to come to an end, sometime.}
{\textit{Lyman Frank Baum}-The Marvelous Land of Oz}
The inclusion of hardware support for transactions in mainstream CPU's~\cite{Rei12, asf, bluegene}
suggests that TM is an important concurrency abstraction.
However, hardware transactions are not going to be sufficient to support efficient concurrent programming 
since they may be aborted spuriously;
the fast but potentially unreliable hardware transactions must be complemented 
with slower, but more reliable software transactions.
Thus, understanding the inherent cost of both hardware and software transactions is of both theoretical and practical interest.

Below, we briefly recall the outcomes of the thesis and overview the future research directions.

\vspace{1mm}\noindent\textbf{Safety for TMs.}
We formalized the semantics of a safe TM: 
every transaction, including aborted and incomplete ones, must observe a view 
that is consistent with some sequential execution.
We introduced the notion of deferred-update semantics which explicitly precludes reading
from a transaction that has not yet invoked tryCommit.
We believe that our definition is useful to TM
practitioners, since it streamlines possible implementations of t-read
and tryCommit operations.

\vspace{1mm}\noindent\textbf{Complexity of TMs.}
The cost of the TM abstraction is parametrized by several properties: 
safety for transactions, conditions under which transactions must terminate, 
conditions under which transactions must commit/abort, bound on the number of versions that can be maintained
and a multitude of other implementation strategies like disjoint-access parallelism and invisible reads.

At a high-level, the complexity bounds presented in the thesis suggest that providing high degrees of concurrency 
in software transactional memory (STM) implementations incurs a considerable synchronization cost.
As we show, permissive STMs, while providing the best possible concurrency in theory,
require a strong synchronization primitive (AWAR) or a memory fence (RAW) per
read operation, which may result in excessively slow execution times.
Progressive STMs provide only basic concurrency by adapting to data conflicts, but perform considerably better in this respect: 
we present progressive implementations that incur constant RAW/AWAR complexity.

Since Transactional memory was originally proposed as an alternative to locking, 
early STMs implementations~\cite{HLM+03, astm, ST95,nztm,fraser} 
adopted optimistic concurrency control and guaranteed
that a prematurely halted transaction cannot not prevent
other transactions from committing. 
However, popular state-of-the-art STM implementations like \emph{TL2}~\cite{DSS06} and \emph{NOrec}~\cite{norec}
are progressive, providing no non-blocking progress guarantees for transactions, but perform empirically better than
obstruction-free TMs. Complexity lower and upper bounds presented in the thesis explain this performance gap.

Do our results mean that maximizing the ability of processing multiple transactions
in parallel or providing non-blocking progress should not be an important factor in STM design?
It would seem so.
Should we rather even focus on speculative ``single-lock'' solutions \'a la \emph{flat combining}~\cite{HendlerIST10}
or ``pessimistic'' STMs in which transactions never abort~\cite{pessimistic-stm}?    
Difficult to say affirmatively, but probably not, since our results suggest progressive STMs
incur low complexity overheads as also evidenced by their good empirical performance on most 
realistic TM workloads~\cite{DSS06,norec}.

Several questions yet remain open on the complexity of STMs. For instance,
the bounds in the thesis were derived for the TM-correctness property of strict serializability and
its restrictions. But there has been study of relaxations of strict serializability 
like \emph{snapshot isolation}~\cite{AHM09,BDFG14}. Verifying if the lower bounds presented in the thesis
hold under such weak TM-correctness properties and extending the proofs if indeed, presents interesting open
questions.
The discussion section of Chapters~\ref{ch:p3c2}, \ref{ch:p3c3} and \ref{ch:p3c4} additionally list some unresolved questions
closely related to the results in the thesis.

One problem of practical need that is not considered in the thesis concerns the interaction
of transactional code with \emph{non-transactional} code, \emph{i.e.}, the same data item is accessed both
transactionally and non-transactionally.
It is expected that code executed within a transaction behave as lock-based code within a 
single ``global lock''~\cite{scott-book, menon-sgl} to avoid memory races.
Techniques to ensure the safety of non-transactional accesses have been formulated through the notion of
\emph{privatization}~\cite{attiyaH13,spear-priv}. Devising techniques to ensure privatization for TMs and
understanding the cost of enforcing it is an important research direction.

In the thesis, we assumed that a rmw event is an access to a single base object.
However, there have been proposals to provide implementations with the ability to invoke 
$k$-rmw; $k\in \mathbb{N}$ primitives~\cite{AHCAS,dcas} that allow accessing up to $k$ base objects in a single atomic event.
For example, the $k$-cas instruction allows to perform $k$ cas instructions atomically
on a vector $\langle b_1,\ldots , b_k \rangle$ of base objects: it accepts as input a vector 
$\langle \ms{old}_1,\ldots , \ms{old}_k, \ms{new}_1,\ldots , \ms{new}_k \rangle$
and atomically updates the value of $\langle b_1,\ldots , b_k \rangle$ to $\langle \ms{new}_1,\ldots , \ms{new}_k \rangle$ 
and returns $\true$ \emph{iff} for all $i\in \{1,\ldots , k\}$, $\ms{old}_i=\ms{new}_i$; otherwise it returns $\false$. 
However, the ability to access such $k$-rmw primitives does not necessarily simplify the design
and improve the performance of non-blocking implementations nor overcome the compositionality issue~\cite{HS08-book,AHCAS,dcas}. 
Nonetheless, verifying if the lower bounds presented in the thesis hold in this shared memory model is an interesting problem.

\vspace{1mm}\noindent\textbf{HyTMs.}
We have introduced an analytical model for hybrid transactional memory
that captures the notion of cached accesses as performed by hardware transactions.
We then derived lower and upper bounds in this model to capture the inherent tradeoff between the degree of concurrency 
allowed among hardware and software transactions
and the instrumentation overhead introduced on the hardware.
In a nutshell, our results say that it is impossible to completely forgo instrumentation in a sequential HyTM, and 
that any opaque HyTM implementation providing non-trivial progress either has to pay a \emph{linear} number of 
metadata accesses, or will have to allow slow-path transactions to \emph{abort} fast-path operations.  

Our model of HTMs assumed that the hardware resources were \emph{bounded}, in the sense that,
a hardware transaction may only access a bounded number of data items, exceeding which, it incurs a capacity abort.
To overcome the inherent limitations of bounded HTMs, there have been proposals for ``unbounded HTMs''
that allow transactions to commit even if they exceed the hardware resources~\cite{hammondhytm,unboundedhtm1}.
The HyTM model from Chapter~\ref{ch:p4c4} can be easily extended to accommodate unbounded HTM designs
by disregarding capacity aborts.

Some papers have investigated alternatives to providing HTMs with an STM fallback, 
such as \emph{sandboxing}~\cite{ALM14,CTGM14}, or employing \emph{hardware-accelerated} STM~\cite{HASTM1,HASTM2}, 
and the use of both direct \emph{and} cached accesses 
within the same hardware transaction to reduce instrumentation 
overhead~\cite{riegel-thesis,hynorecriegel, kumarhytm}.
Another approach proposed \emph{reduced hardware transactions}~\cite{MS13}, 
where a part of the slow-path is executed using a short fast-path transaction, which allows to
partially eliminate instrumentation from the hardware fast-path.
Modelling and deriving complexity bounds for HyTM proposals outside the HyTM model described in the thesis
is an interesting future direction.

\vspace{1mm}\noindent\textbf{Relaxed transactional memory.}
The concurrency lower bounds derived in Chapter~\ref{ch:p2c1} illustrated that a strictly serializable TM, 
when used as a black-box to
transform a sequential implementation of the list-based set to a concurrent one, is not concurrency-optimal.
This is due to the fact that TM detects conflicts at the level of transactional reads and writes
resulting in \emph{false conflicts}, in the sense that, the read-write conflict may not affect the correctness
of the implemented high-level set type.
As we have shown, we can derive a \emph{concurrency optimal} optimistic (non-strictly serializable) implementation
that can process every correct schedule of the list-based set.
Indeed, several papers have studied ``relaxed'' TMs that are fined-tuned to the semantics of the high-level data type~\cite{HK08,FGG09, HK-commutativity}.
Exploring the complexity of such relaxed TM models represents a very important future research direction.
%

\listoffigures
\listoftables
\bibliography{references}

\chapter*{Papers}
\nobibliography*

The content of the thesis is based on the following tech reports and publications.

\paragraph{Tech reports}

\bibverse{prog15}
\bibverse{oftm15}
\bibverse{hytm14}
\bibverse{opti14}
\bibverse{TM-WF14}
\bibverse{icdcs-TR}
\bibverse{KR11-TR}

\paragraph{Publications}
\bibverse{prog15-pact}
\bibverse{oftm15disc}
\bibverse{hytm14disc}
\bibverse{safety-tm14}
\bibverse{WF14-icdcn}
\bibverse{icdcs-opacity}
\bibverse{PODCKRV12}
\bibverse{KR11}

\paragraph{Workshop papers}

\bibverse{WTTM3}
\bibverse{WTTM2}
\bibverse{WTTM1}

Concurrently, I was also involved in the following paper whose contents are not included in the thesis.

\bibverse{optimistic-list15}
\bibverse{optimistic-list15-disc}

%
%

\end{document}